\documentclass[letterpaper]{article}

\usepackage[english]{babel}
\usepackage[utf8]{inputenc}
\usepackage[T1]{fontenc}

\usepackage{graphicx} 
\usepackage[margin=2.15cm]{geometry}
\usepackage{algorithm}
\usepackage[noend]{algpseudocode}
\usepackage{amsthm}

\usepackage{amssymb,amsmath}
\usepackage{thmtools}
\usepackage{mathtools}
\usepackage[colorlinks=true, allcolors=blue]{hyperref}
\usepackage{url}
\usepackage[capitalize]{cleveref}

\usepackage{placeins}
\usepackage{subcaption}
\usepackage{booktabs}
\usepackage{microtype}
\usepackage{newtxtext}
\usepackage{newtxmath}

\usepackage{enumitem}
\setlist[1]{labelindent=\parindent}
\setlist[enumerate]{label=(\arabic*)}
\setlist[itemize]{noitemsep}

\usepackage{titling}
\usepackage[dvipsnames]{xcolor}
\definecolor{light-gray}{gray}{0.95}
\usepackage{tikz}
\usepackage{tcolorbox}
\usepackage{wrapfig}
\usepackage{xspace}

\usepackage{authblk}

\declaretheorem[numberwithin=section]{theorem}
\declaretheorem[sibling=theorem]{lemma}
\declaretheorem[sibling=theorem]{example}
\declaretheorem[sibling=theorem]{problem}

\newcommand{\captionspacing}{-0.1cm}

\newcommand{\bst}{\textsc{BST}\xspace}
\newcommand{\wiki}{\textsc{WIKI}\xspace}

\newcommand{\leafc}[1]{L(#1)}

\DeclareMathOperator{\SA}{SA}
\DeclareMathOperator{\LCP}{LCP}

\DeclareMathOperator{\LCE}{LCE}
\DeclareMathOperator{\EV}{EV}
\DeclareMathOperator{\OD}{OD}
\DeclareMathOperator{\ISA}{ISA}

\DeclareMathOperator{\ST}{ST}

\DeclareMathOperator{\ch}{ch}
\DeclareMathOperator{\sd}{sd}
\DeclareMathOperator{\parent}{parent}
\DeclareMathOperator{\preorder}{pre}

\DeclareMathOperator{\str}{str}
\DeclareMathOperator{\occ}{occ}
\DeclareMathOperator{\loci}{\mathcal{L}}
\newcommand{\chr}{\textsc{CHR}\xspace}
\newcommand{\sdsl}{\textsc{SDSL}\xspace}
\newcommand{\sars}{\textsc{SARS}\xspace}

\newcommand{\LFCS}{\textsc{LFCS}\xspace}
\newcommand{\LCCS}{\textsc{LCCS}\xspace}
\newcommand{\TCPR}{\textsc{TCPR}\xspace}
\newcommand{\CC}{\textsc{CC}\xspace}

\newcommand{\LFCSZT}{\textsf{LFCS}$_{\mathrm{ZT}}$\xspace}
\newcommand{\LFCSBA}{\textsf{LFCS}$_{\mathrm{BA}}$\xspace}

\newcommand{\LCCSZT}{\textsf{LCCS}$_{\mathrm{ZT}}$\xspace}
\newcommand{\LCCSBA}{\textsf{LCCS}$_{\mathrm{BA}}$\xspace}

\newcommand{\TCPRZT}{\textsf{TCPR}$_{\mathrm{ZT}}$\xspace}
\newcommand{\TCPRZTminus}{\textsf{TCPR}$_{\mathrm{ZT-}}$\xspace}
\newcommand{\TCPRBA}{\textsf{TCPR}$_{\mathrm{BA}}$\xspace}

\newcommand{\CCZT}{\textsf{CC}$_{\mathrm{ZT}}$\xspace}
\newcommand{\CCBA}{\textsf{CC}$_{\mathrm{BA}}$\xspace}

\newcommand{\ZZ}{\textsf{ZZ}\xspace}

\newcommand{\ZZT}{\textsf{ZZT}\xspace}
\newcommand{\ZZA}{\textsf{ZZA}\xspace}
\newcommand{\ZZLCP}{\textsf{ZZ-LCP}\xspace}

\def\dd{\mathinner{.\,.}}

\newcommand{\cO}{\mathcal{O}}

\newif\ifrevcolor
\revcolortrue

\title{ZigZag Trie: A Novel Index for Contextual Queries}

\author[1]{Ling Li}
\author[2]{Daniel Gibney}
\author[3]{Sharma V. Thankachan}
\author[4]{Rahul Shah}
\author[1]{Grigorios Loukides}
\author[5]{\\Solon P. Pissis}
\affil[1]{King's College London, London, UK}
\affil[2]{University of Texas at Dallas, Dallas, USA}
\affil[3]{North Carolina State University, Raleigh, USA}
\affil[4]{Louisiana State University, Baton Rouge, USA}
\affil[5]{The Cyprus Institute, Nicosia, Cyprus}
\date{\today}

\begin{document}

\maketitle

\begin{abstract}
There is increasing interest in queries about the \emph{context} of a string $P$ in a longer text $T$, i.e., the set of all string pairs $(L,R)$, with $|L|=|R|=q$, for a given $q$, such that the string $LPR$ occurs in $T$. Such contextual queries are important in domains as diverse as  bioinformatics, log analysis, and text analytics but are challenging to answer efficiently. This is because the length of $T$ in applications is massive and existing indexes do not directly encode the context of a given $P$, which is key for answering retrieval queries efficiently. Our work introduces the \textsf{ZigZag Trie} (\ZZT), a new full-text index to specifically address these challenges. This index reorganizes the text so that, for any $P$, all possible strings $L$ and $R$ growing symmetrically around $P$ are grouped into a common subtree of the index, allowing their  efficient retrieval. We show how to construct the \ZZT of $T$, which has size $\cO(n)$ where $n=|T|$, in $\cO(n\log n)$ time and $\cO(n)$ space.

On top of \ZZT, we design specialized indexes that, for a query pattern $P$, answer four new types of contextual queries: (I) finding the longest string $LPR$ that occurs at least $\tau$ times in $T$, for a fixed $\tau$; (II) finding the longest string $LPR$ that occurs in at least $\tau$ texts of a text collection, for a fixed $\tau$; (III) reporting the total number of distinct contexts of $P$ in $T$; and (IV) retrieving, for a given $q$, the $k$ pairs $(L,R)$ of $P$ with the highest scores according to a given scoring function. Our indexes answer queries of type I, II, and III in \emph{optimal} time, and of type IV in \emph{near-optimal} time. Moreover, their size, construction space, and construction time are linear or near-linear in $n$, given \ZZT. Using real billion-letter datasets from different domains, we show that our indexes answer queries orders of magnitude faster than baselines, which are based on traditional text indexes or on the state of the art, and perform similarly or better in terms of index size, construction space, and construction time.
\end{abstract}

\section{Introduction}\label{sec:intro}

Strings are ubiquitous in various  application   domains such as genomics~\cite{NGS}, natural language processing~\cite{DBLP:books/lib/JurafskyM09},~log analysis~\cite{DBLP:journals/pvldb/Boncz0L20}, and  user-behavior modeling~\cite{asselin2016anomaly}. At the heart of these applications, there is \emph{text indexing}~\cite{DBLP:books/daglib/0020103}. This problem asks for preprocessing a string $T$ (\emph{text}) into a compact data structure that supports efficient pattern matching, e.g., \emph{report} the set of starting positions of a \emph{pattern} $P$ in $T$ or \emph{count} this set's size. 

There has been increasing interest (e.g.,~\cite{navarro2020contextual,abedinDCC23,contexticde,navarro2026})  in \emph{contextual queries}, whose focus is \emph{not} only on the pattern $P$ itself but on the strings $L$ and $R$ which occur immediately to the left and to the right of $P$ in $T$, respectively. The \emph{set} of such string pairs $(L, R)$ with $|L|=|R|=q$ is referred to as the \emph{context} of $P$ in $T$~\cite{navarro2020contextual} and denoted by $\mathcal{C}_{T}(P,q)$; $L$ and $R$ are referred to as the \emph{left} and \emph{right} context of $P$ in $T$, respectively. 

Motivated by the importance of contextual queries in database-driven applications, such as bioinformatics and text or log  analysis,  Navarro~\cite{navarro2020contextual,navarro2026} considered the problem of \emph{reporting} $\mathcal{C}_{T}(P,q)$, while Li et al.~\cite{contexticde} considered the problem of \emph{counting} the size of $\mathcal{C}_{T}(P,q)$. 

\subsection{The ZigZag Trie} 

We propose the \textsf{ZigZag Trie} (\ZZT), a novel \emph{full-text} index specifically designed for answering contextual queries. 
The key idea in the design of \ZZT is to define, for each position $i\in [1,n ]$ of $T$, a \emph{ZigZag} string $Z_i = T[i]T[i\!+\!1]T[i\!-\!1]T[i\!+\!2]T[i\!-\!2]\ldots$ that ``walks'' alternately to the right and to the left of $T$. 
These $n$ strings are sorted lexicographically, and the \ZZT of $T$, denoted by $\ZZT(T)$, is  
the compacted trie built over these strings. 

\begin{figure}[t]
\centering
    \begin{subfigure}[b]{0.25\textwidth}\small
    \raggedright
    $Z_1:\texttt{ba}$\\[2pt]
    $Z_2:\texttt{anba}$ \\[2pt]
    $Z_3:\texttt{naanba}$ \\[2pt]
    $Z_4:\texttt{annaa}$ \\[2pt]
    $Z_5:\texttt{naa}$ \\[2pt]
    $Z_6:\texttt{a}$ \\

    \caption{}\label{ex1:a}
    \end{subfigure}\hfill
    \hspace{+4mm}
    \begin{subfigure}[b]{0.25\textwidth}\small
    \raggedright
    $Z_6:\texttt{a}$ \\[2pt]
    $Z_2:\texttt{anba}$ \\[2pt]
    $Z_4:\texttt{annaa}$ \\[2pt]
    $Z_1:\texttt{ba}$\\[2pt]
    $Z_5:\texttt{naa}$ \\[2pt]
    $Z_3:\texttt{naanba}$ \\

    \caption{}\label{ex:b}
    \end{subfigure}\hfill
    \begin{subfigure}[b]{0.4\textwidth}
      \centering
\includegraphics[width=0.5\linewidth, trim=0 0 0 0, clip]{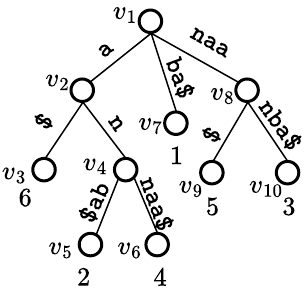}
      \caption{}\label{ex:c}
    \end{subfigure}
\caption{(a) The ZigZag strings $Z_i$ for $T=\texttt{banana}$. (b) The $Z_i$'s sorted lexicographically. (c) $\ZZT(T)$, where the subscript $i$ of each node $v_i$ denotes its \emph{preorder rank} and the special terminating symbol \texttt{\$} marks the end of a ZigZag string.}
\label{exp:intro}
\end{figure}

 \begin{example}\label{ex:zz}
Let $T=\texttt{banana}$. The ZigZag strings constructed for all positions $i\in[1,6]$ are shown in \Cref{ex1:a}. For instance, $Z_3 = T[3]T[4]T[2]T[5]T[1]T[6] = \texttt{naanba}$. The list of lexicographically sorted ZigZag strings and $\ZZT(T)$ are shown in \Cref{ex:b,ex:c}, respectively. 
\end{example}

\begin{tcolorbox}[colback=gray!10, colframe=gray!50, title=Why is the ZigZag Trie useful?, left=2pt, right=2pt]
The utility of $\ZZT(T)$ stems from its construction based on ZigZag strings: for any query pattern $P$ of length $m$, $\ZZT(T)$ \emph{directly encodes} the contextual information of $P$ in the subtree induced by $P$ in $\ZZT(T)$. Namely, for any $q$, all $(L,R)$ string pairs in the context $\mathcal{C}_T(P,q)$ are encoded as descendants of the locus  of the \emph{transformed} $P$ (i.e., the point in $\ZZT(T)$, where $\ZZ(P)=P[\lceil m/2 \rceil]P[\lceil m/2\rceil+1]P[\lceil m/2 \rceil-1]\dots$ ends), and all of these descendants are at string depth $m+2q$. This property is \emph{key} for solving diverse contextual queries that impose constraints on this subtree.
\end{tcolorbox} 

\begin{example}[Cont'd from \Cref{ex:zz}]
    For $P=\texttt{a}$, $\mathcal{C}_T(P,1)=\{(\texttt{b},\texttt{n}), (\texttt{n},\texttt{n})\}$, and the locus of $\ZZ(P)=\texttt{a}$ is the node $v_2$ in \Cref{ex:c}. The strings $Z_6$, $Z_2$, and $Z_4$, which start with $\texttt{a}$, lie in the subtree of $v_2$; 
    the next letter $\texttt{n}$ in the subtree corresponds to the two strings  $R$, $\texttt{n}$ and $\texttt{n}$, in $\mathcal{C}_T(P,1)$, and the outgoing edges of $v_4$ start with $\texttt{b}$ and $\texttt{n}$, the two  strings $L$ in $\mathcal{C}_T(P,1)$.   
\end{example}

As we will discuss later in the related work, existing 
indexes do not offer the key property of \ZZT: \emph{they do not encode all contextual pairs of $P$ within a single subtree}. Instead, using classic indexes, one has to store the left and right contexts in two separate trees, which must be traversed in a coordinated way to recover the pairs $(L,R)$ in the context of $P$. As we will see, this coordinated traversal induces a term in the query time that depends on the number of occurrences of $P$ in $T$. The latter number is huge in real-world datasets~\cite{navarro2020contextual,navarro2026,contexticde}, making the use of such an approach impractical.

\subsection{Contextual Queries}

We showcase the usefulness of \ZZT by formalizing four types of contextual queries that arise in various applications and by presenting results for query answering on real datasets.

\paragraph{Longest Frequent Contextual Superstring (\LFCS).} 

\begin{problem}[\LFCS]\label{def:lfcs} 
A text \(T\) of length \(n\) and an integer \(\tau\in[1,n]\) are given for preprocessing.
Given a pattern \(P\) of length \(m\), report a longest string \(LPR\), with \(|L|=|R|\), that occurs at least \(\tau\) times in \(T\), if such a string exists.
\end{problem}

An important application of \LFCS comes from financial computing, as it summarizes the context around a stock trading event without making stronger claims about stock price prediction or   
performance. Here, $T$ is a discretized time-series of the price of a stock over time and $P$ is a salient trading event, such as a price spike. The \LFCS output then corresponds to 
pre- and post-event price behavior associated with the event that is both persistent and recurring, and hence economically   meaningful~\cite{financial1}. Thus, \LFCS outputs can help in designing algorithmic trading strategies or detecting anomalies (e.g., when in future  data the outputs appear with a much different frequency)~\cite{financial1}. 
Another application of 
\LFCS is in log analysis. Here, $T$ is a sequence of system events, $P$ models a key system event (e.g., ``login''), and $L$ and $R$ model events that precede and succeed the key event. \LFCS can help construct a \emph{structured event template}~\cite{he2021survey} by finding the longest sequence of events $LPR$ (i.e., events in a window around $P$) with $|L|=|R|$ which occurs frequently. Such sequences capture typical behavior and are  used to detect deviations or system failures in future data. 

\begin{example}\label{exp:LFCS}
We constructed a data structure for answering \LFCS queries on the RUET OJ log dataset~\cite{weblogdataset} with $\tau=20$ and asked $3$ queries, each modeling a different key system event.    
\Cref{tab:lfcs-case-study} shows these queries,  their answers, and the typical behavior they indeed capture. 
\end{example}
\begin{table}[ht]
\caption{\LFCS on a log dataset.}
\label{tab:lfcs-case-study}
\centering
\begin{tabular}{lll}
\toprule
\multicolumn{1}{c}{\textbf{$P$}} & \multicolumn{1}{c}{\textbf{Answer}} & \multicolumn{1}{c}{\textbf{Typical behavior}} \\
\midrule
\textcolor{red}{\texttt{\small auth}} &
\texttt{\small login->\textcolor{red}{\small auth}->home} &
{\small Successful authentication flow}\\
\textcolor{red}{\texttt{\small signup}} &
\texttt{\small login->\textcolor{red}{\small signup}->action} &
{\small Successful signup flow}\\
\textcolor{red}{\texttt{\small compile}} &
\texttt{\small submit\_page->\textcolor{red}{\small compile}->submission} &
{\small Page submission pipeline}\\
\bottomrule
\end{tabular}
\end{table}

\paragraph{Longest Common Contextual Superstring (\LCCS).}

\begin{problem}[\LCCS] 
A collection \(\mathcal{C}=\{T_1,T_2,\ldots,T_c\}\) of \(c>1\) texts of total length \(N\) and an integer \(\tau\in[1,c]\) are given for preprocessing.
Given a pattern \(P\) of length \(m\), report a longest string \(LPR\), with \(|L|=|R|\), that occurs in at least \(\tau\) texts of \(\mathcal{C}\), if such a string exists.
\end{problem}

\LCCS is also motivated by applications in financial computing and log analysis. In financial computing, the setting is similar to that of \LFCS but now $T_i$'s model prices of different stocks and $LPR$ models common price behavior (e.g., $LPR$ can be a \emph{following motif} occurring when the price of one stock in a window imitates the price behavior of another~\cite{DBLP:journals/tkdd/ChinpattanakarnA25}).    
In log analysis, the setting is again similar to that of \LFCS but now $T_i$'s are logs of  different servers and $LPR$ must occur in sufficiently many server logs to be deemed typical~\cite{DBLP:conf/kdd/ZhangJJLYW24}. 

\begin{example}
We constructed a data structure for answering \LCCS queries on the same dataset as in \Cref{exp:LFCS} but now using $\tau=5$ and treating the sequence for each of the $c=5$ IP addresses as a separate text $T_i$. \Cref{tab:lccs-case-study}
shows that the answers of the $3$ queries for key system events we asked indeed capture typical behavior (performed by all $5$ IP addresses). 
\end{example}

\begin{table}[ht]
\centering
\small 
\caption{\LCCS on a log dataset.}
\label{tab:lccs-case-study}
\begin{tabular}{lll}
\toprule
\multicolumn{1}{c}{\textbf{$P$}} & \multicolumn{1}{c}{\textbf{Answer}} & \multicolumn{1}{c}{\textbf{Typical behavior}} \\
\midrule
\textcolor{red}{\texttt{\small auth}} &
\texttt{\small home->login->\textcolor{red}{\small auth}->} &
{\small Authentication moves from home,} \\
& \texttt{\small home->archive} &
 {\small to login, then to home, and next to} \\
& & {\small browsing the archive content} \\[3pt]
\textcolor{red}{\texttt{\small signup}} &
\texttt{\small home->login->\textcolor{red}{\small signup}->} &
{\small Signup after visiting home and login,}\\
& \texttt{\small action->login} &
{\small followed by an initial account}\\
& & {\small action and re-login}
\\[3pt]
\textcolor{red}{\texttt{\small submit\_page}} &
\texttt{\small problem->problem\_detail->} &
{\small Users view a problem then its} \\
& \texttt{\textcolor{red}{\small submit\_page}->compile->} &
{\small details, and then go to the submit} \\
& \texttt{\small submission} & {\small page to compile and submit it}\\[2pt]
\bottomrule
\end{tabular}%
\end{table}

\paragraph{Contextual Complexity (\CC).}

\begin{problem}[\CC] \label{pro:count}
A text \(T\) of length \(n\) is given for preprocessing.
Given a pattern \(P\) of length \(m\), report the \emph{contextual complexity} \(\CC(P,T)=\sum_{q\ge 1} |\mathcal{C}_T(P,q)|\) of $P$ in $T$.
\end{problem}

The \CC problem measures the number of distinct context pairs of $P$ of any length $q=|L|=|R|$ (i.e., how ``rich'' the context set of $P$ is). It is motivated by applications in text analytics and biology. In text analytics, it can help compare the \emph{polysemy numbers}~\cite{polysemy} of words (i.e., the number of different related meanings)~\cite{contexticde,kostic2023mapping}. Let $\occ_T(P)$ denote the set of occurrences of string $P$ in string $T$. Consider, for example, two words (patterns $P_1$, $P_2$) that have similar frequencies $|\occ_T(P_1)|\approx |\occ_T(P_2)|$ in a text $T$ but for which $\CC(P_1,T)>\CC(P_2,T)$. Then, $P_1$ is in between more different string pairs $(L,R)$ which may give more different meanings to it. As a more frequent $P$ has intuitively a greater chance to have a higher $\CC(P,T)$, one can instead use the ratio $\CC(P,T)/|\occ_T(P)|$. 
As another application, in biology, a DNA motif $P$ with large $\CC(P,T)$ occurs in between many distinct \emph{flanking-sequence}~\cite{ccmotiv1} environments. This indicates that the motif is reused across multiple cis-regulatory settings and its functional effect may be modulated by local sequence context, including nearby transcription-factor binding sites~\cite{ccmotiv1}.

\begin{example}\label{exp:CC_case_study} 
We constructed a data structure for answering \CC queries on the 
text from the book ``On the Origin of Species''~\cite{pg19}. In \Cref{tab:cc-case-study},  \texttt{place} and \texttt{origin} have similar frequencies but different  
contextual complexities. Indeed, \texttt{place} has also a larger polysemy number, as measured by
the number of WordNet synsets~\cite{miller1995wordnet}. The word \texttt{case} has the largest frequency and also the largest contextual complexity, but its ratio is in between those of the other two words. Indeed, its polysemy number is also in between theirs. \end{example}

\begin{table}[ht]
\centering
\caption{Polysemy numbers comparison based on \CC.}
\label{tab:cc-case-study}
\begin{tabular}{lrrrr}
\toprule
\textbf{$P$} & \textbf{$|\occ_T(P)|$} & \textbf{$\CC(P,T)$} & \textbf{$\CC(P,T)/|\occ_T(P)|$} & \textbf{Polysemy number} \\
\midrule
\texttt{place} & 57 & 2,691,932 & 47,226.9 & 32 \\
\texttt{origin} & 52 & 1,117,453 & 21,489.5 & 6 \\
\texttt{case} & 282 & 12,571,831 & 44,581.0 & 22 \\
\bottomrule
\end{tabular}
\end{table}

\paragraph{Top-$k$ Contextual Pattern Retrieval (\TCPR).} Notably, we lift the
Contextual Pattern Matching problem, introduced by Navarro~\cite{navarro2020contextual},
to its more general, top-$k$ counterpart.
To this end, we employ a \emph{scoring function} $f_{T}:\Sigma^*  \rightarrow \mathbb{R}_{\geq 0}$ that gets as input a substring of $T$ and outputs
a nonnegative real number. 
We will assume oracle access to
$f_{T}$ (i.e., we can read its values in $\cO(1)$ time) after an $\cO(n)$-time preprocessing of $T$.

\begin{problem}[\TCPR] \label{top-k}
A text \(T\) of length \(n\) and a scoring function \(f_T\) are given for preprocessing.
Given a pattern \(P\) of length \(m\) and integers \(q\ge 0\) and \(k>0\), report \(k\) string pairs $(L,R)$ from \(\mathcal{C}_T(P,q)\) with the highest \(f_T(LPR)\) scores, if such pairs exist. 
\end{problem}

Unlike \LFCS and \LCCS which have a single $(L,R)$ pair as part of their output, \TCPR outputs the $k$ pairs $(L,R)$ with the highest $f_{T}$ scores. The function $f_{T}$ captures the importance of pairs from $\mathcal{C}_{T}(P,q)$ using well-known notions such as their  \emph{frequency} or \emph{span}~\cite{tao2007exploration,hawking1995proximity} in $T$. A prime application of \TCPR comes from bioinformatics. Here, $P$ corresponds to a specific pattern of interest (e.g., a CpG site in a certain type of DNA region called CpG island~\cite{nucleicacidres}, which helps to study the evolutionary history of mammalian  genomes~\cite{nucleicacidres}) and the pairs $(L,R)$  correspond to \emph{flanking sequences}, which are highly relevant for understanding the role of genes~\cite{matlock2021flanker} and DNA shape fluctuations~\cite{li2024predicting}. A \TCPR query returns the $k$ most highly ranked flanking sequences, based on their frequency which is used in the study of~\cite{nucleicacidres}. Another application comes from text analysis, where $P$ corresponds to a word or phrase of interest and the top-$k$ pairs $(L,R)$ to the words or phrases that provide different, highly-important meanings to $P$~\cite{kostic2023mapping,contexticde}. For example, when $P=\texttt{bank}$ the top-$2$ pairs $(L,R)$ when $q=1$ could be $(\texttt{south}, \texttt{Thames})$ and $(\texttt{Savings}, \texttt{Ukraine})$. In the first case, $P$ corresponds to a river bank, while in the second to a financial institution. 

\begin{example}
We constructed a \TCPR data structure on the 
same text $T$ as in \Cref{exp:CC_case_study} 
with $f_T$ outputting the frequency of strings $LPR$ in $T$. For $P=\texttt{nature}$, $q=4$, and $k=20$,  \Cref{tab:tcpr-case-study} shows two representative contexts that provide entirely different meanings to the word $\texttt{nature}$.
\end{example}
\begin{table}[ht]
\centering
\caption{\TCPR on a book dataset.}
\label{tab:tcpr-case-study}
\begin{tabular}{@{}lp{0.53\textwidth}p{0.25\textwidth}@{}}
\toprule
\textbf{$P$} & \textbf{Answer} & \textbf{Meaning} \\
\midrule
\texttt{\color{red}{nature}} &
\texttt{animals and plants throughout {\color{red}{nature}} struggle for life most} &
the natural world \\
\texttt{\color{red}{nature}} &
\texttt{it is in human {\color{red}{nature}} to value any novelty} &
the inherent character of a person \\
\bottomrule
\end{tabular}
\end{table}

\subsection{Contributions}

\begin{enumerate}

\item Our central contribution is the \textsf{ZigZag Trie}. For a string $T$ of length $n$, $\ZZT(T)$ occupies $\cO(n)$ space
and locates all $|\occ_T(P)|$ occurrences of a pattern $P$ of length $m$ in $T$ in the optimal $\cO(m+|\occ_T(P)|)$ time.
We present an algorithm to construct $\ZZT(T)$
in $\cO(n\log n)$ time and $\cO(n)$ space, first for the case 
$|L|=|R|$ considered in~\cite{navarro2020contextual,navarro2026}, and then for the general case. Crucially, we show how the \ZZT can be used to construct specialized indexes that answer \LFCS, \LCCS, and \CC queries in \emph{optimal} time, and \TCPR queries in \emph{near-optimal} time.
 
\item We propose an algorithm that, 
given $\ZZT(T)$, constructs an $\cO(n)$-size
index that answers \LFCS queries in the optimal $\cO(m)$ time. This index can be constructed in $\cO(n)$ time and space given $\ZZT(T)$. To arrive at this result, we rely on the structural properties of $\ZZT(T)$.

\item We propose an algorithm that, given $\ZZT(T)$, constructs an $\cO(N)$-size index that answers \LCCS queries in the optimal $\cO(m)$ time. This index can be constructed in $\cO(N)$ time and space, given $\ZZT(T)$. To arrive at this result, we further utilize an algorithm for the \emph{color set size} problem~\cite{DBLP:conf/cpm/Hui92}.

\item We propose an algorithm that, given $\ZZT(T)$, constructs an $\cO(n)$-size index that answers \CC queries in the optimal $\cO(m)$ time. This index can be constructed in $\cO(n)$ time and space given $\ZZT(T)$.  The main idea in the construction is to count, from every possible locus of a query pattern of length $m$, and every $q$, the number of distinct contextual extensions at string depth $m+2q$, performed using dynamic programming. 

\item We propose an algorithm that, given $\ZZT(T)$, constructs an 
$\widetilde{\cO}(n)$-size index that answers \TCPR queries in near-optimal $\cO(m)+\widetilde{\cO}(k)$ time. This index can be constructed in 
 $\widetilde{\cO}(n)$ time and space, using a data structure for top-$k$ orthogonal range reporting~\cite{rahul2011efficient}. We also propose a parameterized version of our index that outperforms the non-parameterized one in all efficiency measures~\cite{contexticde}: query time, index size, construction space, and construction time. Both versions support a wide variety of $f_T$ functions from the literature~\cite{tao2007exploration,hawking1995proximity, hon2014space}.
 
\item We compare our specialized indexes to baseline indexes, as there are no \emph{out-of-the-box} solutions for our problems. The baselines for \LFCS, \LCCS, and \CC are based on the suffix tree~\cite{DBLP:conf/focs/Weiner73}, the textbook index for answering pattern matching queries in optimal time. The baseline for \TCPR uses the state-of-the-art index for reporting $\mathcal{C}_T(P,q)$ by Navarro~\cite{navarro2020contextual}. 
Unlike our indexes and as discussed above, these baselines have query times that depend on the number $|\occ_T(P)|$ of occurrences of $P$ in $T$ or on the size of $\mathcal{C}_T(P,q)$, which are on the order of thousands in practice~\cite{navarro2020contextual,contexticde}. Indeed, using real billion-letter datasets from different domains, we show that our indexes outperform the baselines by \emph{orders of magnitude} in terms of query time and perform similarly or better in terms of index size, construction space, and construction time. Thus, they offer a \emph{desirable trade-off}: answering queries much more efficiently with a similar size and one-off cost in terms of time and space to be constructed. 

\end{enumerate}

\subsection{Organization}

\Cref{sec:preliminaries} discusses some basic concepts  and \Cref{sec:ZigZagindex} our \ZZT index. \Cref{sec:LFCSmain,sec:LCCSmain,sec:CCmain,sec:tcprmain} discuss the baseline and our indexes for \LFCS, \LCCS, \CC, and \TCPR.   \Cref{sec:related}  discusses related work,  \Cref{sec:experiments} presents our experimental evaluation, and \Cref{sec:conclusion} concludes the paper. 

\section{Preliminaries}\label{sec:preliminaries}

\paragraph{Strings.} An \emph{alphabet} $\Sigma$ is a finite set of elements called \emph{letters}.
We consider throughout that $\Sigma$ is an integer alphabet.  
For a string $T=T[1\dd n] \in \Sigma^n$, we denote its length $n$ by $|T|$ and its $i$th letter by $T[i]$. A \emph{substring} of $T$ starting at position $i$ and ending at position $j$ of $T$ is denoted by $T[i\dd j]$. A substring $S$ of $T$ may have multiple occurrences in $T$.
We thus characterize an \emph{occurrence} of $S$ in $T$ by its \emph{starting position} $i\in[1,n]$; i.e.,
$S=T[i\dd i+|S|-1]$.
A substring of the form $T[i\dd n]$ is a \emph{suffix} of $T$. 
A substring of the form $T[1\dd i]$ is a \emph{prefix} of $T$.
The \emph{reverse} of $T$ is denoted by $T^R$. For example, for string   $T=\texttt{CTAAG}$, the reverse is $T^R=\texttt{GAATC}$. 
The \emph{concatenation} of strings $X$, $Y$ is denoted by $X\cdot Y$ (or by $XY$).

\paragraph{Compacted Tries, Suffix Trees, Suffix and LCP Arrays.} Let $\mathcal{X}$ be a set of strings. The \emph{compacted trie} of $\mathcal{X}$ is the trie of $\mathcal{X}$ in which each maximal branchless path is replaced by a single edge whose label is the concatenation of its edge labels. We append a terminating symbol $\texttt{\$}$ to every string, such that $\texttt{\$}$ occurs only at the last position of each string and is the lexicographically smallest. For a node $v$ in a compacted trie $\mathcal{T}$, $\str(v)$ is the concatenation of edge labels on the root-to-$v$ path. We define the \emph{string depth} of a node $v$ as $\sd(v)=|\str(v)|$. For example, consider $v_{10}$ in \Cref{ex:c}. Its edge with label \texttt{nba\$} replaces a path of edges \texttt{n}, \texttt{b}, \texttt{a}, \texttt{\$}; moreover, $\str(v_{10})=\texttt{naanba\$}$ and $\sd(v_{10})=7$. The \emph{locus} of a string $P$ in $\mathcal{T}$ is the pair $(v,|P|)$, where $v$ is the node with the smallest $\sd(v)$ such that $P$ is a prefix of $\str(v)$. When $\str(v)=P$, $(v,|P|)$ corresponds to an \emph{explicit} node; otherwise, it corresponds to an \emph{implicit} node. For example, the locus of $P=\texttt{na}$ in \Cref{ex:c} is $(v_8,2)$, since $v_8$ has the smallest $\sd$ such that $P$ is a prefix of $\str(v_8)$. We denote the \emph{parent} of a node $v$ by $\parent(v)$. 

The \emph{suffix tree} of a string $T[1\dd n]$, denoted by $\ST(T)$, is the compacted trie of the set of suffixes of $T$ with its leaves ordered in ascending lexicographical rank of the corresponding suffixes~\cite{DBLP:conf/focs/Weiner73}; see \Cref{stex} for an example. We denote by $l_i$ the leaf corresponding to the suffix $T[i\dd n]$ and associate it with id $i$. The suffix tree can be constructed in $\cO(n)$ time for integer alphabets of polynomial size in $n$~\cite{DBLP:conf/focs/Farach97}. The suffix tree of a collection $\mathcal{C}$ of strings of total length $N$ can be similarly constructed in $\cO(N)$ time. We denote it by 
$\ST(\mathcal{C})$.

\begin{figure}[t]
\centering
\begin{tikzpicture}[-, level distance=1.5cm]
\filldraw[black] (0,0) circle (2pt); 
\filldraw[black] (1.59,-1) circle (2pt);
\filldraw[black] (-1.67,-2) circle (2pt);
\filldraw[black] (-0.67,-1) circle (2pt);
\filldraw[black] (-0.47,-2) circle (2pt);
\filldraw[black] (-1.2,-2.8) circle (2pt);
\filldraw[black] (0.55,-2.2) circle (2pt);
\filldraw[black] (-0.35,-3) circle (2pt);
\filldraw[black] (1.29,-2.2) circle (2pt);
\filldraw[black] (2.29,-2.2) circle (2pt);

\draw (0,0) -- (-0.67,-1) node[midway,sloped,above] {$\texttt{a}$};
\draw (0,0) -- (0.55,-2.2) node[midway,sloped,above] {$\texttt{banana\$}$};
\draw (0,0) -- (1.59,-1) node[midway,sloped,above] {$\texttt{na}$};

\draw (-1.67,-2) -- (-0.67,-1) node[midway,sloped,above] {$\texttt{\$}$};
\draw (-0.47,-2) -- (-0.67,-1) node[midway,sloped,above] {$\texttt{na}$};
\draw (-0.47,-2) -- (-1.2,-2.8) node[midway,sloped,above] {$\texttt{\$}$};
\draw (-0.47,-2) -- (-0.35,-3) node[midway,sloped,above] {$\texttt{na\$}$};

\draw (1.59,-1) -- (1.29,-2.2) node[midway,sloped,above] {$\texttt{\$}$};
\draw (1.59,-1) -- (2.29,-2.2) node[midway,sloped,above] {$\texttt{na\$}$};

\node[] at (-1.65,-2.3) {$l_6$};
\node[draw, rectangle, fill=white, inner sep=1pt] at (-1.96, -2) {6};

\node[] at (-1.2,-3.1) {$l_4$};
\node[draw, rectangle, fill=white, inner sep=1pt] at (-1.5, -2.8) {4};

\node[] at (-0.1,-3.1) {$l_2$};
\node[draw, rectangle, fill=white, inner sep=1pt] at (-0.63, -3) {2};

\node[] at (0.6,-2.5) {$l_1$};
\node[draw, rectangle, fill=white, inner sep=1pt] at (0.3, -2.2) {1};

\node[] at (1.3,-2.5) {$l_5$};
\node[draw, rectangle, fill=white, inner sep=1pt] at (1.02, -2.2) {5};

\node[] at (2.3,-2.5) {$l_3$};
\node[draw, rectangle, fill=white, inner sep=1pt] at (2, -2.2) {3};

\end{tikzpicture} 

\caption{$\ST(T)$ for $T=\texttt{banana}$; each leaf id is in a square.}\label{stex} 
\end{figure}
The \emph{suffix array} $\SA[1\dd n]$ of a string $T[1\dd n]$ is a permutation of $[1,n]$ such that $T[\SA[i]\dd n]$ is the $i$th smallest suffix when ordered  lexicographically~\cite{DBLP:journals/siamcomp/ManberM93}. The \emph{inverse suffix array} $\ISA[1\dd n]$
is a permutation such that $\ISA[\SA[i]]=i$. For example, for $T=\texttt{banana}$, $\SA[1\dd 6]=[6,4,2,1,5,3]$ and $\ISA[1\dd 6]=[4,3,6,2,5,1]$.

We use $\LCE_T(i,j)$ to denote the length of the \emph{longest common prefix} (LCP) of $T[i\dd n]$ and $T[j\dd n]$. For example, $\LCE_T(3,5)=2$ in \Cref{stex} because the LCP of $T[3\dd 6]=\texttt{nana}$ and $T[5\dd 6]=\texttt{na}$ is $\texttt{na}$, which has length $2$. 
Note that $\LCE_T(i,j)$ 
can be computed in $\cO(1)$ time after an $\cO(n)$-time preprocessing of the suffix tree~\cite{DBLP:journals/siamcomp/HarelT84}. We analogously use $\overleftarrow{\LCE_T}(i,j)$
to denote the length of the \emph{longest common suffix} (LCS) 
of $T[1\dd i]$ and $T[1\dd j]$. Similarly,
this quantity can be computed in $\cO(1)$ time after an $\cO(n)$-time preprocessing.

The $\LCP[1\dd n]$ array of a string $T[1\dd n]$ stores the length of the LCP of lexicographically adjacent suffixes~\cite{DBLP:journals/siamcomp/ManberM93}. For $j>1$, $\LCP[j]$ stores the length of the LCP between the suffixes starting at $\SA[j-1]$ and $\SA[j]$, and $\LCP[1]=0$. For example, for $T=\texttt{banana}$, $\LCP[1\dd 6]=[0,1,3,0,0,2]$. 
Given $\SA$, the $\LCP$ array of $T$ can be computed in $\cO(n)$ time~\cite{DBLP:conf/cpm/KasaiLAAP01}. 

\paragraph{Top-$k$ Queries for Orthogonal Range Reporting.} Let $S$ be a set of $n$ weighted points in $\mathbb{R}^d$, where each point $p \in S$ has a real-valued weight $w(p)$. Preprocess $S$ into a data structure so that given: (1) an orthogonal query box $Q = \prod_{i=1}^{d} [a_i, b_i]$; and (2) an integer $k \in [1, n]$, the data structure can efficiently report the $k$ points in $S \cap Q$ with the highest weights. Our data structure for \TCPR is  built upon the one from~\cite{rahul2011efficient}, which uses $\cO(n \log^d n)$ space with 
$\cO(\log^d n + k)$ query time for arbitrary order reporting, where $d=\cO(1)$. 

\section{The ZigZag Trie}\label{sec:ZigZagindex}

\paragraph{Definitions.} Let \(T[1\dd n]\) be a string of length \(n\) over alphabet \(\Sigma\). For every position \(i \in [1,n]\), we define the \emph{ZigZag} string \(Z_i(T)\) as the sequence
$Z_i = T[i]\,T[i+1]\,T[i-1]\,T[i+2]\,T[i-2]\cdots$,
where the construction terminates as soon as the next index falls outside of \([1,n]\); we drop \((T)\) from \(Z_i(T)\) when the context is clear. More formally, for all $i\in[1,n]$, we have
{
\abovedisplayskip=2mm
\belowdisplayskip=1mm
\[
|Z_i| = \min\left\{k \ge 1 :  i + (-1)^k\left\lfloor \frac{k}{2} \right\rfloor \notin [1, n]\right\} - 1,
\]}
and then, for all \(k\in[1,|Z_i|]\), we have
{
\abovedisplayskip=2mm
\belowdisplayskip=1mm
\[
Z_i[k] = T\!\left[i + (-1)^k\left\lfloor \frac{k}{2} \right\rfloor\right].
\]}

The above equations tell us that given \emph{any} $i,k$, we can determine whether $Z_i[k]$ is defined and, if so, access it in $\cO(1)$ time. We will thus assume oracle access to $\{Z_1, Z_2, \ldots, Z_n\}$.
From the above, it is also easy to see that the letters of the prefix $Z_i[1\dd \ell]$ of $Z_i$ are exactly those of the length-$\ell$ substring of $T$ starting at position $(i-\lceil \ell/2\rceil + 1)$.

The \textsf{ZigZag Trie} (\ZZT) of $T$, denoted by $\ZZT(T)$, is the compacted trie over all $n$ ZigZag strings $\{Z_1, Z_2, \ldots, Z_n\}$; we append a terminating symbol \(\texttt{\$}\notin\Sigma\), that is lexicographically smaller than every letter in \(\Sigma\), to each $Z_i$.
Leaf and internal nodes of $\ZZT(T)$ are defined analogously to those of $\ST(T)$. 

We further define: (I) the \textsf{ZigZag Array} (\ZZA) of $T$, denoted by $\ZZA(T)$: $\ZZA(T)[i]=j$ if $Z_j$ is the $i$th ZigZag string in lexicographic order, and (II) the \emph{ZigZag} $\LCP$ array of $T$,  denoted by $\ZZLCP(T)$: $\ZZLCP(T)[1]=0$, and $\ZZLCP(T)[j]$, for $j>1$, stores the length of the longest common prefix of $Z_{\ZZA(T)[j-1]}$ and $Z_{\ZZA(T)[j]}$. We use $\ZZA(T)$ and $\ZZLCP(T)$ to  efficiently construct $\ZZT(T)$. 

\begin{example}
For \(T=\texttt{banana}\), we have the following.
\[
\begin{array}{r|l|l|l|l}
i & Z_i & \ZZA(T)[i] & Z_{\ZZA(T)[i]} & \ZZLCP(T)[i] \\ \hline
1 & \texttt{ba}     & 6 & \texttt{a}      & 0 \\
2 & \texttt{anba}   & 2 & \texttt{anba}   & 1 \\
3 & \texttt{naanba} & 4 & \texttt{annaa}  & 2 \\
4 & \texttt{annaa}  & 1 & \texttt{ba}     & 0 \\
5 & \texttt{naa}    & 5 & \texttt{naa}    & 0 \\
6 & \texttt{a}      & 3 & \texttt{naanba} & 3
\end{array}
\]
\end{example}

\paragraph{Construction.} To construct $\ZZT(T)$, we need to sort the ZigZag strings $\{Z_1,Z_2, \ldots, Z_n\}$  lexicographically. However, explicitly constructing and then sorting them is prohibitively expensive, as it takes $\cO(\sum_{i\in[1,n]}|Z_i|)=\cO(n^2)$ time. In response, we show how they can be sorted in $\cO(n\log n)$ time \emph{without being explicitly constructed}. This leads to an $\cO(n \log n)$-time algorithm for constructing \(\ZZT(T)\). 

First, construct $\ST(T)$ and $\ST(T^R)$ to support $\cO(1)$-time LCE queries in both directions. 
To compare any two ZigZag strings $Z_i$ and $Z_j$ lexicographically, we compute:
\begin{itemize}
\item $r \coloneqq \overleftarrow{\LCE_T}(i,j)$, i.e., the length of the  longest common suffix of prefixes $T[1\dd i]$ and $T[1\dd j]$;
\item $f \coloneqq \LCE_T(i+1, j+1)$, i.e., the length of the longest common prefix of suffixes $T[i+1\dd n]$ and $T[j+1\dd n]$.
\end{itemize}

If $r \leq f$, then the two strings agree through the first $2r$ positions, and the first mismatch occurs at position $2r+1$; otherwise, they agree through position $2f+1$, and the first mismatch occurs at position $2f+2$. 
Hence, they can be compared in $\cO(1)$ time. This enables sorting the strings in $\{Z_1, Z_2, \ldots, Z_n\}$ without explicitly constructing them, in $\cO(n \log n)$ time, using any optimal comparison-based sorting algorithm, thereby producing $\ZZA(T)$. 
Then, for every pair of successive entries of $\ZZA(T)$, we use two LCE queries to infer the LCP value yielding $\ZZLCP(T)$.
Finally, we use the algorithm by Kasai et al.~\cite{DBLP:conf/cpm/KasaiLAAP01} to construct $\ZZT(T)$ from $\ZZA(T)$ and $\ZZLCP(T)$ in $\cO(n)$ time, analogously to how a suffix tree can be built from a suffix array and LCP array.

\begin{example}
Consider the comparison of \(Z_2=\texttt{anba}\) and \(Z_4=\texttt{annaa}\) for \(T=\texttt{banana}\).
We compute \(r\coloneqq\overleftarrow{\LCE_T}(2,4)=1\) and \(f\coloneqq\LCE_T(3,5)=2\).
Since \(r\le f\), the first mismatch occurs  at position $2r+1=3$: \(Z_2[3]=\texttt{b}\neq Z_4[3]=\texttt{n}\).
As $\texttt{b}<\texttt{n}$, we infer \(Z_2<Z_4\) 
in \(\cO(1)\) time. 
\end{example}

Each branching node of $\ZZT(T)$ is
augmented with a dictionary that supports $\cO(1)$-time access to the edges based on the first letter (key) of their label.
The total size of the dictionaries is $\cO(n)$, and they can be constructed in $\cO(n \log n)$ time~\cite{DBLP:conf/icalp/Ruzic08}.
We have arrived at the following result.

\begin{theorem}\label{the:con}
For every text $T$ of length $n$,
$\ZZT(T)$ can be constructed in $\cO(n \log n)$ time using $\cO(n)$ space. The data structure occupies $\cO(n)$ space.
\end{theorem}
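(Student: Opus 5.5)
The plan is to assemble the construction described just before the statement into a pipeline. For each stage we check its time and space bounds, and we also check correctness of the $\cO(1)$-time comparison, which is the only non-routine ingredient.

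\textbf{Step 1: LCE structures.} Build $\ST(T)$ and $\ST(T^R)$ in $\cO(n)$ time and space~\cite{DBLP:conf/focs/Farach97}. Augment each with lowest-common-ancestor structures~\cite{DBLP:journals/siamcomp/HarelT84}, so that $\LCE_T$ and $\overleftarrow{\LCE_T}$ are answered in $\cO(1)$ time. Boundary cases, such as $i+1>n$, are handled by returning $0$.

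\textbf{Step 2: correctness of the comparison.} Fix $i\neq j$ and set $r=\overleftarrow{\LCE_T}(i,j)$ and $f=\LCE_T(i+1,j+1)$. By the definition of $Z_i$:
\begin{itemize}
\item odd positions $2t+1$ read $T[i-t]$, the $(t+1)$st letter walking leftwards from $i$;
\item even positions $2t$ read $T[i+t]$, the $t$th letter walking rightwards from $i+1$.
\end{itemize}
Hence $Z_i$ and $Z_j$ agree on position $2t+1$ for all $t<r$ and on position $2t$ for all $t\le f$. Two cases follow.
\begin{itemize}
\item If $r\le f$, every position $\le 2r$ agrees. Position $2r+1$ is then either a mismatch of letters or the end of one of the strings; the latter happens when $i-r<1$ or $j-r<1$, and the terminator \texttt{\$} decides the order.
\item If $r>f$, positions up to $2f+1$ agree, and position $2f+2$ is a mismatch or an end (the right walk exceeds $n$).
\end{itemize}
In both cases we read the two relevant letters through the oracle in $\cO(1)$ time, treating an out-of-range index as \texttt{\$}. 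If both strings end at the same position, then the two left walks (or the two right walks) hit the text boundary together. This forces $i=j$, so distinct indices always yield a strict order. The same case analysis gives the common-prefix length: $2r$ or $2f+1$, capped by the lengths of the strings. That value is exactly $\ZZLCP$ for adjacent entries.

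\textbf{Step 3: sorting and tree building.} Sorting with this $\cO(1)$-time comparator by merge sort takes $\cO(n\log n)$ time and $\cO(n)$ space and yields $\ZZA(T)$. One application of Step 2 per adjacent pair gives $\ZZLCP(T)$ in $\cO(n)$ time. We then build the compacted trie from the sorted list and adjacent LCPs with the standard stack-based (Kasai-style) procedure. This works because lexicographic order plus adjacent LCPs determine any compacted trie; nothing specific to suffixes is used. Edge labels are stored as pairs (leaf id, depth interval) and read through the oracle, so the trie has $n$ leaves, at most $n-1$ internal nodes, and $\cO(n)$ size. Finally, the per-node dictionaries are built in $\cO(n\log n)$ total time and $\cO(n)$ space~\cite{DBLP:conf/icalp/Ruzic08}.

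The main obstacle is Step 2, specifically the boundary cases where one ZigZag string terminates before a letter mismatch. The plan there is to check carefully that the terminator \texttt{\$} makes the comparison consistent, and that the LCP value is correctly truncated by $\min(|Z_i|,|Z_j|)$. The remaining steps are standard.
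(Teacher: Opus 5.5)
Your proposal is correct and follows essentially the same route as the paper. Both compare $Z_i$ and $Z_j$ in $\cO(1)$ time using one backward and one forward LCE query (first mismatch at position $2r+1$ if $r\le f$, otherwise at $2f+2$), sort with this comparator to get $\ZZA(T)$, use the same case analysis per adjacent pair to get $\ZZLCP(T)$, build the trie with the stack-based procedure, and add the per-node dictionaries in $\cO(n\log n)$ time. Your proposal additionally spells out three details the paper leaves implicit: early termination handled by \texttt{\$}, strictness of the order for $i\neq j$, and the implicit (leaf id, depth interval) representation of edge labels needed for $\cO(n)$ space.
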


\paragraph{Querying.} To search for a pattern \(P\) of length \(m\) in \(\ZZT(T)\), we first transform \(P\) into its ZigZag counterpart. Let \(x\coloneqq\lceil m/2\rceil\). The \emph{ZigZag transformation} \(\ZZ(P)\) is defined as:
{
\abovedisplayskip=2mm
\belowdisplayskip=1mm
\[
\ZZ(P)=P[x]P[x+1]P[x-1]P[x+2]P[x-2]\dots
\]}

\noindent with $|\ZZ(P)|=m$. More formally, for all $k\in[1,m]$, 
{
\abovedisplayskip=1mm
\belowdisplayskip=1mm
\[
\ZZ(P)[k]=P\left[x + (-1)^k \cdot \left\lfloor \frac{k}{2} \right\rfloor\right].
\]
}
\begin{example}
   $\ZZ(\texttt{abcd}) = \texttt{bcad}$ and $\ZZ(\texttt{abcde}) = \texttt{cdbea}$. 
\end{example}
This transformation is clearly implementable in $\cO(m)$ time. Then, finding the locus of \(\ZZ(P)\) in $\ZZT(T)$ takes $\cO(m)$ time. Thus, like with $\ST(T)$~\cite{DBLP:conf/focs/Weiner73}, we can locate all $|\occ_T(P)|$ occurrences of $P$ in $T$ in the optimal $\cO(m+|\occ_T(P)|)$ time using $\ZZT(T)$.
Note that it is also straightforward to implement the inverse of \(\ZZ\) in $\cO(m)$ time: for all $i\in[1,m]$,
{
\abovedisplayskip=2mm
\belowdisplayskip=2mm
\[
P[i] = \begin{cases}
\ZZ(P)[2x - 2i + 1] & \text{if \(i \leq x\)}, \\[1em]
\ZZ(P)[2(i - x)] & \text{if \(i > x\)}.
\end{cases}
\]}

Similarly to suffix trees,
we can trivially generalize the $\ZZT$ for a \emph{collection} $\mathcal{C}=\{T_1,\dots,T_c\}$ of $c$ strings of total length $N$, which we denote by $\ZZT(\mathcal{C})$. Yet, there are two differences: (I) The compacted trie is now constructed over the \emph{union} of the ZigZag strings of every string in $\mathcal{C}$. 
(II) We append a unique terminating symbol $\texttt{\$}_i\notin\Sigma$ to the ZigZag strings of $T_i$, for all $i\in[1,c]$. The construction clearly takes $\cO(N \log N)$ time. 

\paragraph{Generalization to Asymmetric Context Lengths.} Following~\cite{navarro2020contextual,navarro2026}, we considered context pairs with $|L|=|R|$,
but our construction can be readily generalized to $|L|=\alpha q$ and $|R|=\beta q$, for any
constant integers $\alpha,\beta\ge 1$. The only change is in the definition of the
ZigZag strings: after the initial letter
$T[i]$, the $k$-th step, for $k=1,2,\ldots$, appends the next $\beta$ unread letters to the right (in increasing order of position) and then the next $\alpha$ unread letters to the left (in decreasing order of position), terminating as soon as an index falls outside $[1,n]$; for $\alpha=\beta=1$ this
is exactly the definition above, and for
$\alpha=2$, $\beta=1$ we obtain $Z_i=T[i]\,T[i+1]T[i-1]T[i-2]\,T[i+2]T[i-3]T[i-4]\cdots$. Since each letter is added adjacent to the window of positions read so
far, the letters of every prefix $Z_i[1\dd \ell]$ form a contiguous
window of $T$ whose position relative to $i$ depends only on $\ell$.
Hence every pattern $P$ has a unique transformed counterpart, and the
pairs $(L,R)$ with $|R|=\beta q$ and $|L|=\alpha q$ around $P$ are again exactly
the descendants of the locus of the transformed $P$ at string depth
$|P|+(\alpha+\beta)q$. Letters of $Z_i$ are still accessed in $\cO(1)$ time, and
two ZigZag strings are still compared in $\cO(1)$ time using one forward
and one backward LCE query. Hence the ZigZag strings can be
sorted within the same time complexity, and the bounds of
\Cref{the:con} are identical.

\section{Indexes for \LFCS}\label{sec:LFCSmain}

\subsection{Baseline Index for \LFCS} \label{sec:LFCS-BA}

\paragraph{Construction.} Construct the suffix trees \(\ST(T)\) and \(\ST(T^R)\) for the text $T$ and its reverse \(T^R\). Using a DFS on \(\ST(T)\), for each node \(v\), add a pointer to a descendant \(u\) of \(v\) with maximum string depth such that the number of leaves $L(u)$ in the subtree rooted at $u$ is at least $\tau$; such a pointer may not exist, and \(u\) may be \(v\) itself. This pointer will be used to get the \emph{longest} frequent (i.e., occurring at least $\tau$ times) extension $LPR$ of $P$, for a query $P$. We then perform the same preprocessing over \(\ST(T^R)\). The key property is \Cref{lem:correct}.

\begin{lemma}\label{lem:correct}
Let $LPR$ be an optimal solution to \LFCS with $\tau\ge 2$. Consider the locus of $PR$ and $(LP)^R$ in $\ST(T)$ and $\ST(T^R)$, respectively. At least one of these loci is explicit.
\end{lemma}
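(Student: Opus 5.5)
We argue by contradiction, using the maximality of $|L|=|R|$ in an optimal solution. Write $q\coloneqq|L|=|R|$ and suppose both loci are implicit. The locus of $PR$ in $\ST(T)$ is implicit exactly when $PR$ is not right-branching with respect to its occurrences, counted with multiplicity. More precisely, either every occurrence of $PR$ in $T$ is followed by the same letter $b$, or we are in a boundary situation where some occurrence is a suffix of $T$ and so is followed by $\texttt{\$}$. In the second situation the leaf of that suffix makes the locus explicit, since every suffix ends at a leaf and a leaf is an explicit node. So implicitness of the locus of $PR$ means that all occurrences of $PR$ extend to occurrences of $PRb$ for a single letter $b$, and the number of occurrences of $PRb$ equals that of $PR$. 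Symmetrically, implicitness of the locus of $(LP)^R$ in $\ST(T^R)$ means that all occurrences of $LP$ are preceded by a single letter $a$, and none of them starts at position $1$ of $T$.

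The central step is to combine these two facts on the occurrence set of $LPR$ itself. Take any occurrence of $LPR$ in $T$, starting at position $s$. Its prefix $LP$ occurs at $s$, so $T[s-1]=a$ and $s>1$. Its suffix $PR$ occurs at $s+|L|$, so $T[s+|LPR|]=b$ and the occurrence does not end at $n$. Hence every occurrence of $LPR$ extends to an occurrence of $aLPRb$. The number of occurrences of $aLPRb$ therefore equals that of $LPR$, which is at least $\tau$. Now $(aL,Rb)$ satisfies $|aL|=|Rb|=q+1$, and $aLPRb$ occurs at least $\tau$ times. This contradicts the optimality, that is the maximal length, of $LPR$.

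The delicate point is that the frequency condition concerns $LPR$, whereas implicitness concerns $PR$ and $LP$, which may occur more often than $LPR$. The argument above avoids this issue because implicitness is a statement about all occurrences of $PR$ (respectively $LP$), and so in particular about those occurrences lying inside occurrences of $LPR$. The main care needed is to justify the equivalence between an implicit locus and a unique extending letter with no text-boundary occurrence, via the terminating $\texttt{\$}$ and the fact that leaves are explicit. We would also note where $\tau\ge2$ enters. For $\tau=1$ the optimum can be a string with a single occurrence that hits a text boundary, which is a degenerate case. It can also happen that the loci sit on leaf edges, where "explicit" would have to be interpreted as a leaf, so the hypothesis $\tau\ge2$ is there to exclude these situations. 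With $\tau\ge2$, the locus of $PR$ lies above a node with at least two leaves, and so it lies on an internal edge or at an internal node. The dichotomy "branching node versus unique extension" then applies cleanly, and this is the step we expect to require the most care in writing down.
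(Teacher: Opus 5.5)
Your proof is correct and follows the same route as the paper. Both assume the two loci are implicit, extract a unique extending letter to the right of $PR$ and to the left of $LP$, and conclude that every occurrence of $LPR$ extends to an occurrence of a strictly longer balanced string, which contradicts optimality.

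Your treatment of the text boundaries, which the paper glosses over, is a welcome refinement, but the stated reason is slightly off. A leaf spells $PR\texttt{\$}\neq PR$, so a lone boundary occurrence of $PR$ gives an implicit locus, not an explicit one. What excludes the boundary case is that with at least $\tau\ge 2$ occurrences, an occurrence followed by \texttt{\$} together with any other occurrence makes the locus a branching, hence explicit, node.
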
 

 \begin{proof}
Assume towards contradiction that both loci are implicit. Since the locus of $PR$ in $\ST(T)$ is implicit, it lies inside an edge label and there exists a \emph{unique} letter $a$ such that $PRa$ is still represented on that same edge; therefore $PRa$ occurs in $T$. Symmetrically, since the locus of $(LP)^R$ in $\ST(T^R)$ is implicit, there exists a \emph{unique} letter $b$ such that $bLP$ occurs in $T$. Thus,  every occurrence of $LPR$ in $T$ is preceded by $b$ and followed by $a$ and so $|\occ_T(bLPRa)| = |\occ_T(LPR)| \ge \tau$. Moreover, since $|L|=|R|$, $|L|+1=|R|+1$. Thus,  $bLPRa$ is a feasible solution to \LFCS. Since it is longer than $LPR$, this contradicts the optimality of $LPR$. 
\end{proof}

\paragraph{Querying.} We first locate the locus \((v,|P|)\) of the query pattern \(P\) in \(\ST(T)\) and perform a DFS on the subtree rooted at \(v\), collecting every node \(u\) with \(L(u)\ge \tau\). Each such node $u$ defines a possible frequent right extension \(PR\) of $P$.

For each \(PR\), we locate the locus \((w,|PR|)\) of \((PR)^R\) in \(\ST(T^R)\), and use the pointer of $w$ to obtain the longest frequent left extension of $PR$. We truncate \(L\) if \(|L|>|R|\) so that both have the same length \(|R|\), as required.
(If \(|L|<|R|\), then no candidate is constructed for node $u$.)
For each collected node $u$, we form a candidate $LPR$. 
Every such candidate satisfies the frequency threshold and the  condition $|L|=|R|$ required by \Cref{def:lfcs}. 
We output a \emph{longest} candidate to also meet the optimization goal. 

The query algorithm described above is performed again starting from the locus of \(P^R\) in \(\ST(T^R)\) and searching in \(\ST(T)\). The correctness of this algorithm follows by \Cref{lem:correct}. (Note that for $\tau=1$, the problem is trivial.)

\paragraph{Complexity.} The index has $\cO(n)$ size as \(\ST(T)\) and \(\ST(T^R)\) have $\cO(n)$ size. The construction time and space are $\cO(n)$, because constructing and  preprocessing \(\ST(T)\) and \(\ST(T^R)\) via DFS all take \(\cO(n)\) time and use \(\cO(n)\) space. At query time, locating \((v,|P|)\) takes \(\cO(m)\) time, and collecting all candidates \(PR\) takes \(\cO(|\occ_T(P)|)\) time. For each \(PR\), we recover the corresponding left context by traversing upward from a leaf in the subtree of \(w\), which takes \(\cO(|\occ_T(P)|)\) time. Therefore, the overall query time is \(\cO\!\left(m + |\occ_T(P)|^2\right)\).

\subsection{Our Index for \LFCS}\label{sec:lfcs}

Our index avoids the expensive ``coordinated'' traversal of the suffix trees of the baseline, by exploiting the structure of \ZZT. It adds pointers to $\ZZT(T)$ that \emph{directly} lead to nodes from which we get query  answers. This yields the optimal $\cO(m)$ query time, which does \emph{not} depend on $|\occ_T(P)|$. 

\paragraph{Construction.} Construct \(\ZZT(T)\) using \Cref{the:con}. Then perform a DFS and compute \(L(v)\), for every node \(v\) in \(\ZZT(T)\). In a second DFS, for each  node \(v\), add a pointer to a descendant \(u\) of \(v\) with maximum string depth such that \(L(u)\ge \tau\); such a pointer may not exist, and \(u\) may be \(v\) itself. This takes \(\cO(n)\) time. This pointer will be used to get the encoded longest  possible frequent $LPR$.   

\paragraph{Querying.} Given a pattern \(P\) of length \(m\), compute \(\ZZ(P)\) and locate its locus \((v,m)\) in \(\ZZT(T)\) in \(\cO(m)\) time. If \(v\) has no stored pointer, then no such \(LPR\) string exists. Otherwise, follow the pointer from \(v\) to a node \(u\): $u$ encodes the \emph{longest} frequent $LPR$ as \(\ZZ(LPR)=\str(u)\). If $|L|+|R|$ is not even, we truncate the last letter to guarantee that $|L|=|R|$.
Let $i$ be such that $\str(u)$ is a prefix of string $Z_i$.
Then $LPR$ occurs at position $(i-\lceil \ell/2\rceil + 1)$ of $T$, with $\ell=|LPR|$. The total query time is \(\cO(m)\).
This is correct by how \(\ZZT(T)\) is constructed and by how the node pointers are constructed.

\begin{example}
Let $T = \texttt{banana}$, $\tau = 2$, and $P = \texttt{n}$.
We compute $\ZZ{(P)} = \texttt{n}$ and locate its locus $(v_8,1)$ in $\ZZT(T)$ (see \Cref{fig:zigzag_with_lfcs_pointers}).
Following the pointer stored at $v_8$, we arrive at $v_8$ itself, the deepest descendant $u$
with $\leafc{u} \geq 2$, which corresponds to the ZigZag prefix
$\texttt{naa}$ of length $3 = m + 2q$ with $q = 1$.
It gives $L = \texttt{a}$, $R = \texttt{a}$, and $LPR = \texttt{ana}$,
which occurs at positions $2$ and $4$ in $T$ with frequency $2 \geq \tau$.
\end{example}

\begin{figure}[ht]
\centering
\includegraphics[width=0.25\linewidth, trim=10pt 25pt 0pt 0pt, clip]{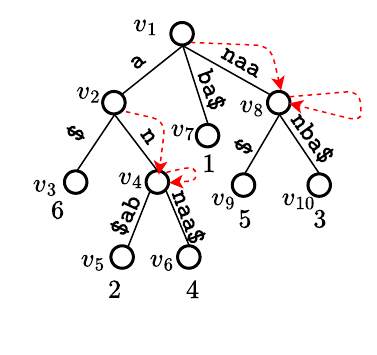}
\caption{$\ZZT(T)$ of $T=\texttt{banana}$ with $\tau=2$, annotated with pointers (dashed red arrows).}
\label{fig:zigzag_with_lfcs_pointers}
\end{figure}

\paragraph{Main Result.} From the above, we directly obtain: 

\begin{theorem}
\LFCS can be solved using an index of $\cO(n)$ size supporting $\cO(m)$-time queries. The data structure can be constructed in $\cO(n)$ time and space given $\ZZT(T)$.
\end{theorem}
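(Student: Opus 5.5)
The plan is to verify three ingredients separately: the size and construction of the index, the correspondence between nodes of $\ZZT(T)$ and candidate strings $LPR$, and the query procedure.

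\textbf{Size and construction.} The index is $\ZZT(T)$, which has $\cO(n)$ nodes and $\cO(n)$ space by \Cref{the:con}, plus one pointer per node. Computing $L(v)$ for all nodes takes one bottom-up DFS. The pointer $p(v)$ is computed in a second bottom-up pass with the recurrence
\[
p(v)=\arg\max_{u\in\{v\}\cup\{p(c)\,:\,c\in\ch(v)\}} \sd(u),
\]
where the candidate $v$ itself is admissible only if $L(v)\ge\tau$ and undefined $p(c)$ are ignored. Each node is processed in time proportional to its number of children, so the total is $\cO(n)$ time and space.

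\textbf{Correctness.} The key fact is that the prefix $Z_i[1\dd\ell]$ spells the substring of $T$ of length $\ell$ starting at $i-\lceil\ell/2\rceil+1$, with the centre fixed by $\ell$. Hence, for a string $S=LPR$ with $|L|=|R|=q$, we have $\ZZ(S)=\ZZ(P)\cdot X$ for some $X$ with $|X|=2q$. This holds because the centre index $\lceil |S|/2\rceil$ of $S$ is $q+\lceil m/2\rceil$, which is the centre of $P$ inside $S$. It follows that the occurrences of $LPR$ are in bijection with the leaves below the locus of $\ZZ(LPR)$, and that locus lies in the subtree of the locus $(v,m)$ of $\ZZ(P)$. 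Conversely, every prefix of length $m+2q$ of a string in that subtree is $\ZZ(LPR)$ for some valid pair $(L,R)$. The frequency of $LPR$ equals the number of leaves below the point at depth $m+2q$.

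\textbf{Main obstacle.} The optimum may sit at an implicit point, while pointers only target explicit nodes. Leaf counts are constant along an edge, so the deepest point on the edge above an explicit node $u$ has the same count $L(u)$. The deepest frequent point below $v$ is therefore always an explicit node $u=p(v)$, at string depth $\sd(u)$. Its length may have the wrong parity, so the answer is the prefix of $\str(u)$ of length $m+2\lfloor(\sd(u)-m)/2\rfloor$. Removing one letter keeps the point on the same edge or at $u$, so the count stays at least $\tau$.

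We must also exclude the terminator \$ from the length; pointing to a leaf and discarding the trailing \$ handles this. If the locus is implicit on an edge into $v$, we use the pointer of $v$, and the same constancy argument applies. Optimality follows because any longer feasible $LPR$ would give a frequent point deeper than $u$ below $v$, contradicting the maximality of $\sd(p(v))$.

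\textbf{Query time.} Computing $\ZZ(P)$ and its locus costs $\cO(m)$, using the $\cO(1)$-time edge dictionaries. Following the pointer costs $\cO(1)$. The answer can be reported as a position in $T$, namely $i-\lceil\ell/2\rceil+1$ for any leaf $i$ below $u$ (storing one leaf per node), together with the length $\ell$. This gives $\cO(m)$ total.
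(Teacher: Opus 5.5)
Your proposal is correct and takes essentially the same approach as the paper: at every node of $\ZZT(T)$ you store a pointer to the deepest descendant with $L(u)\ge\tau$, computed by two DFS passes; at query time you follow it from the locus of $\ZZ(P)$, drop one letter if needed for parity, and read the witness position off a leaf $Z_i$. Your explicit treatment of implicit loci, the terminator \texttt{\$}, and the optimality argument spells out what the paper only asserts as ``correct by how the pointers are constructed.'' One harmless slip: dropping the last letter of $\str(u)$ may land at $\parent(u)$ rather than ``at $u$'', but the count there is still at least $\tau$, since leaf counts only grow toward the root.
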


\section{Indexes for \LCCS}\label{sec:LCCSmain}

\subsection{Baseline Index for \LCCS} \label{sec:LCCS-BA}

This index is conceptually similar to the baseline index for \LFCS, but now the forward suffix tree indexes the collection $\mathcal{C}$ and the reverse suffix tree indexes the reverse of each string in $\mathcal{C}$. Also, here, instead of leaf counts, we employ a coloring scheme in which \emph{color counts}  measure the number of distinct texts from $\mathcal{C}$ in which a candidate answer occurs.

\paragraph{Construction.} Construct $\ST(\mathcal{C})$ and $\ST(\mathcal{C}^R)$, where $\mathcal{C}^R$ is $\{T_1^R,T_2^R,\dots,T_c^R\}$. Using a DFS on $\ST(\mathcal{C})$, for each leaf $\ell$ originating from $T_j$, assign the color \(j\) to \(\ell\). For every node \(v\) in \(\ST(\mathcal{C})\), compute \(C(v)\), the number of distinct colors in the subtree rooted at \(v\) using the linear-time algorithm in~\cite{DBLP:conf/cpm/Hui92}. Using another DFS, for each node \(v\), add a pointer to a descendant \(u\) of \(v\) with maximum string depth such that \(C(u)\ge \tau\); such a pointer may not exist, and \(u\) may be \(v\) itself.
We then perform the same preprocessing over $\ST(\mathcal{C}^R)$. 

\paragraph{Querying.} The query algorithm is essentially the same as 
the one for the \LFCS baseline; the only difference is that we use $C(v)$ instead of $L(v)$ to construct the \(PR\) candidates.

\paragraph{Complexity.} The bounds are analogous to those of the  \LFCS baseline, but have the total string length $N$ instead of $n$. Thus, the index has $\cO(N)$ size, construction time, and construction  space. Let $\occ_{\mathcal{C}}(P)$ denote the set of occurrences of $P$ in the texts of $\mathcal{C}$, where an occurrence is a pair $(j,i)$: $P$ occurs at position $i$ of $T_j$. The query time is 
\(\cO(m+|\occ_{\mathcal{C}}(P)|^2)\), as querying is the same as for \LFCS: locating \((v,|P|)\) takes \(\cO(m)\) time, and the subsequent tree traversals cost \(\cO(|\occ_{\mathcal{C}}(P)|^2)\) in total. 

\subsection{Our Index for \LCCS}\label{sec:lccs}

Our index adds pointers to \ZZT that \emph{directly} lead to nodes from which we can get \LCCS query answers in the optimal $\cO(m)$ time, i.e., avoiding the term $|\occ_{\mathcal{C}}(P)|^2$ of the baseline.   

\paragraph{Construction.} Construct \(\ZZT(\mathcal{C})\).
Using a DFS on \(\ZZT(\mathcal{C})\), for each leaf \(\ell\) originating from \(T_j\), assign the color \(j\) to \(\ell\). For every node \(v\) in \(\ZZT(\mathcal{C})\), compute \(C(v)\), the number of distinct colors in the subtree rooted at \(v\) using the linear-time algorithm in~\cite{DBLP:conf/cpm/Hui92}. Using another DFS, for each node \(v\), add a pointer to a descendant \(u\) of \(v\) with maximum string depth such that \(C(u)\ge \tau\); such a pointer may not exist, and \(u\) may be \(v\) itself. Given \(\ZZT(\mathcal{C})\), the construction takes \(\cO(N)\) time.

\paragraph{Querying.} Given a pattern \(P\) of length \(m\), compute \(\ZZ(P)\) and locate its locus \((v,m)\) in \(\ZZT(\mathcal{C})\) in \(\cO(m)\) time. If \(v\) has no stored pointer, then no such \(LPR\) string exists. Otherwise, follow the pointer from \(v\) to a node \(u\), and report node $u$ that encodes the longest common $LPR$ as \(\ZZ(LPR)=\str(u)\). 
If $|L|+|R|$ is not even, we truncate the last letter to guarantee that $|L|=|R|$. We can locate a witness, i.e., a position where $LPR$ occurs 
in a text in $\mathcal{C}$, similar to the \LFCS query algorithm.
The total query time is \(\cO(m)\). This is correct by how \(\ZZT(\mathcal{C})\) is constructed and by how the node pointers are constructed.

\paragraph{Main Result.} From the above, we directly obtain: 

\begin{theorem}
\LCCS can be solved using an index of $\cO(N)$ size supporting $\cO(m)$-time queries. The data structure can be constructed in $\cO(N)$ time and space given \(\ZZT(\mathcal{C})\).
\end{theorem}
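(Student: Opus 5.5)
The plan is to verify the three claims of the theorem (size, construction, query time and correctness) directly from the construction just described, in the same way as for the \LFCS result, with colors in place of leaf counts.

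\textbf{Size and construction.} $\ZZT(\mathcal{C})$ is a compacted trie over $N$ ZigZag strings, so it has $N$ leaves and hence $\cO(N)$ nodes. We store one integer $C(v)$ and at most one pointer per node, for $\cO(N)$ words in total. Given $\ZZT(\mathcal{C})$, coloring the leaves by their source text takes $\cO(N)$ time. The distinct-color counts $C(v)$ for all nodes are computed in $\cO(N)$ time and space by the color set size algorithm of~\cite{DBLP:conf/cpm/Hui92}. This requires an $\cO(1)$-time LCA structure~\cite{DBLP:journals/siamcomp/HarelT84} and the leaves in DFS order, both of which we have.

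\textbf{The pointers.} They are computed bottom-up in one DFS. For each node $v$, set $\mathrm{ptr}(v)$ to the candidate of maximum string depth among $\{v\}$ (if $C(v)\ge\tau$) and $\{\mathrm{ptr}(w): w \text{ a child of } v\}$. Each edge is examined once, so this costs $\cO(N)$ time.

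\textbf{Correctness.} This is the step needing care. I would first isolate the key structural fact: for every $\ell\ge m$, a string $X$ with $|X|=\ell$ and $P$ centered appropriately (so $X=LPR$ with $|R|-|L|\in\{0,1\}$) has $\ZZ(X)$ extending $\ZZ(P)$. This holds because the first $\ell$ letters of every $Z_i$ form the window $T[i-\lceil\ell/2\rceil+1\dd]$, whose center position is consistent across lengths. Consequently the occurrences of $X$ in text $T_j$ correspond bijectively to leaves of color $j$ below the locus of $\ZZ(X)$.

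From this fact, the number of texts containing $LPR$ equals $C$ at the locus of $\ZZ(LPR)$. A point on an edge has the same color count as the lower endpoint of that edge. Hence the deepest point below $(v,m)$ with color count at least $\tau$ is an explicit node $u$, namely the pointer target, and its string depth is the maximum feasible $|LPR|$. If $|\str(u)|-m$ is odd, we truncate one letter, which is the last $R$-letter under the ZigZag order. Truncation only enlarges the occurrence set, so the result stays feasible. Optimality holds because any feasible even-extension string has length at most $\sd(u)$.

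\textbf{Two details to check.} First, the terminating symbols $\texttt{\$}_j$ must be excluded from $\str(u)$. A leaf has color count $1$, so when $\tau\ge2$ the target $u$ is an internal node and contains no $\texttt{\$}$. For $\tau=1$ we cap the depth at the leaf's length minus one. Second, if the locus of $\ZZ(P)$ is implicit, we use the lower endpoint $v$ of its edge; the argument above applies unchanged.

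\textbf{Query time and main obstacle.} Computing $\ZZ(P)$ and descending with $\cO(1)$-time dictionaries takes $\cO(m)$. Following the pointer and reporting $u$ takes $\cO(1)$, with a witness position read from any leaf in $\cO(1)$. The main obstacle is the structural fact linking $\ZZ$-prefix extension to symmetric context extension. Everything else is bookkeeping identical to the \LFCS case.
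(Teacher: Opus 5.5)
Your proposal is correct and follows the paper's route: color the leaves of $\ZZT(\mathcal{C})$, compute $C(v)$ with Hui's color-set-size algorithm, store a deepest-descendant pointer to a node with $C(u)\ge\tau$, and answer a query by locating $\ZZ(P)$ and following the pointer; you merely spell out the correctness argument that the paper only asserts ``by construction.'' One harmless slip: when $m$ is even, the letter after $\ZZ(P)$ in $Z_i$ is $Z_i[m+1]=T[i-m/2]$, a left-context letter, so odd-length extensions have $|L|=|R|+1$ rather than $|R|=|L|+1$; your argument is unaffected provided you drop the last letter of $\str(u)$ in ZigZag order rather than specifically an $R$-letter.
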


\section{Indexes for \CC}\label{sec:CCmain}

\subsection{Baseline Index for \CC}\label{sec:CC-BA}

\paragraph{Construction.} Construct \(\ST(T)\) and \(\ST(T^R)\).

\paragraph{Querying.} We locate the locus \((v,|P|)\) of \(P\) in \(\ST(T)\), and then locate, for increasing \(q\ge 1\), all the right extensions \(PR\) of $P$, with \(|R|=q\), in the subtree of \(v\). For each \(PR\), we locate the locus of \((PR)^R\) in \(\ST(T^R)\), and count the left extensions \(LPR\) of $PR$, with \(|L|=q\), in its subtree. Summing the counts over all \(PR\) and \(q\) yields the correct answer, as we  exhaustively consider all valid loci pairs in \(\ST(T)\) and \(\ST(T^R)\).

\paragraph{Complexity.} The construction takes \(\cO(n)\) time and space. The index size is $\cO(n)$. The query time is \(\cO\!\left(m+|\loci(v)|^2\right)\), where \(\loci(v)\) is the set of loci in the subtree rooted at \(v\) in \(\ST(T)\).

\subsection{Our Index for \CC}\label{sec:cc}

Our index exploits $\ZZT(T)$ and auxiliary information computed via dynamic programming to compute $\CC(P,T)$ in the optimal $\cO(m)$ time. This is unlike the baseline index, whose query time has a large $|\loci(v)|^2$ term, where $v$ is the locus of $P$, due to the ``coordinated'' exploration of the two suffix trees. 

\paragraph{Construction.} Construct \(\ZZT(T)\) using \Cref{the:con}. We  compute $\CC(P,T)$ via the number of contextual extensions \(LPR\) of \(P\) occurring in \(T\) over all \(q=|L|=|R|\). Each extension corresponds to a path in the subtree rooted at the locus of \(\ZZ(P)\) in \(\ZZT(T)\). Since \(|L|=|R|=q\), every extension has length \(m+2q\), where \(m=|P|\). Thus, from the locus of \(\ZZ(P)\) at string depth \(m\), we need to count the substrings in its subtree at string depths \(m+2q\), for all integers \(q\ge 0\). We do this via dynamic programming. 

To provide some intuition, let us denote a potential query string by \(P\). Let \((v,|P|)\) denote the locus of \(\ZZ(P)\). 
The distance \(d=\sd(v)-|P|\) may be even or odd. If $d$ is even, no single letter remains before reaching $v$, and the number of 
additional cross-edge length-$2$ fragments contributed by the
outgoing edges of $v$ is zero. For example, for $P = \texttt{ac}$ in \Cref{fig:CC_even}, the locus of $\ZZ(P)=\texttt{ac}$ is $(v_1, 2)$ (i.e., node $w$). As $\sd(v_1) = 4$ and $|P| = 2$,
 $d = \sd(v_1) - |P| = 2$ is even, and no single letter remains until reaching $v_1$. If $d$ is odd, after traversing \(\lfloor d/2 \rfloor\) length-\(2\) fragments along the edge from locus \((v,|P|)\) to \((v,\sd(v))\), one letter \emph{remains} before reaching \(v\). This letter can be combined with the first letter of every outgoing edge of \(v\), contributing \(|\ch(v)|\) additional cross-edge length-\(2\) fragments, where \(\ch(v)\) denotes the set of children of \(v\). For example, for $P = \texttt{bac}$ in \Cref{fig:CC_odd}, 
the locus of $\ZZ(P)=\texttt{acb}$ is $(v_1, 3)$ (i.e., node $w$), and $d = \sd(v_1) - |P| =4-3= 1$,
which is odd. After traversing $\lfloor 1/2 \rfloor = 0$ length-$2$ fragments along the 
edge from $(v_1, 3)$ to $(v_1, 4)$, one letter, namely \texttt{d}, remains before reaching $v_1$. 
This is combined with the first letter of each of the $|\ch(v_1)| = 2$
outgoing edges of $v_1$, contributing $2$ additional cross-edge
length-$2$ fragments. 

\begin{figure}[ht]
    \centering
    \begin{subfigure}[t]{0.44\textwidth}
        \centering
        \includegraphics[width=0.7\linewidth]{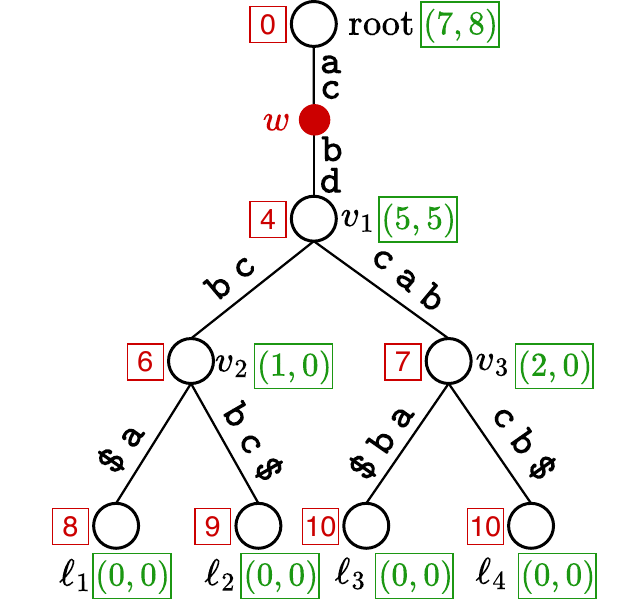}
            \vspace{+0.5mm}
        \caption{$P = \texttt{ac}$, $\ZZ(P) = \texttt{ac}$, with locus $w$ at string depth $2$}
        \label{fig:CC_even}
    \end{subfigure}
    \hspace{2mm}
    \begin{subfigure}[t]{0.44\textwidth}
        \centering
        \includegraphics[width=0.7\linewidth]{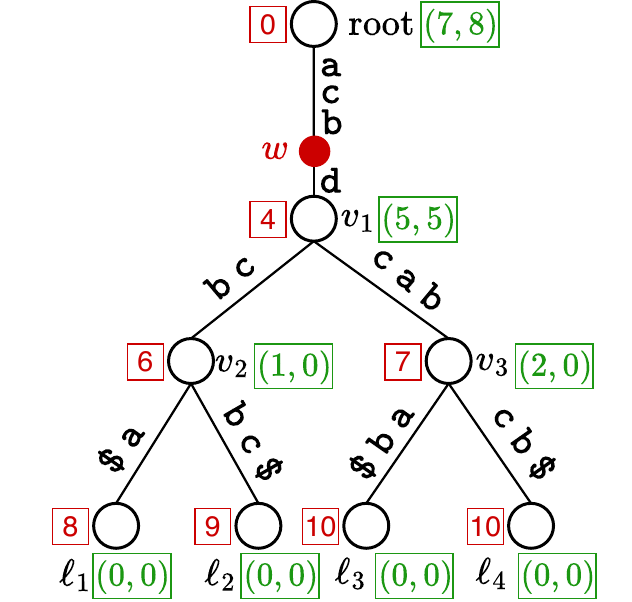}
            \vspace{+0.5mm}
        \caption{$P = \texttt{bac}$, $\ZZ(P) = \texttt{acb}$, with locus $w$ at string depth $3$}
        \label{fig:CC_odd}
    \end{subfigure}
    \vspace{+1mm}
    \caption{For each node $v$, the red square contains $\sd(v)$ and the green rectangle $(\EV(v), \OD(v))$. The locus of $\ZZ(P)$ is $w$.}
    \label{fig:CC_example}
\end{figure}

We now formalize the above intuition.
For an internal node \(v\) in \(\ZZT(T)\), let \(u_1, u_2, \ldots, u_k\) be its children, and let \(d_i = \sd(u_i) - \sd(v)\) be the length of the edge label from \(v\) to \(u_i\). To avoid counting the terminal symbol \texttt{\$}, we adjust this to \(d_i = \sd(u_i) - \sd(v) - 1\) whenever \(u_i\) is a leaf.  Let \(\mathcal{L}(v)\) be the set of loci in the subtree rooted at \(v\), excluding loci that end at a terminal symbol. 
For any locus \(x=(u,\ell) \in \mathcal{L}(v)\), let
$\delta_v(x)=\ell-\sd(v)$. 
We then define
{
\abovedisplayskip=2mm
\belowdisplayskip=2mm
$$
\EV(v)=
\left|
\left\{
x\in \mathcal{L}(v):
\delta_v(x)>0 \text{ and }
\delta_v(x)\equiv 0 \pmod{2}
\right\}
\right|,
$$
}
and
{
\abovedisplayskip=2mm
\belowdisplayskip=2mm
\[
\OD(v)=
\left|
\left\{
x\in \mathcal{L}(v):
\delta_v(x)>1
\text{ and }
\delta_v(x)\equiv 1 \pmod{2}
\right\}
\right|.
\]
}

Thus, \(\EV(v)\) counts the loci at a positive even distance from \(v\), while \(\OD(v)\) counts the loci at an odd distance greater than $1$ from \(v\); the loci at distance one are covered by the \(|\ch(v)|\) term. We define the following recurrences:
{\small
\abovedisplayskip=2mm
\belowdisplayskip=2mm
\begin{align*}
\EV(v) &= \sum_{i=1}^{k} \left( \left\lfloor \frac{d_i}{2} \right\rfloor +
\begin{cases}
\EV(u_i) & \text{if } d_i \text{ is even}, \\
\OD(u_i) + |\ch(u_i)| & \text{if } d_i \text{ is odd},
\end{cases} \right) \\
\OD(v) &= \sum_{i=1}^{k} \left( \left\lfloor \frac{d_i-1}{2} \right\rfloor +
\begin{cases}
\OD(u_i) + |\ch(u_i)| & \text{if } d_i \text{ is even}, \\
\EV(u_i) & \text{if } d_i \text{ is odd},
\end{cases} \right)
\end{align*}}

For a leaf node $v$, we set $\EV(v)=0$ and $\OD(v)=0$, which serves as the base case of the recurrence. 

We compute $\EV(v)$ and $\OD(v)$, for every $v$ in $\ZZT(T)$.
This takes $\cO(n)$ time via a single DFS traversal of $\ZZT(T)$, provided that $\ZZT(T)$ has been constructed.

\begin{lemma}\label{lem:CC}
Let \(v\) be a node of \(\ZZT(T)\). Then the quantities \(\EV(v)\) and \(\OD(v)\), defined by the recurrences above, correctly compute the contribution of the subtree rooted at \(v\) to \(\CC\) in the even and odd parity cases, respectively.
\end{lemma}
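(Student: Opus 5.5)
Since the quantities $\EV(v)$ and $\OD(v)$ are defined directly as counts over $\mathcal{L}(v)$, the plan is to show that the recurrences compute exactly these counts; the link to $\CC$ is then the observation that a pair $(L,R)\in\mathcal{C}_T(P,q)$ corresponds bijectively to a locus at string depth $m+2q$ below the locus of $\ZZ(P)$. This correspondence holds because $\ZZ(LPR)$ has $\ZZ(P)$ as a prefix: the first $m$ letters of the ZigZag walk centred at the appropriate position are exactly $P$, and each further pair of letters appends one letter to $R$ and one to $L$. Distinct loci spell distinct strings, hence distinct pairs. I would state this briefly and then prove the counting identity by structural induction on the subtree rooted at $v$, proceeding bottom-up as the DFS does.

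For the base case, take a leaf $v$. Then $\mathcal{L}(v)$ contains no locus strictly below $v$ except possibly the terminal, which is excluded. So both counts are $0$, in agreement with the base case of the recurrence.

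For the inductive step, $\mathcal{L}(v)\setminus\{(v,\sd(v))\}$ is partitioned according to the child $u_i$ whose edge or subtree contains the locus. For each child I split its loci into two groups.
\begin{itemize}
\item \emph{Loci on the edge.} These are the loci with $\delta_v\in[1,d_i]$; the adjustment $d_i-1$ for leaves ensures that the \texttt{\$} position is skipped. Among them, $\lfloor d_i/2\rfloor$ have positive even $\delta_v$, and $\lfloor (d_i-1)/2\rfloor$ have odd $\delta_v>1$. The single locus with $\delta_v=1$ is excluded from $\OD$ and is accounted for instead by the $|\ch(v)|$ term at the parent level.
\item \emph{Loci strictly inside the subtree of $u_i$.} For these, $\delta_v(x)=d_i+\delta_{u_i}(x)$. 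Parity therefore flips exactly when $d_i$ is odd. Positive-even counts at $u_i$ map to $\EV(u_i)$, odd counts with $\delta_{u_i}>1$ map to $\OD(u_i)$, and loci with $\delta_{u_i}=1$ (one per child of $u_i$) contribute $|\ch(u_i)|$. Note that $\delta_{u_i}\ge 1$ always, so no locus is missed.
\end{itemize}
Matching these groups case by case ($d_i$ even: $\EV\leftarrow\EV(u_i)$ and $\OD\leftarrow\OD(u_i)+|\ch(u_i)|$; $d_i$ odd: the two are swapped) gives the recurrences exactly.

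Finally, the parity cases are connected to the query. If the locus of $\ZZ(P)$ is $(v,|P|)$ with $d=\sd(v)-|P|$, the loci at even distance $2q$ from the query locus consist of the $\lfloor d/2\rfloor$ loci on the edge, together with $\EV(v)$ if $d$ is even, or $\OD(v)+|\ch(v)|$ if $d$ is odd. This is the sense in which $\EV$ and $\OD$ give the contribution of the subtree in the even and odd parity cases.

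I expect the main obstacle to be the boundary bookkeeping rather than the induction itself: the $\delta=1$ loci, which are deliberately excluded from $\OD$ and reinstated through $|\ch(\cdot)|$, and the terminal-symbol adjustment on leaf edges (including edges consisting only of \texttt{\$}, where $d_i=0$). I would verify these cases explicitly, checking that each locus is counted exactly once and that terminal loci are never counted.
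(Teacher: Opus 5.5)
Your induction follows the same route as the paper's proof, but the extra bookkeeping you add is where it fails. The invariant you set out to prove, that the recurrences return exactly the set counts, is false for $\OD$.

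Leaves attached by an edge labelled just \texttt{\$} do occur. This happens whenever some $Z_j$ is a proper prefix of another $Z_k$; for $T=\texttt{banana}$, $Z_5=\texttt{naa}$ is a prefix of $Z_3=\texttt{naanba}$. For such a child $d_i=0$, and two of your claims break:
\begin{enumerate}
\item[(a)] The edge carries no countable locus, yet $\lfloor (d_i-1)/2\rfloor=\lfloor -1/2\rfloor=-1$.
\item[(b)] The only locus on that edge, at distance one, is the terminal, which $\mathcal{L}(\cdot)$ excludes. So it is not true that each child yields one counted distance-one locus, and $|\ch(u)|$ overcounts by one whenever $u$ has a \texttt{\$}-child.
\end{enumerate}
Concretely, take the node $v$ with $\str(v)=\texttt{naa}$ in $\ZZT(\texttt{banana})$. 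The definition gives $\OD(v)=1$ (the locus \texttt{naanba}), while the recurrence gives $-1+1=0$. Reading the floor as truncation would repair (a) but leave (b) uncompensated. Then $\OD(v)=1$, and the query for $P=\texttt{na}$ returns $0+1+2=3$, whereas $\CC(\texttt{na},\texttt{banana})=2$.

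With the true floor the recurrences are still correct, because the $-1$ from a \texttt{\$}-child exactly cancels the terminal locus counted by $|\ch(v)|$. What you should carry through the induction is the pair $A(v)=\EV(v)$ and $B(v)=\OD(v)+|\ch(v)|$. Since $1+\lfloor (d-1)/2\rfloor=\lceil d/2\rceil$ for every integer $d\ge 0$, the paper's recurrences become $A(v)=\sum_i\bigl(\lfloor d_i/2\rfloor+X_i\bigr)$ and $B(v)=\sum_i\bigl(\lceil d_i/2\rceil+Y_i\bigr)$. Here $(X_i,Y_i)=(A(u_i),B(u_i))$ if $d_i$ is even and $(B(u_i),A(u_i))$ if $d_i$ is odd.

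Your case analysis then shows, with no exceptions, that $A(v)$ counts the loci of $\mathcal{L}(v)$ at positive even distance and $B(v)$ those at odd distance. Consequently $\OD(v)$ itself equals the set count minus the number of \texttt{\$}-children of $v$. That suffices, since $\OD$ is only ever used as $\OD(u)+|\ch(u)|$, both in the recurrences and in the query formula.

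Similarly, in your final paragraph $d$ must be decreased by one when the locus of $\ZZ(P)$ lies on a leaf edge. Otherwise, for example, $P=\texttt{b}$ gets $\lfloor 2/2\rfloor=1$ instead of $\CC(\texttt{b},\texttt{banana})=0$.
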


\begin{proof}
We prove the claim by induction on the size of the subtree rooted at \(v\).
If \(v\) is a leaf, then the subtree of \(v\) contains no outgoing edges, and hence contributes nothing. Therefore \(\EV(v)=\OD(v)=0\), as required. Now let \(v\) be an internal node, with children \(u_1,\dots,u_k\), and let \(d_i\) denote the length of the edge label from \(v\) to \(u_i\), as defined by the construction. Since the contribution of the subtree rooted at \(v\) is the sum of the contributions of the subtrees rooted at its children together with the contribution of the edges incident to \(v\), it suffices to verify the recurrence for each child independently.

Consider first the even parity case. The contribution accumulated along the edge from \(v\) to \(u_i\) is \(\lfloor d_i/2\rfloor\). If \(d_i\) is even, then the parity state at \(u_i\) remains even, and the contribution of the subtree rooted at \(u_i\) is \(\EV(u_i)\). If \(d_i\) is odd, then the parity state at \(u_i\) becomes odd; in this case, the contribution from the subtree rooted at \(u_i\) is \(\OD(u_i)\), together with the additional term corresponding to the children of \(u_i\), namely \(|\ch(u_i)|\). This is exactly the recurrence defining \(\EV(v)\).

The argument for \(\OD(v)\) is analogous. When the parity state at \(v\) is odd, the contribution along the edge to \(u_i\) is \(\lfloor (d_i-1)/2\rfloor\), and the parity state at \(u_i\) is determined by the parity of \(d_i\) in the same way as specified by the recurrence. The resulting contribution is therefore exactly \(\OD(v)\).

By induction, \(\EV(v)\) and \(\OD(v)\) correctly compute the contribution of the subtree rooted at \(v\) to \(\CC\).
\end{proof}

\paragraph{Querying.} Given a pattern \(P\) of length \(m\), we first construct \(\ZZ(P)\) and locate its locus \((v,m)\) in \(\ZZT(T)\) in \(\cO(m)\) time. Let \(d=\sd(v)-m\). By \Cref{lem:CC}, the contribution of the subtree rooted at \(v\) is given by \(\EV(v)\) in the even case and by \(\OD(v)\) in the odd case. It remains to account for the contribution of the path segment between the locus and \(v\), which is \(\lfloor d/2 \rfloor\), together with the additional boundary contribution \(|\ch(v)|\) in the odd case. Therefore,
{
\abovedisplayskip=2mm
\belowdisplayskip=2mm
\small
\[
\CC(P,T) =
\begin{cases}
\lfloor d/2 \rfloor + \EV(v) & \text{if } d \text{ is even}, \\[4pt]
\lfloor d/2 \rfloor + \OD(v) + |\ch(v)| & \text{if } d \text{ is odd}.
\end{cases}
\]
}
The query time is clearly \(\cO(m)\).

\begin{example} Consider the $\ZZT(T)$ in \Cref{fig:CC_example}. We compute $\EV(v)$ and $\OD(v)$ for every node $v$ via a DFS traversal.

For each leaf node $\ell_i, i\in[1,4]$, we have $\EV(\ell_i) = \OD(\ell_i) = 0$. 
Node $v_2$ has two children: $\ell_1$ with edge length $d_1 = 8 - 6 -1 = 1$, and $\ell_2$ with edge length $d_2 = 9 - 6 -1 = 2$.

{\abovedisplayskip=1mm
\belowdisplayskip=1mm
\footnotesize \begin{align*}
\EV(v_2) &=(\lfloor 1/2 \rfloor + \OD(\ell_1) + |\ch(\ell_1)|) +(\lfloor 2/2 \rfloor + \EV(\ell_2))=1
\\
\OD(v_2) &=(\lfloor 0/2 \rfloor + \EV(\ell_1))  + (\lfloor 1/2 \rfloor + \OD(\ell_2)+|\ch(\ell_2)|) = 0
\end{align*}}

Similarly, we compute $\EV(v_3) = 2$ and $\OD(v_3)  = 0$,
and $\EV(v_1)= 5$ and $\OD(v_1) = 5$; see \Cref{fig:CC_example}.
We now answer two \CC queries: (1) $P=\texttt{ac}$ in \Cref{fig:CC_even}, where $|P|$ is even. We compute $\ZZ(P)=\texttt{ac}$ and locate its locus $w=(v_1,2)$ in $\ZZT(T)$. Since $\sd(v_1)=4$, we have $d=\sd(v_1)-|P|=2$, which is even, so
$\CC(P,T) = \lfloor d/2 \rfloor + \EV(v_1) = 1 + 5 = 6$. (2) $P=\texttt{bac}$ in \Cref{fig:CC_odd}, where $|P|$ is odd. We compute $\ZZ(P)=\texttt{acb}$ and locate its locus $w=(v_1,3)$. Now $d=\sd(v_1)-|P|=1$, which is odd, so the remaining letter is combined with the first letter of every outgoing edge of $v_1$, contributing $|\ch(v_1)|=2$ additional length-$2$ fragments, giving
$
\CC(P,T) = \lfloor d/2 \rfloor + \OD(v_1) + |\ch(v_1)| = 0 + 5 + 2 = 7.
$
\end{example}

\paragraph{Main Result.} From the above, we directly obtain: 

\begin{theorem}
\CC can be solved using an index of $\cO(n)$ size supporting $\cO(m)$-time queries. The  index can be constructed in $\cO(n)$ time and space given $\ZZT(T)$.
\end{theorem}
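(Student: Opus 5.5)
The plan is to assemble the theorem from three ingredients: the index itself, the correctness of the query formula, and the complexity bounds. The index is $\ZZT(T)$ together with, for every node $v$, the values $\EV(v)$, $\OD(v)$ and $|\ch(v)|$. These are three integers per node. Since $\ZZT(T)$ has $\cO(n)$ nodes by \Cref{the:con}, the index occupies $\cO(n)$ space.

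For correctness, I first fix what $\CC(P,T)$ counts in the trie. By the prefix property of ZigZag strings, the prefix $Z_i[1\dd m+2q]$ spells exactly $\ZZ(LPR)$ for the occurrence of $LPR$ at position $i-\lceil (m+2q)/2\rceil+1$. Distinct pairs $(L,R)$ therefore correspond bijectively to distinct loci at string depth $m+2q$ below the locus of $\ZZ(P)$, excluding loci that end at \texttt{\$}. Hence $\CC(P,T)$ equals the number of non-terminal loci $x$ in the subtree of the locus $(v,m)$ whose depth exceeds $m$ by a positive even amount.

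Next I split these loci into two groups. The first group lies on the edge segment strictly between depth $m$ and $\sd(v)$, or exactly at $v$; it contains $\lfloor d/2\rfloor$ loci. The second group lies strictly below $v$. For a locus $x$ below $v$ we have $\ell-m=d+\delta_v(x)$. When $d$ is even, we need $\delta_v(x)$ positive and even, and the number of such loci is $\EV(v)$ by definition. When $d$ is odd, we need $\delta_v(x)$ odd. The loci with $\delta_v(x)=1$ are exactly the first letters of the outgoing edges of $v$, giving $|\ch(v)|$ of them, minus any edge that begins with \texttt{\$}. The loci with $\delta_v(x)>1$ number $\OD(v)$. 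This reproduces the query formula.

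What remains is to justify that the recurrences compute the combinatorial definitions of $\EV$ and $\OD$; \Cref{lem:CC} asserts this. I would nonetheless re-verify it directly, and I expect this to be the main obstacle, because the boundary conventions must be handled carefully. The loci below $v$ split over the children $u_i$. On the edge to $u_i$, the loci at positive even offset $\le d_i$ number $\lfloor d_i/2\rfloor$. The loci at odd offset in $[3,d_i]$ number $\lfloor (d_i-1)/2\rfloor$, since offset $1$ is excluded from $\OD$. Below $u_i$, the offsets are shifted by $d_i$, and the parity cases give exactly $\EV(u_i)$, or $\OD(u_i)$ plus the distance-one loci $|\ch(u_i)|$.

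Two boundary conventions need checking at this step. First, the $d_i-1$ adjustment for leaf edges ensures that \texttt{\$} loci are never counted. Second, for leaves, $|\ch|=0$. I also note that the distance-one count at $v$ should discount a \texttt{\$}-child. This is harmless only if no child edge of an internal node begins with \texttt{\$}, or if the stored value of $|\ch(v)|$ counts only non-\texttt{\$} children. I would adopt the latter convention explicitly.

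Finally, the complexity. All stored values are computed bottom-up in a single post-order DFS, using $\cO(1)$ work per child, so preprocessing takes $\cO(n)$ time and space given $\ZZT(T)$. A query computes $\ZZ(P)$ in $\cO(m)$ time and finds its locus in $\cO(m)$ time using the $\cO(1)$-time edge dictionaries. It then evaluates the formula in $\cO(1)$ time, so queries take $\cO(m)$ total time.
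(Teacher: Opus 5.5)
Your approach is the paper's. You store the same per-node values $\EV$, $\OD$, $|\ch|$ on $\ZZT(T)$, use the same post-order DFS, and apply the same $\cO(1)$ formula after locating $\ZZ(P)$. Your bijection between distinct pairs $(L,R)$ and non-\texttt{\$} loci at depth $m+2q$, and your re-derivation of the recurrences by offset parity, actually justify \Cref{lem:CC} more concretely than the paper's induction does.

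One step fails as stated, though: the first group need not contain $\lfloor d/2\rfloor$ non-terminal loci. If $v$ is a leaf, the locus exactly at $v$ is the \texttt{\$} position. So for even $d$ you count a terminal locus that you yourself excluded. For example, take $T=\texttt{banana}$ and $P=\texttt{b}$. The locus of $\ZZ(P)=\texttt{b}$ lies on the leaf edge \texttt{ba\$} with $\sd(v)=3$, so $d=2$ and the formula returns $1$. But $\CC(\texttt{b},T)=0$, since $P$ occurs only at position $1$. The fix is the adjustment you already use for the $d_i$: for a leaf $v$, take $d=\sd(v)-m-1$. Since $\EV$, $\OD$, $|\ch|$ all vanish at leaves, the answer is then $\lfloor(\sd(v)-m-1)/2\rfloor$. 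This is still $\cO(1)$ and leaves all bounds intact. The paper's query formula uses $d=\sd(v)-m$ unconditionally and has the same slip.

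Your \texttt{\$}-child caveat is needed, because such children do occur. In $\ZZT(\texttt{banana})$ the node spelling \texttt{a} has one, since $Z_6=\texttt{a}$ is a proper prefix of $Z_2$. You must keep the convention consistent, however. A \texttt{\$}-child has $d_i=0$. If $|\ch|$ counts only non-\texttt{\$} children, the term $\lfloor (d_i-1)/2\rfloor$ in the $\OD$ recurrence must be read as the count $0$, not as $\lfloor -1/2\rfloor=-1$; otherwise you undercount by one. Conversely, the paper lets $|\ch|$ count all children and uses the literal floor. Then that $-1$ cancels the \texttt{\$}-child inside $\OD(u)+|\ch(u)|$, which is the only form in which $\OD$ enters the recurrences and the query. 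Either reading is correct; a mix of the two is not.
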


\section{Indexes for \TCPR}\label{sec:tcprmain}

\subsection{Baseline Index for \TCPR} \label{sec:TCPR-BA}

We construct Navarro's SA-based index from~\cite{navarro2020contextual}~(see \Cref{sec:related}),  which, given $P$ and $q$, reports, for every string pair $(L,R)$ in $\mathcal{C}_{T}(P,q)$, the interval $[s,e]$, such that 
$\{\SA[s],\ldots,\SA[e]\}$ are exactly the occurrences of $LPR$ in $T$. Upon a \TCPR query, we evaluate the scoring function $f_T(LPR)$ for each $(L,R)$ pair and select the top-$k$ candidates, breaking ties arbitrarily. The construction takes $\cO(n)$ time and space. 
The query time is $\cO(m+|\mathcal{C}_T(P,q)|\cdot \mathcal{T}_{f_T})$, where $\mathcal{T}_{f_T}$ is the time required to evaluate $f_T$ on $LPR$; and the top-$k$ selection can then be performed in linear time with the algorithm of~\cite{blum1973time}. Note that the \(LPR\) strings are not known in advance;  precomputing their scores would therefore require enumerating all possible pairs \((P,q)\), and thus $\Theta(n^2)$ admissible \(LPR\) strings, which is prohibitively expensive.

\subsection{Our Index for \TCPR}\label{sec:tcpr}

Our index exploits \ZZT, which compactly represents the distinct string pairs $(L, R)$ for every $P$, and a geometric data structure to quickly retrieve the top-$k$ of these pairs w.r.t. the scoring function $f_T$, which comprise the \TCPR query answer.  

\paragraph{Construction.} Construct \(\ZZT(T)\) using \Cref{the:con}. For each  node \(w\) of \(\ZZT(T)\), compute the rank of $w$ in a preorder traversal of \(\ZZT(T)\).  This rank is called the \emph{preorder rank} of $w$ and denoted by \(\preorder(w)\). The root has preorder rank \(1\). The preorder rank is used to locate pairs in the query answer. 
Then, for every non-root node \(w\) in \(\ZZT(T)\), construct the following three-dimensional point
{
\abovedisplayskip=2mm
\belowdisplayskip=2mm
\[
\bigl(\preorder(w),\; \sd(w),\; \sd(\parent(w))\bigr),
\]
}

\noindent and weigh it by \(f_T(\str(w))\), where \(f_T\) is the chosen scoring function; see subsection ``Scoring Functions'' below.  
The first dimension will ensure that every reported element is of the form $LPR$ (i.e., the locus of $LPR$ ``descends'' from the locus of $P$), and the second and third dimensions will ensure that the length constraints are satisfied (i.e., $|L|=|R|=q$ and $m=|P|$).
Let \(S\) be the set of all such points. We preprocess \(S\) into the data structure
described in \Cref{sec:preliminaries} for top-\(k\) orthogonal range reporting, which can be constructed in $\widetilde{\cO}(n)$ time~\cite{rahul2011efficient}. Thus, the total construction time is \(\widetilde{\cO}(n)\). 

\paragraph{Querying.} Given a pattern \(P\) of length \(m\), we construct \(\ZZ(P)\) and locate its locus \((u,m)\) in \(\ZZT(T)\) in \(\cO(m)\) time. For the given \(q\geq 0\), every descendant node \(w\) of \(u\) with 
{
\abovedisplayskip=3mm
\belowdisplayskip=3mm
\[
\sd(\parent(w)) < m + 2q \le \sd(w)
\]}

\noindent corresponds to a distinct element of \(\mathcal{C}_T(P,q)\). Consider the locus of such an element. It corresponds to a string of the form $LPR$ with $|L|=|R|=q$ and $m=|P|$. 
Accordingly, we report the top-\(k\) weighted points in $S \cap Q$, where
{
\abovedisplayskip=3mm
\belowdisplayskip=3mm
\[
Q=[\preorder(u),\preorder(u')] \times [m+2q,\infty) \times [0,m+2q),
\]}

\noindent and \(u'\) is the rightmost leaf descendant of \(u\). 
We now argue for correctness. 
By the preorder rank assignment, 
we have that $\preorder(w)\in[\preorder(u),\preorder(u')]$ if and only if $w$ is a descendant of $u$; and then the second and third dimensions of the query box guarantee that $|LPR|$ is exactly $m+2q$ as desired. We have that the prefix of $\str(w)$ of length $m+2q$ is $\ZZ(LPR)$.
Let $i$ be such that $\ZZ(LPR)$ is a prefix of string $Z_i$. Then $LPR$ occurs at position $(i-\lceil \ell/2\rceil + 1)$ of $T$, with $\ell=m+2q$. 

The time for reporting the top-$k$ points is \(\widetilde{\cO}(k)\), and so the total query time becomes \(\cO(m)+\widetilde{\cO}(k)\).

\begin{figure}[ht]
  \centering
  \includegraphics[width=0.5\linewidth]{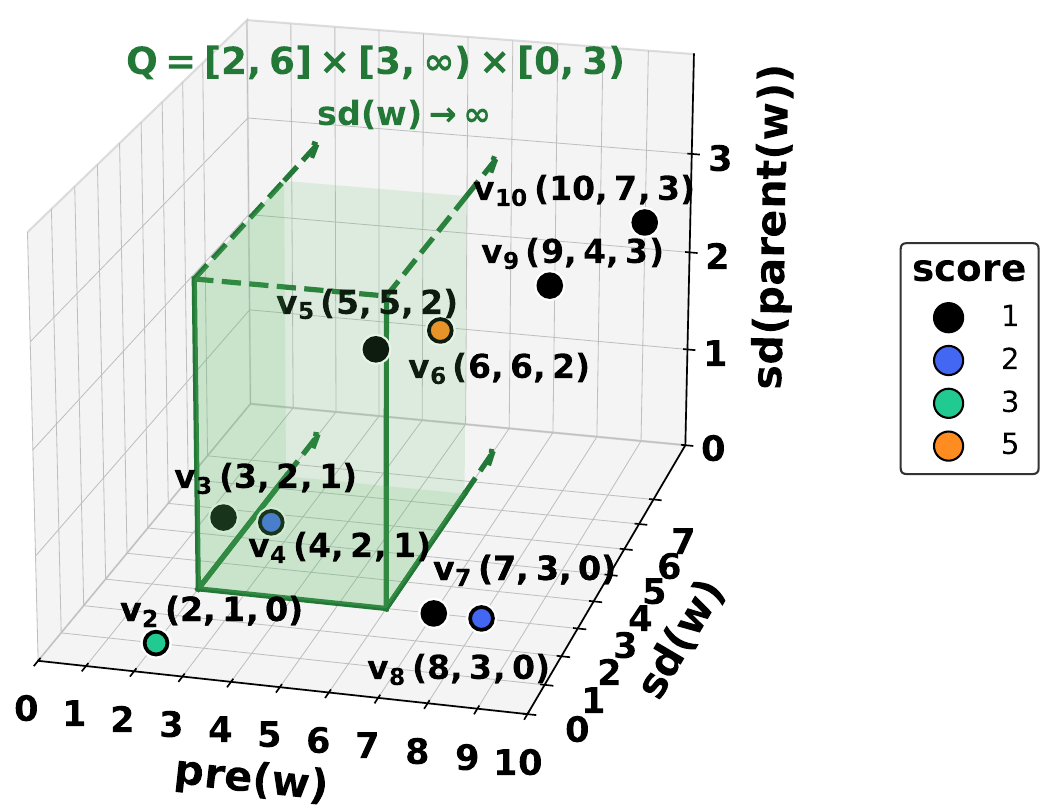}
  \vspace{+2mm}
  \caption{Each point 
in $S$ is colored based on its weight (score), assigned by $f_T$. The green hyperrectangle corresponds to $Q$.}
  \label{fig:visu_TCPR}
\end{figure}

\begin{example}\label{exp:TCPR}
Let $T=\texttt{banana}$, the  $\ZZT(T)$ in \Cref{ex:c}, and a scoring function $f_T$.  We perform a preorder traversal of $\ZZT(T)$ and assign rank $\preorder(v_i)=i$ to each node $v_i, i \in[1,10]$. For every node $w\in \ZZT(T)$ except the root node $v_1$, we construct a three-dimensional point $\bigl( \preorder(w), \sd(w), \sd(\parent(w))\bigr)$, which is weighed by $f_T(\str(w))$, as shown in \Cref{fig:visu_TCPR}. For example, for $v_4$, we have $\preorder(v_4) = 4$, $\sd(v_4)=|\str(v_4)|=|\texttt{an}|=2$, and $\sd(\parent(v_4)) = \sd(v_2)=|\texttt{a}|=1$. Thus, for $v_4$, we construct a three-dimensional point $(4, 2, 1)$, which is assigned a weight $f_T(\str(v_4))=2$. The set $S$ consists of all such three-dimensional points constructed based on $v_2, \ldots, v_{10}$, and then we construct the data structure from~\cite{rahul2011efficient} on $S$. The points in it have coordinates in $[0,10]\times[0,7]\times[0,3]$. Let $P = \texttt{a}$ be a query pattern of length $m=1$, $q = 1$, and $k=1$.
We construct $\ZZ(P) = \texttt{a}$ and locate its locus $(v_2,1)$, which has $\preorder(v_2) = 2$ in $\ZZT(T)$. 
As $m + 2q = 3$ and the rightmost leaf descendant of $v_2$ is $v_6$, whose $\preorder(v_6) = 6$, we have 
{
\abovedisplayskip=2mm
\belowdisplayskip=2mm
\begin{align*}
  Q &= [\preorder(v_2),\,\preorder(v_6)] \times [m+2q,\,\infty) \times [0,\,m+2q) \\
    &= [2,\,6] \times [3,\,\infty) \times [0,\,3).
\end{align*}}
Since $S\cap Q=\{(5,5,2),(6,6,2)\}$ and $k=1$, we report the top-$1$ point, namely $(6,6,2)$. This point corresponds to $v_6$, which is assigned a weight $f_T(\str(v_6))=5$.

\end{example}

\paragraph{Main Result.} From the above, we obtain: 

\begin{theorem}
\TCPR can be solved using an index of $\widetilde{\cO}(n)$ size supporting \(\cO(m)+\widetilde{\cO}(k)\)-time queries. The index can be constructed in $\widetilde{\cO}(n)$ time and space given $\ZZT(T)$. 
\end{theorem}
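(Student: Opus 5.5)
The plan is to assemble the theorem from three ingredients: the construction of $\ZZT(T)$, the geometric point set $S$, and the top-$k$ orthogonal range reporting structure of~\cite{rahul2011efficient} with $d=3$. The proof has three parts: a size and construction bound, a correctness claim showing that the query box $Q$ captures exactly $\mathcal{C}_T(P,q)$, and a query-time bound.

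\textbf{Size and construction.} $\ZZT(T)$ has $\cO(n)$ nodes, since it is a compacted trie over $n$ strings. Hence $|S|=\cO(n)$.

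Preorder ranks, string depths and parent depths are all obtained by one DFS in $\cO(n)$ time. By assumption, each weight $f_T(\str(w))$ is available by oracle access in $\cO(1)$ time after $\cO(n)$ preprocessing. The oracle may need an occurrence of $\str(w)$; I would recover one from any leaf $Z_i$ below $w$ via the position formula $i-\lceil \ell/2\rceil+1$.

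Plugging $|S|=\cO(n)$ and $d=3$ into the bounds of~\cite{rahul2011efficient} gives $\cO(n\log^3 n)=\widetilde{\cO}(n)$ space. Its construction time is likewise polylogarithmic over linear, so the construction time and space are $\widetilde{\cO}(n)$.

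\textbf{Correctness: a bijection between $\mathcal{C}_T(P,q)$ and $S\cap Q$.} Let $(u,m)$ be the locus of $\ZZ(P)$. The key fact is the one noted in the definition of ZigZag strings: the prefix $Z_i[1\dd \ell]$ consists exactly of the letters of the length-$\ell$ substring of $T$ starting at $i-\lceil \ell/2\rceil+1$. Moreover, for $\ell=m+2q$ this prefix equals $\ZZ$ applied to that substring.

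From this I would derive that $LPR$ with $|L|=|R|=q$ occurs in $T$ if and only if $\ZZ(LPR)$ is a prefix of some $Z_i$ and extends $\ZZ(P)$. Equivalently, $\ZZ(LPR)$ spells a locus in the subtree of $u$ at string depth $m+2q$. Since $\ZZ$ is injective (it has the explicit inverse given earlier), distinct pairs $(L,R)$ give distinct loci.

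Each locus at depth $m+2q$ lies on a unique edge $(\parent(w),w)$, namely the one with $\sd(\parent(w))<m+2q\le\sd(w)$. Conversely, each such edge contains exactly one such locus. Two technical points need checking here:
\begin{itemize}
\item When $u$ is itself reached only implicitly, $w=u$ must also be counted. The preorder interval $[\preorder(u),\preorder(u')]$ includes $u$, and $\sd(\parent(u))<m\le m+2q$, so it is.
\item Loci consisting only of the terminator \texttt{\$} must be excluded. I expect this to be the main obstacle, because an edge into a leaf whose only extra letter at depth $m+2q$ is \texttt{\$} would yield a spurious point. I would handle it by storing $\sd(w)-1$ (the depth without \texttt{\$}) as the second coordinate for leaves, and then verify that the box still matches exactly.
\end{itemize}

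Finally, the descendant test holds because preorder ranks of the subtree of $u$ form the contiguous interval $[\preorder(u),\preorder(u')]$. Thus $S\cap Q$ is in bijection with $\mathcal{C}_T(P,q)$, and each point carries the weight $f_T(LPR)$.

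\textbf{Query time.} Computing $\ZZ(P)$ and locating its locus costs $\cO(m)$ using the dictionaries. Finding $u'$ costs $\cO(1)$ with a stored rightmost-leaf pointer. The range query costs $\cO(\log^3 n+k)=\widetilde{\cO}(k)$. Reporting each $(L,R)$ costs $\cO(1)$ implicitly, as a position together with the length $m+2q$. This gives $\cO(m)+\widetilde{\cO}(k)$ in total.
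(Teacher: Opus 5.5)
Your construction, query box, and use of the top-$k$ structure of~\cite{rahul2011efficient} are the paper's. Your handling of the terminator is a real improvement. The paper's second coordinate $\sd(w)$ counts \texttt{\$}, so its box admits spurious points. In the running example $T=\texttt{banana}$, take $P=\texttt{b}$ and $q=1$: the box $[7,7]\times[3,\infty)\times[0,3)$ contains the point $(7,3,0)$ of the leaf $v_7$ with $\str(v_7)=\texttt{ba\$}$, although $\mathcal{C}_T(\texttt{b},1)=\emptyset$. Using $\sd(w)-1$ at leaves makes the box exact, as you claim.

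The gap is the sentence ``each point carries the weight $f_T(LPR)$.'' The point of $w$ has a single weight, computed on the string spelled by the explicit node $w$. Yet it is reported for every query whose target depth $\ell=m+2q$ falls anywhere on the edge into $w$, and the strings $X_\ell$ with $\ZZ(X_\ell)=\str(w)[1\dd \ell]$ differ for different $\ell$. So the weight equals $f_T(LPR)$ only if $f_T$ is constant along edges, which an arbitrary $\cO(1)$-oracle $f_T$ is not.

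Here is a concrete failure. Let $f_T(X)$ be the number of \texttt{c}'s in $X$ (constant time via prefix sums), $T=\texttt{ccbabccecade}$, $P=\texttt{a}$, $q=1$, $k=1$. The two points in $Q$ are the leaves of $Z_4$ and $Z_{10}$, weighted $f_T(T[1\dd 8])=4$ and $f_T(T[8\dd 12])=1$. The index therefore returns $(\texttt{b},\texttt{b})$, although $f_T(\texttt{bab})=0<1=f_T(\texttt{cad})$. The paper's proof rests on the same tacit assumption, namely its remark under ``Scoring Functions.''

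To close the gap, restrict $f_T$ and prove that it is invariant along edges. Write $D(w)$ for your adjusted depth. The lemma you need is that for every $\ell\in(\sd(\parent(w)),D(w)]$,
\[
\occ_T(X_\ell)=\{\,i-\lceil \ell/2\rceil+1 : Z_i \text{ is a leaf below } w\,\}.
\]
This holds because the $Z_i$ having $\str(w)[1\dd\ell]$ as a prefix are exactly the leaves below $w$, together with the position formula. Thus the occurrence set is a translate of a set that does not depend on $\ell$. Every score determined by the occurrence set up to translation, in particular \textsf{TF}, \textsf{SP}, and \textsf{TP}, then takes the same value at all loci of the edge, and only then does your bijection deliver the correct weights. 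For general $f_T$, one weight per node cannot suffice. One weight per locus would not fit in $\widetilde{\cO}(n)$ space either, since the number of loci is the number of distinct substrings of $T$, which is $\Theta(n^2)$ in the worst case.
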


\paragraph{Bounded-Length Optimization.} Motivated by the fact that there is often an upper bound on the length of $LPR$ in applications~\cite{contexticde}, we propose a parameterized version of our index that answers queries faster, has a much smaller size in practice,  and can also be constructed faster and using less space. This index is  parameterized by an integer $B\geq m+2q$. 

Let \(T[1\dd n]\) be a string of length \(n\) over alphabet \(\Sigma\). For every position \(i \in [1,n]\), we define the \emph{\(B\)-bounded ZigZag} string \(Z_i(T)\) as the sequence
\(
Z_i = T[i]\,T[i+1]\,T[i-1]\,T[i+2]\,T[i-2]\cdots,
\)
truncated after at most \(B\) letters; we drop \((T)\) from \(Z_i(T)\) when the context is clear. More precisely, we have
$|Z_i| = \min\!\left(B,\; \min\left\{k \ge 1 : i + (-1)^k\left\lfloor \frac{k}{2} \right\rfloor \notin [1,n] \right\}-1\right)$. 
Then, for all \(k\in[1,|Z_i|]\), we have
$Z_i[k] = T\!\left[i + (-1)^k\left\lfloor \frac{k}{2} \right\rfloor\right]$.

The counterpart of \(\ZZT(T)\) over the \(n\) \(B\)-bounded ZigZag strings of $T$ can be constructed in \(\cO(n\log n)\) time using the same algorithm described in \Cref{sec:ZigZagindex}. After the index construction,  
we can proceed with the construction of the \TCPR index as described in the subsection ``Construction'' above. 
The query algorithm works as described in the subsection ``Querying'' above provided that $m+2q \leq B$.

\paragraph{Scoring Functions.} Recall from \Cref{sec:intro} that we can support any function $f_T:\Sigma^*\rightarrow \mathbb{R}_{\geq 0}$ computable in $\cO(1)$ time after $\cO(n)$-time preprocessing. Our \ZZT-based index relies on the fact that the inputs to $f_T$ are precisely the $LPR$ strings corresponding to explicit nodes in $\ZZT(T)$. This permits using well-known importance measures such as \emph{term frequency} (\textsf{TF})~\cite{ricbook}, computed as $|\occ_T(LPR)|$, and \emph{span} (\textsf{SP})~\cite{tao2007exploration,hawking1995proximity}, computed as $(o_r-o_l)$, where $o_r$ and $o_l$ are the rightmost and leftmost occurrences of $LPR$ in $T$, respectively. Other functions, such as \emph{term proximity} (\textsf{TP})~\cite{hon2014space,rasolofo2003term,cummins2009learning}, computed as $\min_{i\in [1,t)}(o_{i+1}-o_i)$ for occurrences $o_1,\ldots,o_t$ of $LPR$ in ascending order, can also be used as $f_T$, but their preprocessing cost must be accounted for in construction.  

\section{Related Work}\label{sec:related}

String indexing and pattern retrieval have been extensively studied, including work on full-text indexes and on top-$k$ retrieval; see~\cite{ricbook} for a textbook.  
Our work falls into the emerging area of contextual pattern matching. 

As discussed in \Cref{sec:intro}, Navarro~\cite{navarro2020contextual} introduced the problem of reporting the context of a given pattern, and Li et al.~\cite{contexticde} introduced the problem of counting the context size of a pattern and of mining all patterns of a given length with a sufficiently large context. These works employ classic indexes, such as suffix or prefix trees~\cite{DBLP:conf/focs/Weiner73}. There are four indexes 
that \emph{report} all pairs in $\mathcal{C}_T(P,q)$ for a pattern $P$.   
The first index~\cite{navarro2020contextual} uses a suffix array (SA) and other auxiliary data structures; it has \(\cO(|P|+|\mathcal{C}_T(P,q)|)\) query time and uses \(\cO(n)\) space.  
The second index~\cite{navarro2020contextual} has 
\(\cO(|P|\log\log n + |\mathcal{C}_T(P,q)|\log n)\) query time and uses \(\cO(\bar{r}\log (n/\bar{r}))\) space, where \(\bar{r}\) is the maximum number of equal-letter runs in the BWT (Burrows-Wheeler transform) of \(T\) and of its reverse. The third index~\cite{abedinDCC23} has \(\cO(|P|+|\mathcal{C}_T(P,q)|\log q \log(n/r))\) query time, where \(r\) is the number of equal-letter runs in the BWT of \(T\), and uses \(\cO(r\log(n/r))\) space. The query time for the third index is slightly worse than that for the second. The fourth index~\cite{navarro2026} has \(\cO(|P|+|\mathcal{C}_T(P,q)|)\) query time and uses \(\cO(\bar{e})\) space, where $\bar{e}$ is the space occupied by the Symmetric Compact Directed Acyclic Word Graph (SCDAWG)~\cite{DBLP:journals/jacm/BlumerBHME87} of $T$.  
Our work differs substantially from the above works in that it introduces: (I) several novel contextual queries (which cannot be answered by the above indexes) and specialized indexes to answer them; and (II) \ZZT, a general index for answering contextual queries efficiently, which  allows directly retrieving all the contextual information for a pattern.

The family of affix trees~\cite{DBLP:journals/algorithmica/Maass03,DBLP:conf/dcc/CanovasR17} and affix arrays~\cite{DBLP:journals/tcs/Strothmann07} is closely related to \ZZT: these indexes support bidirectional search, allowing a matched pattern 
to be extended both to the left and to the right, while maintaining its occurrences during the search. Similar functionality is provided by 
BWT-based bidirectional indexes~\cite{DBLP:conf/bibm/LamLTWWY09,DBLP:conf/spire/BelazzouguiC20,DBLP:conf/cpm/ArakawaNS22}. 
However, none of these indexes \emph{is an alternative} to \ZZT for our purposes. 
These indexes are designed to navigate from a known pattern to its bidirectional extensions, whereas our contextual queries require efficiently aggregating 
over all (balanced) contexts \(LPR\), with \(|L|=|R|\), around a given pattern 
\(P\). In particular, these indexes do not by themselves provide the contextual information associated with \(P\) in the compacted form provided by \ZZT.

\section{Experimental Evaluation}\label{sec:experiments}

\begin{table}[t]
\centering
\caption{Dataset characteristics.} \label{tab:data}
\begin{tabular}{llrr}
\toprule
\textbf{Dataset} & \textbf{Domain} & \textbf{Length $n$} & \textbf{Alphabet size $|\Sigma|$} \\
\midrule
\wiki \cite{pizzachili} & biographical articles & 474,214,798 & 36 \\
\bst \cite{boost} & GitHub repository & 3,000,000,000 & 68 \\
\sdsl \cite{sdsl} & GitHub repository & 3,000,000,000 & 67 \\
\sars \cite{NCBI} & bioinformatics & 3,000,000,000 & 4 \\
\chr \cite{chr} & bioinformatics & 3,000,000,000 & 4 \\
\bottomrule
\end{tabular}
\end{table}

\paragraph{Datasets.} We used $5$ publicly available, large-scale datasets of up to about $3$ billion letters  that were also used in related work~\cite{contexticde}; see \Cref{tab:data}. 
\wiki is a collection of 
Wikipedia articles~\cite{pizzachili}. 
\bst and \sdsl represent all commits from two GitHub repositories (Boost~\cite{boost} and \sdsl~\cite{sdsl}), retrieved from~\cite{Getgit}. \sars is a genomic dataset~\cite{NCBI}.
\chr is a dataset 
from the $1,000$ Genomes Project~\cite{GenomesProject} 
representing chromosome 19 sequences from $1,000$ human haplotypes~\cite{chr}. 

We report below results for one dataset per query type; the results for the other datasets are in the appendix. 

\paragraph{Methods.} Since \emph{no} out-of-the-box solution  exists for the type of queries we consider (see \Cref{sec:related}), we compared our \LFCS, \LCCS, and \CC indexes, referred to as \LFCSZT, \LCCSZT, and \CCZT, to their corresponding baselines, \LFCSBA, \LCCSBA, and \CCBA. 

We consider our \TCPR index both without and with the bounded-length optimization, referred to as \TCPRZTminus and \TCPRZT, respectively. In both cases, we did not use the data structure from \Cref{sec:tcpr} for top-$k$ queries, since its $\cO(n\log^d n)$ space usage~\cite{rahul2011efficient} is prohibitive for our dataset sizes; here, we have \(d=3\). We instead used an R$^*$-tree~\cite{beckmann1990r} to index the three-dimensional points for orthogonal range reporting (ORR).  The R$^*$-tree was shown to  
outperform other ORR data structures~\cite{contexticde}, such as Range trees~\cite{bentley1978decomposable,DBLP:conf/focs/Lueker78} and KD trees~\cite{bentley1975multidimensional}. The R$^*$-tree uses $\cO(n)$ space; and the $f_T$ scores were stored as satellite information. 
After querying the R$^*$-tree for ORR, we obtain all points in $S\cap Q$ (see \Cref{sec:tcpr}), and then select the top-$k$ ones by their $f_T$ scores. 

We compared \TCPRZT to the baseline from \Cref{sec:TCPR-BA}, referred to as \TCPRBA, using 
the \textsf{TF}, \textsf{SP}, and \textsf{TP} scoring functions from \Cref{sec:tcpr}. \TCPRZTminus was worse than \TCPRZT in all efficiency measures and worse than \TCPRBA in all measures except query time; see \Cref{sec:exp:ablation}. 

\paragraph{Setup.} We used all relevant measures of efficiency~\cite{pvldb23} (i.e., query time; index size; construction time; and construction space) in our evaluation, each for the problem parameter(s) that affect it. We measured query and construction time using the \texttt{chrono} C++ library, construction space using \texttt{/usr/bin/time -v}, and index size using \texttt{mallinfo2}. 

Note that for the pattern length $m$, we used the same values as in~\cite{contexticde}, and $m=9$ by default. When we varied $n$, we used increasingly longer prefixes of the dataset. As query patterns, we used the $1000$ most frequent length-$m$ substrings (in the collection $\mathcal{C}$ for \LCCS and in the text $T$ for all other query types). In \LFCS and \LCCS, we used by default $\tau=100$ and $\tau=5$, respectively. In \TCPR, by default we used $q=9$ and varied it from $3$ to $15$, and $k=10$ and varied it from $5$ to $80$. We set $B=m+2q$, which gave the best performance across all measures; by default, $B = m + 2q=27$. These values reflect realistic pattern and context sizes~\cite{contexticde}. 

All experiments were conducted on an AMD EPYC 7282 CPU with $1$\,TB of RAM and $1$\,TB of disk space.  
All indexes were implemented in \texttt{C++}. 
\emph{Our code and datasets are in}~\cite{githubZT}.

\subsection{\LFCS}\label{experiments:lfcs}

\begin{figure}[t]
  \begin{subfigure}[t]{0.16\linewidth}
    \includegraphics[width=\linewidth]{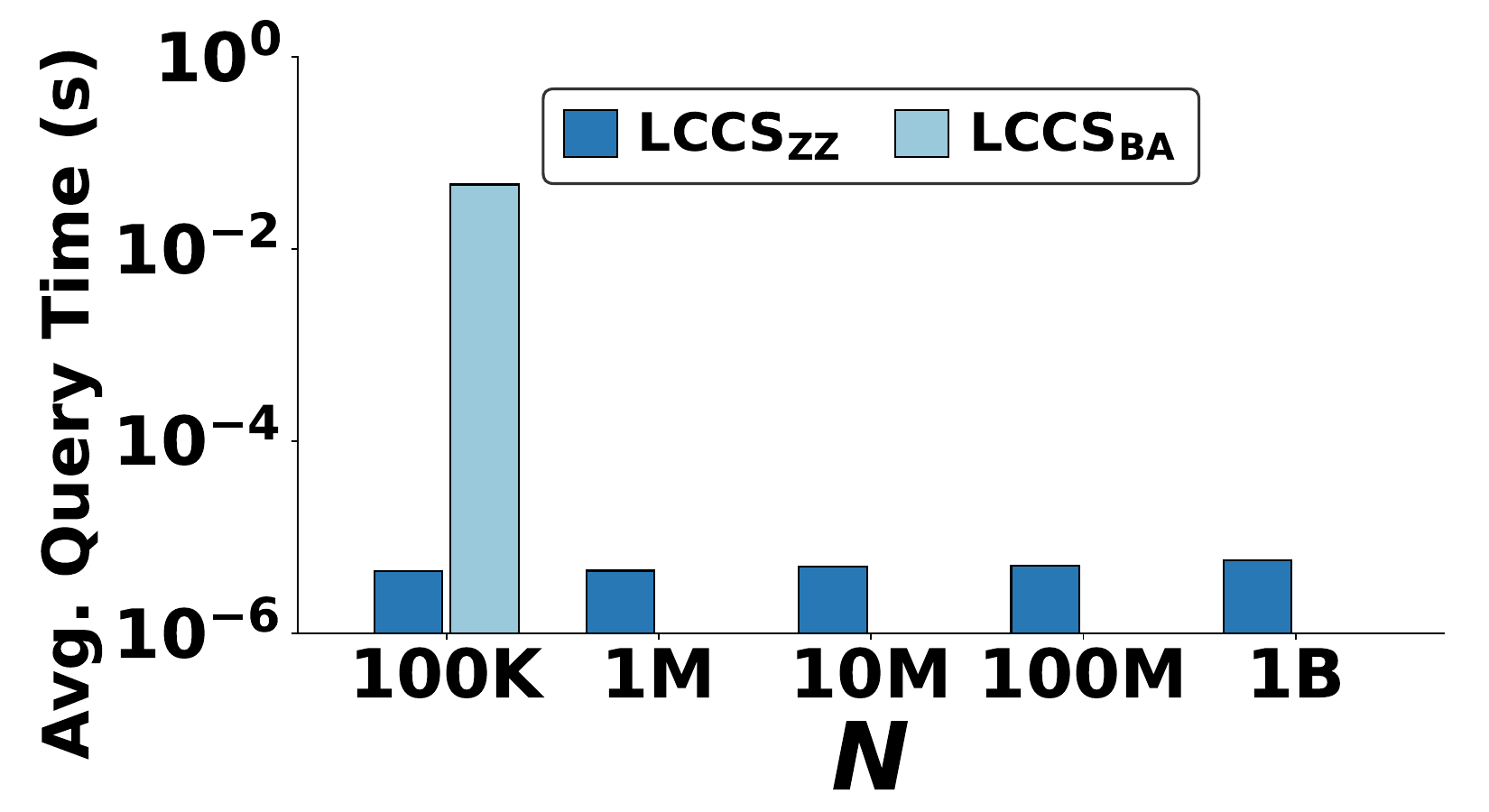}
    \caption{Query time vs. $N$}\label{fig:LCCS_qt_N}
  \end{subfigure}\hfill
  \begin{subfigure}[t]{0.16\linewidth}
    \includegraphics[width=\linewidth]{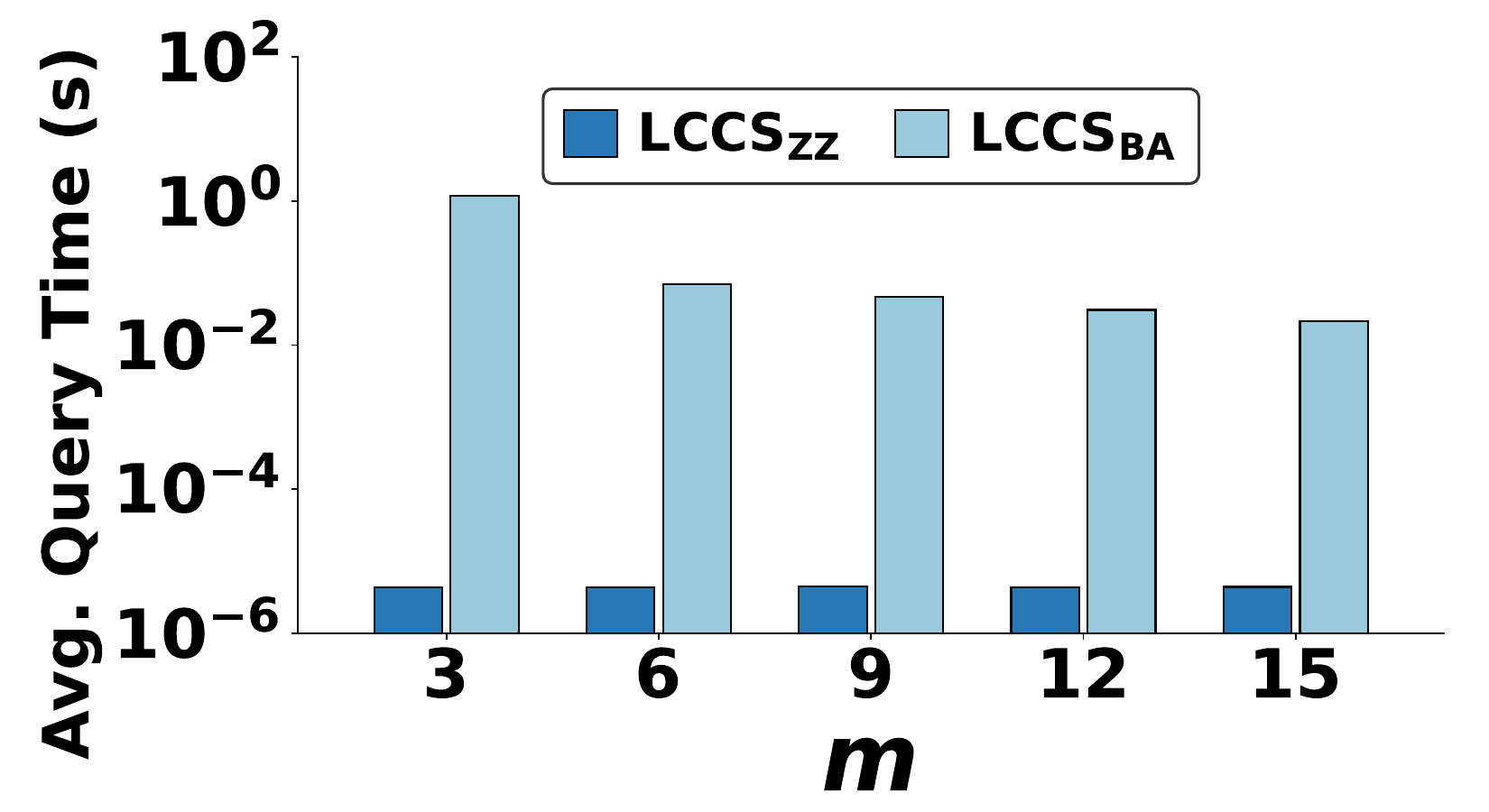}
    \caption{Query time vs. $m$}\label{fig:LCCS_qt_m}
  \end{subfigure}\hfill
  \begin{subfigure}[t]{0.16\linewidth}
    \includegraphics[width=\linewidth]{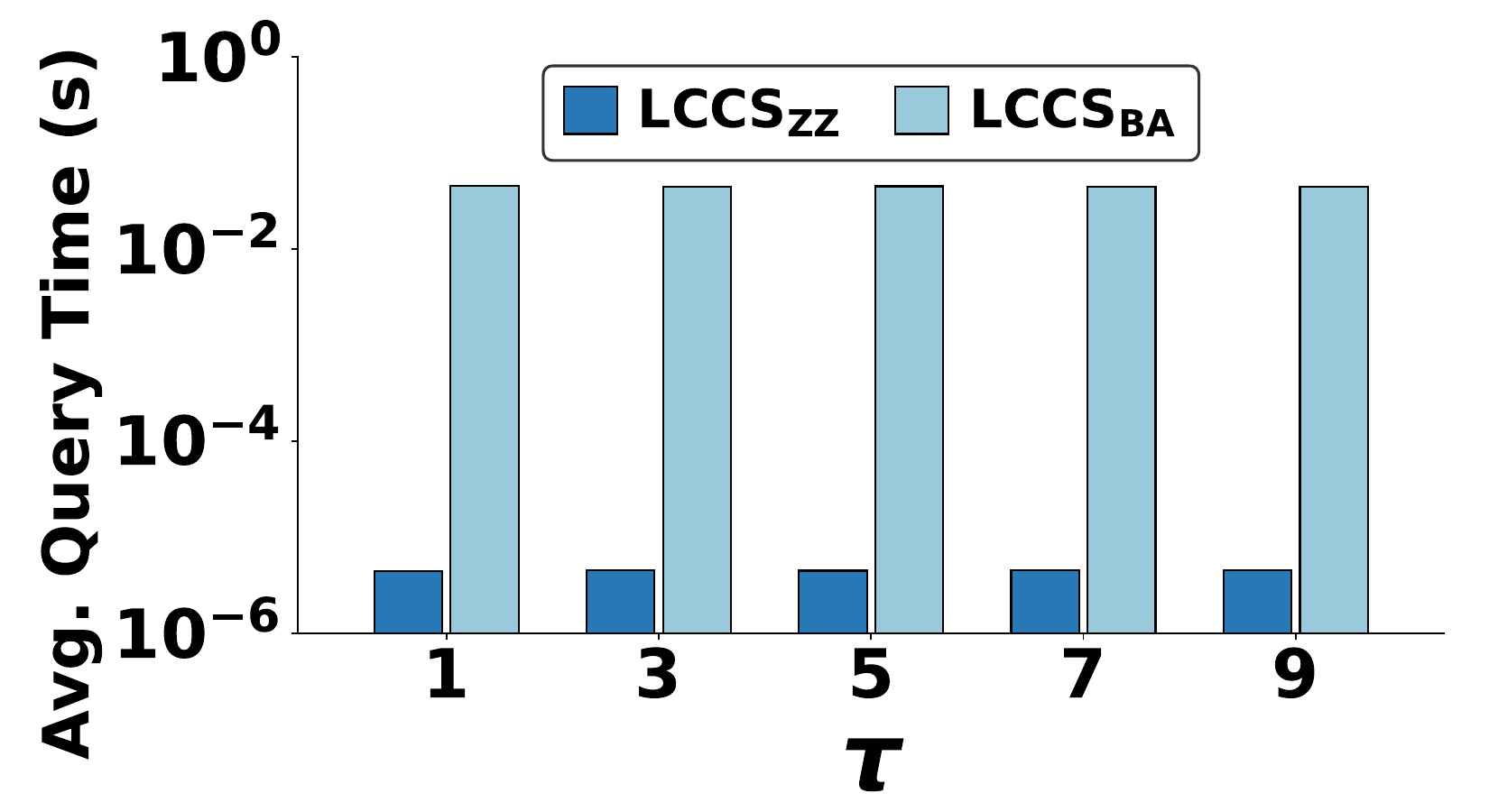}
    \caption{Query time vs. $\tau$}\label{fig:LCCS_qt_tau}
  \end{subfigure}\hfill
  \begin{subfigure}[t]{0.16\linewidth}
    \includegraphics[width=\linewidth]{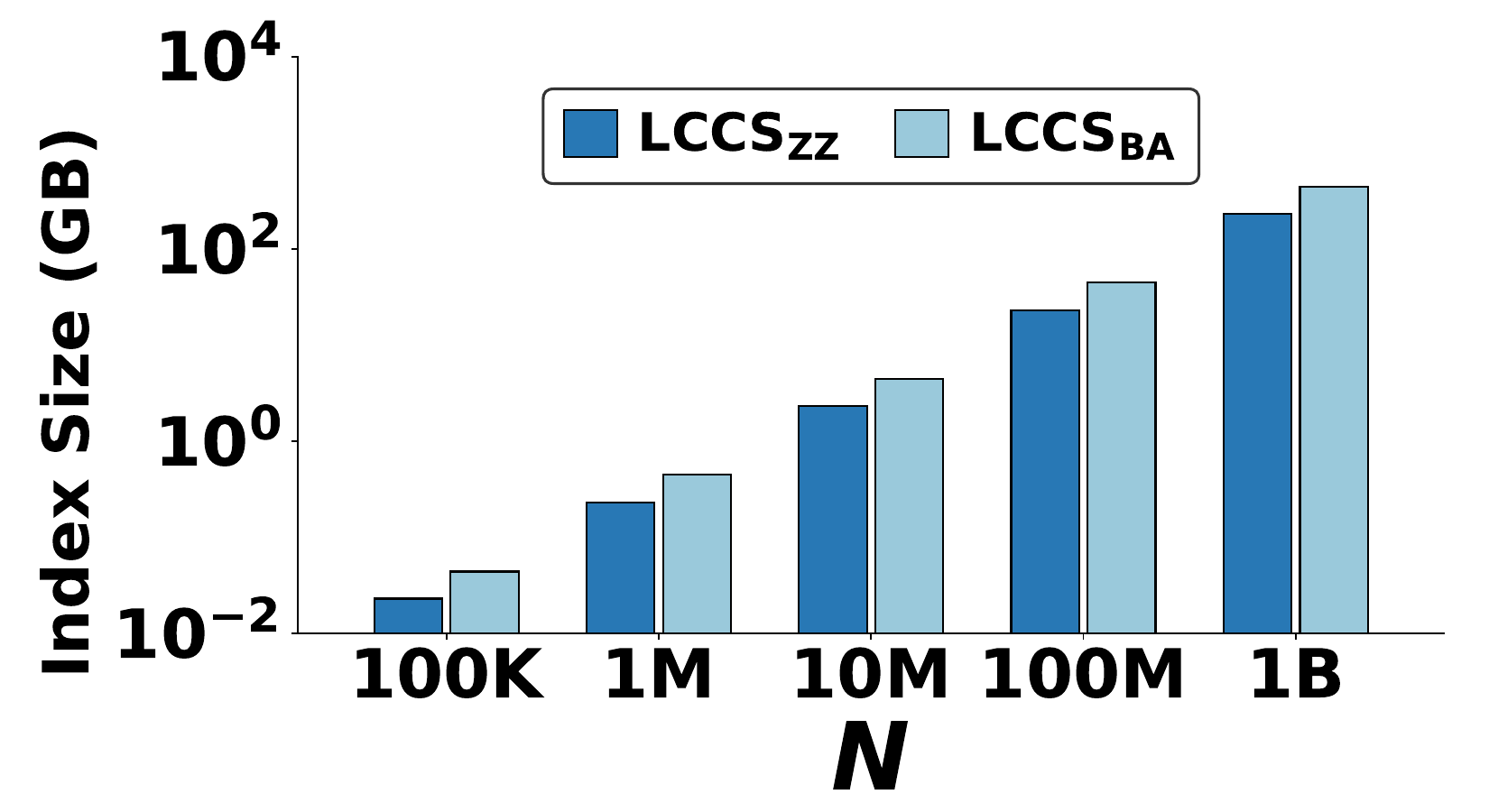}
    \caption{Index size vs. $N$}\label{fig:LCCS_is}
  \end{subfigure}\hfill
  \begin{subfigure}[t]{0.161\linewidth}
    \includegraphics[width=\linewidth]{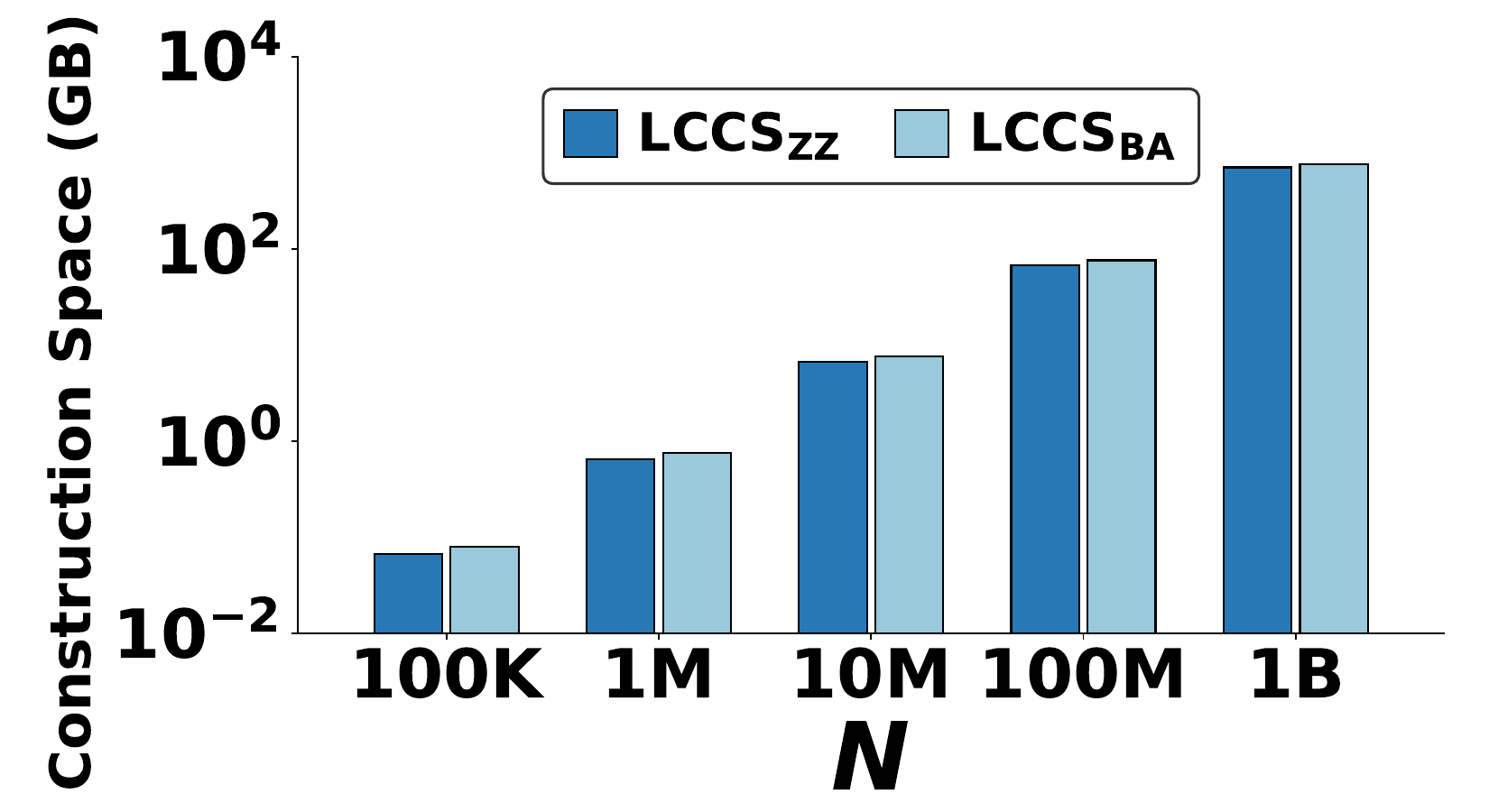}
    \caption{Constr.\ space vs. $N$}\label{fig:LCCS_cs}
  \end{subfigure}\hfill
  \begin{subfigure}[t]{0.16\linewidth}
    \includegraphics[width=\linewidth]{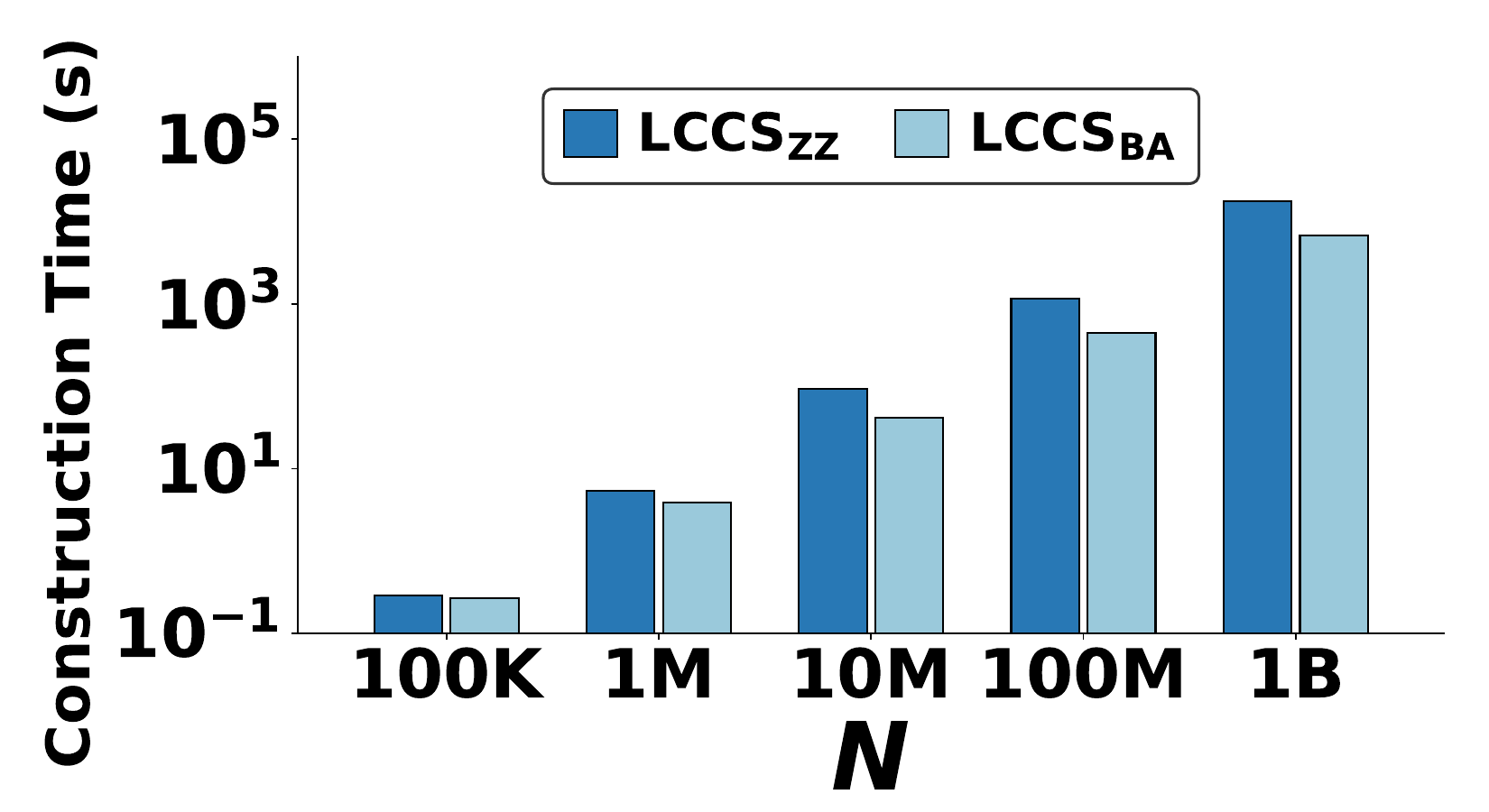}   \caption{Constr.\ time vs. $N$}\label{fig:LCCS_ct}
  \end{subfigure}
  \vspace{\captionspacing}
  \caption{Our \LCCS index vs. \LCCSBA on \sars. In \Cref{fig:LCCS_qt_N}, \LCCSBA did not terminate within $24$ hours for $N\geq 10^6$.}
\end{figure}

\begin{figure}[t]
  \centering
  \begin{subfigure}[t]{0.16\linewidth}
    \includegraphics[width=\linewidth]{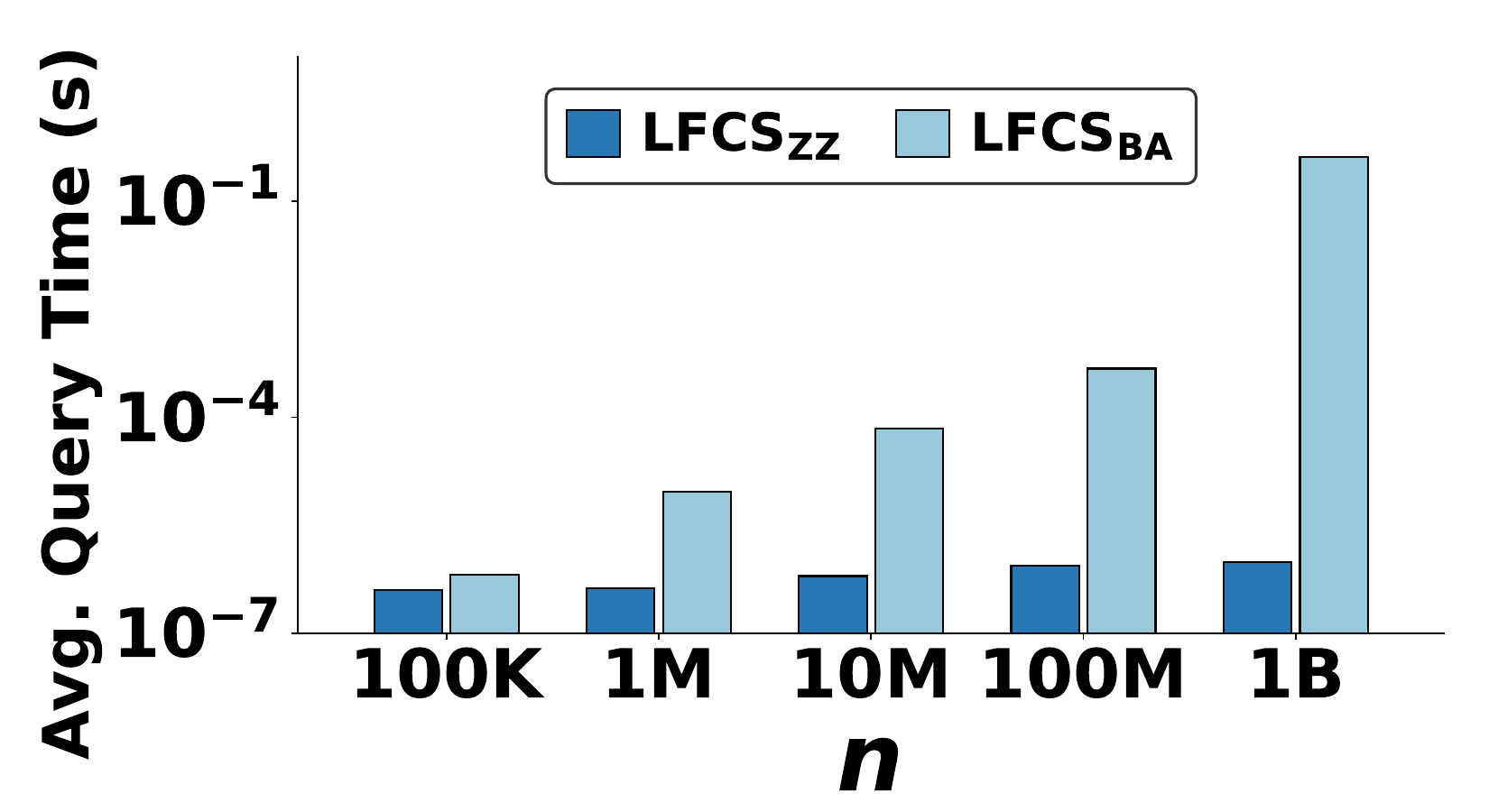}
    \caption{Query time vs. $n$}
    \label{fig:LFCS_SDSL_n_query}
  \end{subfigure}\hfill
  \begin{subfigure}[t]{0.16\linewidth}
    \includegraphics[width=\linewidth]{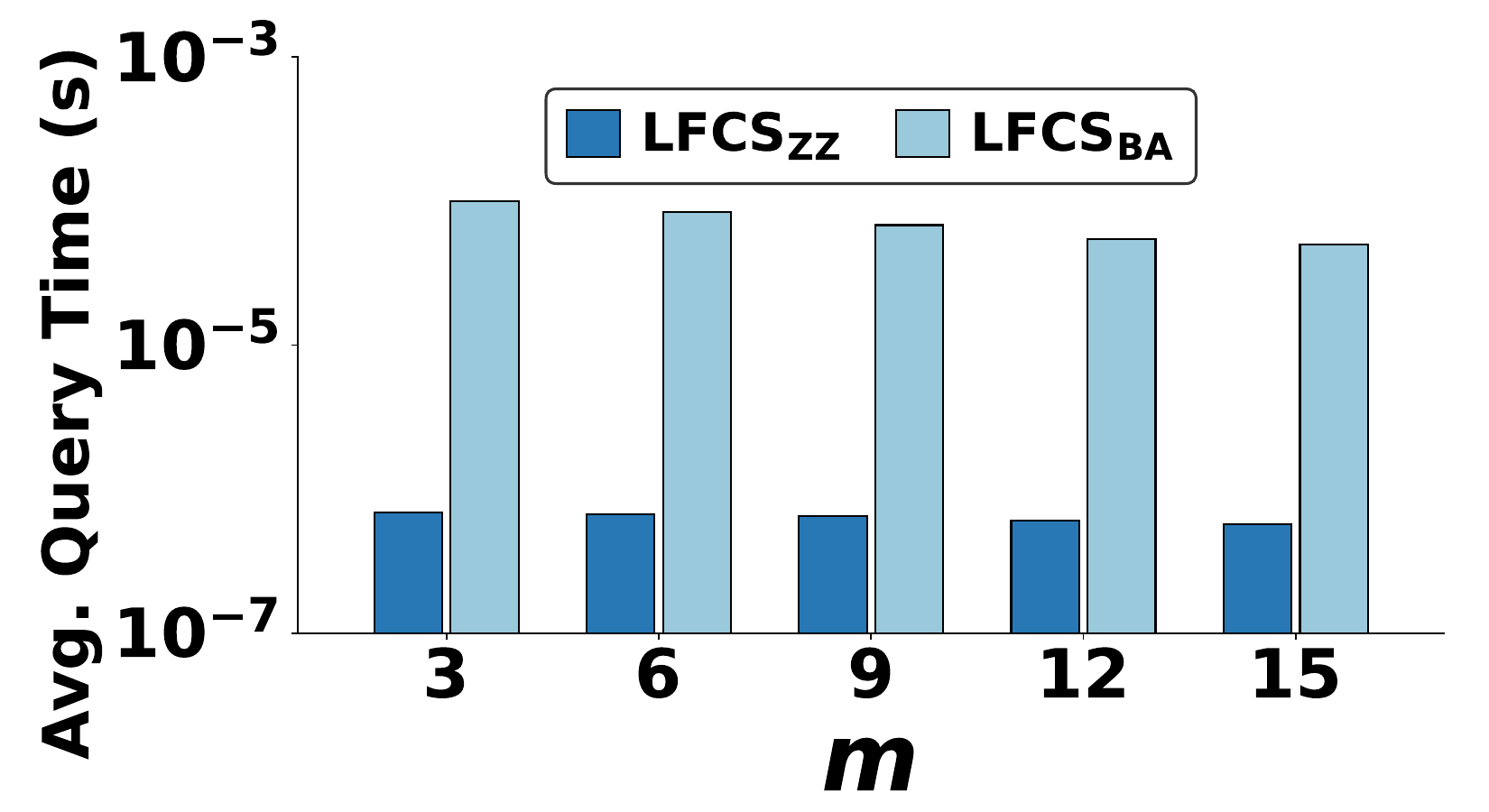}
    \caption{Query time vs. $m$}\label{fig:LFCS_SDSL_m_query}
  \end{subfigure}\hfill
  \begin{subfigure}[t]{0.16\linewidth}
    \includegraphics[width=\linewidth]{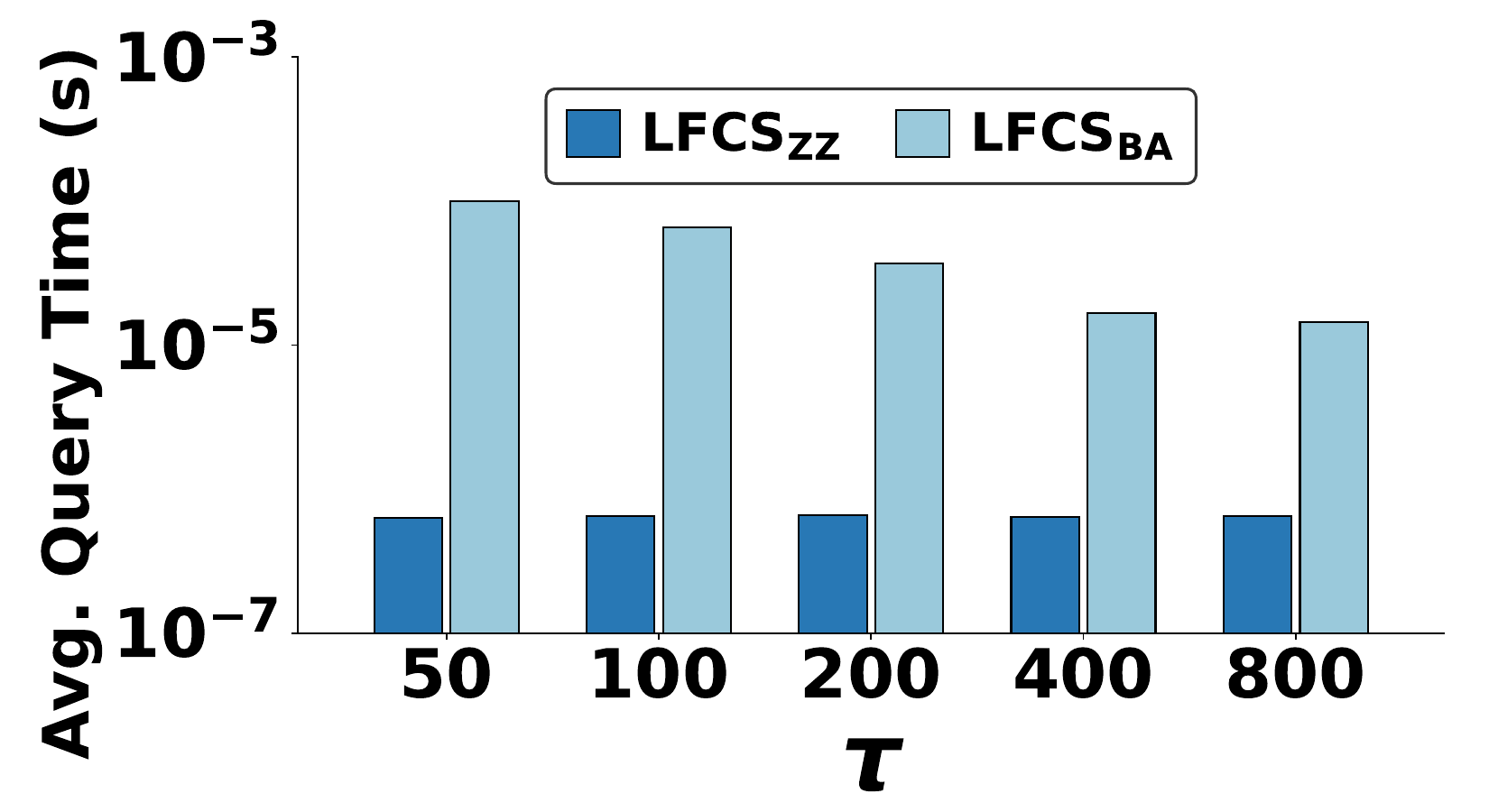}
    \caption{Query time vs. $\tau$}
    \label{fig:LFCS_SDSL_tau_query}
  \end{subfigure}\hfill
  \begin{subfigure}[t]{0.16\linewidth}
    \includegraphics[width=\linewidth]{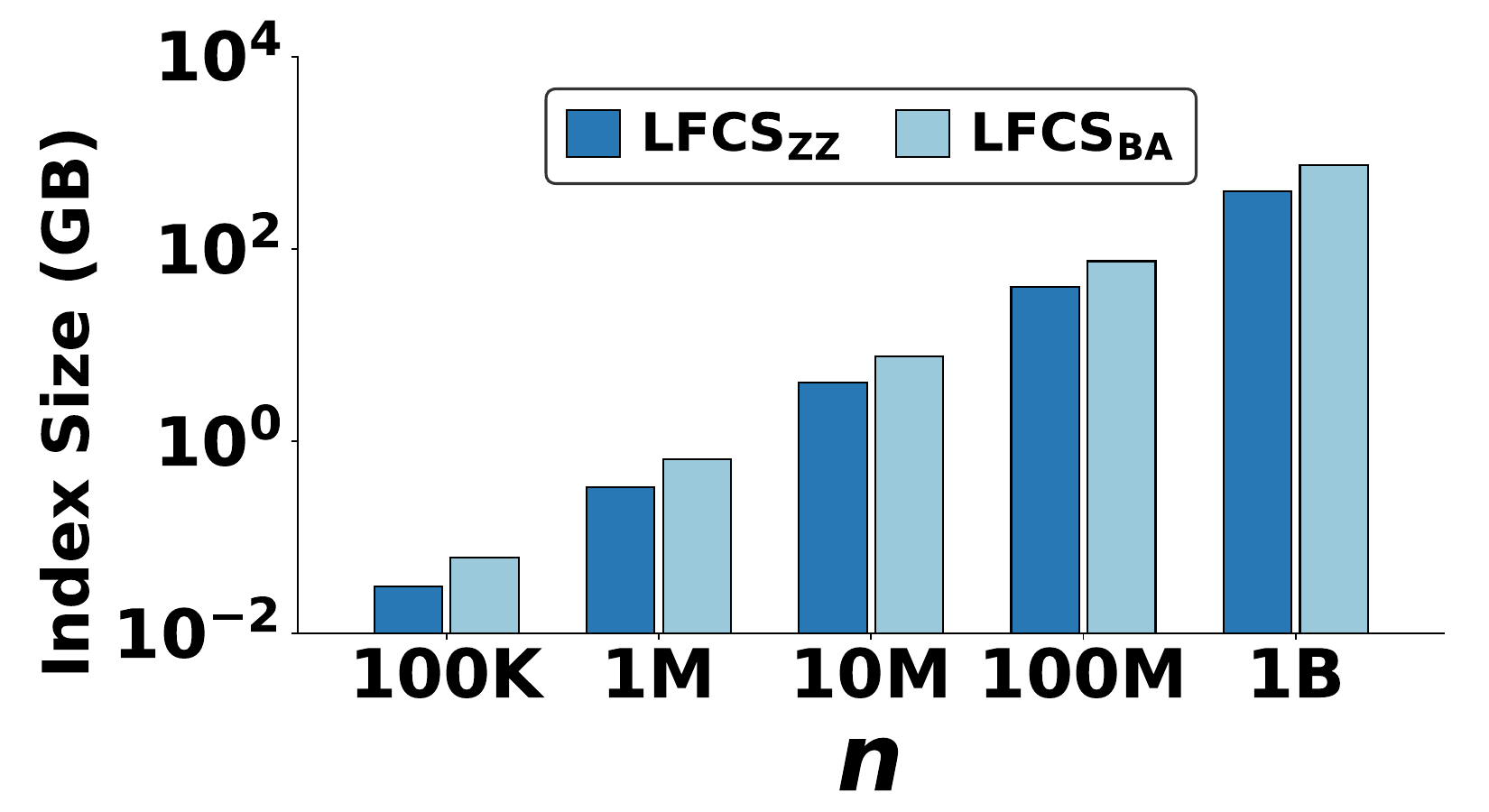}
    \caption{Index size vs. $n$}
    \label{fig:LFCS_SDSL_n_index}
  \end{subfigure}\hfill
  \begin{subfigure}[t]{0.16\linewidth}
    \includegraphics[width=\linewidth]{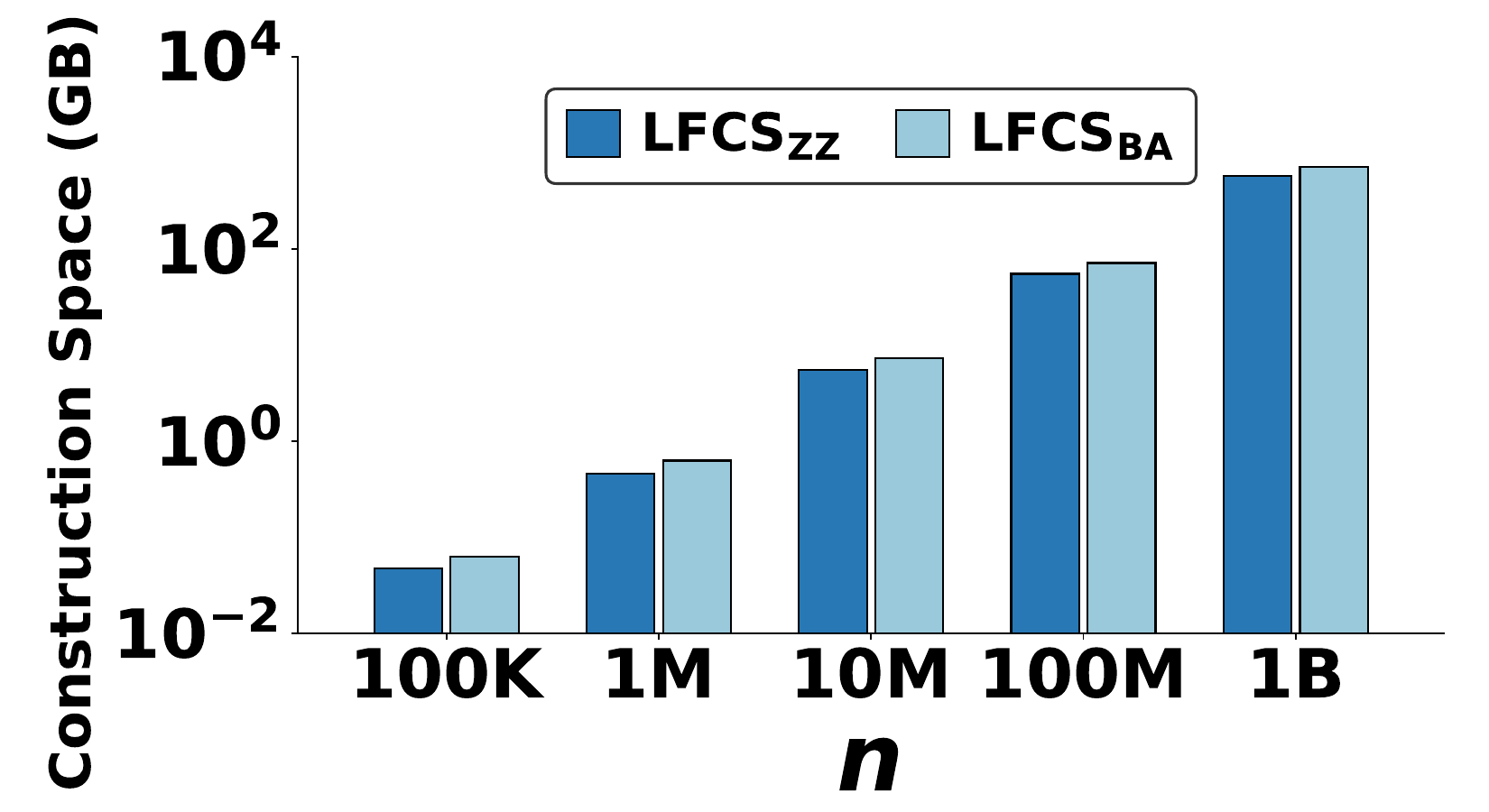}
    \caption{Constr.\ space vs. $n$}
        \label{fig:LFCS_SDSL_n_rss}
  \end{subfigure}\hfill
  \begin{subfigure}[t]{0.16\linewidth}
    \includegraphics[width=\linewidth]{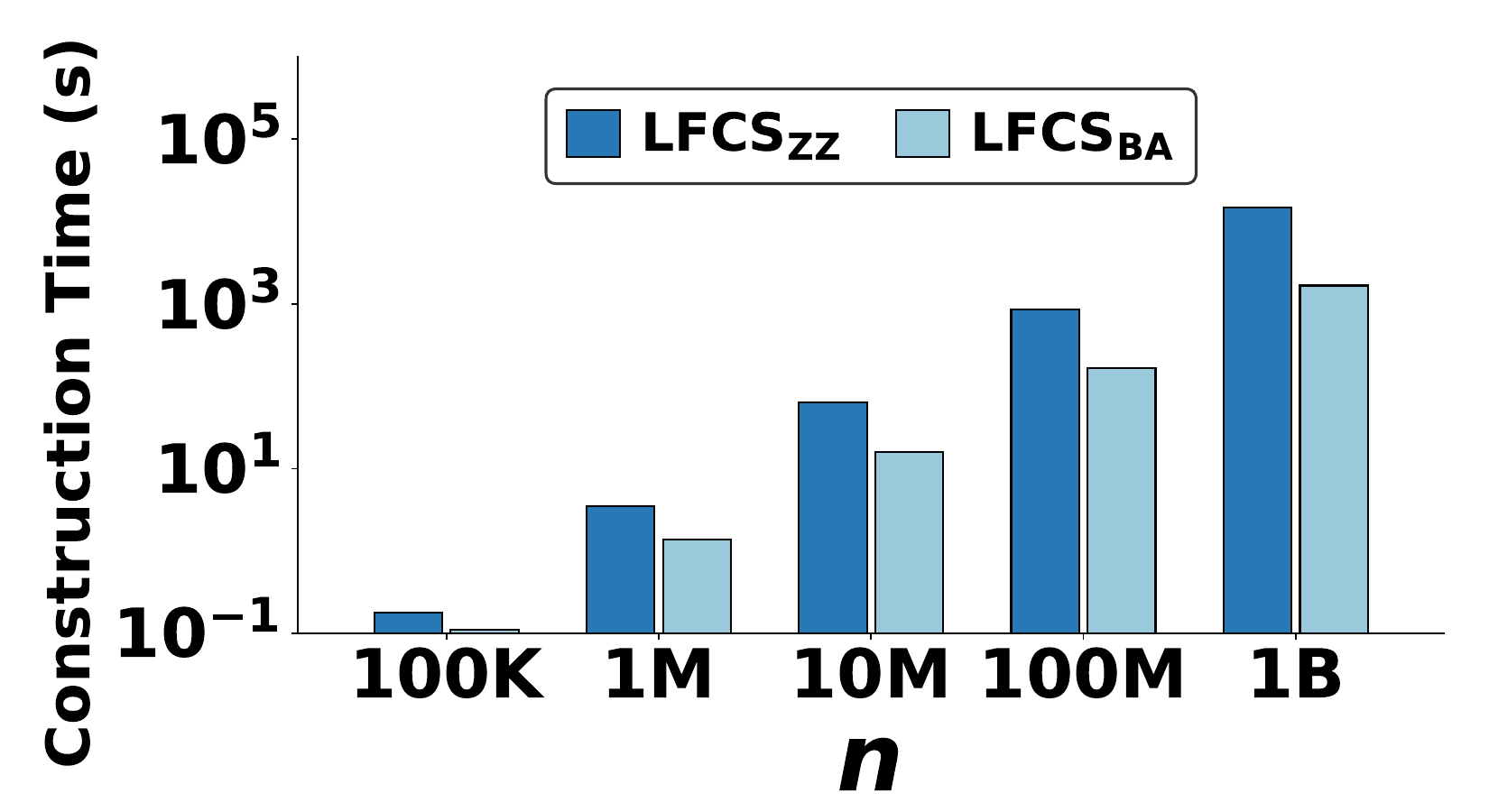}
    \caption{Constr.\ time vs. $n$}
        \label{fig:LFCS_SDSL_n_build}
  \end{subfigure}
  \vspace{\captionspacing}
  \caption{Our \LFCS index vs. \LFCSBA on \sdsl.}
\end{figure}

\paragraph{Query Time.} \Cref{fig:LFCS_SDSL_n_query} shows that, for varying $n$, our \LFCSZT index offers \emph{more than two orders of magnitude faster query time} on average (and up to five orders of magnitude faster)  than \LFCSBA, and its query time is \emph{unaffected by} $n$, as expected by its complexity. This experiment  shows the key benefit of our index: its query time is independent of $|\occ_T(P)|$. This term increases with $n$ and it is very large (e.g., for $n=10^9$, it is $127,560$ on average over all queries), making \LFCSBA impractical for realistically long strings. 

\Cref{fig:LFCS_SDSL_m_query} shows that \LFCSZT is \emph{$12$ times faster} than \LFCSBA on average over all values of $m$ and its query time grows more slowly than linearly in $m$. \LFCSBA becomes faster as $m$ increases, since  longer patterns occur fewer times in $T$ (i.e., the term $|\occ_T(P)|$ in the complexity of \LFCSBA becomes smaller) and induce fewer right extension candidates.

\Cref{fig:LFCS_SDSL_tau_query} shows that \LFCSZT is \emph{$8$ times faster} on average over all $\tau$ values. Its query time is unaffected by $\tau$, unlike that of \LFCSBA. The query time of \LFCSBA increases when $\tau$ decreases, as the DFS it employs prunes a smaller fraction of the subtree of the locus of $P$ in the suffix tree $\ST(T)$.

\paragraph{Index Size.} \Cref{fig:LFCS_SDSL_n_index} shows that our \LFCS index \emph{is $48\%$ smaller} on average over all $n$ values, and \emph{$46\%$ smaller} when $n=10^9$. The reason \LFCSBA occupies more space is that it uses two suffix trees. The size of both indexes scales linearly with $n$, as expected by their complexities.  

\paragraph{Construction Space.} \Cref{fig:LFCS_SDSL_n_rss} shows analogous results to those of \Cref{fig:LFCS_SDSL_n_index}. For example, \LFCSZT needs \emph{$20\%$ less space to be constructed} than \LFCSBA for $n=10^9$.  
\LFCSBA needs more space to be constructed due to its two suffix trees. 

\paragraph{Construction Time.} \Cref{fig:LFCS_SDSL_n_build} shows that \LFCSBA is constructed $4$ times faster than \LFCSZT on average over all $n$ values. 
This is consistent with the complexities of these indexes. The construction time of our \LFCS index was heavily influenced by that of $\ZZT(T)$ (e.g., constructing $\ZZT(T)$ took $85\%$ of the total time for $n=10^9$), which, however,  is the key to its excellent query time performance. 

\subsection{\LCCS}\label{experiments:LCCS}

\paragraph{Query Time.} \Cref{fig:LCCS_qt_N} shows that our \LCCSZT index offers query answering that is \emph{four orders of magnitude faster} than \LCCSBA and its query time is \emph{unaffected by} $N$, as expected by its complexity. 
In fact, \LCCSBA \emph{did not terminate within $24$ hours} for $N\geq 10^6$, as the term $|\occ_{\mathcal{C}}(P)|$ in its complexity is very large (e.g., $2,323$ on average over all queries).   
Thus, in the remainder of the section we used $N=10^5$. 

\Cref{fig:LCCS_qt_m} shows that \LCCSZT offers \emph{four orders of magnitude faster query time} than \LCCSBA on average and its query time grows more slowly than linearly in $m$. \LCCSBA becomes faster as $m$ increases, because  the term $|\occ_{\mathcal{C}}(P)|$ in its complexity becomes smaller as $m$ increases. 

\Cref{fig:LCCS_qt_tau} shows that \LCCSZT offers \emph{four orders of magnitude faster query time} on average. Furthermore, its query time is unaffected by $\tau$, unlike that of \LCCSBA. In this experiment, $\tau$ did not affect the query time of \LCCSBA, as the tested values were too small for DFS pruning to have a noticeable impact.

\paragraph{Index Size.} \Cref{fig:LCCS_is} shows that \LCCSZT \emph{is $48\%$ smaller} than \LCCSBA on average over all $N$ values, and \emph{$48\%$ smaller when $N= 10^9$}. The reason is the use of \ZZT; the suffix trees in \LCCSBA occupy much more  space. The size of both indexes scales linearly with $N$, as expected by their complexities.   

\paragraph{Construction Space.} \Cref{fig:LCCS_cs} shows  analogous results to those of \Cref{fig:LCCS_is}. For example, \LCCSZT needs \emph{$7\%$ less space to be constructed} than \LCCSBA for $N=10^9$. The larger construction space of \LCCSBA is again due to  its suffix trees. 

\paragraph{Construction Time.} \Cref{fig:LCCS_ct} shows that \LCCSBA is constructed two times faster than \LCCSZT on average over all $N$ values, as expected by the complexities of these indexes. Recall, however, that \LCCSBA is prohibitively expensive to query for realistically large string collections. The bottleneck for \LCCSZT is the construction of \ZZT. 

\begin{figure}[t]
  \centering
  \begin{subfigure}[t]{0.16\linewidth}
    \includegraphics[width=\linewidth]{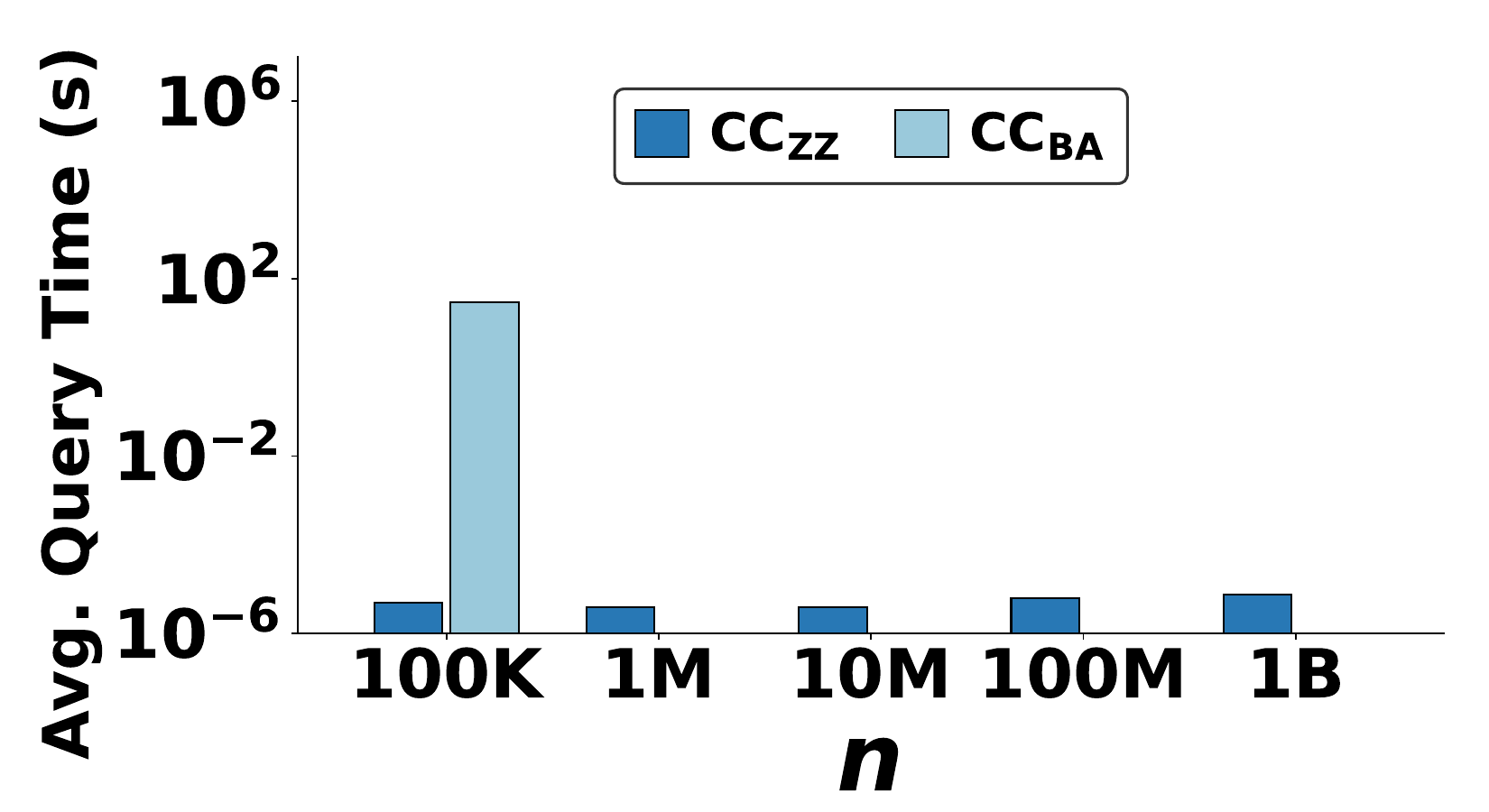}
    \caption{Query time vs. $n$}\label{cc_qt_n}
  \end{subfigure}\hfill
  \begin{subfigure}[t]{0.16\linewidth}
    \includegraphics[width=\linewidth]{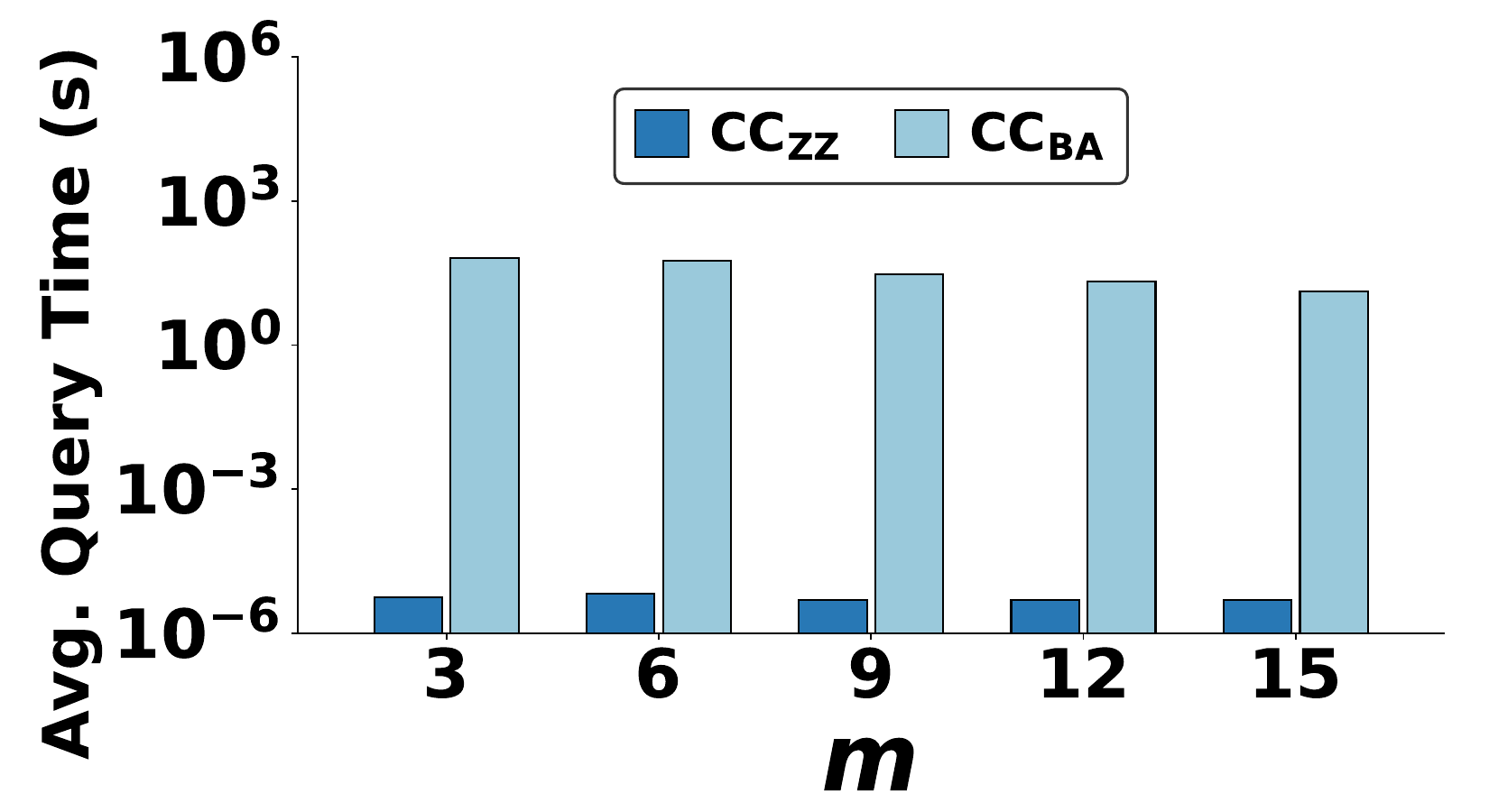}
    \caption{Query time vs. $m$}\label{cc_qt_m}
  \end{subfigure}\hfill
  \begin{subfigure}[t]{0.16\linewidth}
    \includegraphics[width=\linewidth]{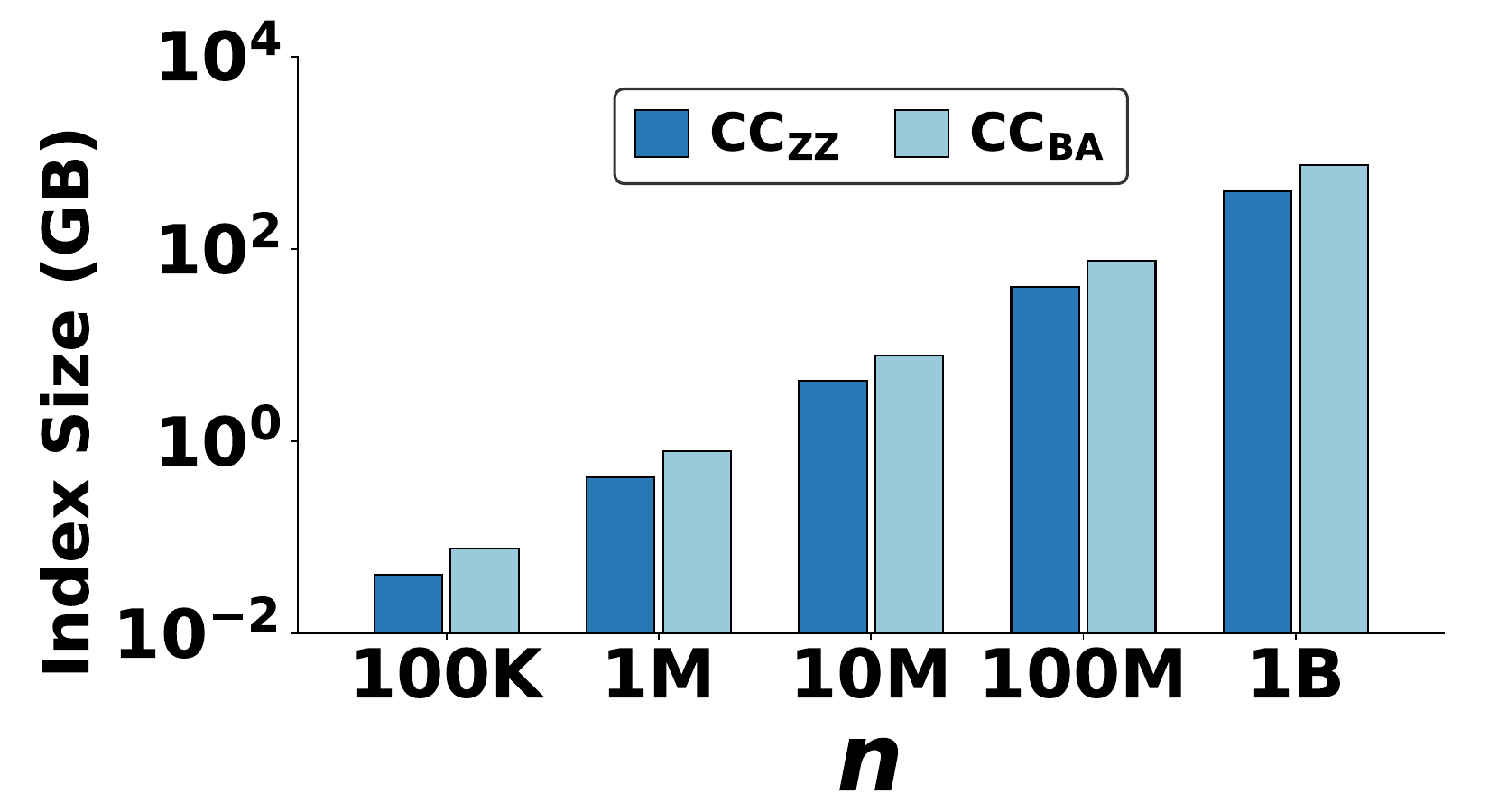}
    \caption{Index size vs. $n$}\label{cc_is}
  \end{subfigure}\hfill
  \begin{subfigure}[t]{0.16\linewidth}
    \includegraphics[width=\linewidth]{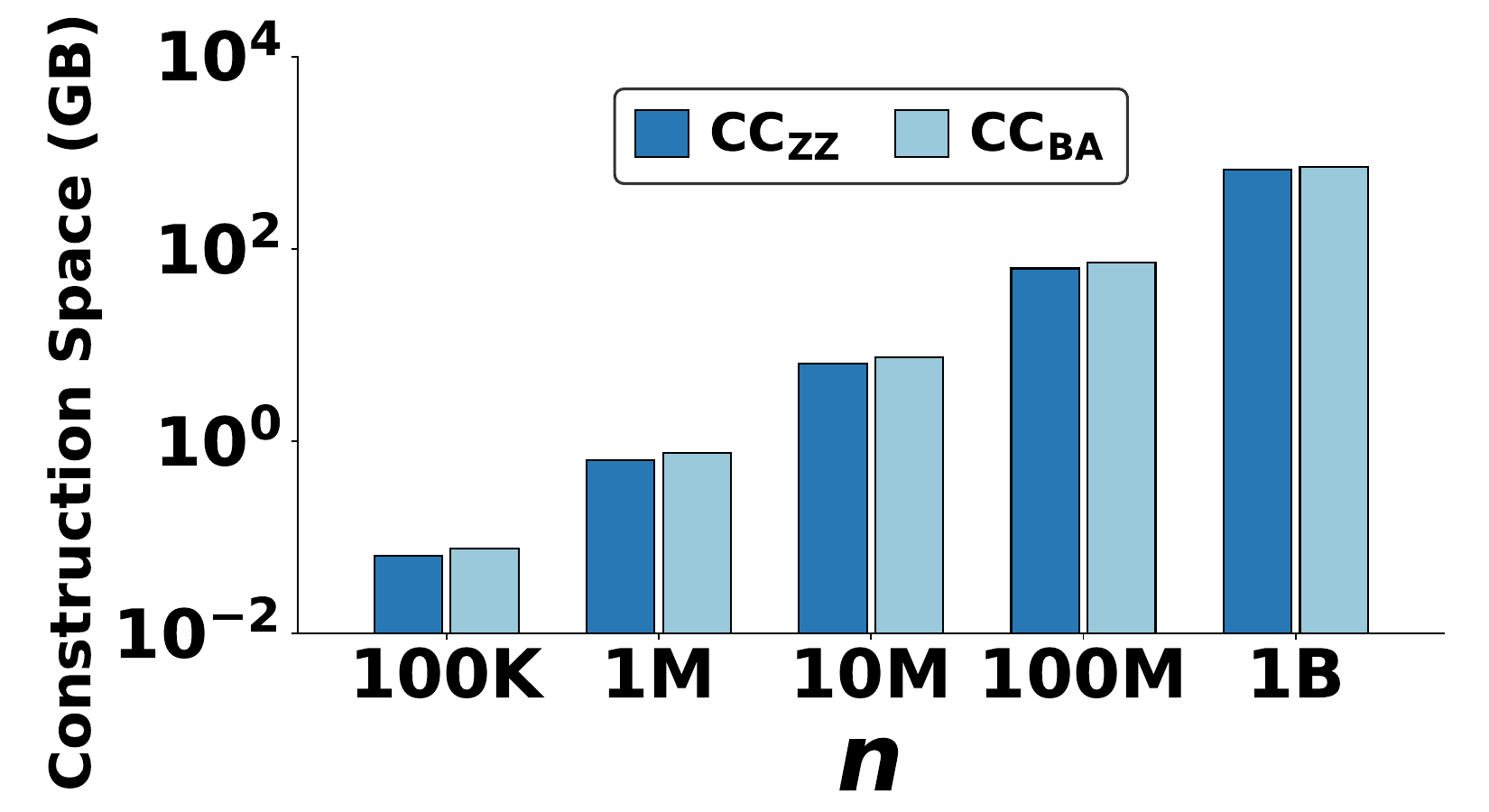}
    \caption{Constr.\ space vs. $n$}\label{cc_cs}
  \end{subfigure}\hfill
  \begin{subfigure}[t]{0.16\linewidth}
    \includegraphics[width=\linewidth]{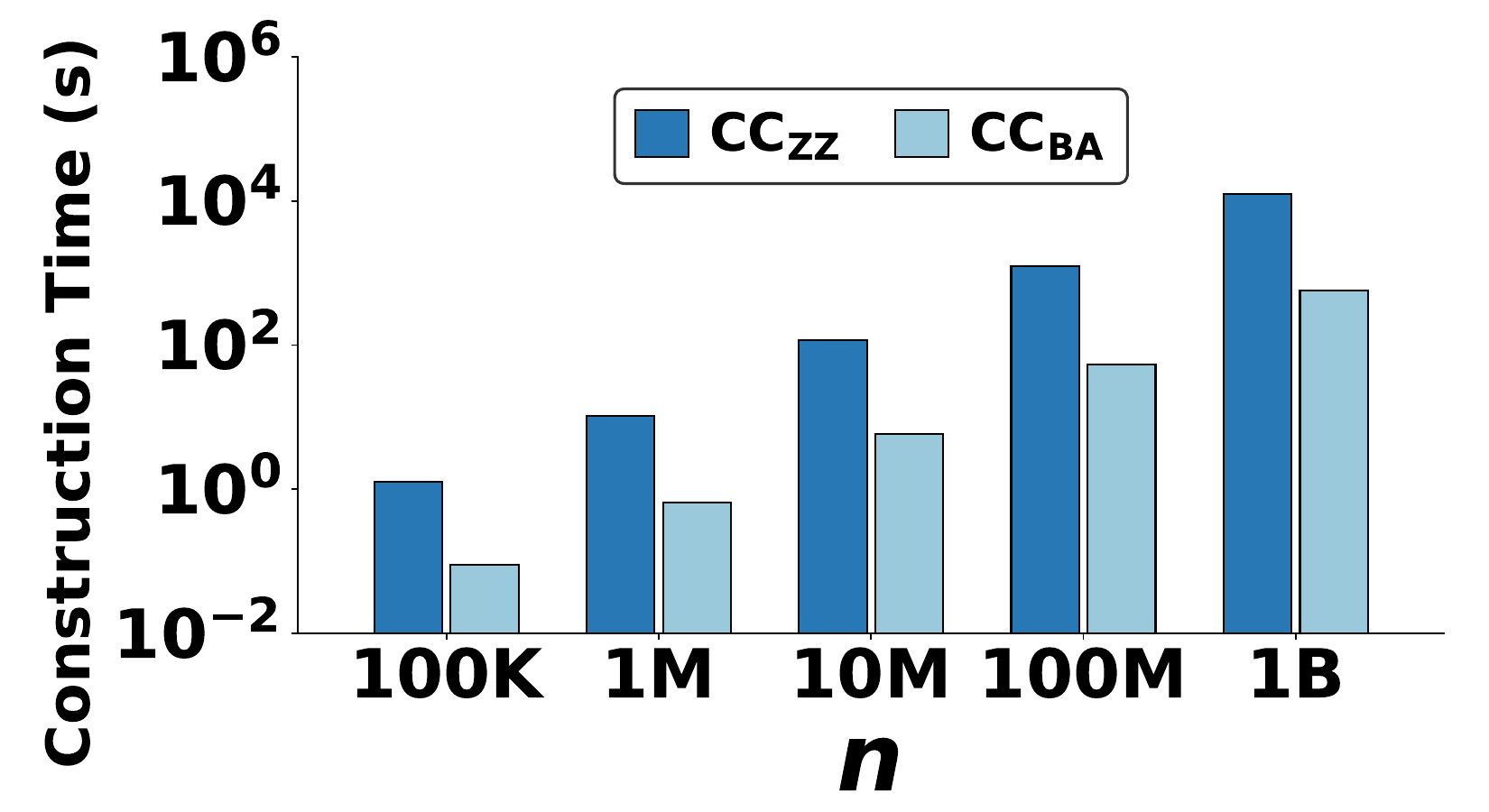}
    \caption{Constr.\ time vs. $n$}\label{cc_ct}
  \end{subfigure}
  \vspace{\captionspacing}
  \caption{Our \CC index vs. \CCBA on \bst. In \Cref{cc_qt_n}, \CCBA did not terminate within $24$ hours for $n\geq  10^6$.}
\end{figure}

\begin{figure}[t]
  \centering
  \begin{subfigure}[t]{0.16\linewidth}
    \includegraphics[width=\linewidth]{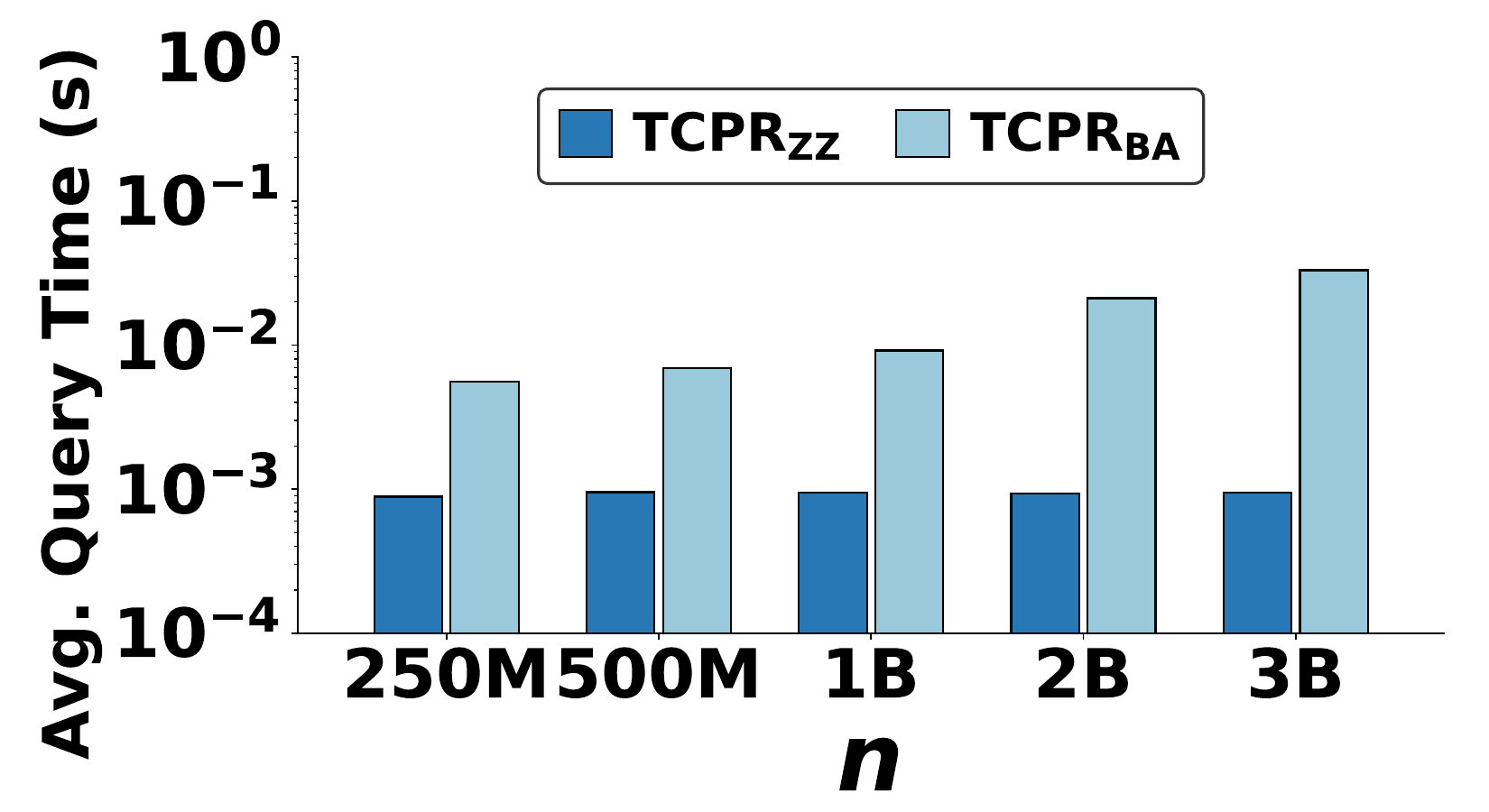}
    \caption{Query time vs. $n$}\label{fig:qt_tcpr_n1}
  \end{subfigure}\hfill
  \begin{subfigure}[t]{0.16\linewidth}
    \includegraphics[width=\linewidth]{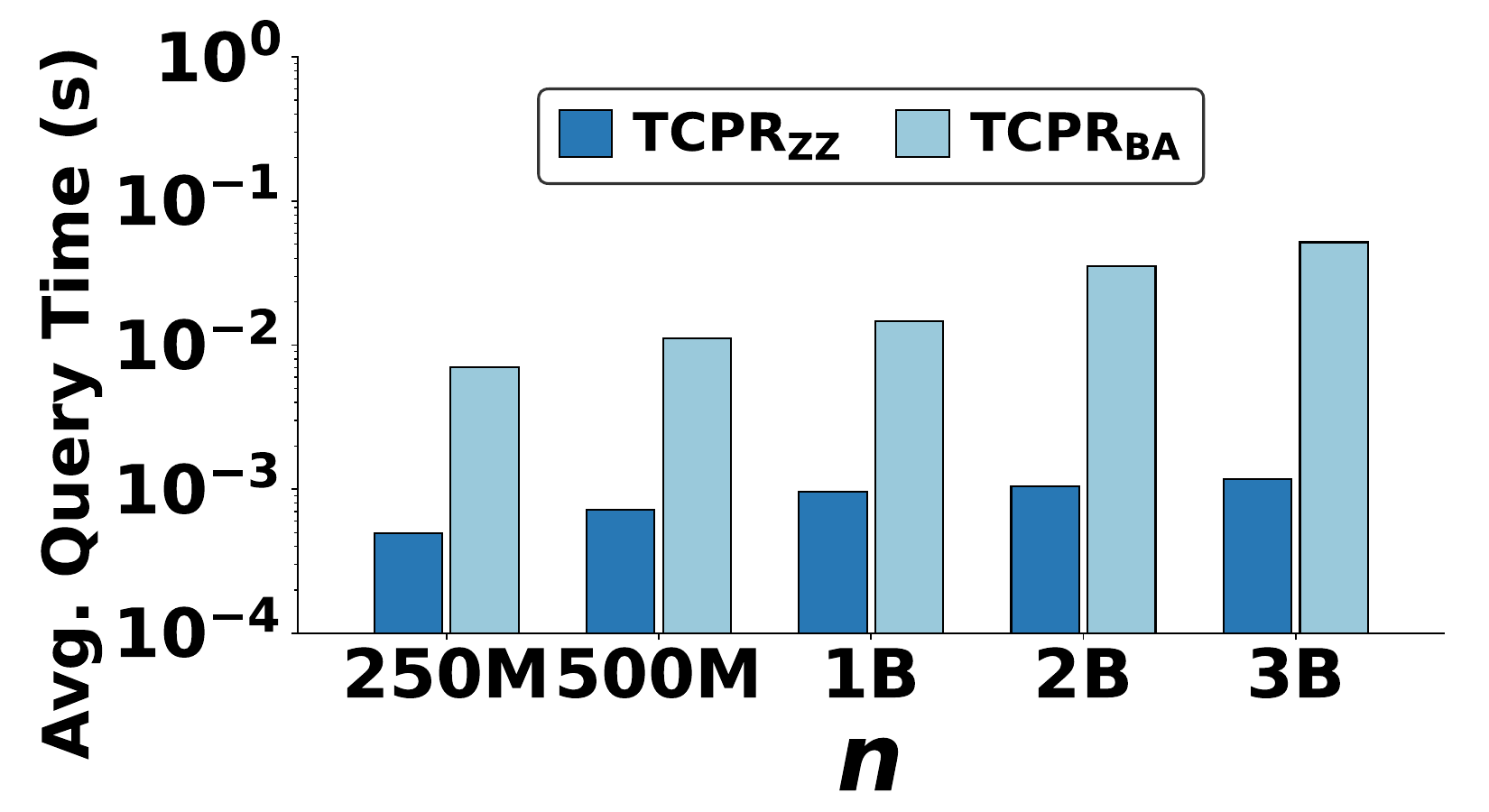}
    \caption{Query time vs. $n$}\label{fig:qt_tcpr_n2}
  \end{subfigure}\hfill
  \begin{subfigure}[t]{0.16\linewidth}
    \includegraphics[width=\linewidth]{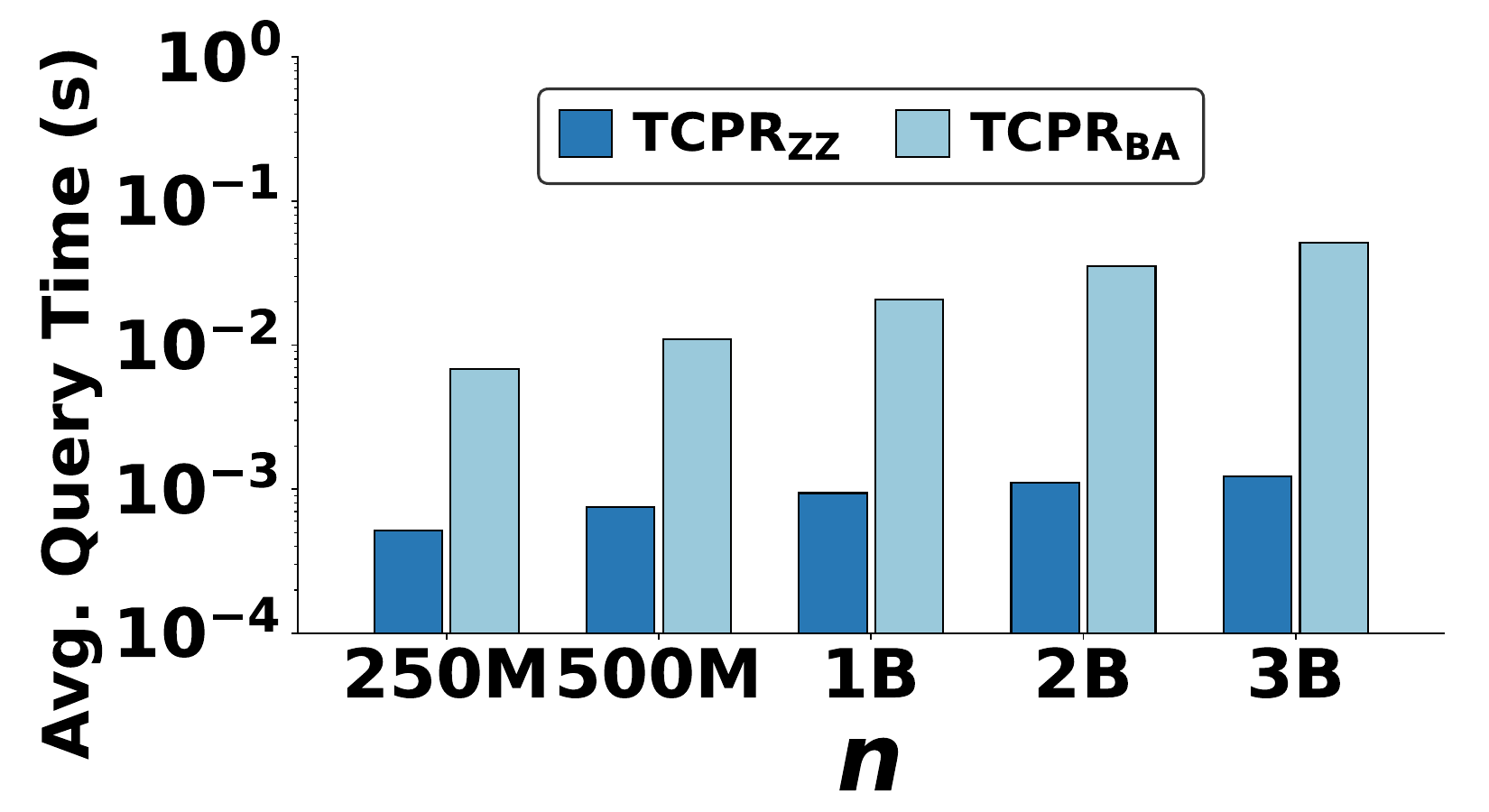}
    \caption{Query time vs. $n$}\label{fig:qt_tcpr_n3}
  \end{subfigure}
  \begin{subfigure}[t]{0.16\linewidth}
    \includegraphics[width=\linewidth]{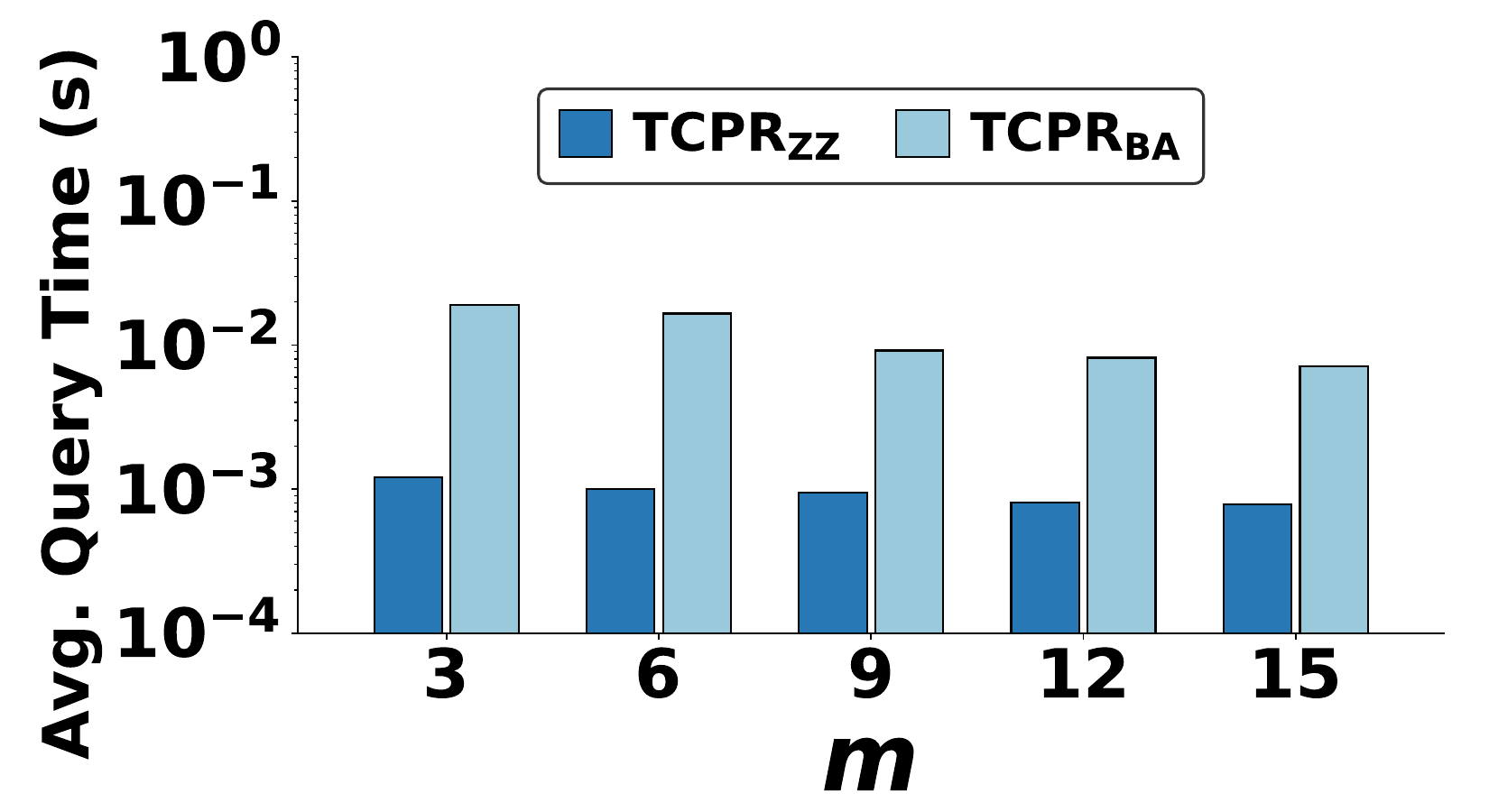}
    \caption{Query time vs. $m$}\label{fig:qt_tcpr_m1}
  \end{subfigure}\hfill
  \begin{subfigure}[t]{0.16\linewidth}
    \includegraphics[width=\linewidth]{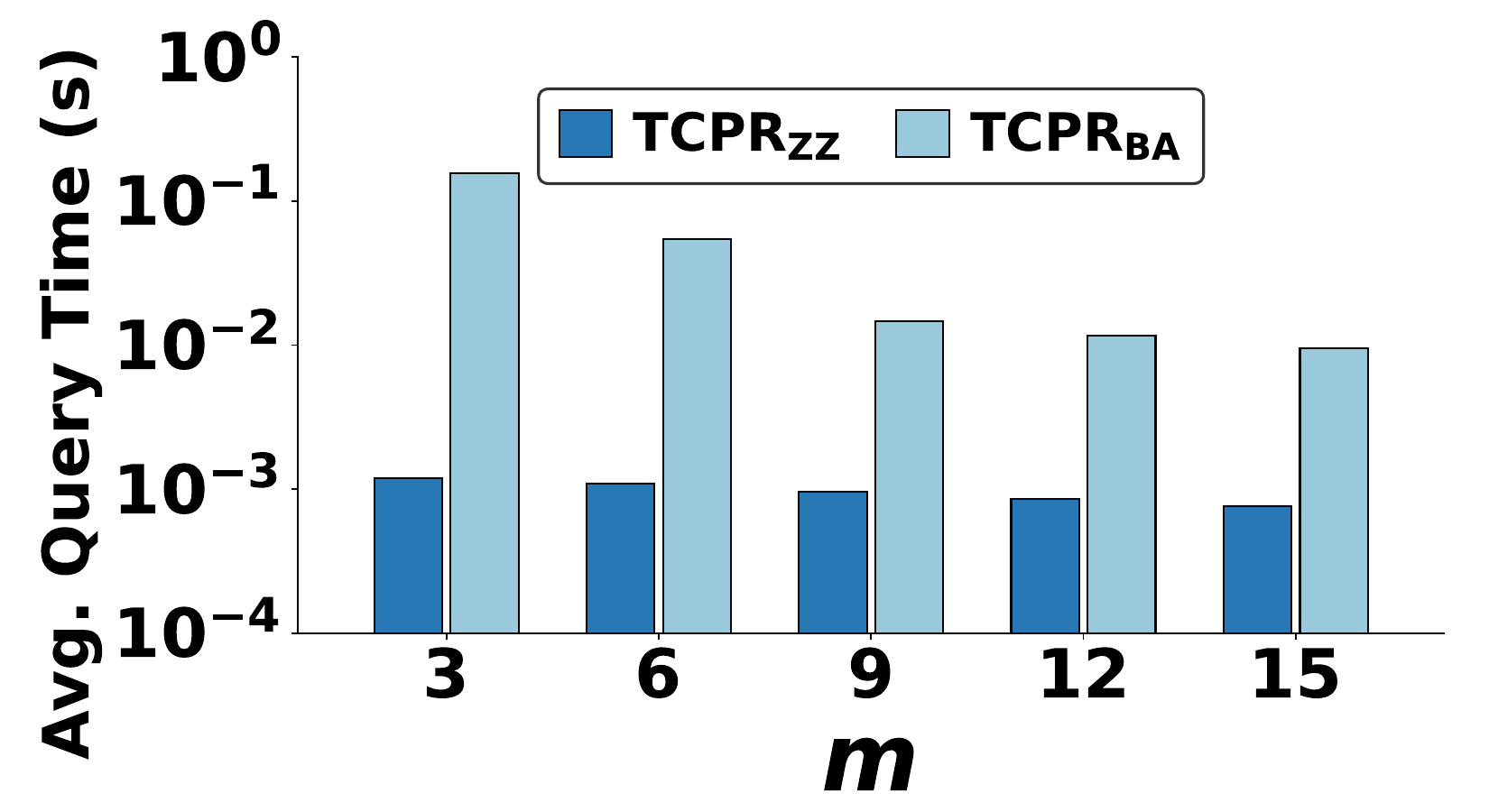}
    \caption{Query time vs. $m$}\label{fig:qt_tcpr_m2}
  \end{subfigure}\hfill
  \begin{subfigure}[t]{0.16\linewidth}
    \includegraphics[width=\linewidth]{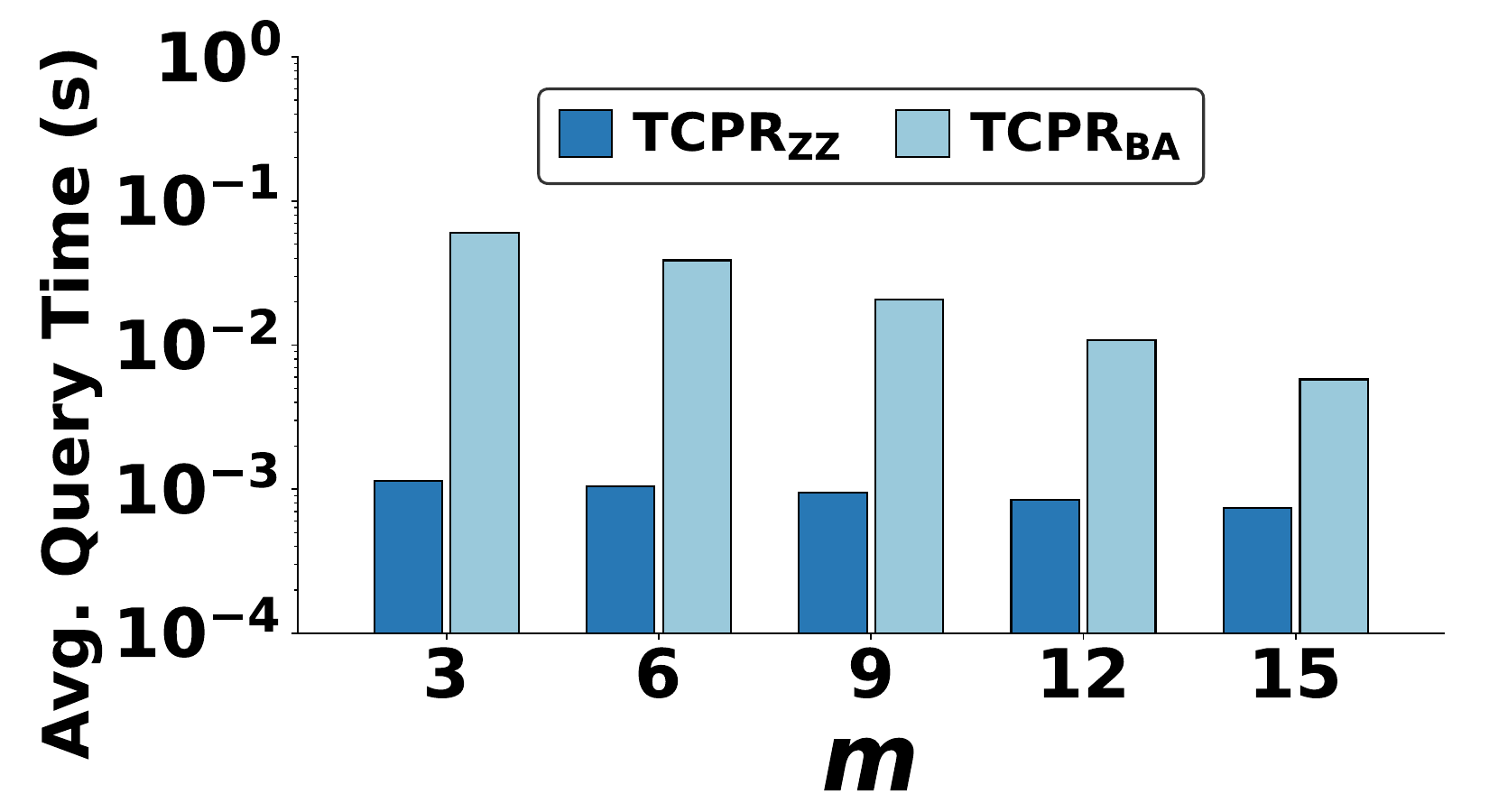}
    \caption{Query time vs. $m$}\label{fig:qt_tcpr_m3}
  \end{subfigure}
  \vspace{\captionspacing}
  \caption{Query time of our \TCPR index vs. the baseline on \chr using the (a) \textsf{TF}, (b) \textsf{SP}, and (c) \textsf{TP} scoring function vs. $n$; or using (d) \textsf{TF}, (e) \textsf{SP}, and (f) \textsf{TP} scoring function vs. $m$.}\label{TCPR_query_nm}

  \centering
  \begin{subfigure}[t]{0.16\linewidth}
    \includegraphics[width=\linewidth]{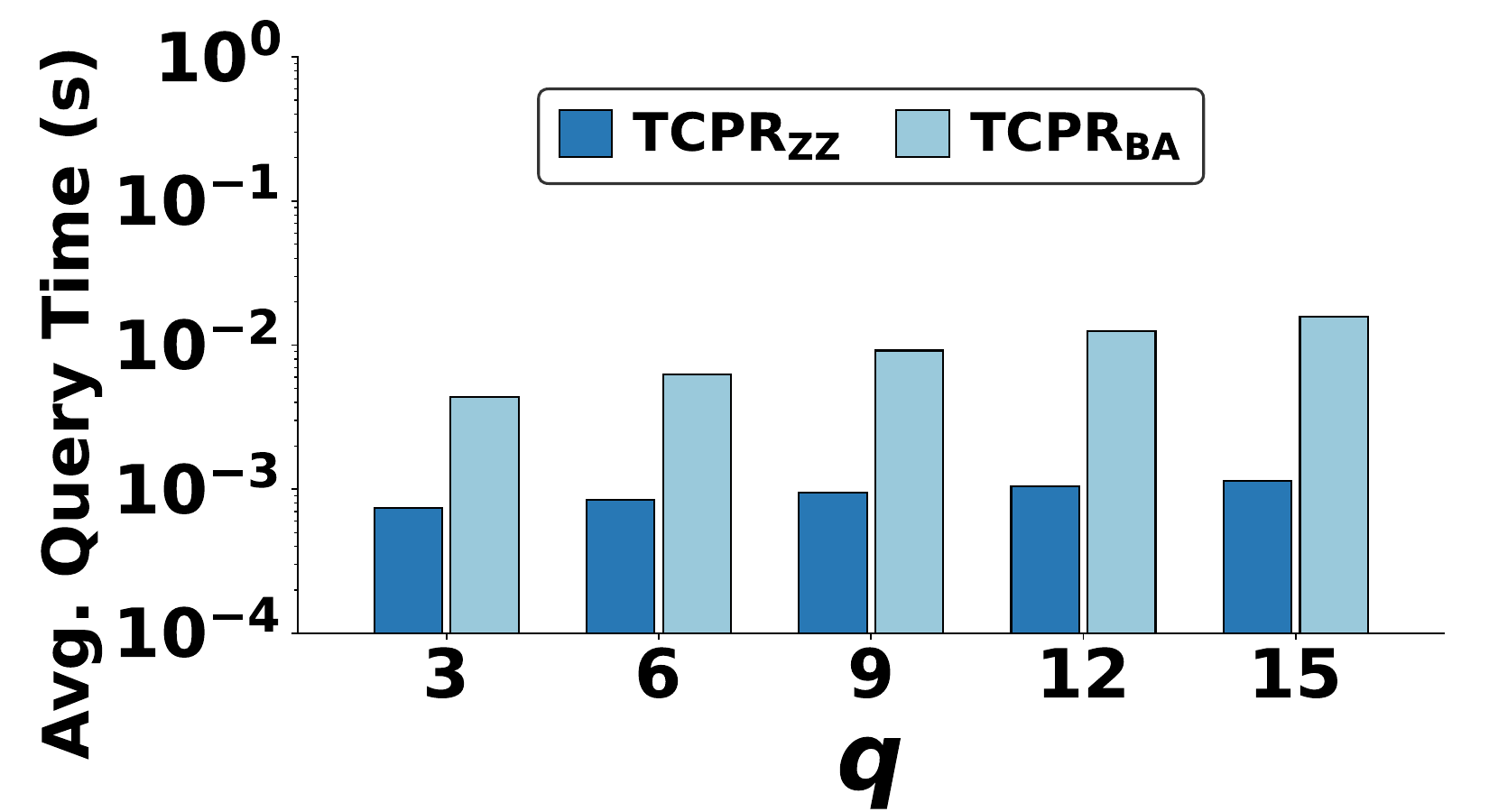}
    \caption{Query time vs. $q$}\label{fig:qt_tcpr_q1}
  \end{subfigure}\hfill
  \begin{subfigure}[t]{0.16\linewidth}
    \includegraphics[width=\linewidth]{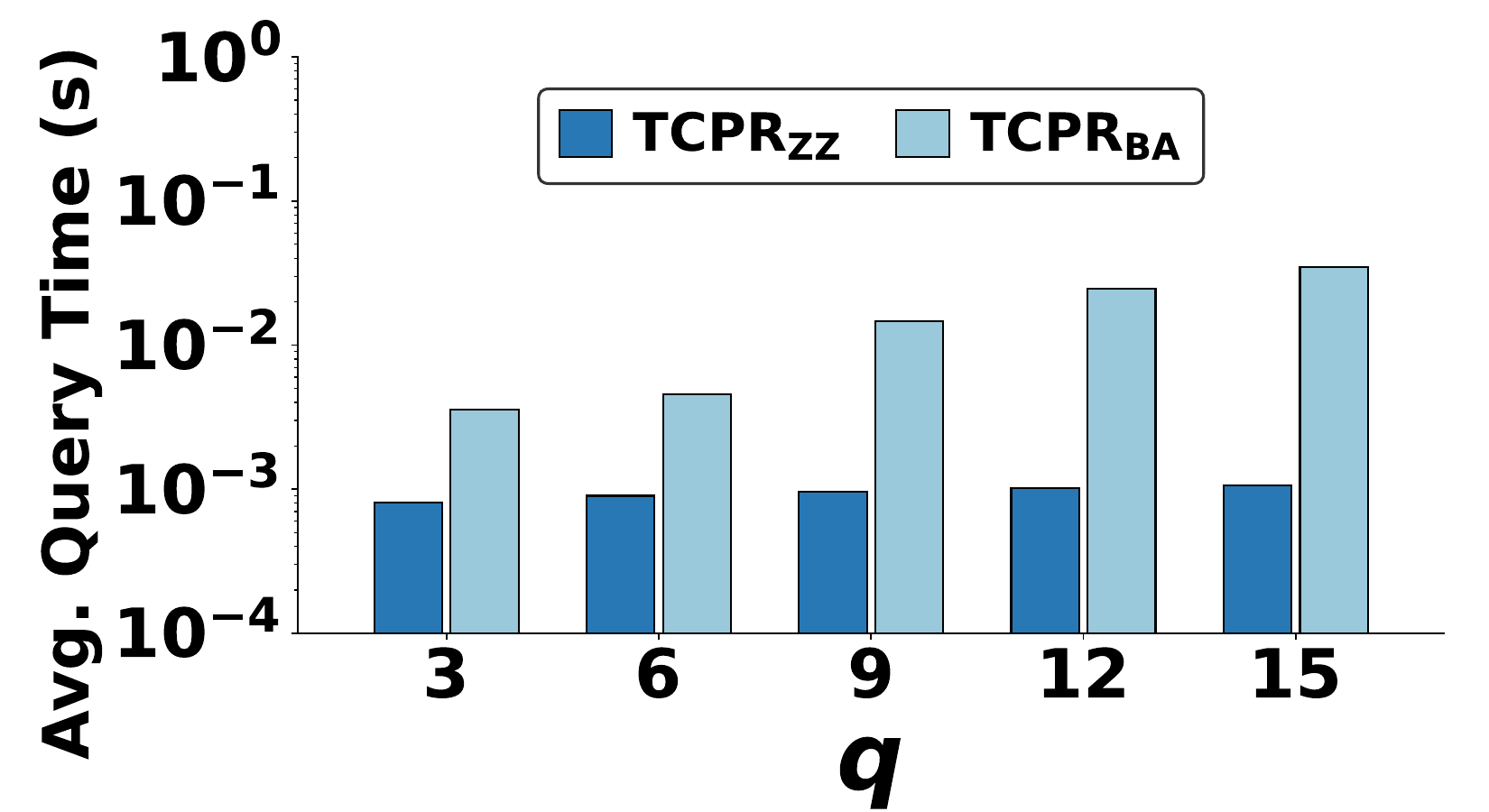}
    \caption{Query time vs. $q$}\label{fig:qt_tcpr_q2}
  \end{subfigure}\hfill
  \begin{subfigure}[t]{0.16\linewidth}
    \includegraphics[width=\linewidth]{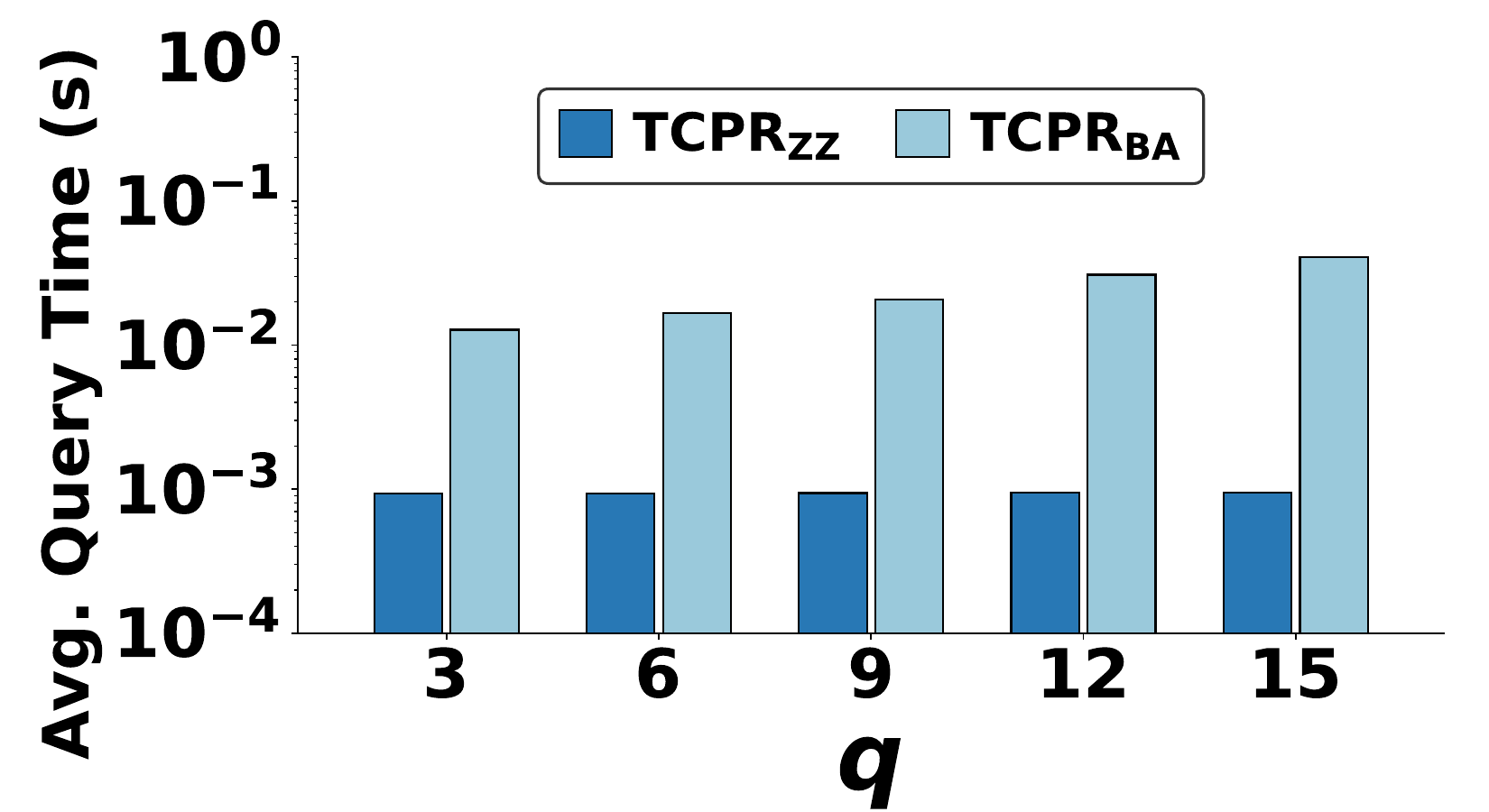}
    \caption{Query time vs. $q$}\label{fig:qt_tcpr_q3}
  \end{subfigure}
  \begin{subfigure}[t]{0.16\linewidth}
    \includegraphics[width=\linewidth]{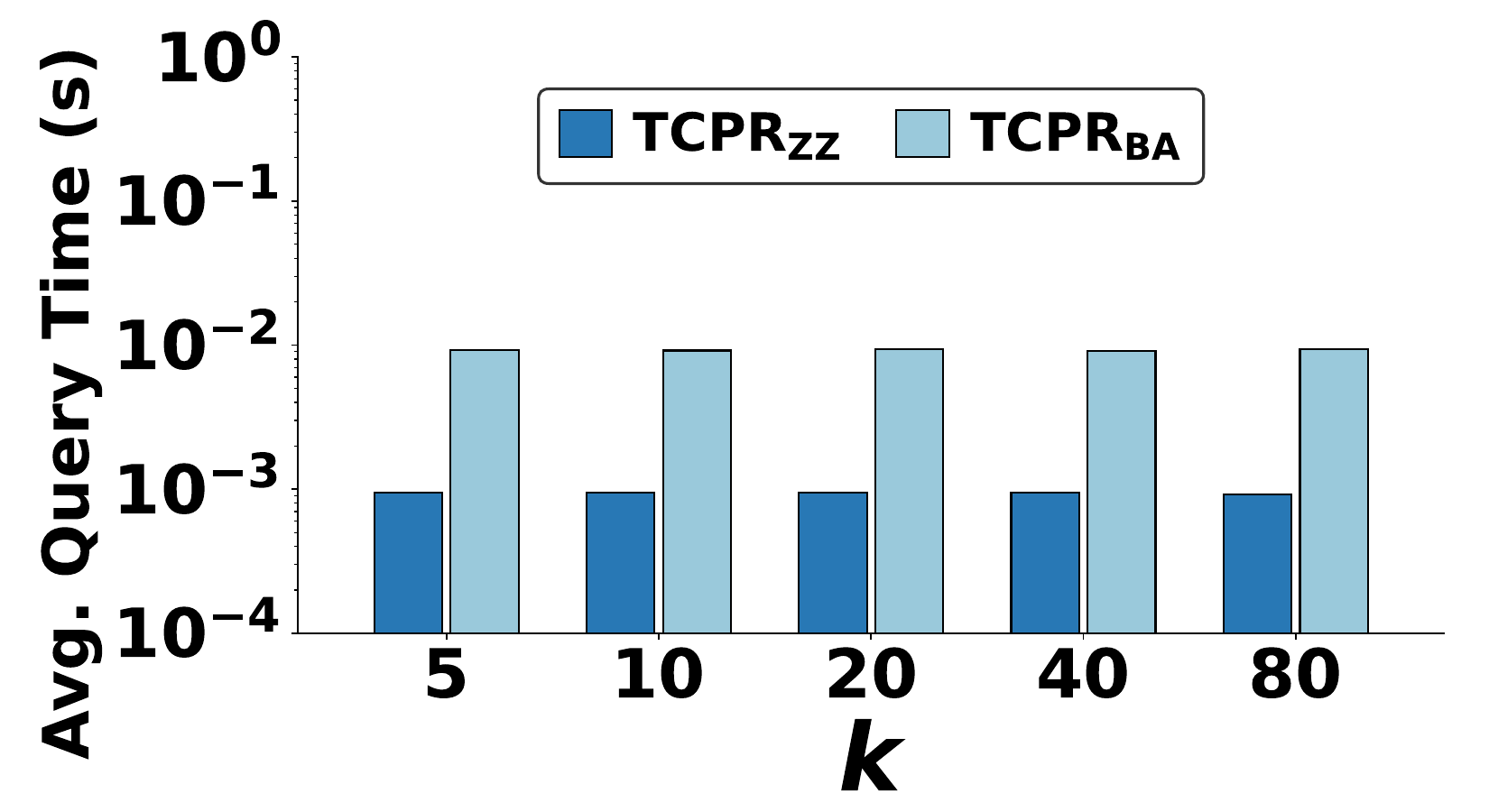}
    \caption{Query time vs. $k$}\label{fig:qt_tcpr_k1}
  \end{subfigure}\hfill
  \begin{subfigure}[t]{0.16\linewidth}
    \includegraphics[width=\linewidth]{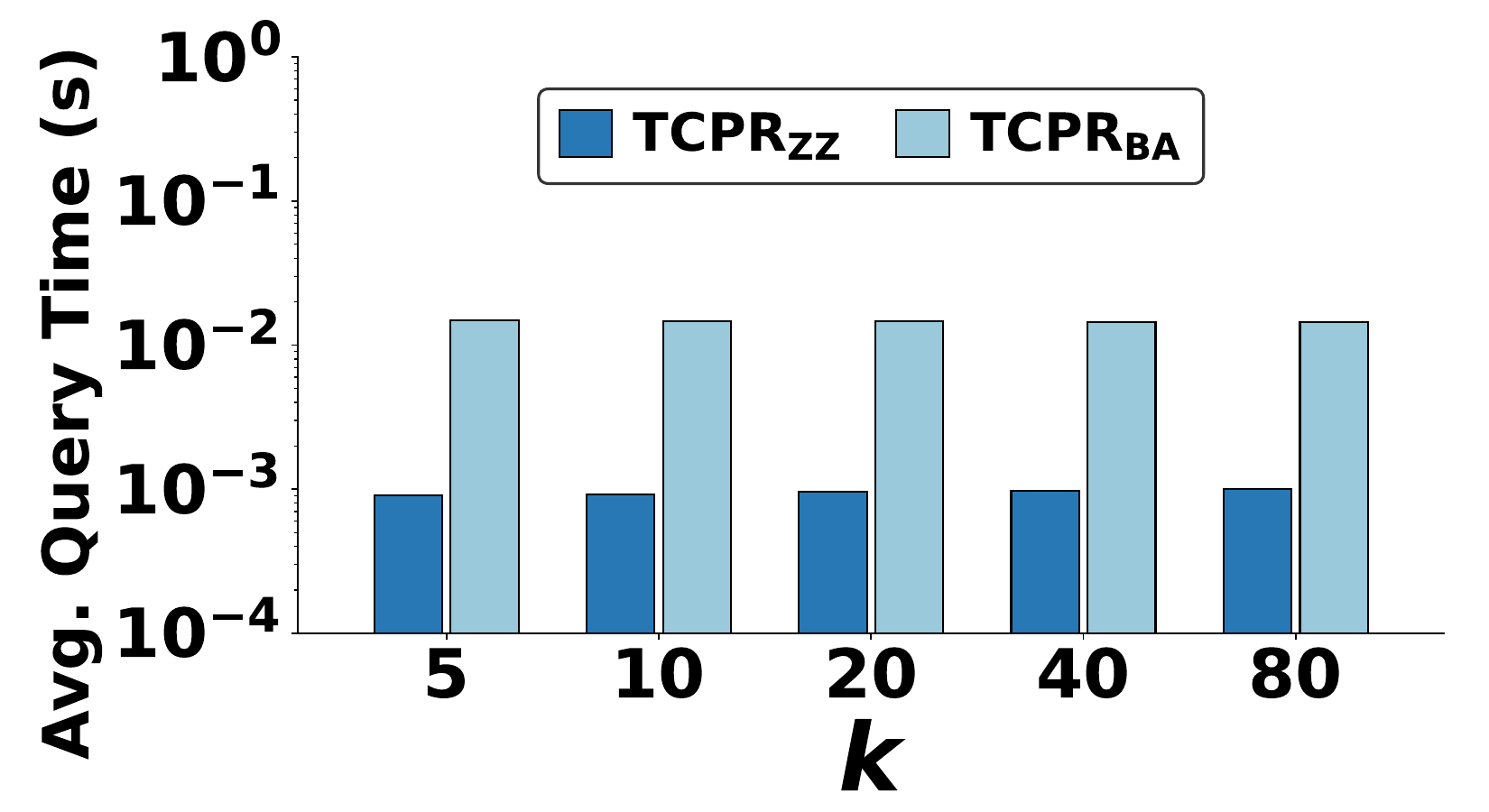}
    \caption{Query time vs. $k$}\label{fig:qt_tcpr_k2}
  \end{subfigure}\hfill
  \begin{subfigure}[t]{0.16\linewidth}
    \includegraphics[width=\linewidth]{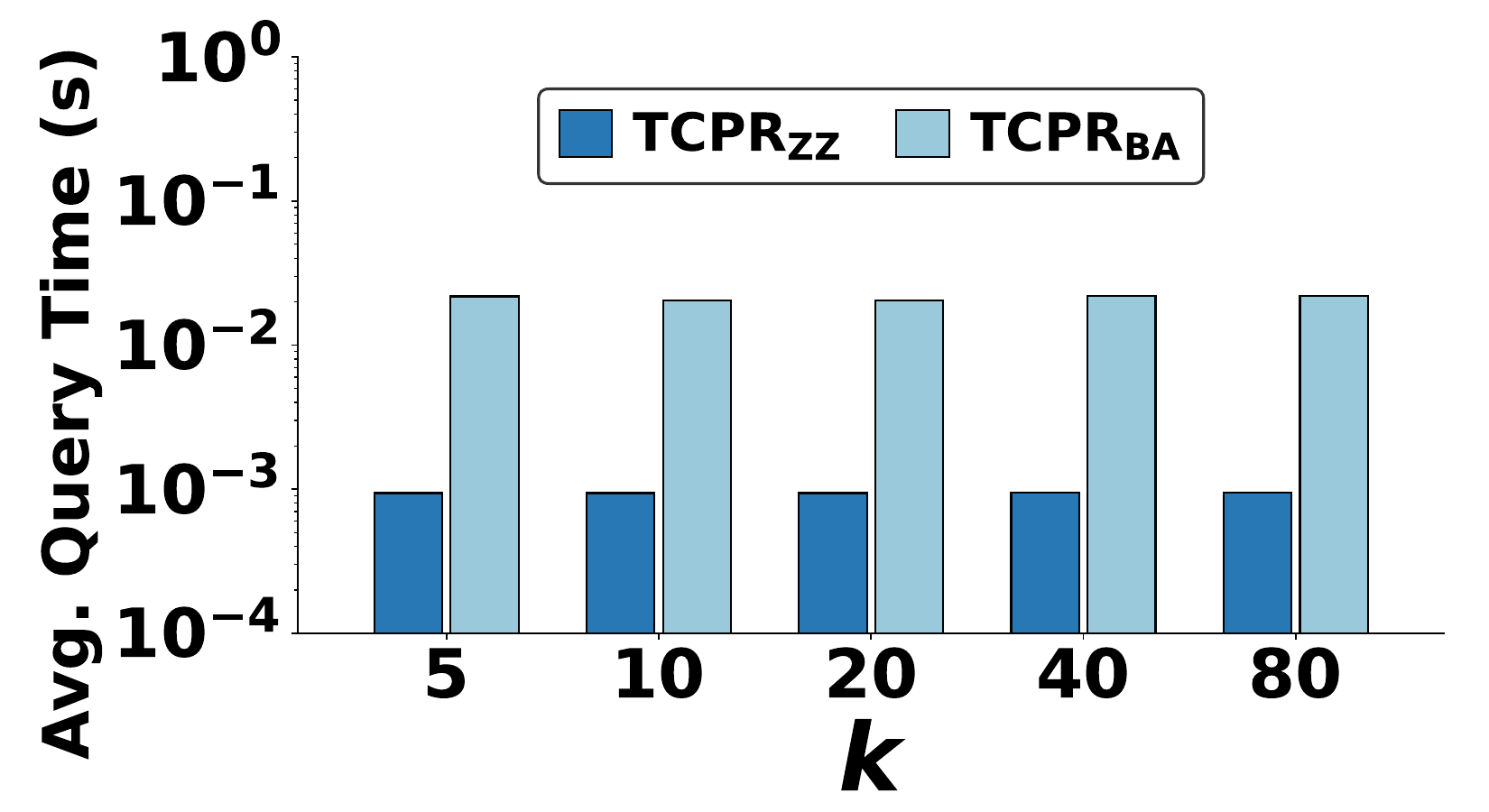}
    \caption{Query time vs. $k$}\label{fig:qt_tcpr_k3}
  \end{subfigure}
  \vspace{\captionspacing}
  \vspace{+2mm}
  \caption{Query time of our \TCPR index vs. the baseline on \chr using the (a) \textsf{TF}, (b) \textsf{SP}, and (c) \textsf{TP} scoring function vs. $q$; or using (d) \textsf{TF}, (e) \textsf{SP}, and (f) \textsf{TP} scoring function vs. $k$.}\label{TCPR_query_qk}

  \centering
  \begin{subfigure}[t]{0.16\linewidth}
    \includegraphics[width=\linewidth]{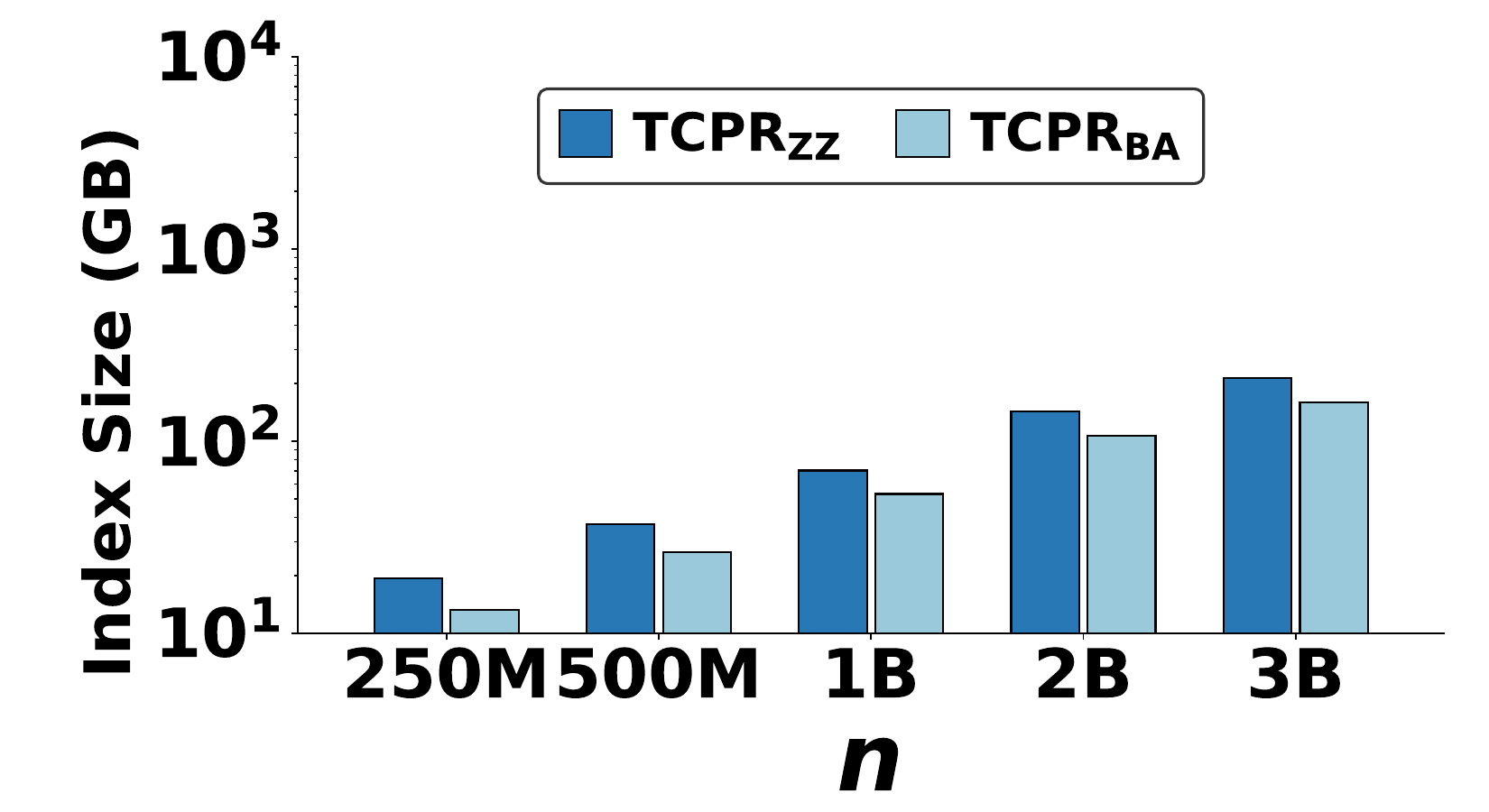}
    \caption{Index size vs. $n$}\label{fig:tcpr_is_n1}
  \end{subfigure}\hfill
  \begin{subfigure}[t]{0.16\linewidth}
    \includegraphics[width=\linewidth]{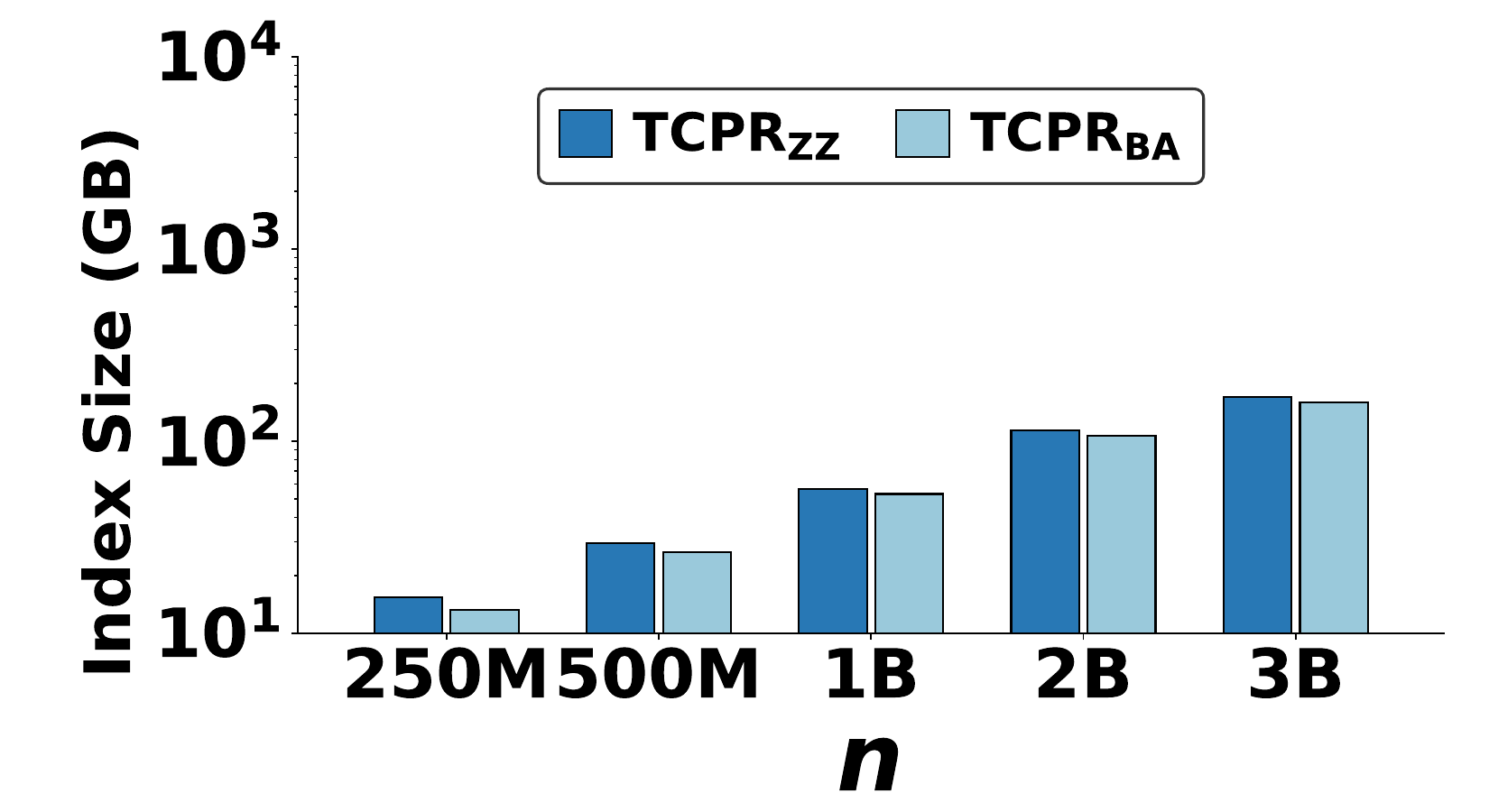}
    \caption{Index size vs. $n$}\label{fig:tcpr_is_n2}
  \end{subfigure}\hfill
  \begin{subfigure}[t]{0.16\linewidth}
    \includegraphics[width=\linewidth]{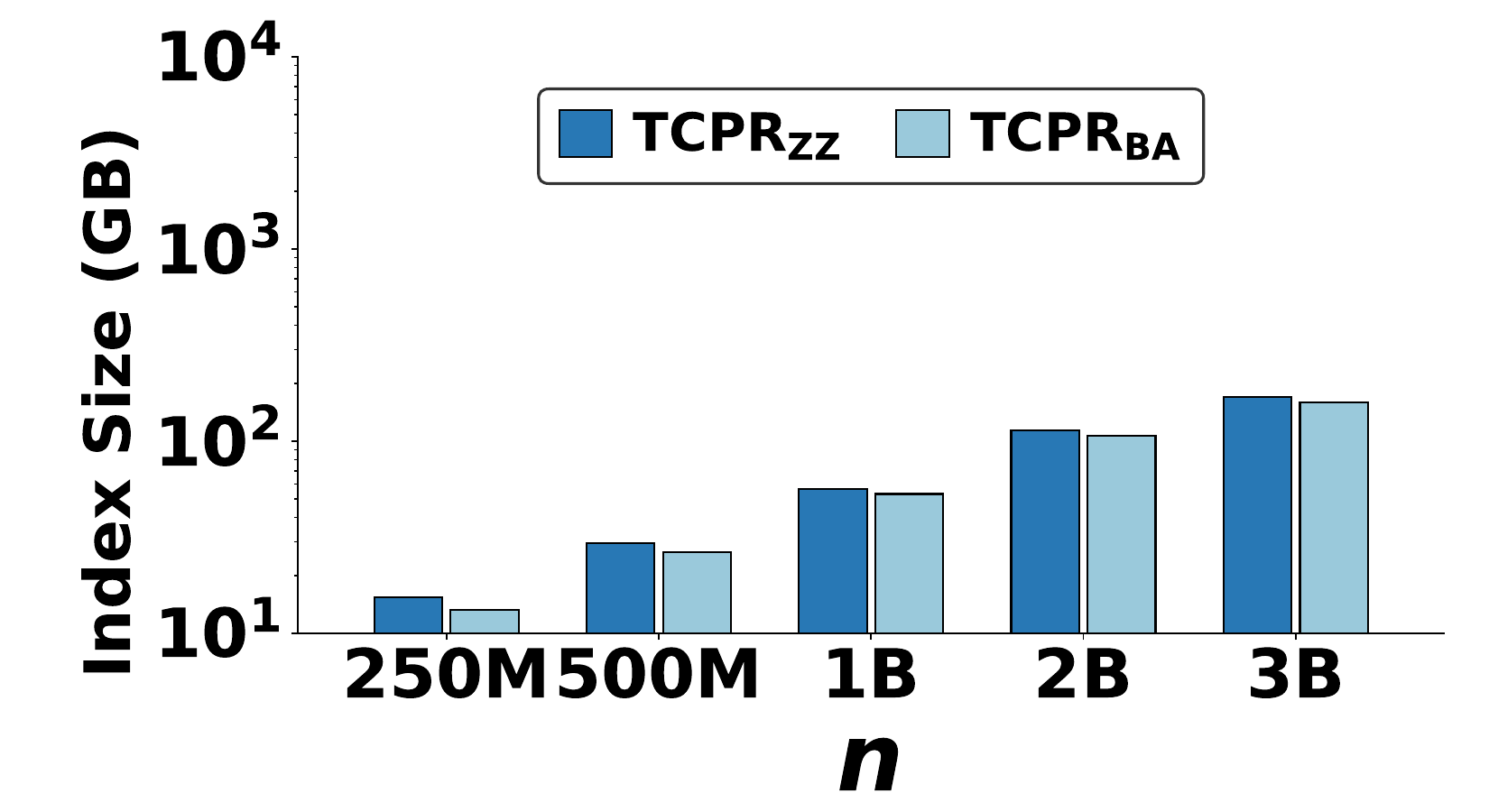}
    \caption{Index size vs. $n$}\label{fig:tcpr_is_n3}
  \end{subfigure}
  \begin{subfigure}[t]{0.16\linewidth}
    \includegraphics[width=\linewidth]{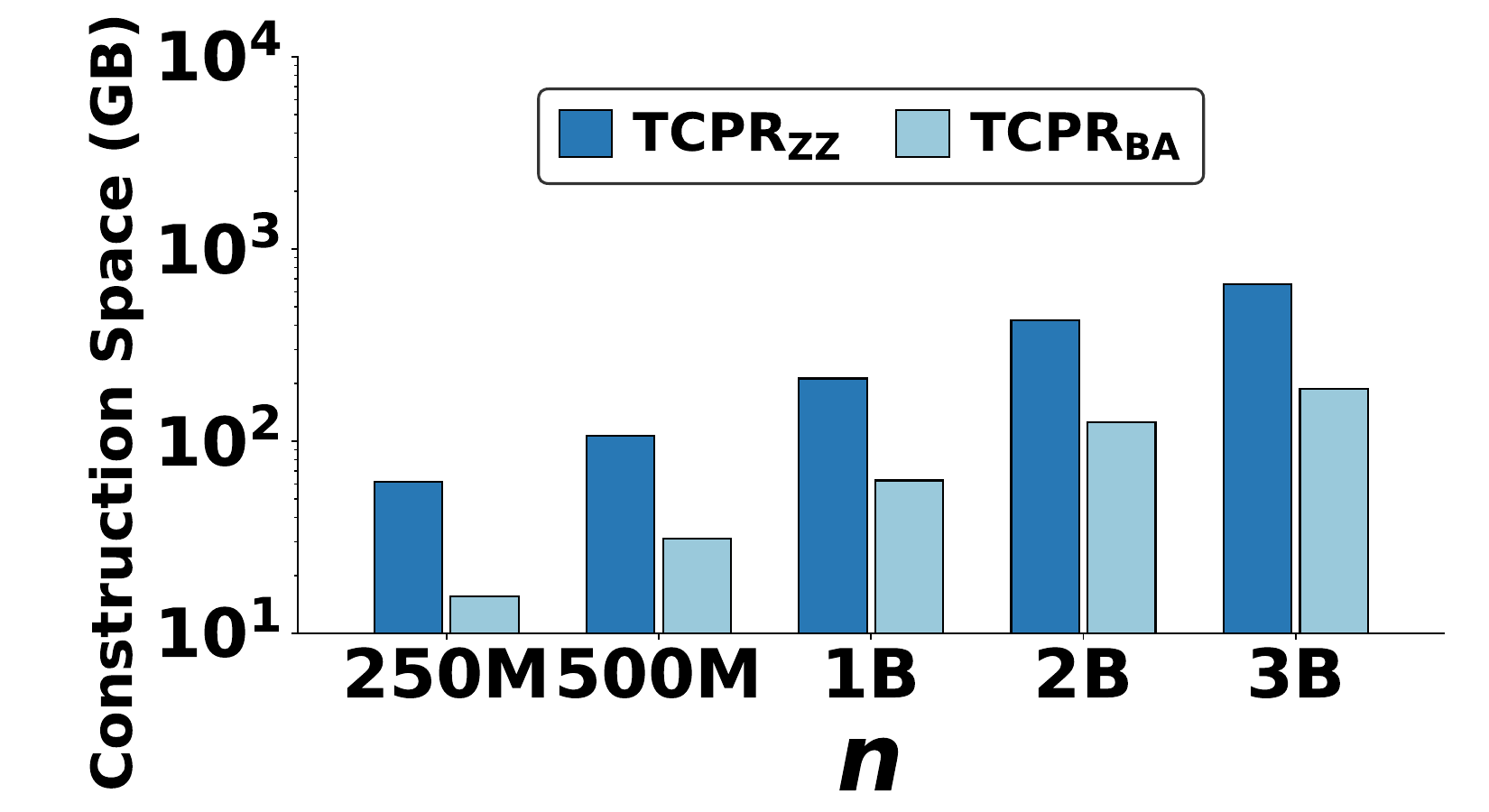}
    \caption{Constr. space vs. $n$}\label{fig:tcpr_cs_n1}
  \end{subfigure}\hfill
  \begin{subfigure}[t]{0.16\linewidth}
    \includegraphics[width=\linewidth]{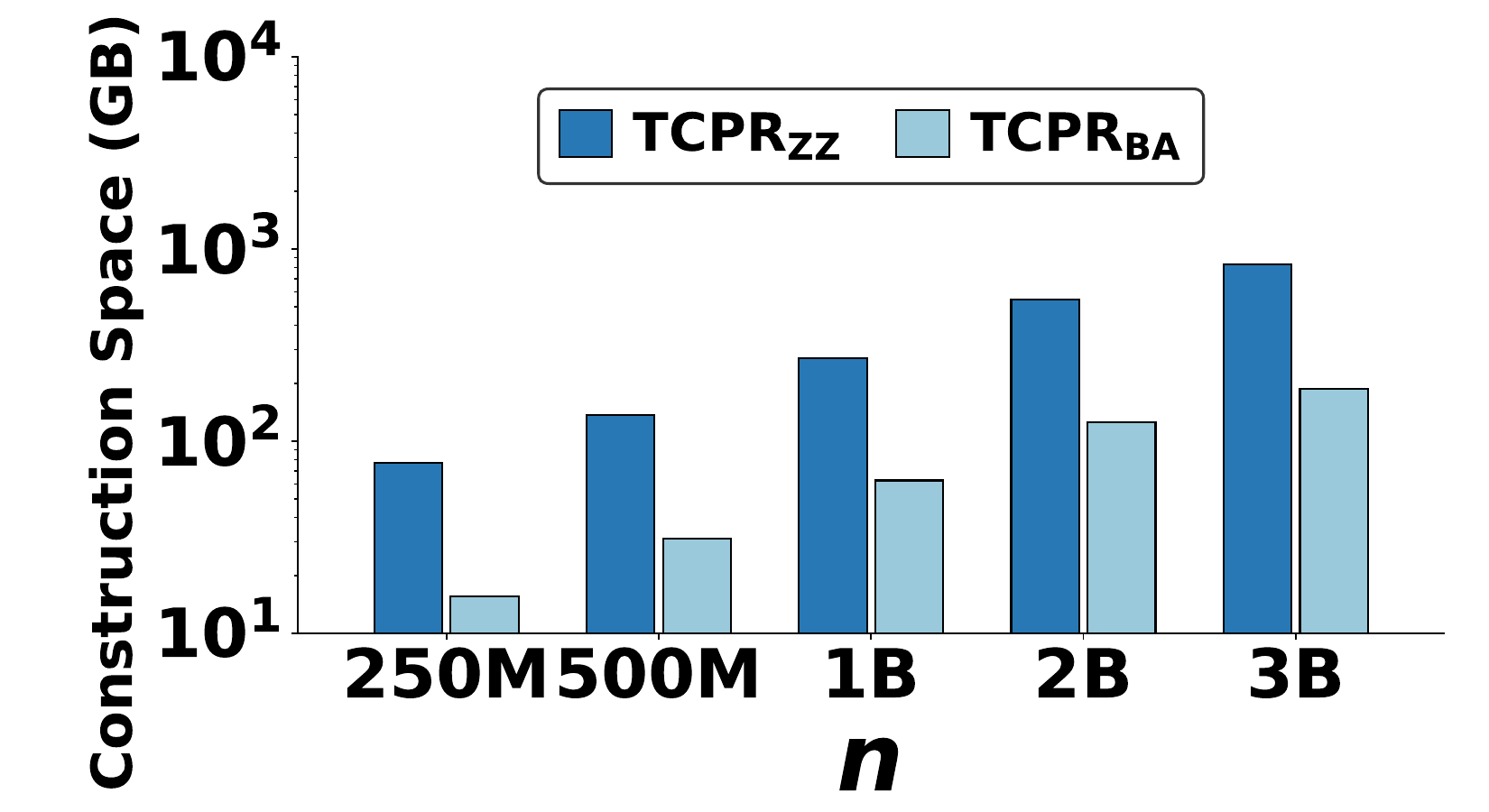}
    \caption{Constr. space vs. $n$}\label{fig:tcpr_cs_n2}
  \end{subfigure}\hfill
  \begin{subfigure}[t]{0.16\linewidth}
    \includegraphics[width=\linewidth]{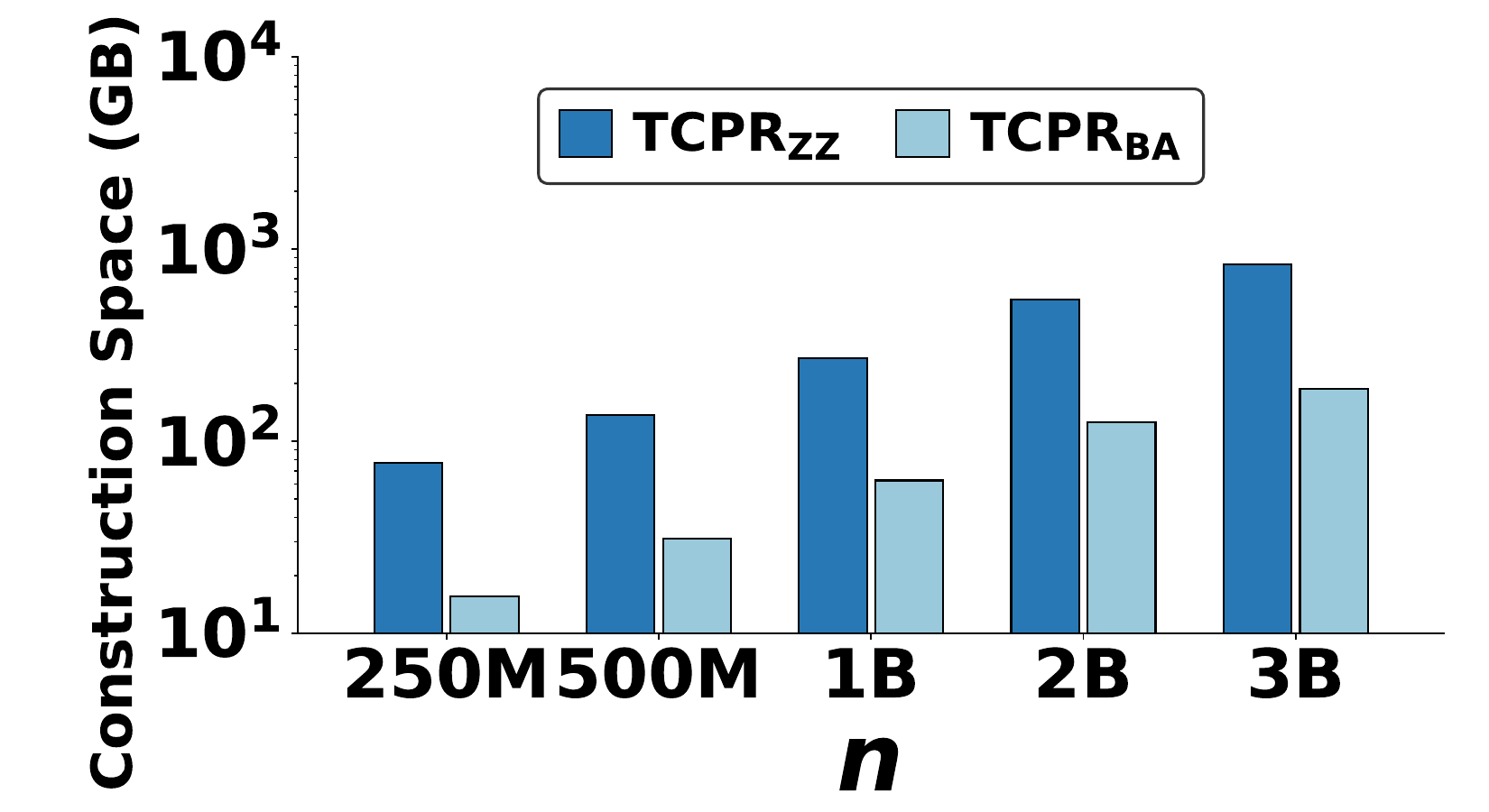}
    \caption{Constr. space vs. $n$}\label{fig:tcpr_cs_n3}
  \end{subfigure}
   \vspace{\captionspacing}
   \vspace{+2mm}
  \caption{Index size of our \TCPR index vs. the baseline on \chr using the (a) \textsf{TF}, (b) \textsf{SP}, and (c) \textsf{TP} scoring function vs. $n$; construction space of our \TCPR index vs. the baseline on \chr using the (d) \textsf{TF}, (e) \textsf{SP}, and (f) \textsf{TP} scoring function vs. $n$.
  }\label{TCPR_contru}
 \vspace{\captionspacing}
\end{figure}

\begin{figure}[t]
  \centering
  \begin{subfigure}[t]{0.20\linewidth}
    \includegraphics[width=\linewidth]{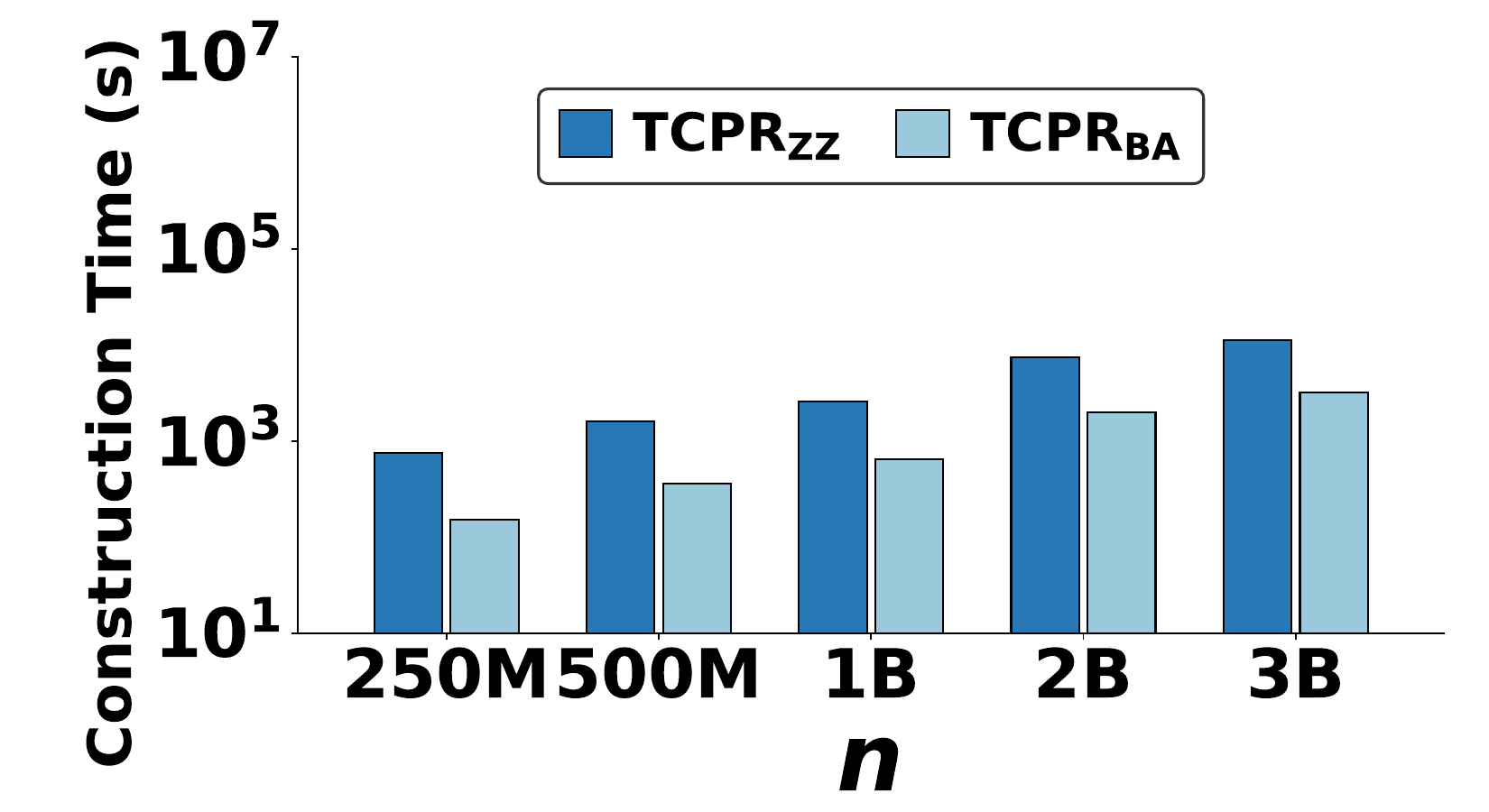}
    \caption{Constr. time vs. $n$}\label{fig:tcpr_build1}
  \end{subfigure}
  \begin{subfigure}[t]{0.20\linewidth}
    \includegraphics[width=\linewidth]{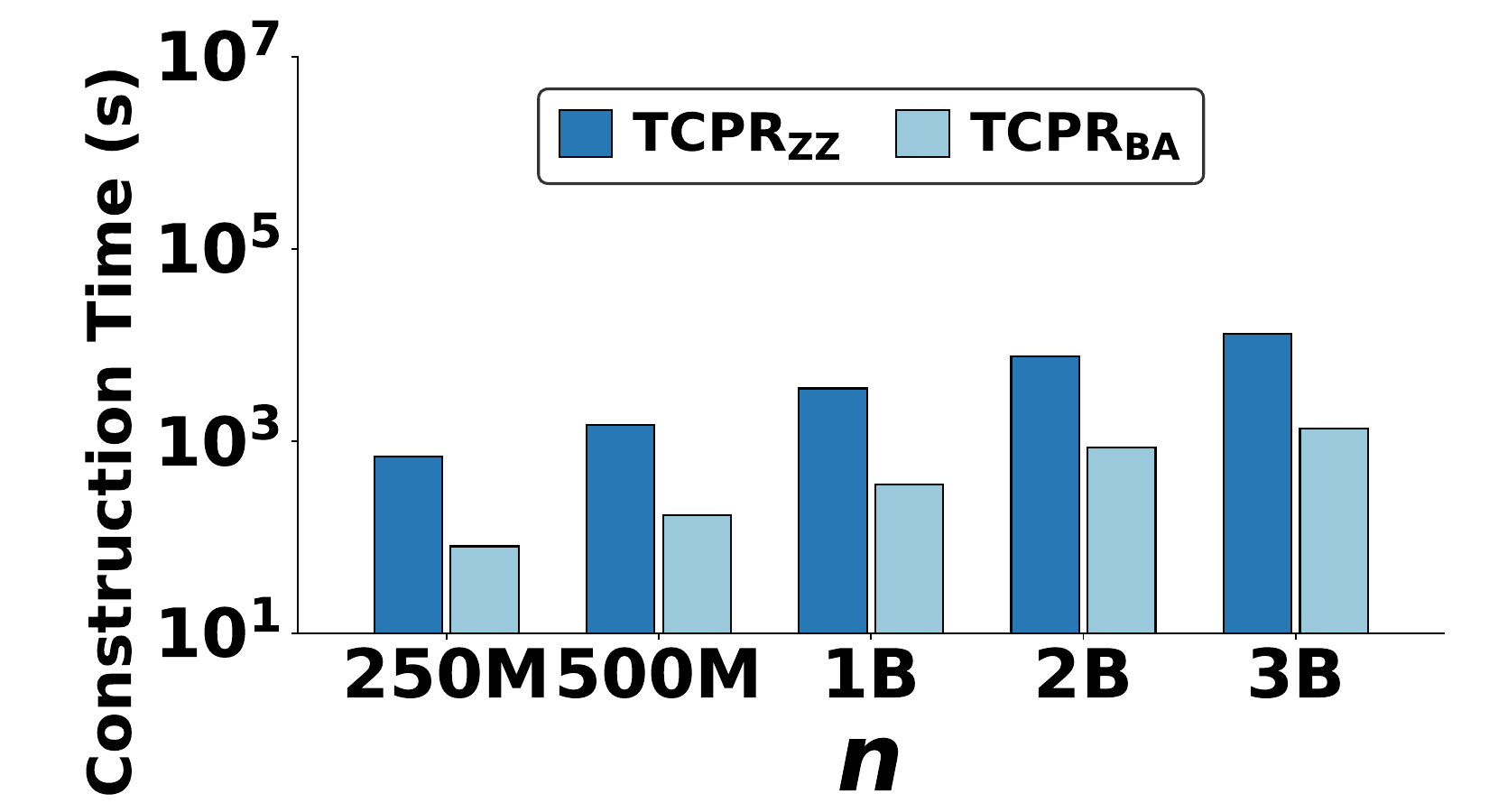}
    \caption{Constr. time vs. $n$}\label{fig:tcpr_build2}
  \end{subfigure}
  \begin{subfigure}[t]{0.20\linewidth}
    \includegraphics[width=\linewidth]{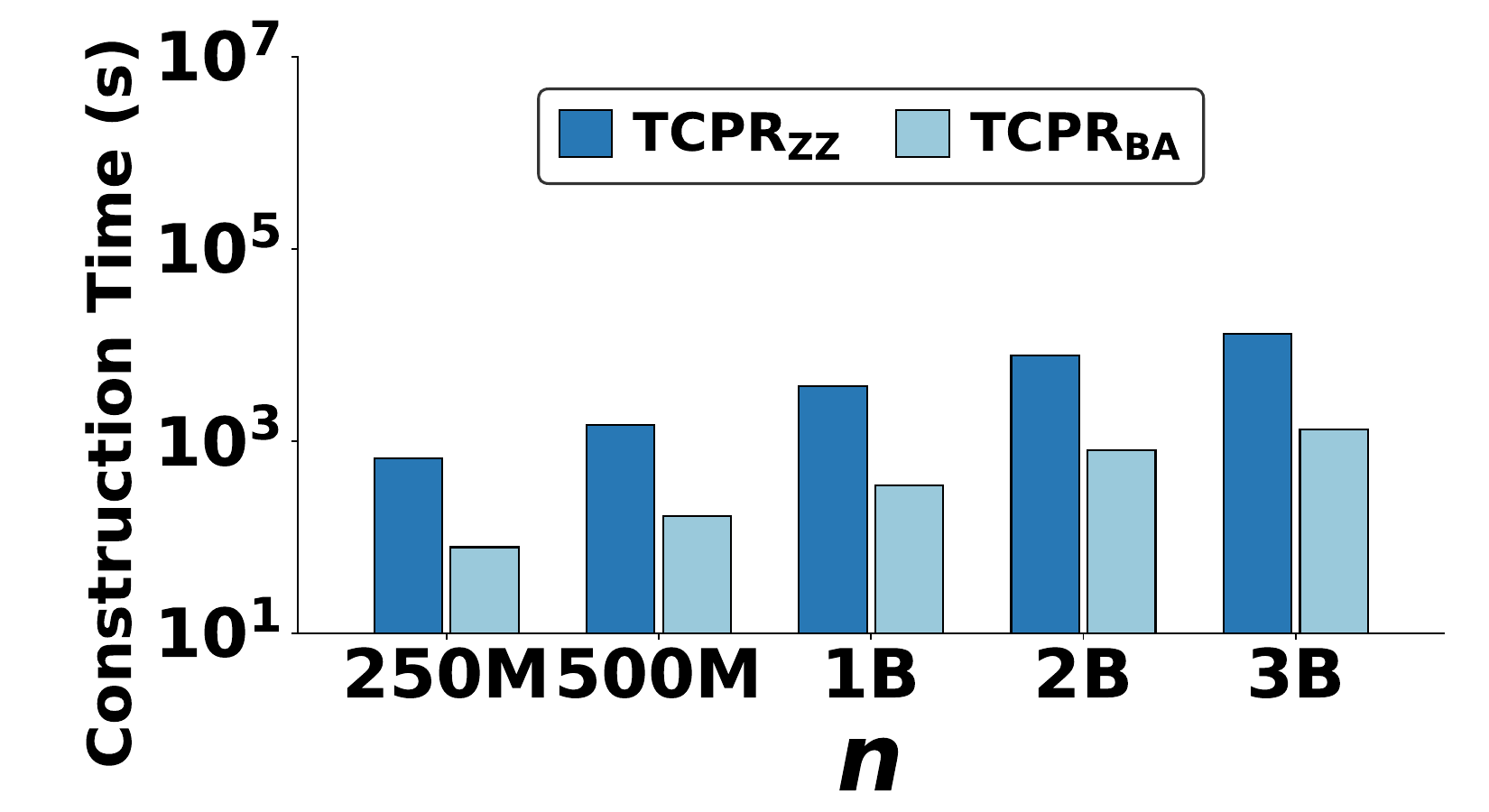}
    \caption{Constr. time vs. $n$}\label{fig:tcpr_build3}
  \end{subfigure}
    \vspace{\captionspacing}
  \caption{Construction time of our \TCPR index vs. the baseline on \chr using (a) \textsf{TF}, (b) \textsf{SP}, and (c)  \textsf{TP} vs. $n$.
  }\label{exp:TCPR_constr_time}
 \vspace{\captionspacing}
\end{figure}

\subsection{\CC} \label{experiments:CC}

\paragraph{Query Time.} \Cref{cc_qt_n} shows that our \CCZT index achieves query times that are \emph{five orders of magnitude faster} than \CCBA and its query time is \emph{unaffected by} $n$, as expected by its complexity.  \CCBA \emph{did not terminate within $24$ hours} for $n\geq 10^6$, as the term $|\loci(v)|$ in its query time bound is very large, which makes it impractical. For example, $|\loci(v)|$ is  $2\cdot 10^7$ on average over all query patterns 
when $n=10^6$. Thus, in the remainder of the section, we used $n=10^5$. 

\Cref{cc_qt_m} shows that the query time for \CCZT is \emph{at least five orders of magnitude} faster than that of \CCBA and its query time grows more slowly than linearly in $m$. \CCBA gets faster as $m$ increases, because the term 
$|\loci(v)|$ decreases. 

\paragraph{Index Size.} \Cref{cc_is} shows that \CCZT \emph{is $46\%$ smaller} on average over all $n$ values and \emph{$46\%$ smaller} when $n= 10^9$, since \ZZT occupies smaller space than the suffix trees used in \CCBA.  The size of both \CCZT and \CCBA scales linearly with $n$, as expected by their complexities.  

\paragraph{Construction Space.} In \Cref{cc_cs}, we report analogous results to those of \Cref{cc_is}, which are again due to the use of \ZZT vs. the suffix trees. For example, \CCZT needs \emph{$8\%$ less space to be constructed} than \CCBA for $n=10^9$. 

\paragraph{Construction Time.} \Cref{cc_ct} shows that \CCBA can be constructed faster than \CCZT, by $19$ times on average over all $n$ values, as expected by their complexities. The bottleneck of \CCZT is, as expected, the construction of \ZZT. Recall, however, that the query time of \CCBA is prohibitive, and that, unlike query time, construction time is a one-off cost.  

\subsection{\TCPR}\label{exp:tcpr}

\newcommand{\rowsep}{4pt}
\newcommand{\colsep}{\hfill}
\newcommand{\rowlabel}[1]{%
\makebox[0pt][r]{\rotatebox{90}{\small\textsf{#1}}\hspace{8pt}}}

\paragraph{Query Time.} \Crefrange{fig:qt_tcpr_n1}{fig:qt_tcpr_n3} show that our \TCPRZT index achieves query times that are \emph{at least $6$ and up to $44$ times faster} than \TCPRBA across all $n$ values and scoring functions. Since $|\mathcal{C}_T(P,q)|$ increases with $n$ (e.g., from $652$ to $10,967$ on average over all patterns), \TCPRBA becomes substantially slower (e.g., $6$ times slower as $n$ increases from $2.5\cdot 10^8$ to $3\cdot 10^9$) and scales worse than \TCPRZT. \TCPRZT performs even better for the \textsf{SP} and \textsf{TP} scoring functions, as \TCPRBA evaluates them at query time, and these functions are more expensive to evaluate than \textsf{TF}; see \Cref{sec:tcpr}. 

\Crefrange{fig:qt_tcpr_m1}{fig:qt_tcpr_m3} show that the query time of \TCPRZT is \emph{at least $4$ and up to $60$ times faster} than that of \TCPRBA as $m$ varies. The query time for both indexes  scales linearly with $m$, as expected by their complexities (here $k$ is fixed). The query time of \TCPRBA is larger for \textsf{SP} and \textsf{TP}, as their computation time is larger compared to that of \textsf{TF}. 

\Crefrange{fig:qt_tcpr_q1}{fig:qt_tcpr_q3} show that for varying $q$  the query time of \TCPRZT is \emph{at least $3$ and up to $72$ times faster} than that of \TCPRBA. The query time of both indexes increases with $q$, and that of \TCPRZT scales better; for \TCPRZT the reason is that $B$ increases (recall that $B=m+2q$ in our experiments), while for \TCPRBA the reason is that $|\mathcal{C}_T(P,q)|$ increases. 

\Crefrange{fig:qt_tcpr_k1}{fig:qt_tcpr_k3} show that for varying $k$ the query time of \TCPRZT is \emph{at least $5$ and up to $53$ times faster} than \TCPRBA across all scoring functions. The query time of \TCPRBA is not affected substantially by $k$, as the only part that changes with $k$ is the linear selection of pairs, which is very fast. 

\paragraph{Index Size.} \Crefrange{fig:tcpr_is_n1}{fig:tcpr_is_n3} show that the index size of  
both \TCPRZT and \TCPRBA scales linearly with $n$, as expected by the complexities, and that the size of the former is $16\%$ larger. 

\paragraph{Construction Space.} The results in \Crefrange{fig:tcpr_cs_n1}{fig:tcpr_cs_n3} are analogous to those for index size in \Crefrange{fig:tcpr_is_n1}{fig:tcpr_is_n3}. The gap between the two indexes is larger, as \ZZT needs more space to be constructed. 

\paragraph{Construction Time.} \Cref{exp:TCPR_constr_time} shows that \TCPRBA can be constructed faster, but the gap from \TCPRZT becomes much smaller as $n$ increases. Most of the construction time for \TCPRZT is spent in the \ZZT construction. However, \ZZT is also the key to the much faster query times of our index.

\begin{figure}[!ht]
  \centering
    \begin{subfigure}[t]{0.20\linewidth}
    \includegraphics[width=\linewidth]{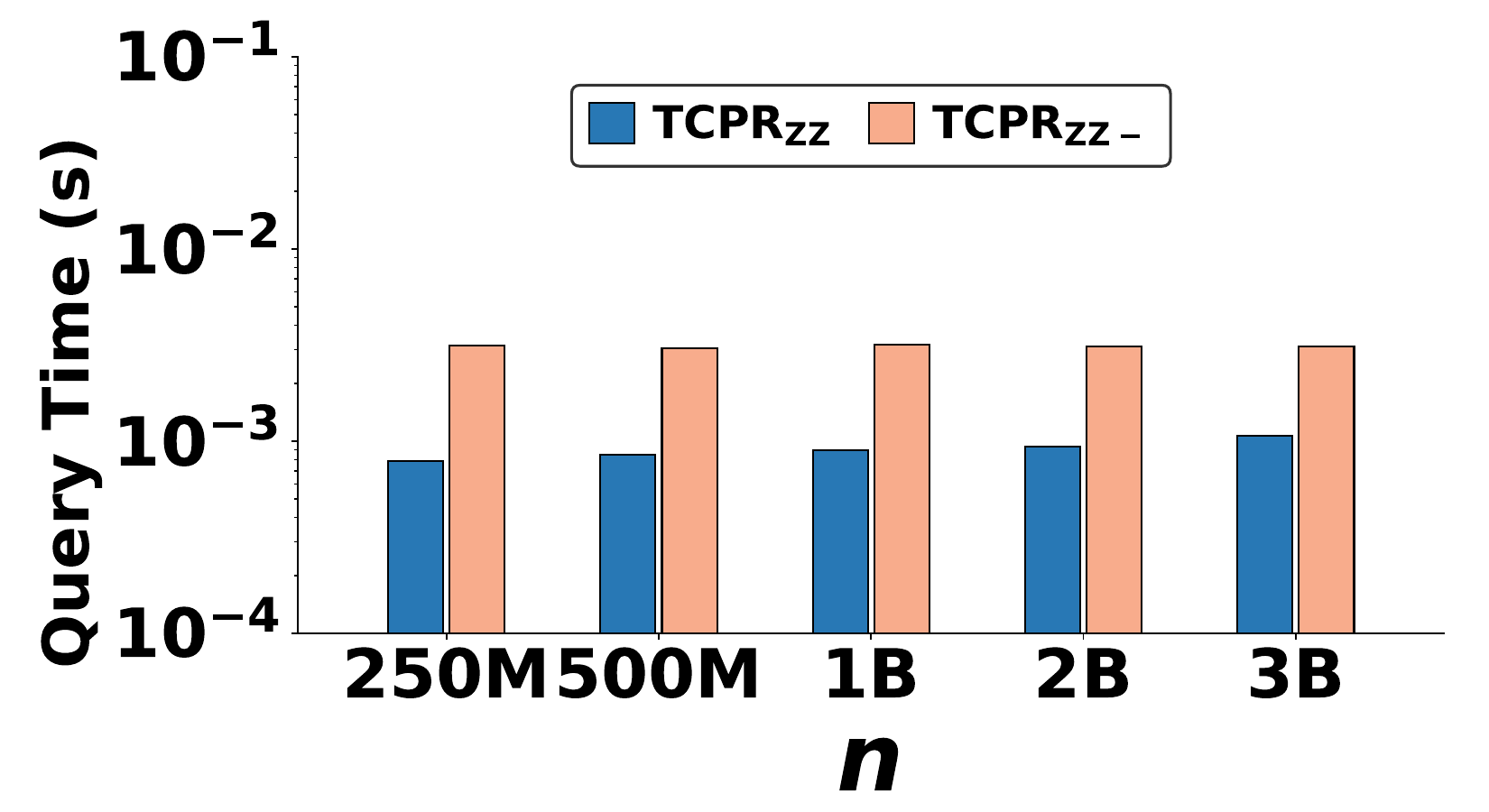}
    \caption{Query time vs. $n$}
     \label{fig:ablation_TCPR-TF_CHR_n_qt}
  \end{subfigure}
    \begin{subfigure}[t]{0.20\linewidth}
    \includegraphics[width=\linewidth]{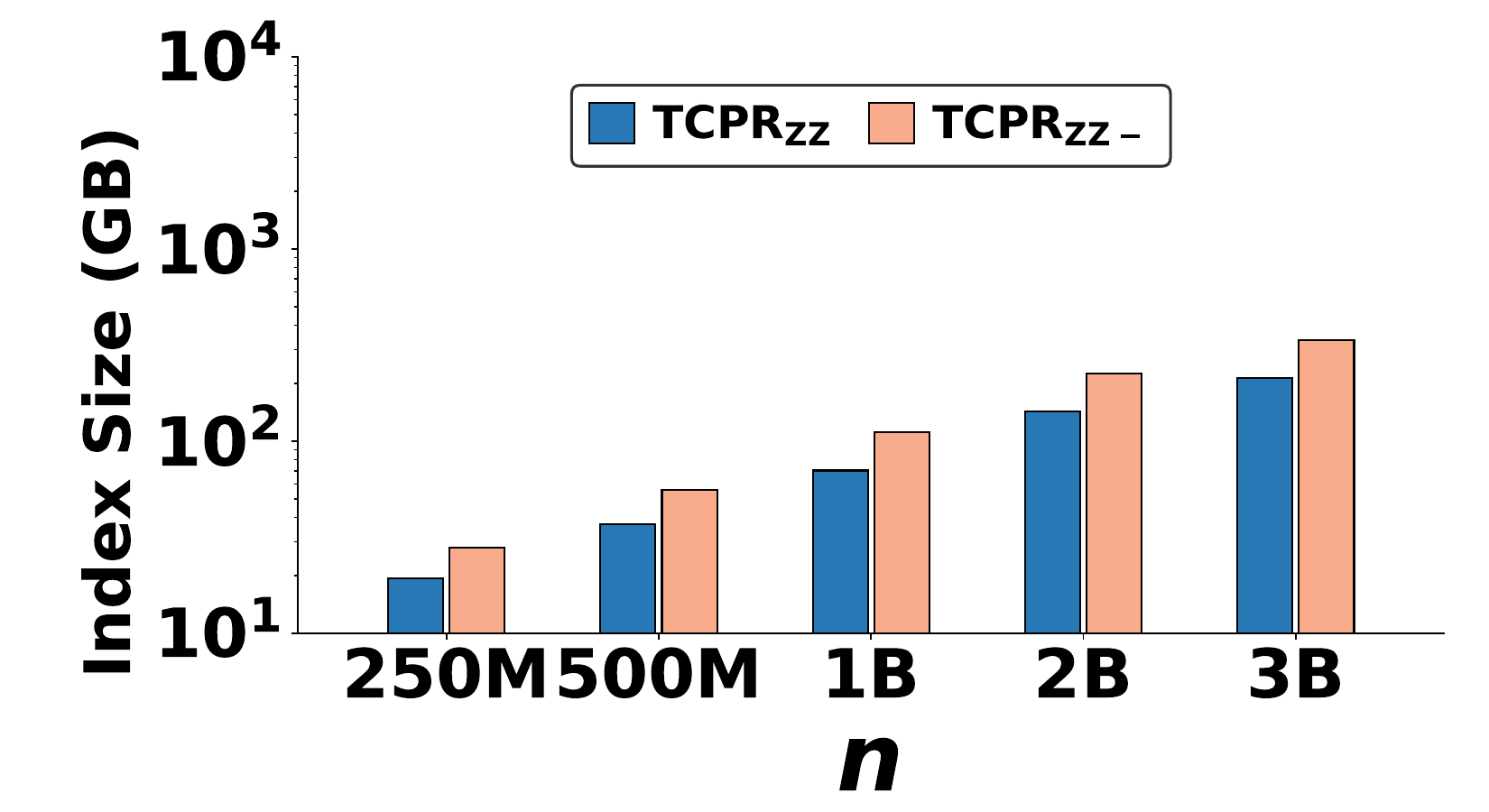}
    \caption{Index size vs. $n$}
         \label{fig:ablation_TCPR-TF_CHR_n_is}
  \end{subfigure}
    \begin{subfigure}[t]{0.20\linewidth}
    \includegraphics[width=\linewidth]{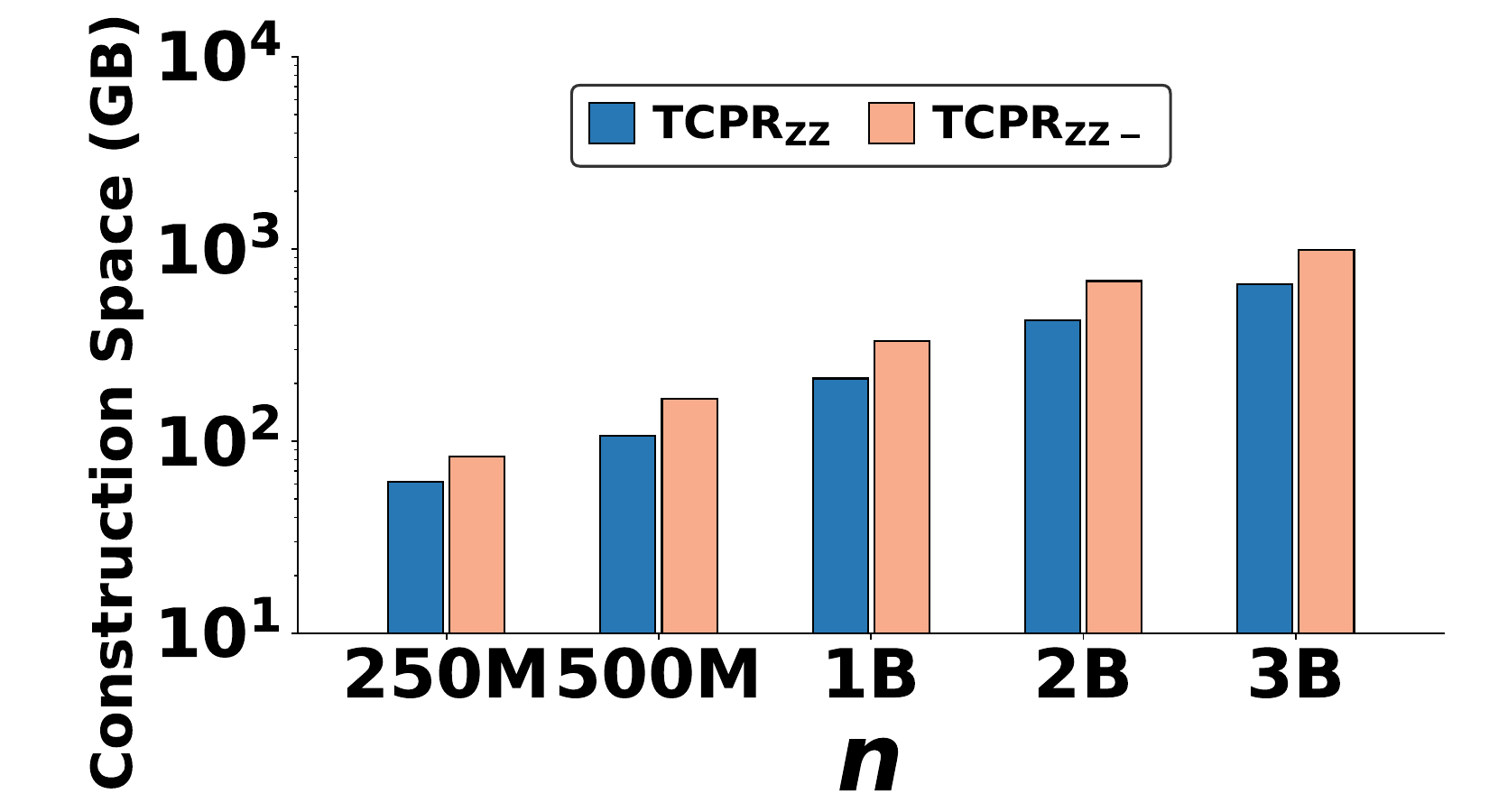}
    \caption{Constr. space vs. $n$}
         \label{fig:ablation_TCPR-TF_CHR_n_cs}
  \end{subfigure}
  \begin{subfigure}[t]{0.20\linewidth}
    \includegraphics[width=\linewidth]{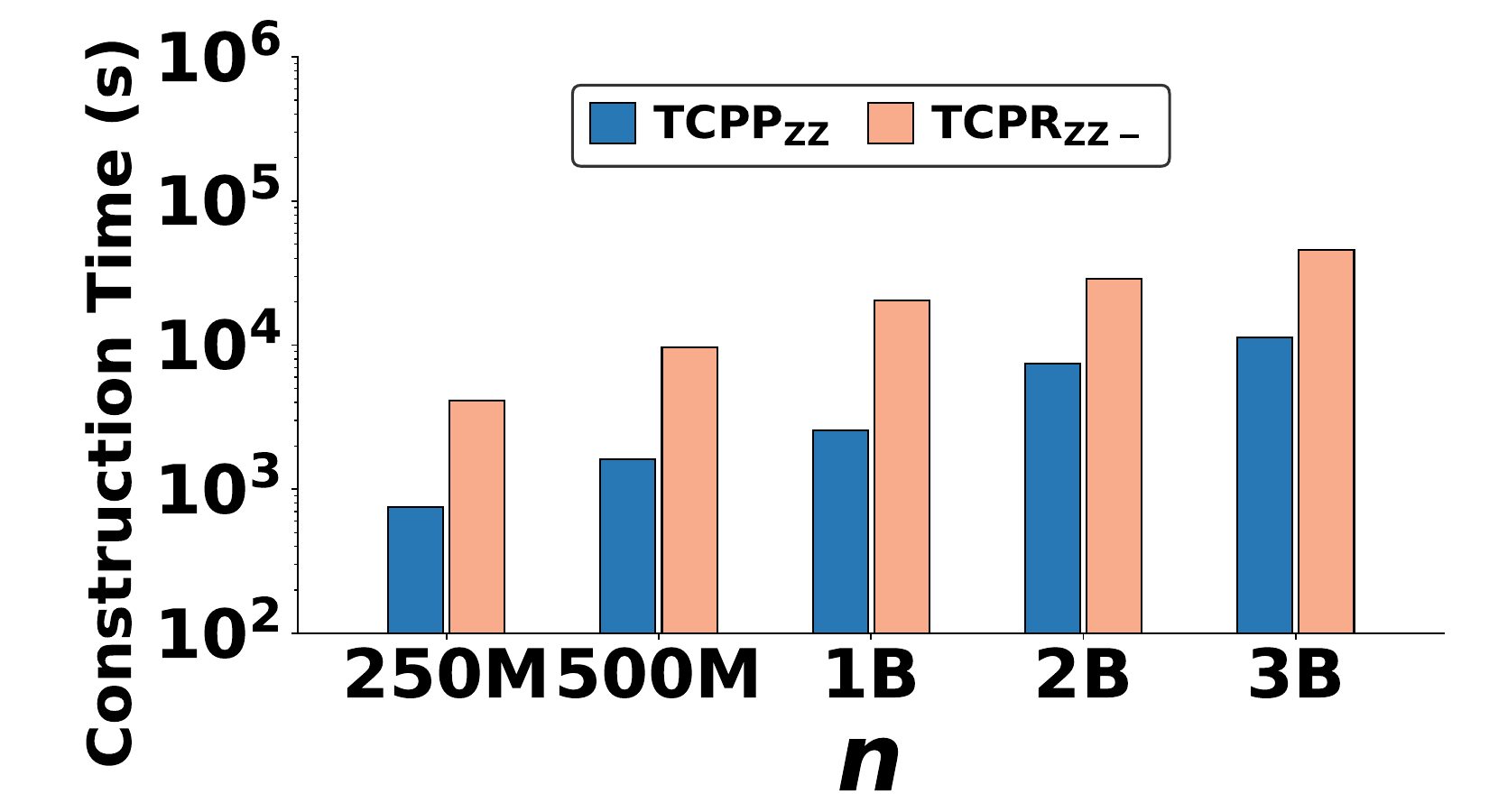}
    \caption{Constr. time vs. $n$}
         \label{fig:ablation_TCPR-TF_CHR_n_ct}
  \end{subfigure}
  \caption{(a) Query time, (b) index size, (c) construction space, and (d) construction time of \TCPRZT and \TCPRZTminus on \chr using the \textsf{TF} scoring function vs.\ $n$.}
  \label{fig:ablation_TCPR-TF_CHR_n}
\end{figure}

\subsection{Ablation Study}\label{sec:exp:ablation}

We show that the bounded-length optimization is very effective: it substantially improved our \TCPR index in all efficiency measures. We report results for the \textsf{TF} scoring function (those for \textsf{SP} and \textsf{TP} are analogous), and the default $B=27$ in \TCPRZT.  \Cref{fig:ablation_TCPR-TF_CHR_n_qt} shows that the query time of \TCPRZT is \emph{at least $2.4$ and up to $3.8$} times faster than that of \TCPRZTminus. The index size, construction space, and construction time of \TCPRZT  are smaller than those of  \TCPRZTminus by $1.5$, $1.6$, and $3.9$ times on average over all $n$ values, respectively; see \Cref{fig:ablation_TCPR-TF_CHR_n_is,fig:ablation_TCPR-TF_CHR_n_cs,fig:ablation_TCPR-TF_CHR_n_ct}. 

\section{Conclusion}\label{sec:conclusion}

We introduced \ZZT, a novel index that is useful for answering various types of contextual queries efficiently, and used it as a basis to build specialized indexes for four practical contextual query types. A natural next step is to support dynamic texts. Since practical full-text indexes for dynamic texts are rather undeveloped, designing dynamic versions of \ZZT or our specialized indexes is a challenging problem left for future work. 

\bibliographystyle{alphaurl}
\bibliography{main}
\clearpage

\appendix
\section{Additional Experimental Results}\label{appendix}

For each of our four indexes, \Cref{sec:experiments} reported results 
on one representative dataset. We report here the results on the
remaining four datasets. The behavior of every index is analogous, and hence all conclusions drawn in \Cref{sec:experiments} carry over.

\Cref{fig:app:LFCS:query} shows the query time of \LFCSZT vs.\ \LFCSBA for
varying $n$, $m$, and $\tau$ on \bst, \chr, \sars, and \wiki. The results in this
figure are analogous to those of
\Crefrange{fig:LFCS_SDSL_n_query}{fig:LFCS_SDSL_tau_query} in
\Cref{experiments:lfcs}.

\Cref{fig:app:LFCS:cost} shows the index size, construction space, and
construction time of \LFCSZT vs.\ \LFCSBA for varying $n$ on \bst, \chr, \sars,
and \wiki. Its results are analogous to those in
\Crefrange{fig:LFCS_SDSL_n_index}{fig:LFCS_SDSL_n_build} in
\Cref{experiments:lfcs}.

\Cref{fig:app:LCCS:query} shows the query time of \LCCSZT vs.\ \LCCSBA for
varying $N$, $m$, and $\tau$ on \bst, \chr, \sdsl, and \wiki. The results in this
figure are analogous to those of
\Crefrange{fig:LCCS_qt_N}{fig:LCCS_qt_tau} in
\Cref{experiments:LCCS}.

\Cref{fig:app:LCCS:cost} shows the index size, construction space, and
construction time of \LCCSZT vs.\ \LCCSBA for varying $N$ on \bst, \chr, \sdsl,
and \wiki. Its results are analogous to those in
\Crefrange{fig:LCCS_is}{fig:LCCS_ct} in \Cref{experiments:LCCS}.

\Cref{fig:app:CC:query} shows the query time of \CCZT vs.\ \CCBA for
varying $n$ and $m$ on \chr, \sars, \sdsl, and \wiki. The results in this figure
are analogous to those of \Cref{cc_qt_n,cc_qt_m} in
\Cref{experiments:CC}. As in \Cref{cc_qt_n}, \CCBA did not
terminate within $24$ hours for $n\geq 10^6$ on any dataset.

\Cref{fig:app:CC:cost} shows the index size, construction space, and
construction time of \CCZT vs.\ \CCBA for varying $n$ on \chr, \sars, \sdsl, and
\wiki. Its results are analogous to those in \Crefrange{cc_is}{cc_ct} in
\Cref{experiments:CC}.

\Cref{fig:app:TF:query} shows the query time of \TCPRZT vs.\ \TCPRBA,
using the \textsf{TF} scoring function, for varying $n$, $m$, $q$, and $k$ on \bst,
\sars, \sdsl, and \wiki. The results in this figure are analogous to those of
\Cref{fig:qt_tcpr_n1,fig:qt_tcpr_m1,fig:qt_tcpr_q1,fig:qt_tcpr_k1} in \Cref{exp:tcpr}.

\Cref{fig:app:TF:cost} shows the index size, construction space, and
construction time of \TCPRZT vs.\ \TCPRBA, using the \textsf{TF} scoring
function, for varying $n$ on \bst, \sars, \sdsl, and \wiki. Its results are analogous
to those in \Cref{fig:tcpr_is_n1,fig:tcpr_cs_n1,fig:tcpr_build1} in \Cref{exp:tcpr}.

\Cref{fig:app:SP:query} shows the query time of \TCPRZT vs.\ \TCPRBA,
using the \textsf{SP} scoring function, for varying $n$, $m$, $q$, and $k$ on \bst,
\sars, \sdsl, and \wiki. The results in this figure are analogous to those of
\Cref{fig:qt_tcpr_n2,fig:qt_tcpr_m2,fig:qt_tcpr_q2,fig:qt_tcpr_k2} in \Cref{exp:tcpr}.

\Cref{fig:app:SP:cost} shows the index size, construction space, and
construction time of \TCPRZT vs.\ \TCPRBA, using the \textsf{SP} scoring
function, for varying $n$ on \bst, \sars, \sdsl, and \wiki. Its results are analogous
to those in \Cref{fig:tcpr_is_n2,fig:tcpr_cs_n2,fig:tcpr_build2} in \Cref{exp:tcpr}.

\Cref{fig:app:TP:query} shows the query time of \TCPRZT vs.\ \TCPRBA,
using the \textsf{TP} scoring function, for varying $n$, $m$, $q$, and $k$ on \bst,
\sars, \sdsl, and \wiki. The results in this figure are analogous to those of
\Cref{fig:qt_tcpr_n3,fig:qt_tcpr_m3,fig:qt_tcpr_q3,fig:qt_tcpr_k3} in \Cref{exp:tcpr}.

\Cref{fig:app:TP:cost} shows the index size, construction space, and
construction time of \TCPRZT vs.\ \TCPRBA, using the \textsf{TP} scoring
function, for varying $n$ on \bst, \sars, \sdsl, and \wiki. Its results are analogous
to those in \Cref{fig:tcpr_is_n3,fig:tcpr_cs_n3,fig:tcpr_build3} in \Cref{exp:tcpr}.

\newcommand{\appfigwidth}{0.2\linewidth}

\begin{figure}[ht]
  \centering
  \begin{subfigure}[t]{\appfigwidth}
    \includegraphics[width=\linewidth]{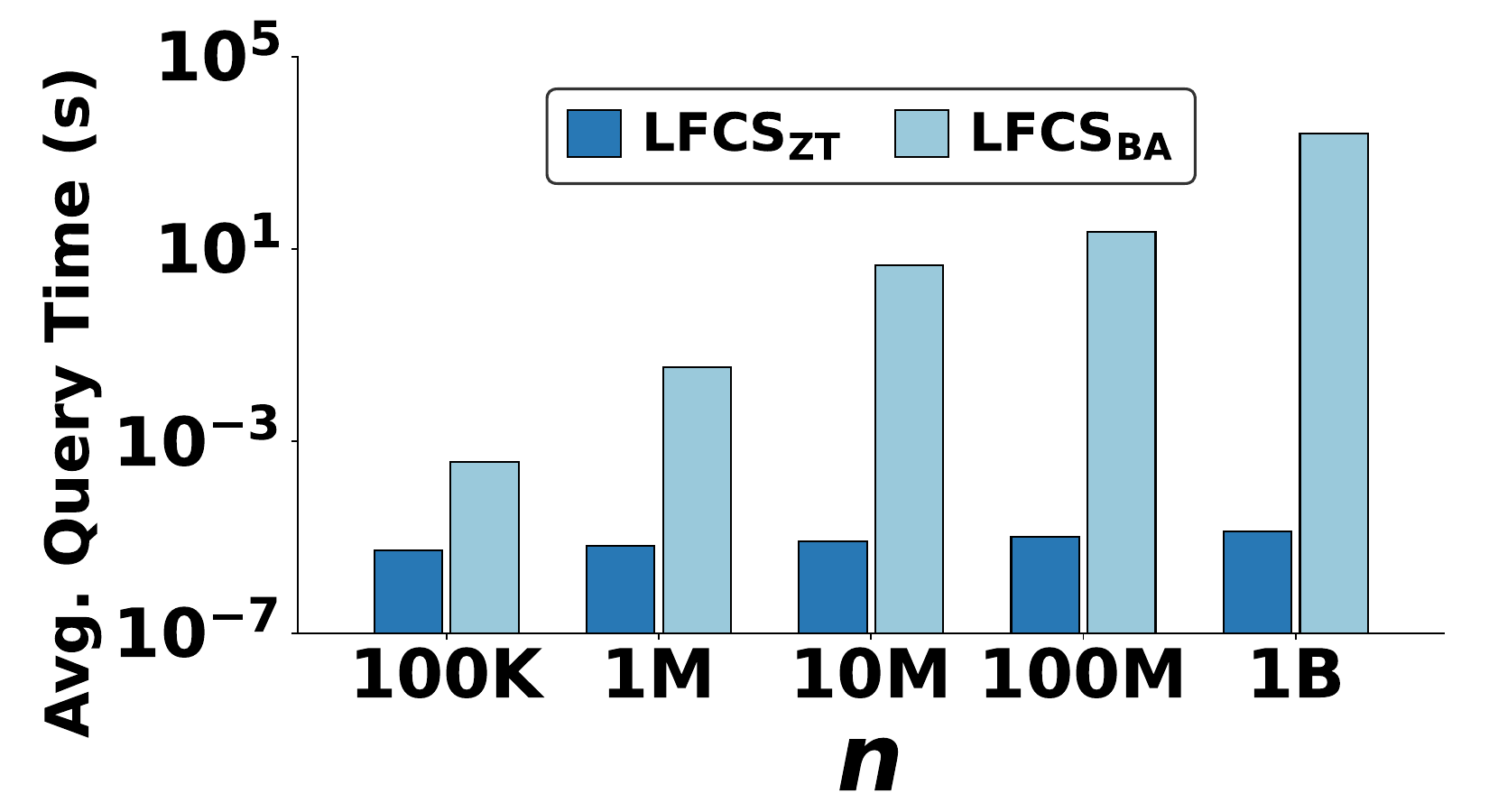}
    \caption{Query time vs. $n$}\label{fig:app:LFCS:n:query:BST}
  \end{subfigure}
  \begin{subfigure}[t]{\appfigwidth}
    \includegraphics[width=\linewidth]{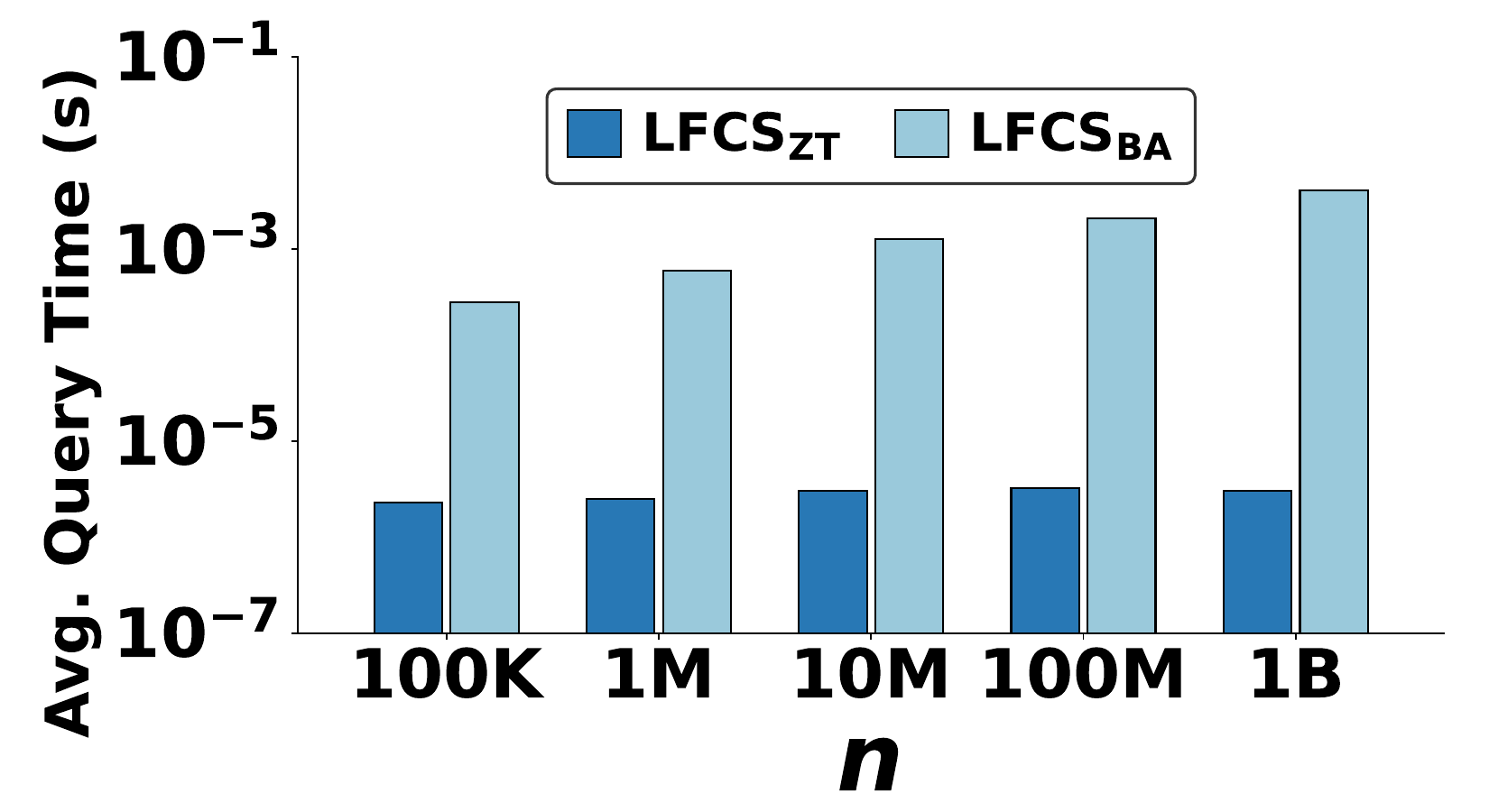}
    \caption{Query time vs. $n$}\label{fig:app:LFCS:n:query:CHR}
  \end{subfigure}
  \begin{subfigure}[t]{\appfigwidth}
    \includegraphics[width=\linewidth]{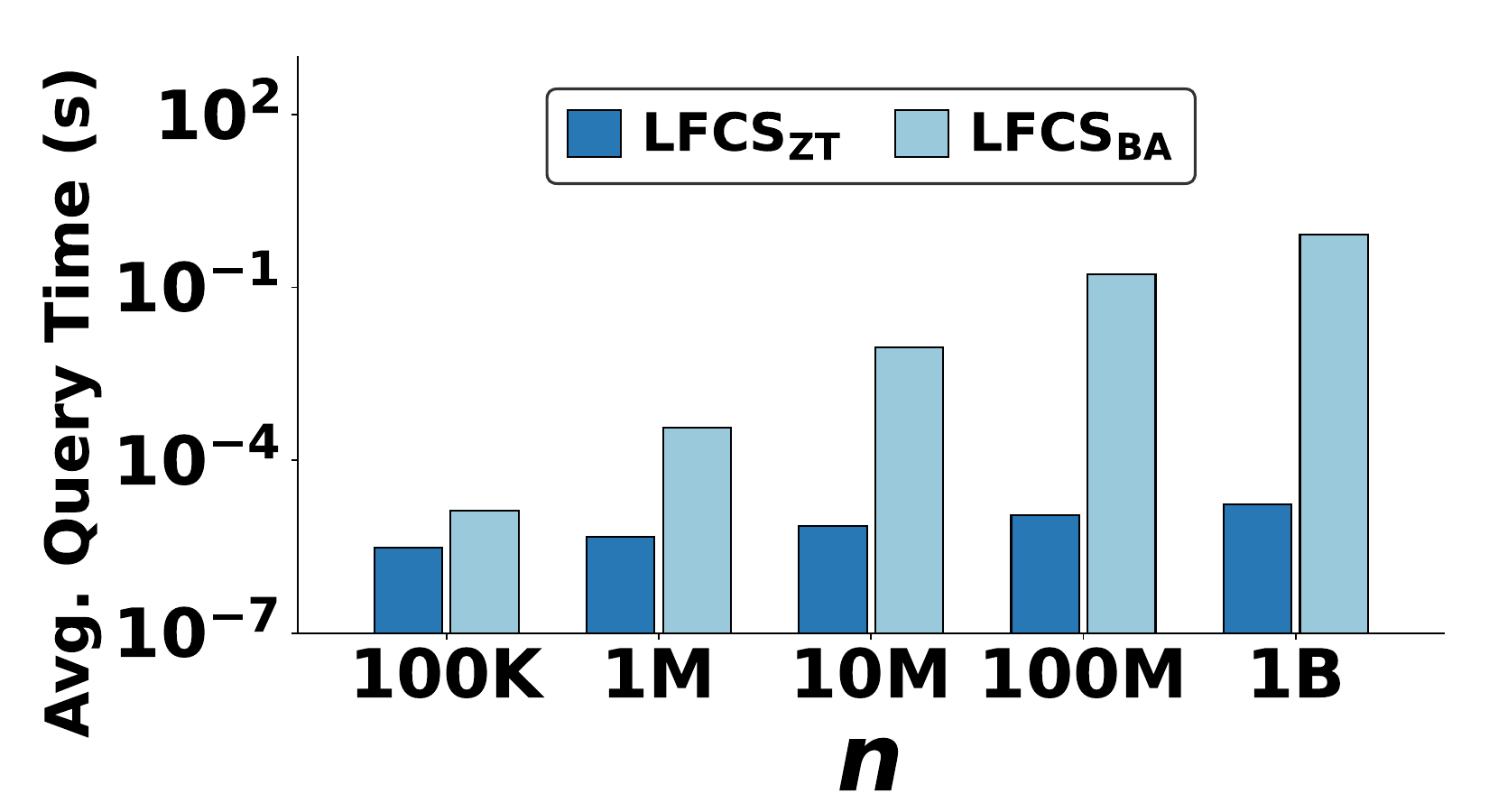}
    \caption{Query time vs. $n$}\label{fig:app:LFCS:n:query:SARS}
  \end{subfigure}
  \begin{subfigure}[t]{\appfigwidth}
    \includegraphics[width=\linewidth]{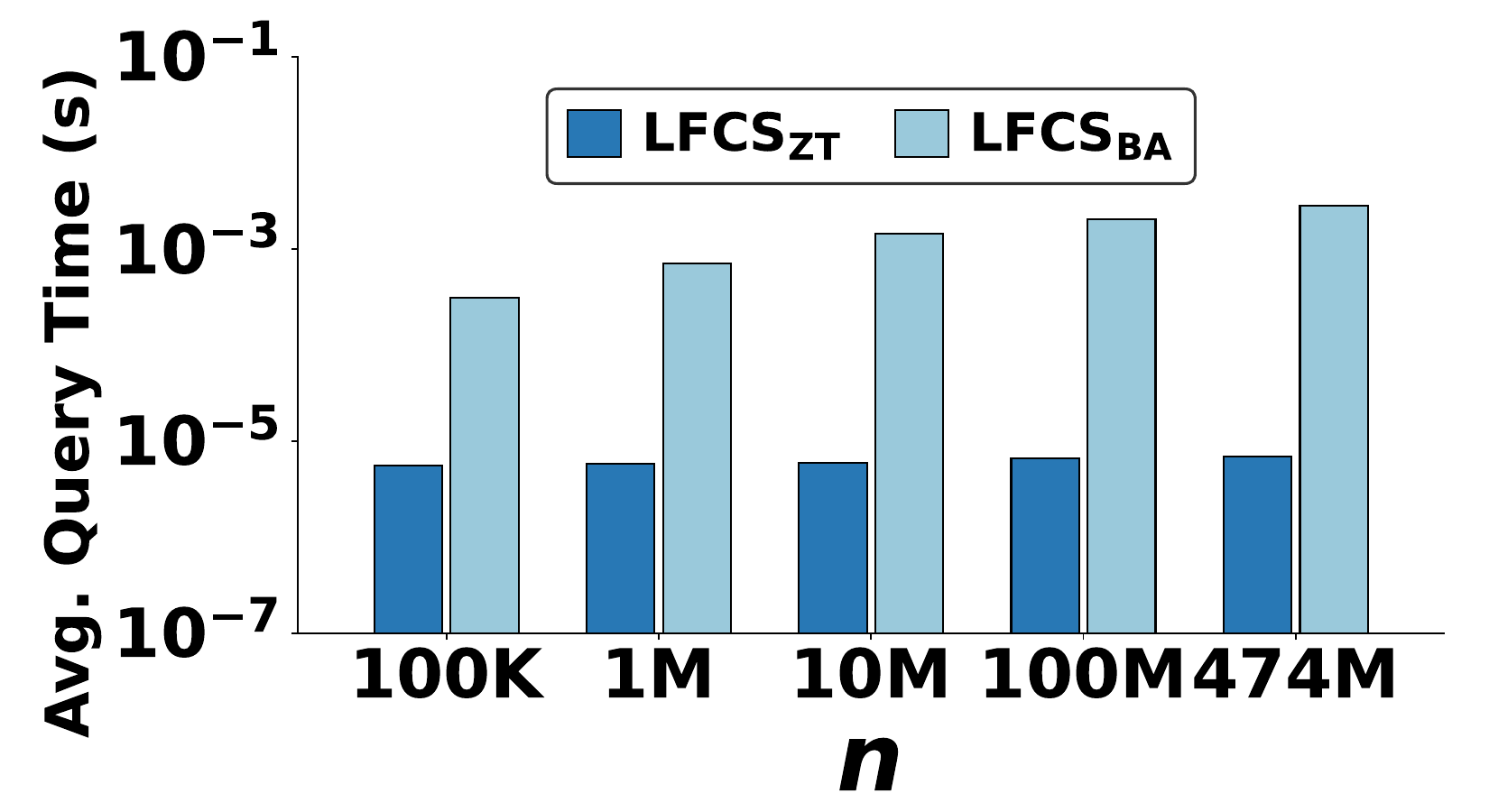}
    \caption{Query time vs. $n$}\label{fig:app:LFCS:n:query:WIKI}
  \end{subfigure}\\[0pt]
  \begin{subfigure}[t]{\appfigwidth}
    \includegraphics[width=\linewidth]{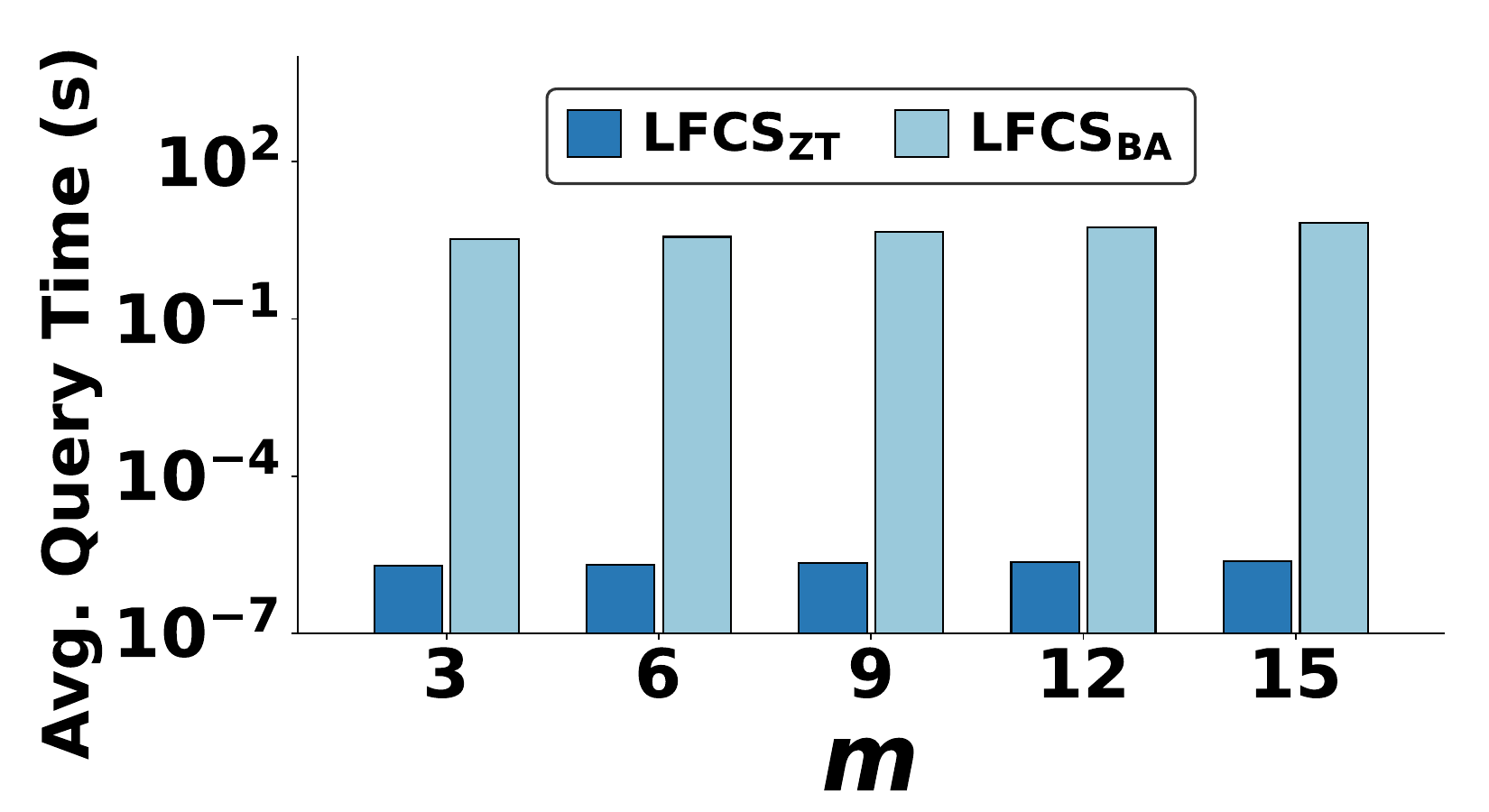}
    \caption{Query time vs. $m$}\label{fig:app:LFCS:m:query:BST}
  \end{subfigure}
  \begin{subfigure}[t]{\appfigwidth}
    \includegraphics[width=\linewidth]{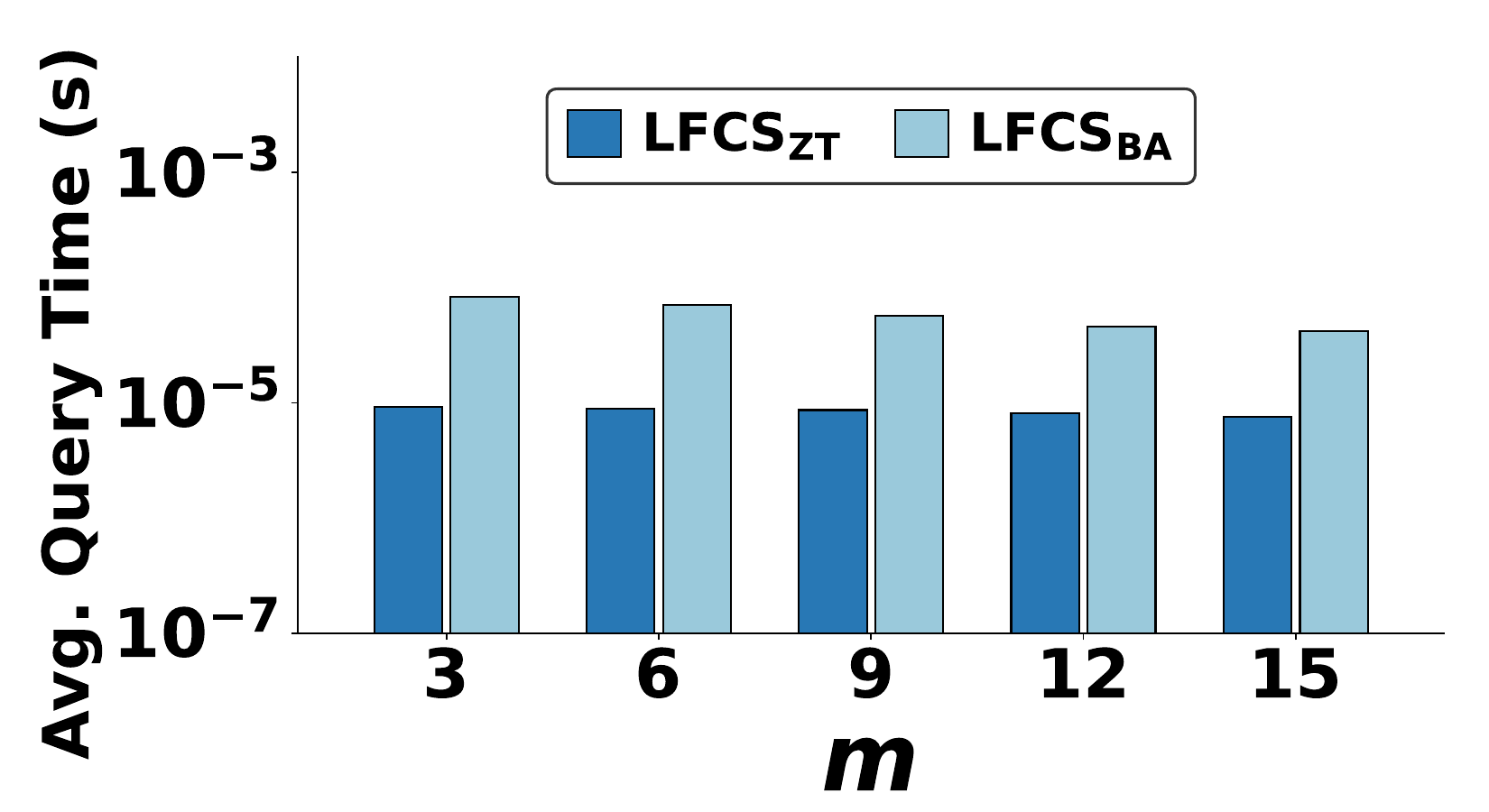}
    \caption{Query time vs. $m$}\label{fig:app:LFCS:m:query:CHR}
  \end{subfigure}
  \begin{subfigure}[t]{\appfigwidth}
    \includegraphics[width=\linewidth]{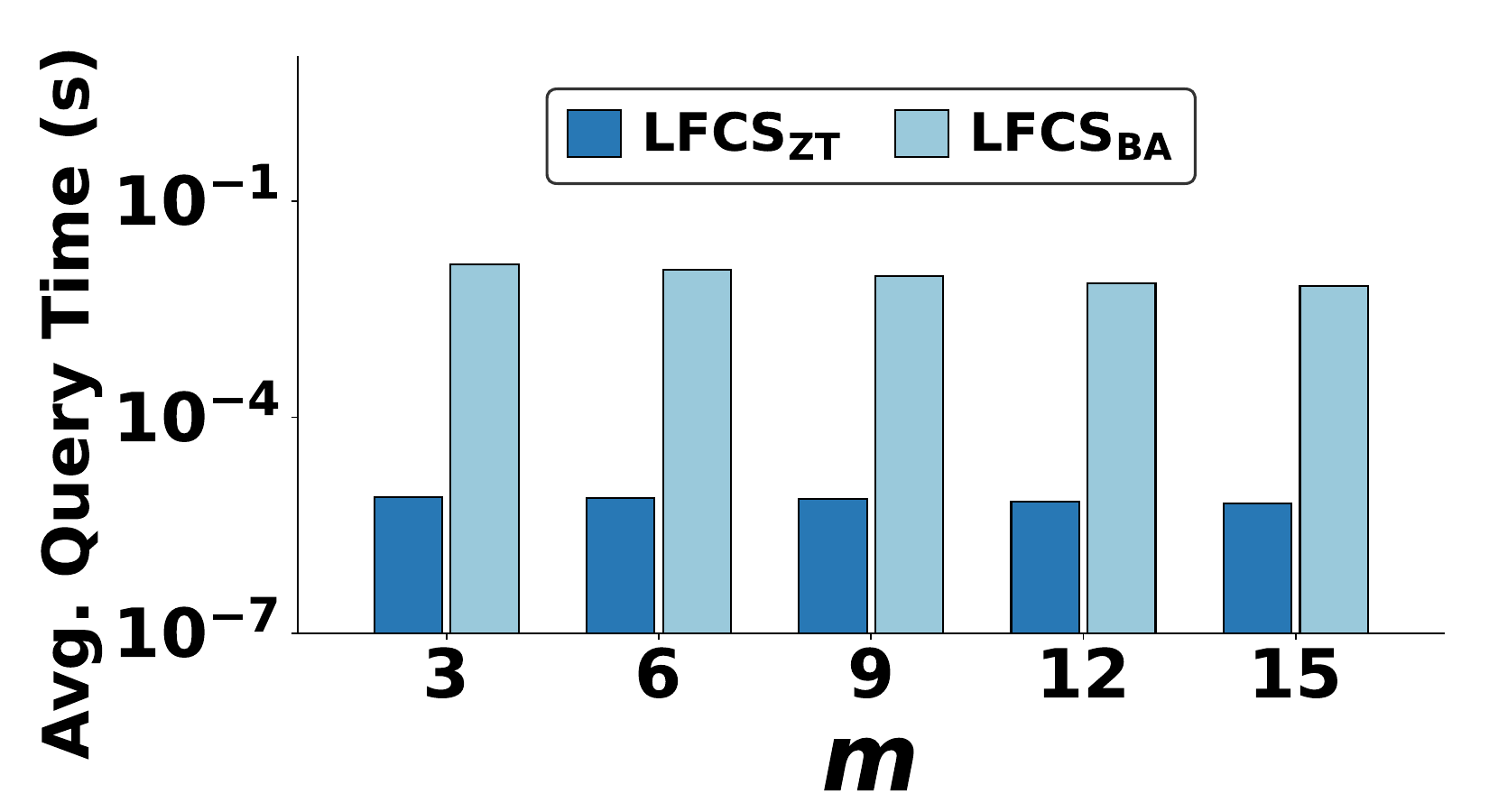}
    \caption{Query time vs. $m$}\label{fig:app:LFCS:m:query:SARS}
  \end{subfigure}
  \begin{subfigure}[t]{\appfigwidth}
    \includegraphics[width=\linewidth]{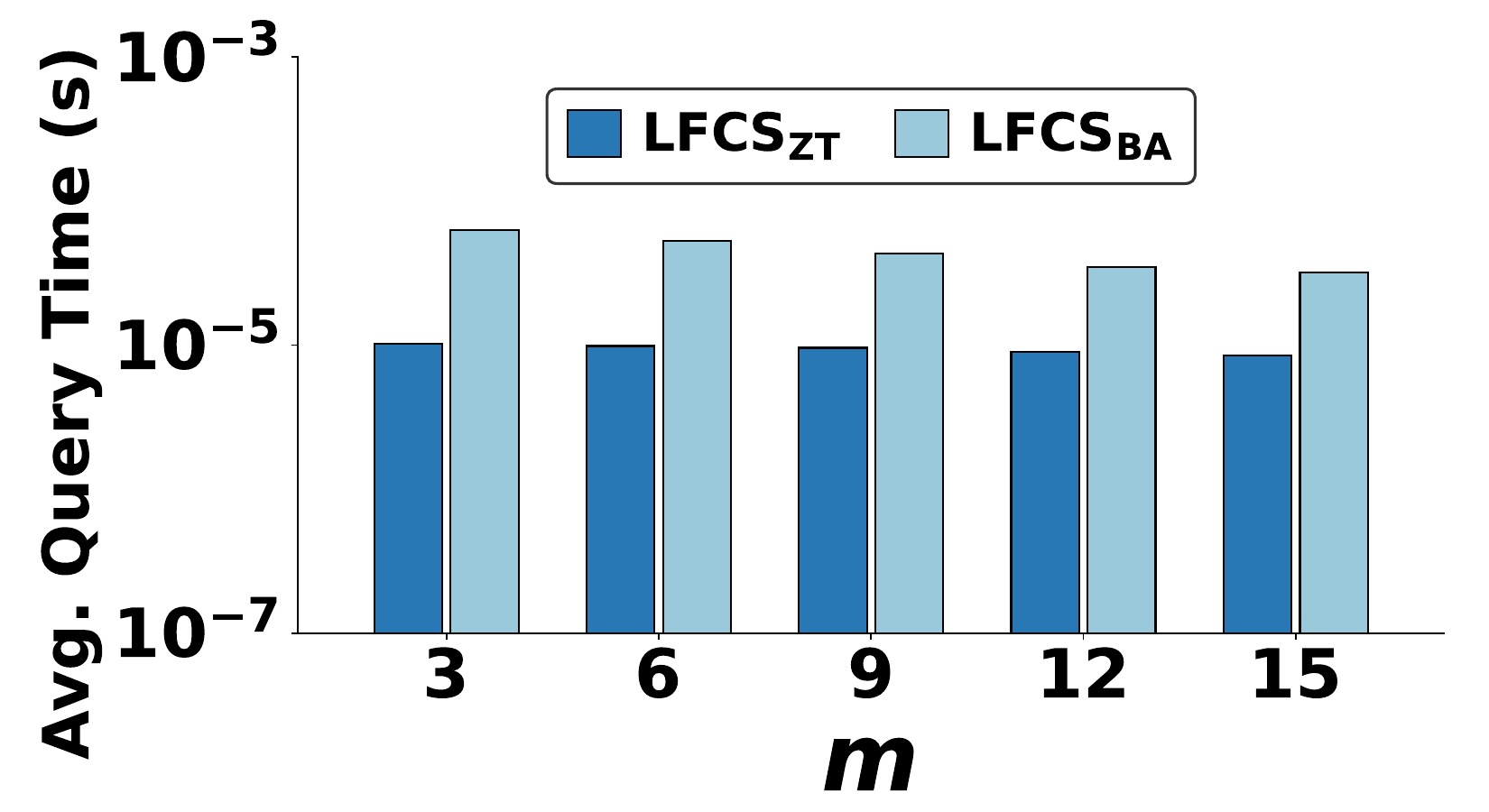}
    \caption{Query time vs. $m$}\label{fig:app:LFCS:m:query:WIKI}
  \end{subfigure}\\[0pt]
  \begin{subfigure}[t]{\appfigwidth}
    \includegraphics[width=\linewidth]{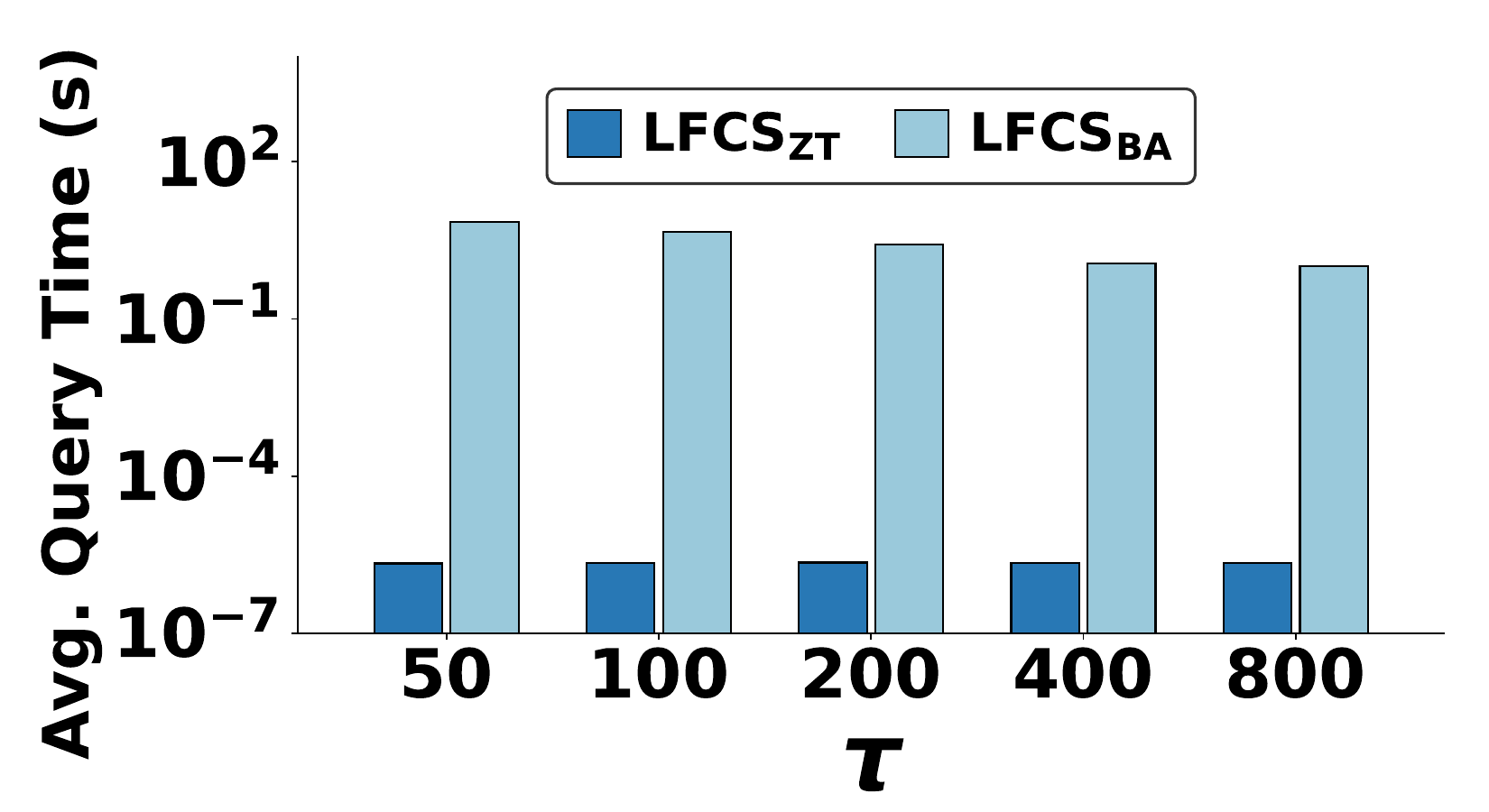}
    \caption{Query time vs. $\tau$}\label{fig:app:LFCS:tau:query:BST}
  \end{subfigure}
  \begin{subfigure}[t]{\appfigwidth}
    \includegraphics[width=\linewidth]{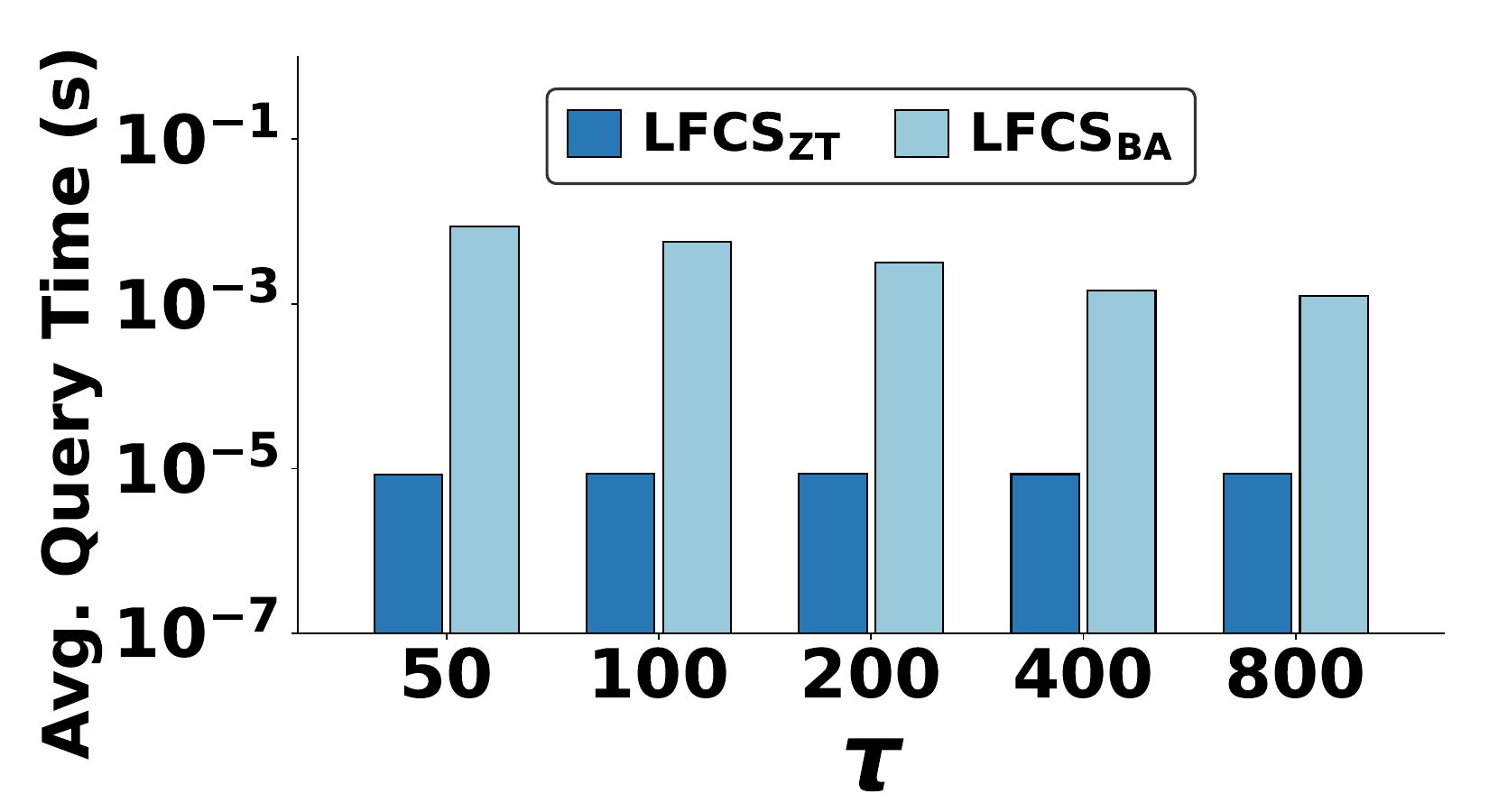}
    \caption{Query time vs. $\tau$}\label{fig:app:LFCS:tau:query:CHR}
  \end{subfigure}
  \begin{subfigure}[t]{\appfigwidth}
    \includegraphics[width=\linewidth]{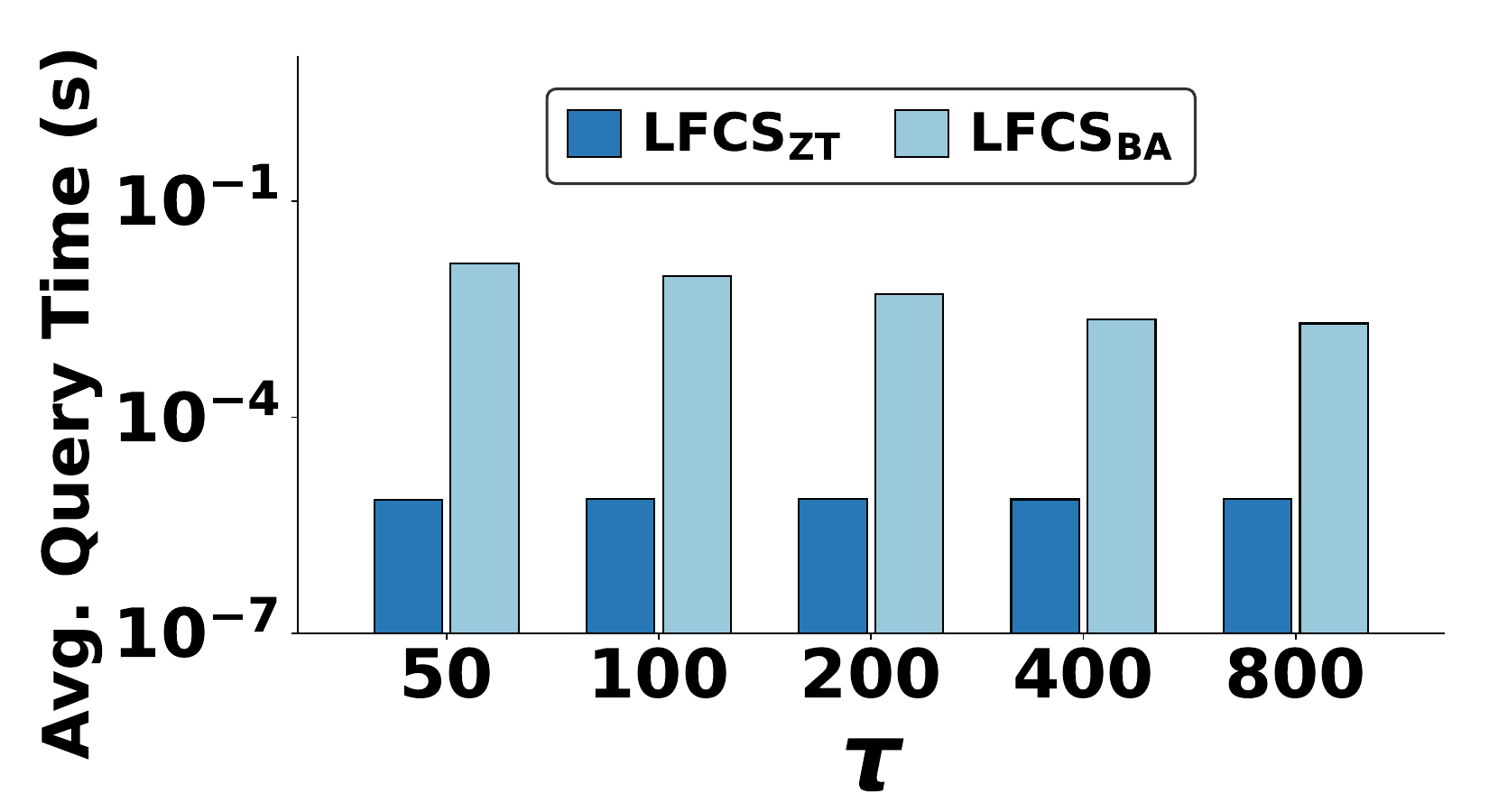}
    \caption{Query time vs. $\tau$}\label{fig:app:LFCS:tau:query:SARS}
  \end{subfigure}
  \begin{subfigure}[t]{\appfigwidth}
    \includegraphics[width=\linewidth]{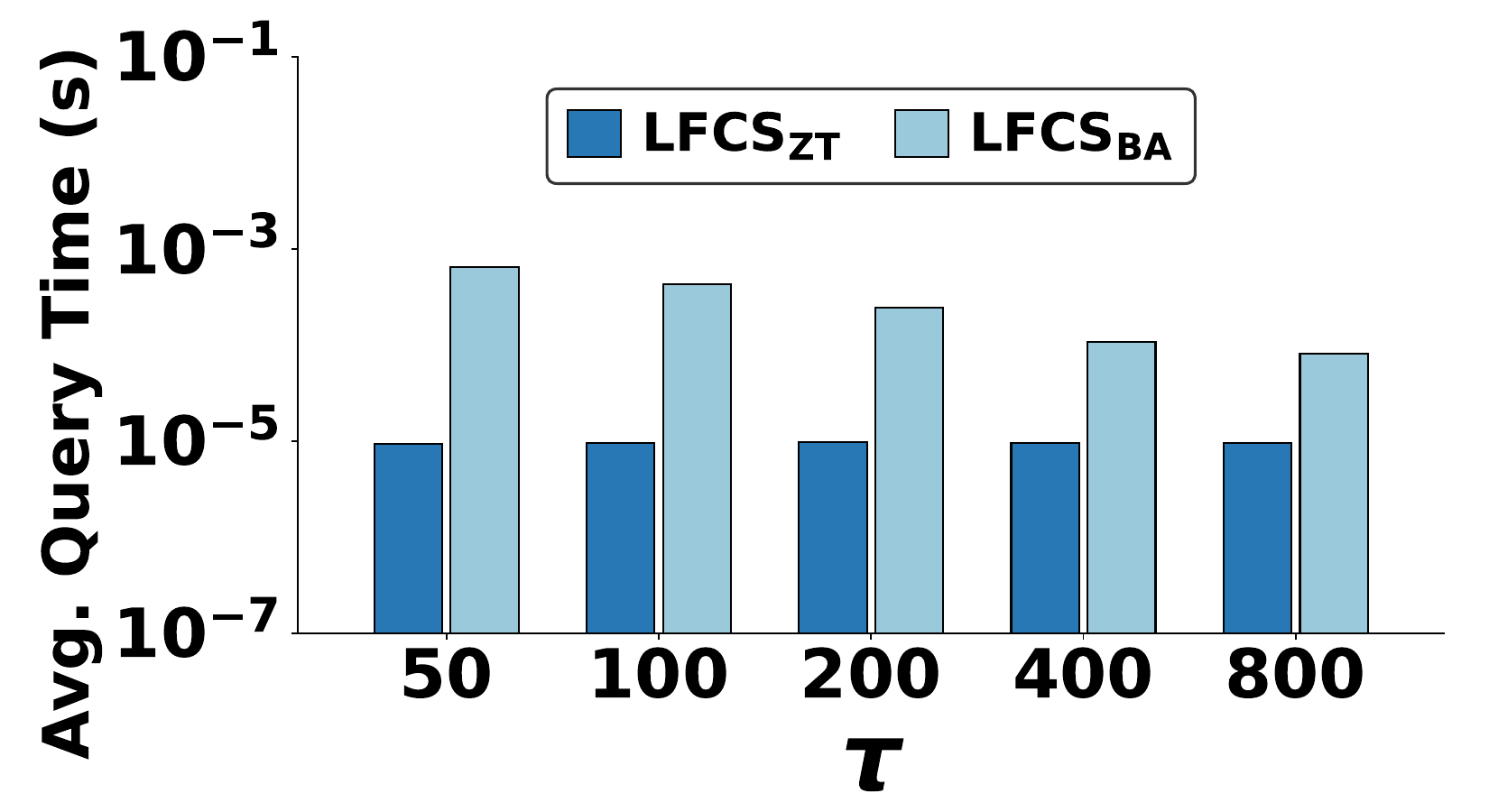}
    \caption{Query time vs. $\tau$}\label{fig:app:LFCS:tau:query:WIKI}
  \end{subfigure}
  \vspace{\captionspacing}
  \vspace{+2mm}
  \caption{Query time of our \LFCS index vs. \LFCSBA on (a) \bst, (b) \chr, (c) \sars, and (d) \wiki vs. $n$; on (e) \bst, (f) \chr, (g) \sars, and (h) \wiki vs. $m$; on (i) \bst, (j) \chr, (k) \sars, and (l) \wiki vs. $\tau$.}\label{fig:app:LFCS:query}
\end{figure}

\begin{figure}[ht]
  \centering
  \begin{subfigure}[t]{\appfigwidth}
    \includegraphics[width=\linewidth]{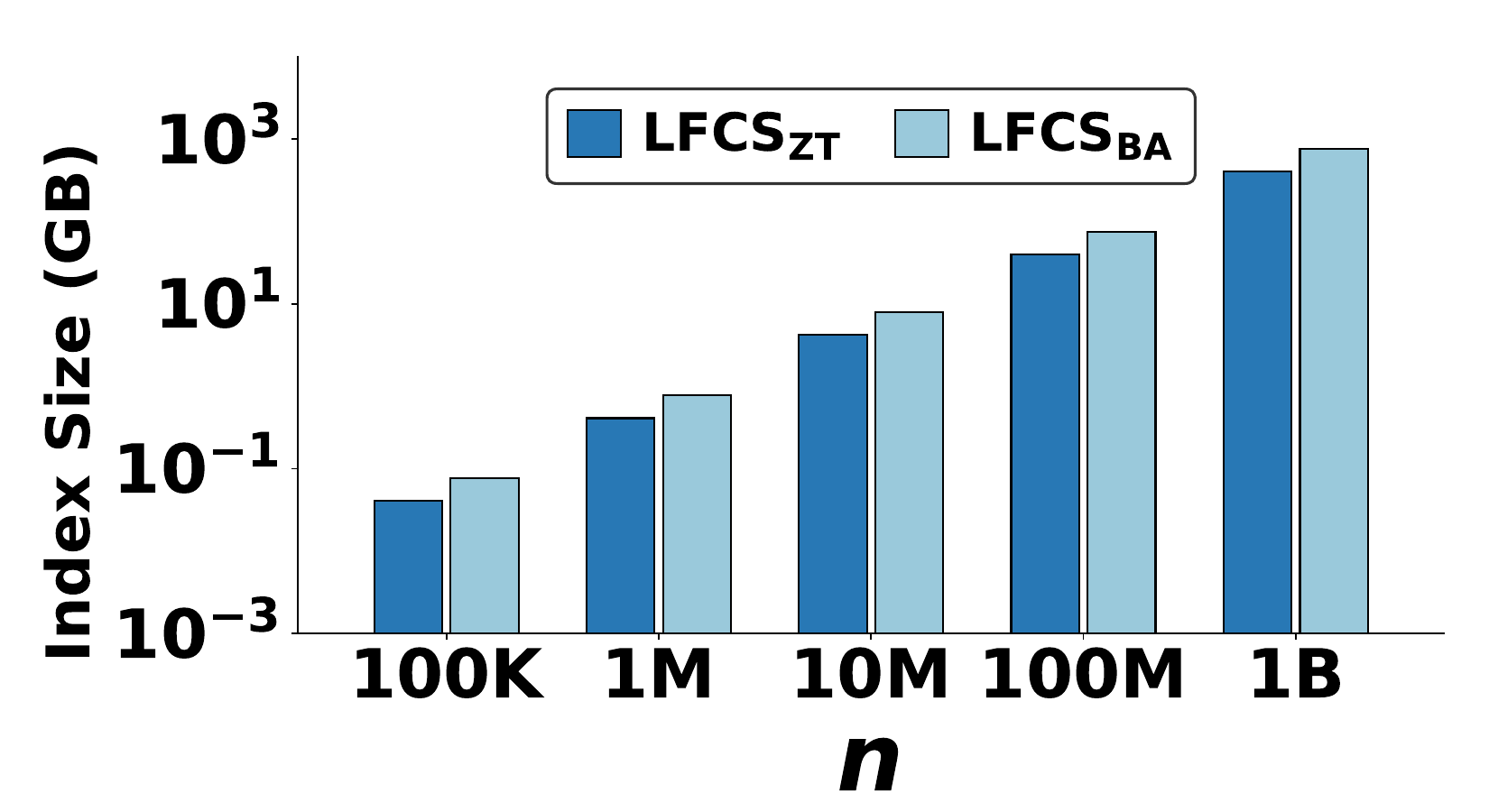}
    \caption{Index size vs. $n$}\label{fig:app:LFCS:n:index:BST}
  \end{subfigure}
  \begin{subfigure}[t]{\appfigwidth}
    \includegraphics[width=\linewidth]{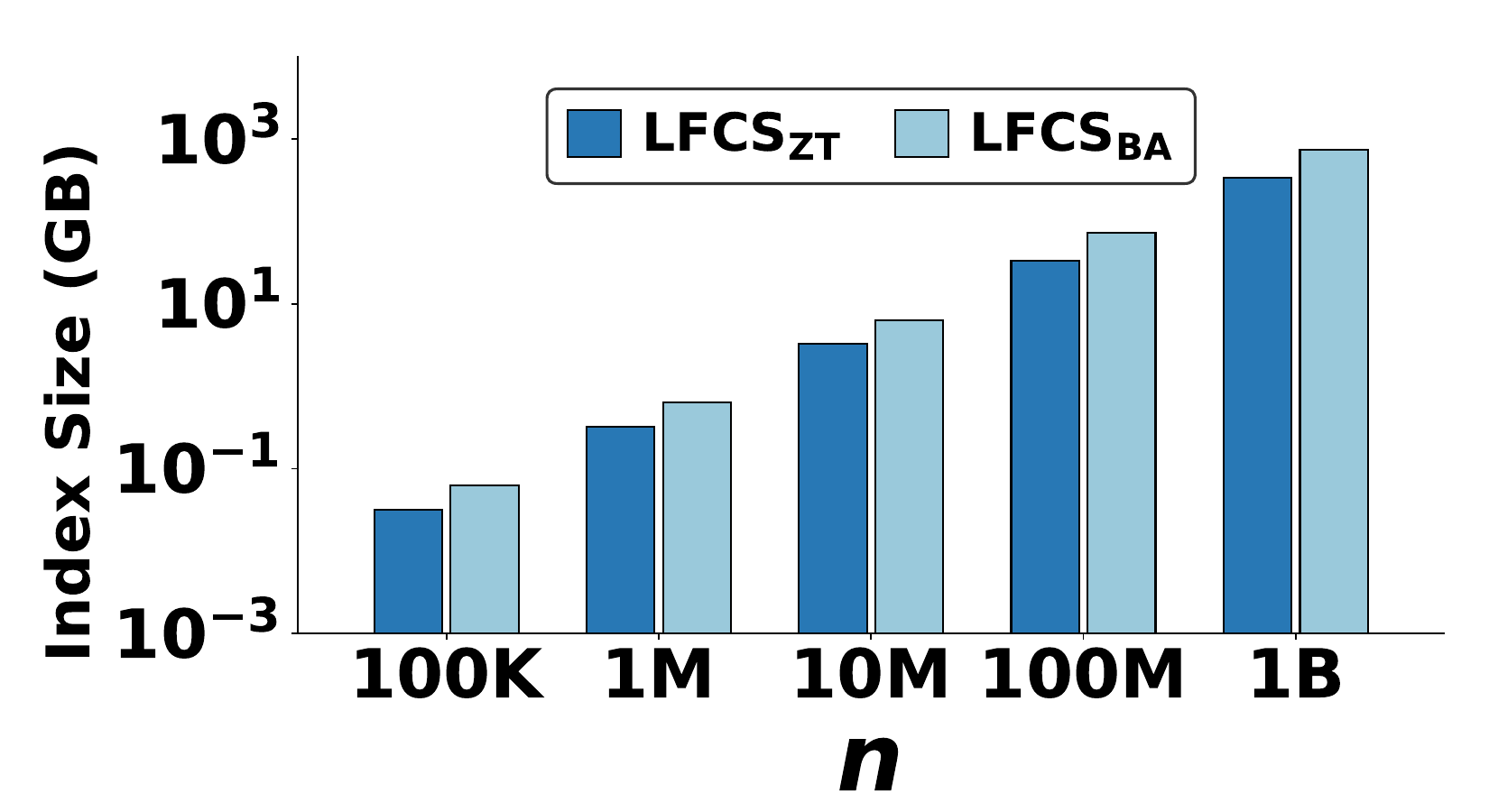}
    \caption{Index size vs. $n$}\label{fig:app:LFCS:n:index:CHR}
  \end{subfigure}
  \begin{subfigure}[t]{\appfigwidth}
    \includegraphics[width=\linewidth]{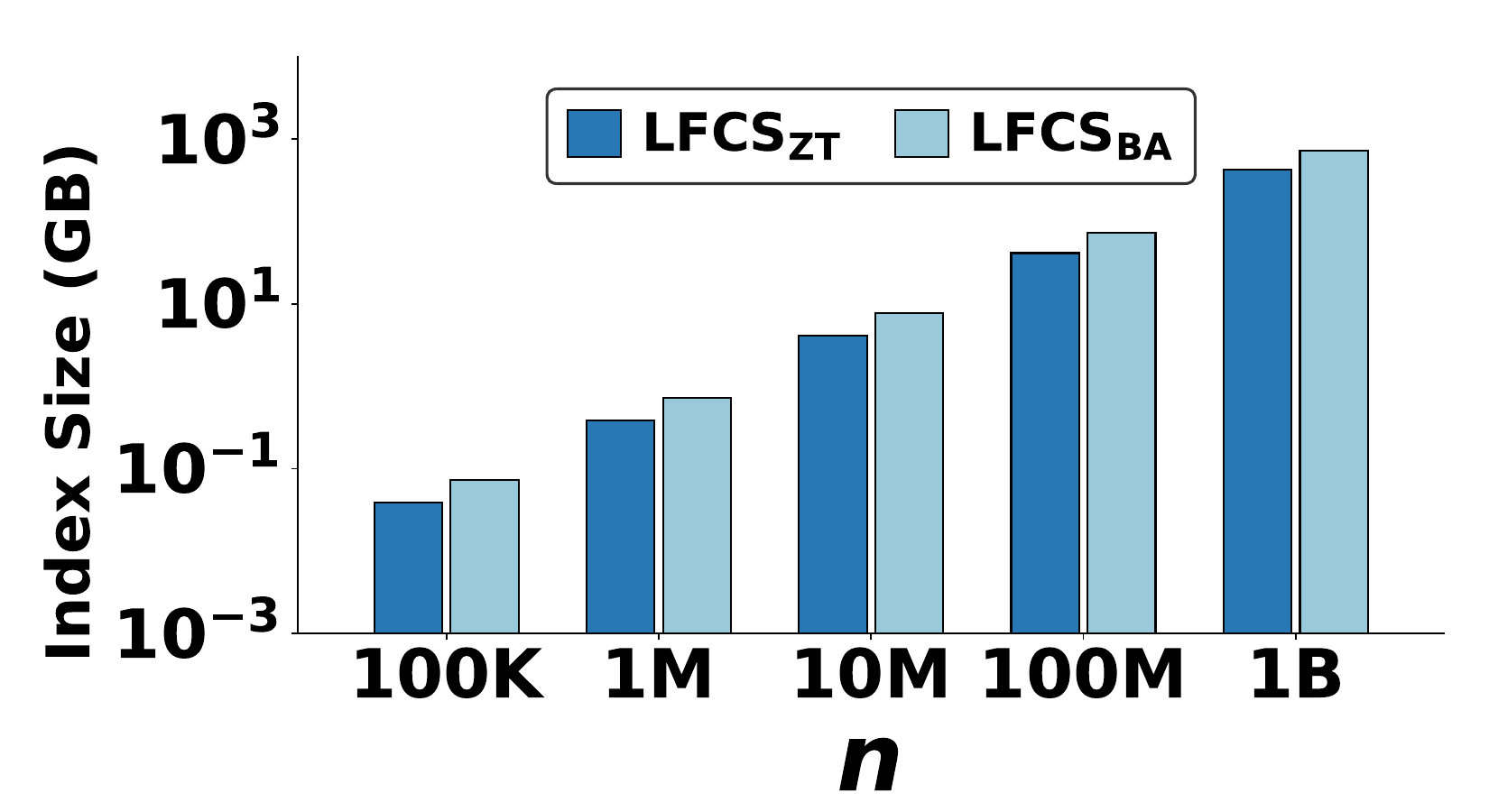}
    \caption{Index size vs. $n$}\label{fig:app:LFCS:n:index:SARS}
  \end{subfigure}
  \begin{subfigure}[t]{\appfigwidth}
    \includegraphics[width=\linewidth]{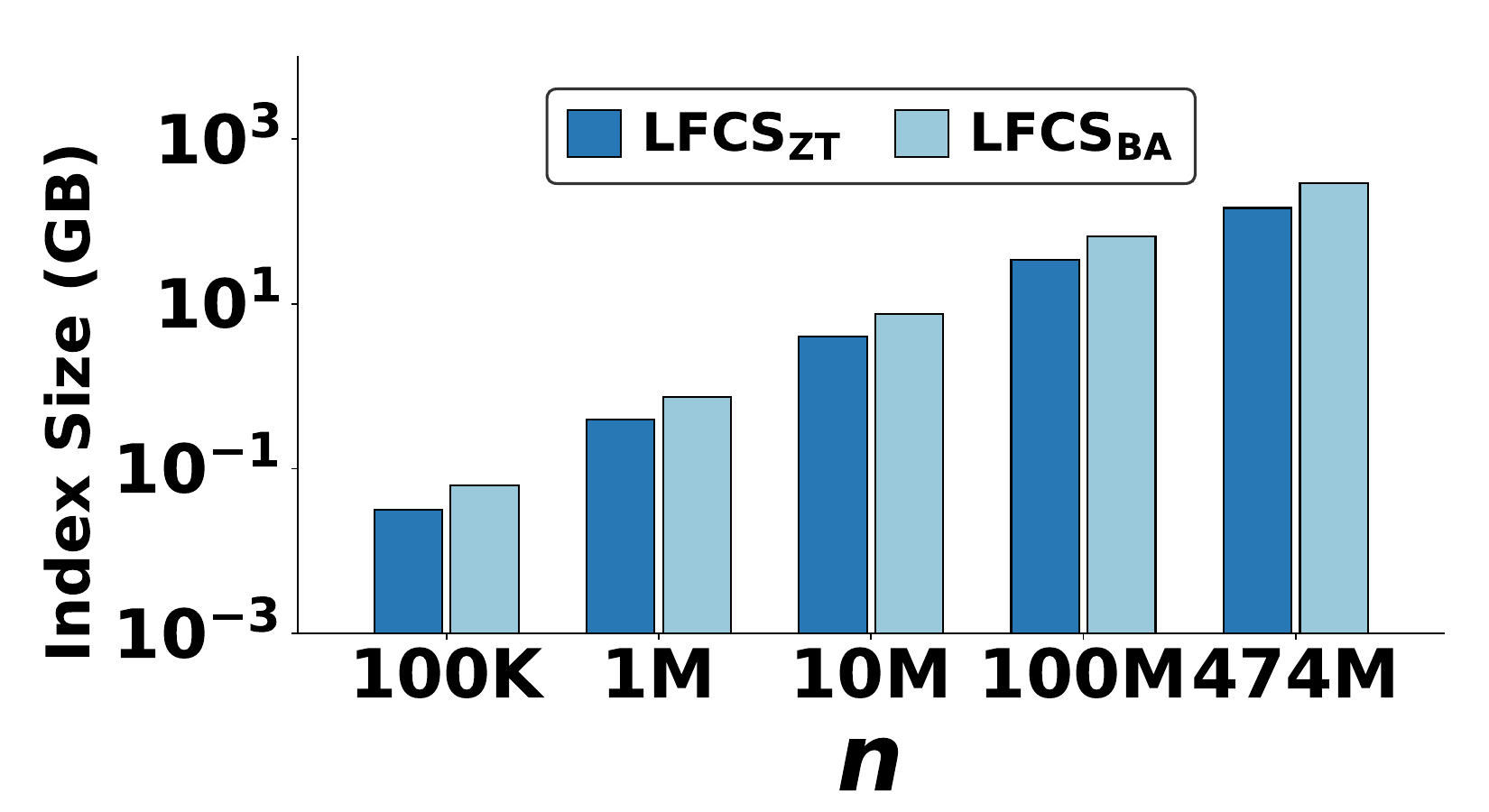}
    \caption{Index size vs. $n$}\label{fig:app:LFCS:n:index:WIKI}
  \end{subfigure}\\[0pt]
  \begin{subfigure}[t]{\appfigwidth}
    \includegraphics[width=\linewidth]{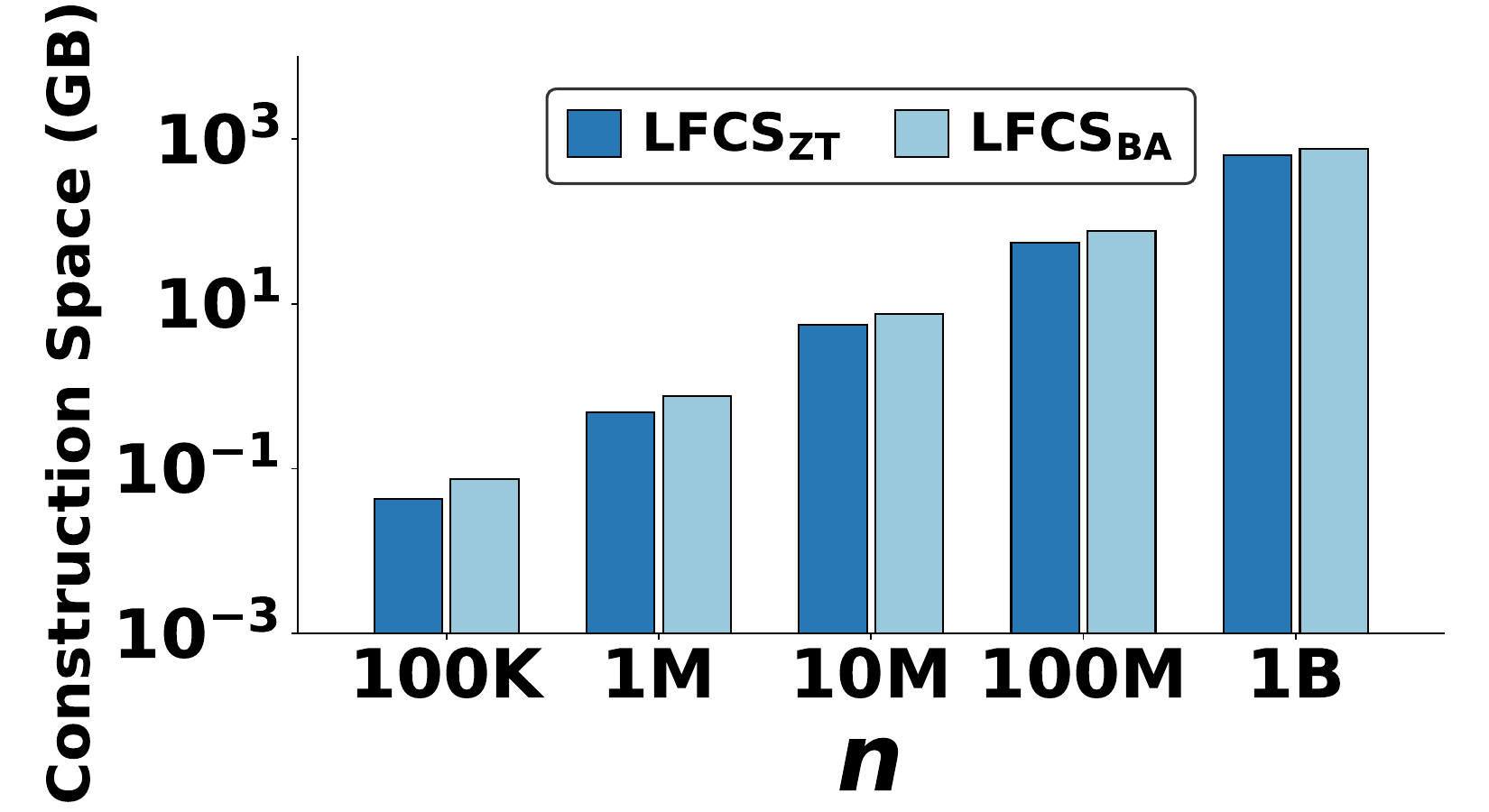}
    \caption{Constr.\ space vs. $n$}\label{fig:app:LFCS:n:rss:BST}
  \end{subfigure}
  \begin{subfigure}[t]{\appfigwidth}
    \includegraphics[width=\linewidth]{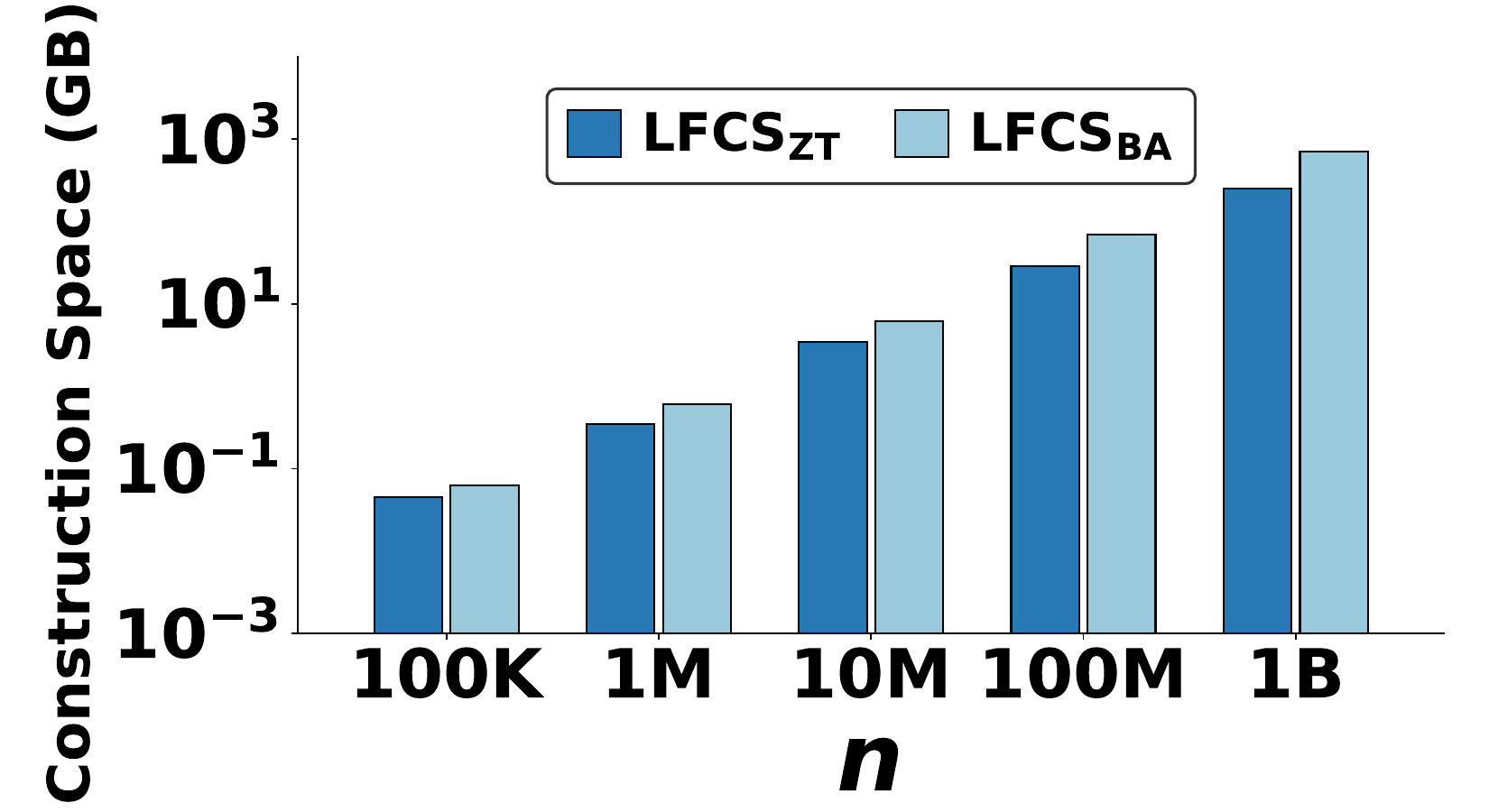}
    \caption{Constr.\ space vs. $n$}\label{fig:app:LFCS:n:rss:CHR}
  \end{subfigure}
  \begin{subfigure}[t]{\appfigwidth}
    \includegraphics[width=\linewidth]{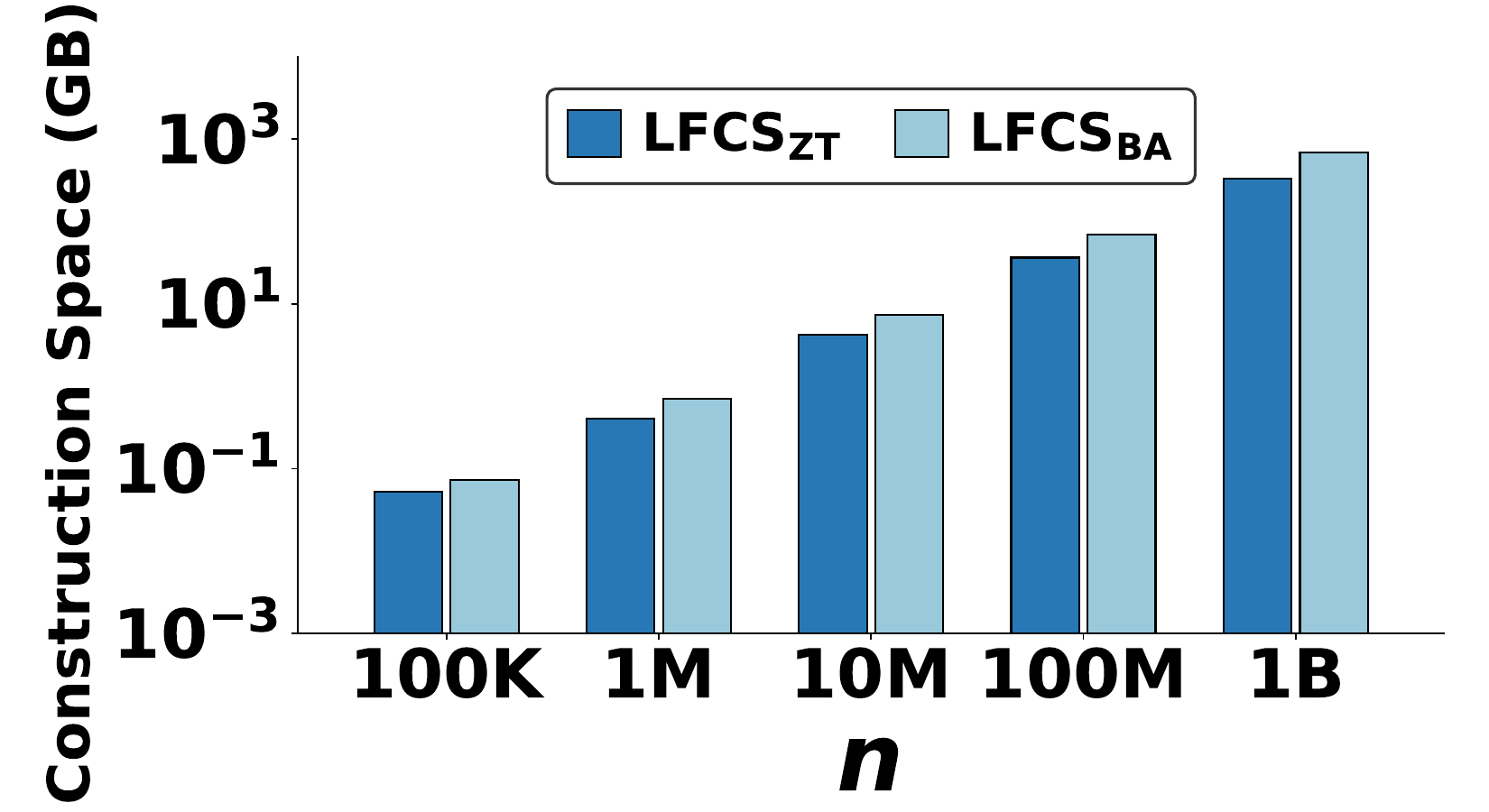}
    \caption{Constr.\ space vs. $n$}\label{fig:app:LFCS:n:rss:SARS}
  \end{subfigure}
  \begin{subfigure}[t]{\appfigwidth}
    \includegraphics[width=\linewidth]{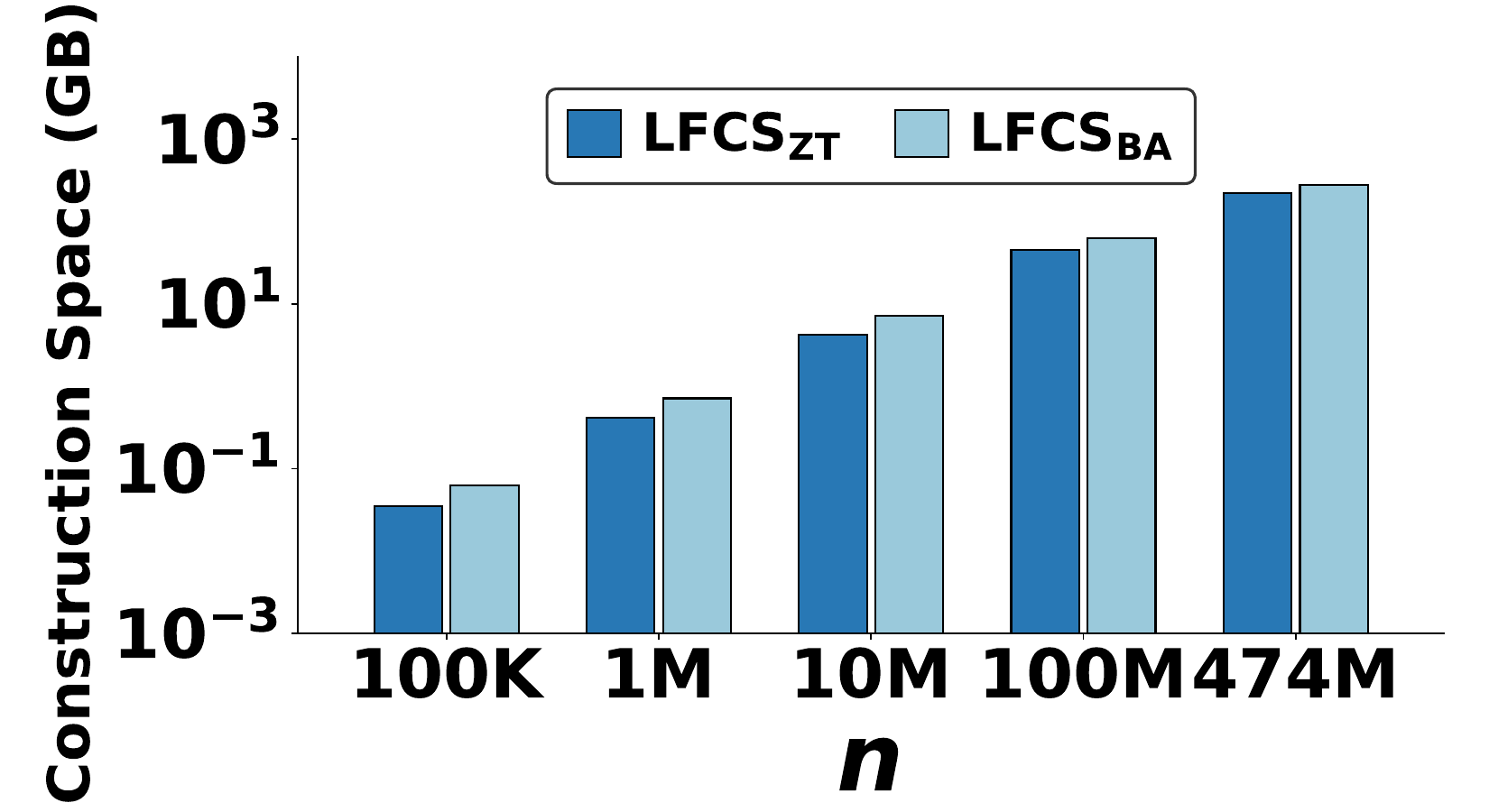}
    \caption{Constr.\ space vs. $n$}\label{fig:app:LFCS:n:rss:WIKI}
  \end{subfigure}\\[0pt]
  \begin{subfigure}[t]{\appfigwidth}
    \includegraphics[width=\linewidth]{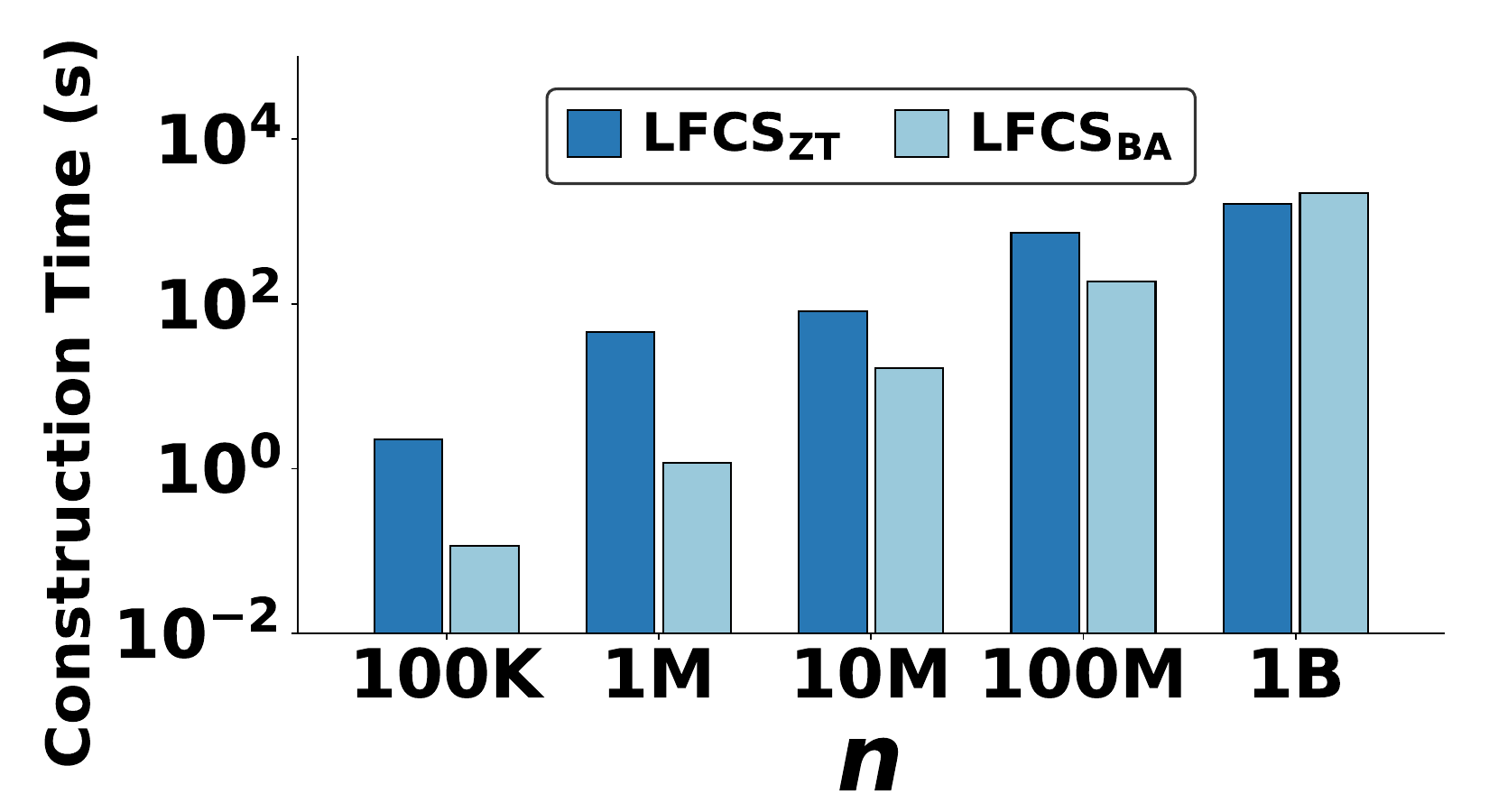}
    \caption{Constr.\ time vs. $n$}\label{fig:app:LFCS:n:build:BST}
  \end{subfigure}
  \begin{subfigure}[t]{\appfigwidth}
    \includegraphics[width=\linewidth]{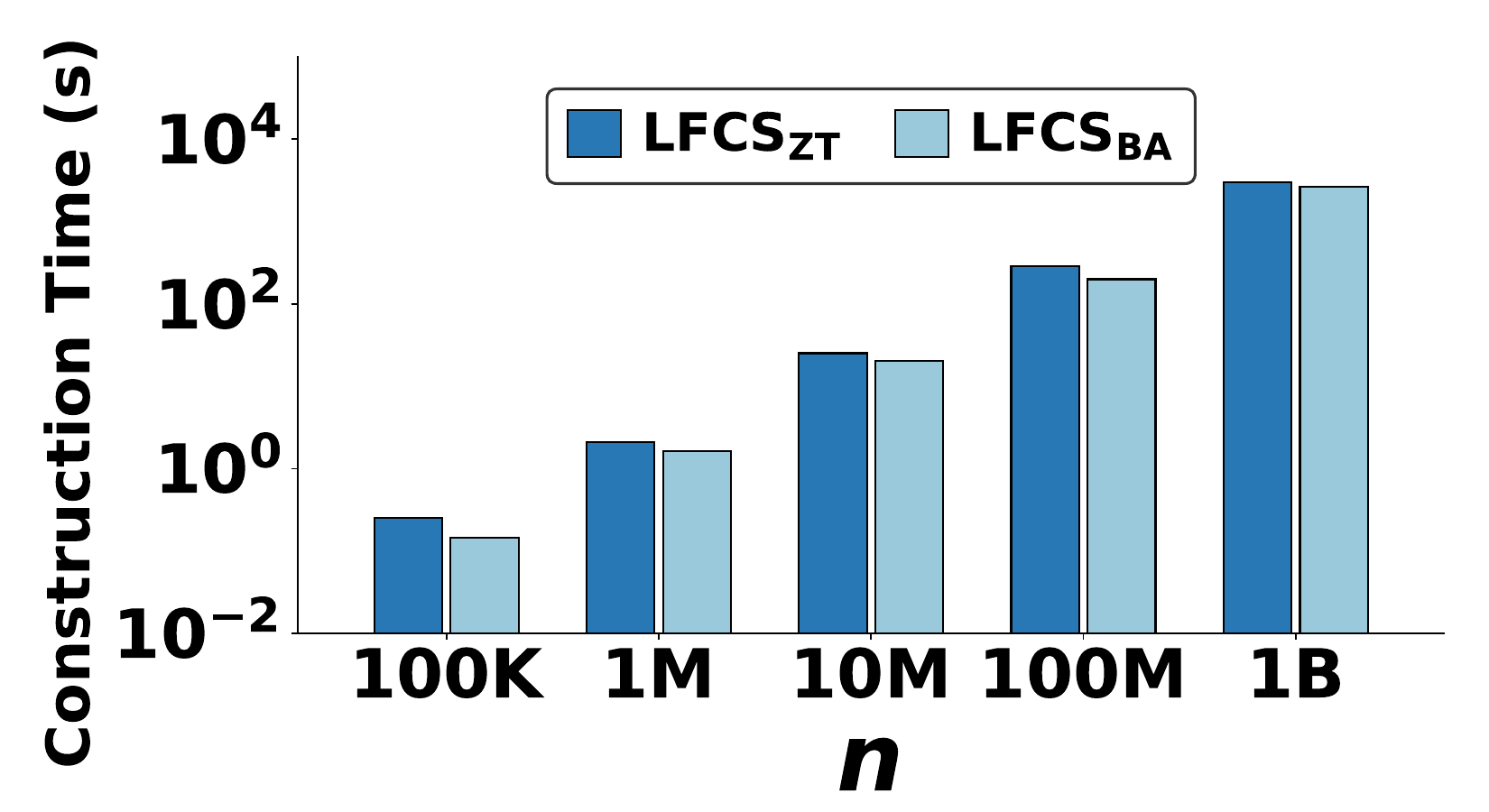}
    \caption{Constr.\ time vs. $n$}\label{fig:app:LFCS:n:build:CHR}
  \end{subfigure}
  \begin{subfigure}[t]{\appfigwidth}
    \includegraphics[width=\linewidth]{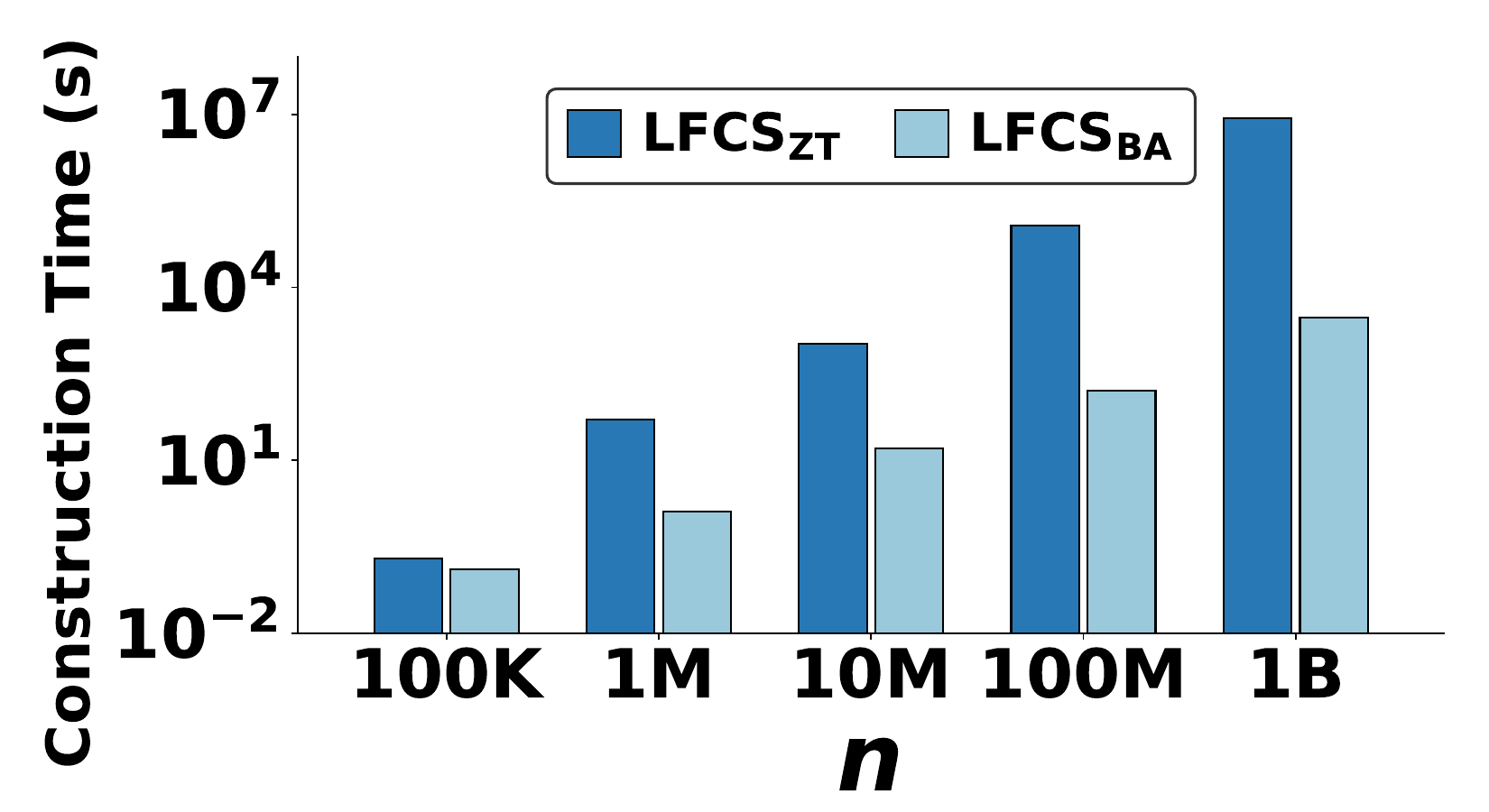}
    \caption{Constr.\ time vs. $n$}\label{fig:app:LFCS:n:build:SARS}
  \end{subfigure}
  \begin{subfigure}[t]{\appfigwidth}
    \includegraphics[width=\linewidth]{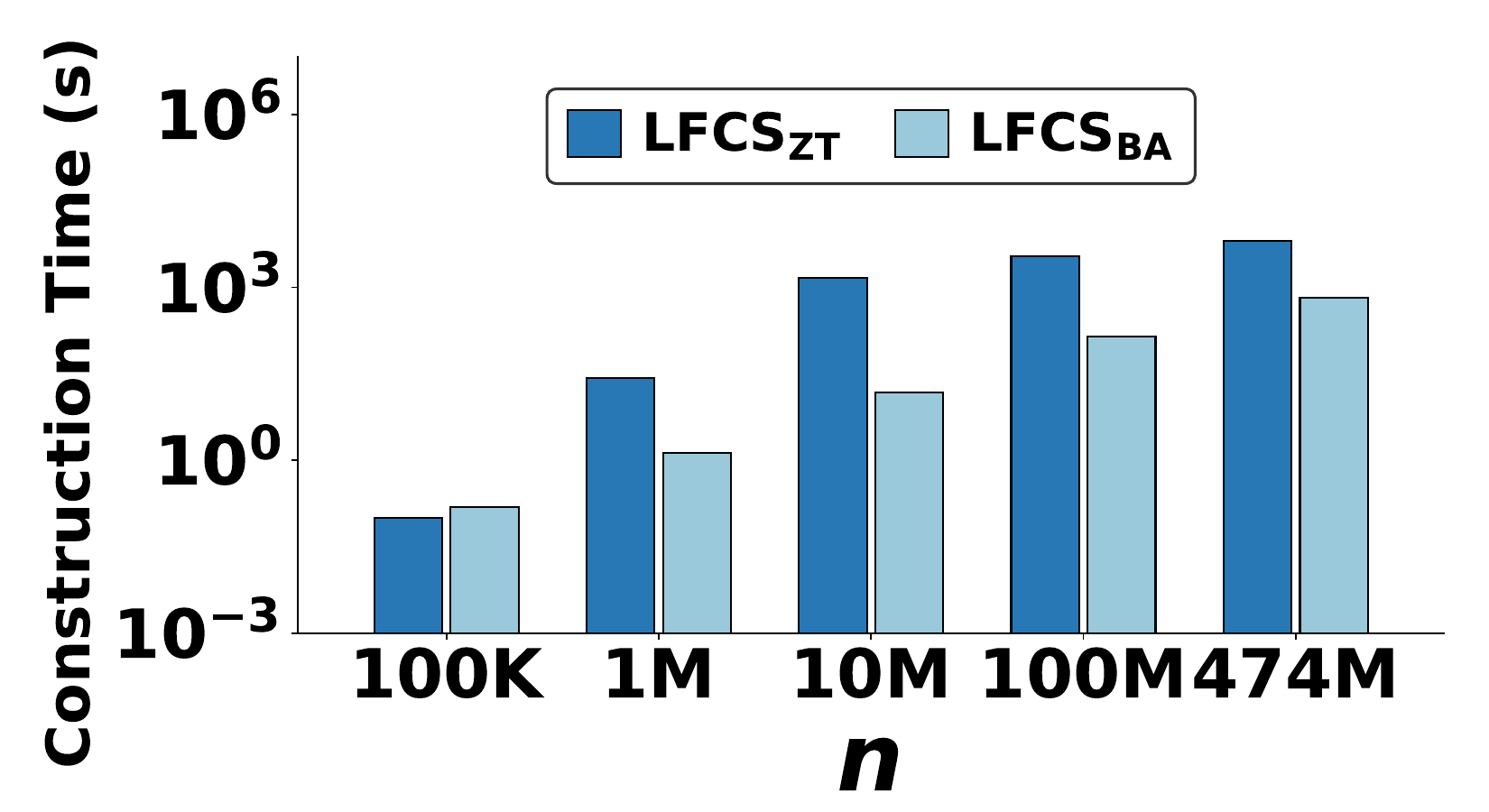}
    \caption{Constr.\ time vs. $n$}\label{fig:app:LFCS:n:build:WIKI}
  \end{subfigure}
  \vspace{\captionspacing}
  \vspace{+2mm}
  \caption{Index size of our \LFCS index vs. \LFCSBA on (a) \bst, (b) \chr, (c) \sars, and (d) \wiki vs. $n$; construction space of our \LFCS index vs. \LFCSBA on (e) \bst, (f) \chr, (g) \sars, and (h) \wiki vs. $n$; construction time of our \LFCS index vs. \LFCSBA on (i) \bst, (j) \chr, (k) \sars, and (l) \wiki vs. $n$.}\label{fig:app:LFCS:cost}
\end{figure}

\begin{figure}[ht]
  \centering
  \begin{subfigure}[t]{\appfigwidth}
    \includegraphics[width=\linewidth]{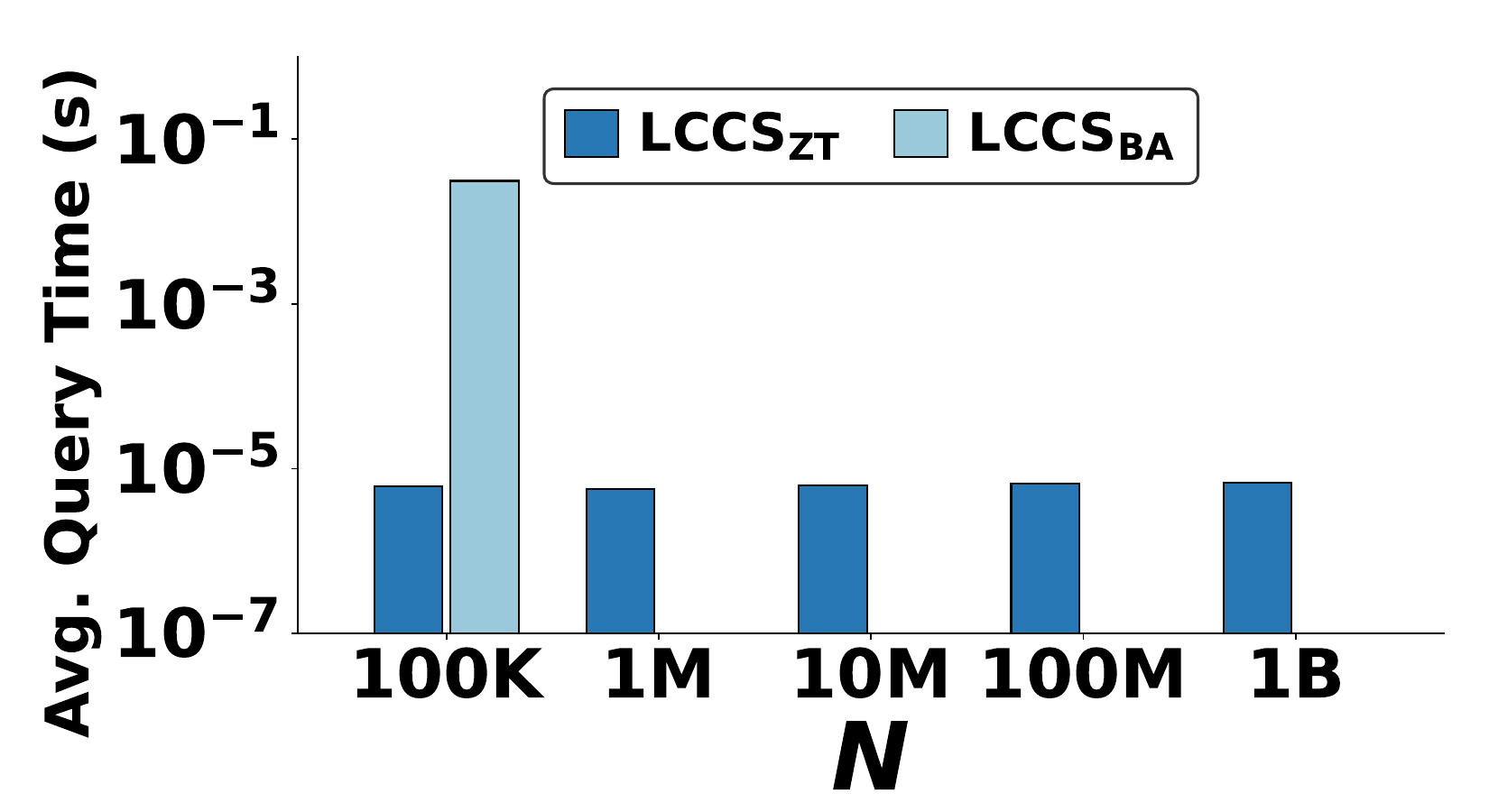}
    \caption{Query time vs. $N$}\label{fig:app:LCCS:n:query:BST}
  \end{subfigure}
  \begin{subfigure}[t]{\appfigwidth}
    \includegraphics[width=\linewidth]{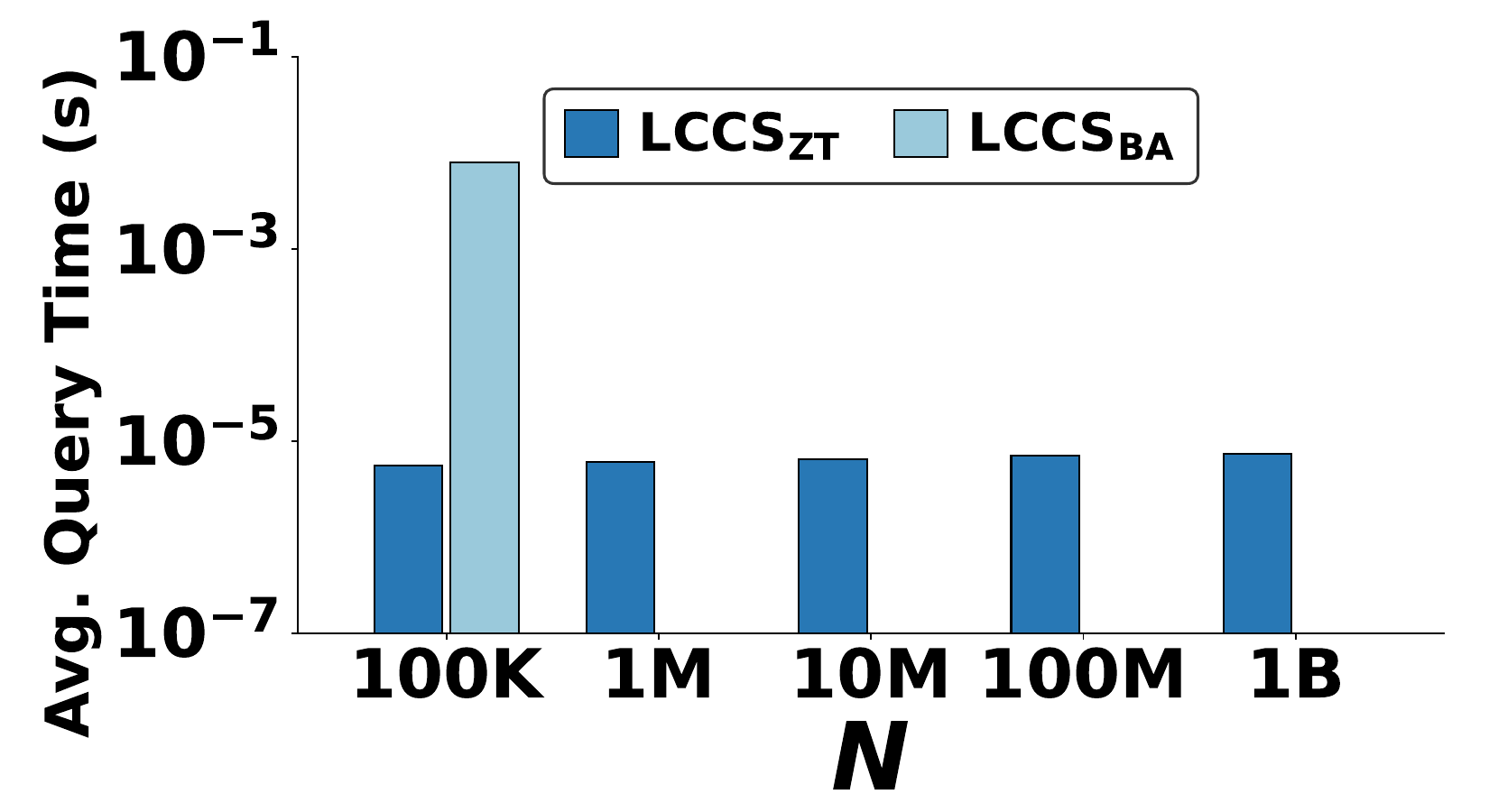}
    \caption{Query time vs. $N$}\label{fig:app:LCCS:n:query:CHR}
  \end{subfigure}
  \begin{subfigure}[t]{\appfigwidth}
    \includegraphics[width=\linewidth]{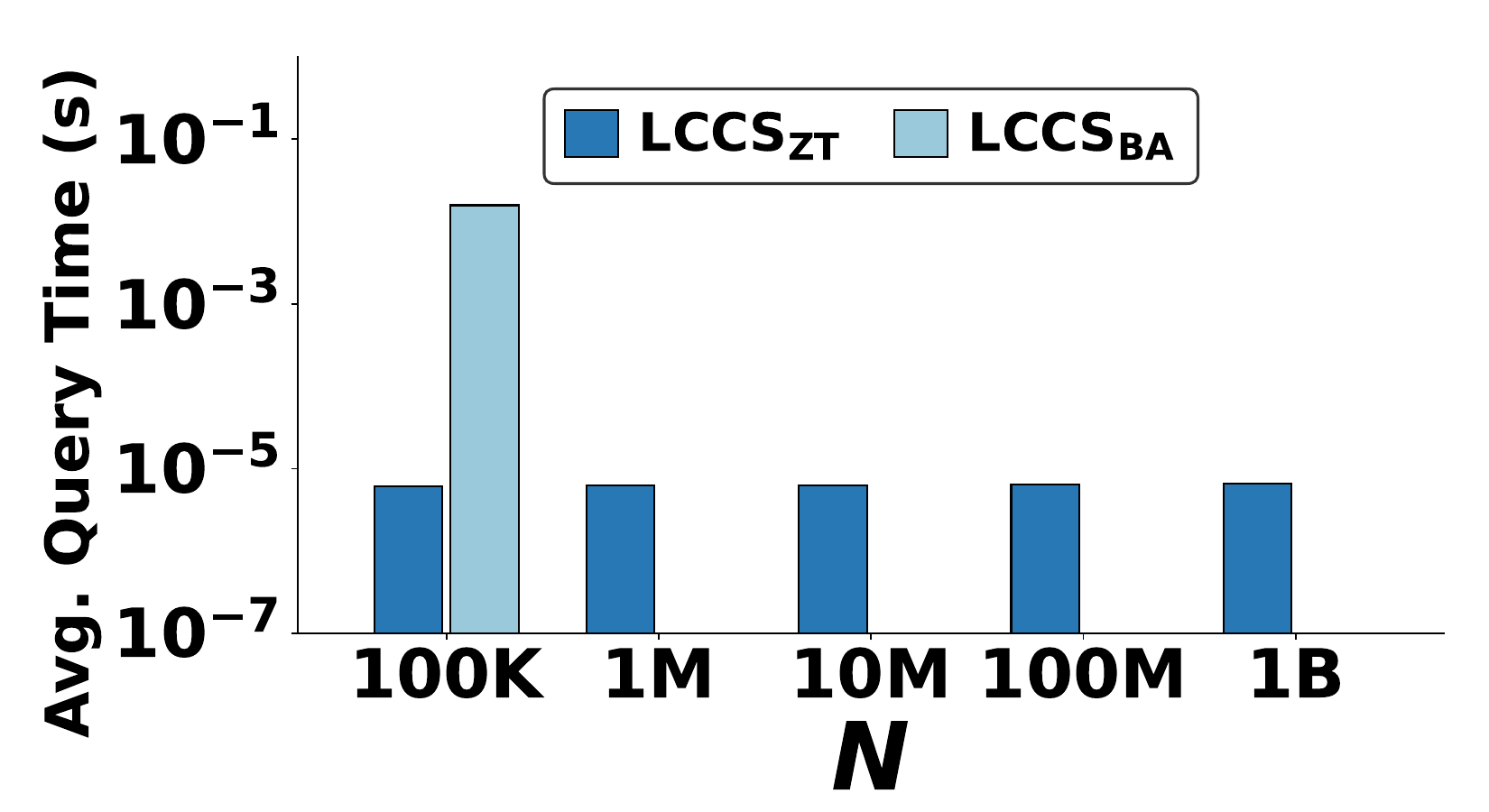}
    \caption{Query time vs. $N$}\label{fig:app:LCCS:n:query:SDSL}
  \end{subfigure}
  \begin{subfigure}[t]{\appfigwidth}
    \includegraphics[width=\linewidth]{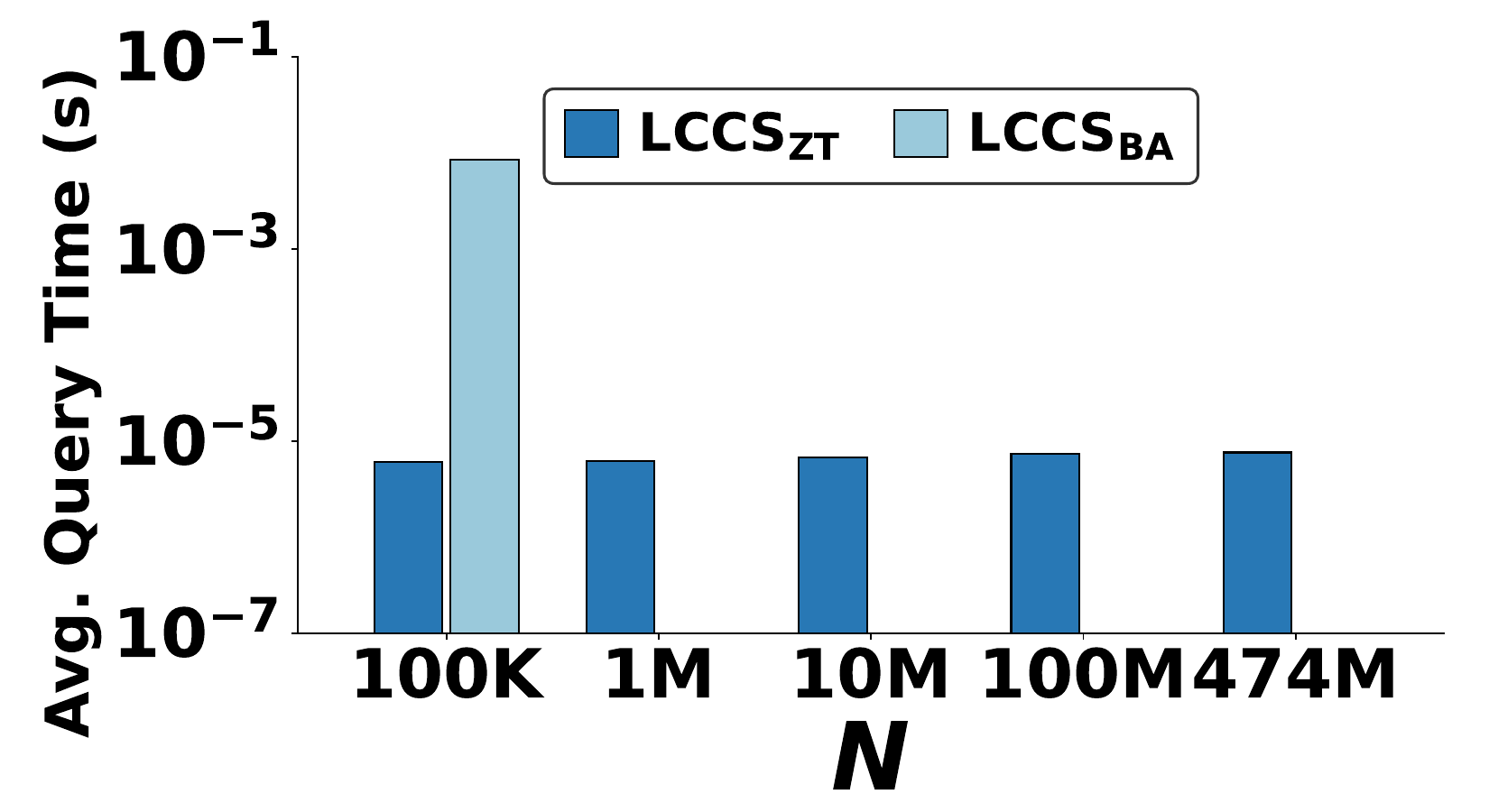}
    \caption{Query time vs. $N$}\label{fig:app:LCCS:n:query:WIKI}
  \end{subfigure}\\[0pt]
  \begin{subfigure}[t]{\appfigwidth}
    \includegraphics[width=\linewidth]{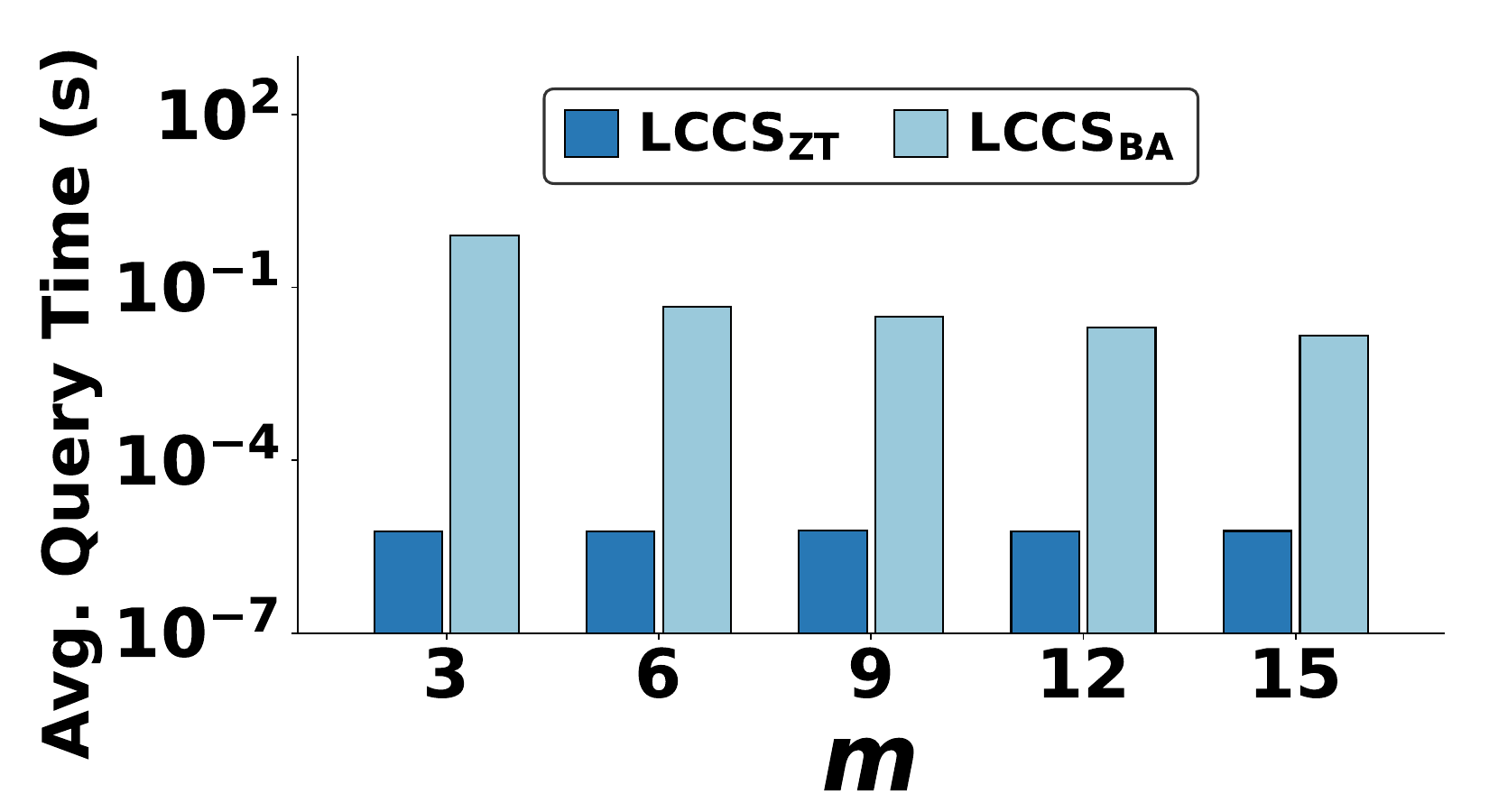}
    \caption{Query time vs. $m$}\label{fig:app:LCCS:m:query:BST}
  \end{subfigure}
  \begin{subfigure}[t]{\appfigwidth}
    \includegraphics[width=\linewidth]{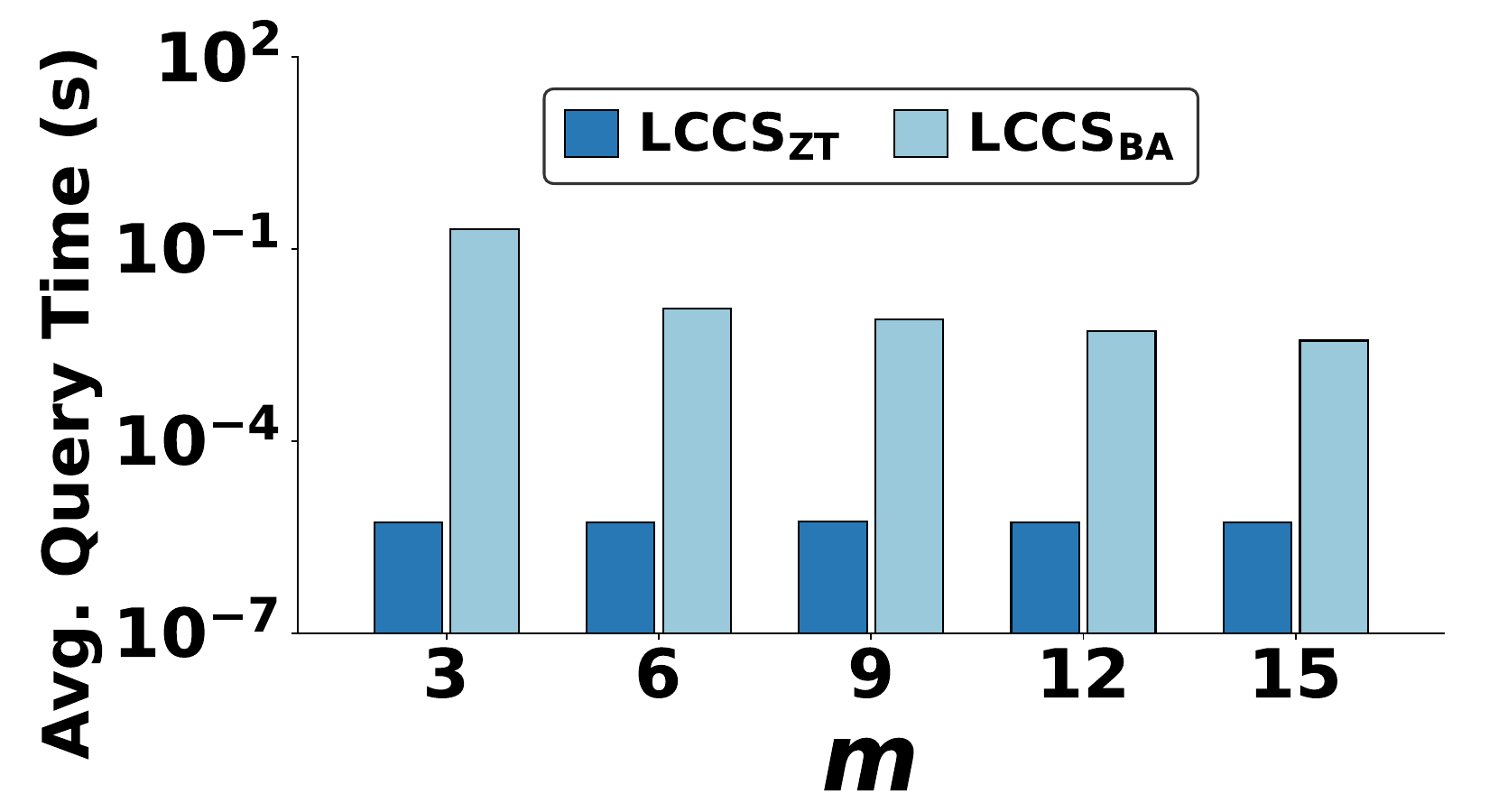}
    \caption{Query time vs. $m$}\label{fig:app:LCCS:m:query:CHR}
  \end{subfigure}
  \begin{subfigure}[t]{\appfigwidth}
    \includegraphics[width=\linewidth]{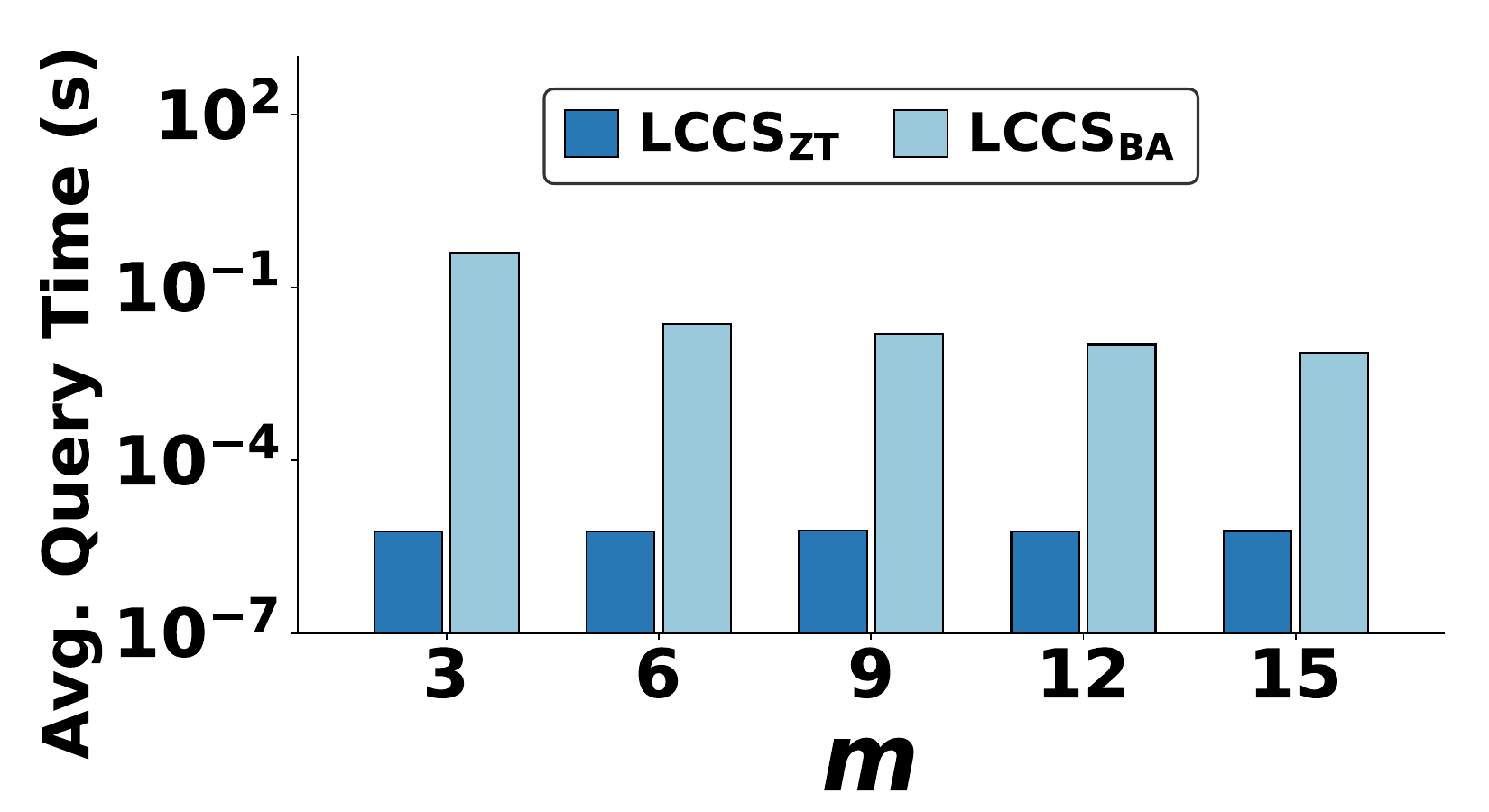}
    \caption{Query time vs. $m$}\label{fig:app:LCCS:m:query:SDSL}
  \end{subfigure}
  \begin{subfigure}[t]{\appfigwidth}
    \includegraphics[width=\linewidth]{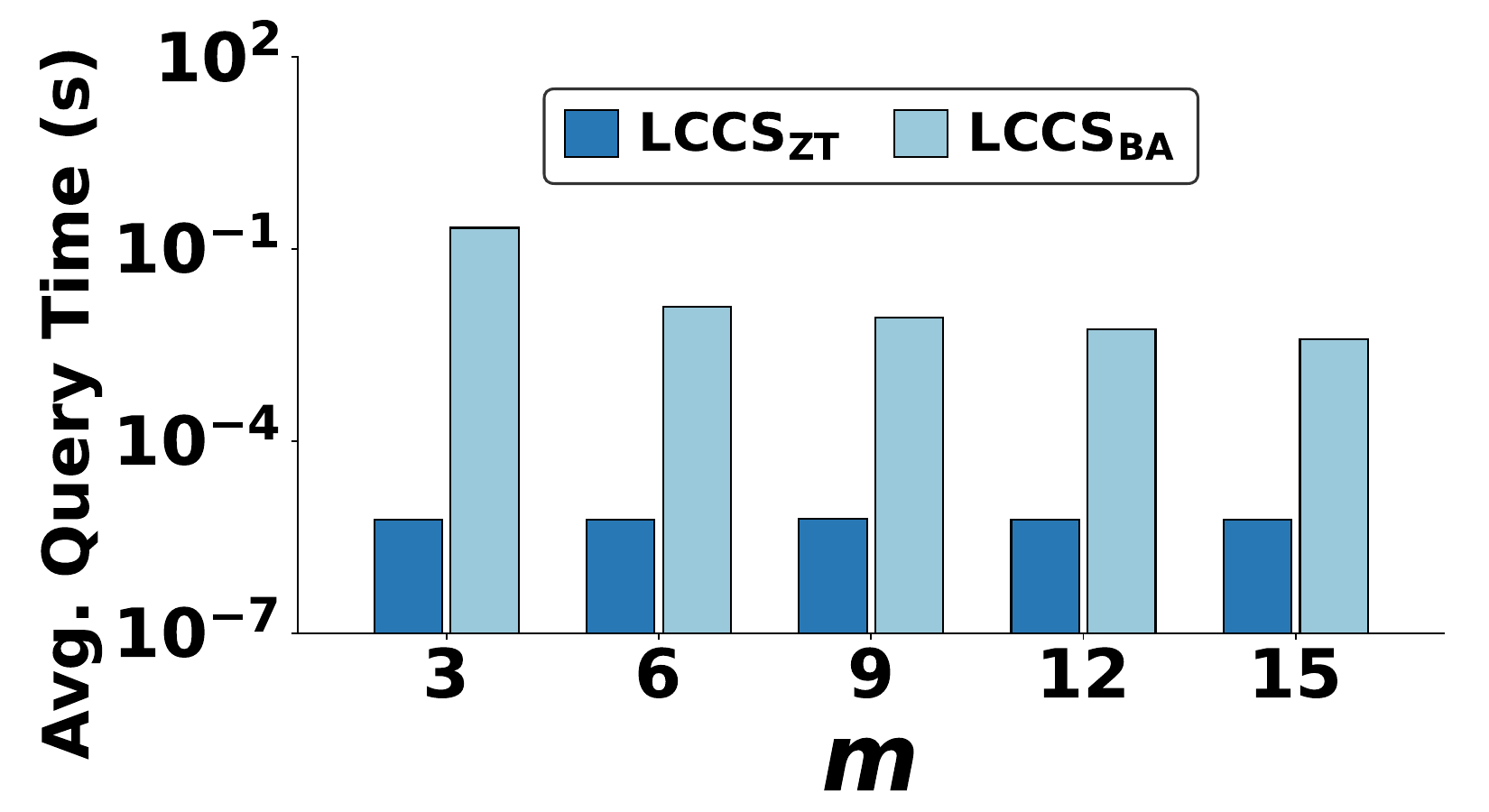}
    \caption{Query time vs. $m$}\label{fig:app:LCCS:m:query:WIKI}
  \end{subfigure}\\[0pt]
  \begin{subfigure}[t]{\appfigwidth}
    \includegraphics[width=\linewidth]{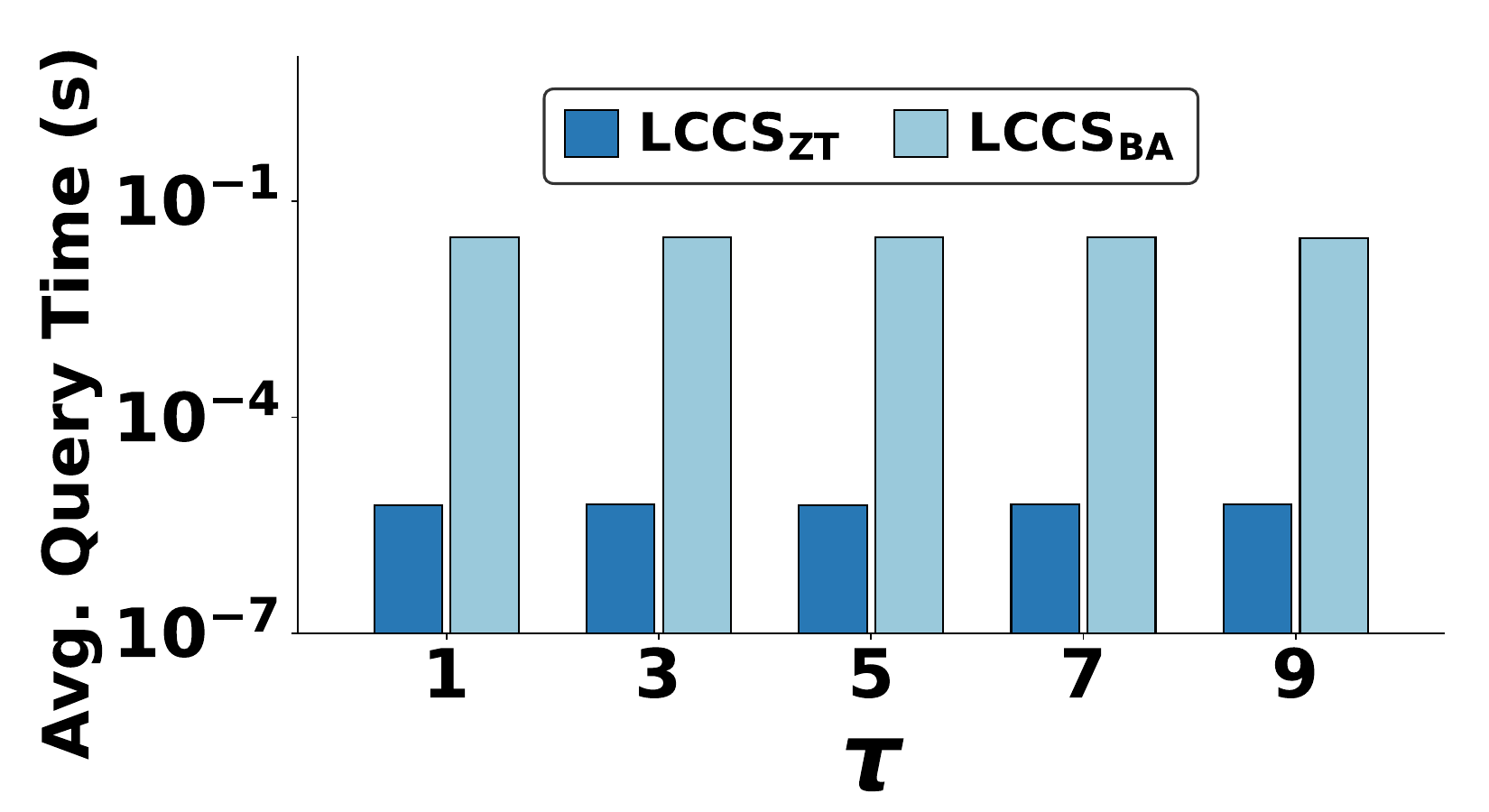}
    \caption{Query time vs. $\tau$}\label{fig:app:LCCS:tau:query:BST}
  \end{subfigure}
  \begin{subfigure}[t]{\appfigwidth}
    \includegraphics[width=\linewidth]{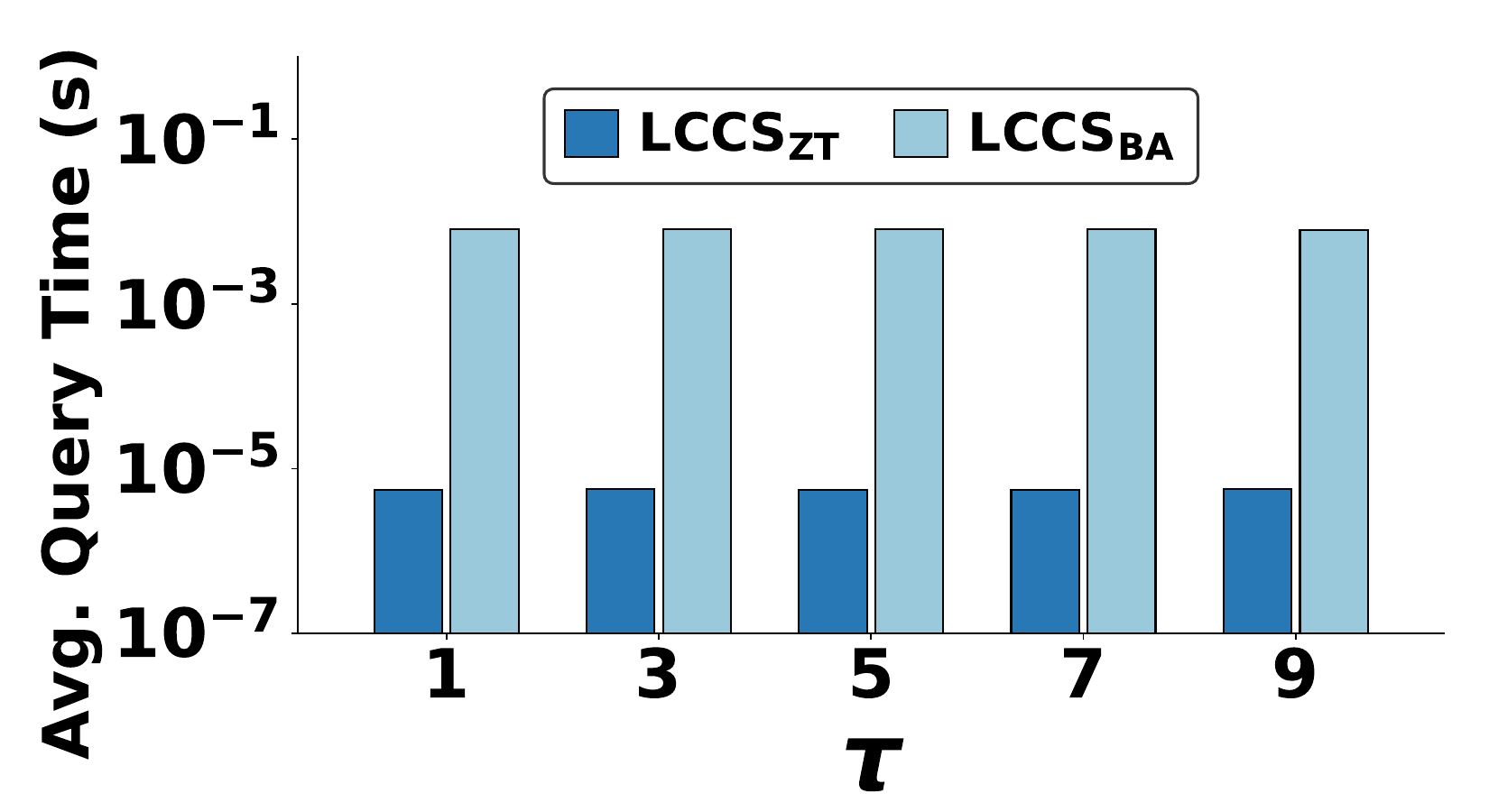}
    \caption{Query time vs. $\tau$}\label{fig:app:LCCS:tau:query:CHR}
  \end{subfigure}
  \begin{subfigure}[t]{\appfigwidth}
    \includegraphics[width=\linewidth]{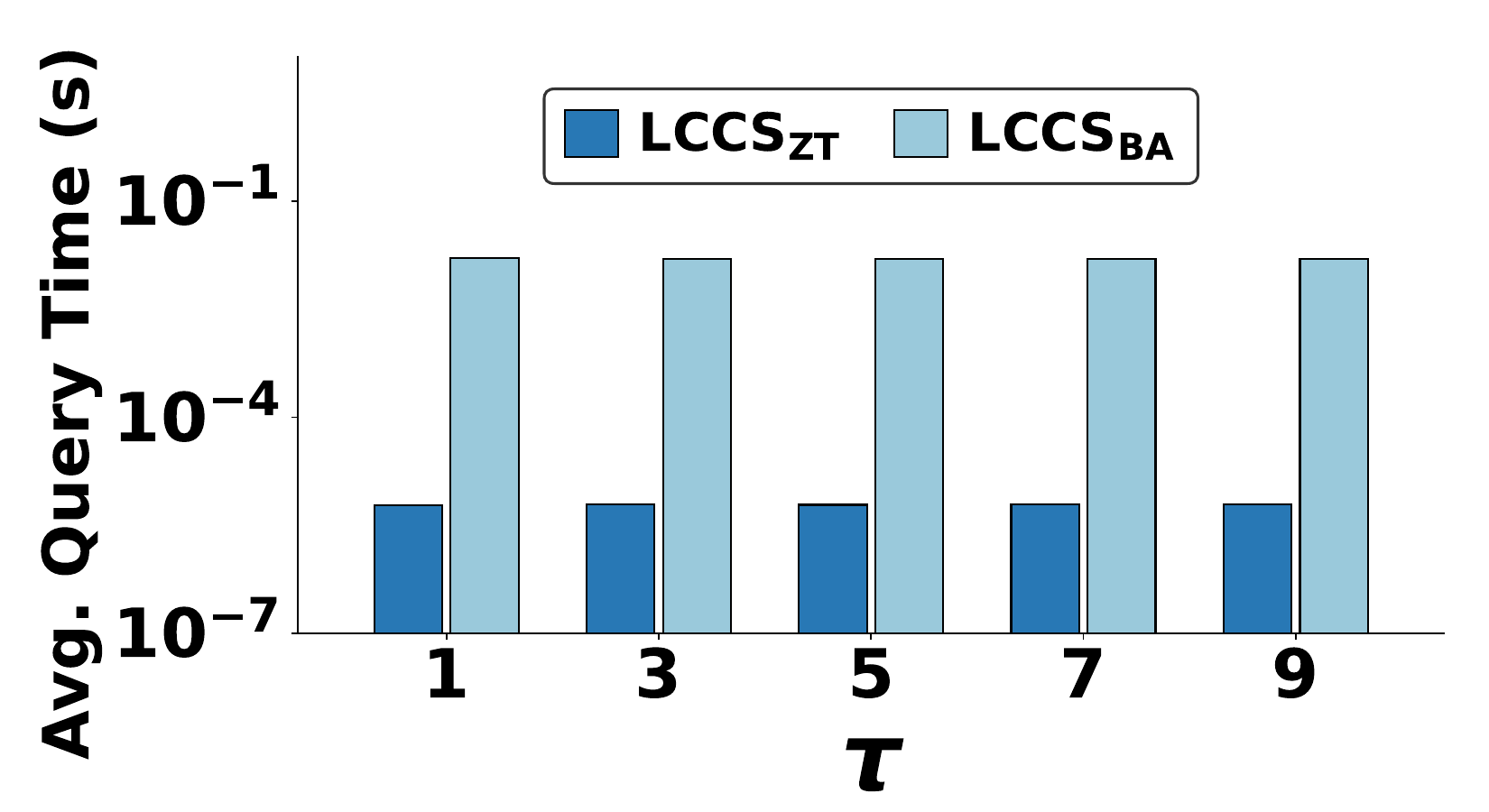}
    \caption{Query time vs. $\tau$}\label{fig:app:LCCS:tau:query:SDSL}
  \end{subfigure}
  \begin{subfigure}[t]{\appfigwidth}
    \includegraphics[width=\linewidth]{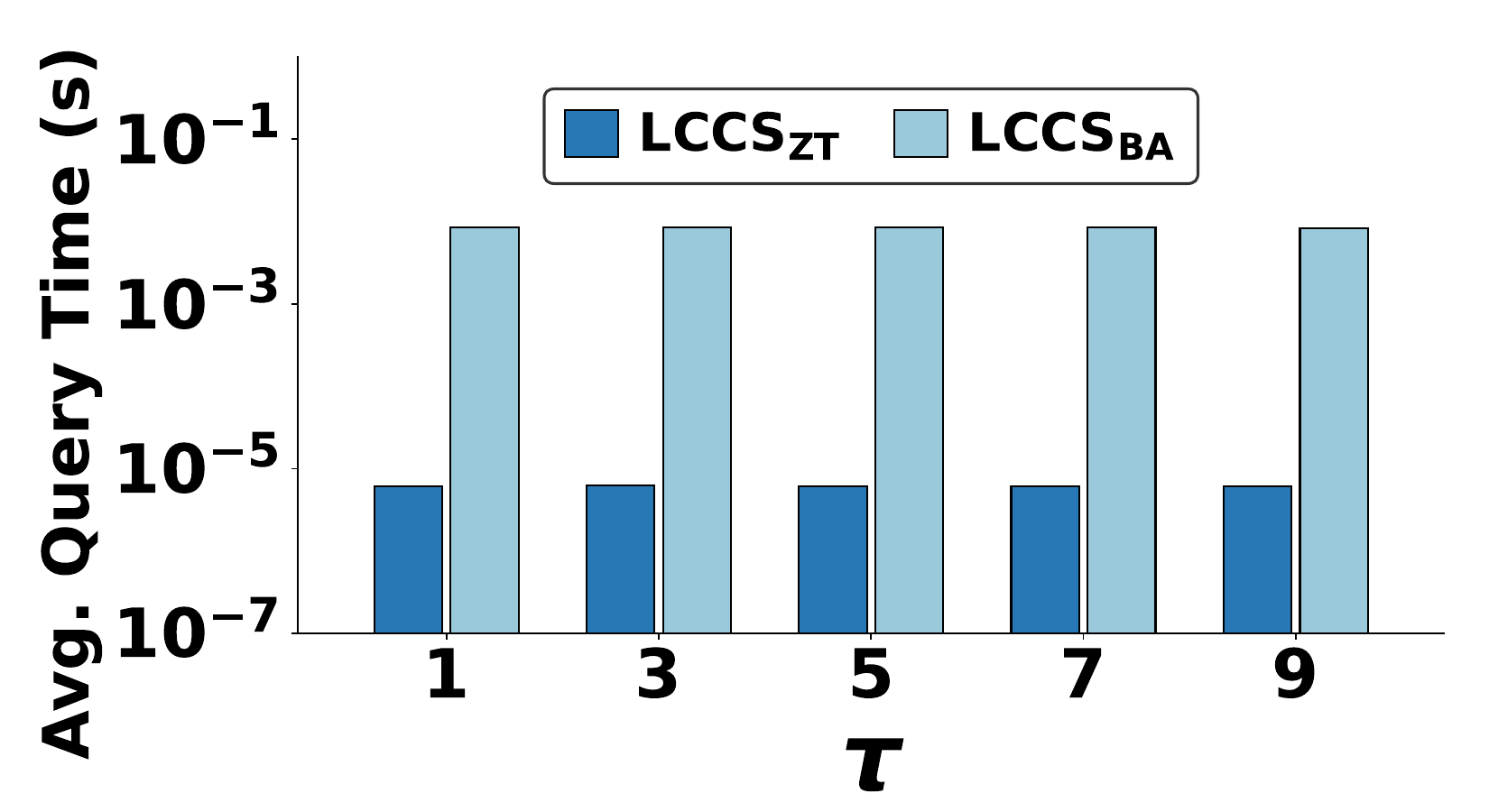}
    \caption{Query time vs. $\tau$}\label{fig:app:LCCS:tau:query:WIKI}
  \end{subfigure}
  \vspace{\captionspacing}
  \vspace{+2mm}
  \caption{Query time of our \LCCS index vs. \LCCSBA on (a) \bst, (b) \chr, (c) \sdsl, and (d) \wiki vs. $N$; on (e) \bst, (f) \chr, (g) \sdsl, and (h) \wiki vs. $m$; on (i) \bst, (j) \chr, (k) \sdsl, and (l) \wiki vs. $\tau$.}\label{fig:app:LCCS:query}
\end{figure}

\begin{figure}[ht]
  \centering
  \begin{subfigure}[t]{\appfigwidth}
    \includegraphics[width=\linewidth]{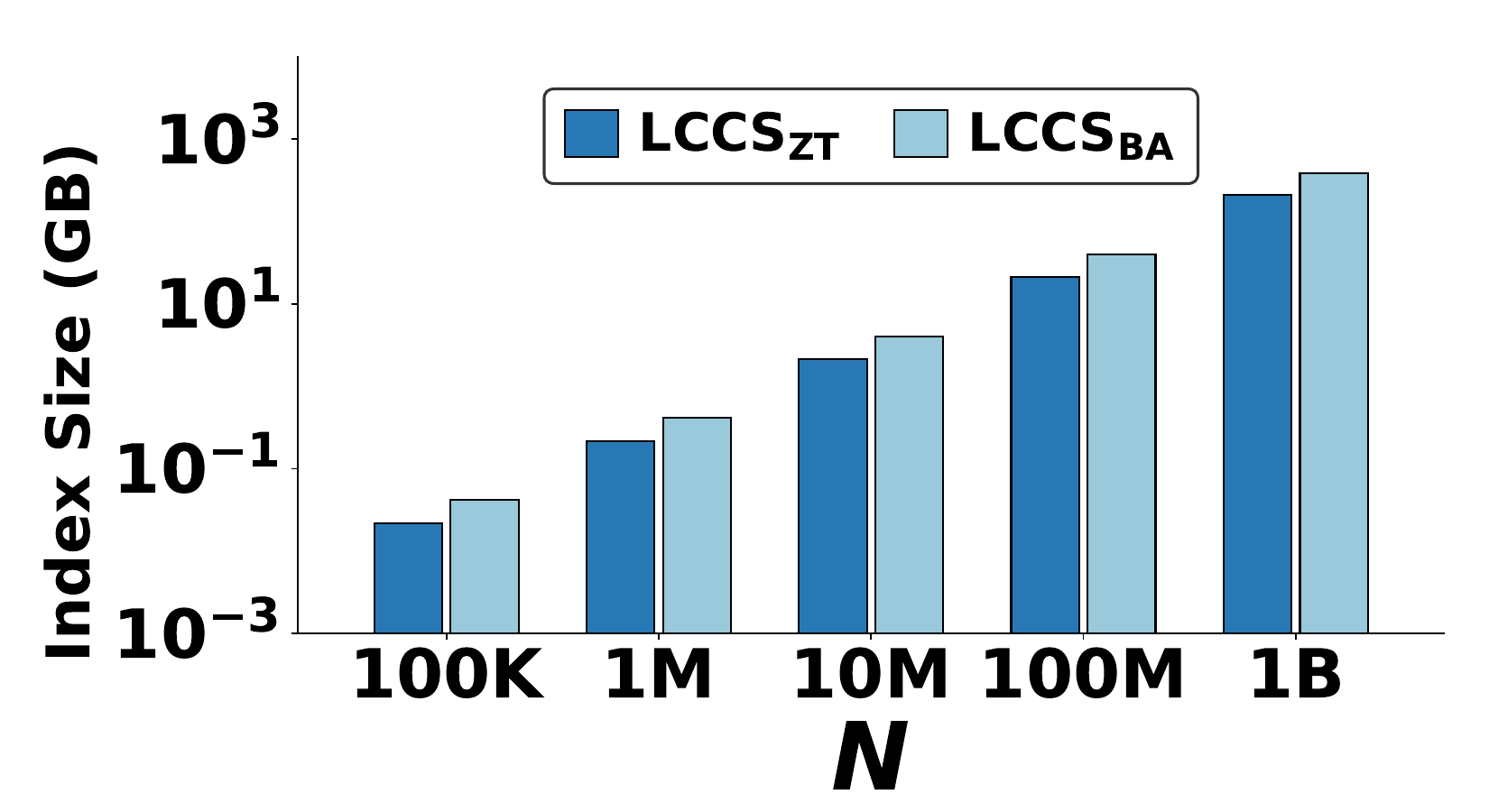}
    \caption{Index size vs. $N$}\label{fig:app:LCCS:n:index:BST}
  \end{subfigure}
  \begin{subfigure}[t]{\appfigwidth}
    \includegraphics[width=\linewidth]{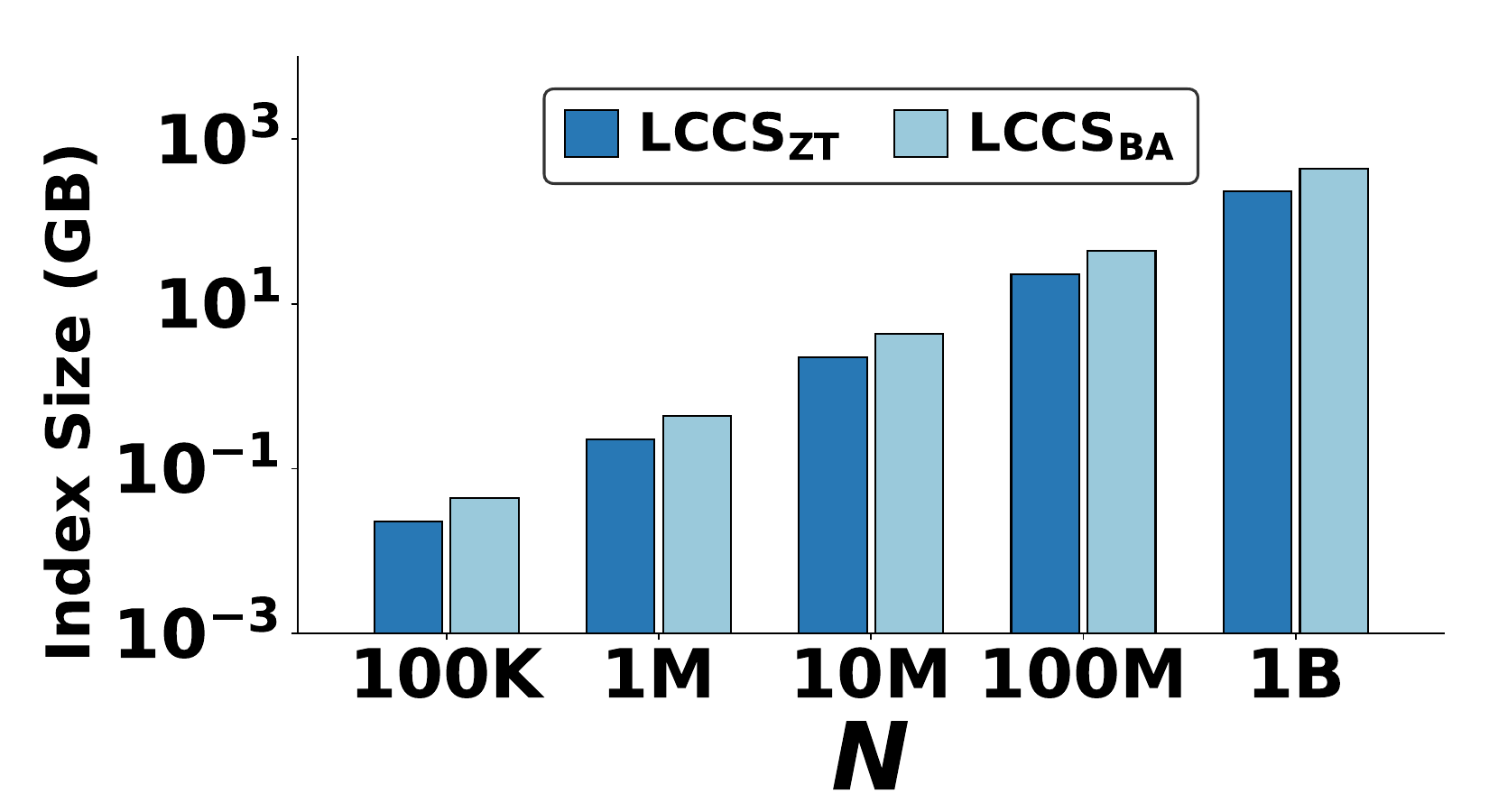}
    \caption{Index size vs. $N$}\label{fig:app:LCCS:n:index:CHR}
  \end{subfigure}
  \begin{subfigure}[t]{\appfigwidth}
    \includegraphics[width=\linewidth]{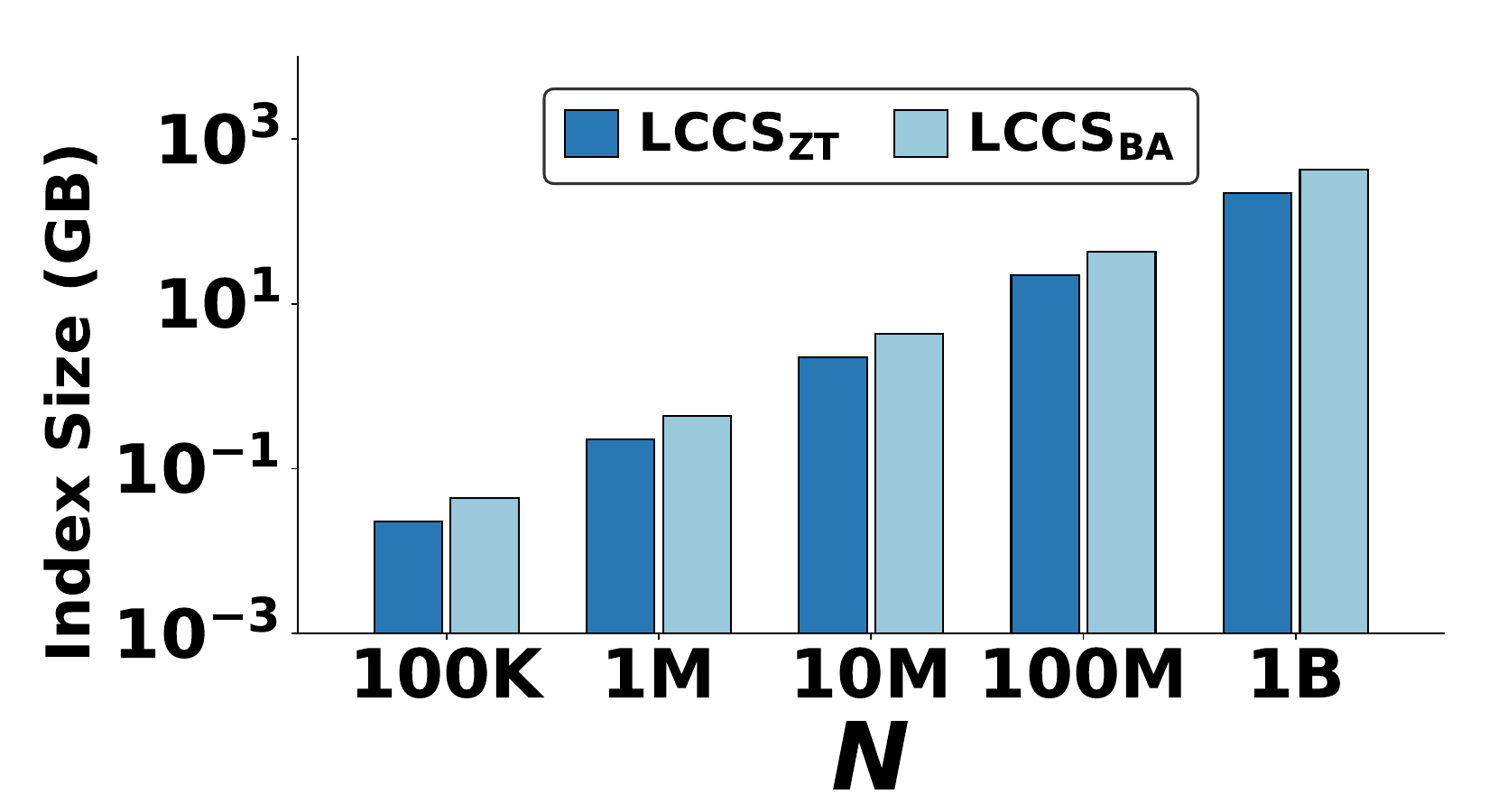}
    \caption{Index size vs. $N$}\label{fig:app:LCCS:n:index:SDSL}
  \end{subfigure}
  \begin{subfigure}[t]{\appfigwidth}
    \includegraphics[width=\linewidth]{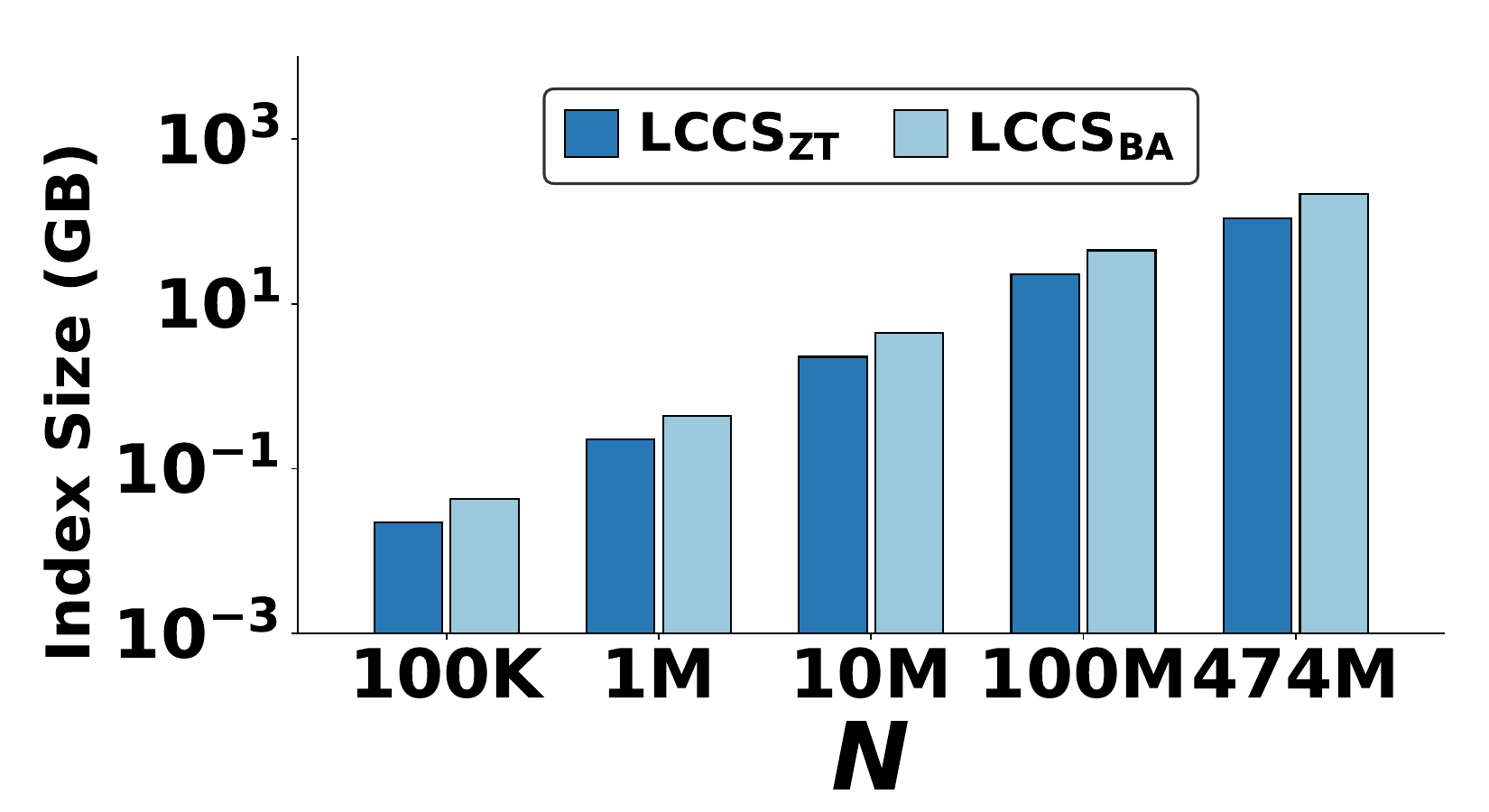}
    \caption{Index size vs. $N$}\label{fig:app:LCCS:n:index:WIKI}
  \end{subfigure}\\[0pt]
  \begin{subfigure}[t]{\appfigwidth}
    \includegraphics[width=\linewidth]{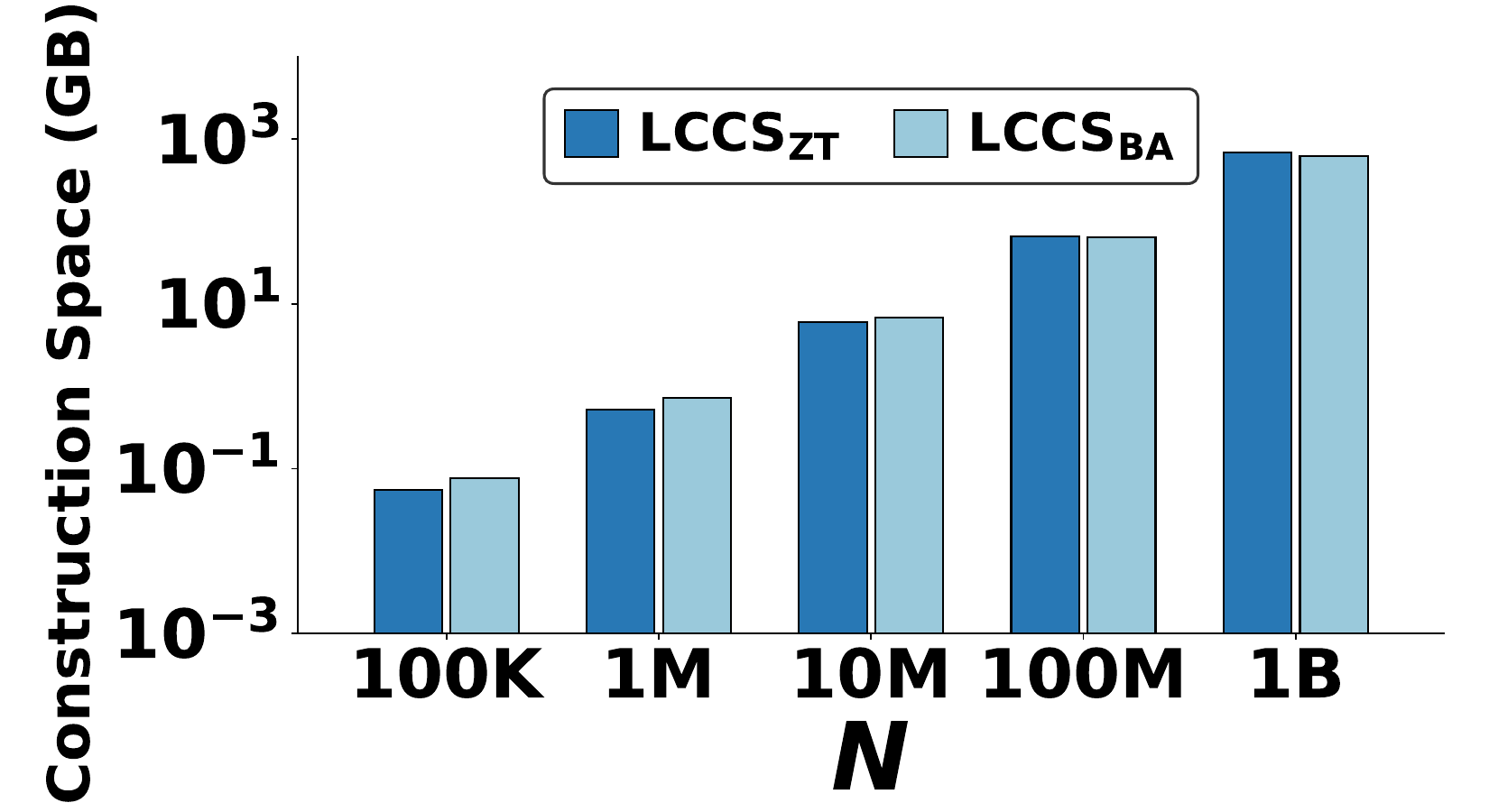}
    \caption{Constr.\ space vs. $N$}\label{fig:app:LCCS:n:rss:BST}
  \end{subfigure}
  \begin{subfigure}[t]{\appfigwidth}
    \includegraphics[width=\linewidth]{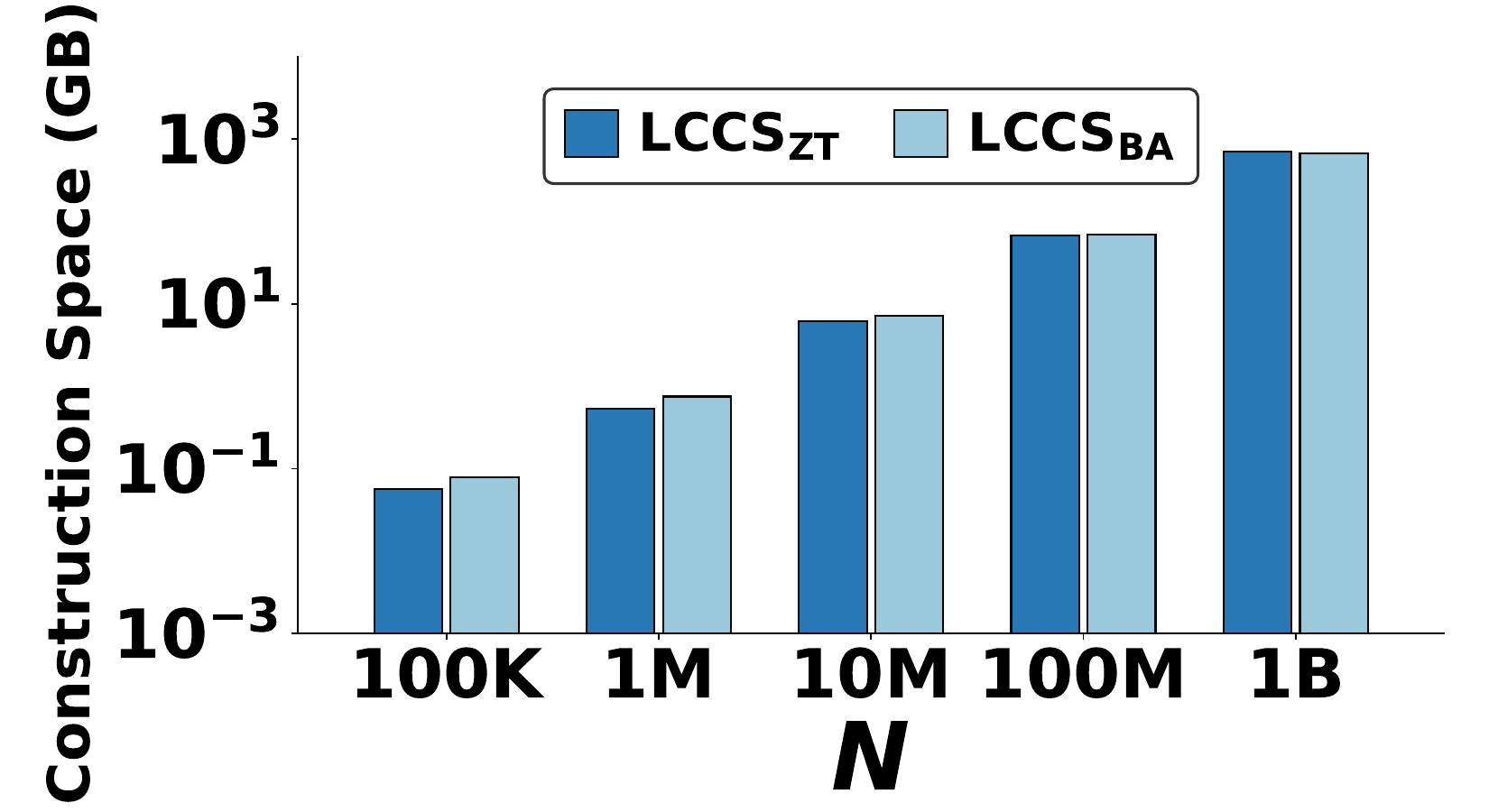}
    \caption{Constr.\ space vs. $N$}\label{fig:app:LCCS:n:rss:CHR}
  \end{subfigure}
  \begin{subfigure}[t]{\appfigwidth}
    \includegraphics[width=\linewidth]{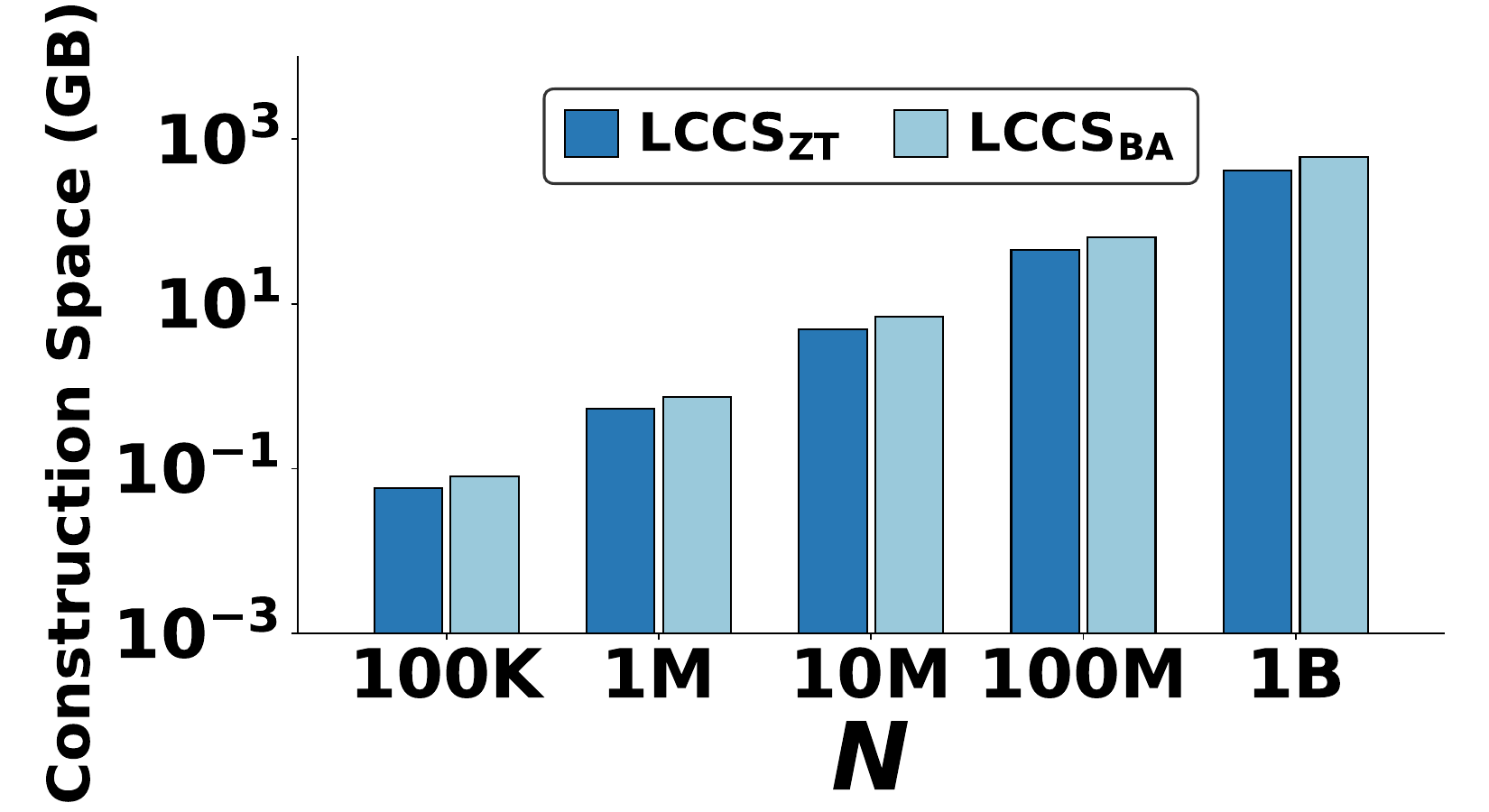}
    \caption{Constr.\ space vs. $N$}\label{fig:app:LCCS:n:rss:SDSL}
  \end{subfigure}
  \begin{subfigure}[t]{\appfigwidth}
    \includegraphics[width=\linewidth]{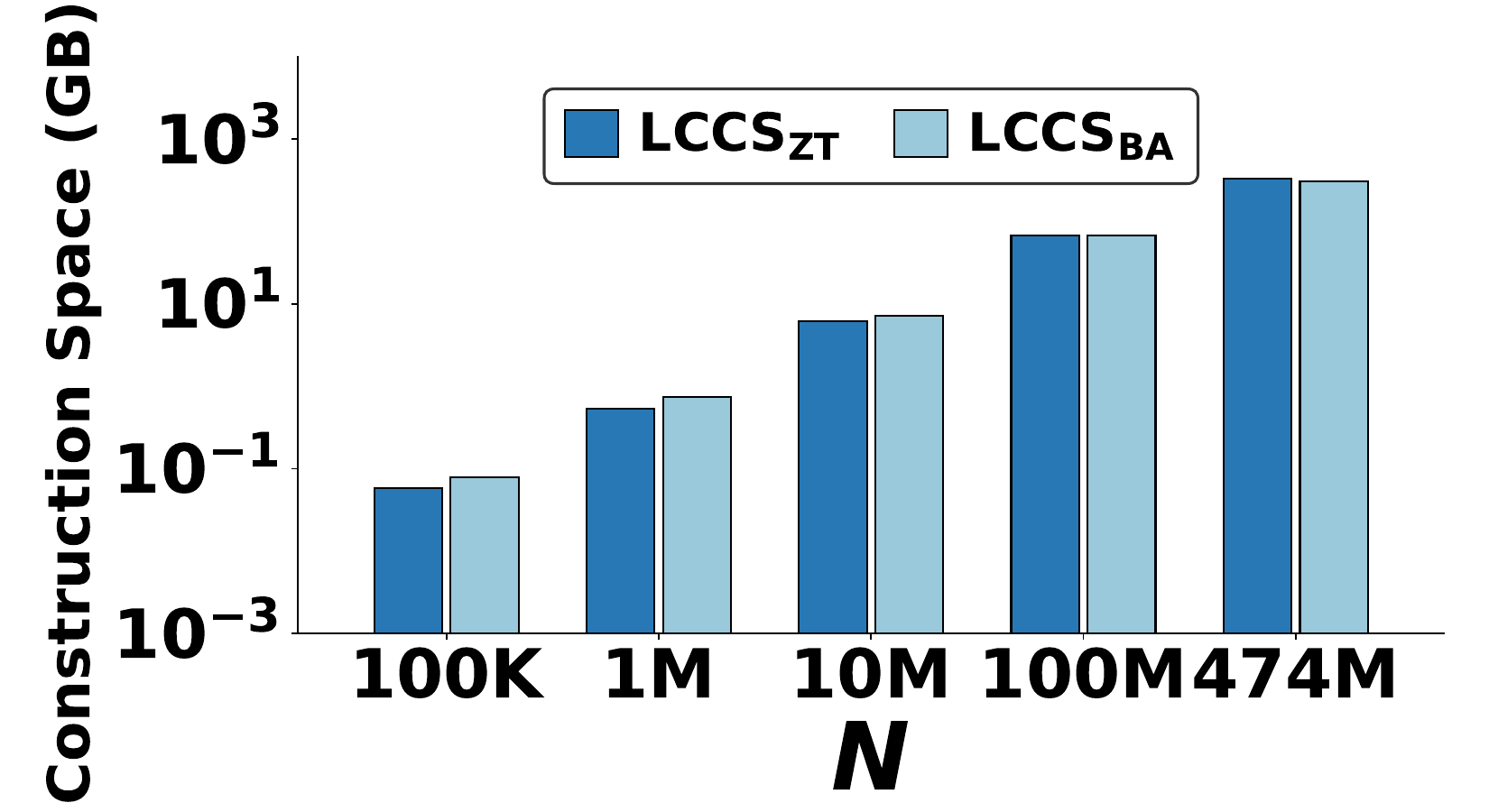}
    \caption{Constr.\ space vs. $N$}\label{fig:app:LCCS:n:rss:WIKI}
  \end{subfigure}\\[0pt]
  \begin{subfigure}[t]{\appfigwidth}
    \includegraphics[width=\linewidth]{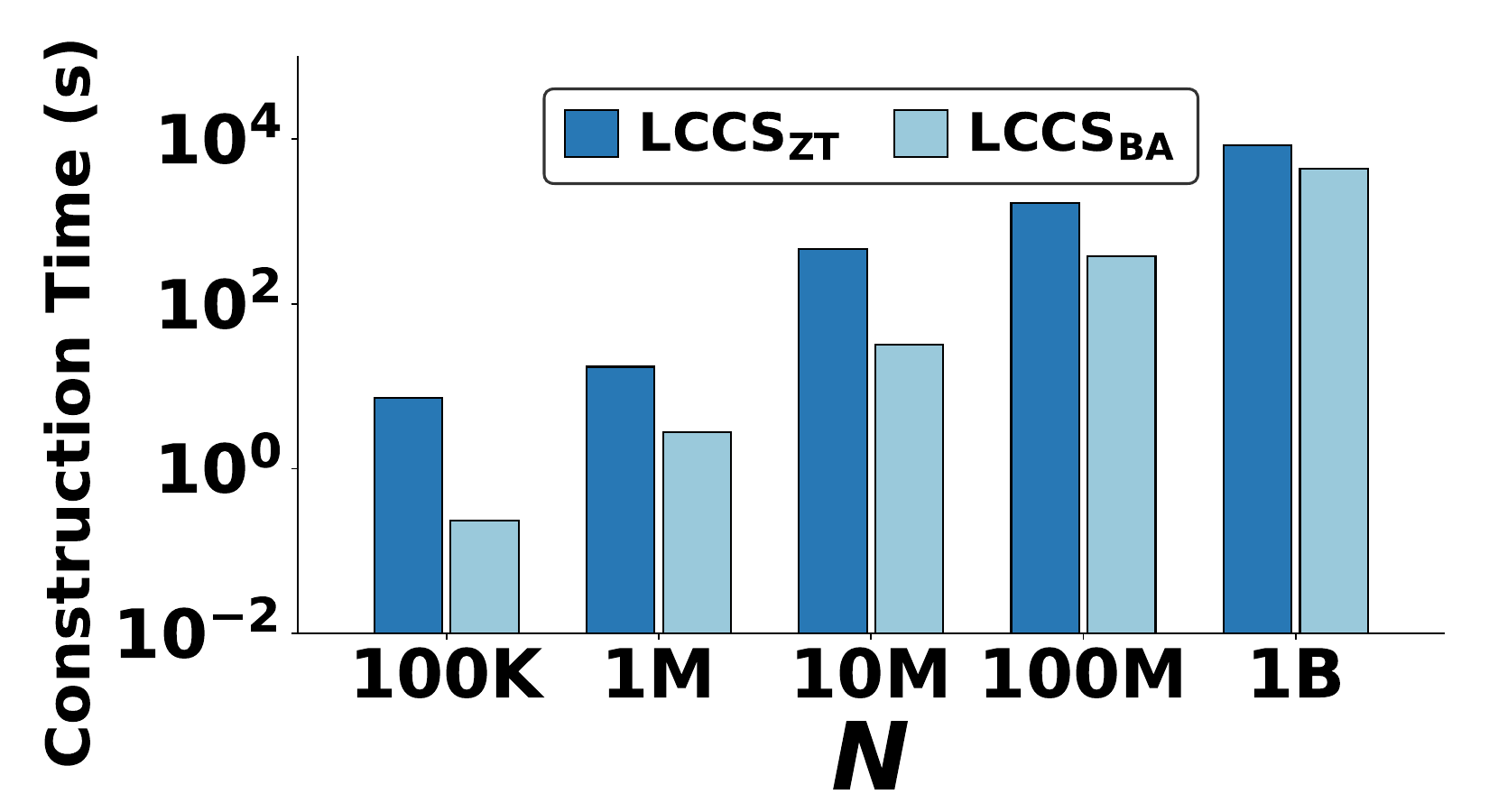}
    \caption{Constr.\ time vs. $N$}\label{fig:app:LCCS:n:build:BST}
  \end{subfigure}
  \begin{subfigure}[t]{\appfigwidth}
    \includegraphics[width=\linewidth]{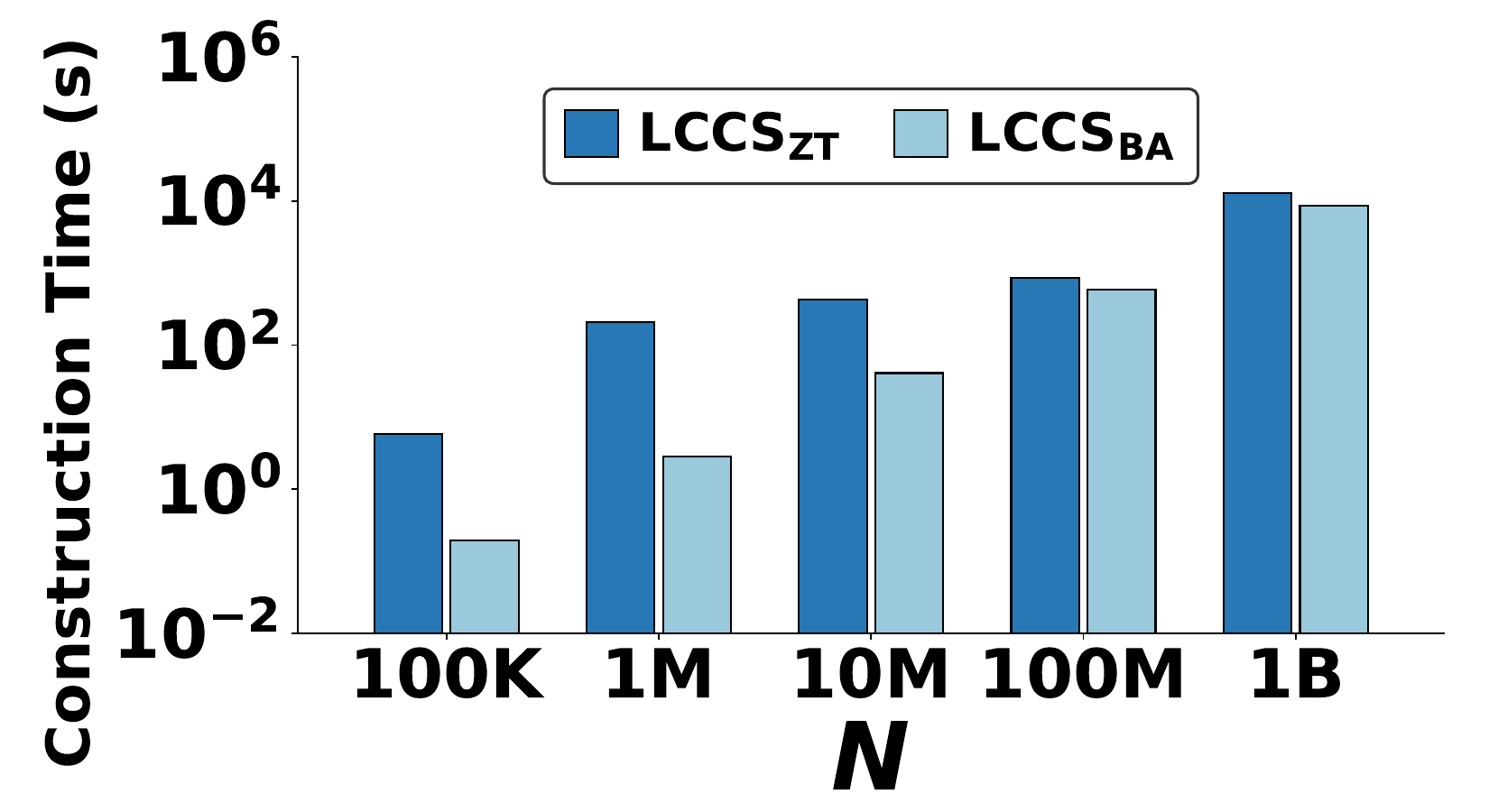}
    \caption{Constr.\ time vs. $N$}\label{fig:app:LCCS:n:build:CHR}
  \end{subfigure}
  \begin{subfigure}[t]{\appfigwidth}
    \includegraphics[width=\linewidth]{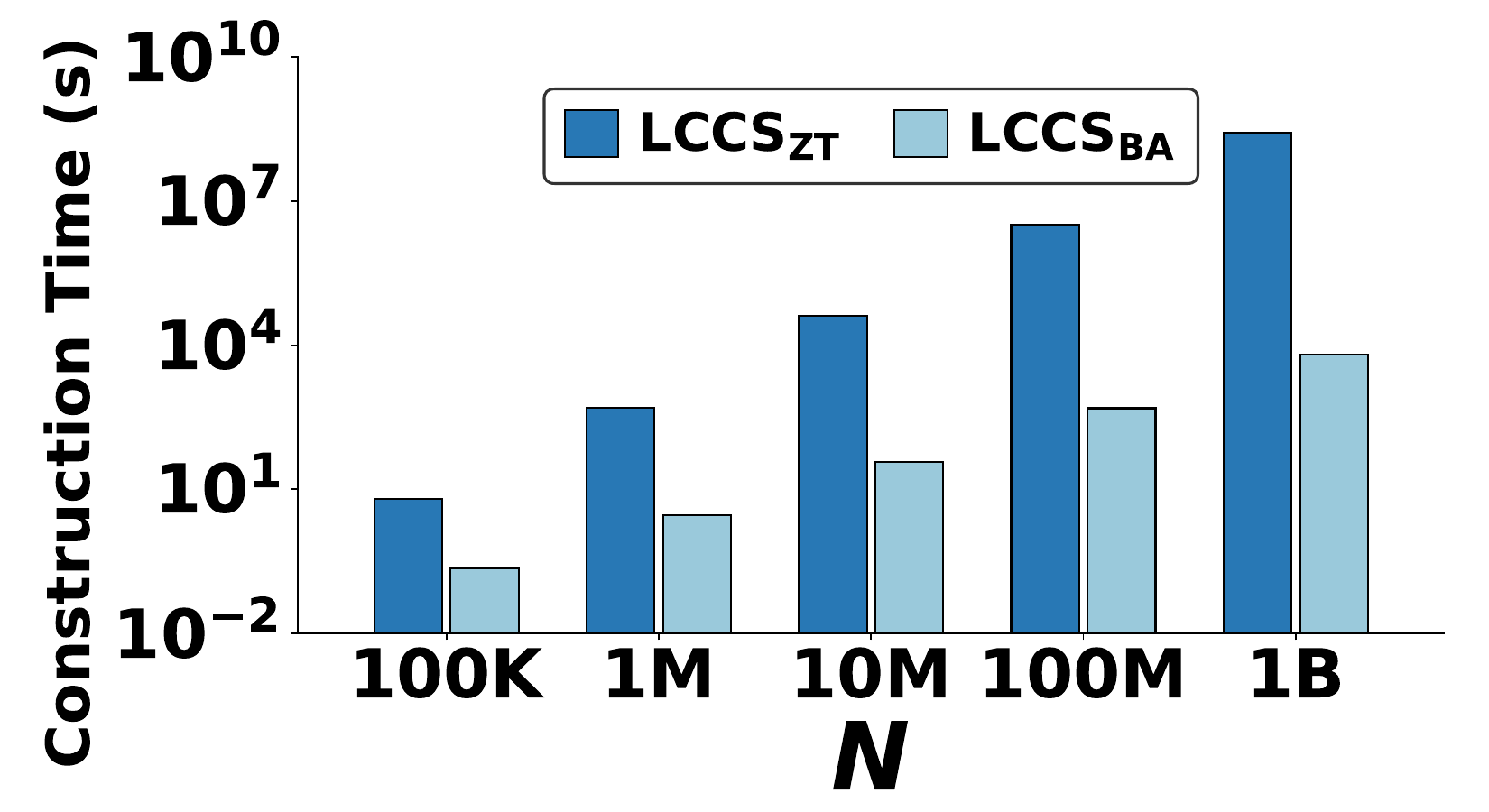}
    \caption{Constr.\ time vs. $N$}\label{fig:app:LCCS:n:build:SDSL}
  \end{subfigure}
  \begin{subfigure}[t]{\appfigwidth}
    \includegraphics[width=\linewidth]{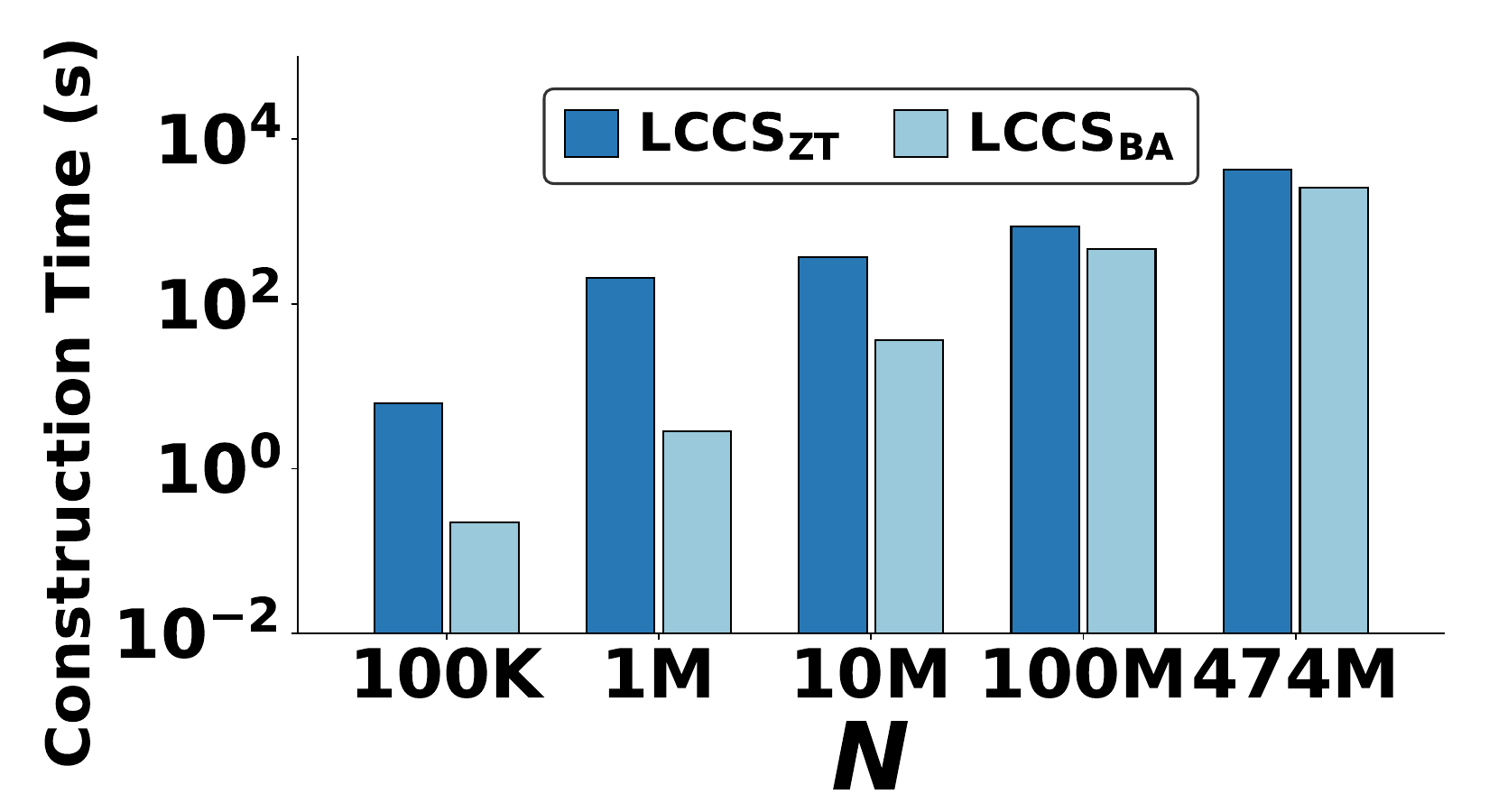}
    \caption{Constr.\ time vs. $N$}\label{fig:app:LCCS:n:build:WIKI}
  \end{subfigure}
  \vspace{\captionspacing}
  \vspace{+2mm}
  \caption{Index size of our \LCCS index vs. \LCCSBA on (a) \bst, (b) \chr, (c) \sdsl, and (d) \wiki vs. $N$; construction space of our \LCCS index vs. \LCCSBA on (e) \bst, (f) \chr, (g) \sdsl, and (h) \wiki vs. $N$; construction time of our \LCCS index vs. \LCCSBA on (i) \bst, (j) \chr, (k) \sdsl, and (l) \wiki vs. $N$.}\label{fig:app:LCCS:cost}
\end{figure}

\begin{figure}[ht]
  \centering
  \begin{subfigure}[t]{\appfigwidth}
    \includegraphics[width=\linewidth]{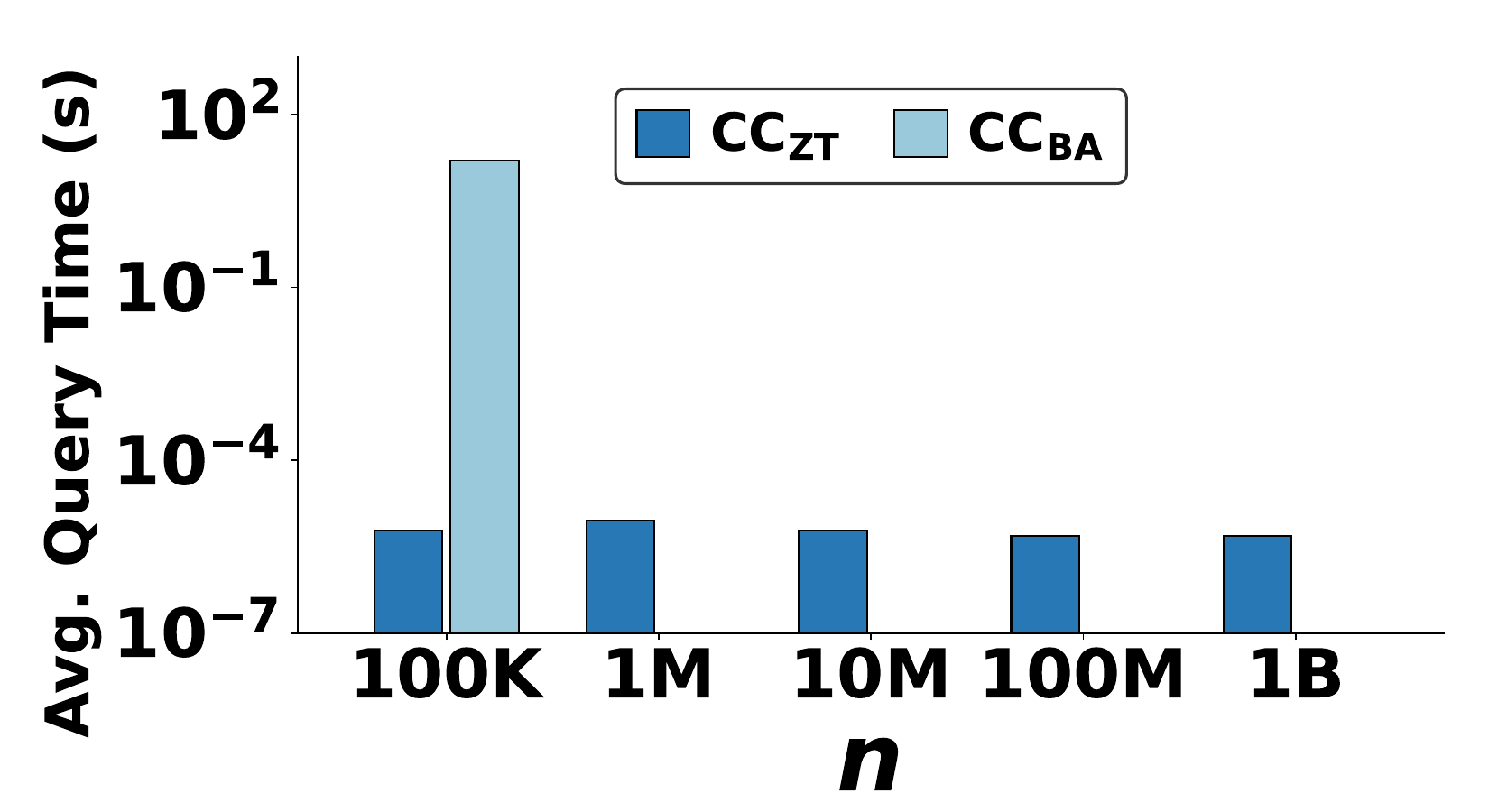}
    \caption{Query time vs. $n$}\label{fig:app:CC:n:query:CHR}
  \end{subfigure}
  \begin{subfigure}[t]{\appfigwidth}
    \includegraphics[width=\linewidth]{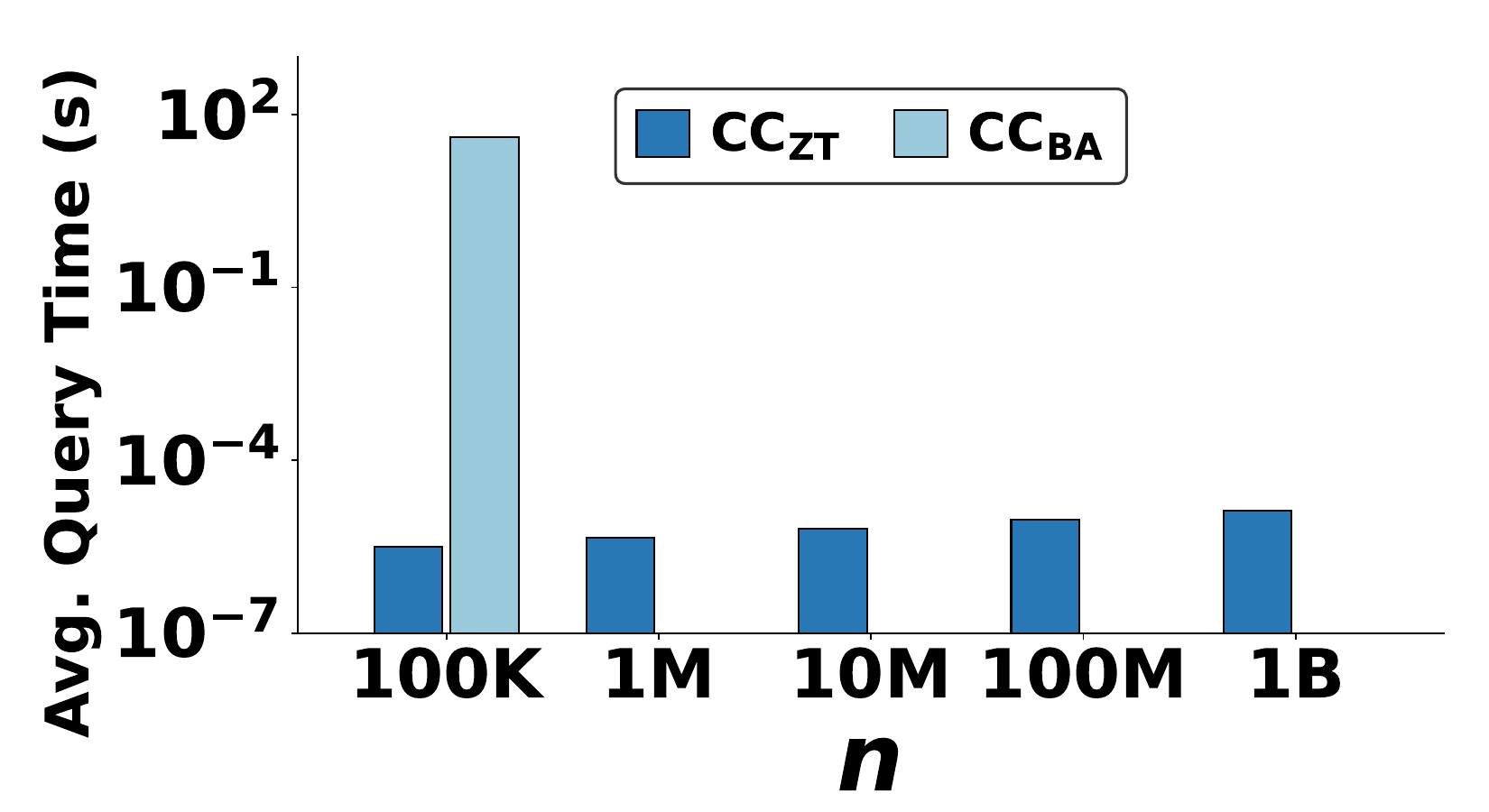}
    \caption{Query time vs. $n$}\label{fig:app:CC:n:query:SARS}
  \end{subfigure}
  \begin{subfigure}[t]{\appfigwidth}
    \includegraphics[width=\linewidth]{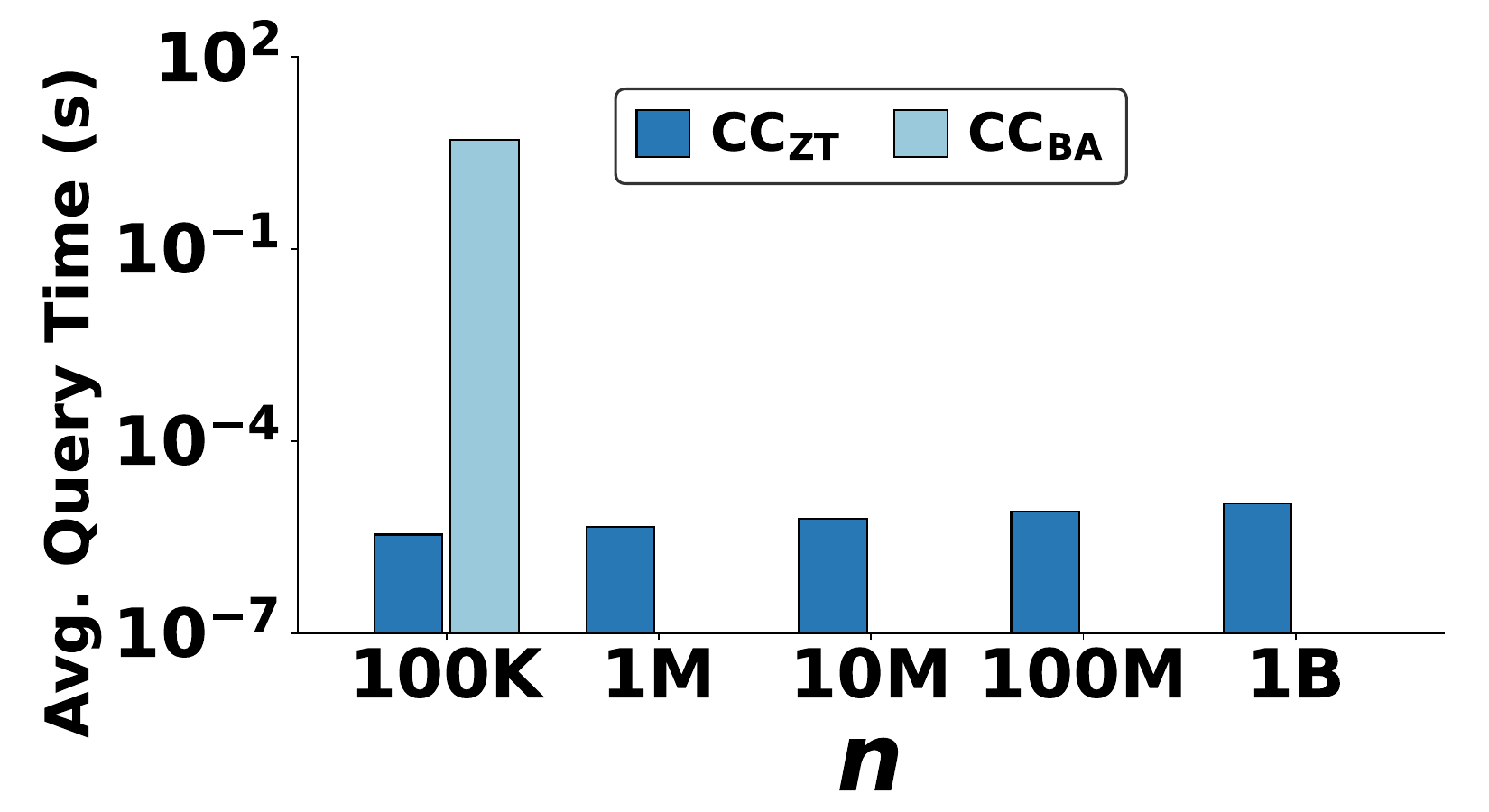}
    \caption{Query time vs. $n$}\label{fig:app:CC:n:query:SDSL}
  \end{subfigure}
  \begin{subfigure}[t]{\appfigwidth}
    \includegraphics[width=\linewidth]{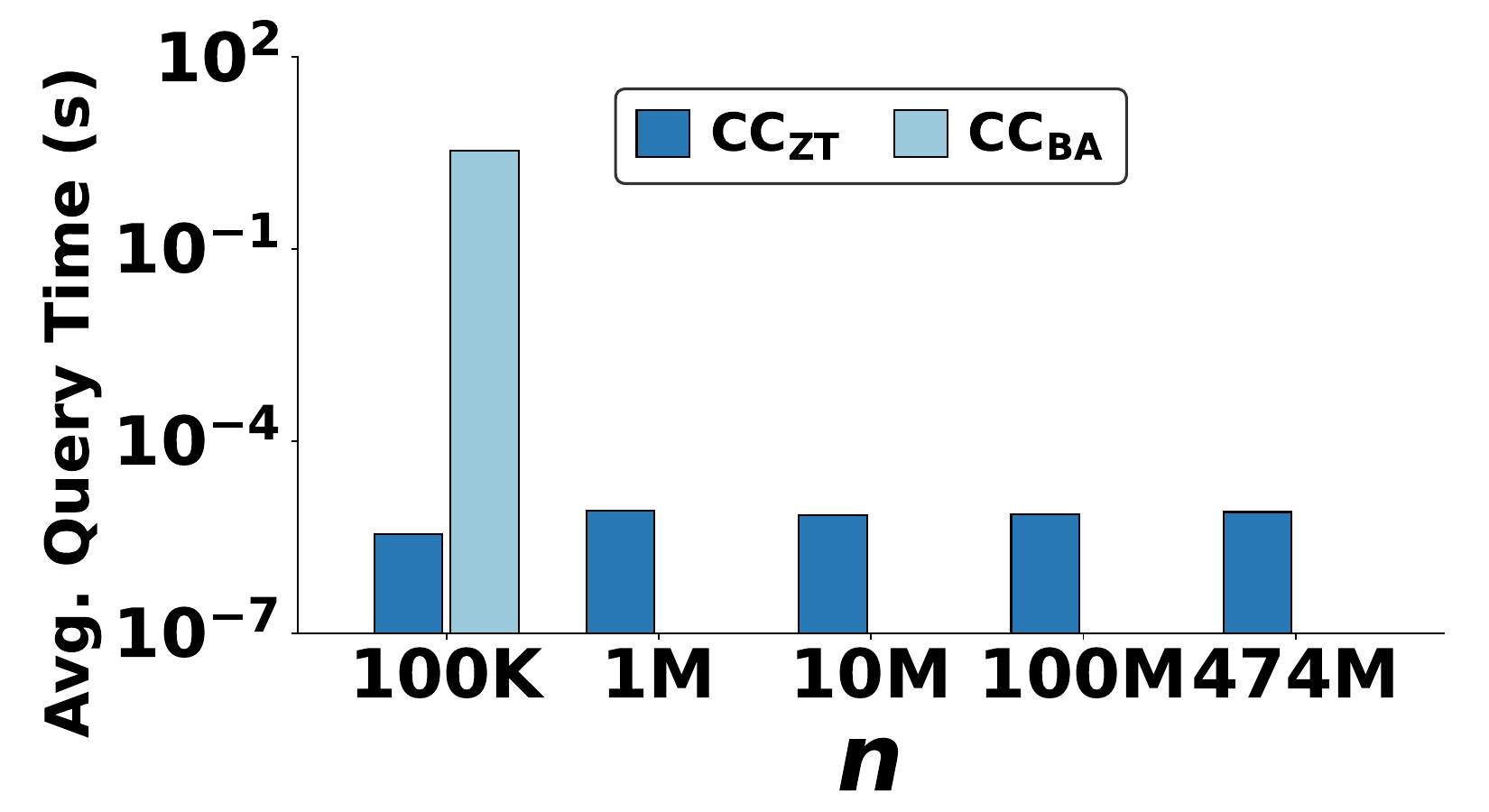}
    \caption{Query time vs. $n$}\label{fig:app:CC:n:query:WIKI}
  \end{subfigure}\\[0pt]
  \begin{subfigure}[t]{\appfigwidth}
    \includegraphics[width=\linewidth]{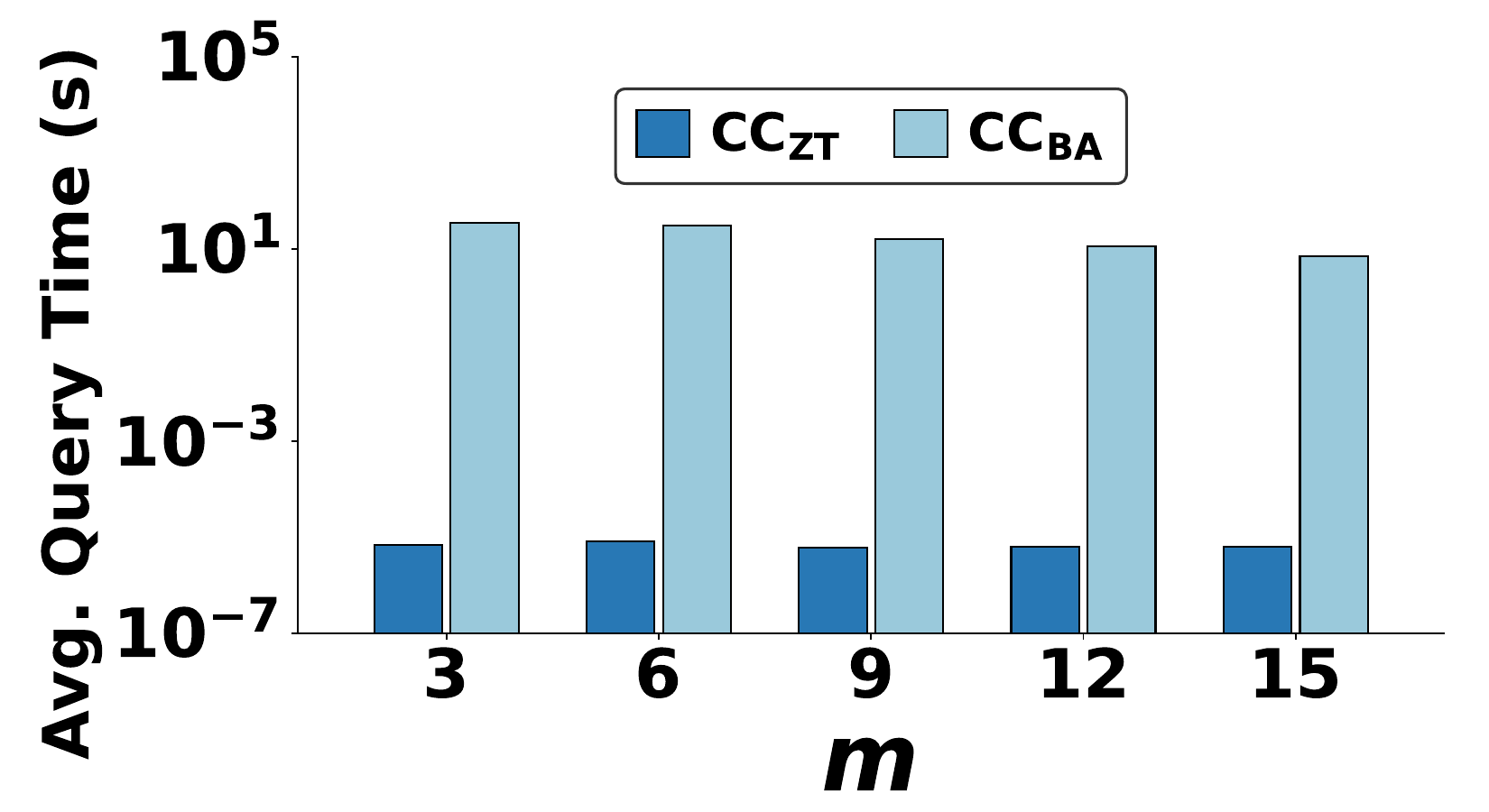}
    \caption{Query time vs. $m$}\label{fig:app:CC:m:query:CHR}
  \end{subfigure}
  \begin{subfigure}[t]{\appfigwidth}
    \includegraphics[width=\linewidth]{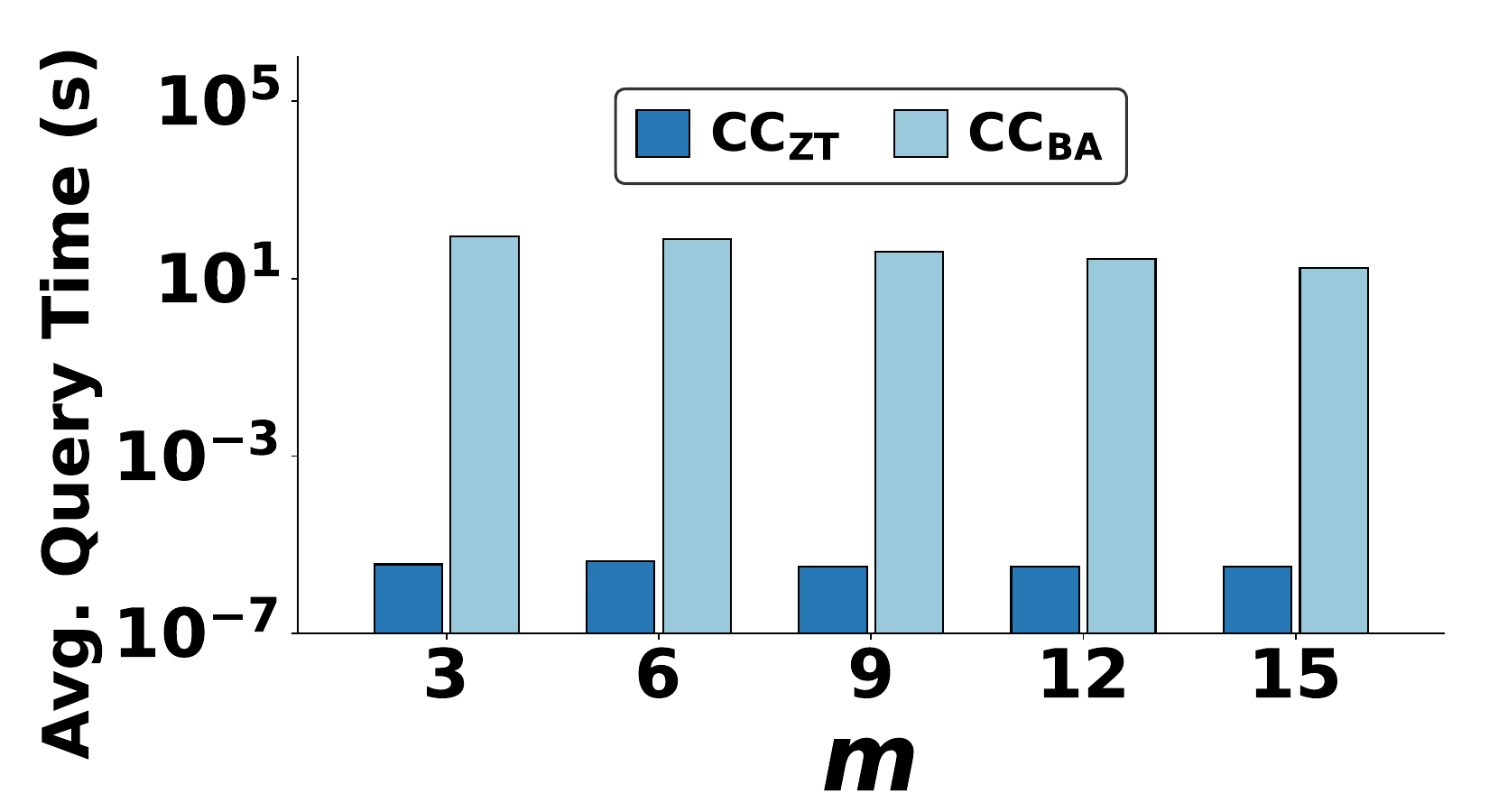}
    \caption{Query time vs. $m$}\label{fig:app:CC:m:query:SARS}
  \end{subfigure}
  \begin{subfigure}[t]{\appfigwidth}
    \includegraphics[width=\linewidth]{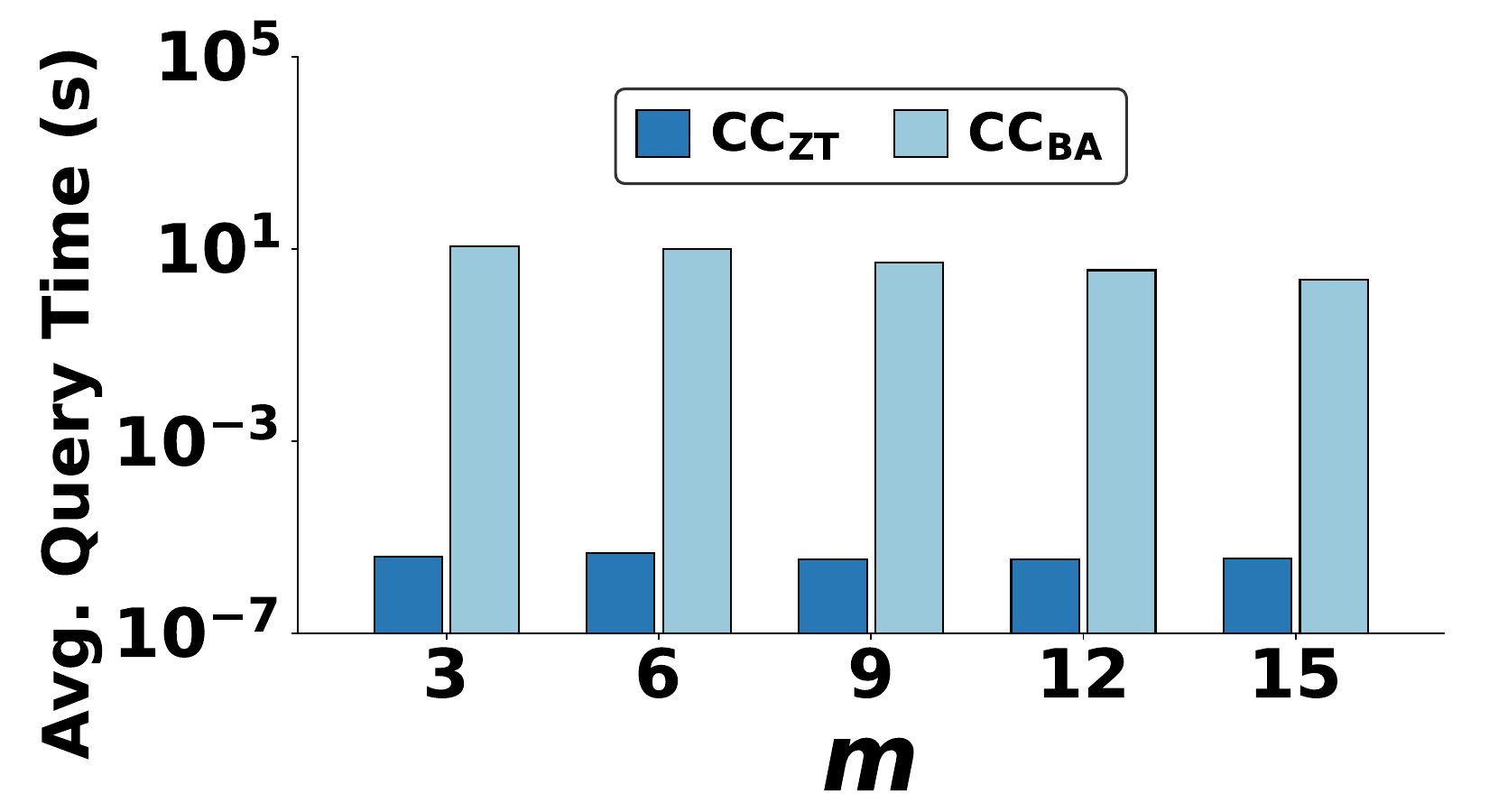}
    \caption{Query time vs. $m$}\label{fig:app:CC:m:query:SDSL}
  \end{subfigure}
  \begin{subfigure}[t]{\appfigwidth}
    \includegraphics[width=\linewidth]{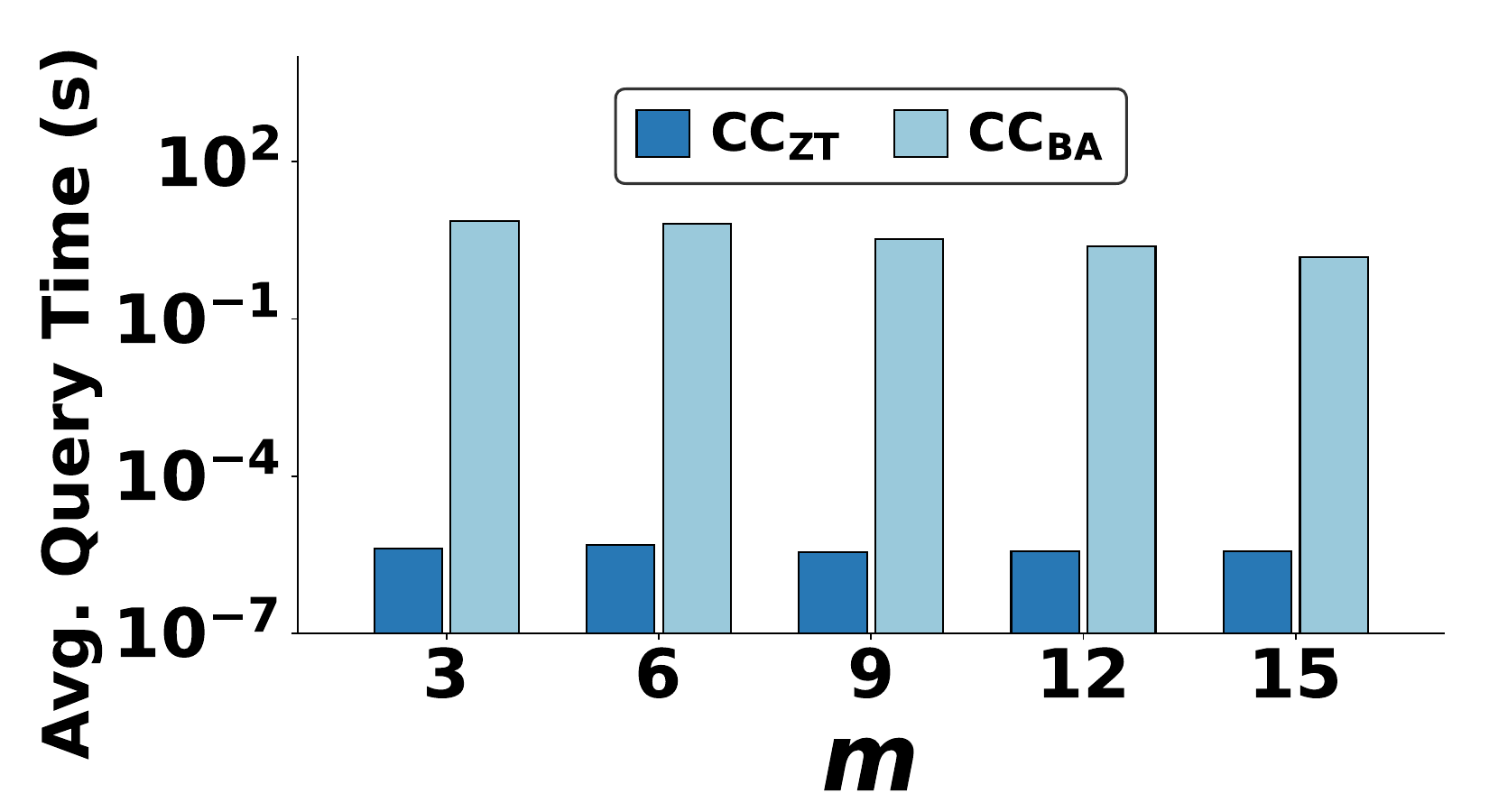}
    \caption{Query time vs. $m$}\label{fig:app:CC:m:query:WIKI}
  \end{subfigure}
  \vspace{\captionspacing}
  \vspace{+2mm}
  \caption{Query time of our \CC index vs. \CCBA on (a) \chr, (b) \sars, (c) \sdsl, and (d) \wiki vs. $n$; on (e) \chr, (f) \sars, (g) \sdsl, and (h) \wiki vs. $m$.}\label{fig:app:CC:query}
\end{figure}

\begin{figure}[ht]
  \centering
  \begin{subfigure}[t]{\appfigwidth}
    \includegraphics[width=\linewidth]{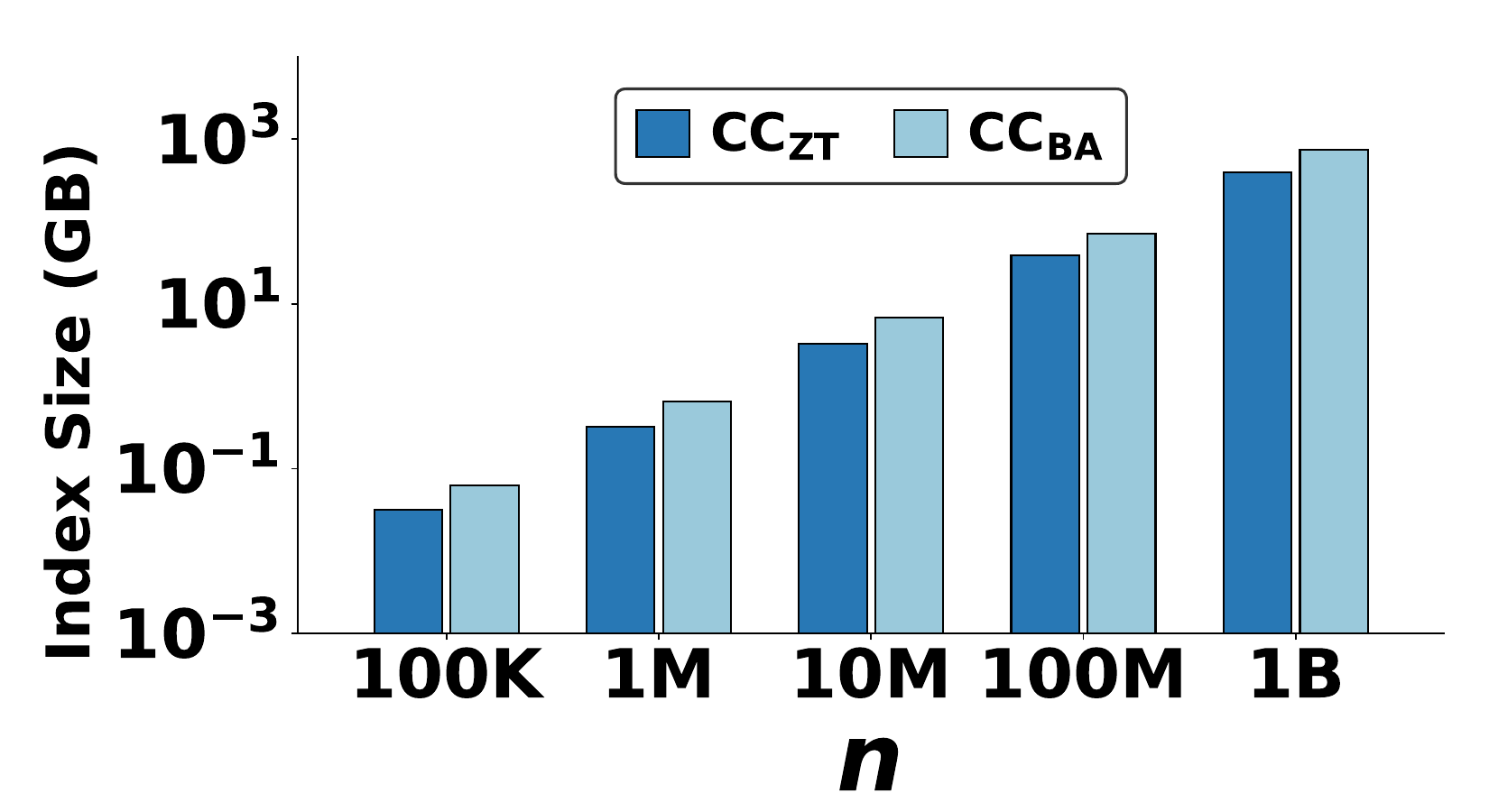}
    \caption{Index size vs. $n$}\label{fig:app:CC:n:index:CHR}
  \end{subfigure}
  \begin{subfigure}[t]{\appfigwidth}
    \includegraphics[width=\linewidth]{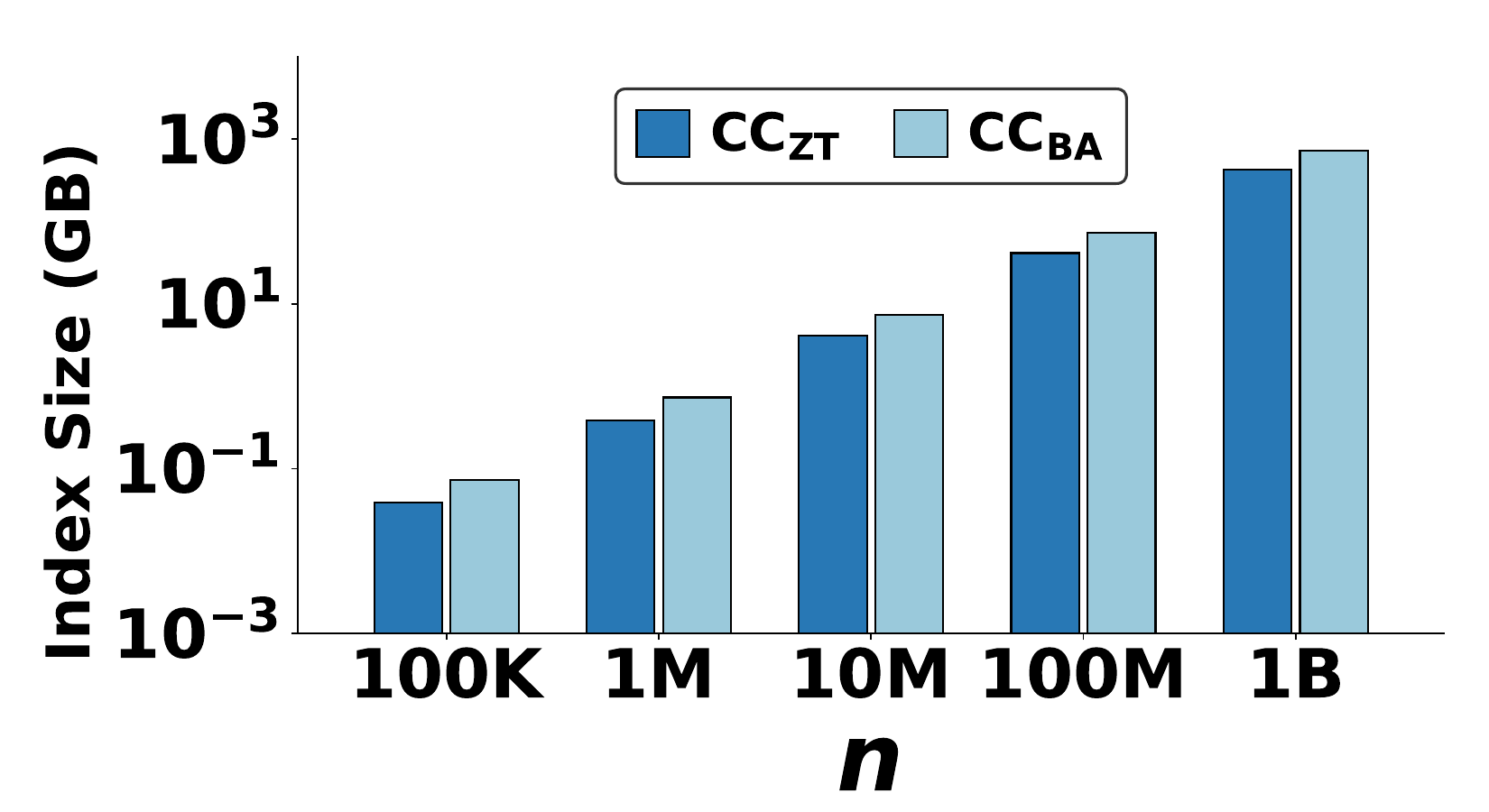}
    \caption{Index size vs. $n$}\label{fig:app:CC:n:index:SARS}
  \end{subfigure}
  \begin{subfigure}[t]{\appfigwidth}
    \includegraphics[width=\linewidth]{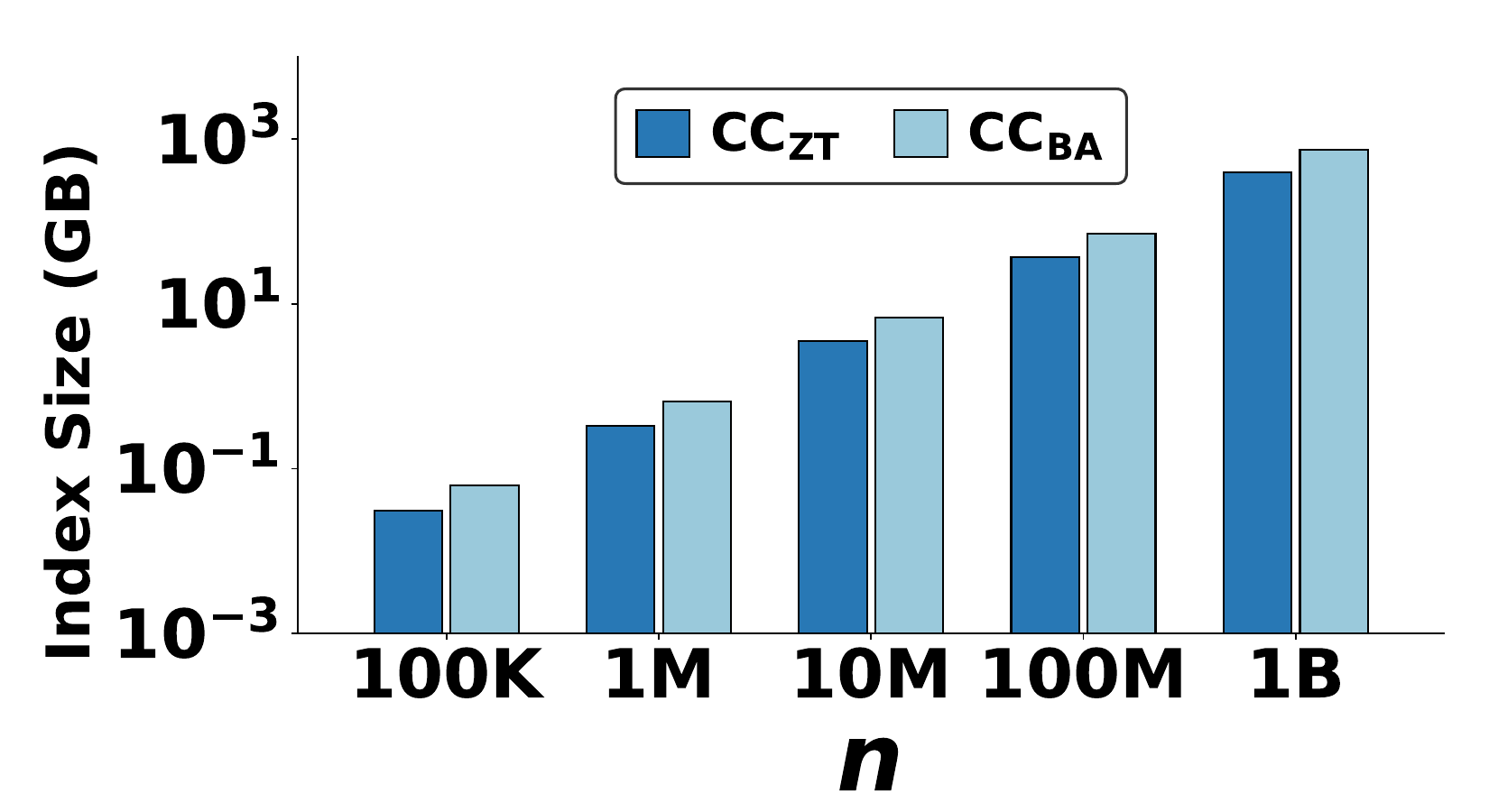}
    \caption{Index size vs. $n$}\label{fig:app:CC:n:index:SDSL}
  \end{subfigure}
  \begin{subfigure}[t]{\appfigwidth}
    \includegraphics[width=\linewidth]{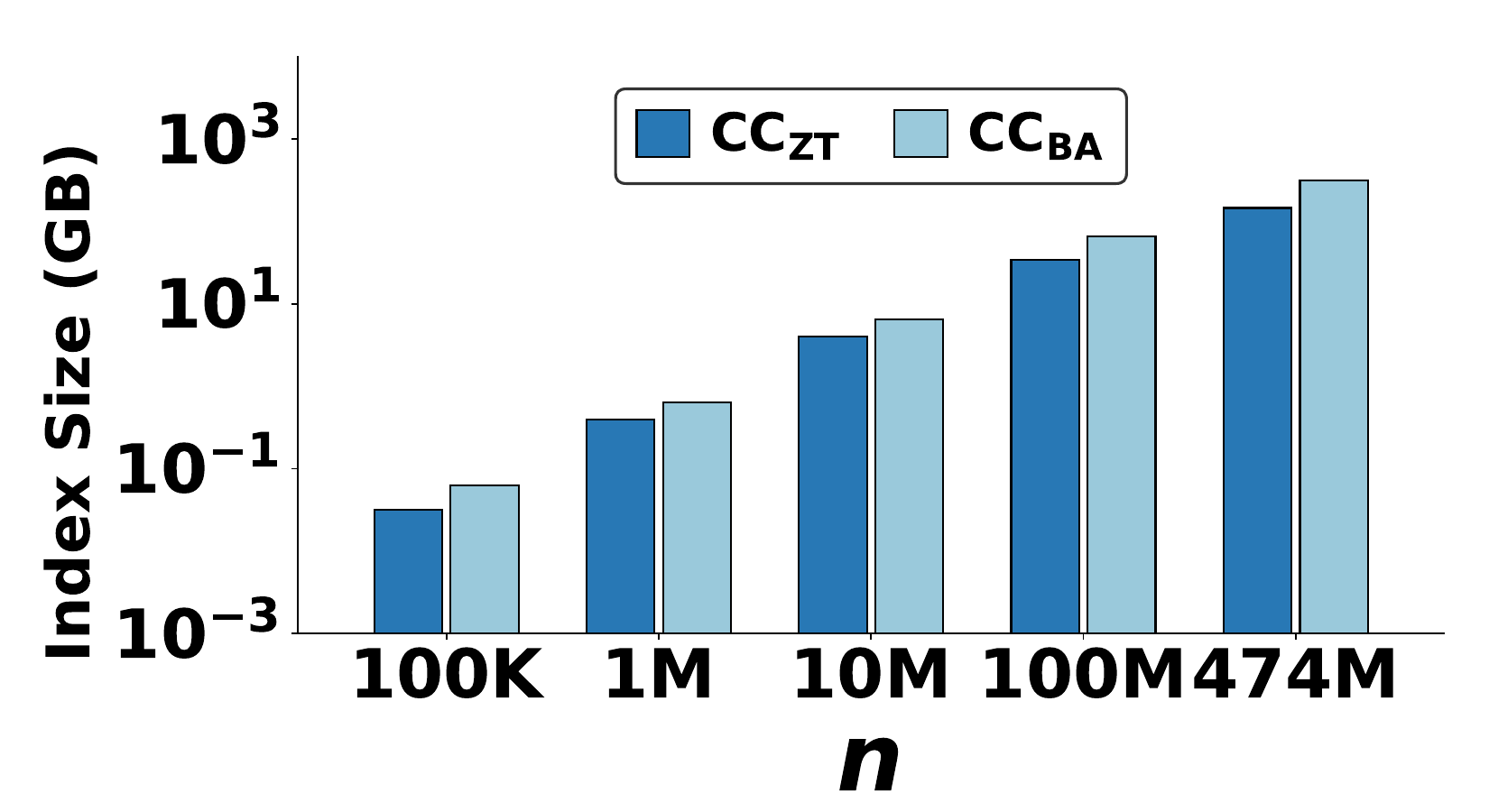}
    \caption{Index size vs. $n$}\label{fig:app:CC:n:index:WIKI}
  \end{subfigure}\\[0pt]
  \begin{subfigure}[t]{\appfigwidth}
    \includegraphics[width=\linewidth]{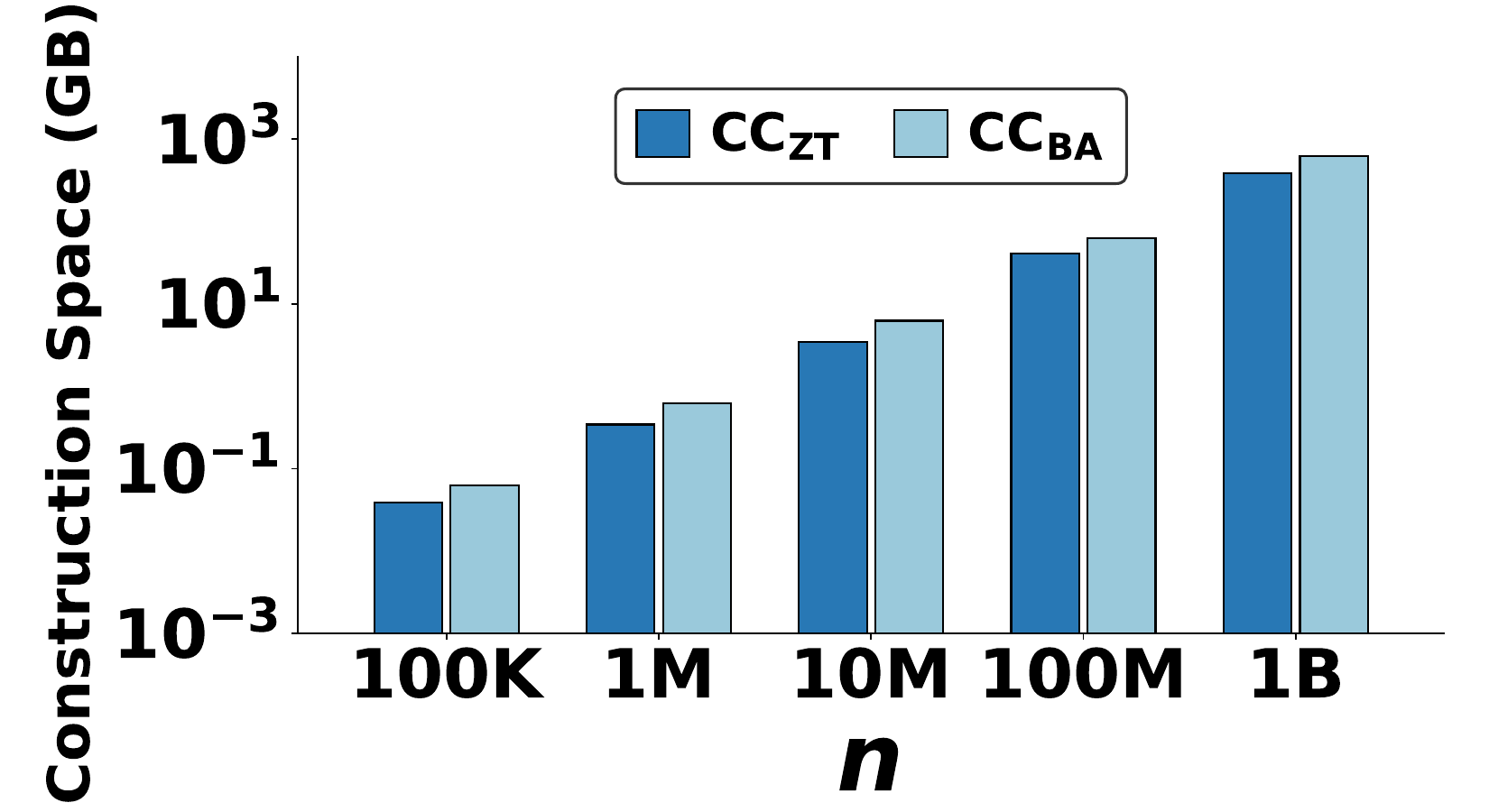}
    \caption{Constr.\ space vs. $n$}\label{fig:app:CC:n:rss:CHR}
  \end{subfigure}
  \begin{subfigure}[t]{\appfigwidth}
    \includegraphics[width=\linewidth]{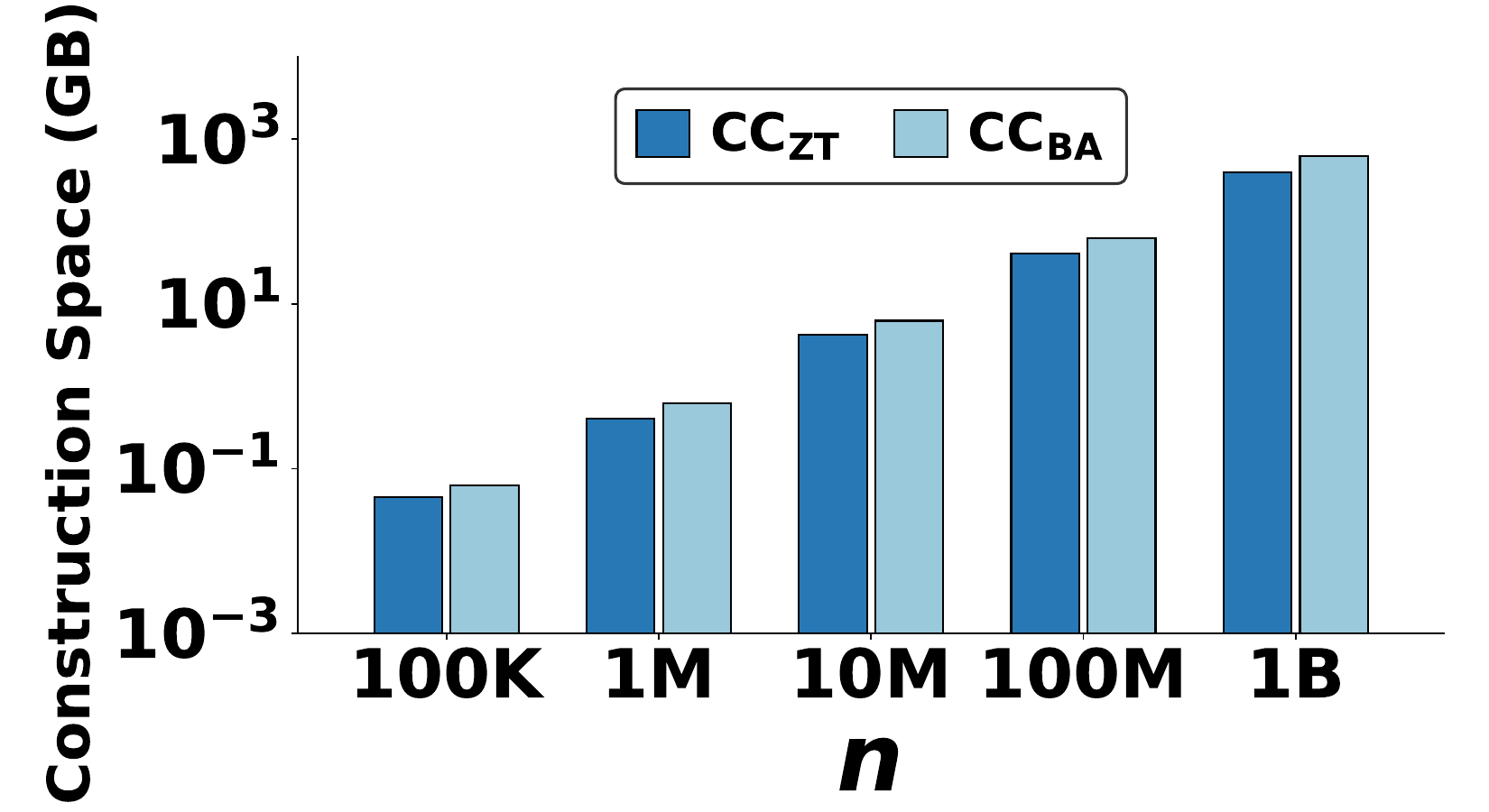}
    \caption{Constr.\ space vs. $n$}\label{fig:app:CC:n:rss:SARS}
  \end{subfigure}
  \begin{subfigure}[t]{\appfigwidth}
    \includegraphics[width=\linewidth]{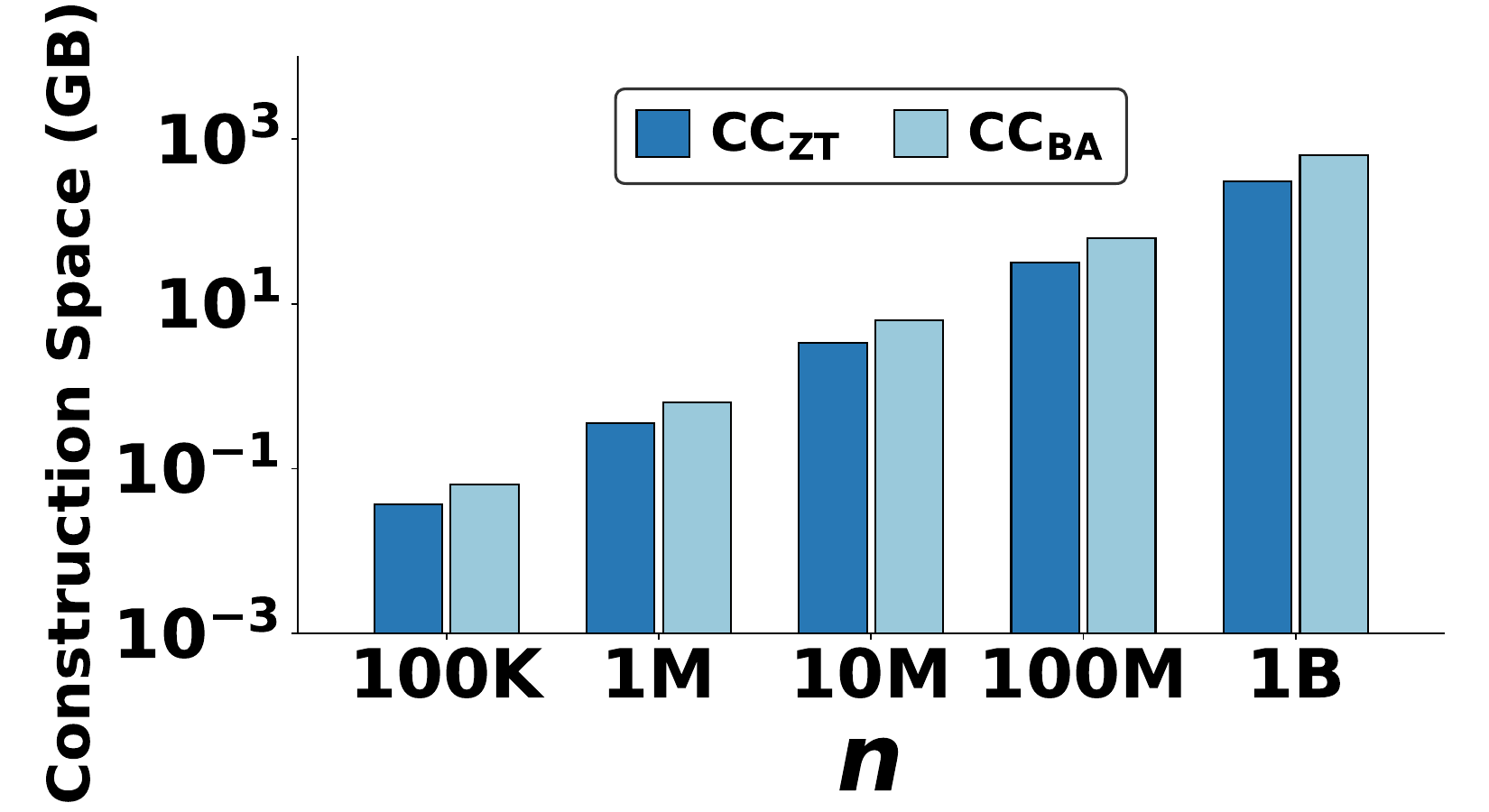}
    \caption{Constr.\ space vs. $n$}\label{fig:app:CC:n:rss:SDSL}
  \end{subfigure}
  \begin{subfigure}[t]{\appfigwidth}
    \includegraphics[width=\linewidth]{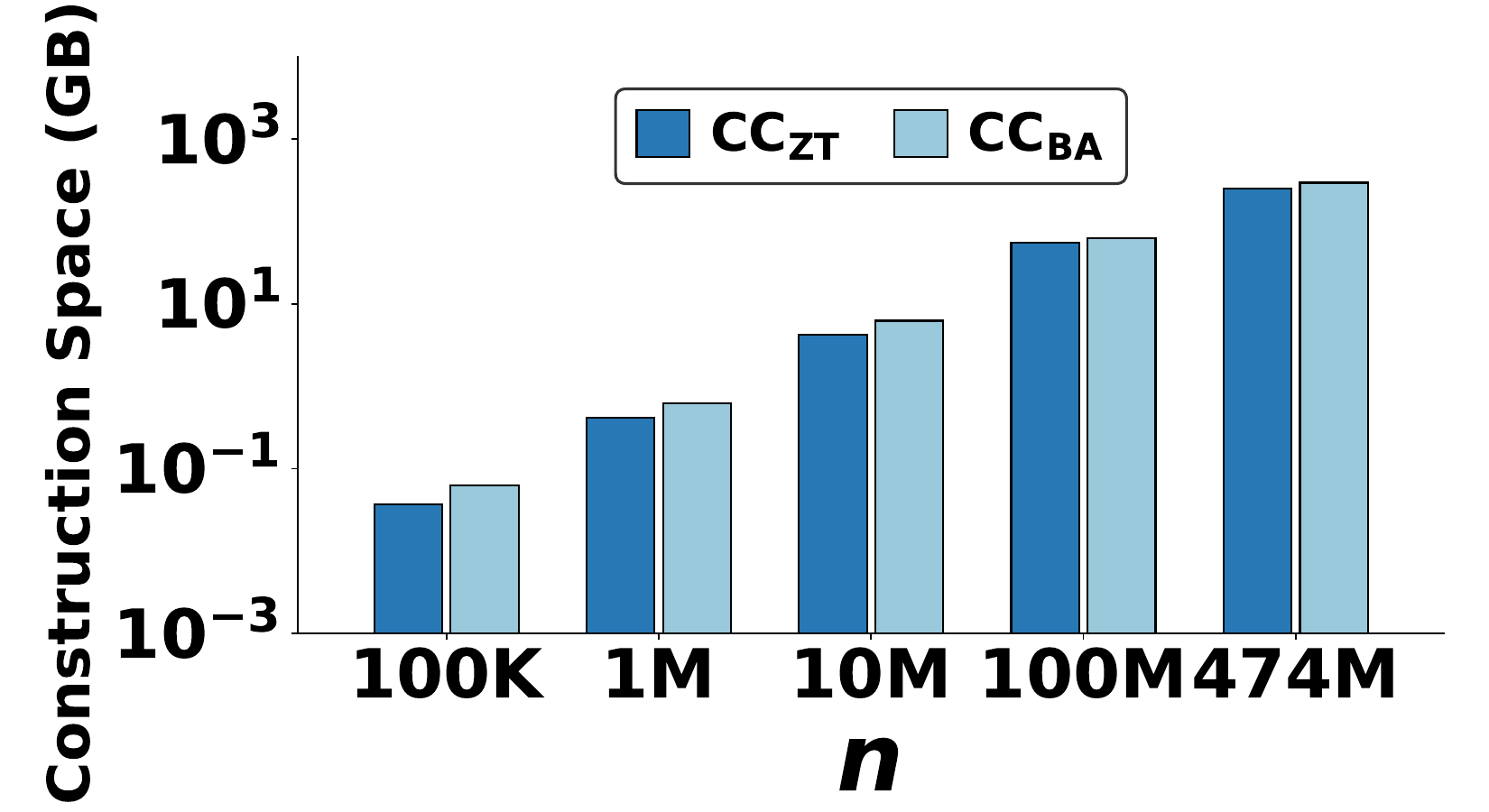}
    \caption{Constr.\ space vs. $n$}\label{fig:app:CC:n:rss:WIKI}
  \end{subfigure}\\[0pt]
  \begin{subfigure}[t]{\appfigwidth}
    \includegraphics[width=\linewidth]{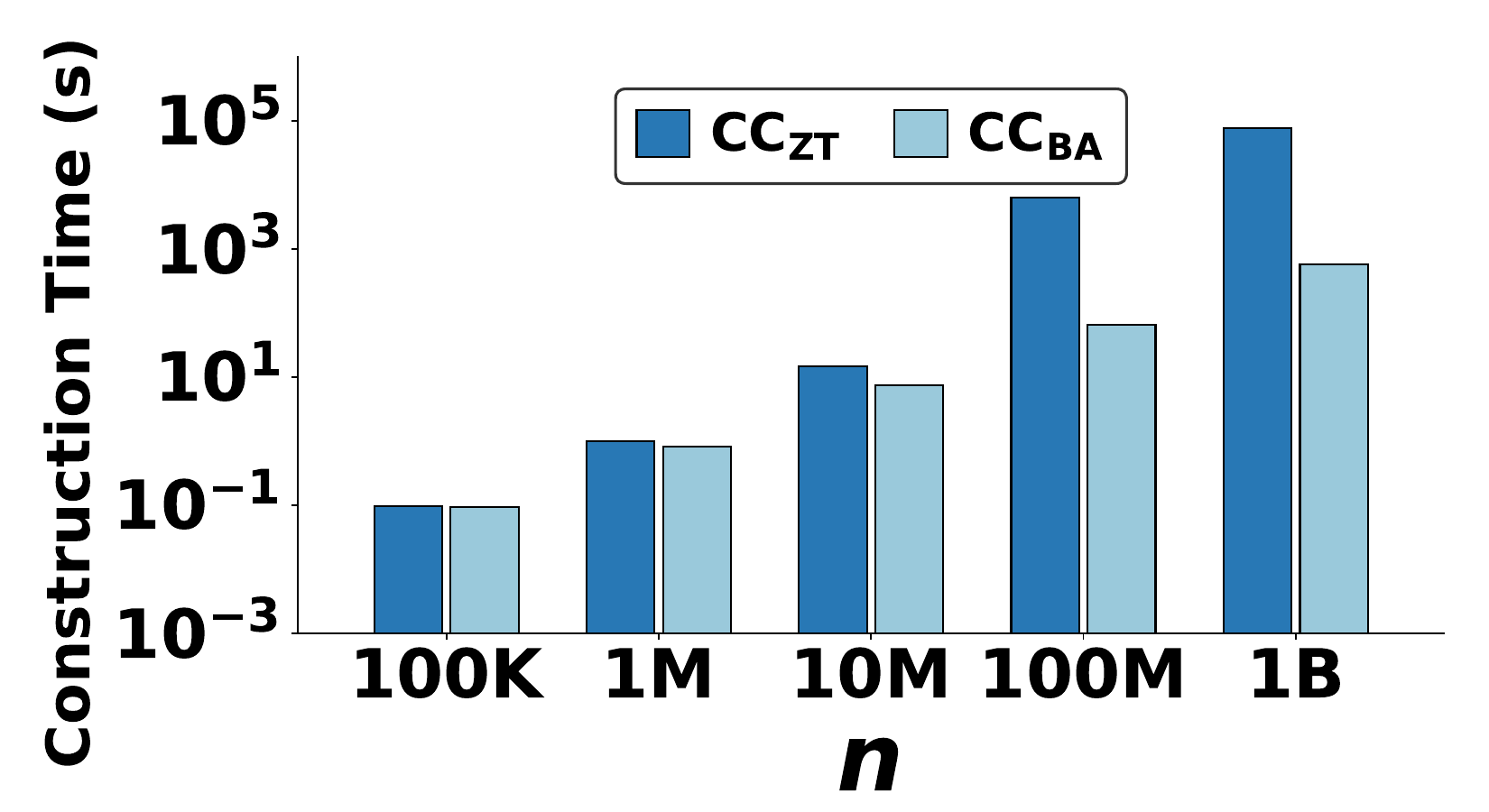}
    \caption{Constr.\ time vs. $n$}\label{fig:app:CC:n:build:CHR}
  \end{subfigure}
  \begin{subfigure}[t]{\appfigwidth}
    \includegraphics[width=\linewidth]{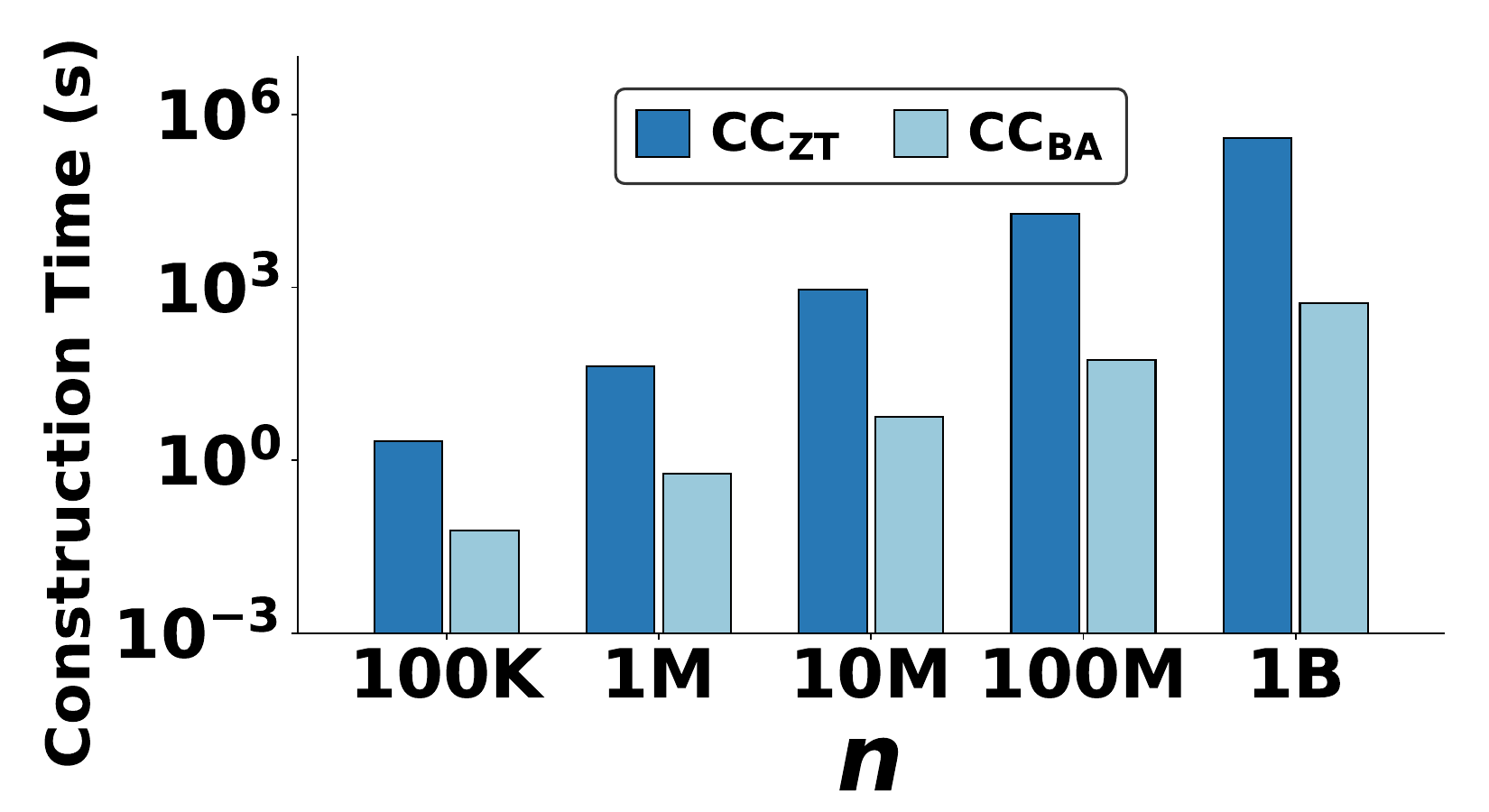}
    \caption{Constr.\ time vs. $n$}\label{fig:app:CC:n:build:SARS}
  \end{subfigure}
  \begin{subfigure}[t]{\appfigwidth}
    \includegraphics[width=\linewidth]{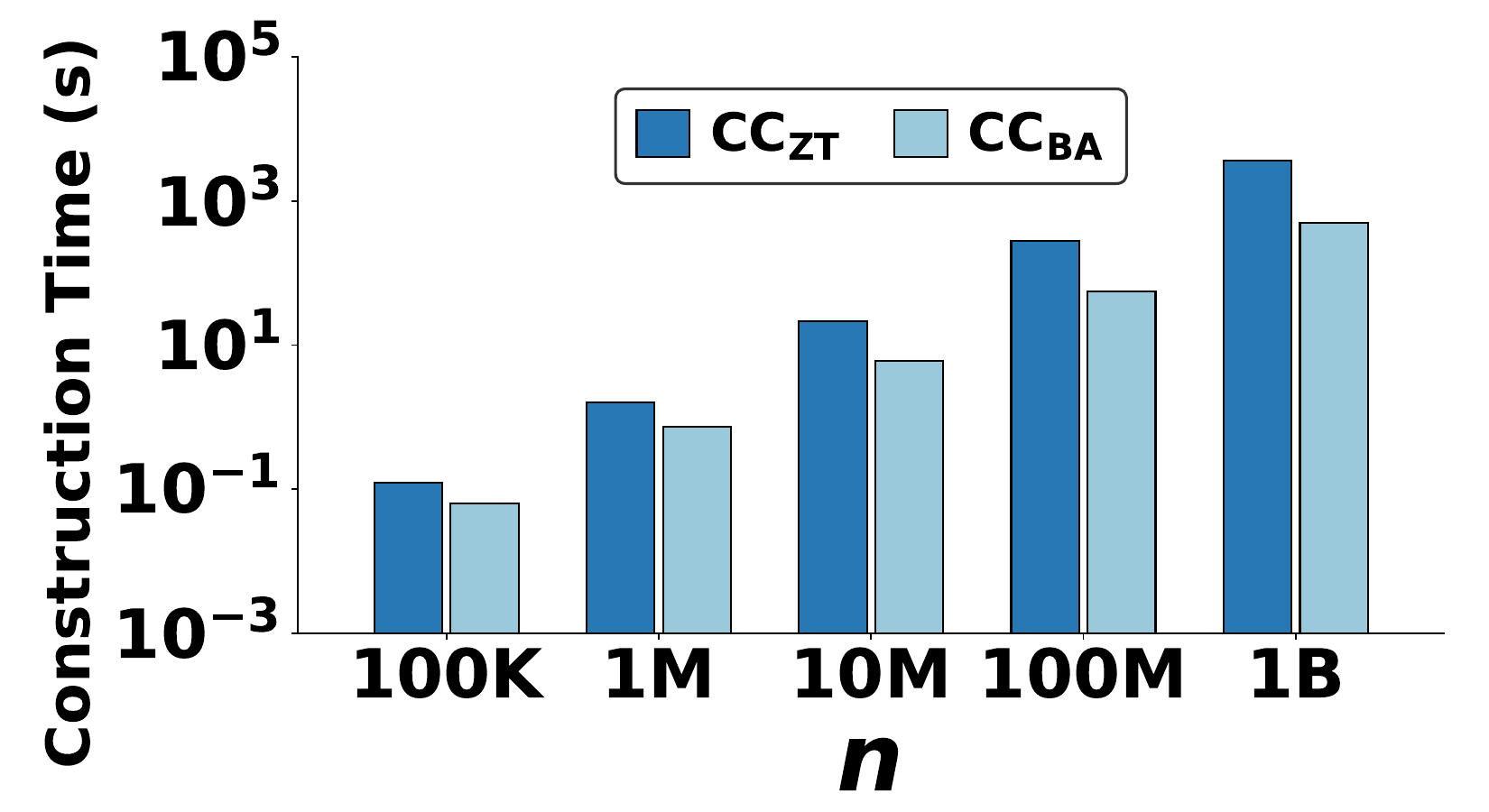}
    \caption{Constr.\ time vs. $n$}\label{fig:app:CC:n:build:SDSL}
  \end{subfigure}
  \begin{subfigure}[t]{\appfigwidth}
    \includegraphics[width=\linewidth]{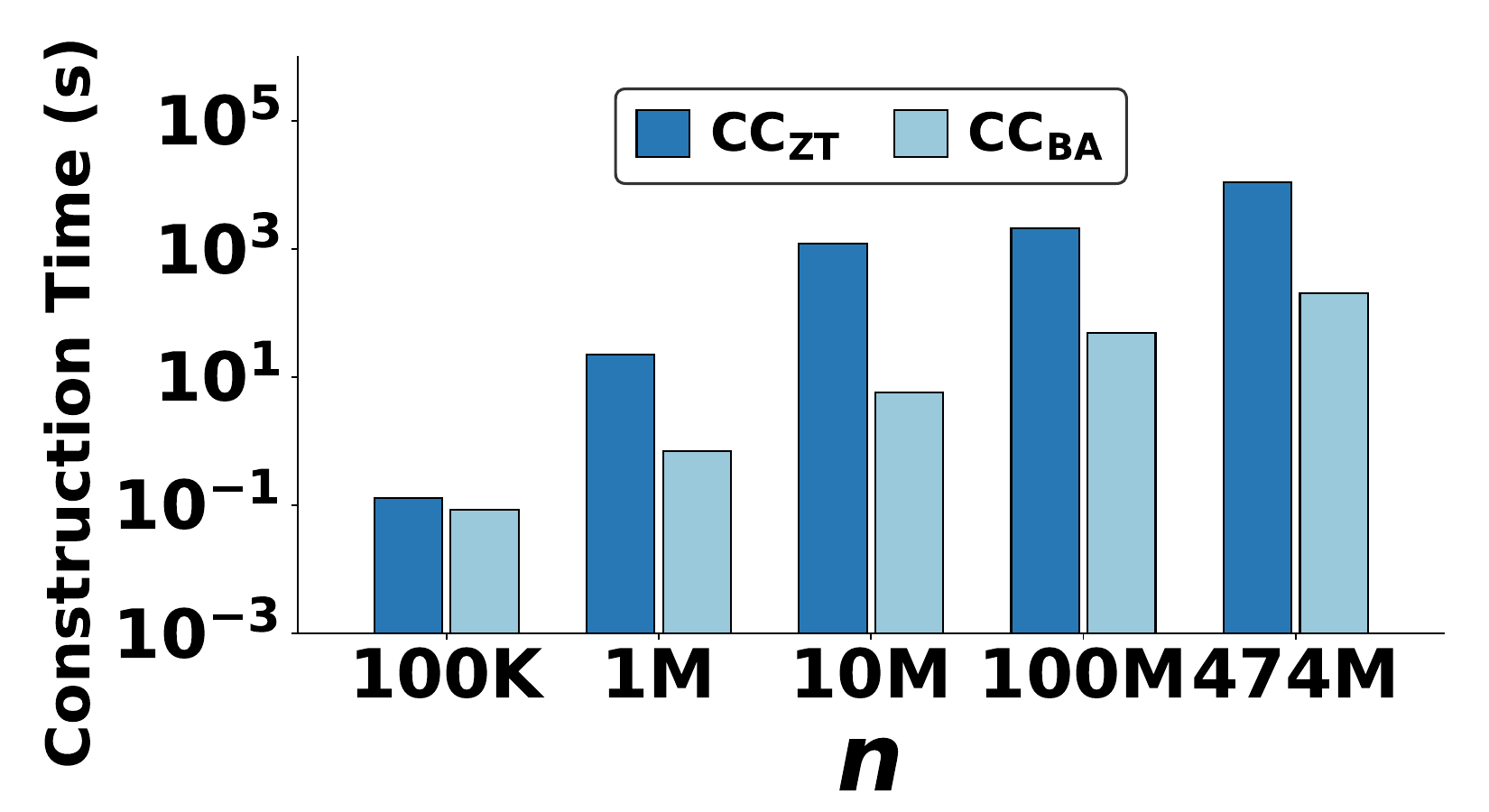}
    \caption{Constr.\ time vs. $n$}\label{fig:app:CC:n:build:WIKI}
  \end{subfigure}
  \vspace{\captionspacing}
  \vspace{+2mm}
  \caption{Index size of our \CC index vs. \CCBA on (a) \chr, (b) \sars, (c) \sdsl, and (d) \wiki vs. $n$; construction space of our \CC index vs. \CCBA on (e) \chr, (f) \sars, (g) \sdsl, and (h) \wiki vs. $n$; construction time of our \CC index vs. \CCBA on (i) \chr, (j) \sars, (k) \sdsl, and (l) \wiki vs. $n$.}\label{fig:app:CC:cost}
\end{figure}

\begin{figure}[ht]
  \centering
  \begin{subfigure}[t]{\appfigwidth}
    \includegraphics[width=\linewidth]{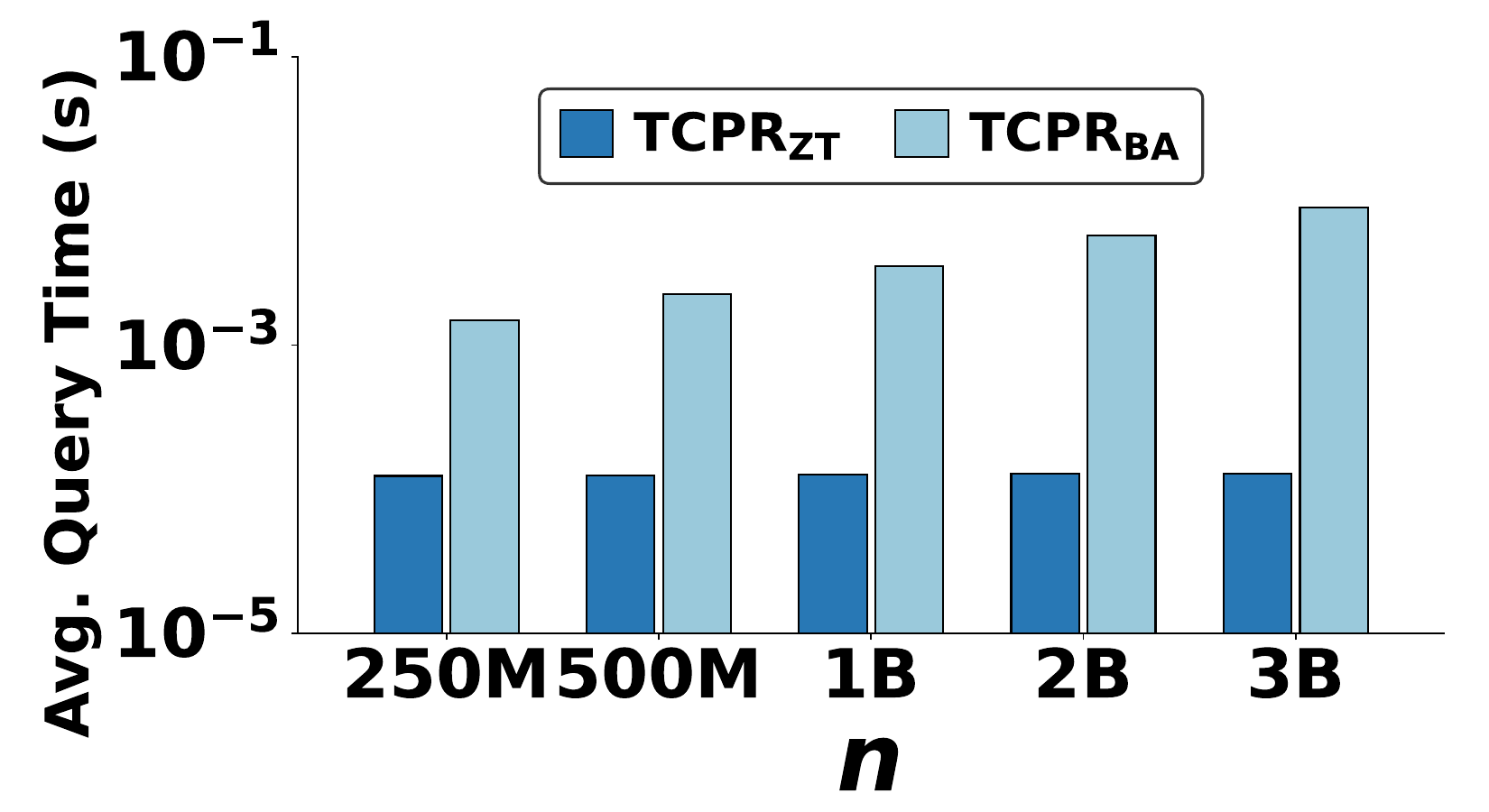}
    \caption{Query time vs. $n$}\label{fig:app:TF:n:query:BST}
  \end{subfigure}
  \begin{subfigure}[t]{\appfigwidth}
    \includegraphics[width=\linewidth]{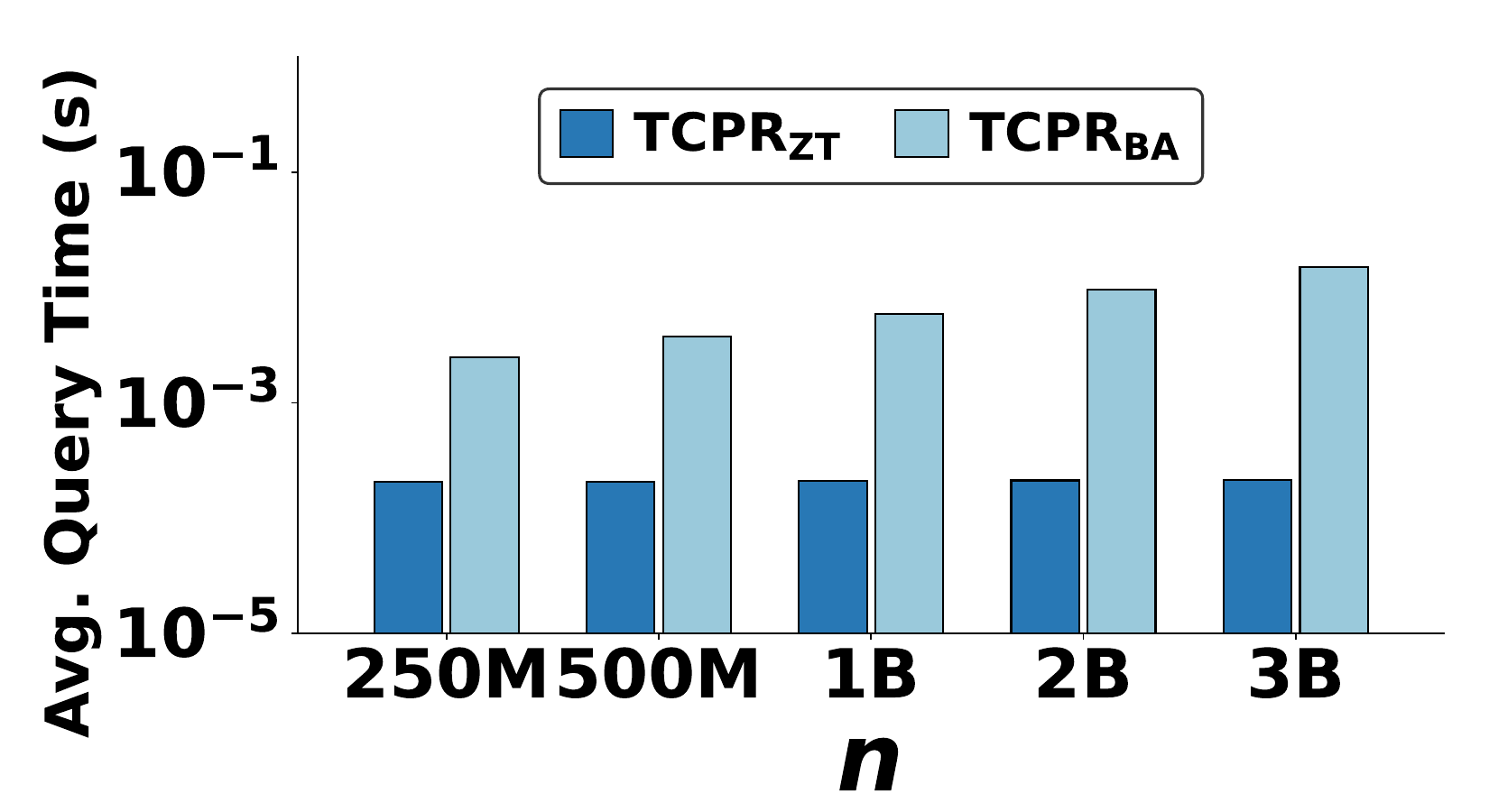}
    \caption{Query time vs. $n$}\label{fig:app:TF:n:query:SARS}
  \end{subfigure}
  \begin{subfigure}[t]{\appfigwidth}
    \includegraphics[width=\linewidth]{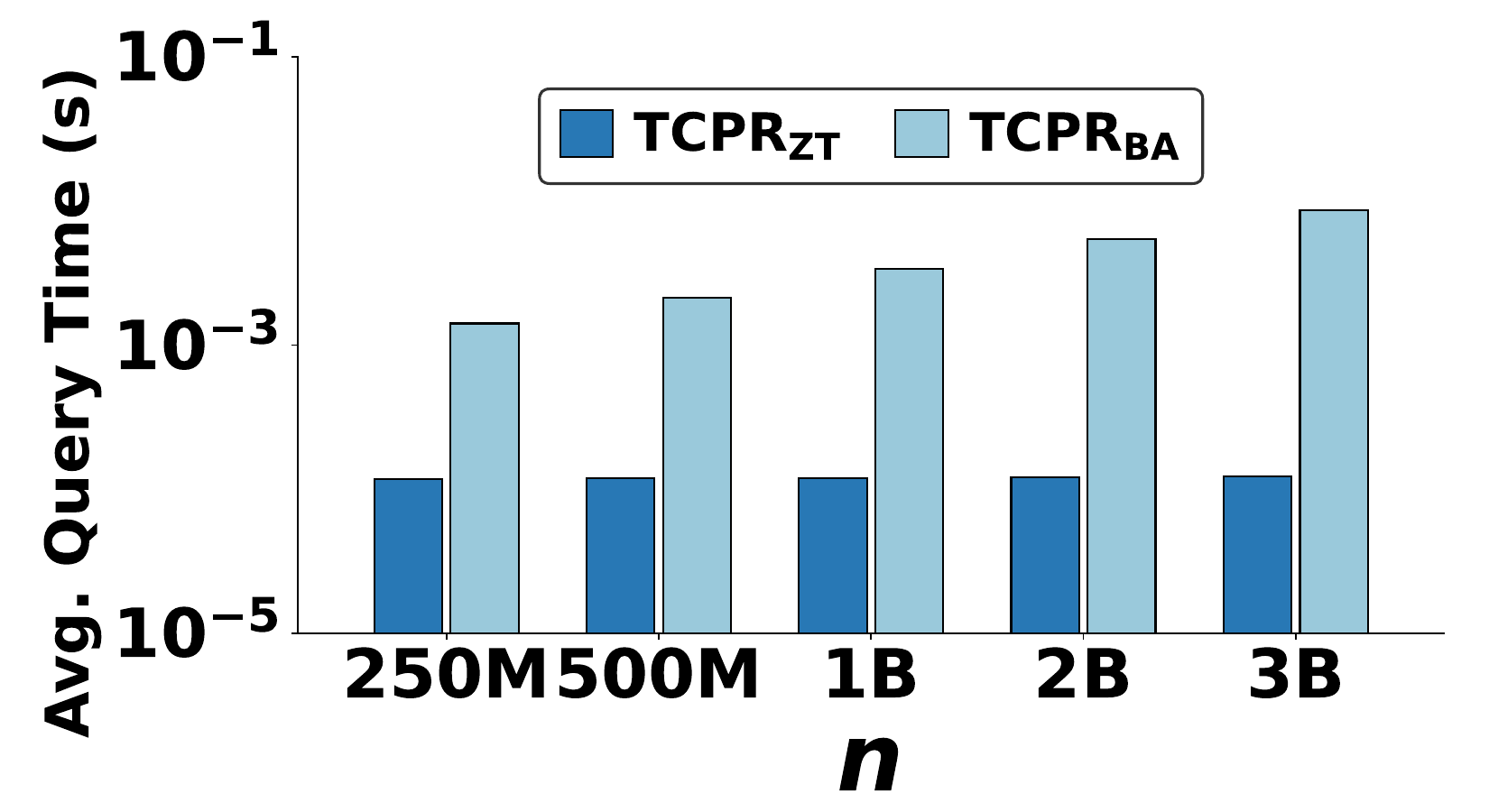}
    \caption{Query time vs. $n$}\label{fig:app:TF:n:query:SDSL}
  \end{subfigure}
  \begin{subfigure}[t]{\appfigwidth}
    \includegraphics[width=\linewidth]{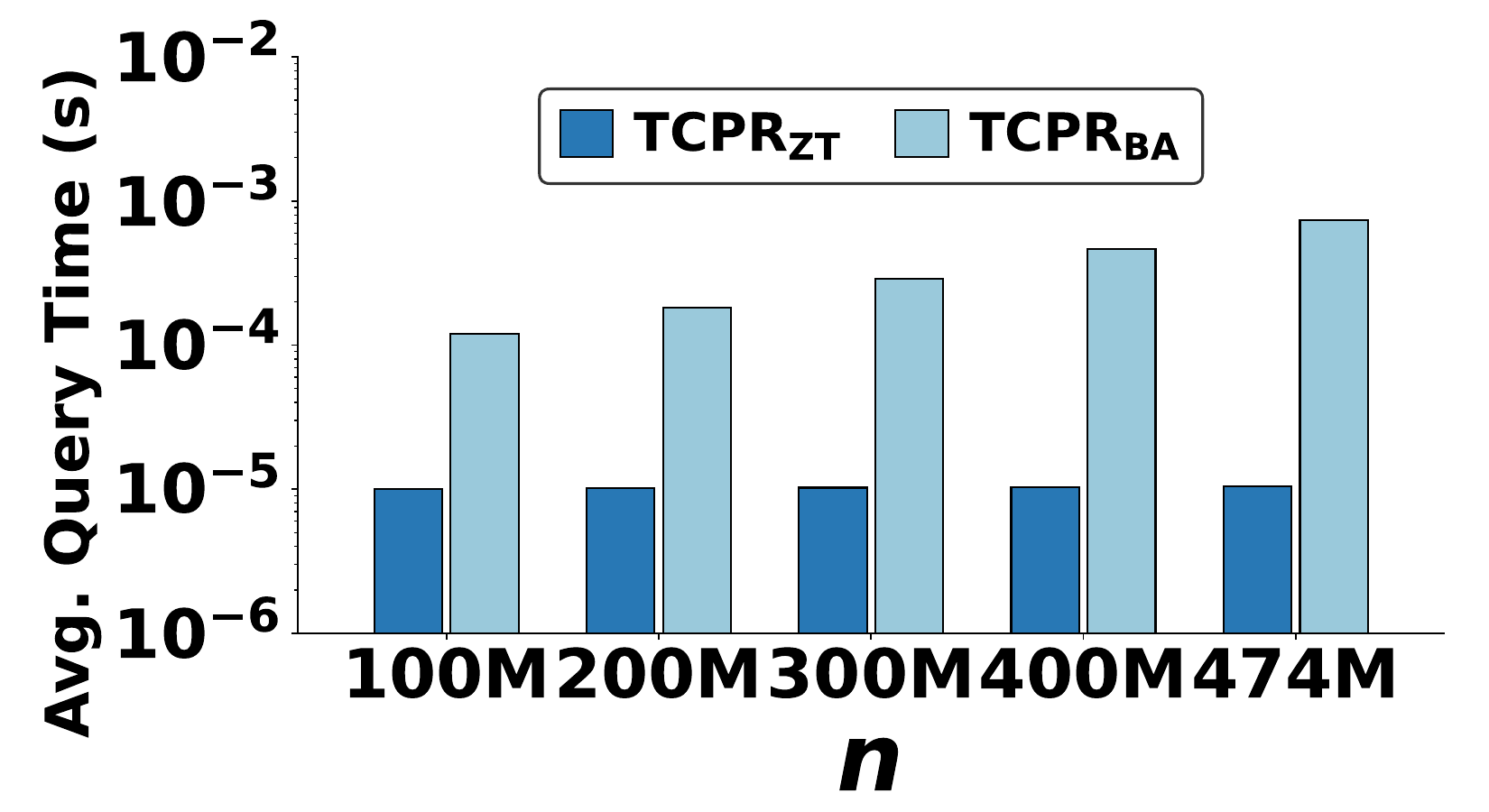}
    \caption{Query time vs. $n$}\label{fig:app:TF:n:query:WIKI}
  \end{subfigure}\\[0pt]
  \begin{subfigure}[t]{\appfigwidth}
    \includegraphics[width=\linewidth]{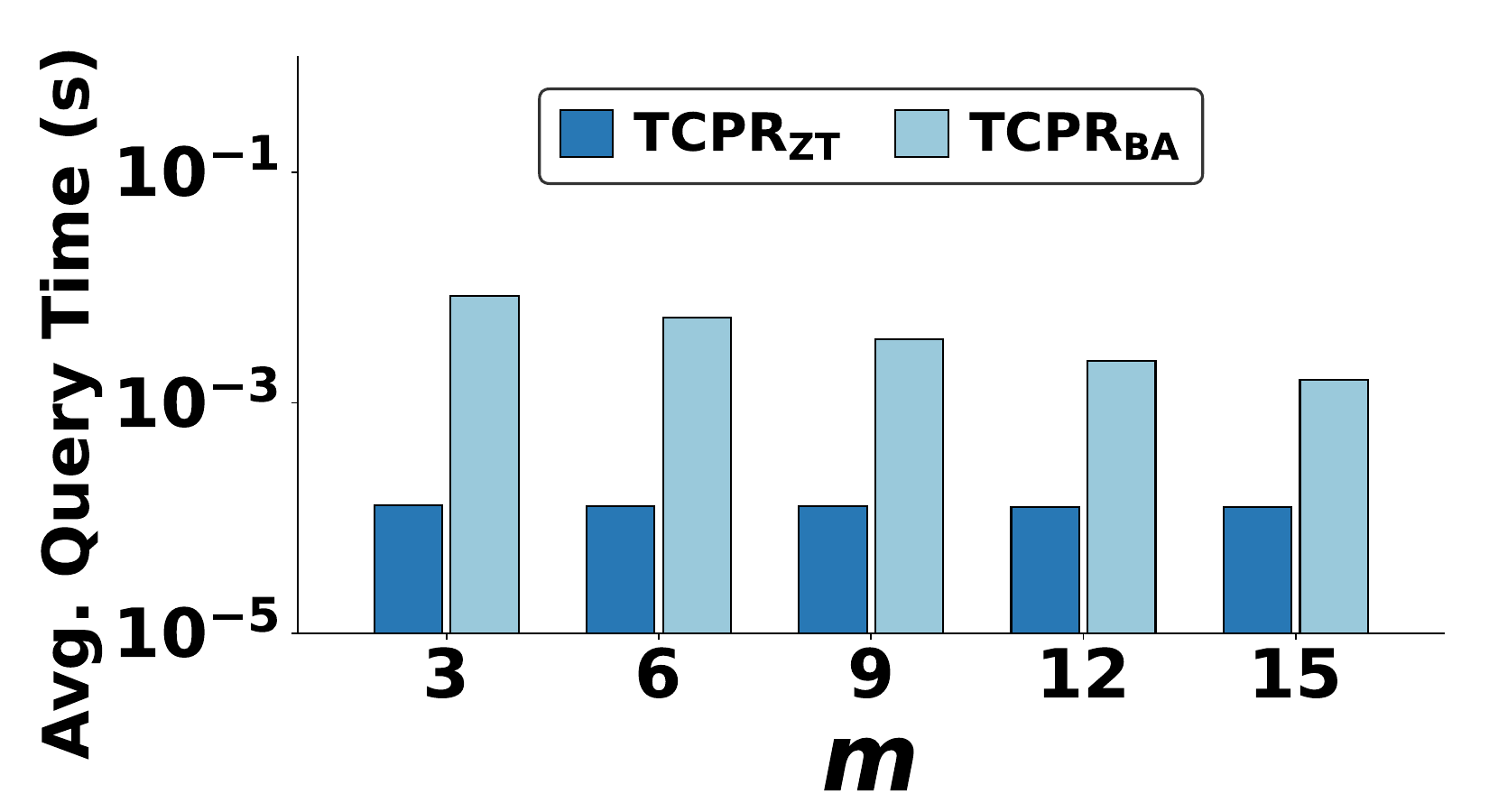}
    \caption{Query time vs. $m$}\label{fig:app:TF:m:query:BST}
  \end{subfigure}
  \begin{subfigure}[t]{\appfigwidth}
    \includegraphics[width=\linewidth]{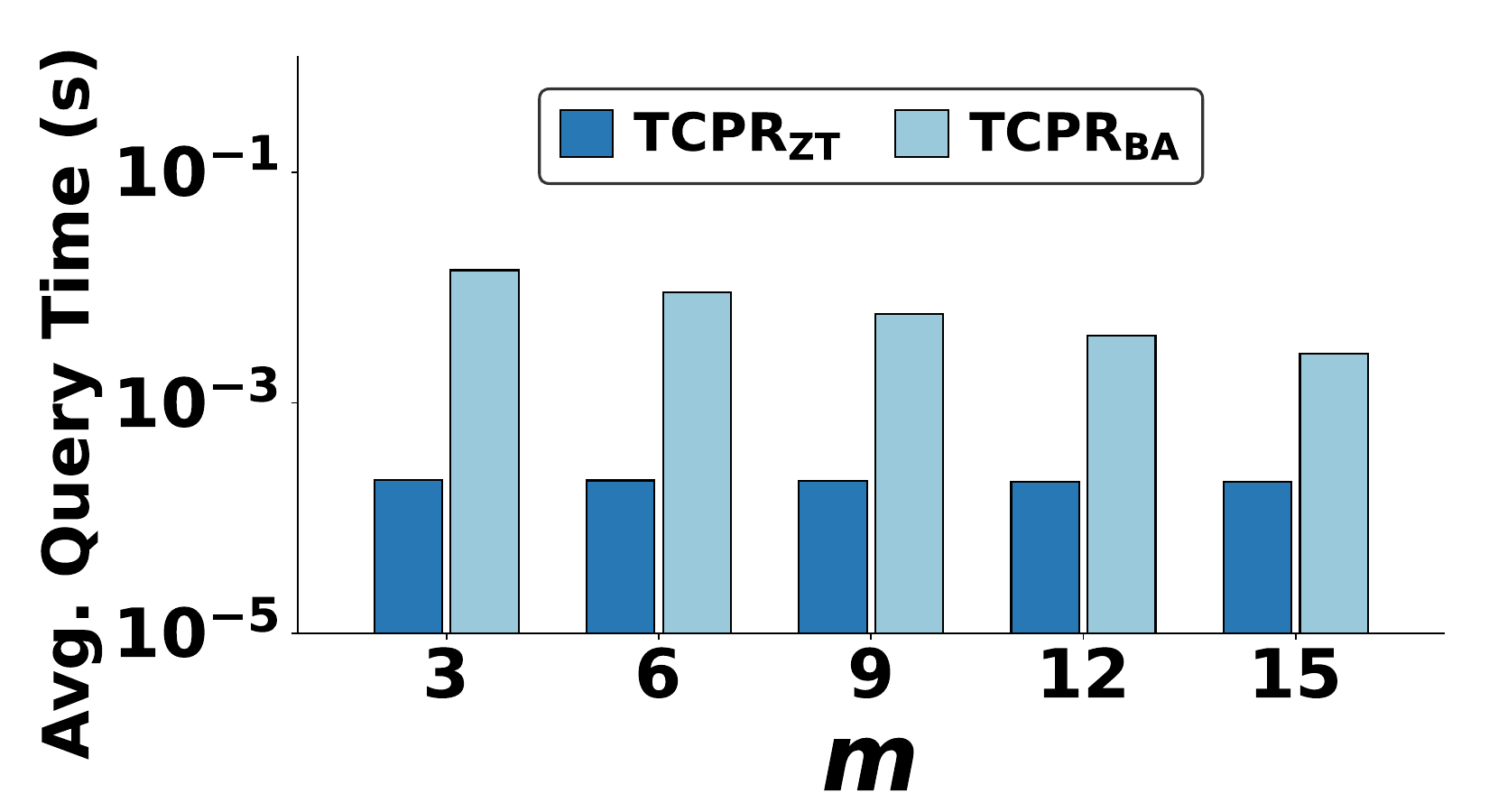}
    \caption{Query time vs. $m$}\label{fig:app:TF:m:query:SARS}
  \end{subfigure}
  \begin{subfigure}[t]{\appfigwidth}
    \includegraphics[width=\linewidth]{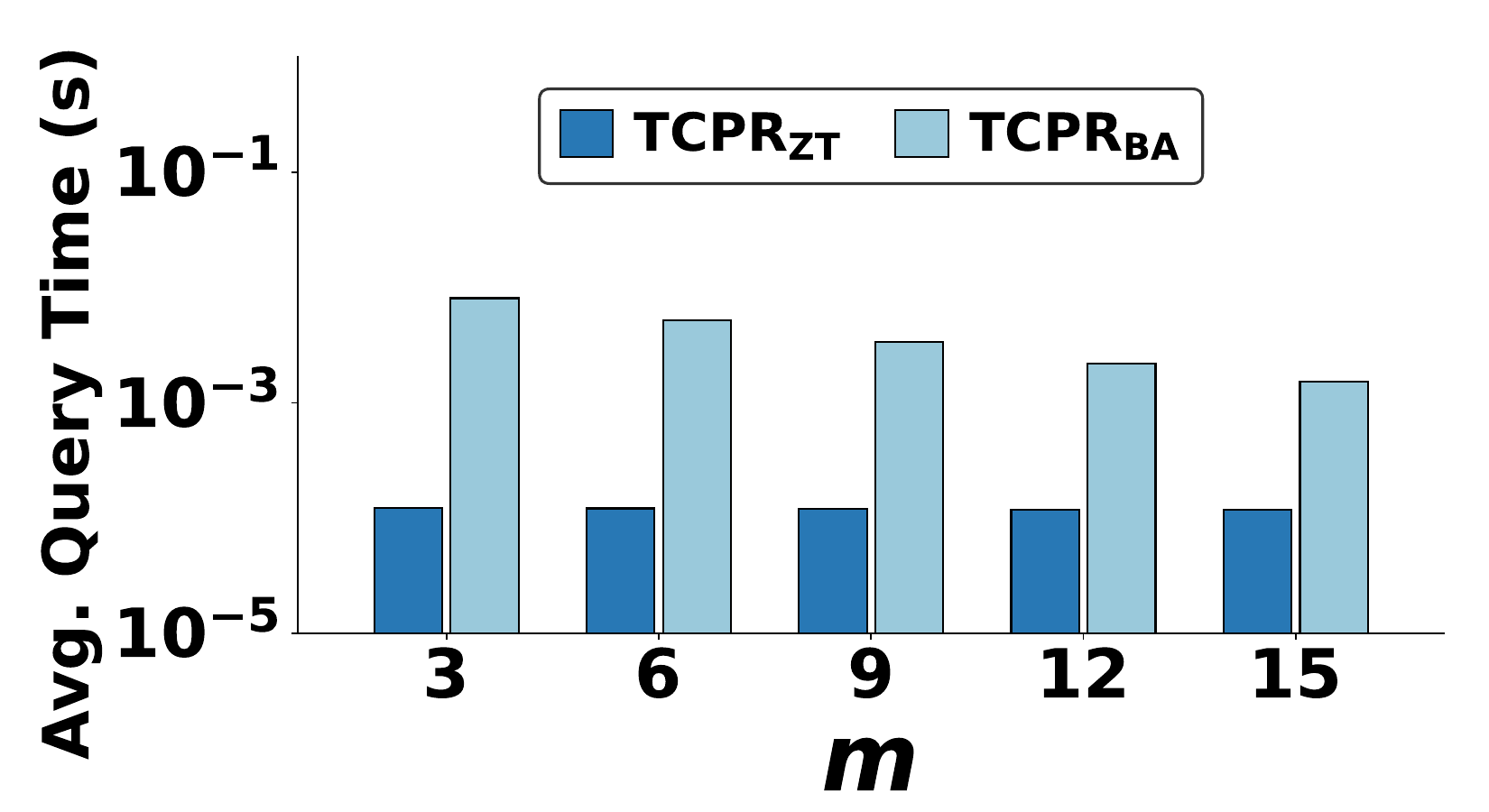}
    \caption{Query time vs. $m$}\label{fig:app:TF:m:query:SDSL}
  \end{subfigure}
  \begin{subfigure}[t]{\appfigwidth}
    \includegraphics[width=\linewidth]{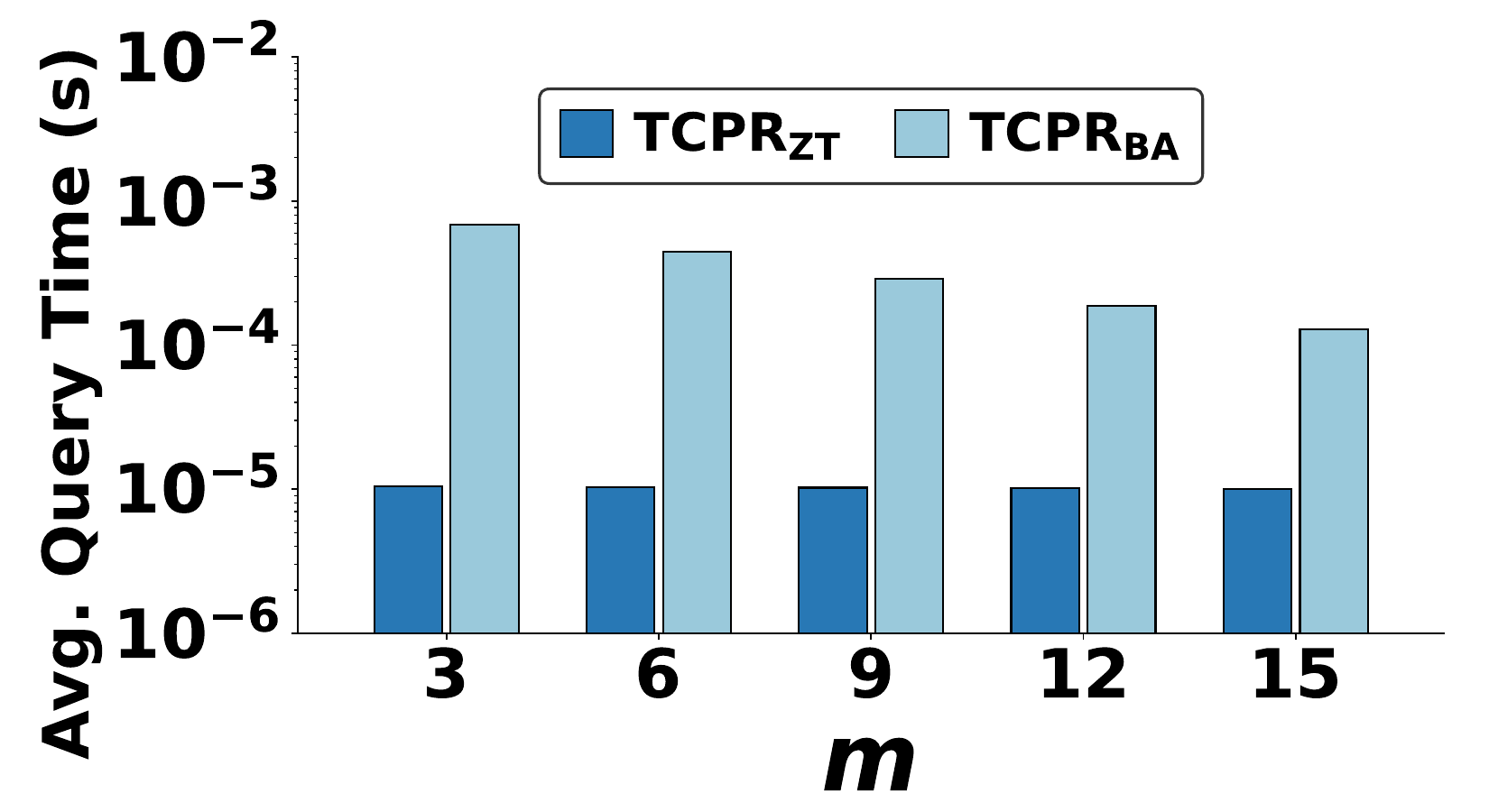}
    \caption{Query time vs. $m$}\label{fig:app:TF:m:query:WIKI}
  \end{subfigure}\\[0pt]
  \begin{subfigure}[t]{\appfigwidth}
    \includegraphics[width=\linewidth]{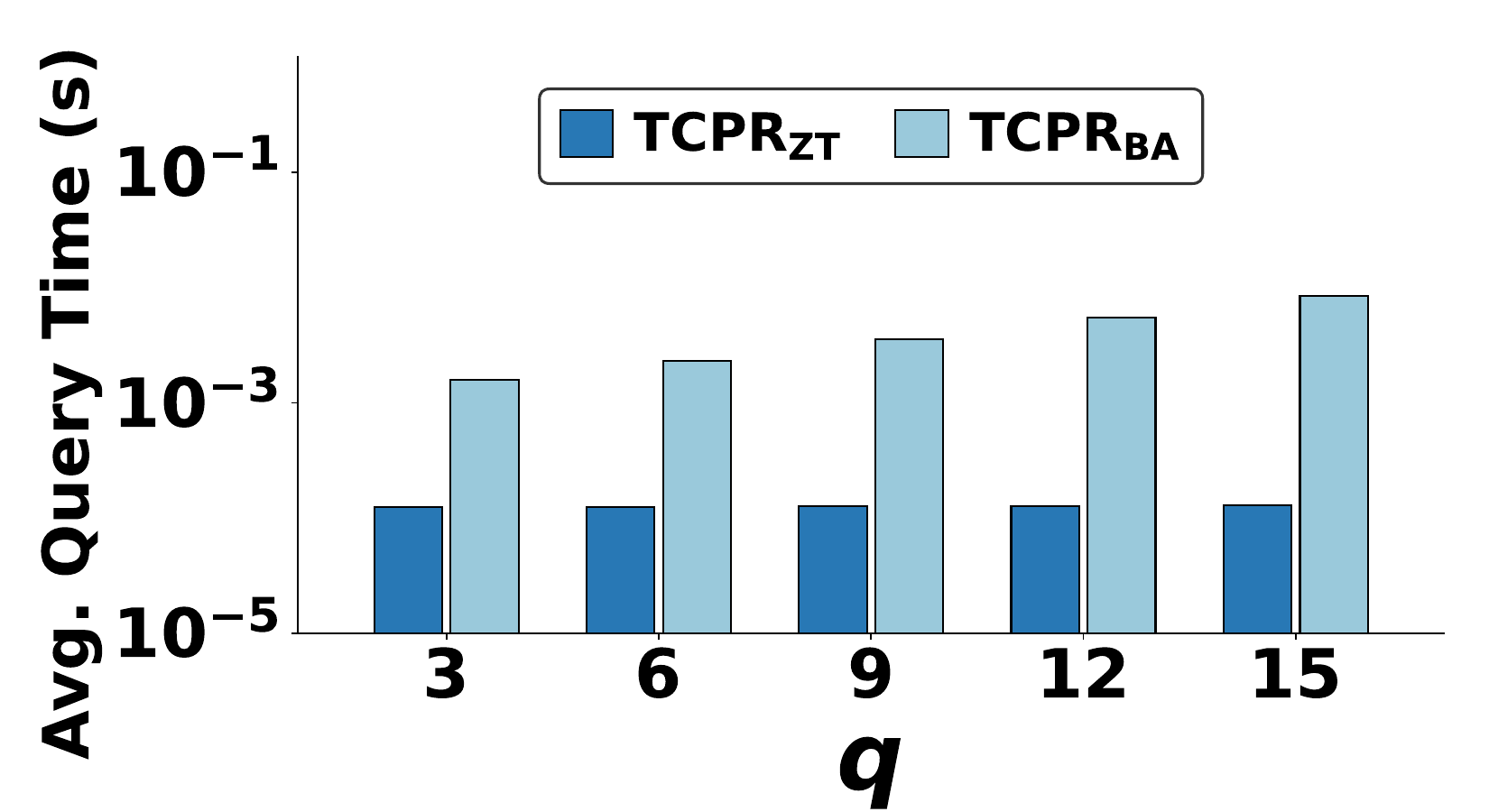}
    \caption{Query time vs. $q$}\label{fig:app:TF:q:query:BST}
  \end{subfigure}
  \begin{subfigure}[t]{\appfigwidth}
    \includegraphics[width=\linewidth]{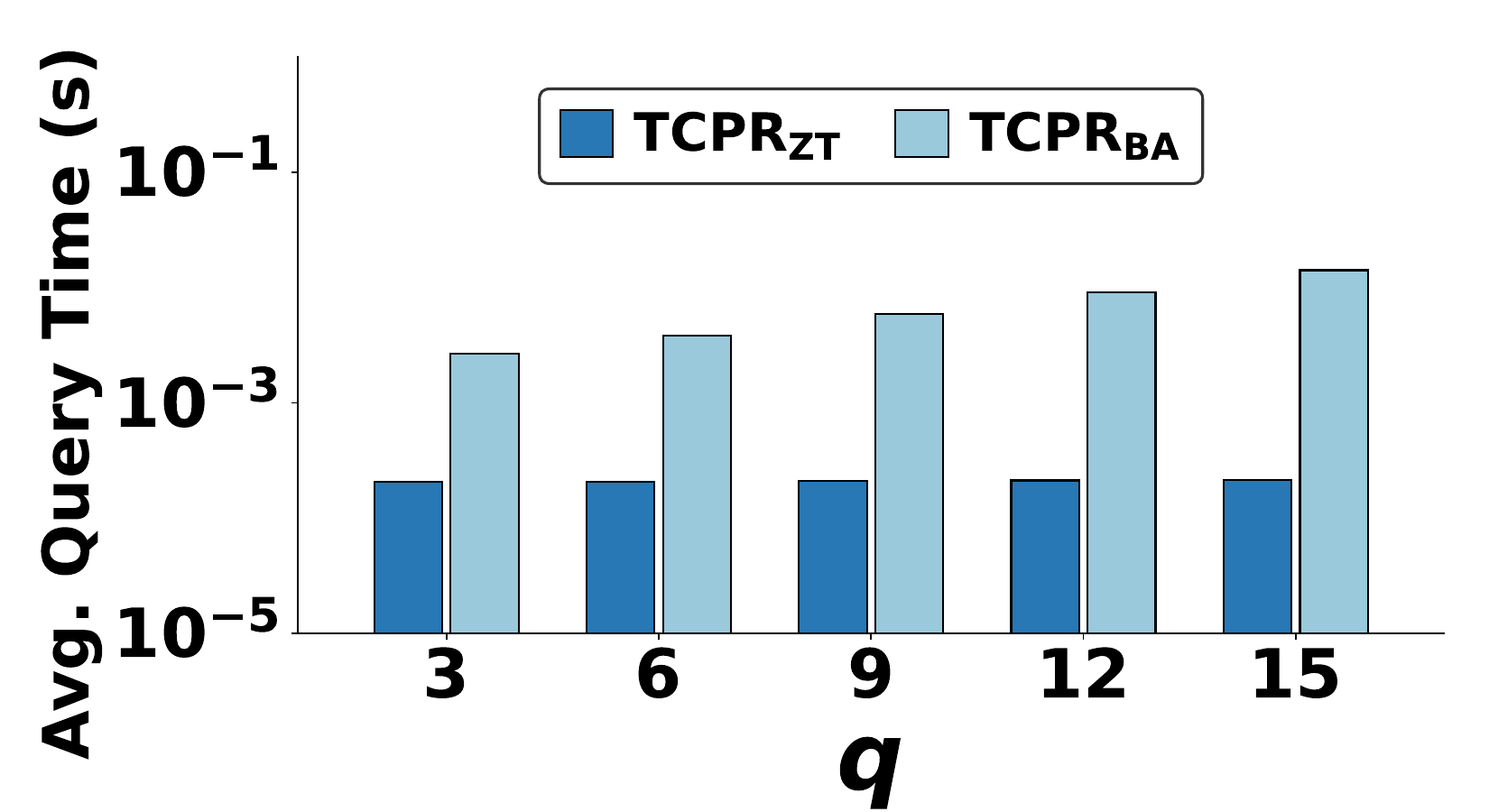}
    \caption{Query time vs. $q$}\label{fig:app:TF:q:query:SARS}
  \end{subfigure}
  \begin{subfigure}[t]{\appfigwidth}
    \includegraphics[width=\linewidth]{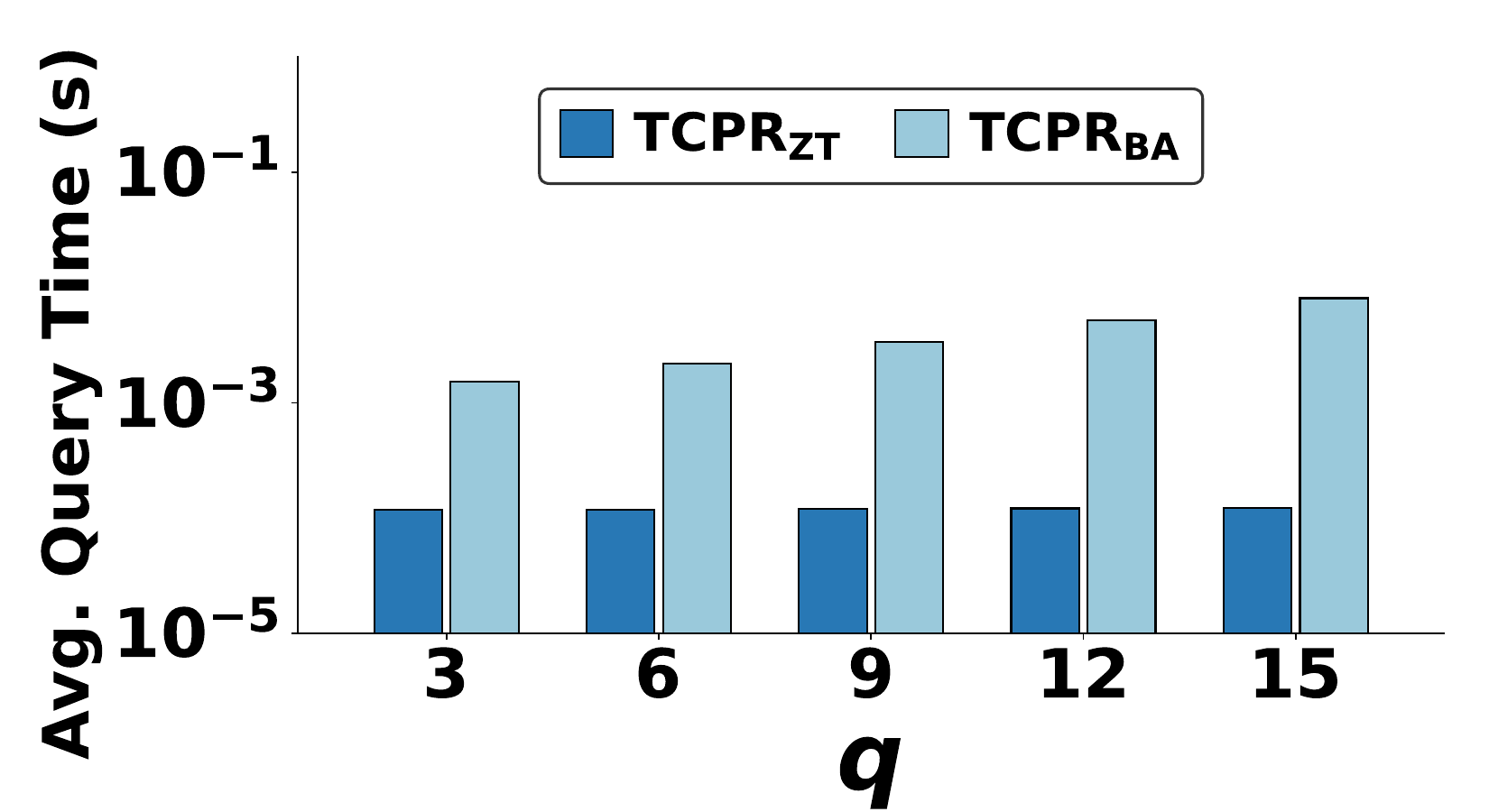}
    \caption{Query time vs. $q$}\label{fig:app:TF:q:query:SDSL}
  \end{subfigure}
  \begin{subfigure}[t]{\appfigwidth}
    \includegraphics[width=\linewidth]{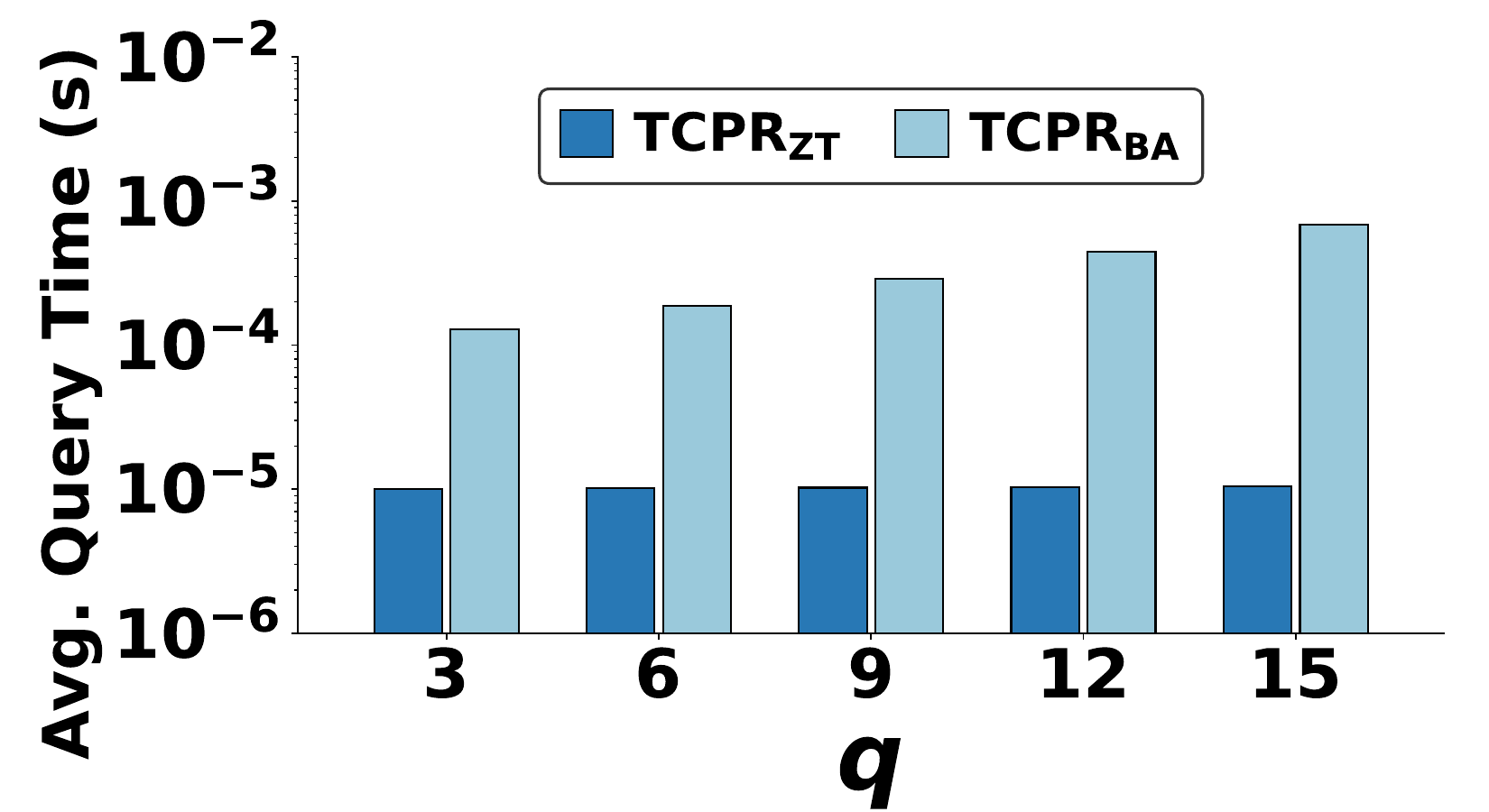}
    \caption{Query time vs. $q$}\label{fig:app:TF:q:query:WIKI}
  \end{subfigure}\\[0pt]
  \begin{subfigure}[t]{\appfigwidth}
    \includegraphics[width=\linewidth]{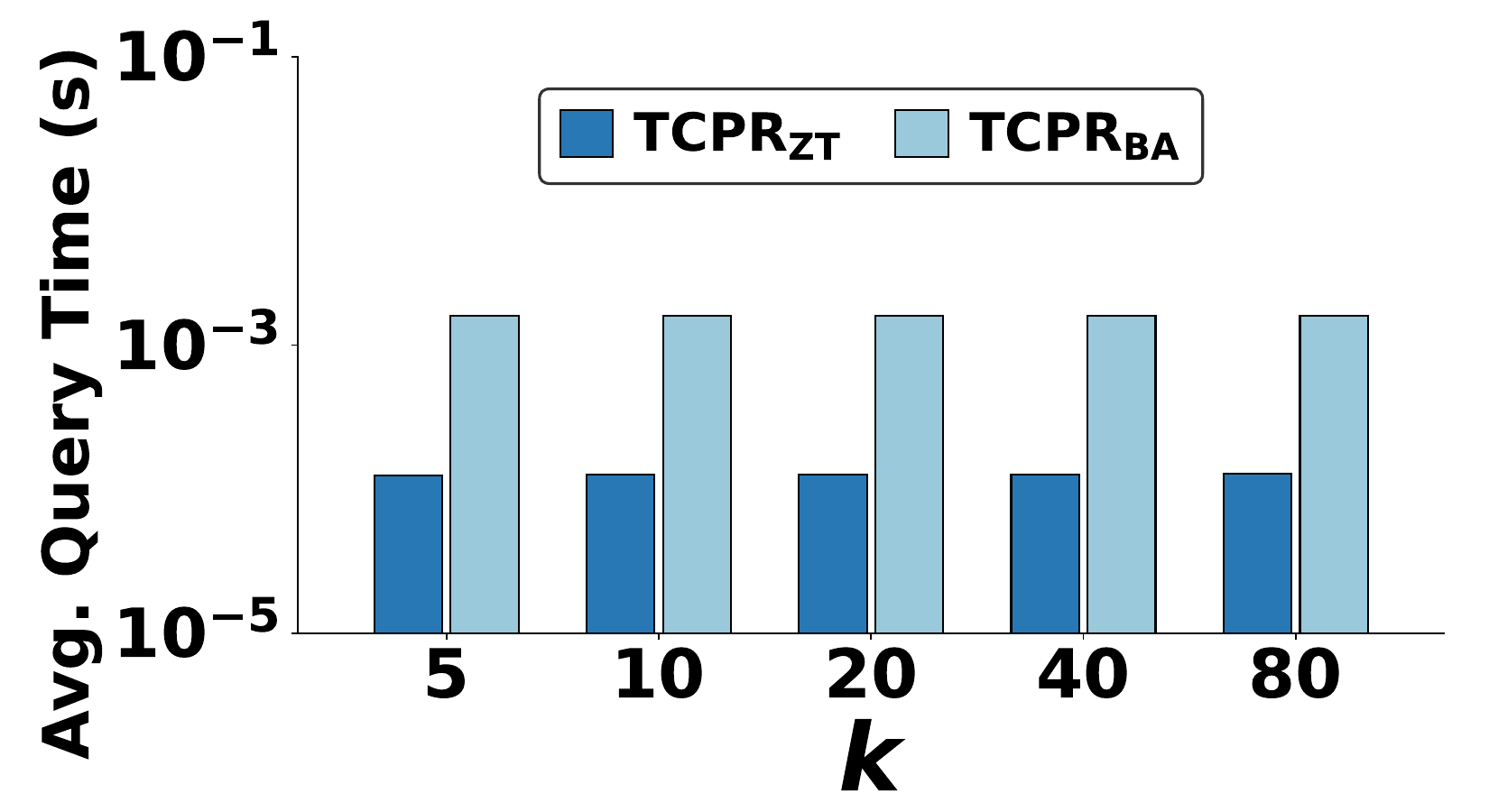}
    \caption{Query time vs. $k$}\label{fig:app:TF:k:query:BST}
  \end{subfigure}
  \begin{subfigure}[t]{\appfigwidth}
    \includegraphics[width=\linewidth]{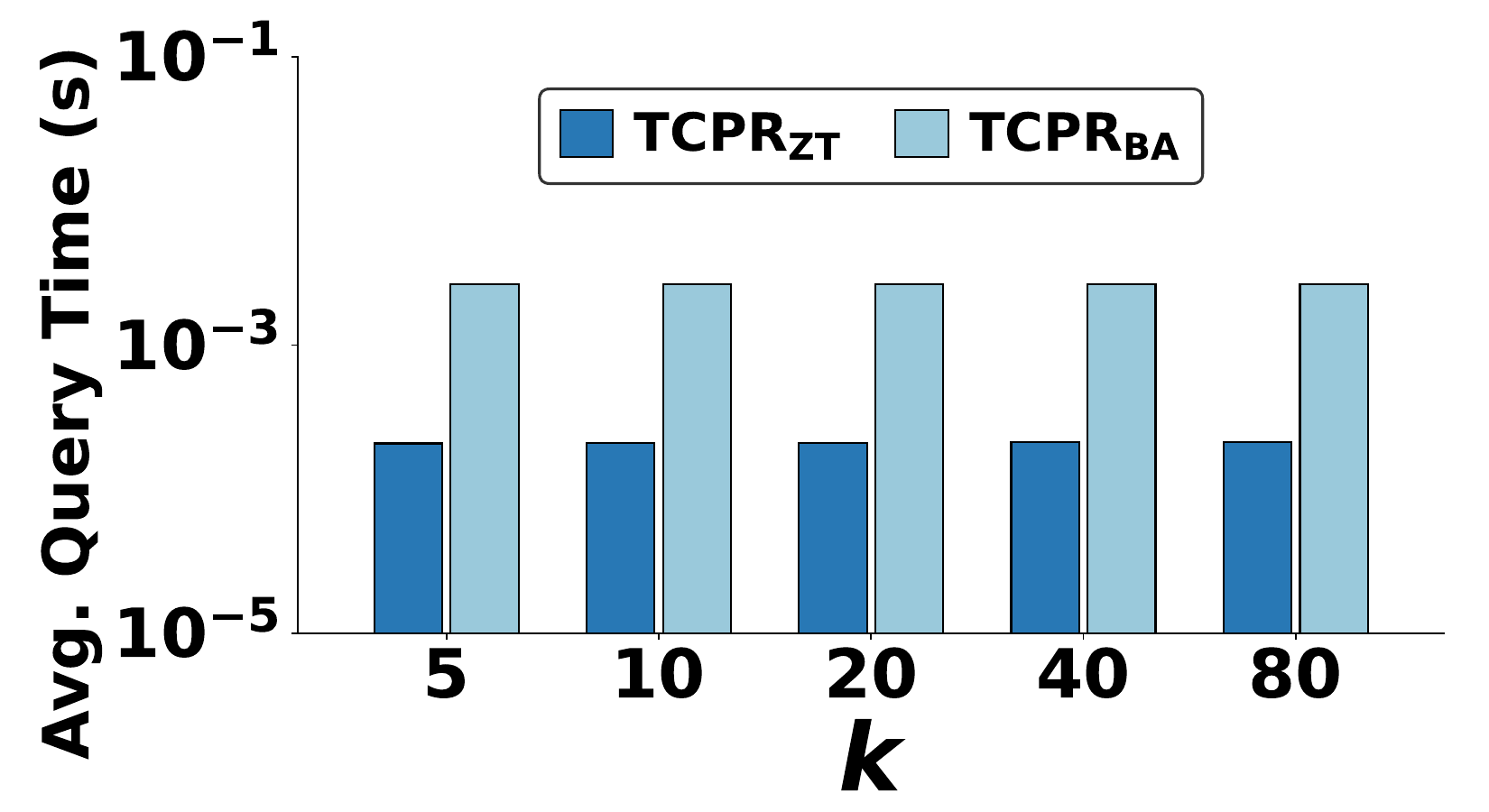}
    \caption{Query time vs. $k$}\label{fig:app:TF:k:query:SARS}
  \end{subfigure}
  \begin{subfigure}[t]{\appfigwidth}
    \includegraphics[width=\linewidth]{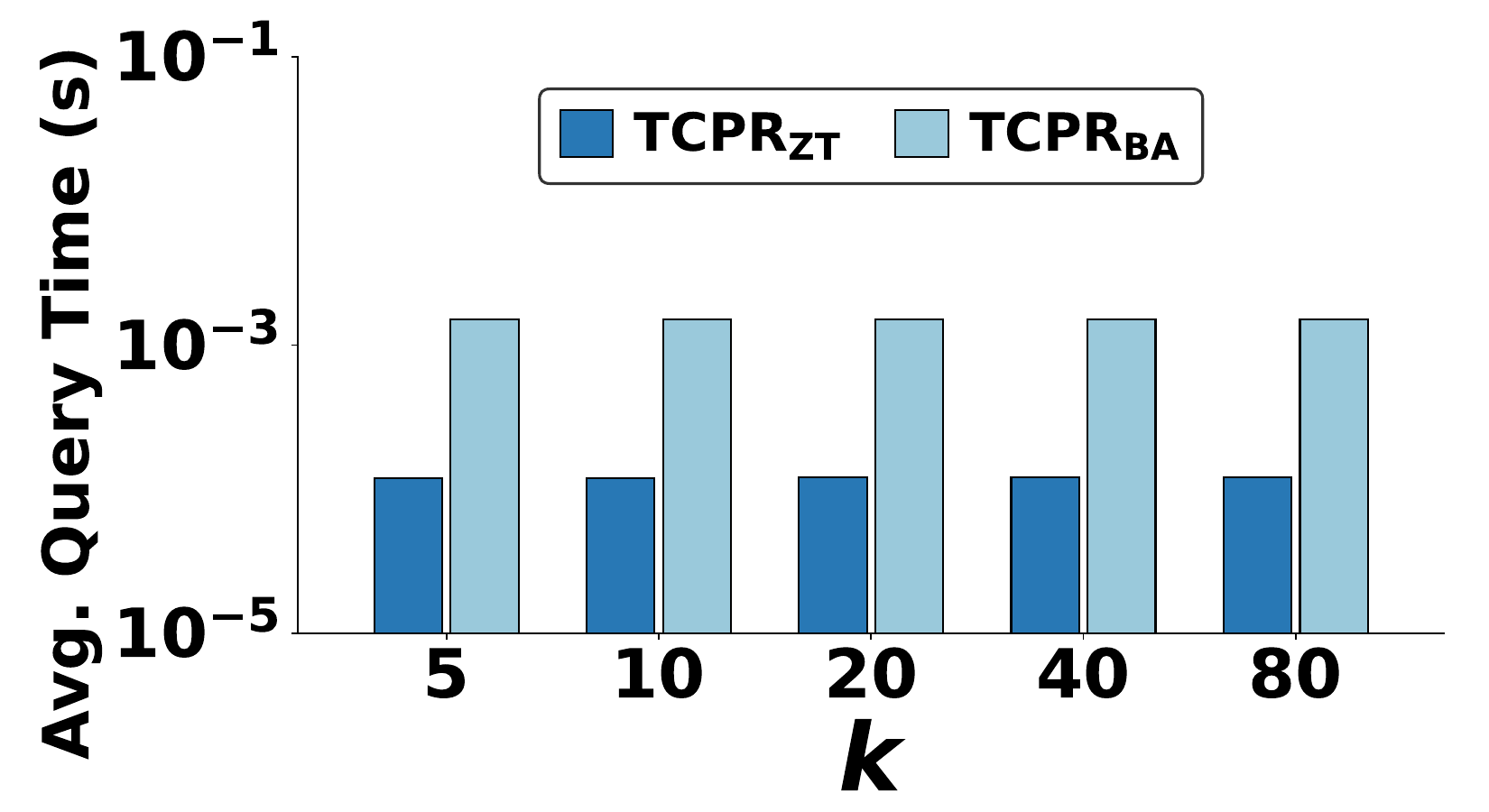}
    \caption{Query time vs. $k$}\label{fig:app:TF:k:query:SDSL}
  \end{subfigure}
  \begin{subfigure}[t]{\appfigwidth}
    \includegraphics[width=\linewidth]{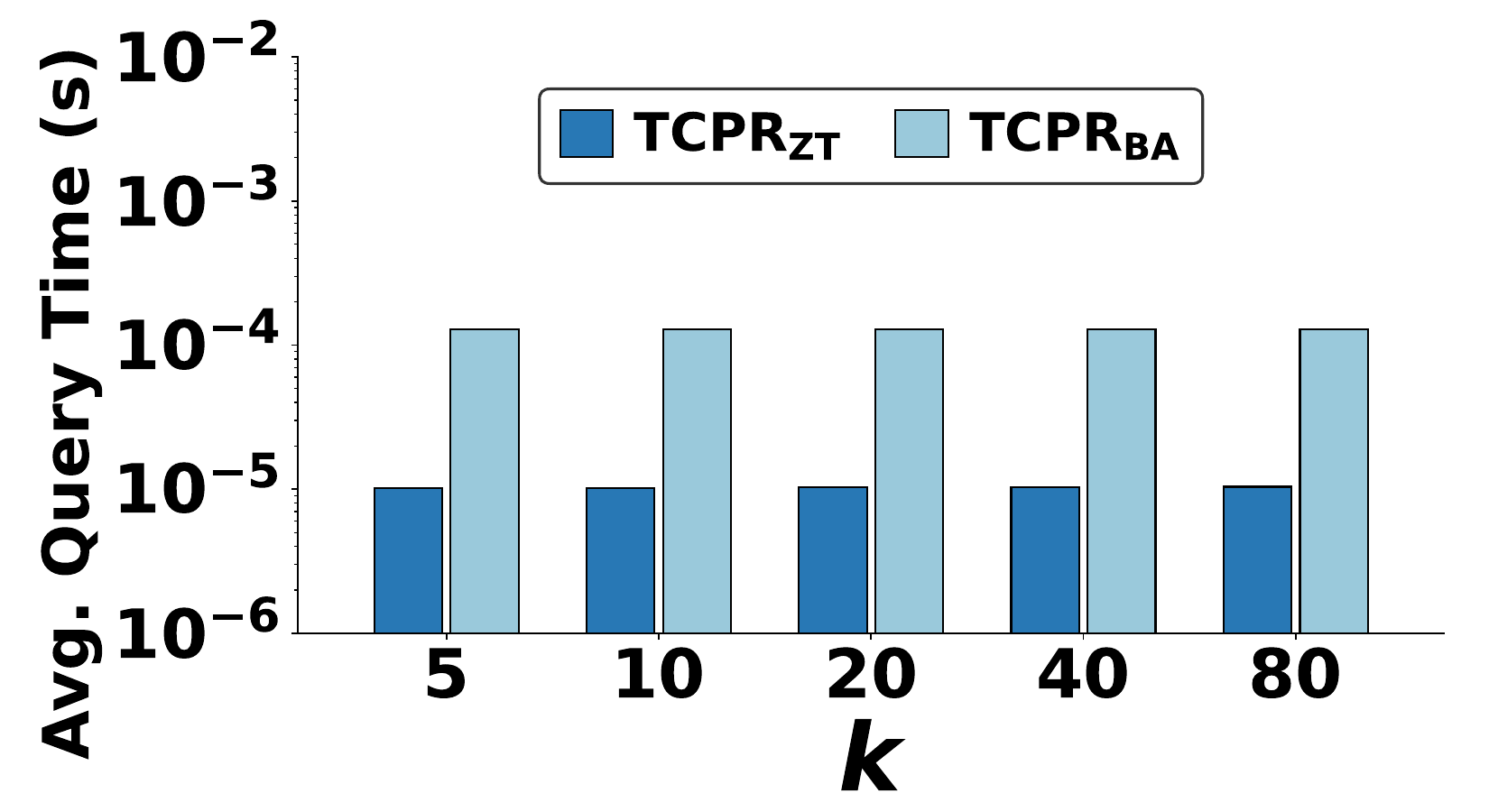}
    \caption{Query time vs. $k$}\label{fig:app:TF:k:query:WIKI}
  \end{subfigure}
  \vspace{\captionspacing}
  \vspace{+2mm}
  \caption{Query time of our \TCPR index with the \textsf{TF} scoring function vs. \TCPRBA on (a) \bst, (b) \sars, (c) \sdsl, and (d) \wiki vs. $n$; on (e) \bst, (f) \sars, (g) \sdsl, and (h) \wiki vs. $m$; on (i) \bst, (j) \sars, (k) \sdsl, and (l) \wiki vs. $q$; on (m) \bst, (n) \sars, (o) \sdsl, and (p) \wiki vs. $k$.}\label{fig:app:TF:query}
\end{figure}

\begin{figure}[ht]
  \centering
  \begin{subfigure}[t]{\appfigwidth}
    \includegraphics[width=\linewidth]{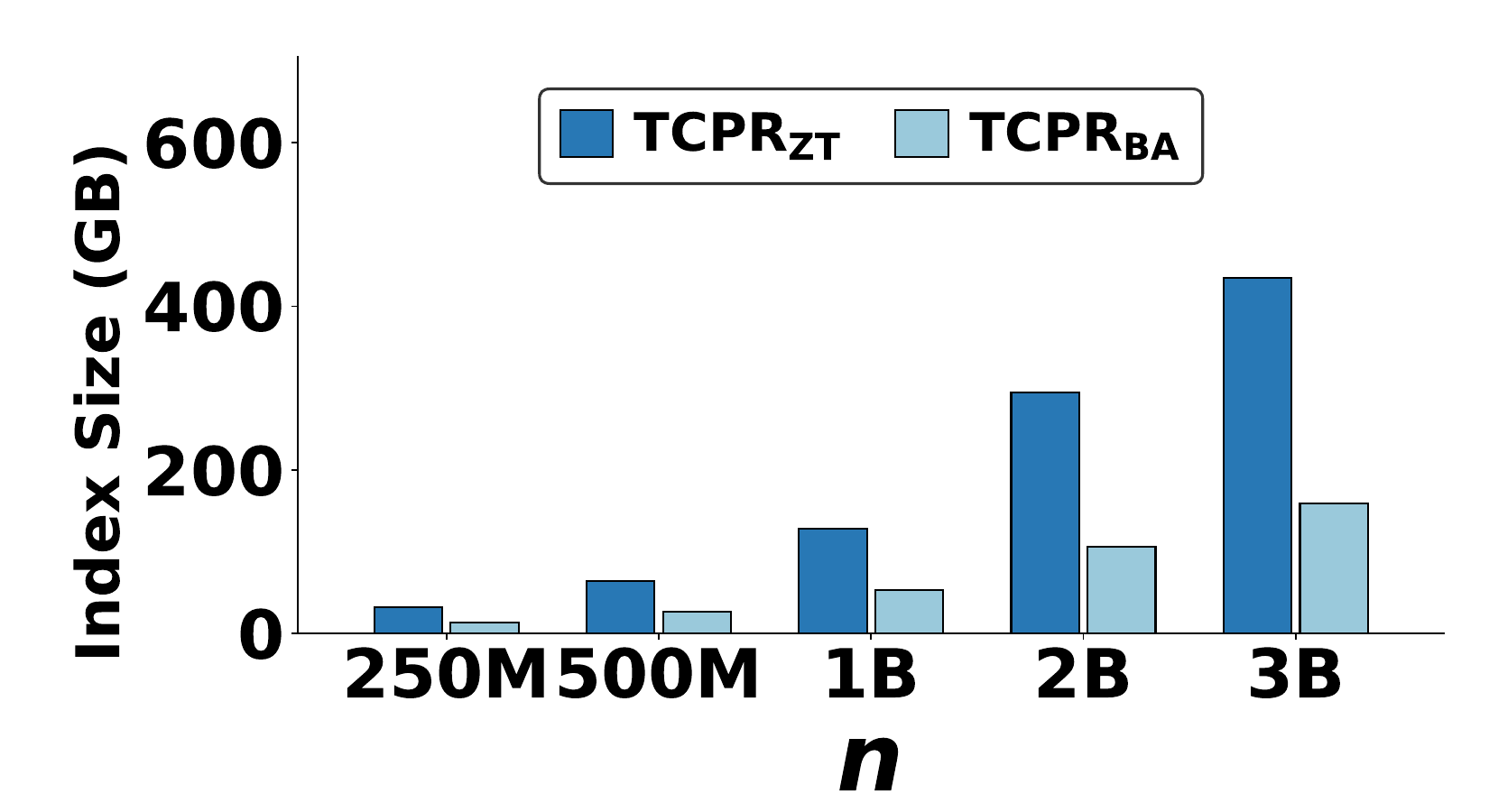}
    \caption{Index size vs. $n$}\label{fig:app:TF:n:index:BST}
  \end{subfigure}
  \begin{subfigure}[t]{\appfigwidth}
    \includegraphics[width=\linewidth]{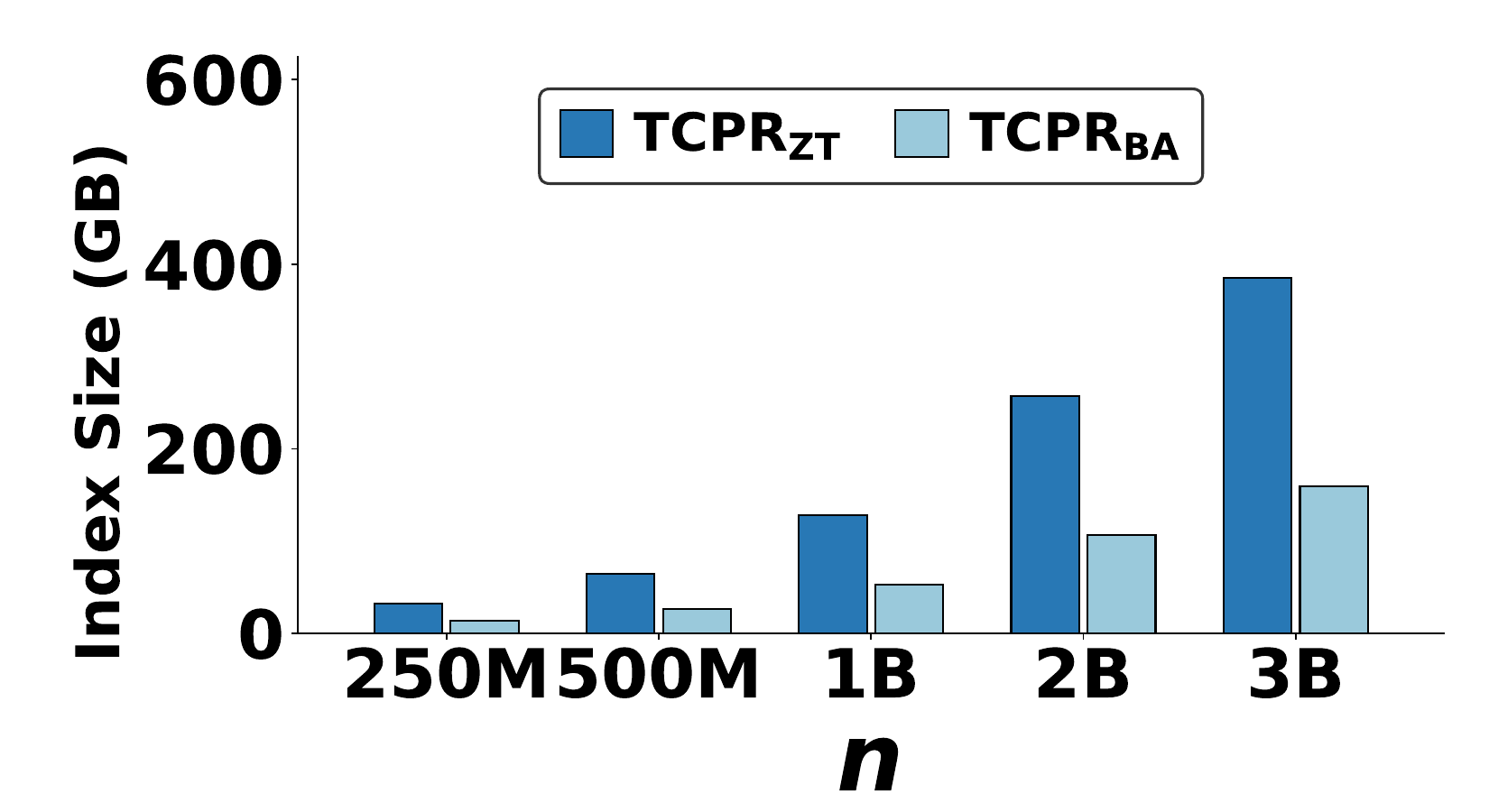}
    \caption{Index size vs. $n$}\label{fig:app:TF:n:index:SARS}
  \end{subfigure}
  \begin{subfigure}[t]{\appfigwidth}
    \includegraphics[width=\linewidth]{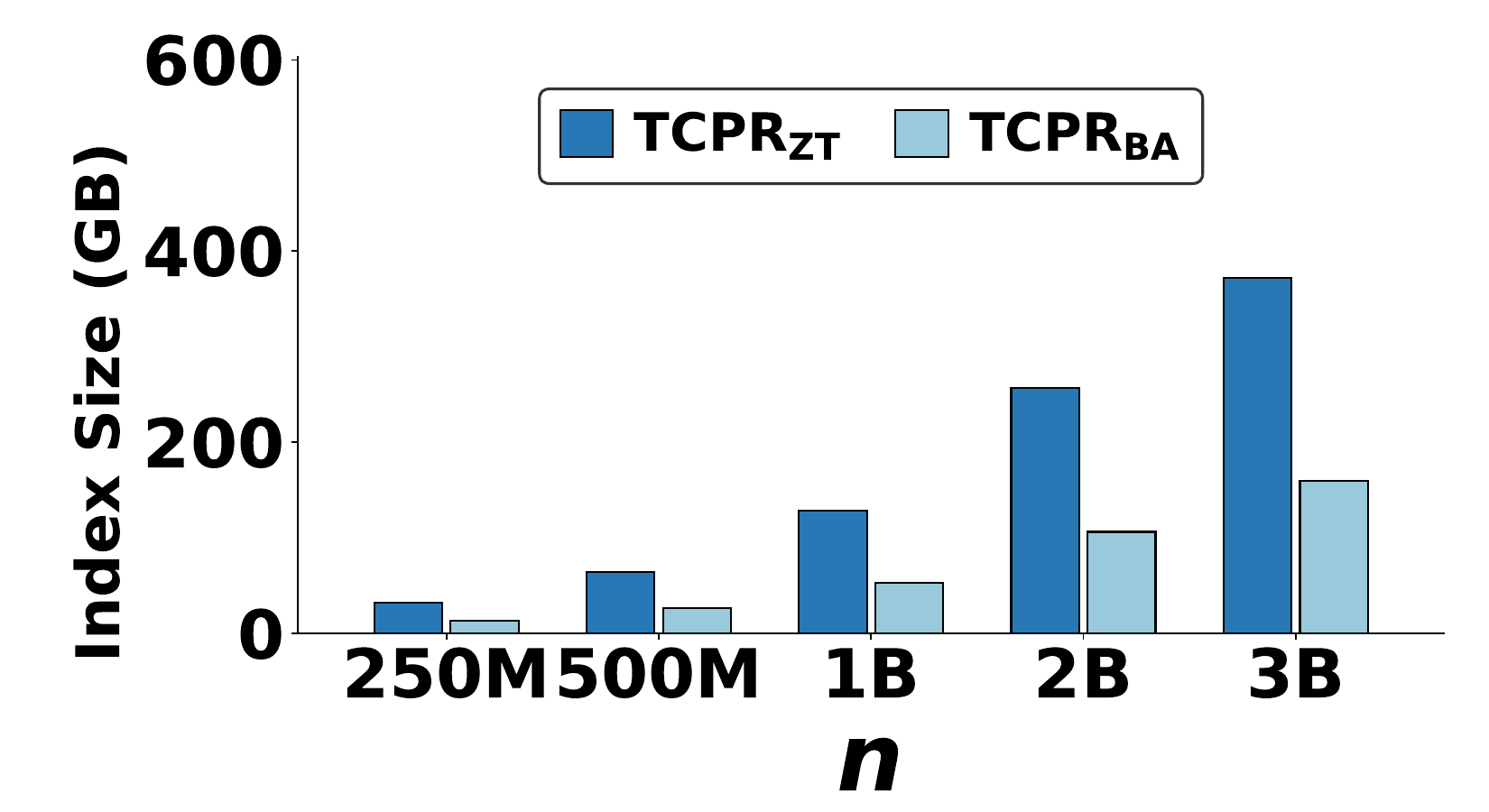}
    \caption{Index size vs. $n$}\label{fig:app:TF:n:index:SDSL}
  \end{subfigure}
  \begin{subfigure}[t]{\appfigwidth}
    \includegraphics[width=\linewidth]{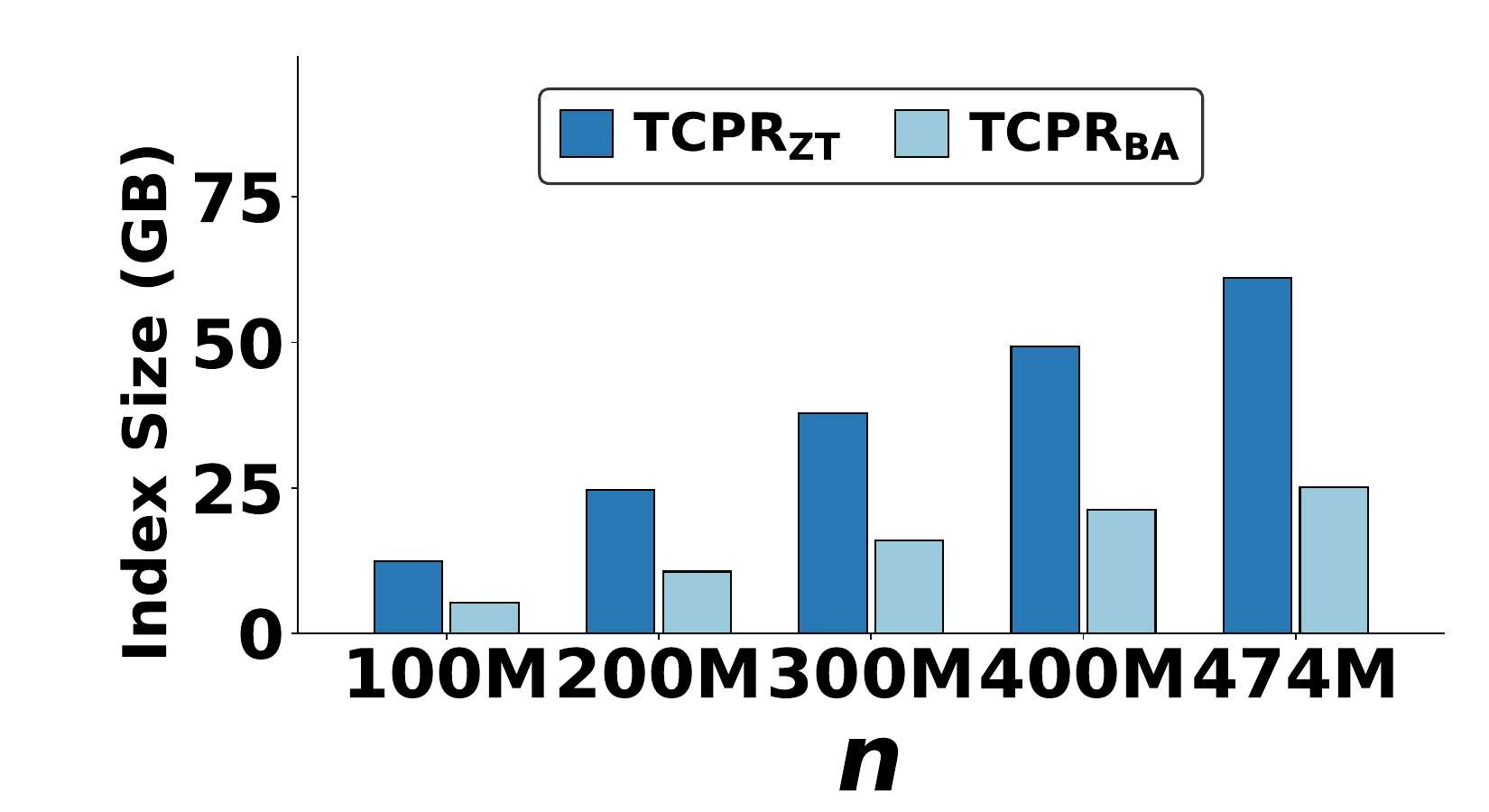}
    \caption{Index size vs. $n$}\label{fig:app:TF:n:index:WIKI}
  \end{subfigure}\\[0pt]
  \begin{subfigure}[t]{\appfigwidth}
    \includegraphics[width=\linewidth]{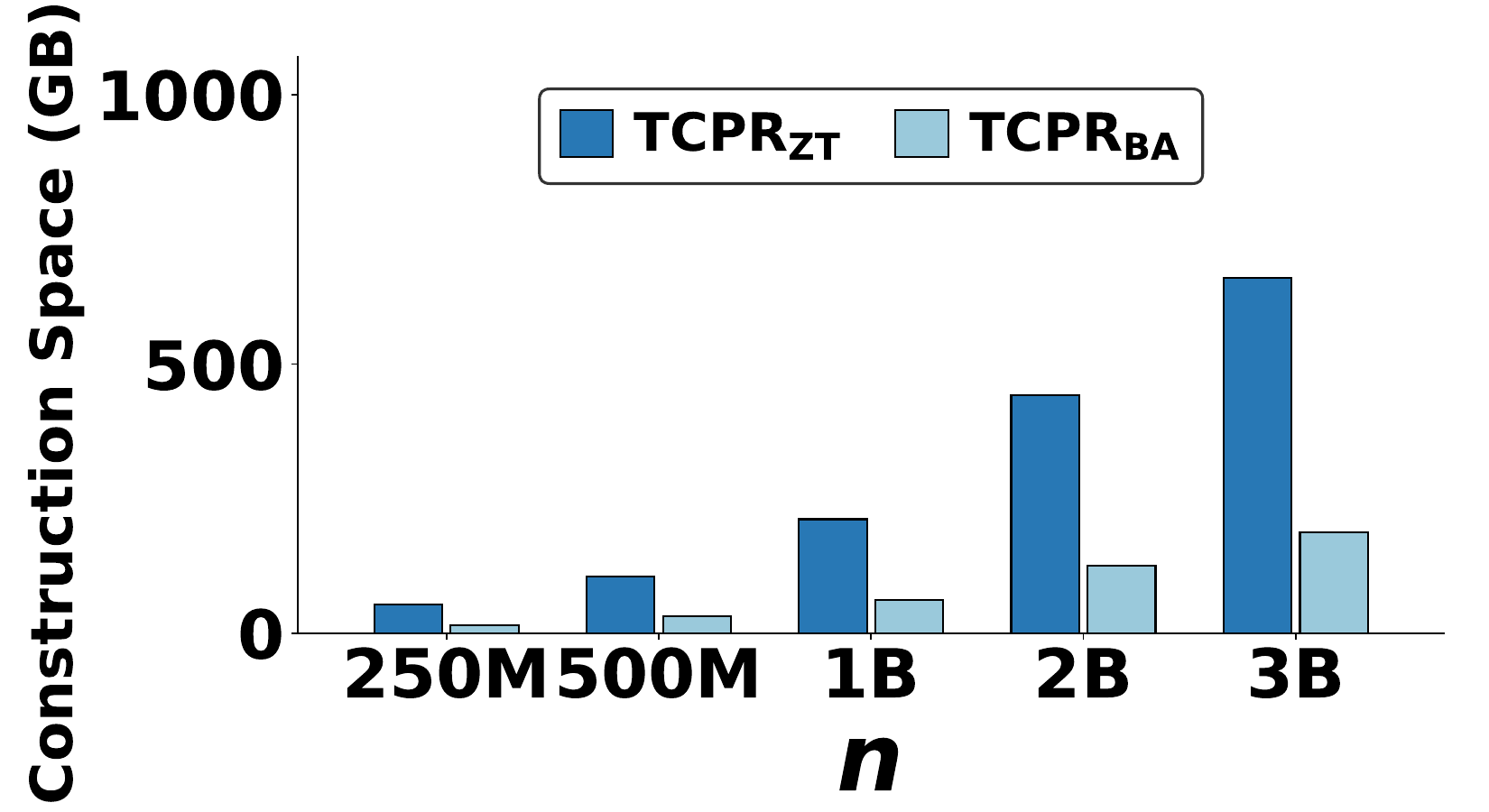}
    \caption{Constr.\ space vs. $n$}\label{fig:app:TF:n:rss:BST}
  \end{subfigure}
  \begin{subfigure}[t]{\appfigwidth}
    \includegraphics[width=\linewidth]{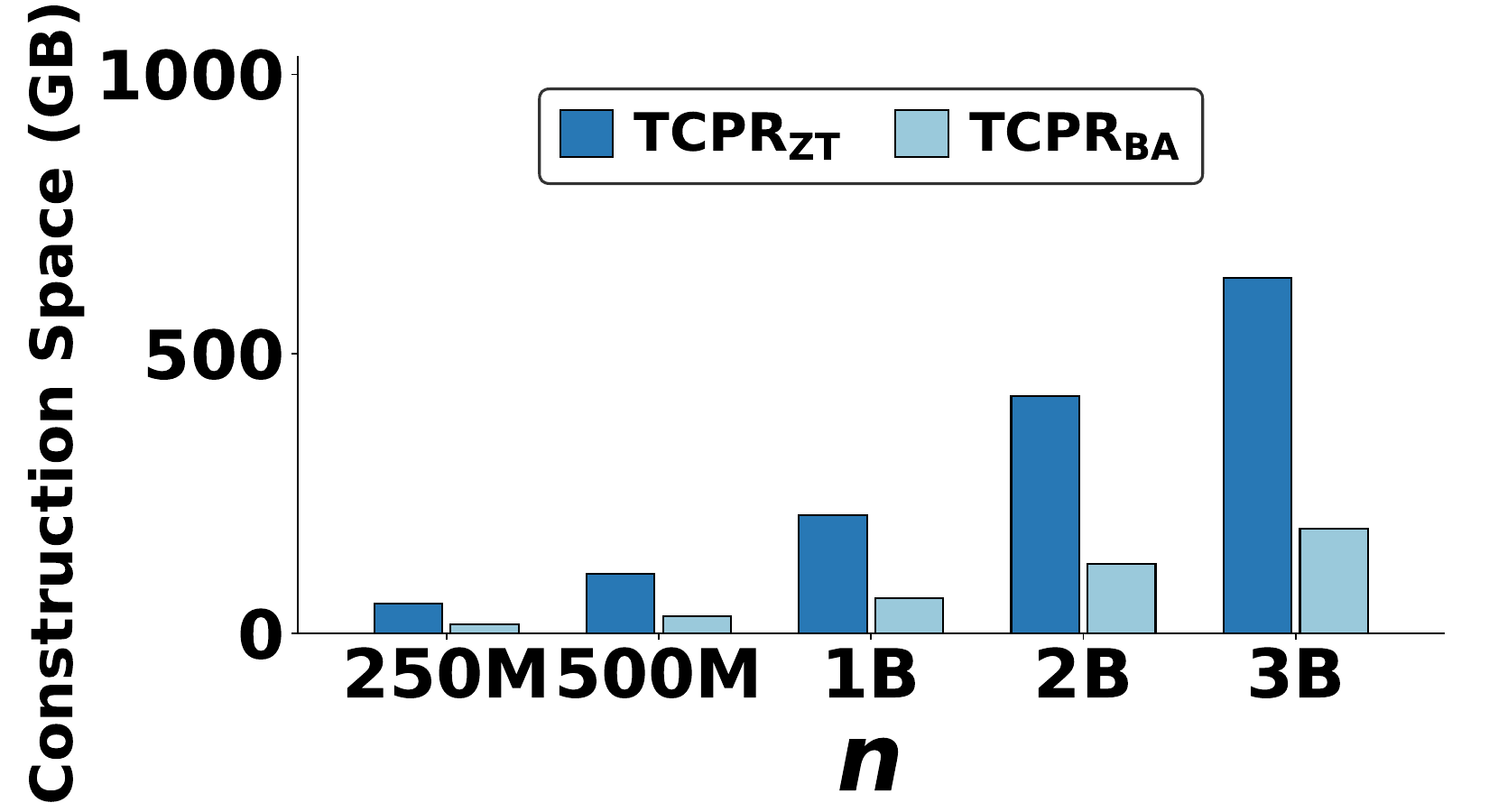}
    \caption{Constr.\ space vs. $n$}\label{fig:app:TF:n:rss:SARS}
  \end{subfigure}
  \begin{subfigure}[t]{\appfigwidth}
    \includegraphics[width=\linewidth]{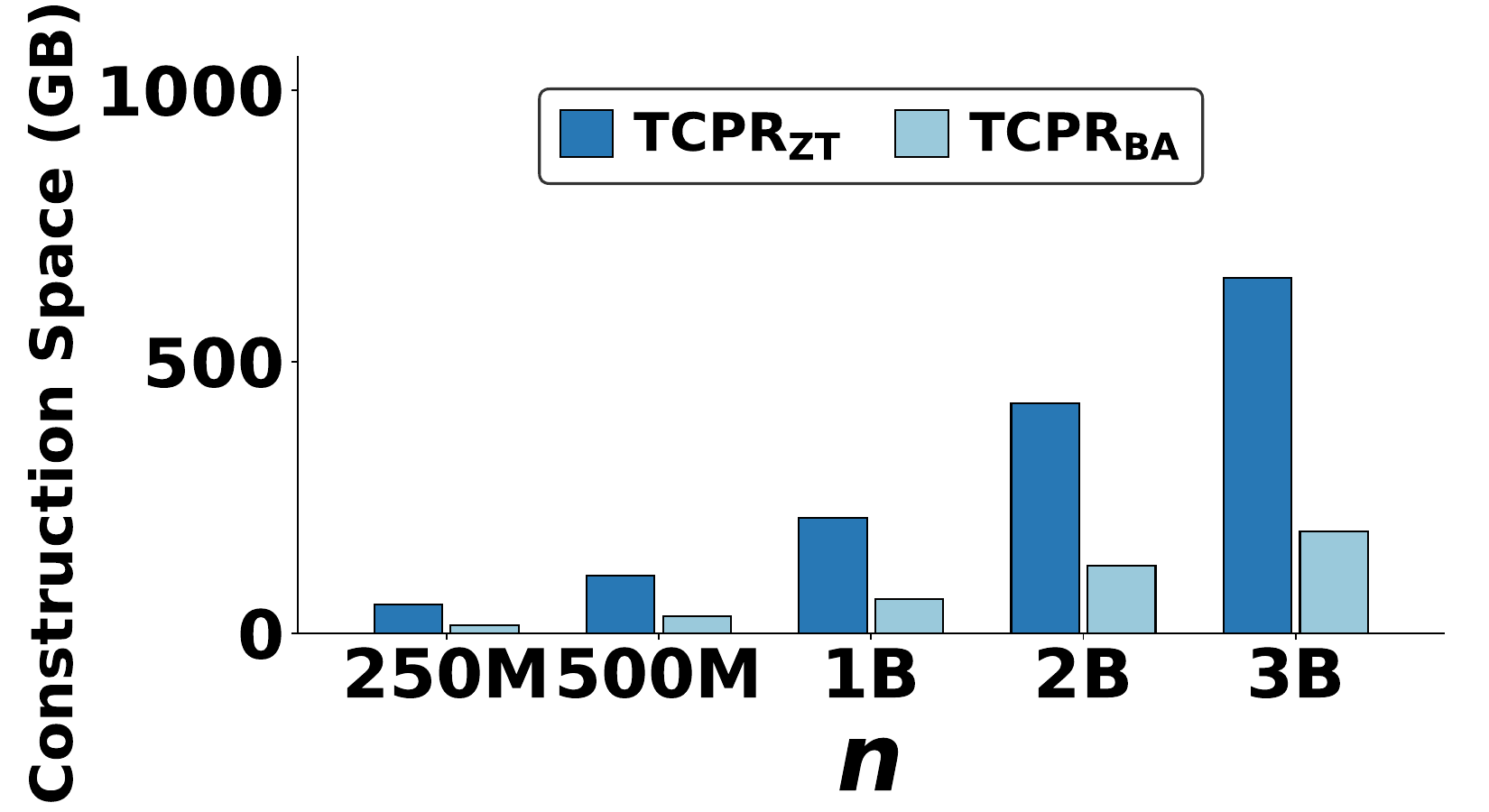}
    \caption{Constr.\ space vs. $n$}\label{fig:app:TF:n:rss:SDSL}
  \end{subfigure}
  \begin{subfigure}[t]{\appfigwidth}
    \includegraphics[width=\linewidth]{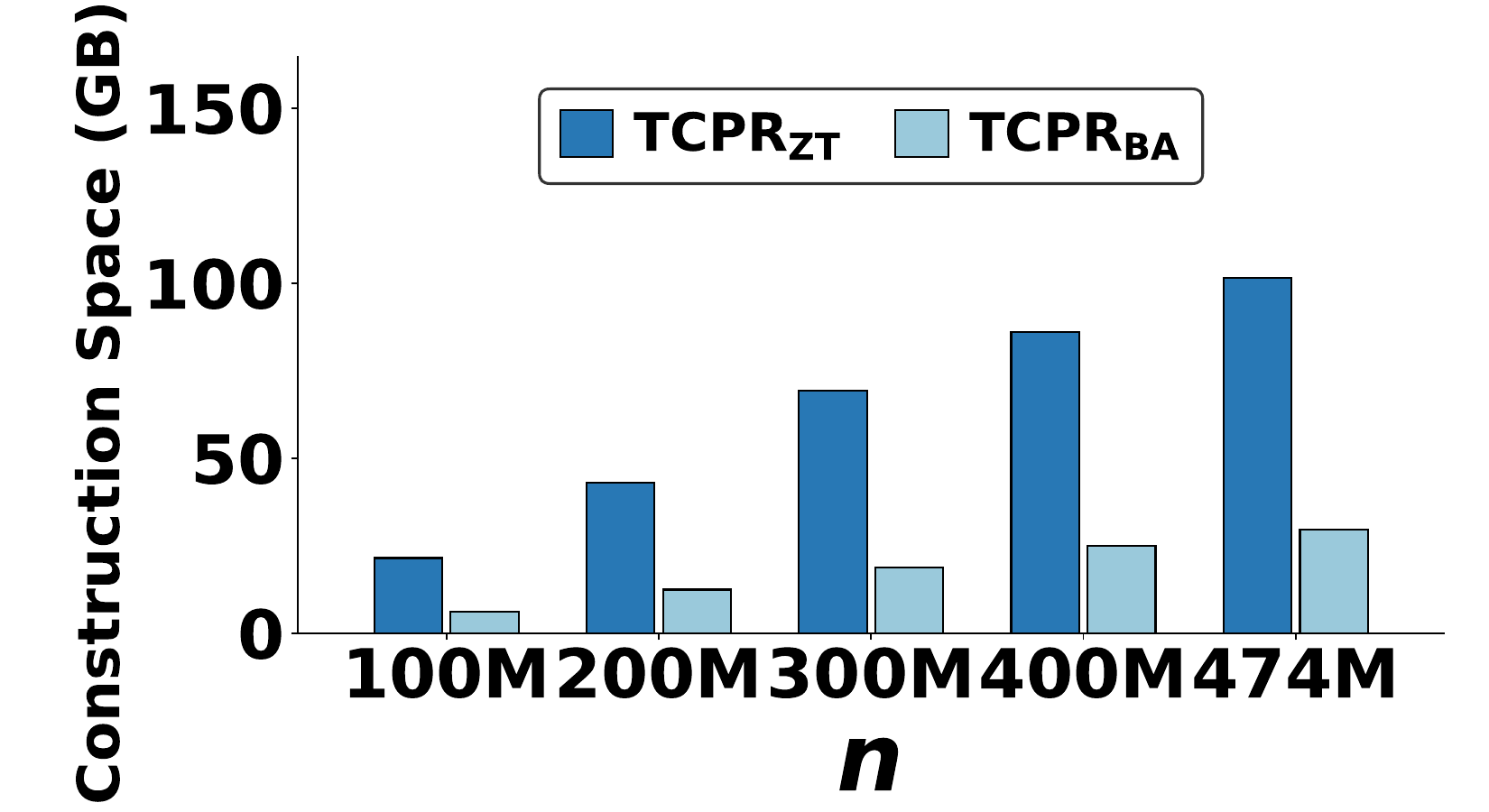}
    \caption{Constr.\ space vs. $n$}\label{fig:app:TF:n:rss:WIKI}
  \end{subfigure}\\[0pt]
  \begin{subfigure}[t]{\appfigwidth}
    \includegraphics[width=\linewidth]{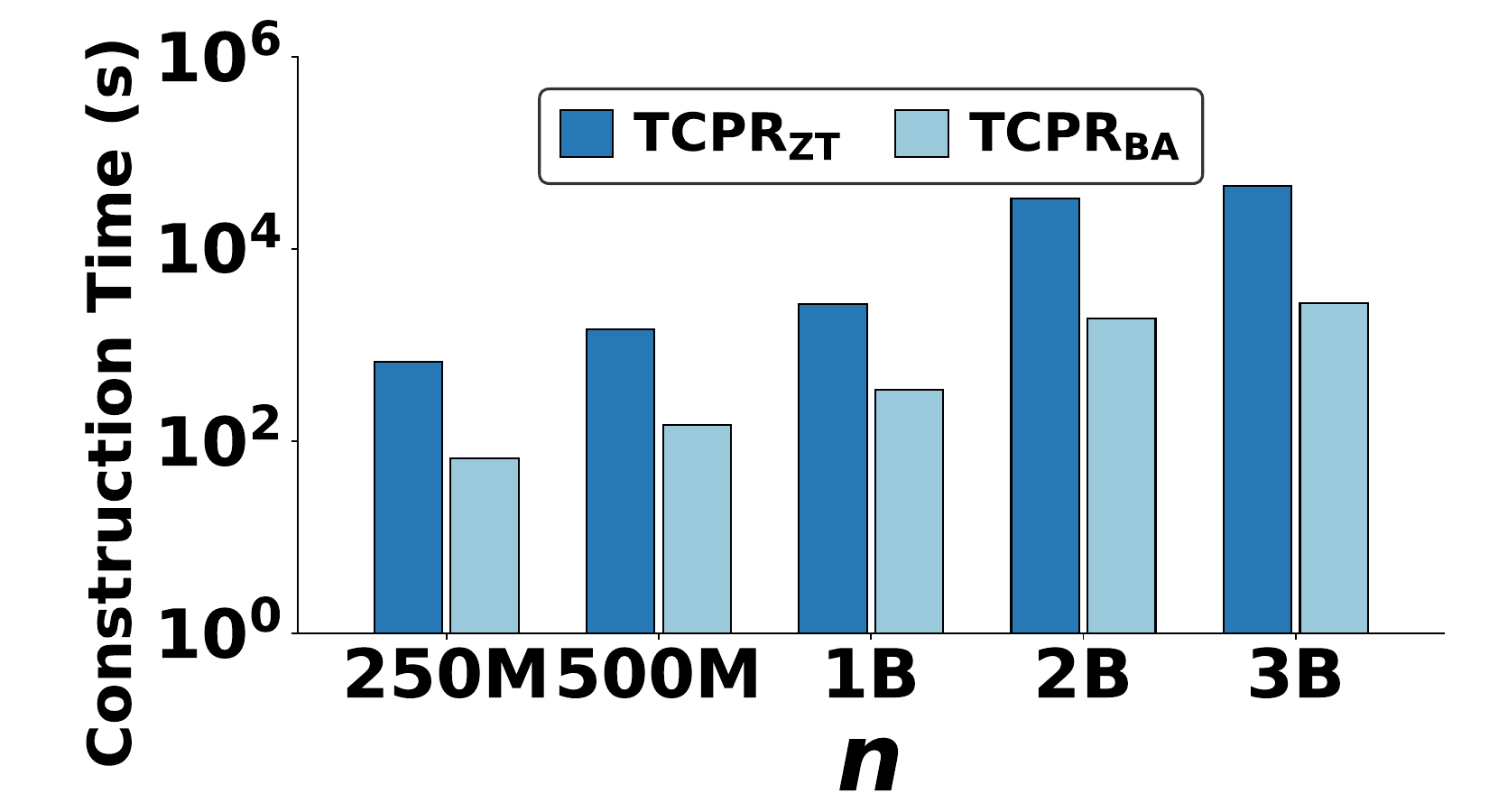}
    \caption{Constr.\ time vs. $n$}\label{fig:app:TF:n:build:BST}
  \end{subfigure}
  \begin{subfigure}[t]{\appfigwidth}
    \includegraphics[width=\linewidth]{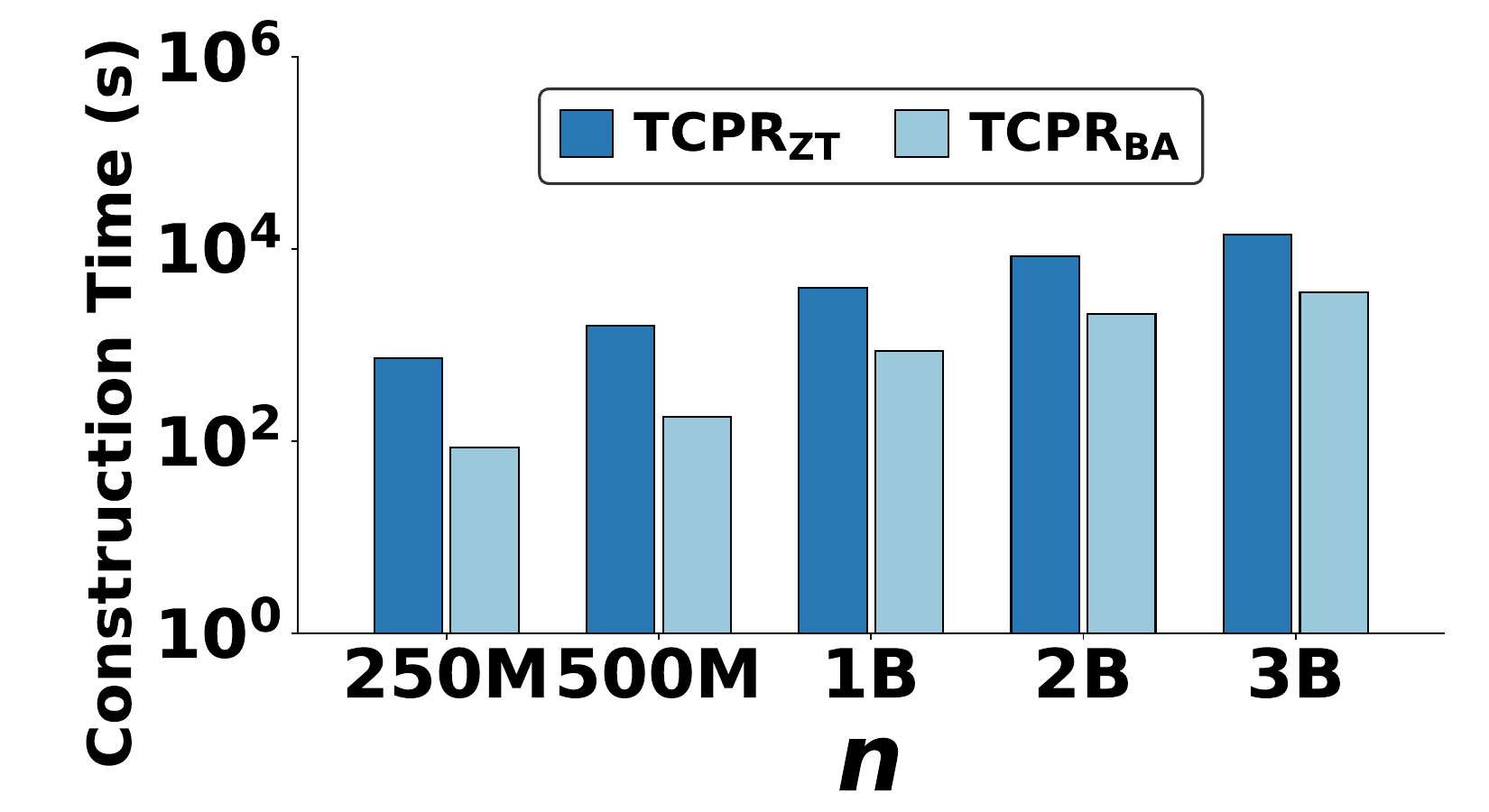}
    \caption{Constr.\ time vs. $n$}\label{fig:app:TF:n:build:SARS}
  \end{subfigure}
  \begin{subfigure}[t]{\appfigwidth}
    \includegraphics[width=\linewidth]{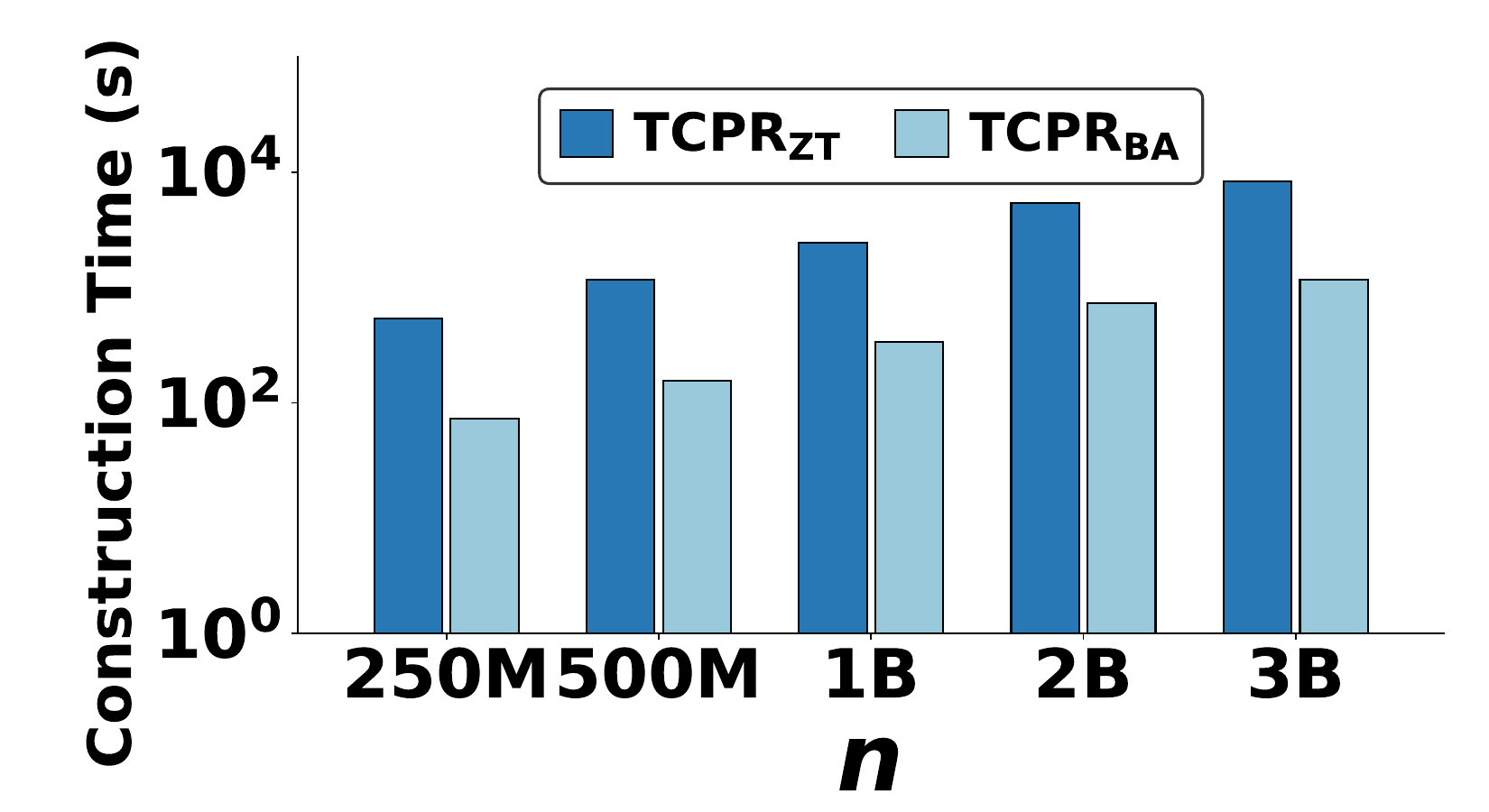}
    \caption{Constr.\ time vs. $n$}\label{fig:app:TF:n:build:SDSL}
  \end{subfigure}
  \begin{subfigure}[t]{\appfigwidth}
    \includegraphics[width=\linewidth]{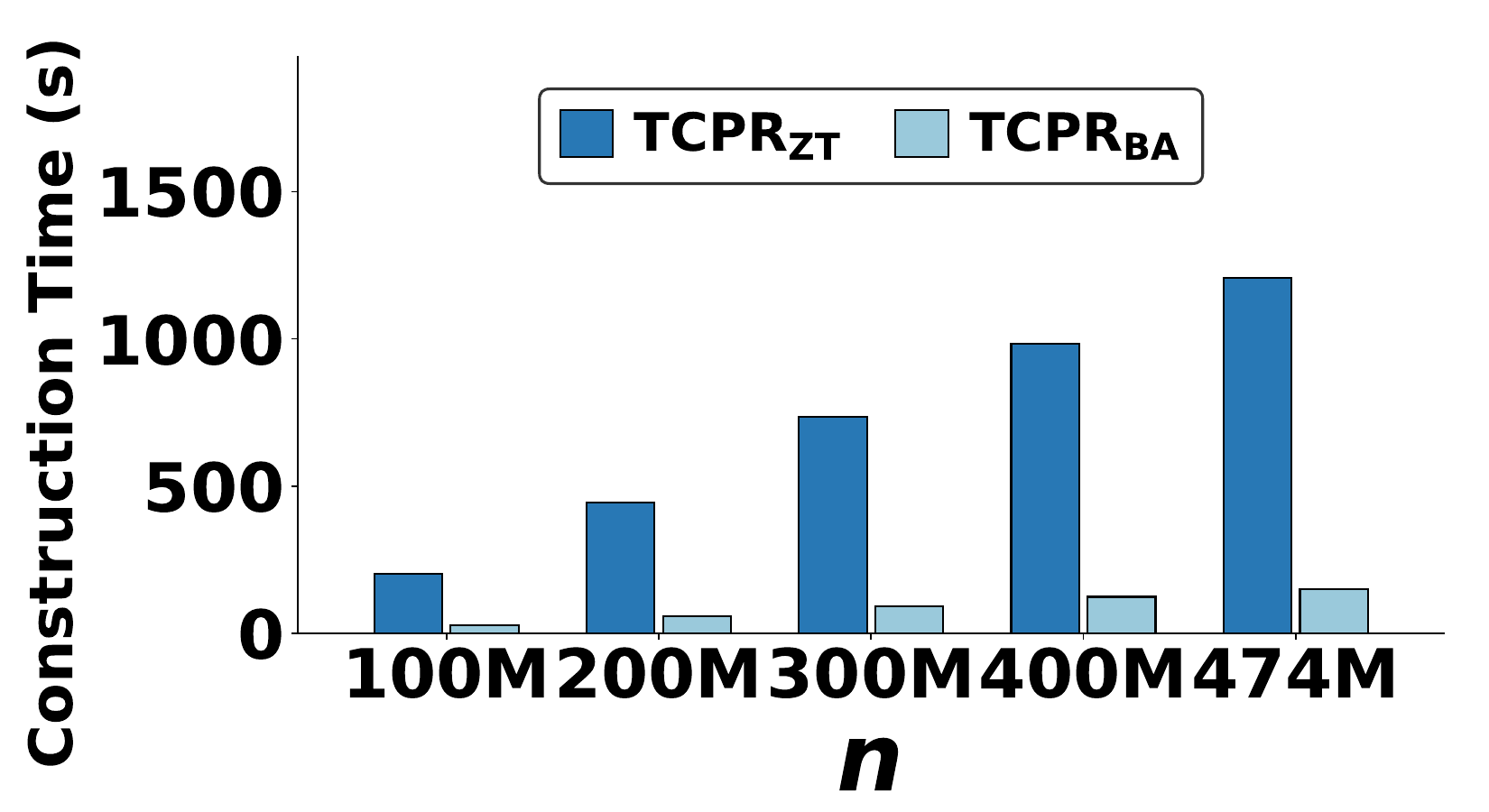}
    \caption{Constr.\ time vs. $n$}\label{fig:app:TF:n:build:WIKI}
  \end{subfigure}
  \vspace{\captionspacing}
  \vspace{+2mm}
  \caption{Index size of our \TCPR index with the \textsf{TF} scoring function vs. \TCPRBA on (a) \bst, (b) \sars, (c) \sdsl, and (d) \wiki vs. $n$; construction space of our \TCPR index with the \textsf{TF} scoring function vs. \TCPRBA on (e) \bst, (f) \sars, (g) \sdsl, and (h) \wiki vs. $n$; construction time of our \TCPR index with the \textsf{TF} scoring function vs. \TCPRBA on (i) \bst, (j) \sars, (k) \sdsl, and (l) \wiki vs. $n$.}\label{fig:app:TF:cost}
\end{figure}

\begin{figure}[ht]
  \centering
  \begin{subfigure}[t]{\appfigwidth}
    \includegraphics[width=\linewidth]{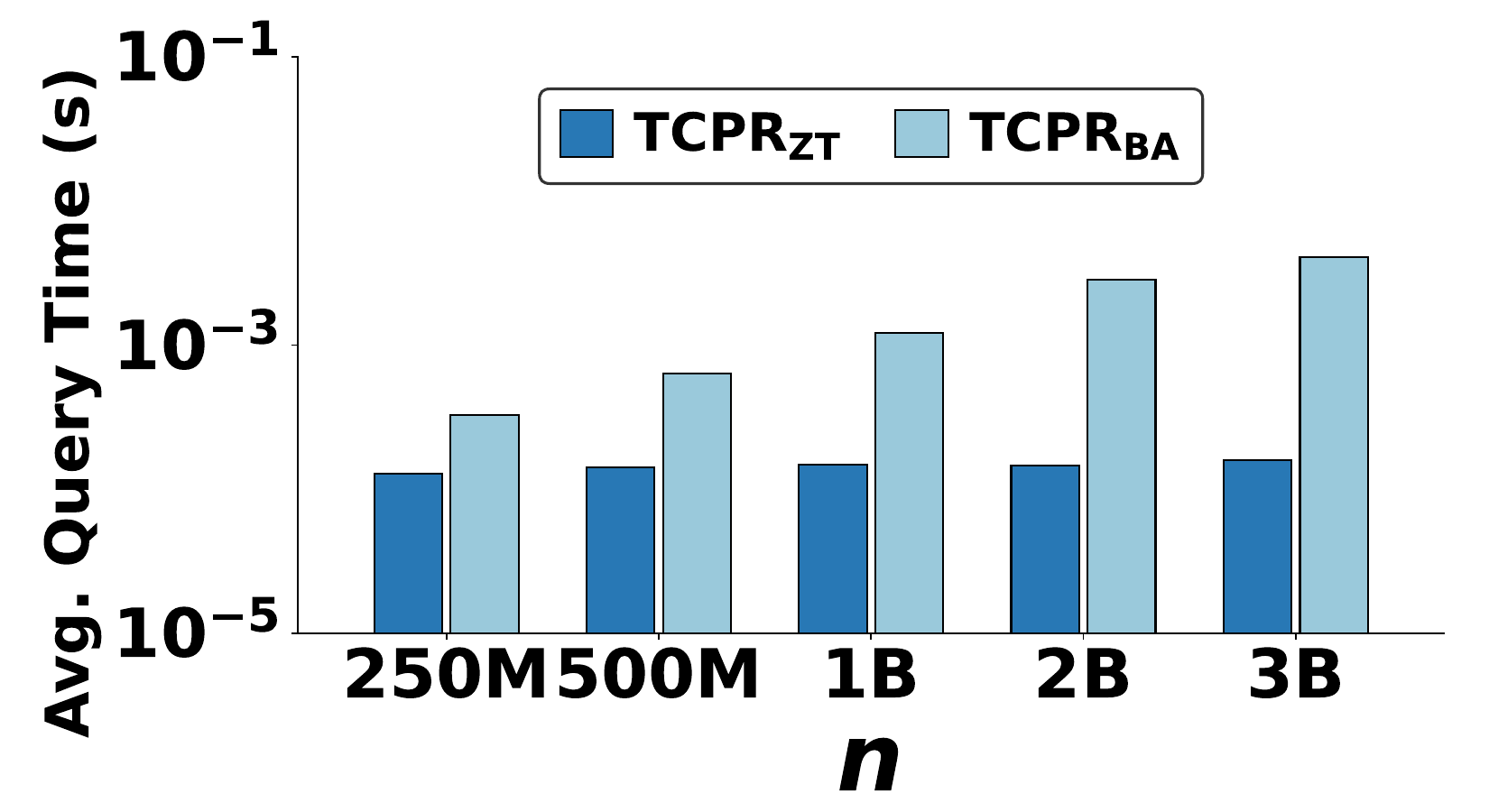}
    \caption{Query time vs. $n$}\label{fig:app:SP:n:query:BST}
  \end{subfigure}
  \begin{subfigure}[t]{\appfigwidth}
    \includegraphics[width=\linewidth]{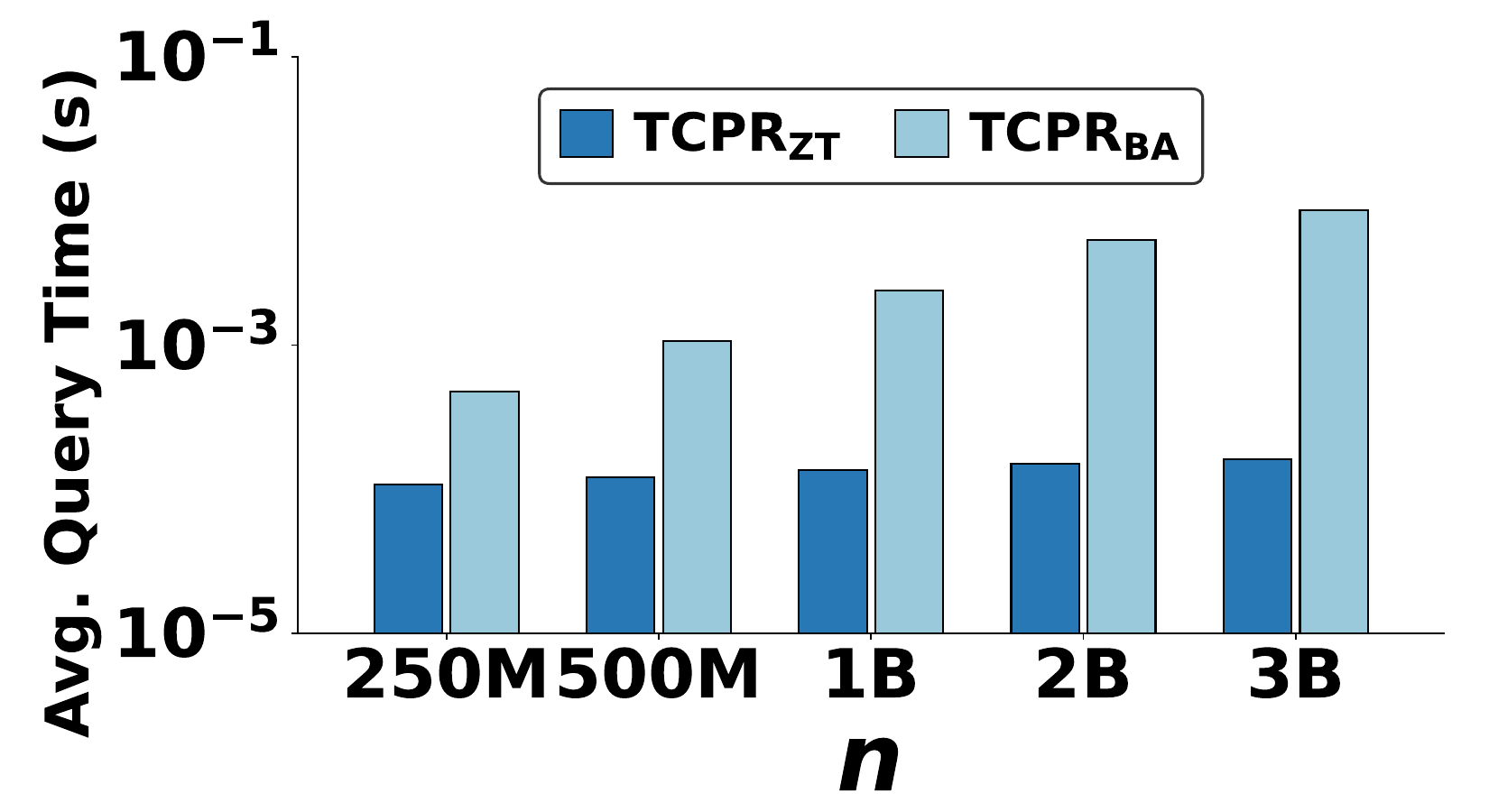}
    \caption{Query time vs. $n$}\label{fig:app:SP:n:query:SARS}
  \end{subfigure}
  \begin{subfigure}[t]{\appfigwidth}
    \includegraphics[width=\linewidth]{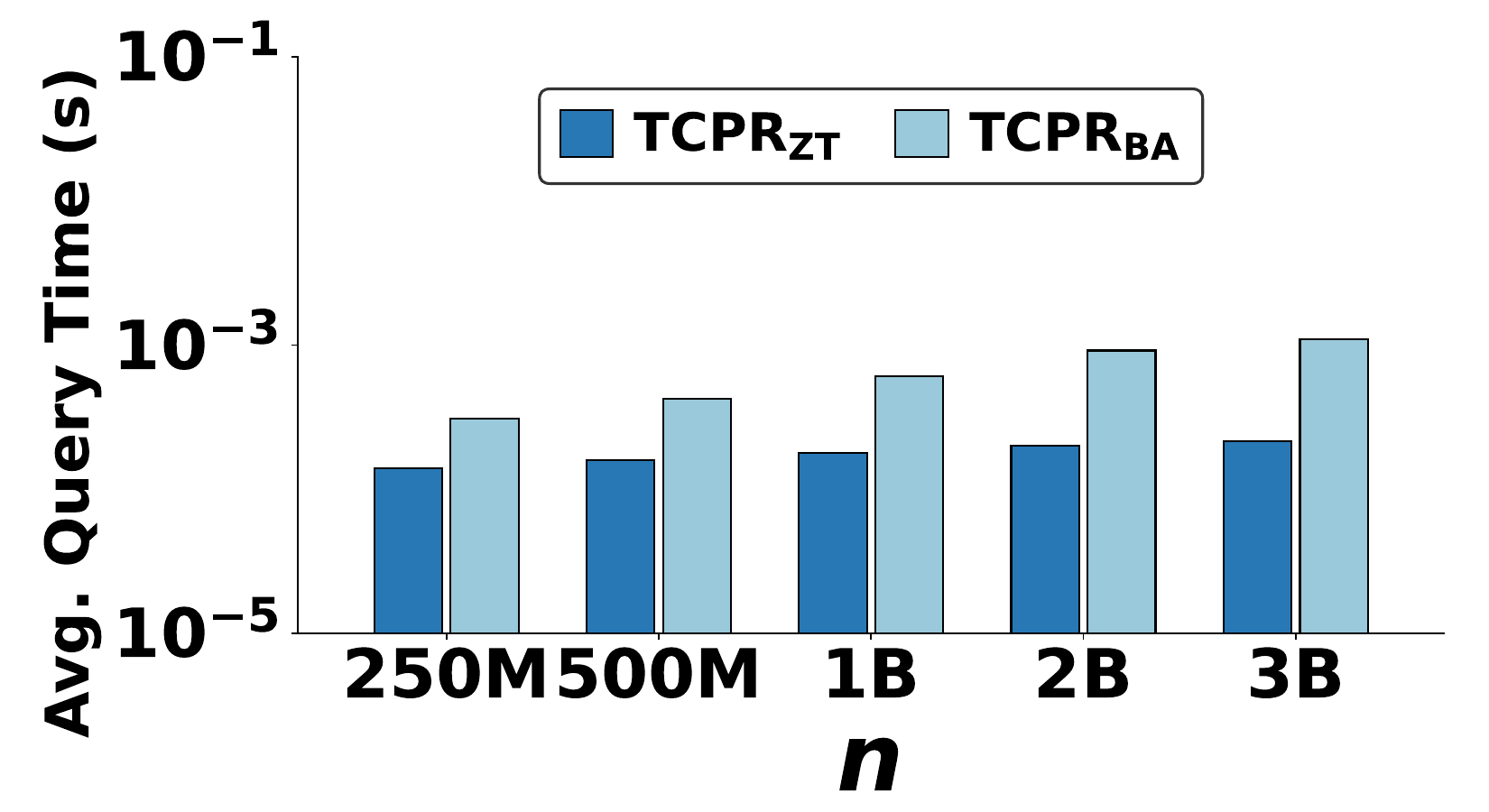}
    \caption{Query time vs. $n$}\label{fig:app:SP:n:query:SDSL}
  \end{subfigure}
  \begin{subfigure}[t]{\appfigwidth}
    \includegraphics[width=\linewidth]{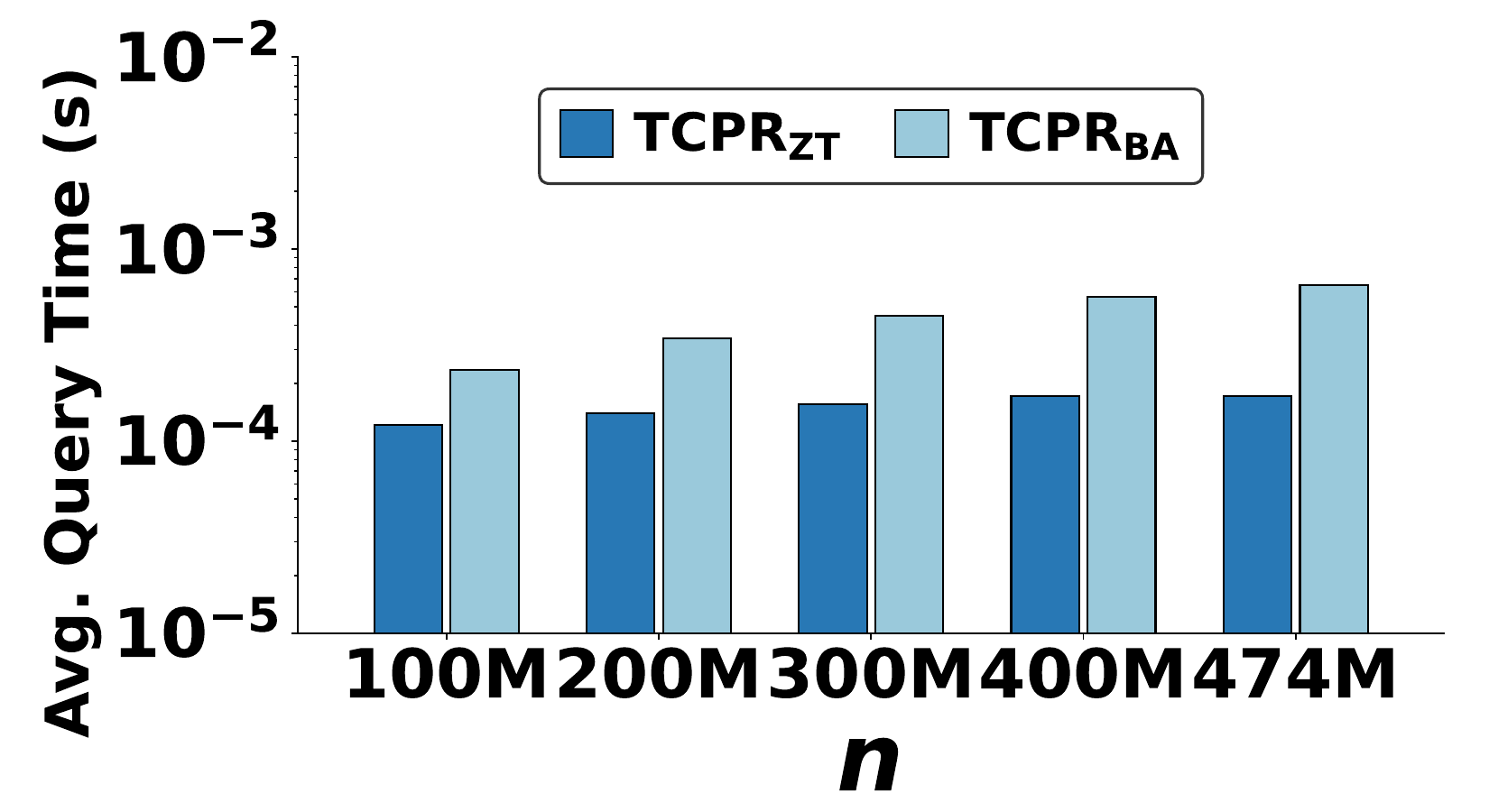}
    \caption{Query time vs. $n$}\label{fig:app:SP:n:query:WIKI}
  \end{subfigure}\\[0pt]
  \begin{subfigure}[t]{\appfigwidth}
    \includegraphics[width=\linewidth]{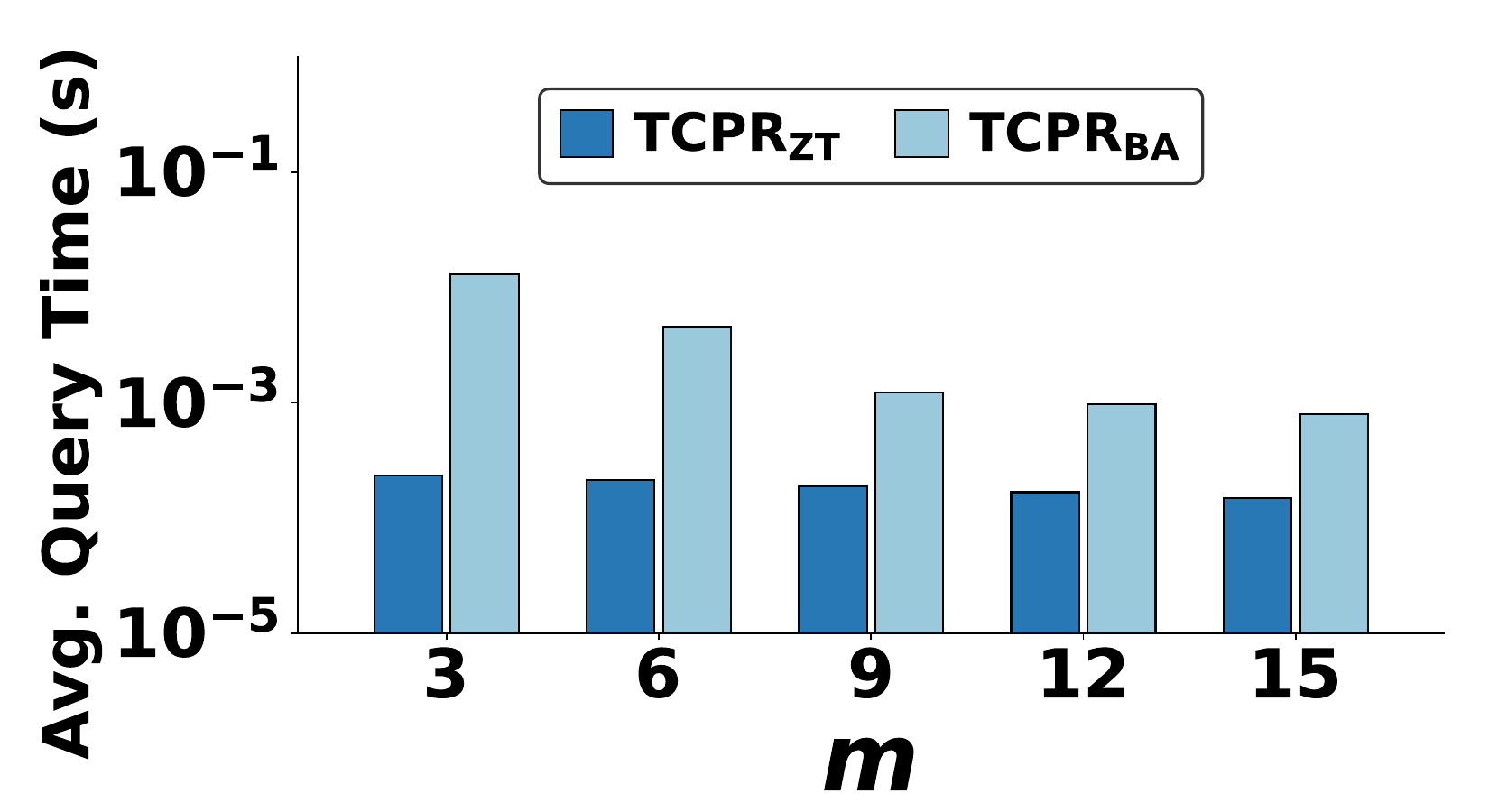}
    \caption{Query time vs. $m$}\label{fig:app:SP:m:query:BST}
  \end{subfigure}
  \begin{subfigure}[t]{\appfigwidth}
    \includegraphics[width=\linewidth]{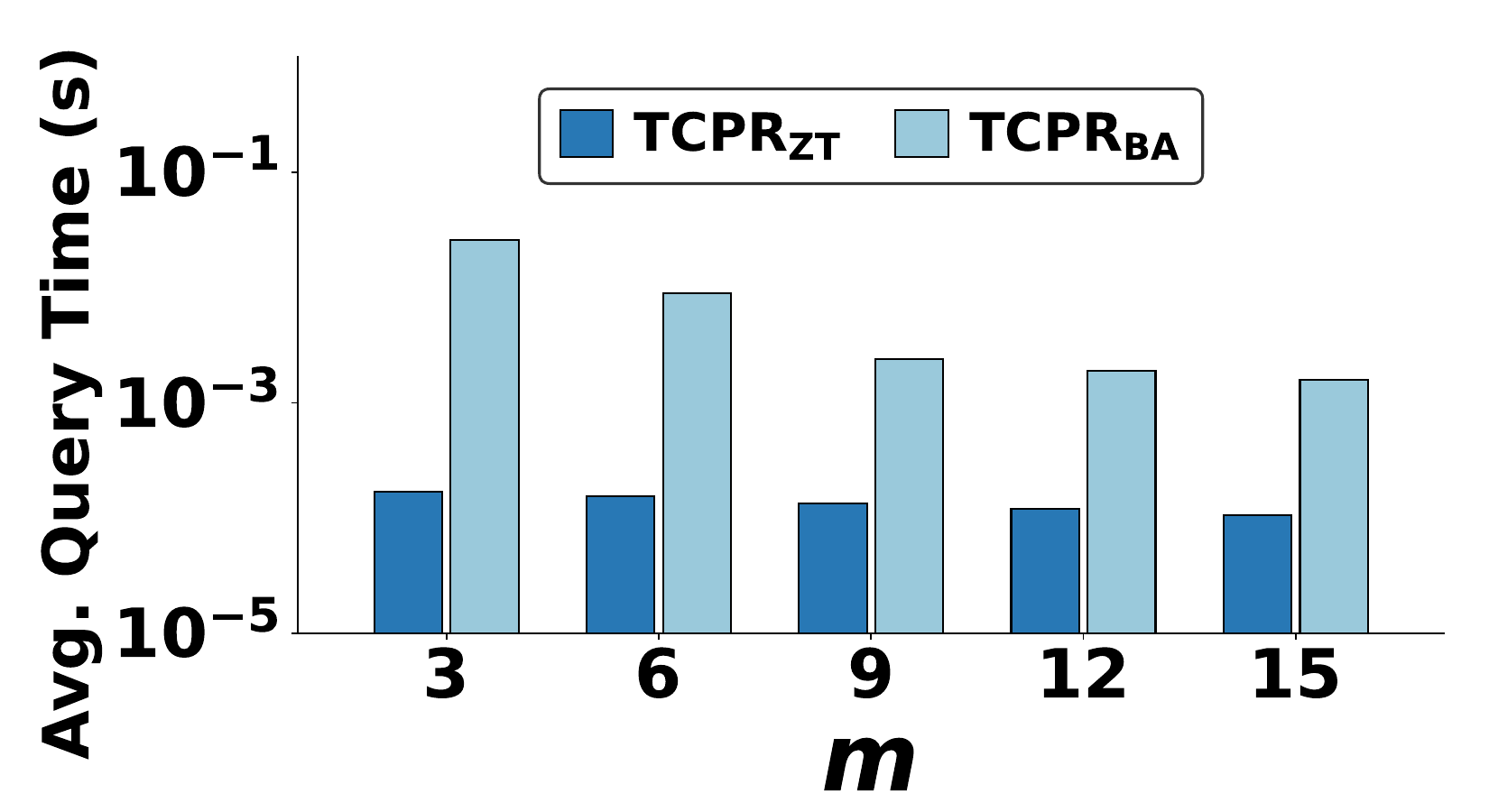}
    \caption{Query time vs. $m$}\label{fig:app:SP:m:query:SARS}
  \end{subfigure}
  \begin{subfigure}[t]{\appfigwidth}
    \includegraphics[width=\linewidth]{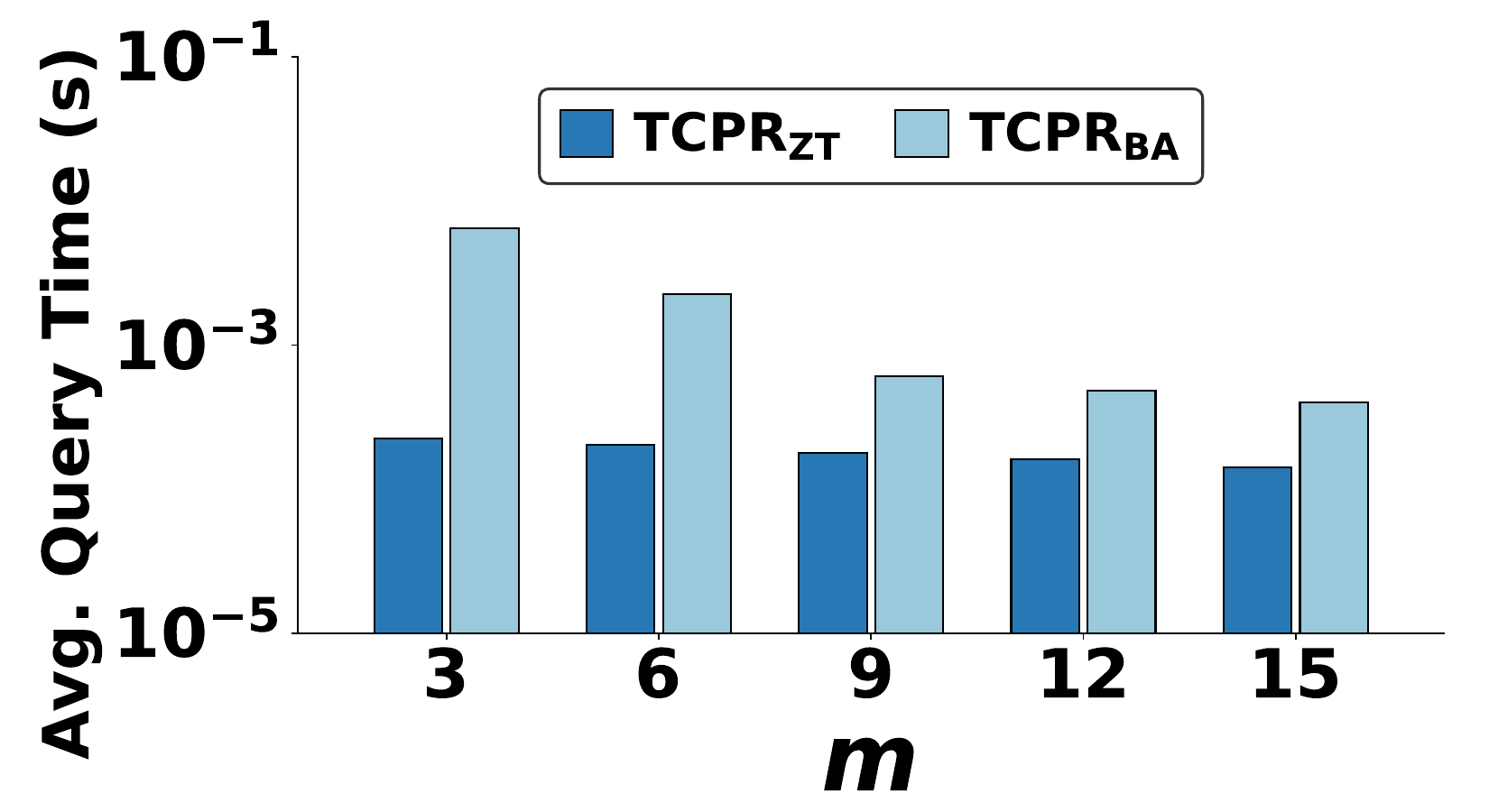}
    \caption{Query time vs. $m$}\label{fig:app:SP:m:query:SDSL}
  \end{subfigure}
  \begin{subfigure}[t]{\appfigwidth}
    \includegraphics[width=\linewidth]{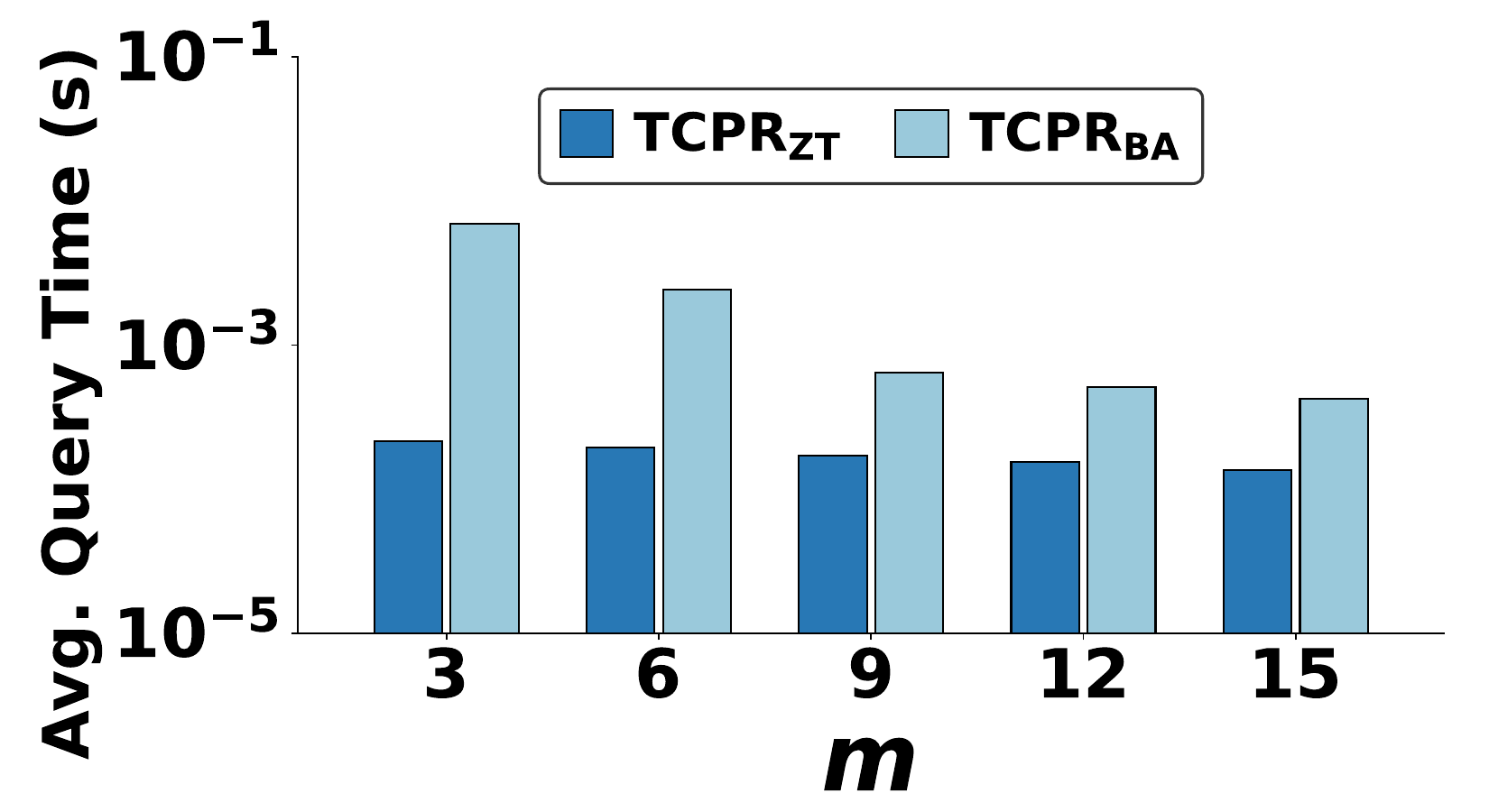}
    \caption{Query time vs. $m$}\label{fig:app:SP:m:query:WIKI}
  \end{subfigure}\\[0pt]
  \begin{subfigure}[t]{\appfigwidth}
    \includegraphics[width=\linewidth]{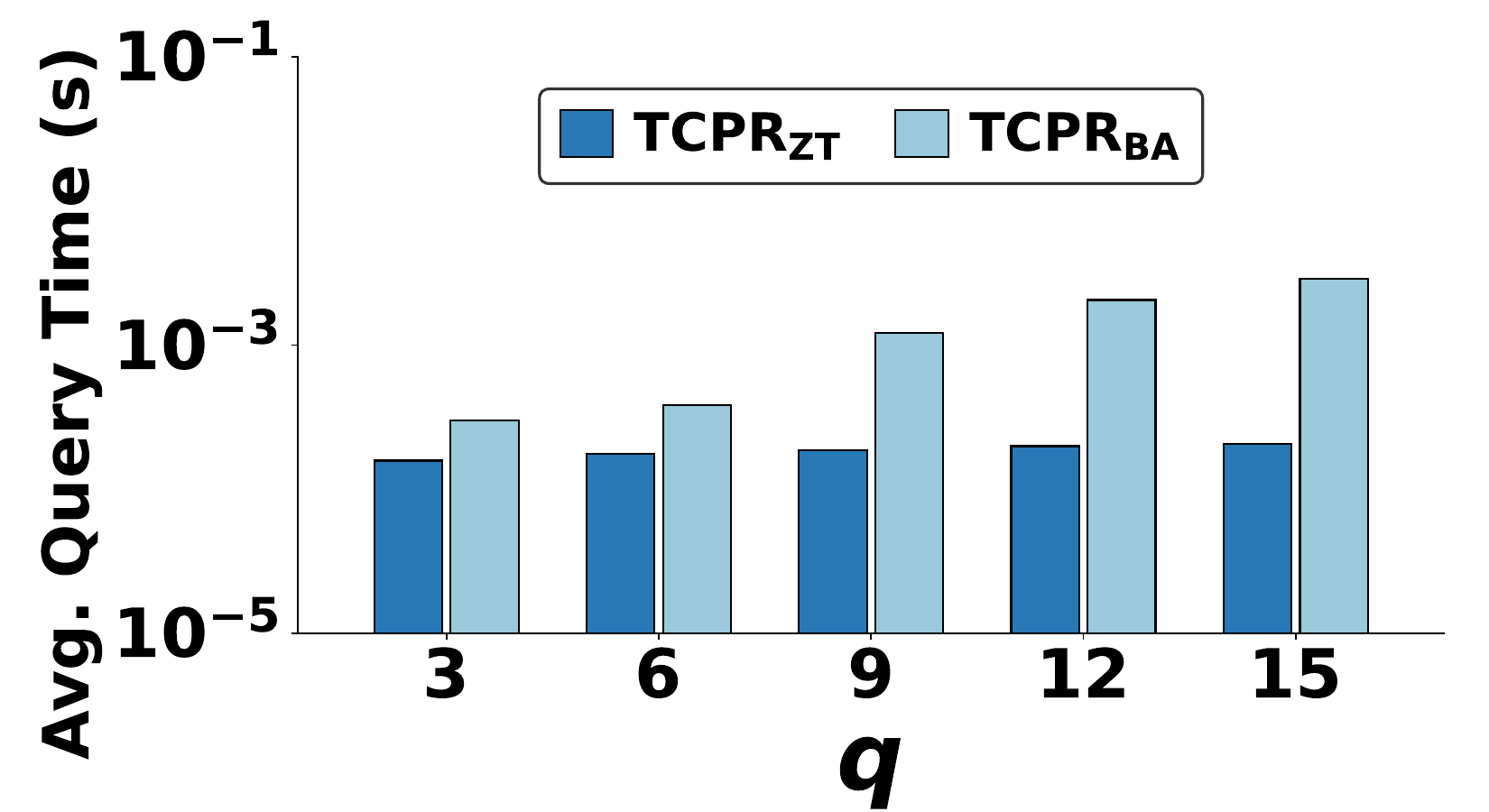}
    \caption{Query time vs. $q$}\label{fig:app:SP:q:query:BST}
  \end{subfigure}
  \begin{subfigure}[t]{\appfigwidth}
    \includegraphics[width=\linewidth]{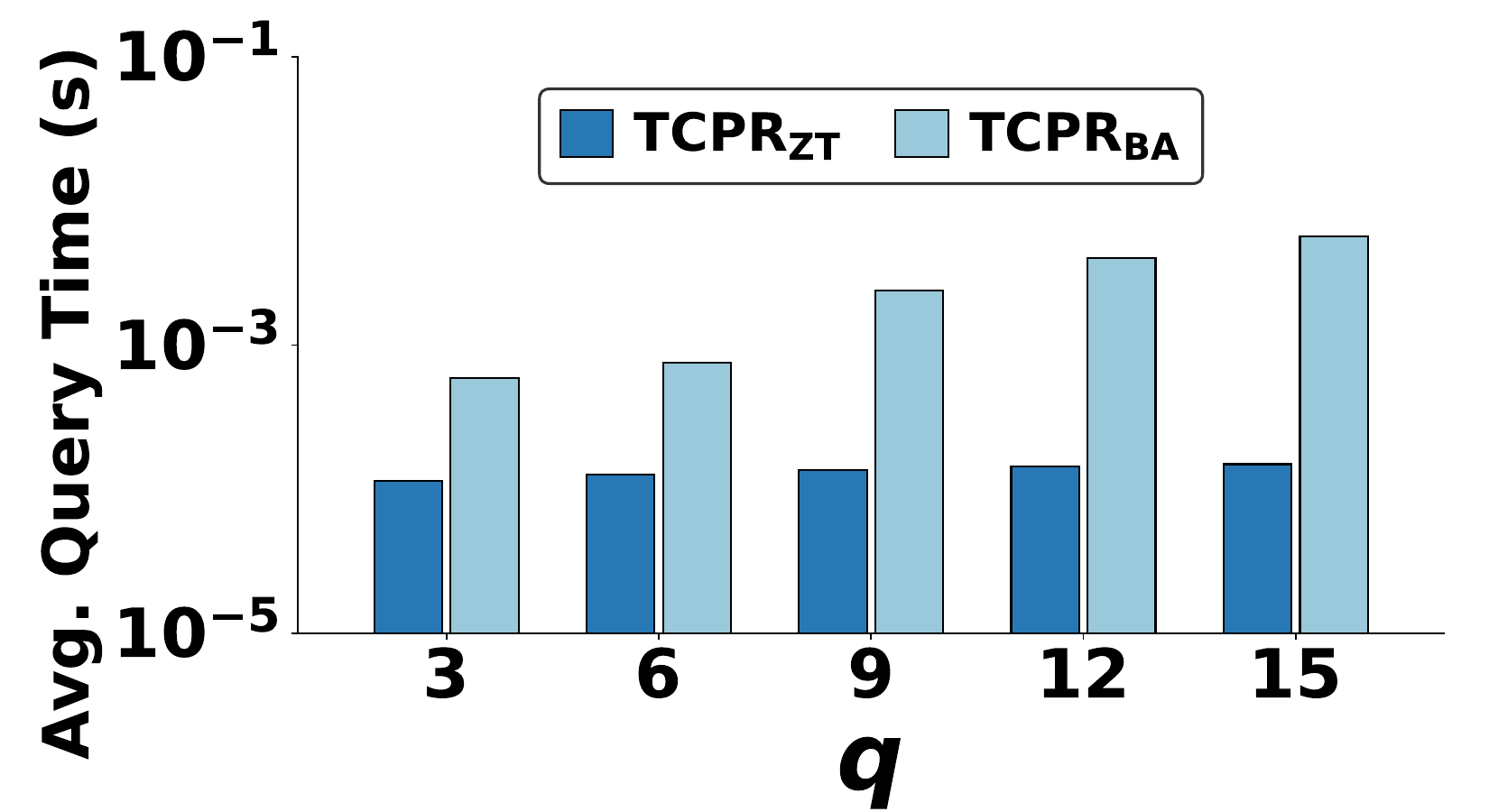}
    \caption{Query time vs. $q$}\label{fig:app:SP:q:query:SARS}
  \end{subfigure}
  \begin{subfigure}[t]{\appfigwidth}
    \includegraphics[width=\linewidth]{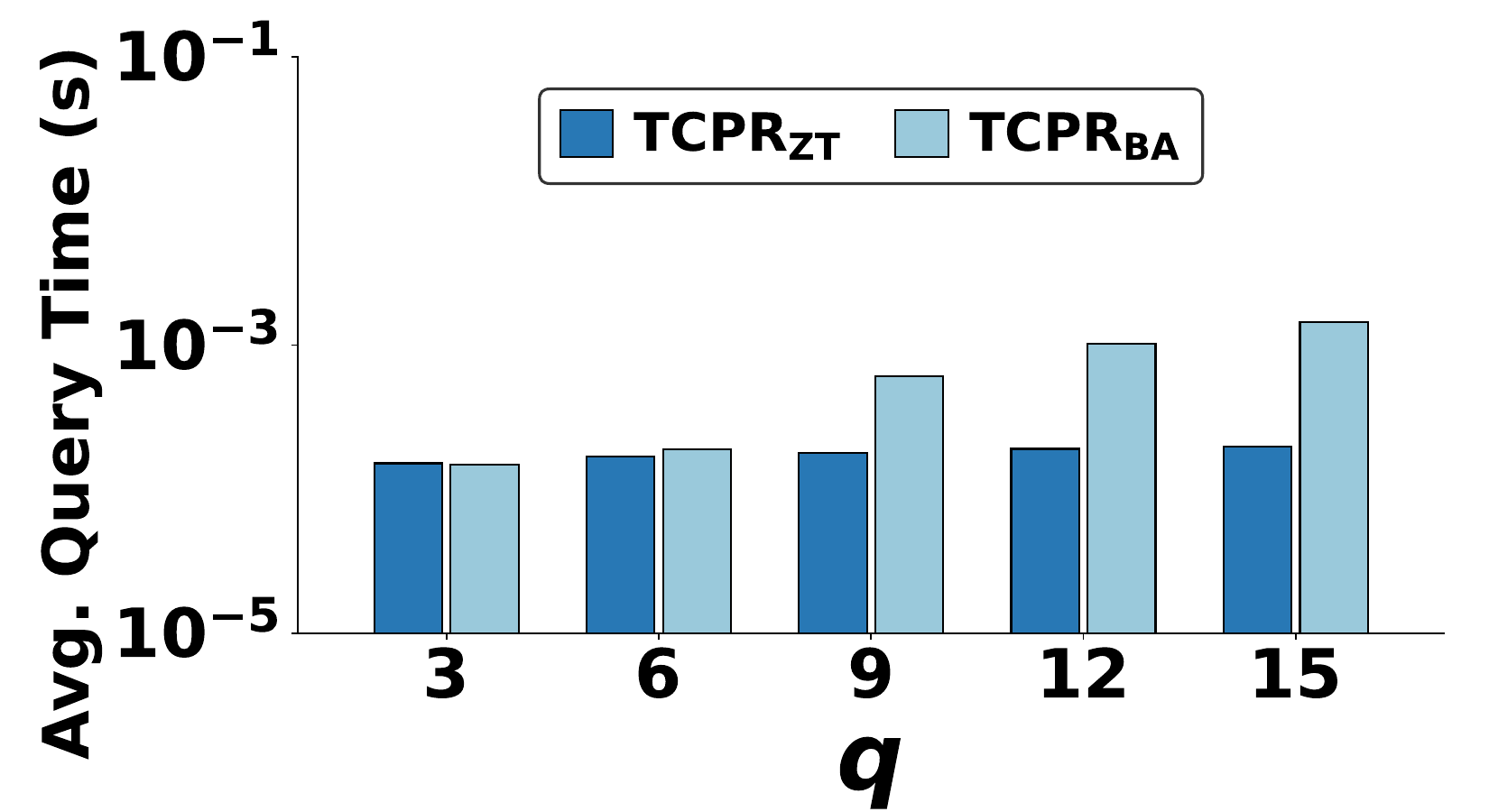}
    \caption{Query time vs. $q$}\label{fig:app:SP:q:query:SDSL}
  \end{subfigure}
  \begin{subfigure}[t]{\appfigwidth}
    \includegraphics[width=\linewidth]{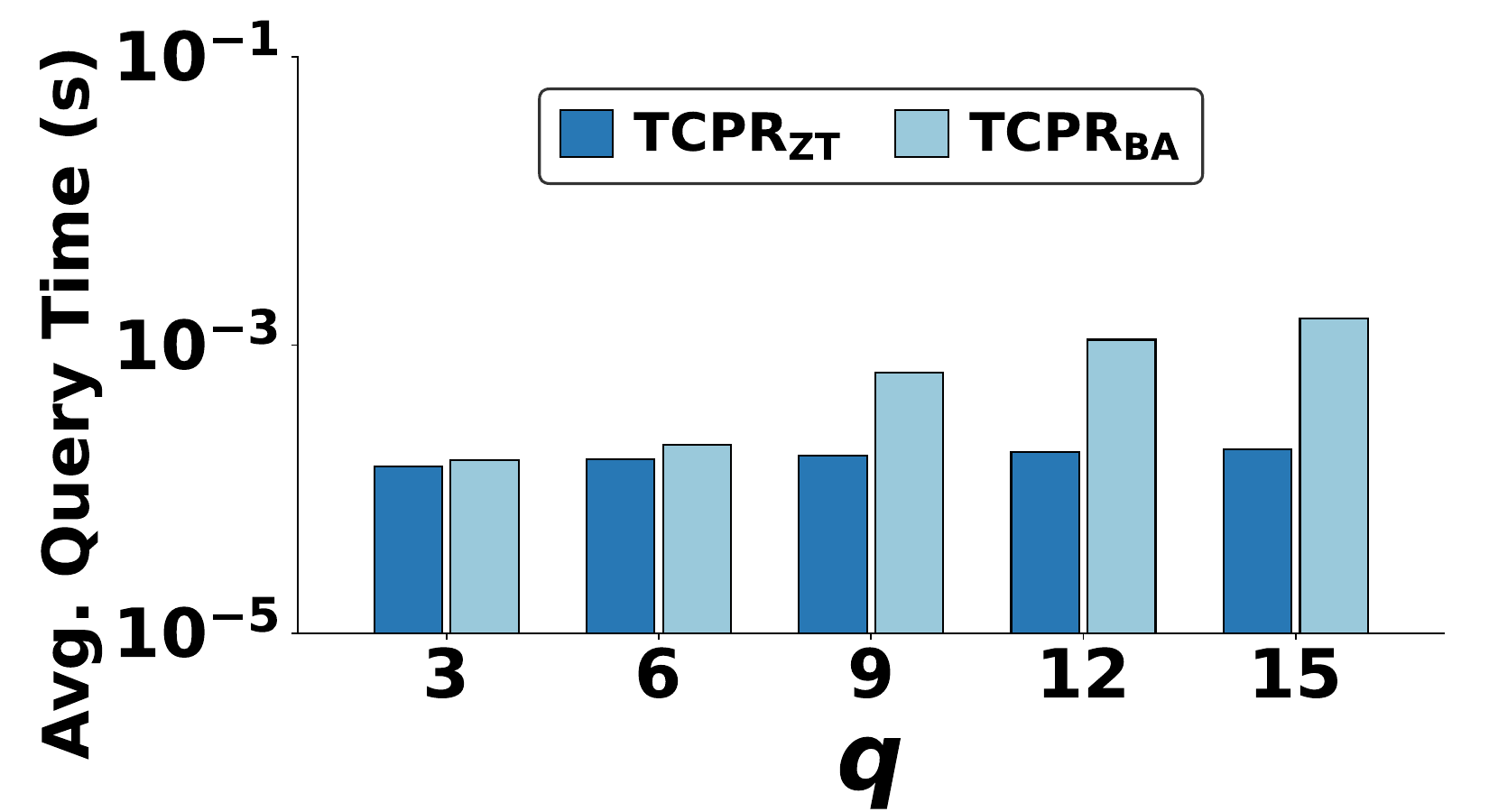}
    \caption{Query time vs. $q$}\label{fig:app:SP:q:query:WIKI}
  \end{subfigure}\\[0pt]
  \begin{subfigure}[t]{\appfigwidth}
    \includegraphics[width=\linewidth]{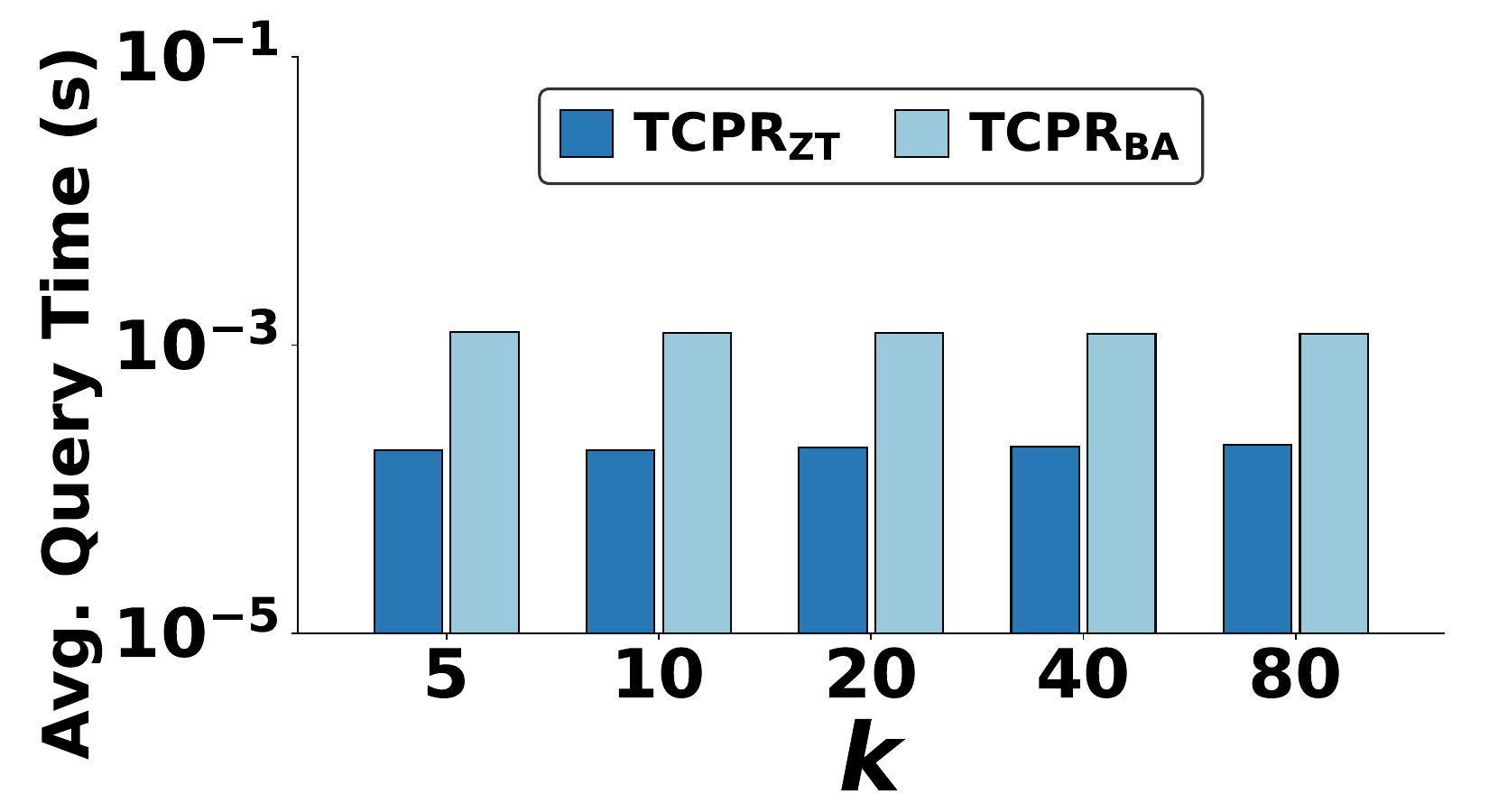}
    \caption{Query time vs. $k$}\label{fig:app:SP:k:query:BST}
  \end{subfigure}
  \begin{subfigure}[t]{\appfigwidth}
    \includegraphics[width=\linewidth]{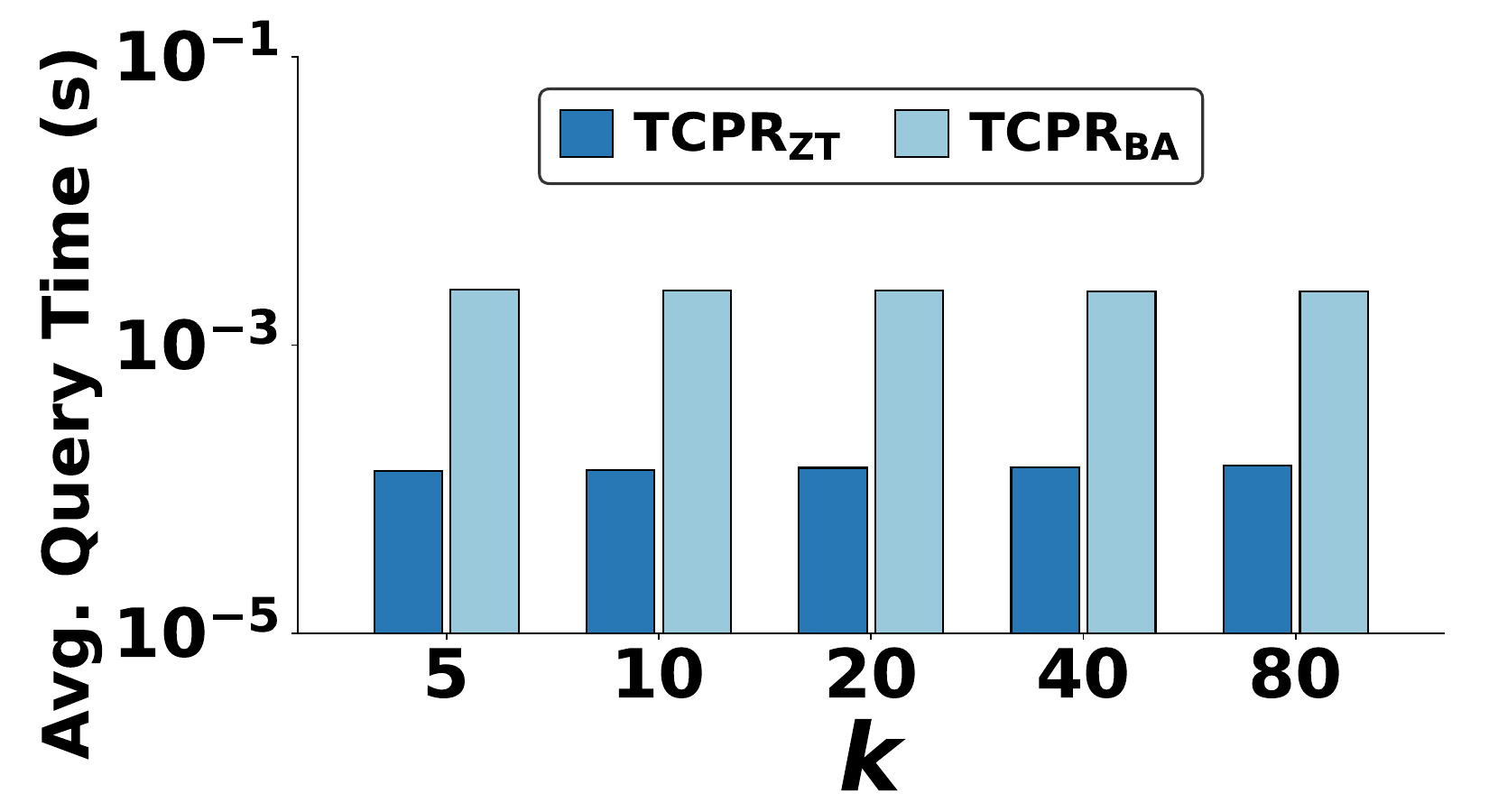}
    \caption{Query time vs. $k$}\label{fig:app:SP:k:query:SARS}
  \end{subfigure}
  \begin{subfigure}[t]{\appfigwidth}
    \includegraphics[width=\linewidth]{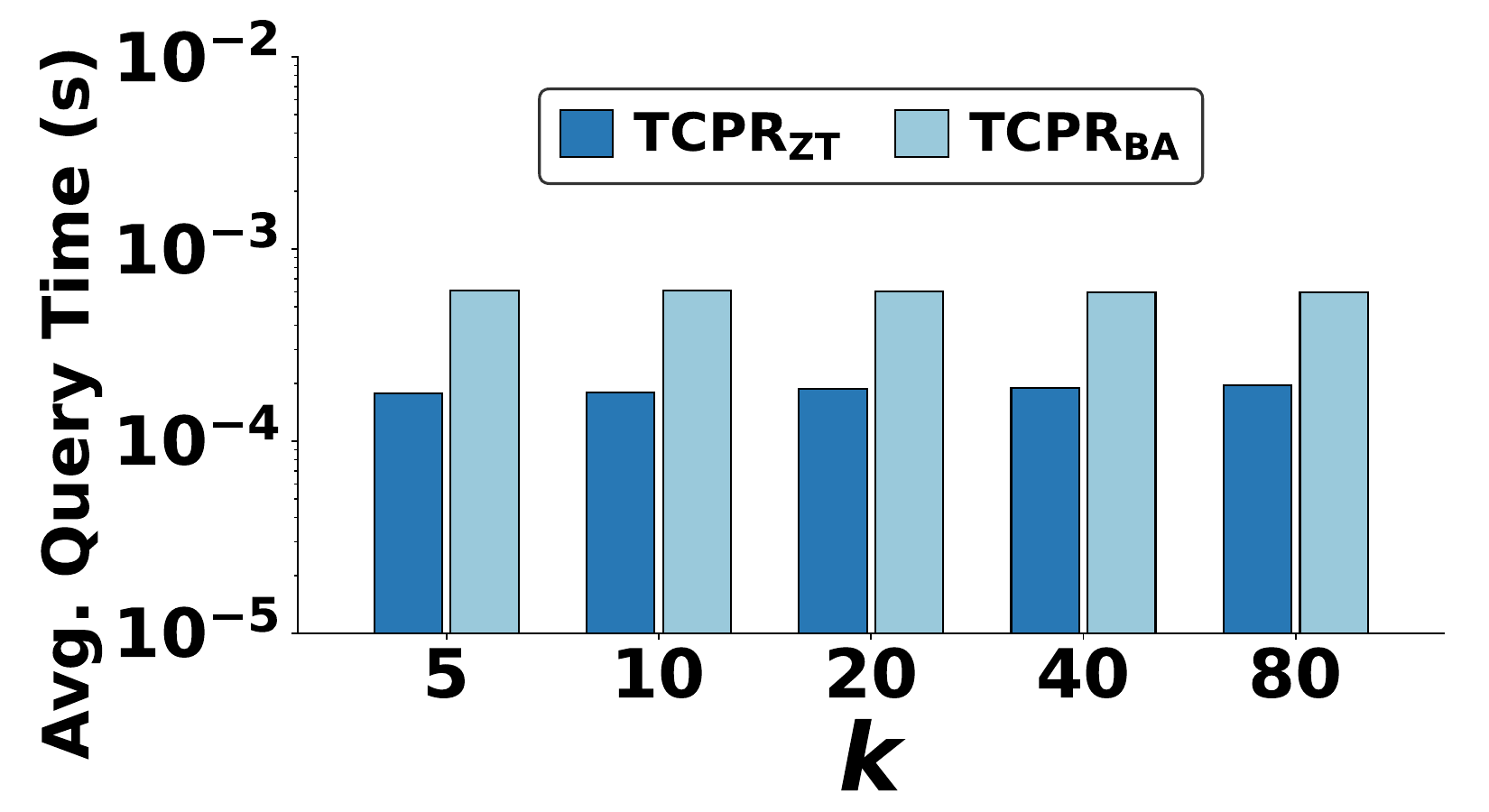}
    \caption{Query time vs. $k$}\label{fig:app:SP:k:query:SDSL}
  \end{subfigure}
  \begin{subfigure}[t]{\appfigwidth}
    \includegraphics[width=\linewidth]{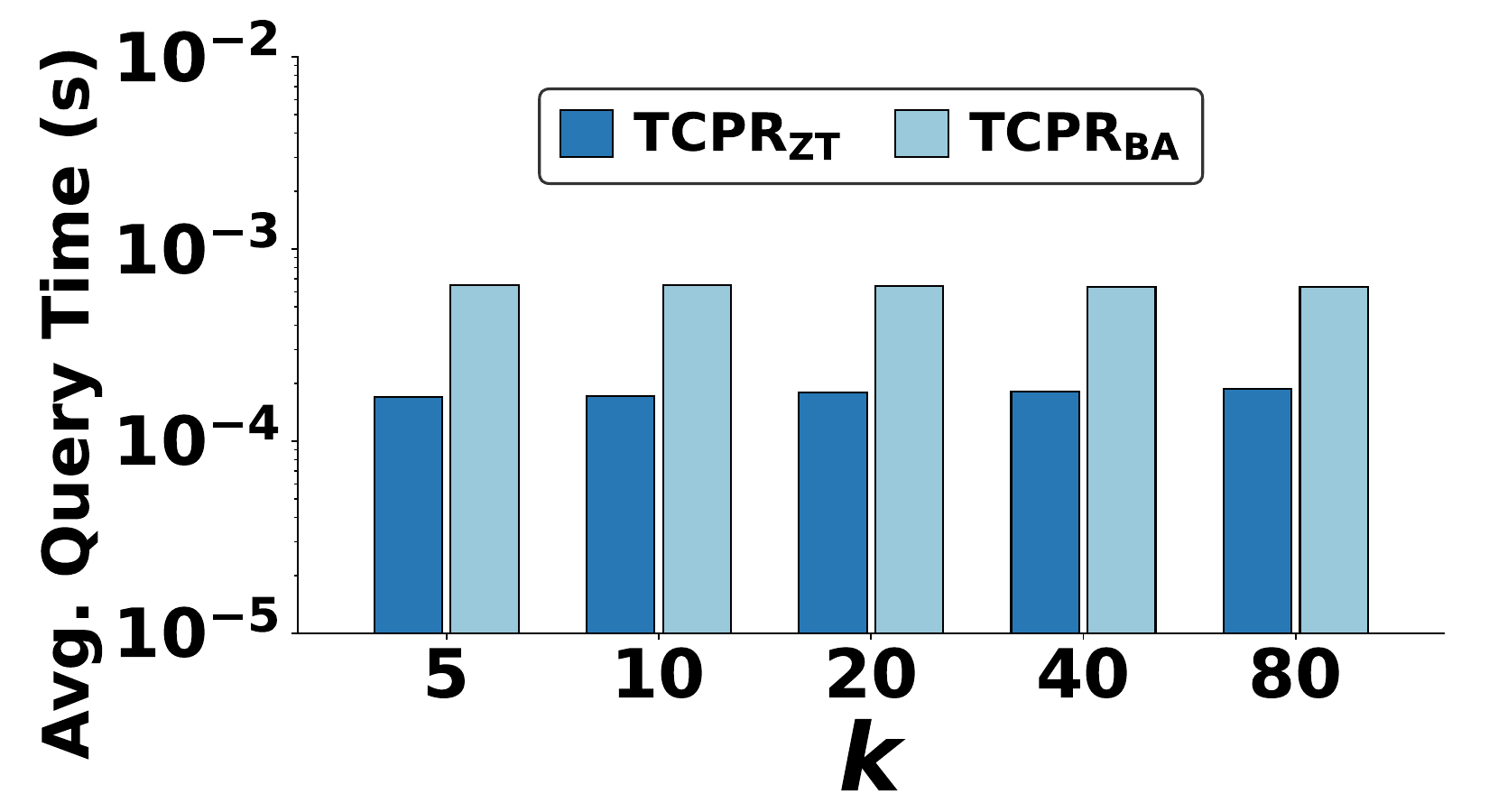}
    \caption{Query time vs. $k$}\label{fig:app:SP:k:query:WIKI}
  \end{subfigure}
  \vspace{\captionspacing}
  \vspace{+2mm}
  \caption{Query time of our \TCPR index with the \textsf{SP} scoring function vs. \TCPRBA on (a) \bst, (b) \sars, (c) \sdsl, and (d) \wiki vs. $n$; on (e) \bst, (f) \sars, (g) \sdsl, and (h) \wiki vs. $m$; on (i) \bst, (j) \sars, (k) \sdsl, and (l) \wiki vs. $q$; on (m) \bst, (n) \sars, (o) \sdsl, and (p) \wiki vs. $k$.}\label{fig:app:SP:query}
\end{figure}

\begin{figure}[ht]
  \centering
  \begin{subfigure}[t]{\appfigwidth}
    \includegraphics[width=\linewidth]{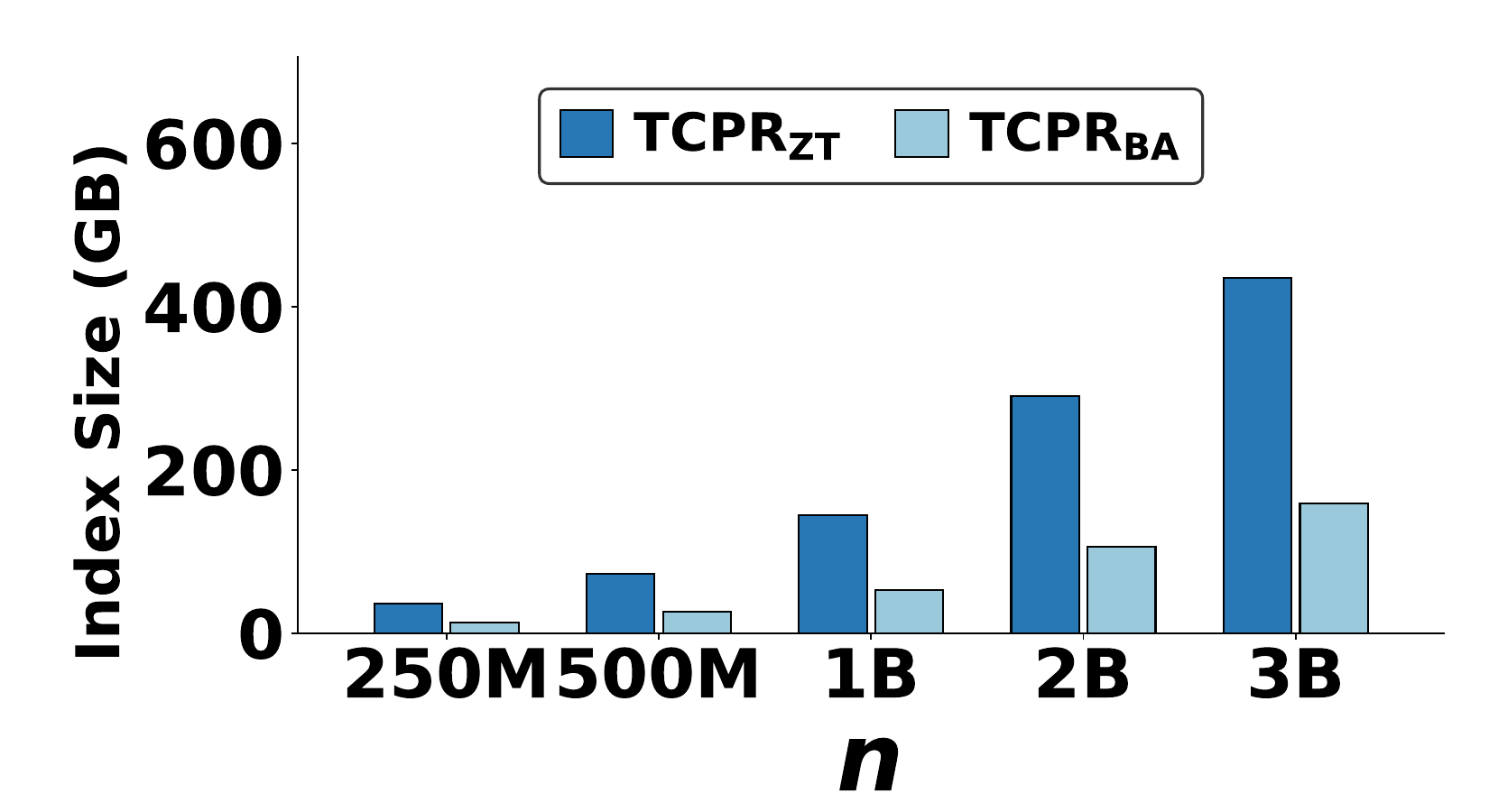}
    \caption{Index size vs. $n$}\label{fig:app:SP:n:index:BST}
  \end{subfigure}
  \begin{subfigure}[t]{\appfigwidth}
    \includegraphics[width=\linewidth]{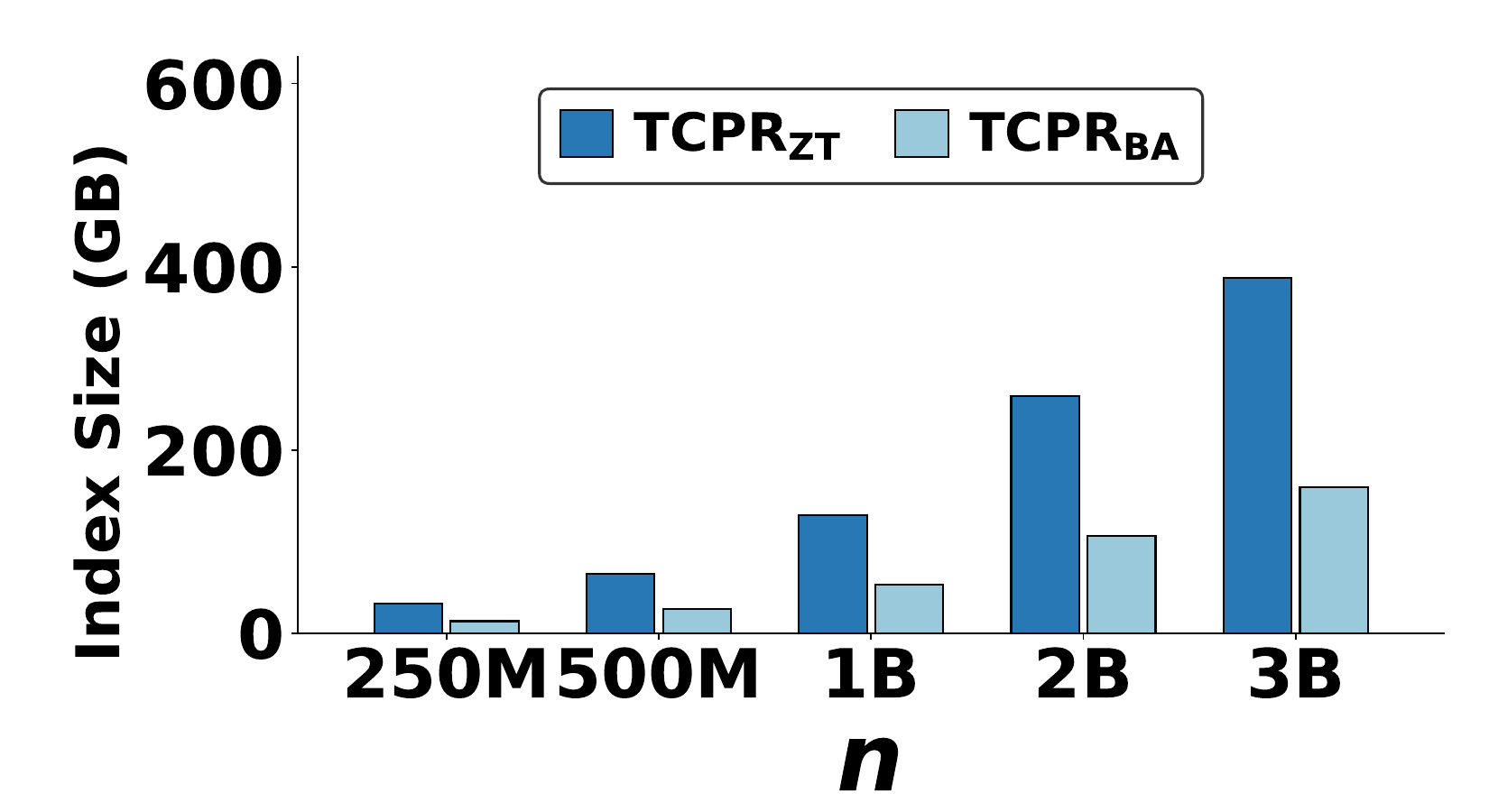}
    \caption{Index size vs. $n$}\label{fig:app:SP:n:index:SARS}
  \end{subfigure}
  \begin{subfigure}[t]{\appfigwidth}
    \includegraphics[width=\linewidth]{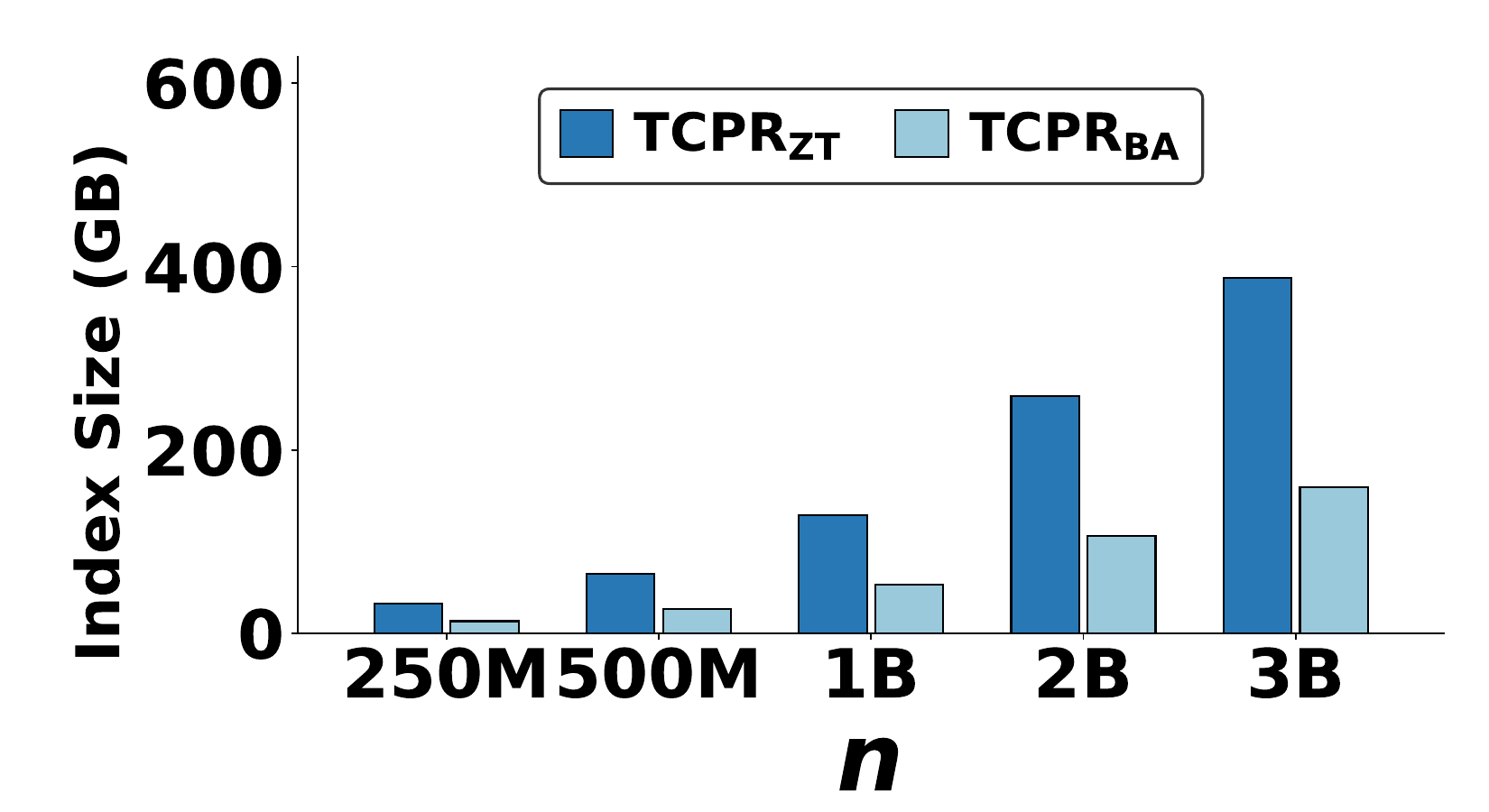}
    \caption{Index size vs. $n$}\label{fig:app:SP:n:index:SDSL}
  \end{subfigure}
  \begin{subfigure}[t]{\appfigwidth}
    \includegraphics[width=\linewidth]{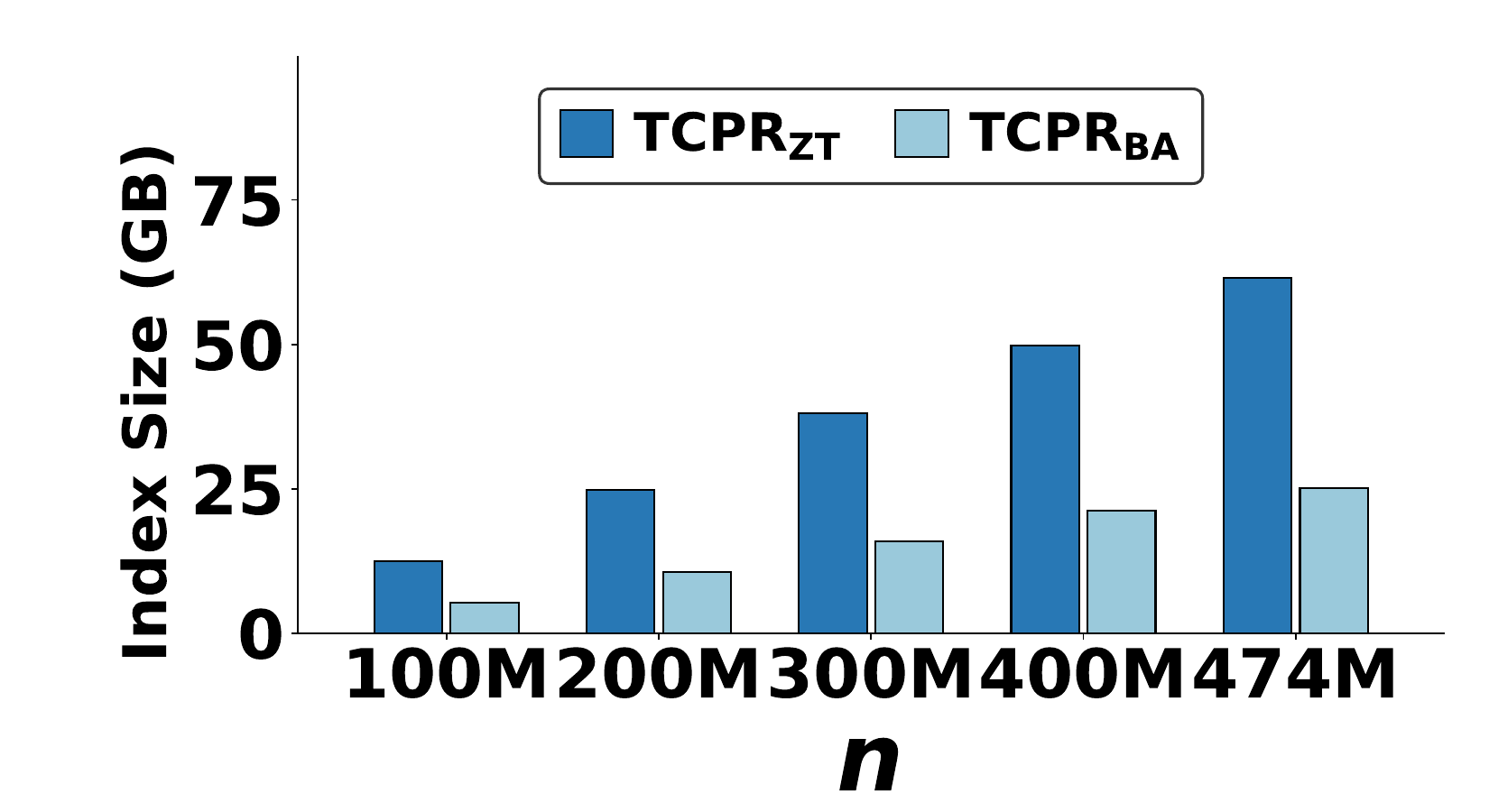}
    \caption{Index size vs. $n$}\label{fig:app:SP:n:index:WIKI}
  \end{subfigure}\\[0pt]
  \begin{subfigure}[t]{\appfigwidth}
    \includegraphics[width=\linewidth]{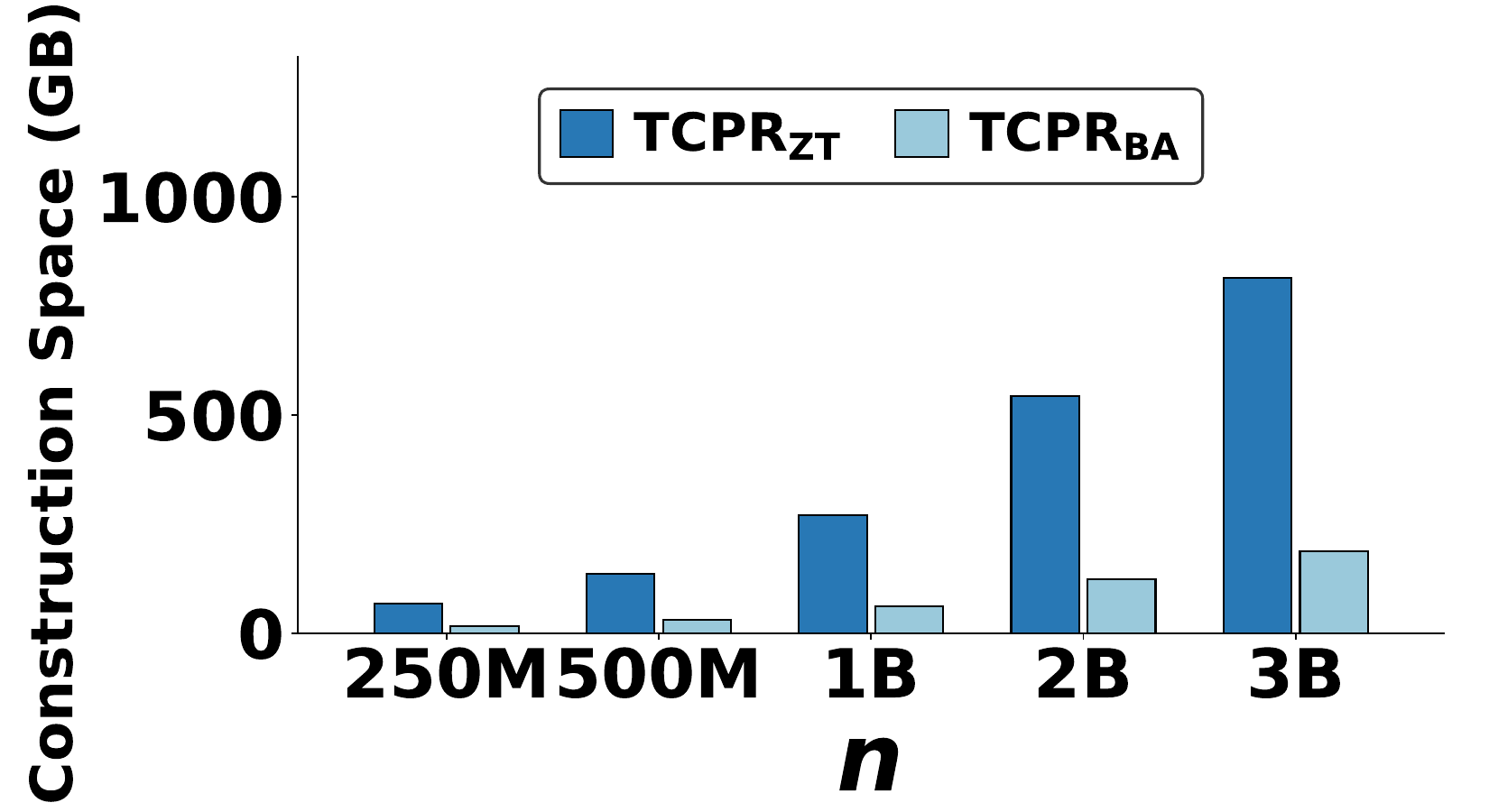}
    \caption{Constr.\ space vs. $n$}\label{fig:app:SP:n:rss:BST}
  \end{subfigure}
  \begin{subfigure}[t]{\appfigwidth}
    \includegraphics[width=\linewidth]{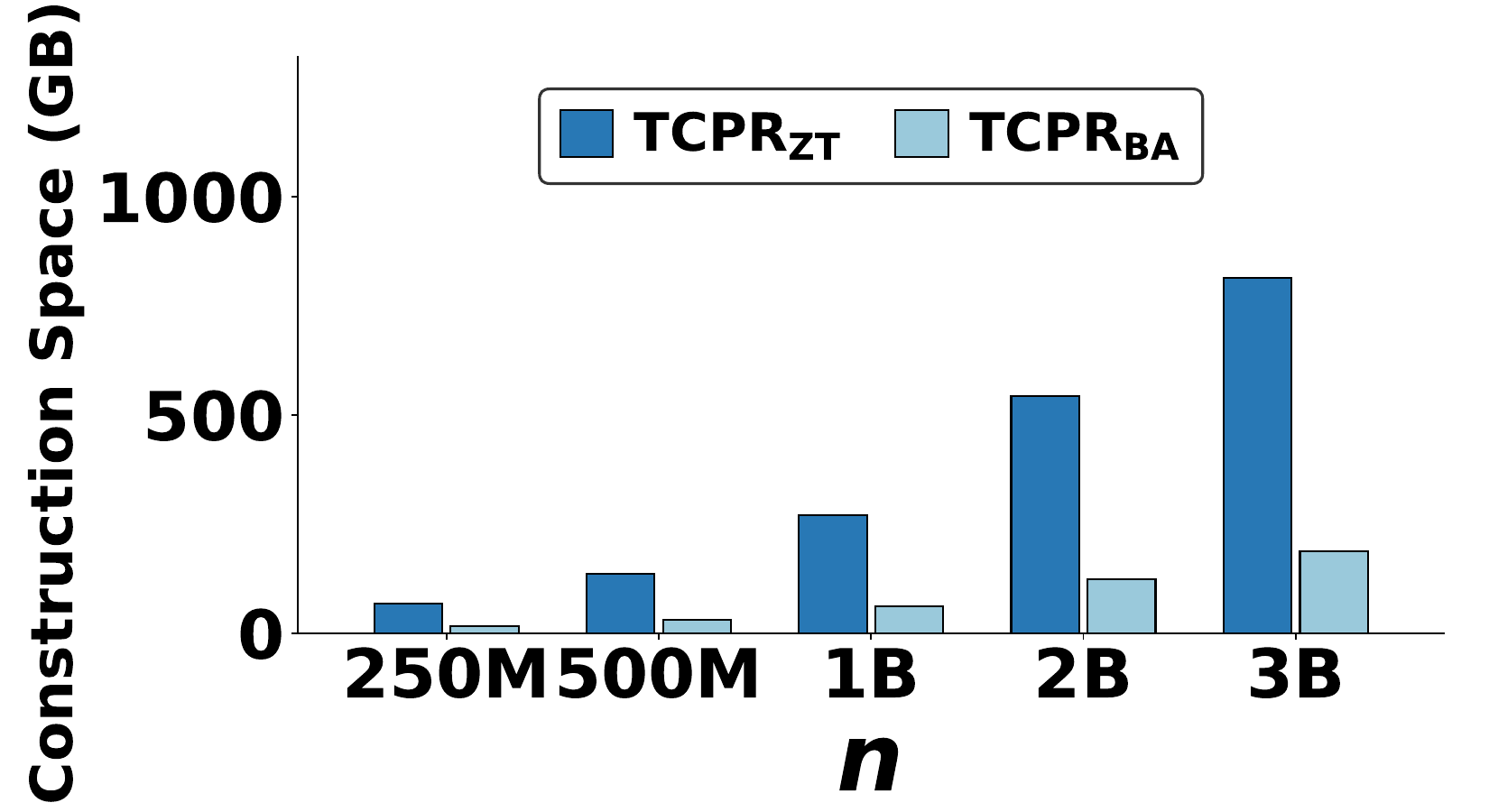}
    \caption{Constr.\ space vs. $n$}\label{fig:app:SP:n:rss:SARS}
  \end{subfigure}
  \begin{subfigure}[t]{\appfigwidth}
    \includegraphics[width=\linewidth]{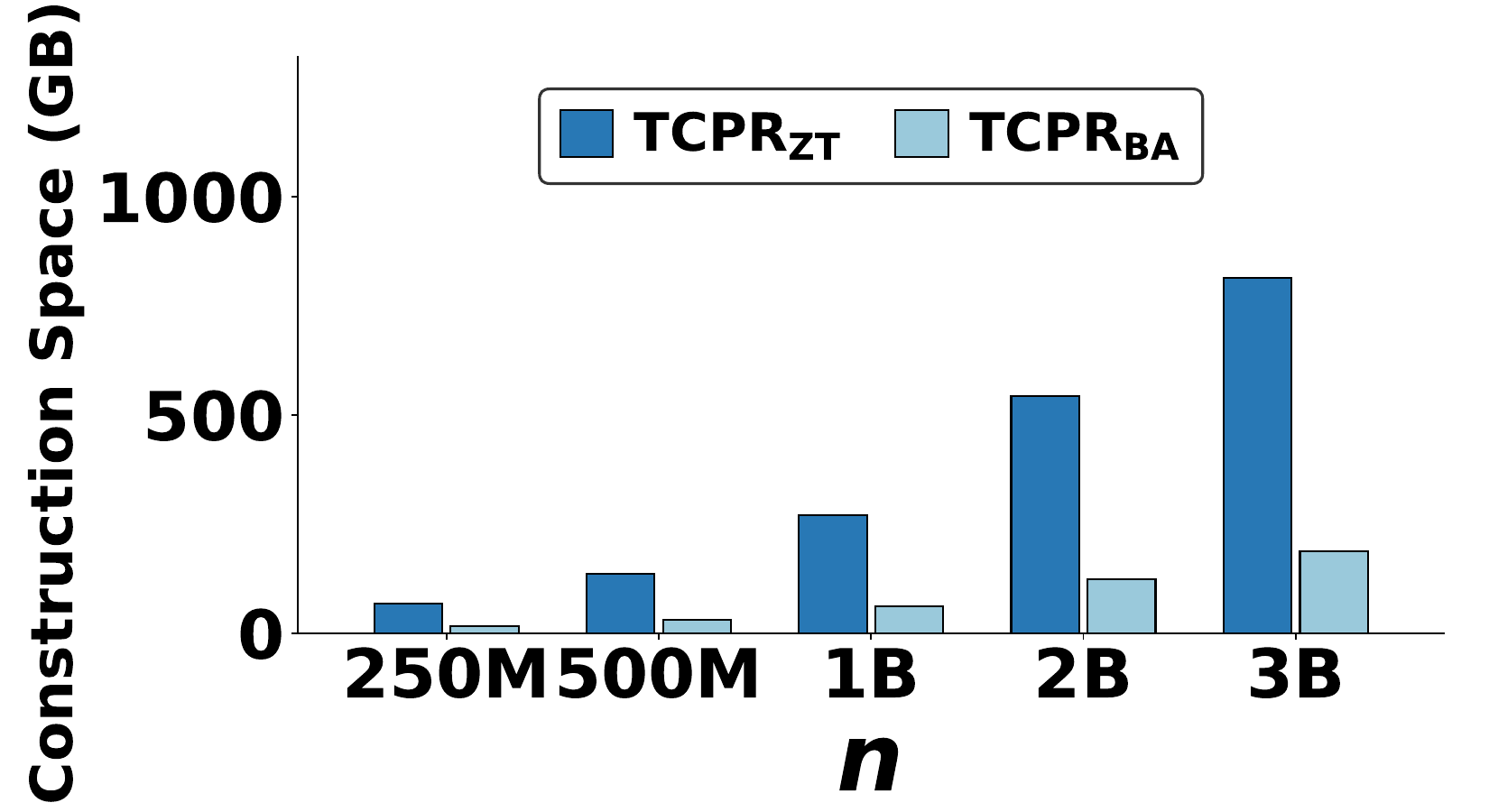}
    \caption{Constr.\ space vs. $n$}\label{fig:app:SP:n:rss:SDSL}
  \end{subfigure}
  \begin{subfigure}[t]{\appfigwidth}
    \includegraphics[width=\linewidth]{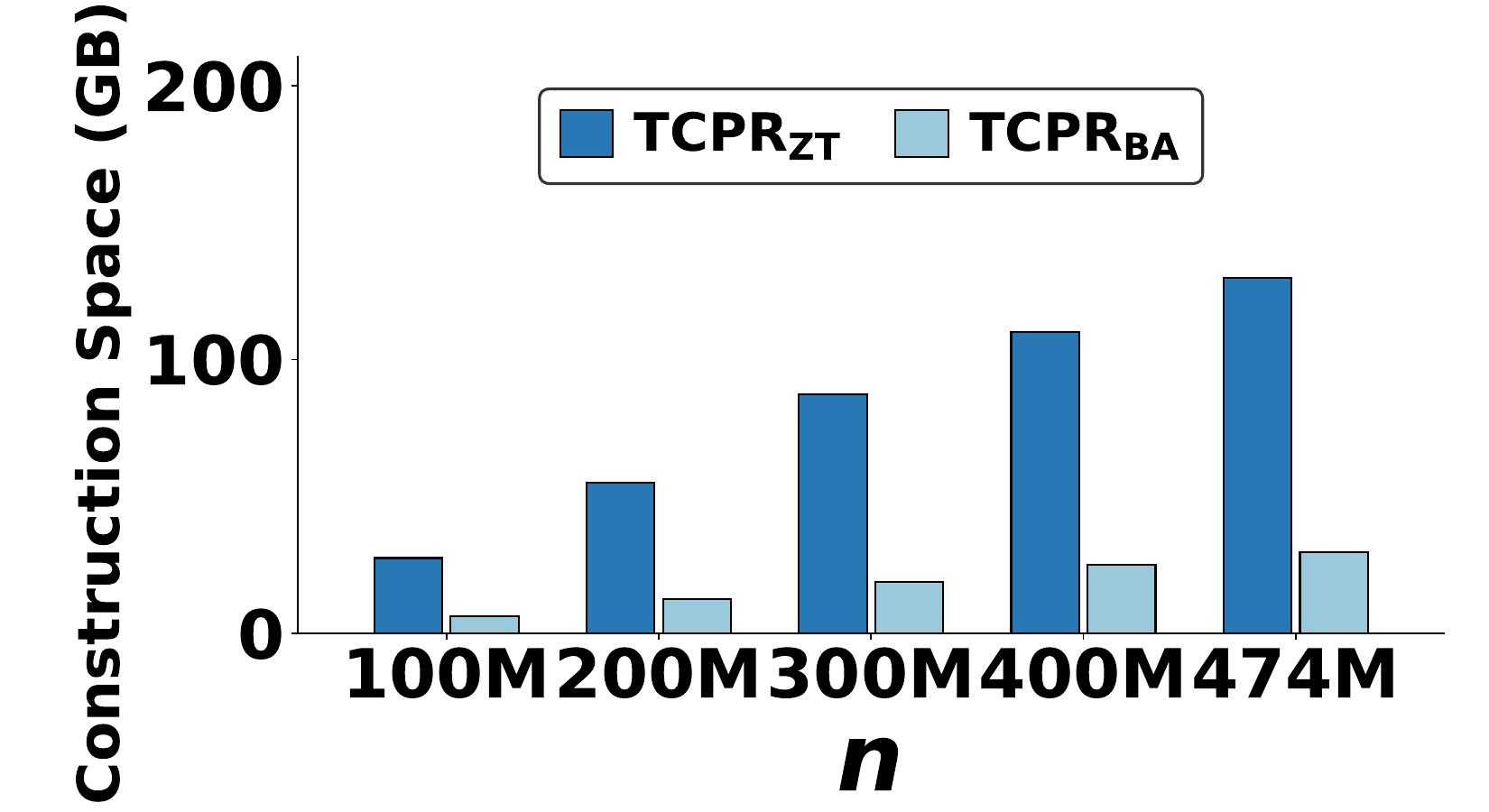}
    \caption{Constr.\ space vs. $n$}\label{fig:app:SP:n:rss:WIKI}
  \end{subfigure}\\[0pt]
  \begin{subfigure}[t]{\appfigwidth}
    \includegraphics[width=\linewidth]{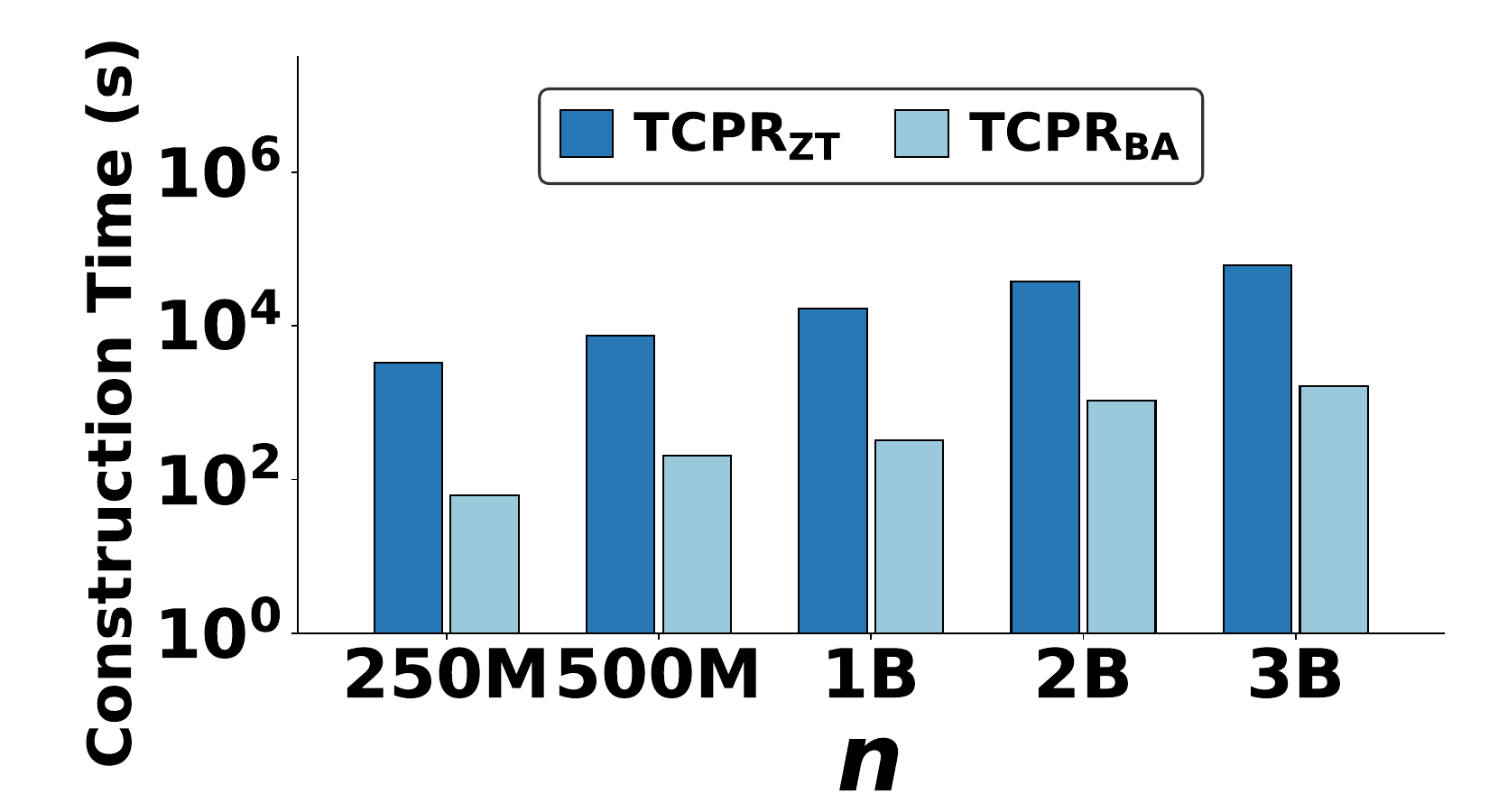}
    \caption{Constr.\ time vs. $n$}\label{fig:app:SP:n:build:BST}
  \end{subfigure}
  \begin{subfigure}[t]{\appfigwidth}
    \includegraphics[width=\linewidth]{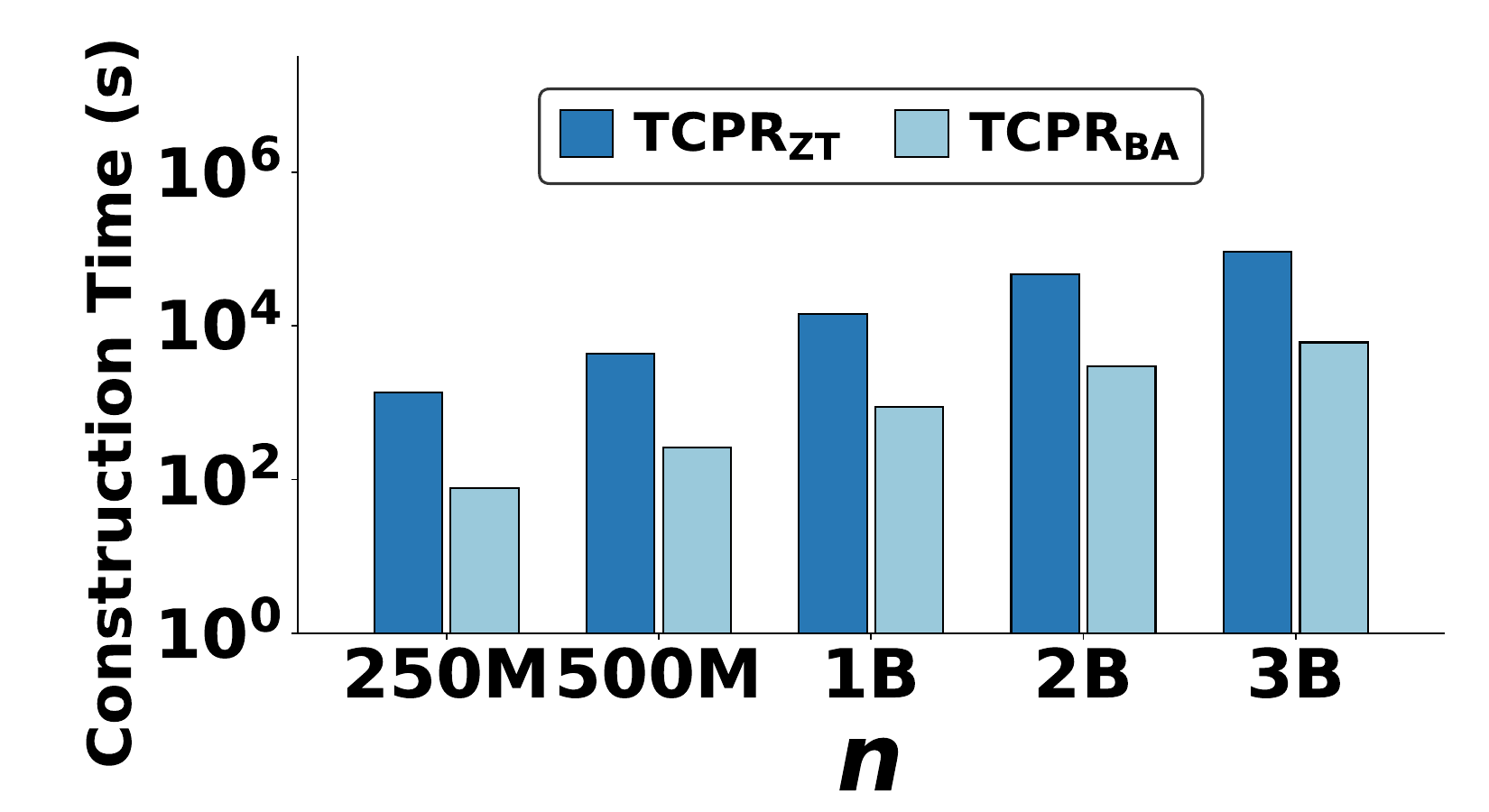}
    \caption{Constr.\ time vs. $n$}\label{fig:app:SP:n:build:SARS}
  \end{subfigure}
  \begin{subfigure}[t]{\appfigwidth}
    \includegraphics[width=\linewidth]{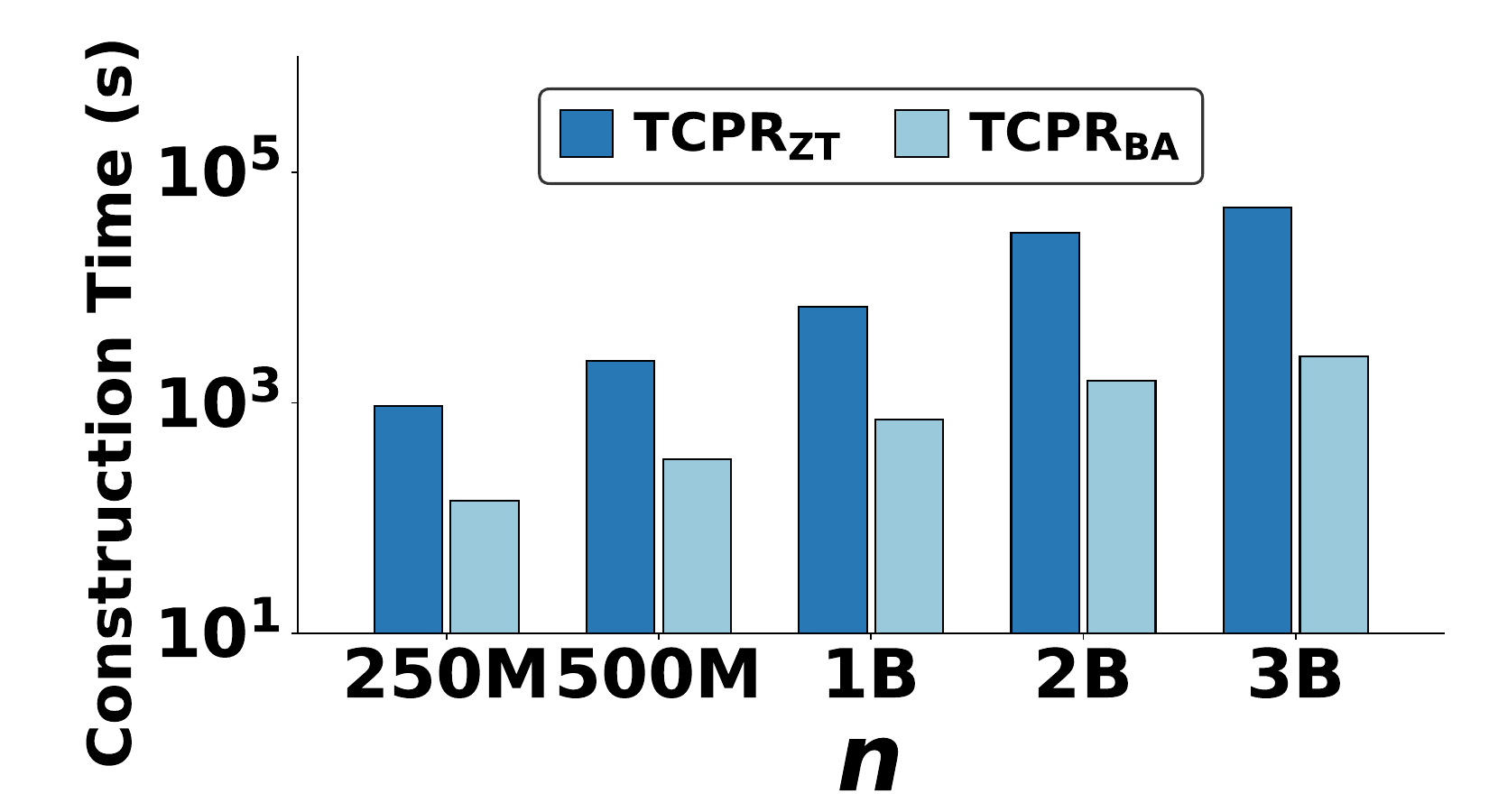}
    \caption{Constr.\ time vs. $n$}\label{fig:app:SP:n:build:SDSL}
  \end{subfigure}
  \begin{subfigure}[t]{\appfigwidth}
    \includegraphics[width=\linewidth]{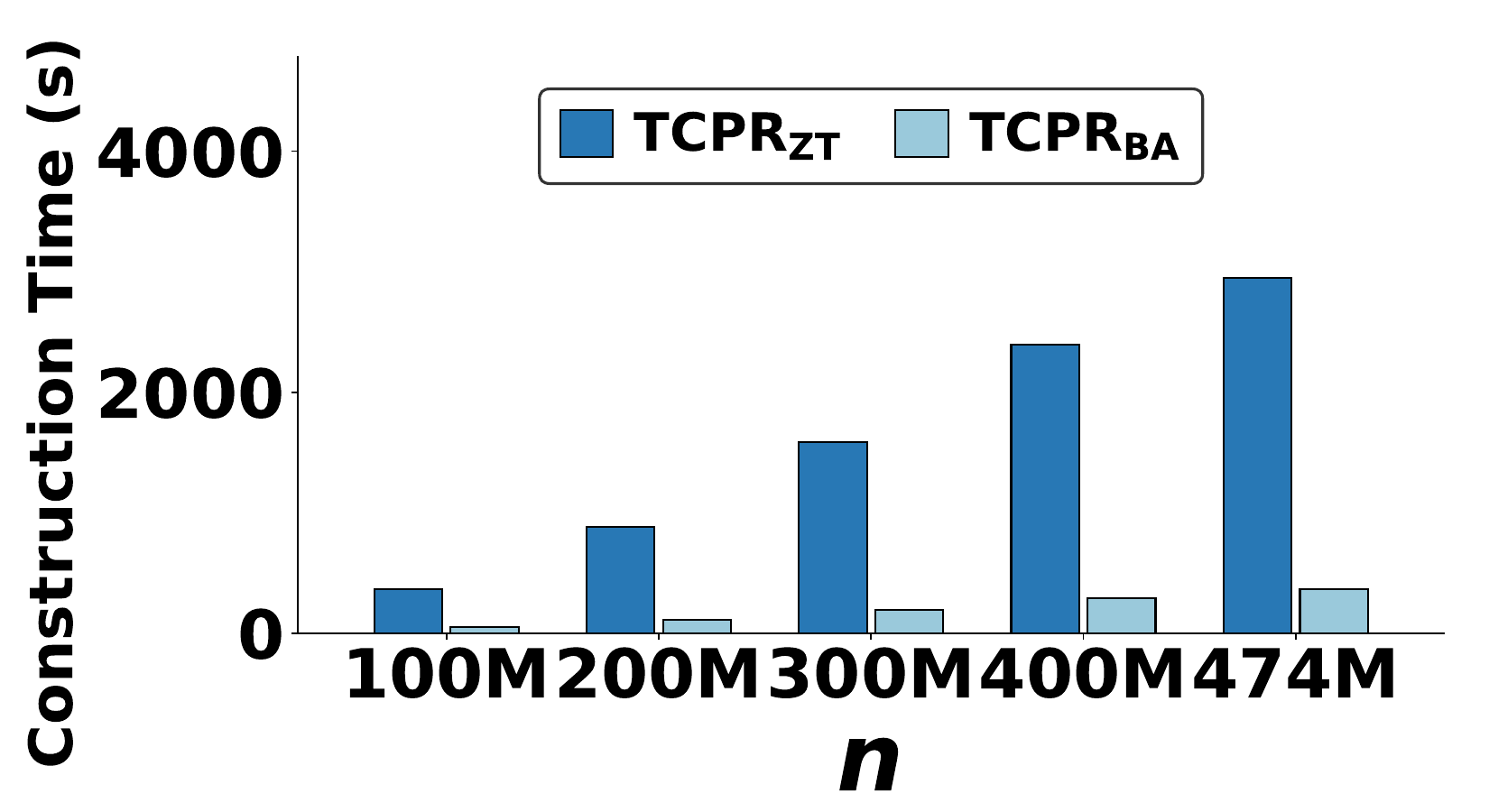}
    \caption{Constr.\ time vs. $n$}\label{fig:app:SP:n:build:WIKI}
  \end{subfigure}
  \vspace{\captionspacing}
  \vspace{+2mm}
  \caption{Index size of our \TCPR index with the \textsf{SP} scoring function vs. \TCPRBA on (a) \bst, (b) \sars, (c) \sdsl, and (d) \wiki vs. $n$; construction space of our \TCPR index with the \textsf{SP} scoring function vs. \TCPRBA on (e) \bst, (f) \sars, (g) \sdsl, and (h) \wiki vs. $n$; construction time of our \TCPR index with the \textsf{SP} scoring function vs. \TCPRBA on (i) \bst, (j) \sars, (k) \sdsl, and (l) \wiki vs. $n$.}\label{fig:app:SP:cost}
\end{figure}

\begin{figure}[ht]
  \centering
  \begin{subfigure}[t]{\appfigwidth}
    \includegraphics[width=\linewidth]{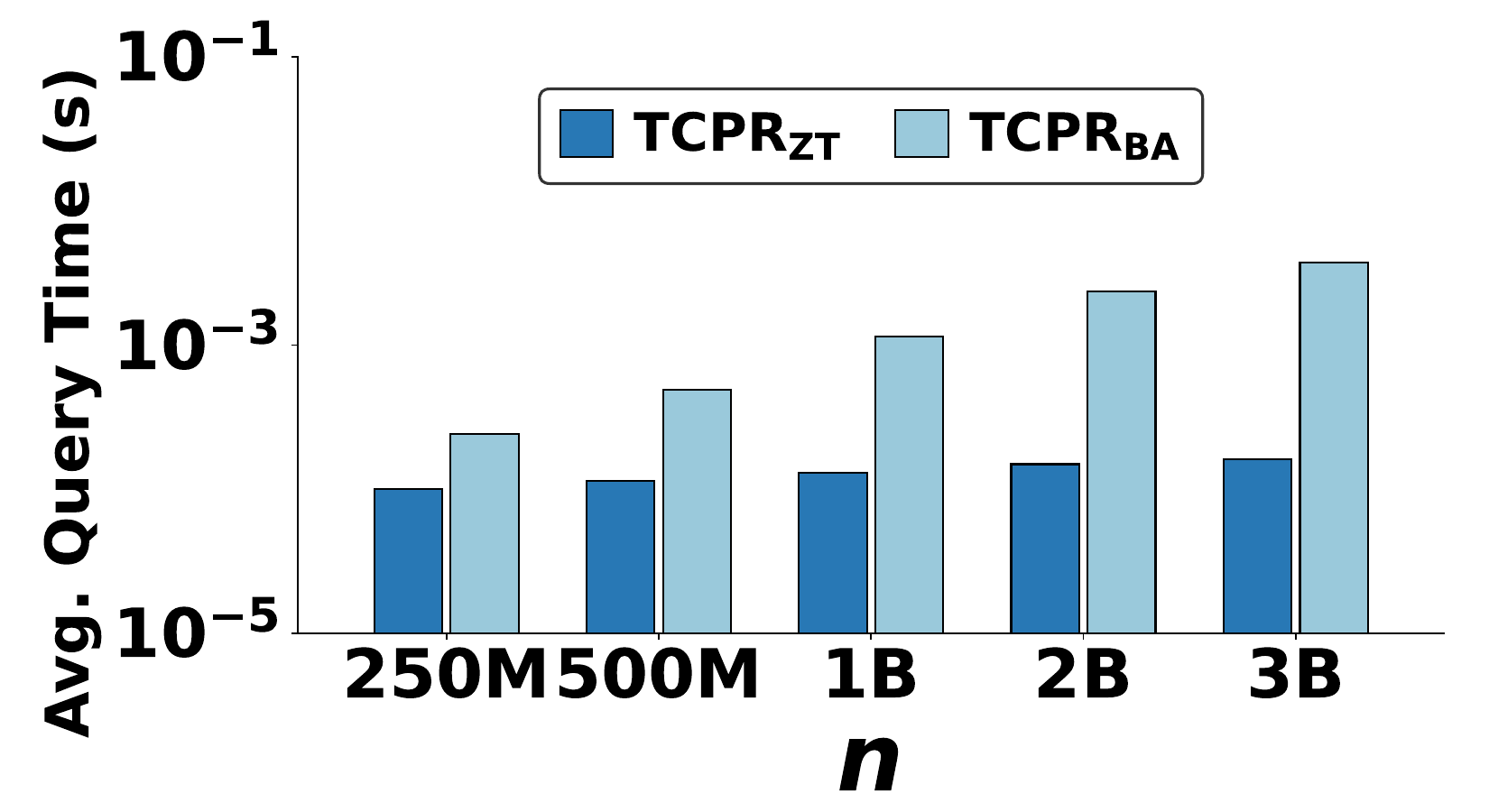}
    \caption{Query time vs. $n$}\label{fig:app:TP:n:query:BST}
  \end{subfigure}
  \begin{subfigure}[t]{\appfigwidth}
    \includegraphics[width=\linewidth]{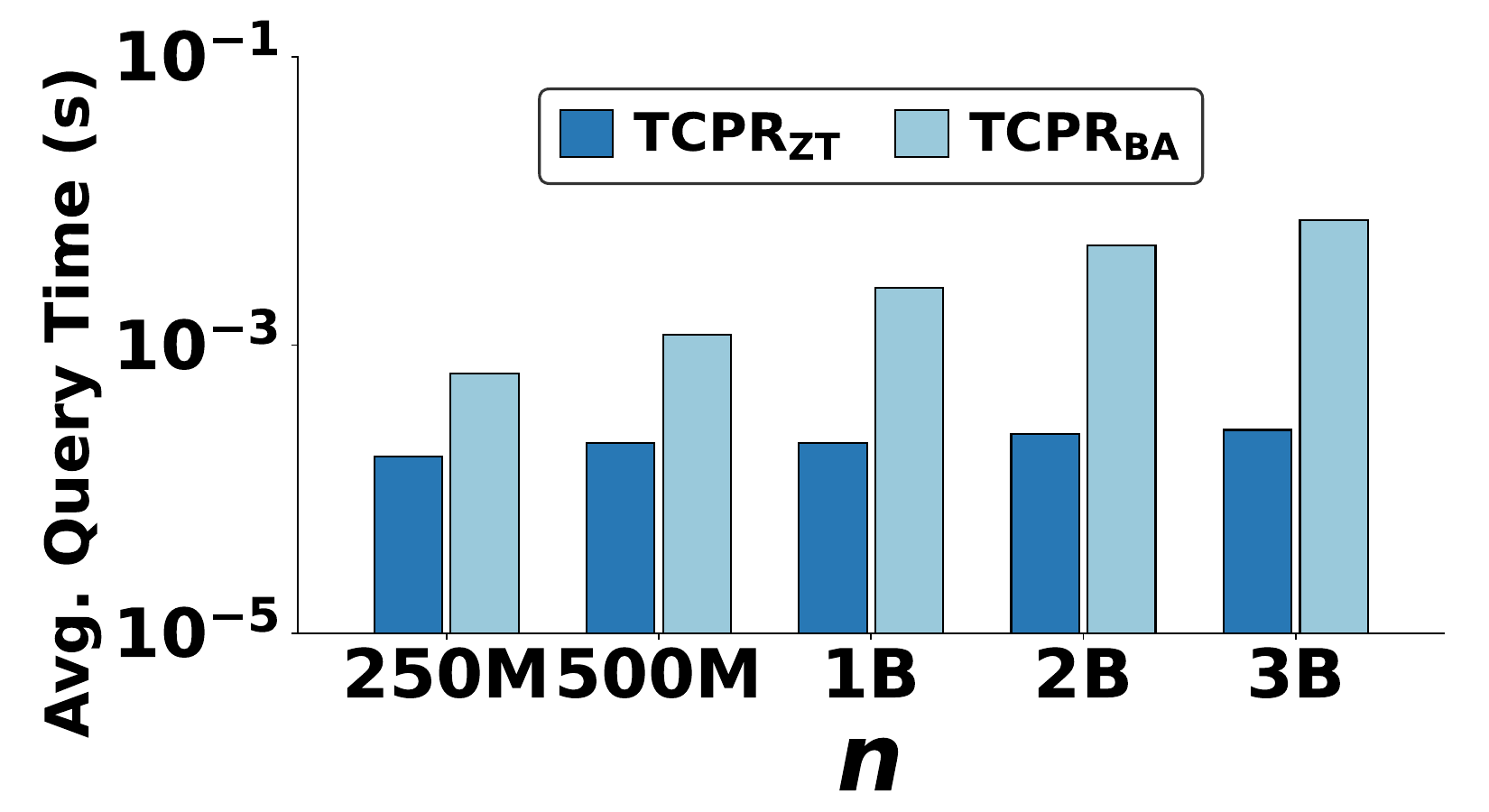}
    \caption{Query time vs. $n$}\label{fig:app:TP:n:query:SARS}
  \end{subfigure}
  \begin{subfigure}[t]{\appfigwidth}
    \includegraphics[width=\linewidth]{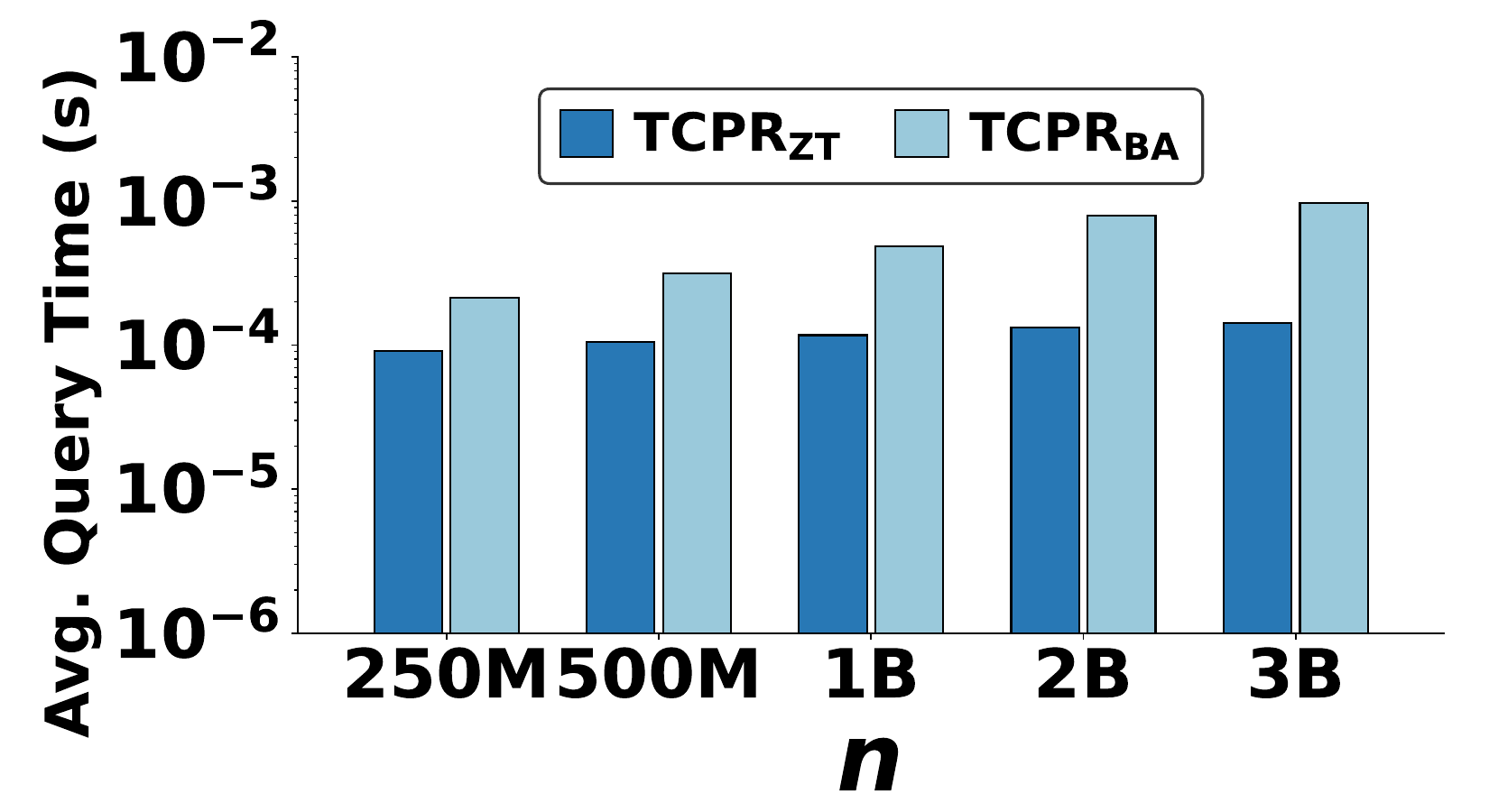}
    \caption{Query time vs. $n$}\label{fig:app:TP:n:query:SDSL}
  \end{subfigure}
  \begin{subfigure}[t]{\appfigwidth}
    \includegraphics[width=\linewidth]{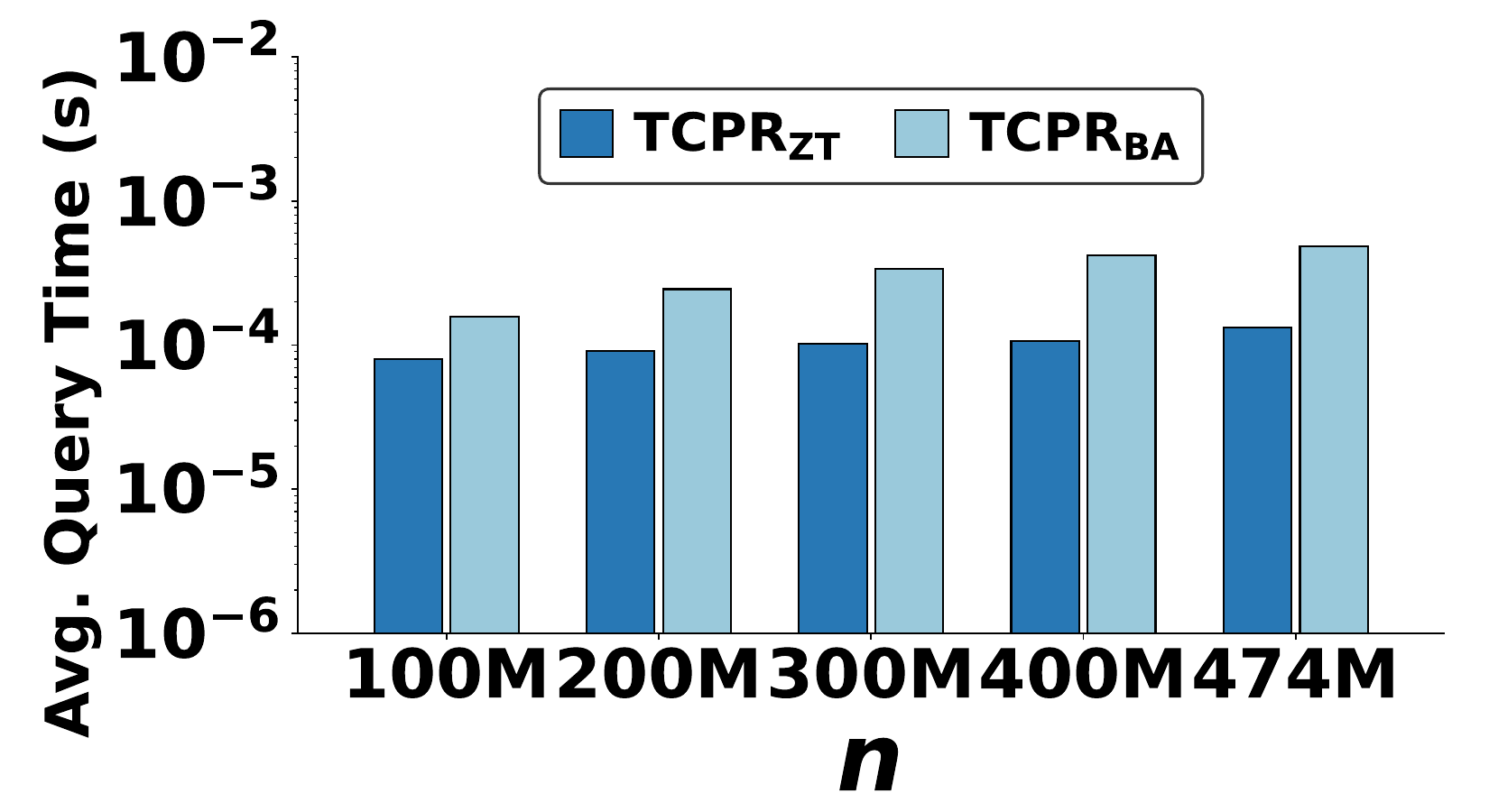}
    \caption{Query time vs. $n$}\label{fig:app:TP:n:query:WIKI}
  \end{subfigure}\\[0pt]
  \begin{subfigure}[t]{\appfigwidth}
    \includegraphics[width=\linewidth]{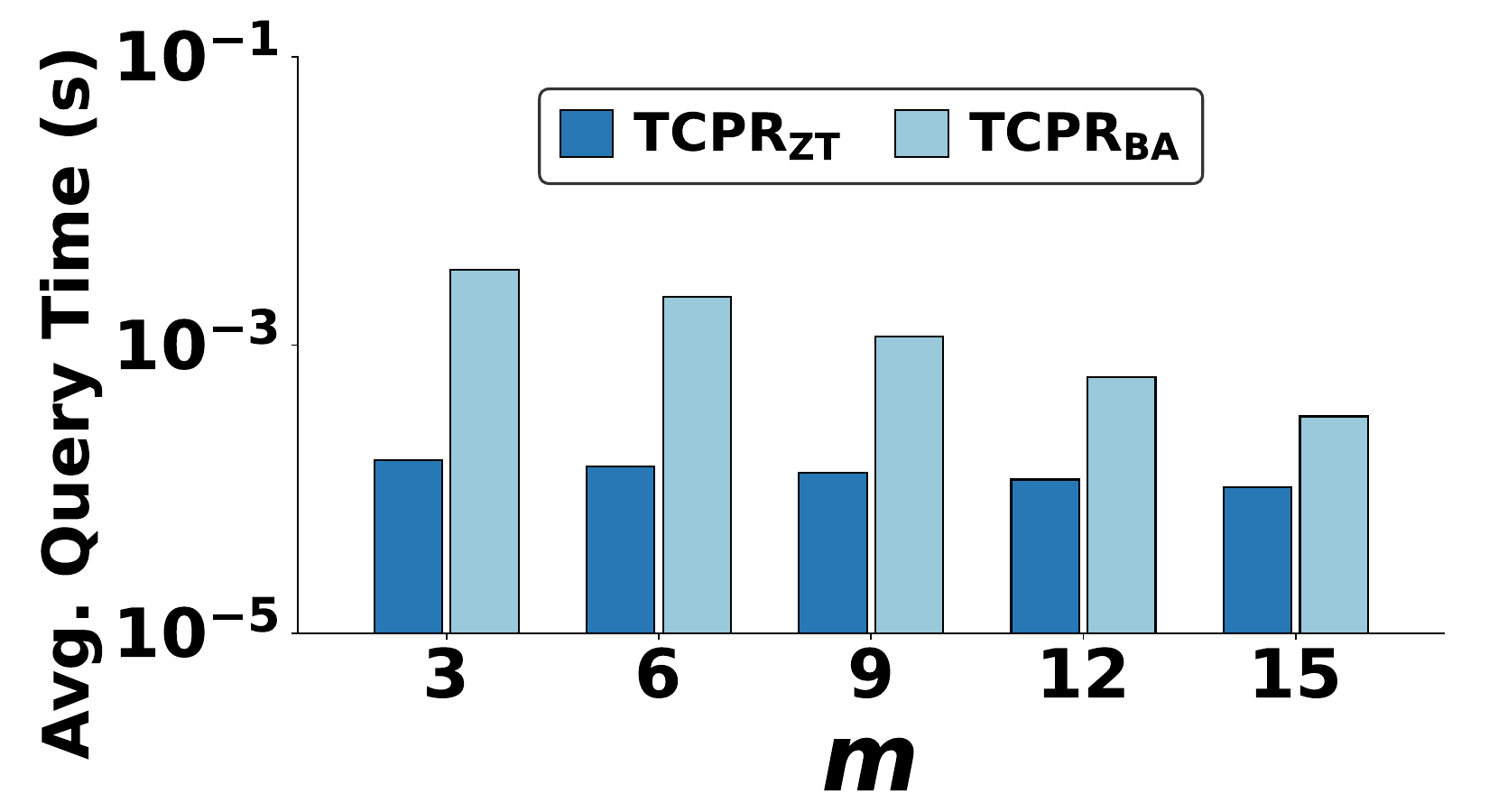}
    \caption{Query time vs. $m$}\label{fig:app:TP:m:query:BST}
  \end{subfigure}
  \begin{subfigure}[t]{\appfigwidth}
    \includegraphics[width=\linewidth]{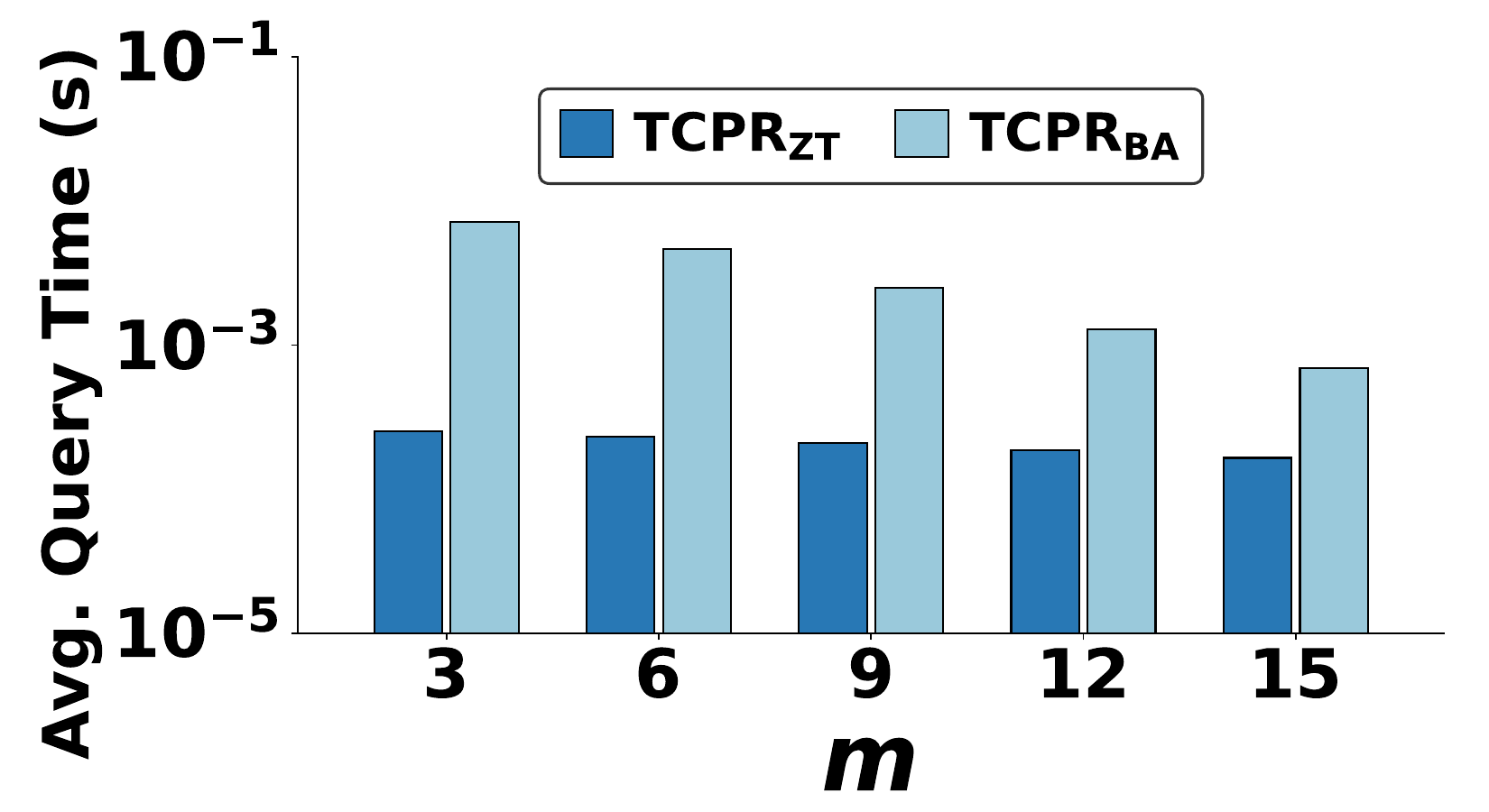}
    \caption{Query time vs. $m$}\label{fig:app:TP:m:query:SARS}
  \end{subfigure}
  \begin{subfigure}[t]{\appfigwidth}
    \includegraphics[width=\linewidth]{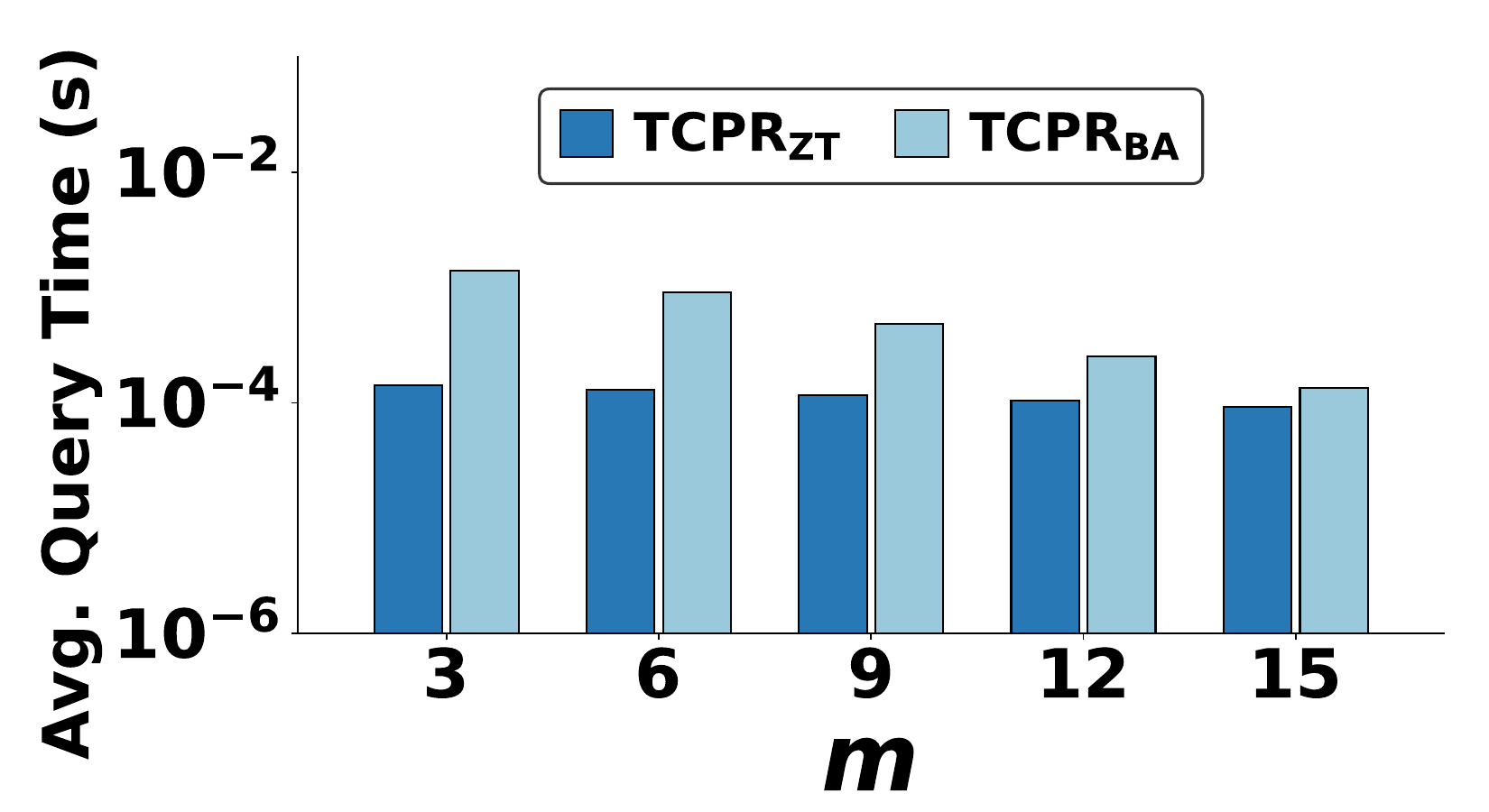}
    \caption{Query time vs. $m$}\label{fig:app:TP:m:query:SDSL}
  \end{subfigure}
  \begin{subfigure}[t]{\appfigwidth}
    \includegraphics[width=\linewidth]{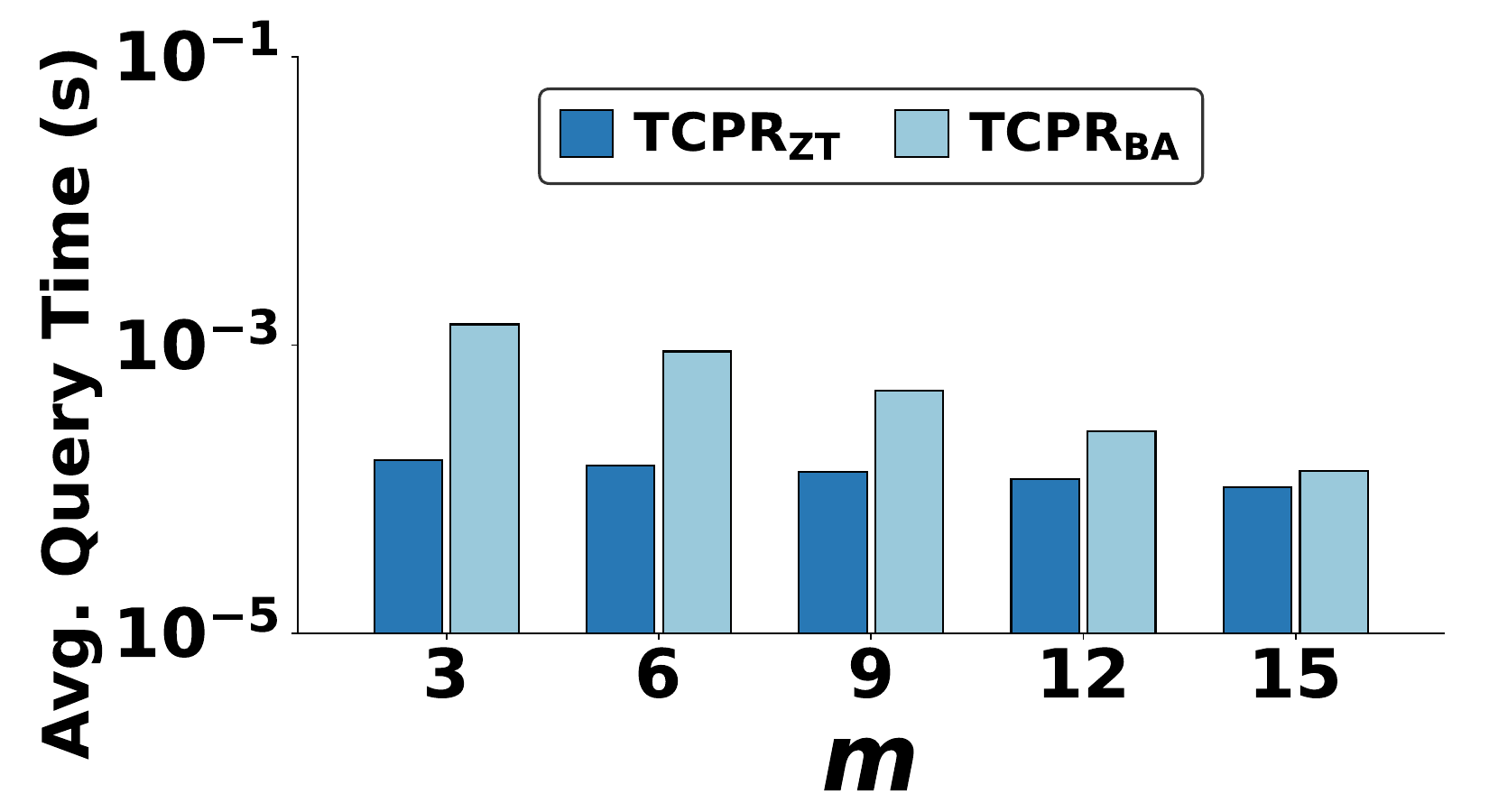}
    \caption{Query time vs. $m$}\label{fig:app:TP:m:query:WIKI}
  \end{subfigure}\\[0pt]
  \begin{subfigure}[t]{\appfigwidth}
    \includegraphics[width=\linewidth]{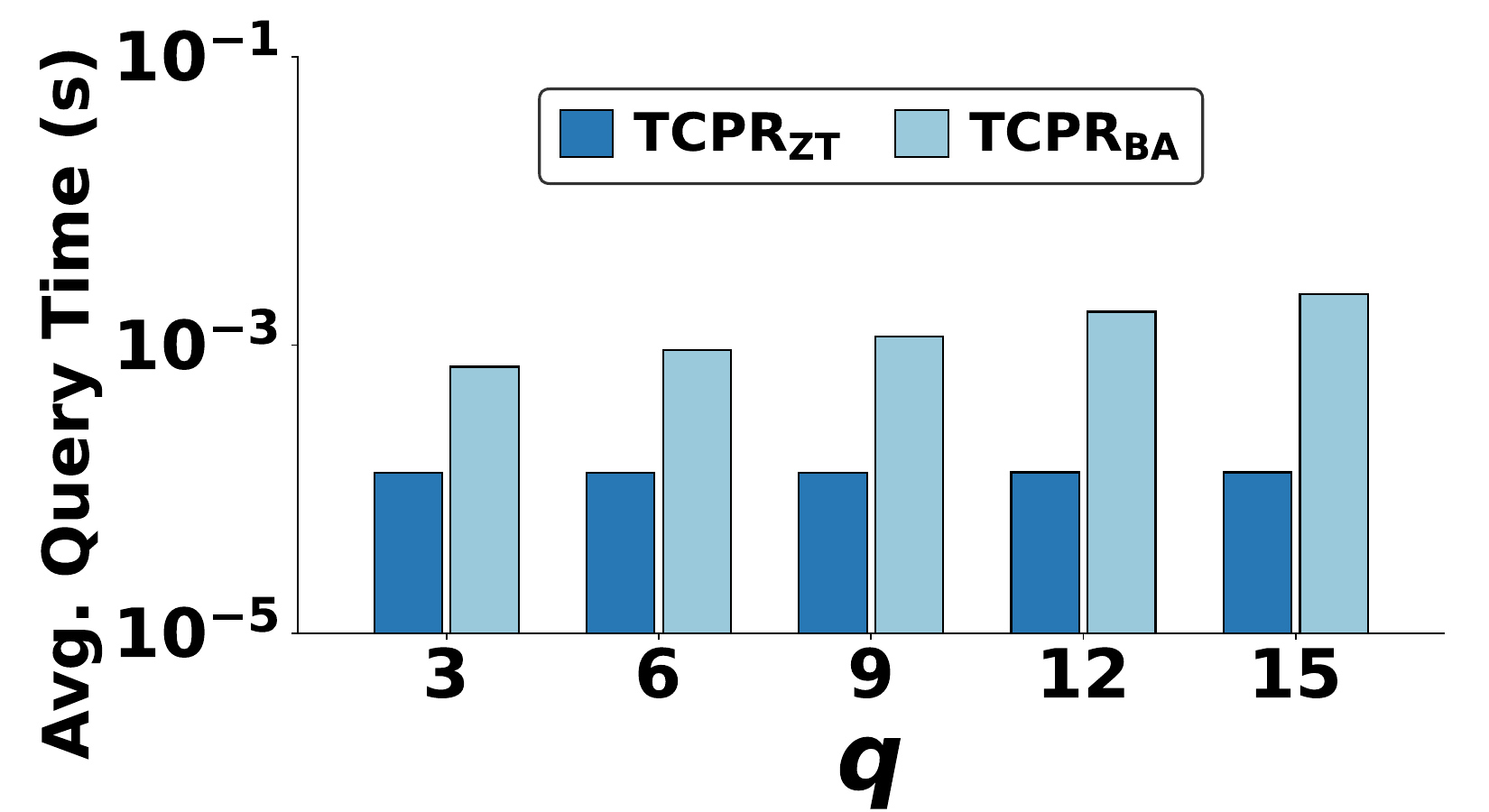}
    \caption{Query time vs. $q$}\label{fig:app:TP:q:query:BST}
  \end{subfigure}
  \begin{subfigure}[t]{\appfigwidth}
    \includegraphics[width=\linewidth]{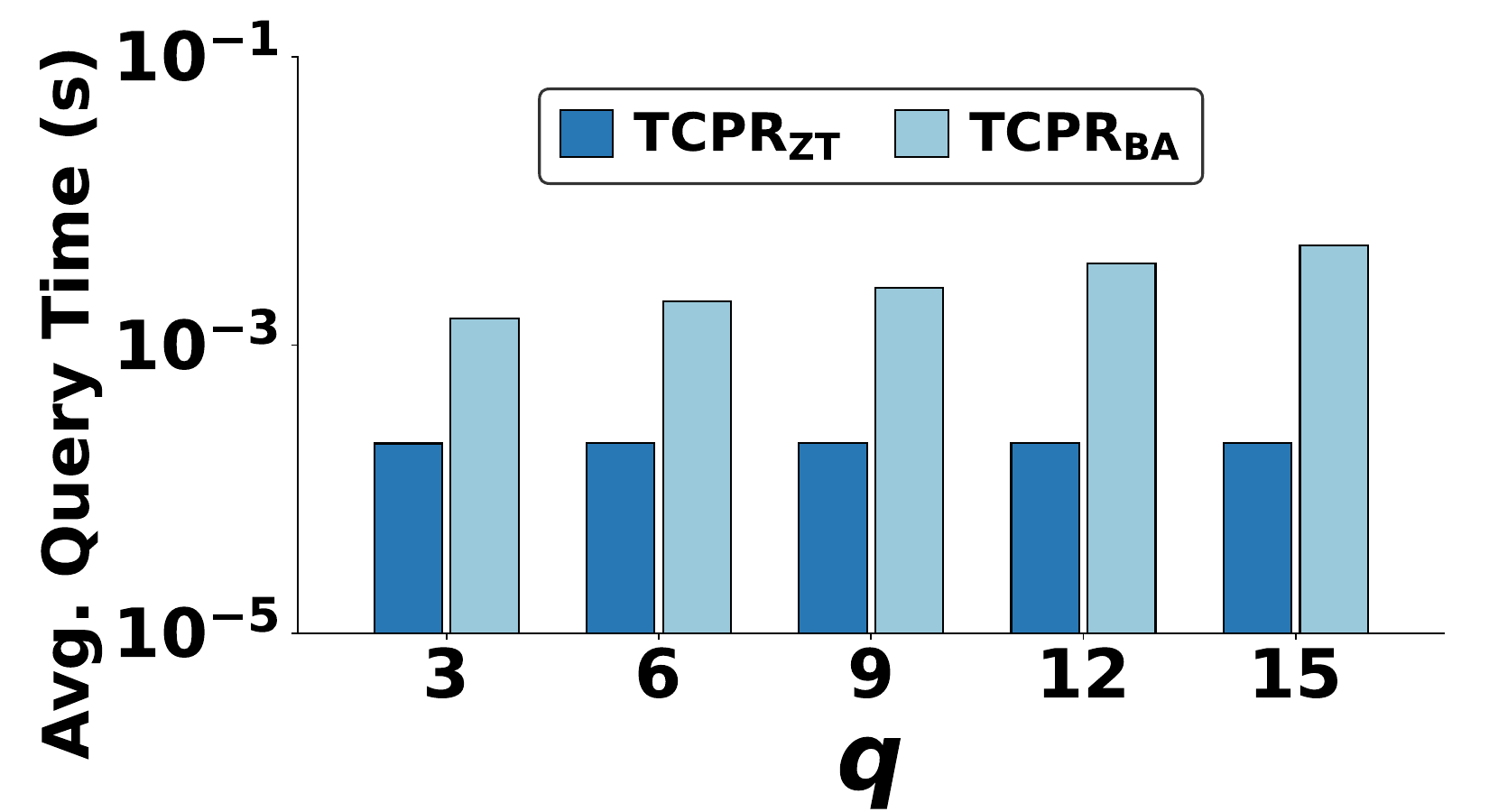}
    \caption{Query time vs. $q$}\label{fig:app:TP:q:query:SARS}
  \end{subfigure}
  \begin{subfigure}[t]{\appfigwidth}
    \includegraphics[width=\linewidth]{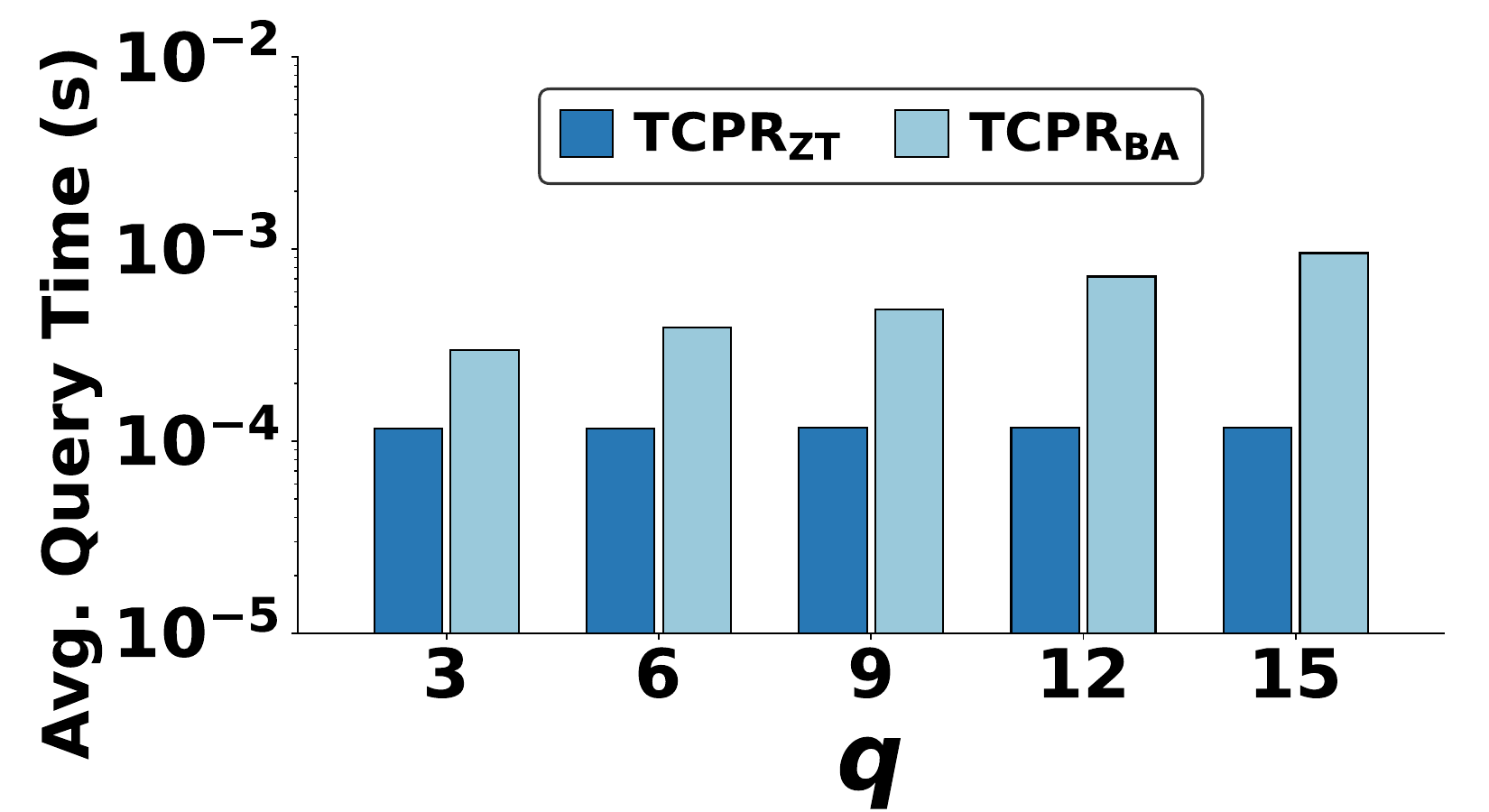}
    \caption{Query time vs. $q$}\label{fig:app:TP:q:query:SDSL}
  \end{subfigure}
  \begin{subfigure}[t]{\appfigwidth}
    \includegraphics[width=\linewidth]{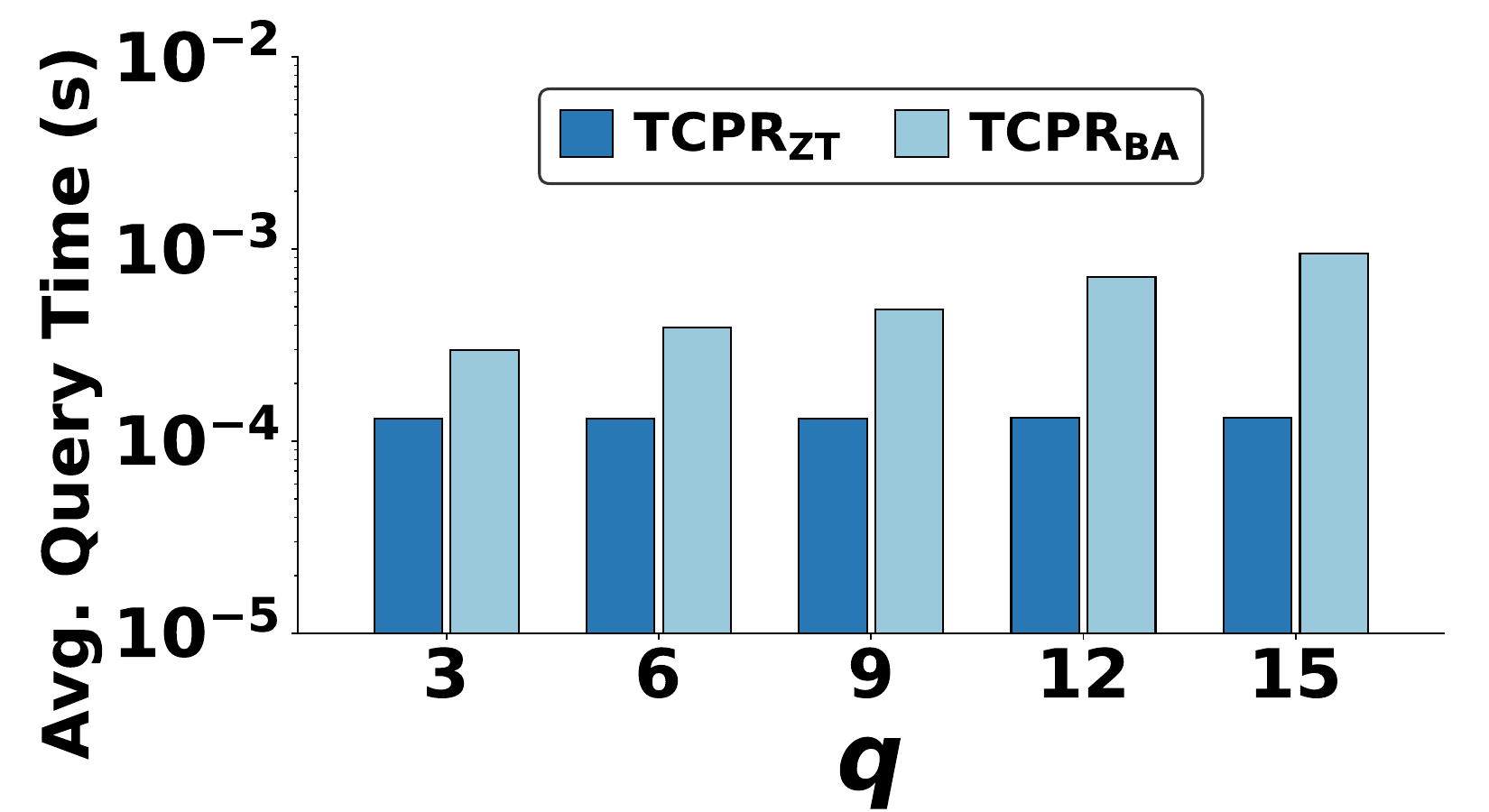}
    \caption{Query time vs. $q$}\label{fig:app:TP:q:query:WIKI}
  \end{subfigure}\\[0pt]
  \begin{subfigure}[t]{\appfigwidth}
    \includegraphics[width=\linewidth]{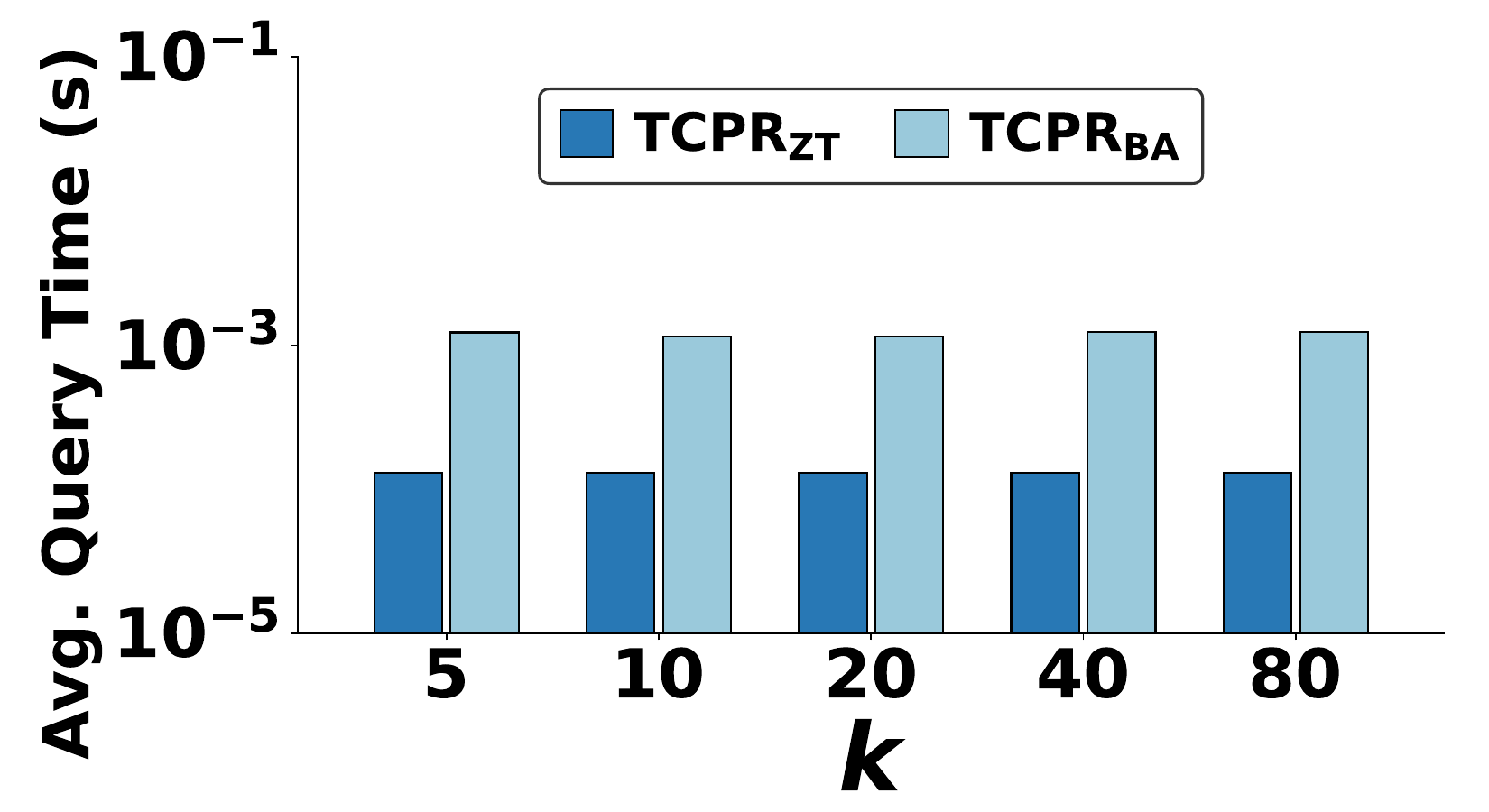}
    \caption{Query time vs. $k$}\label{fig:app:TP:k:query:BST}
  \end{subfigure}
  \begin{subfigure}[t]{\appfigwidth}
    \includegraphics[width=\linewidth]{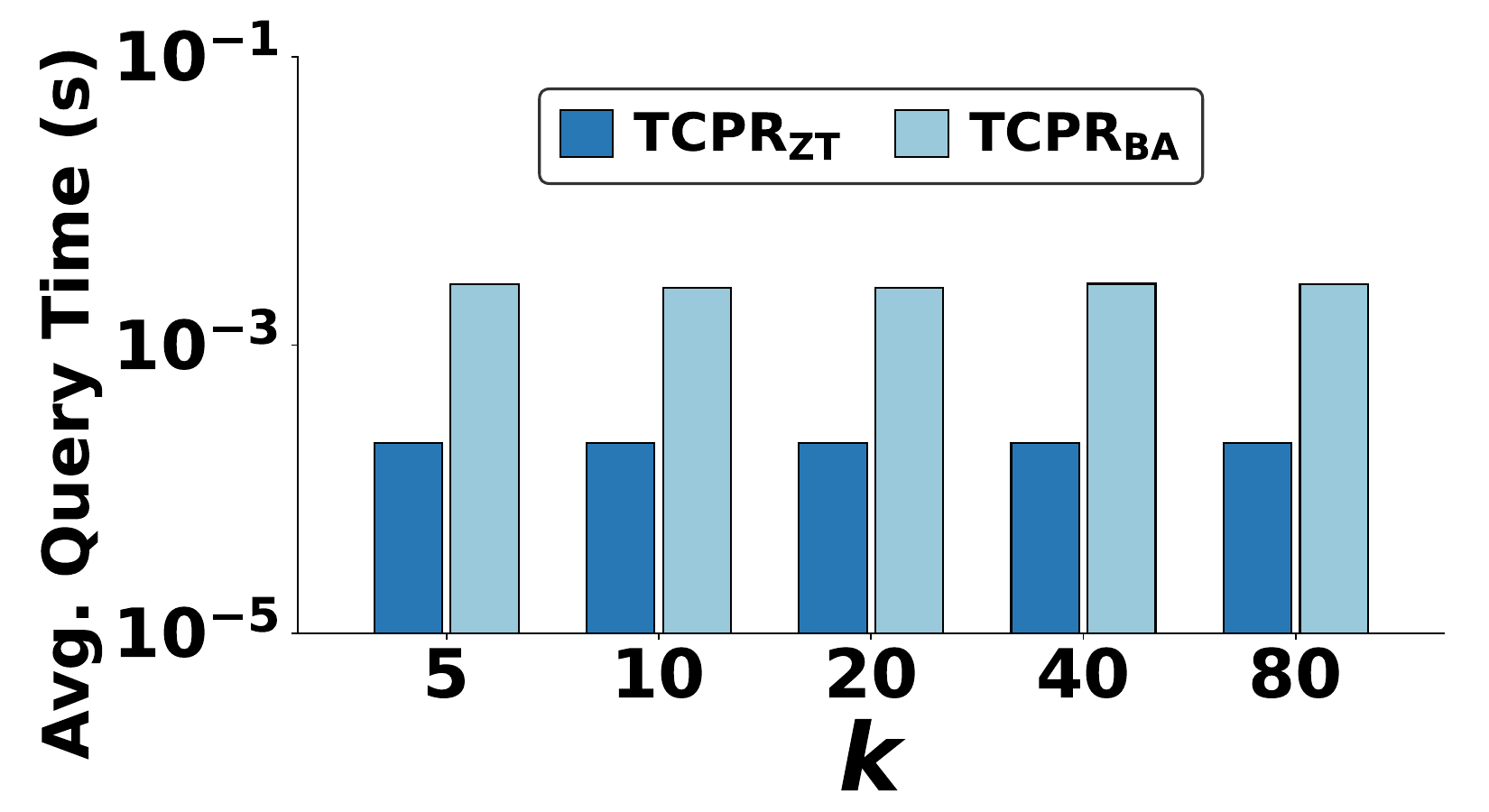}
    \caption{Query time vs. $k$}\label{fig:app:TP:k:query:SARS}
  \end{subfigure}
  \begin{subfigure}[t]{\appfigwidth}
    \includegraphics[width=\linewidth]{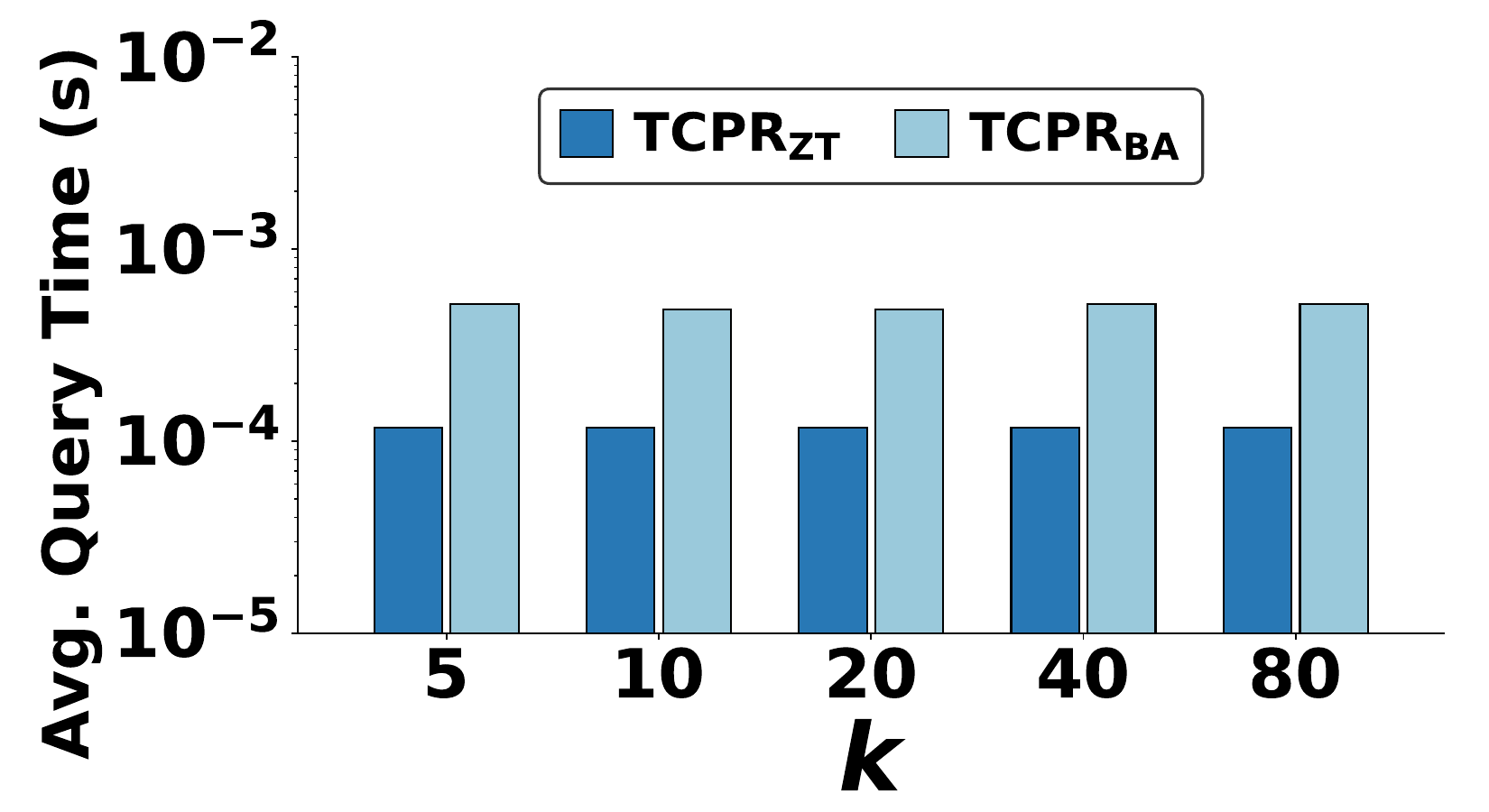}
    \caption{Query time vs. $k$}\label{fig:app:TP:k:query:SDSL}
  \end{subfigure}
  \begin{subfigure}[t]{\appfigwidth}
    \includegraphics[width=\linewidth]{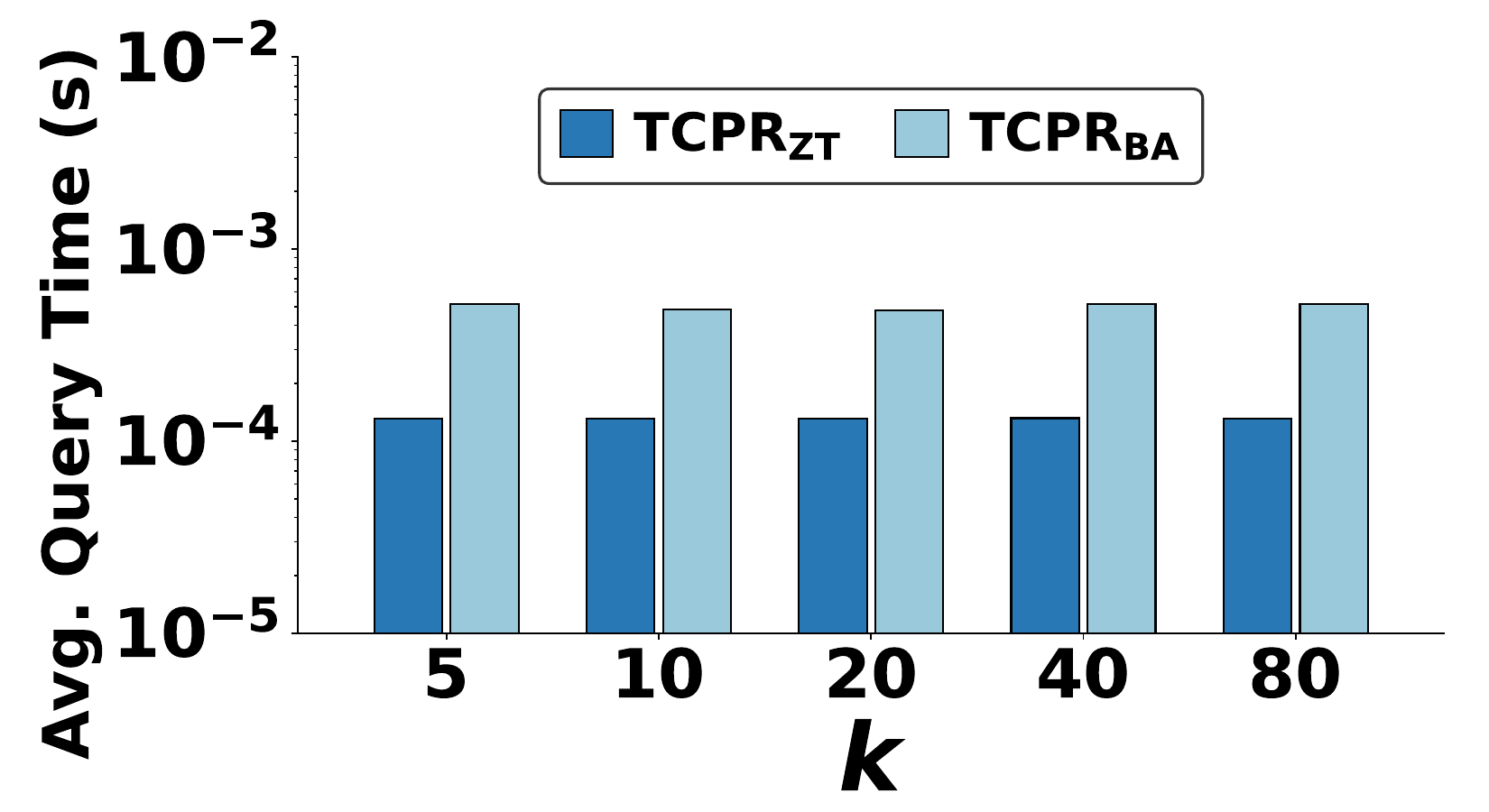}
    \caption{Query time vs. $k$}\label{fig:app:TP:k:query:WIKI}
  \end{subfigure}
  \vspace{\captionspacing}
  \vspace{+2mm}
  \caption{Query time of our \TCPR index with the \textsf{TP} scoring function vs. \TCPRBA on (a) \bst, (b) \sars, (c) \sdsl, and (d) \wiki vs. $n$; on (e) \bst, (f) \sars, (g) \sdsl, and (h) \wiki vs. $m$; on (i) \bst, (j) \sars, (k) \sdsl, and (l) \wiki vs. $q$; on (m) \bst, (n) \sars, (o) \sdsl, and (p) \wiki vs. $k$.}\label{fig:app:TP:query}
\end{figure}

\begin{figure}[ht]
  \centering
  \begin{subfigure}[t]{\appfigwidth}
    \includegraphics[width=\linewidth]{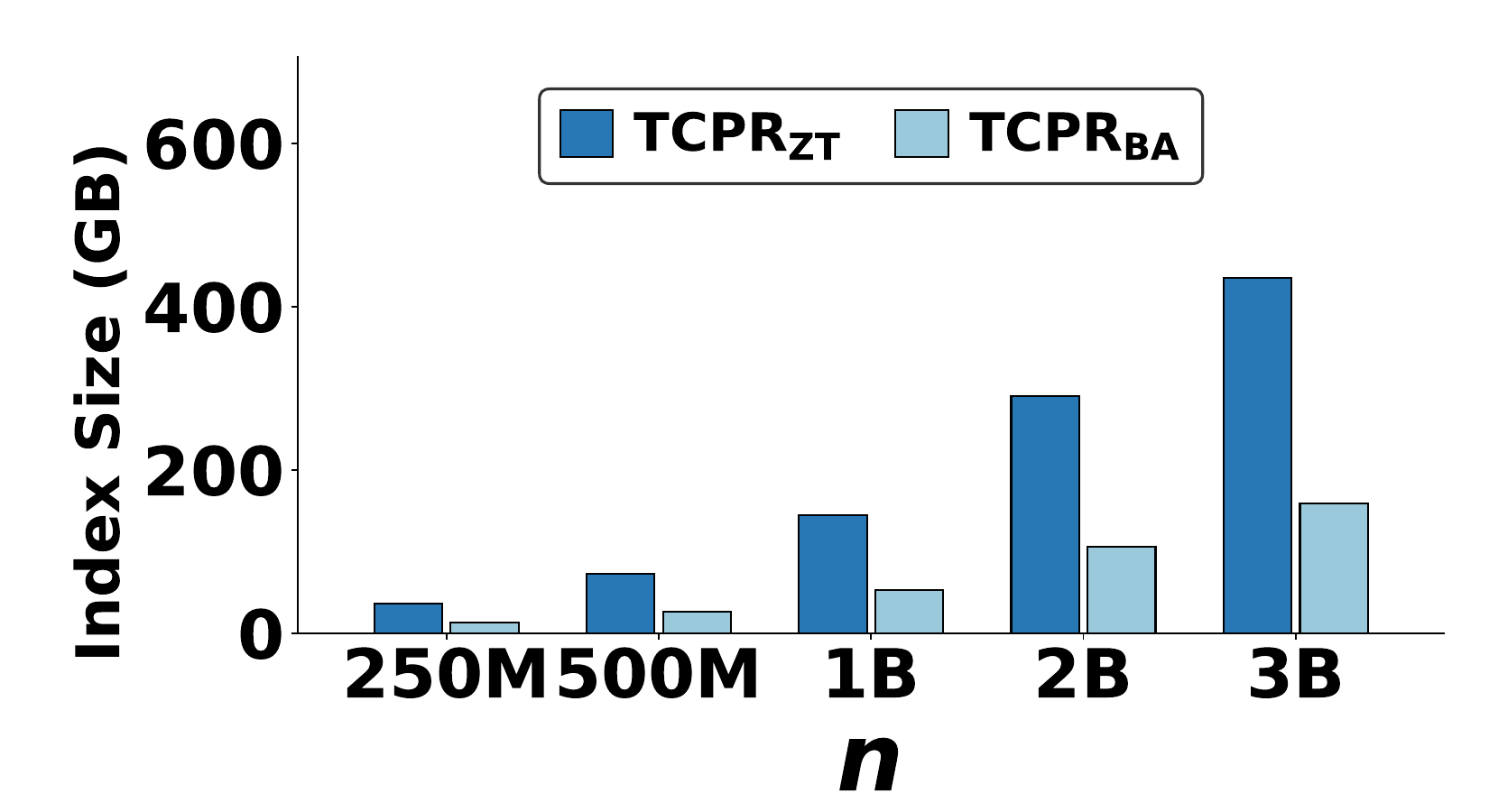}
    \caption{Index size vs. $n$}\label{fig:app:TP:n:index:BST}
  \end{subfigure}
  \begin{subfigure}[t]{\appfigwidth}
    \includegraphics[width=\linewidth]{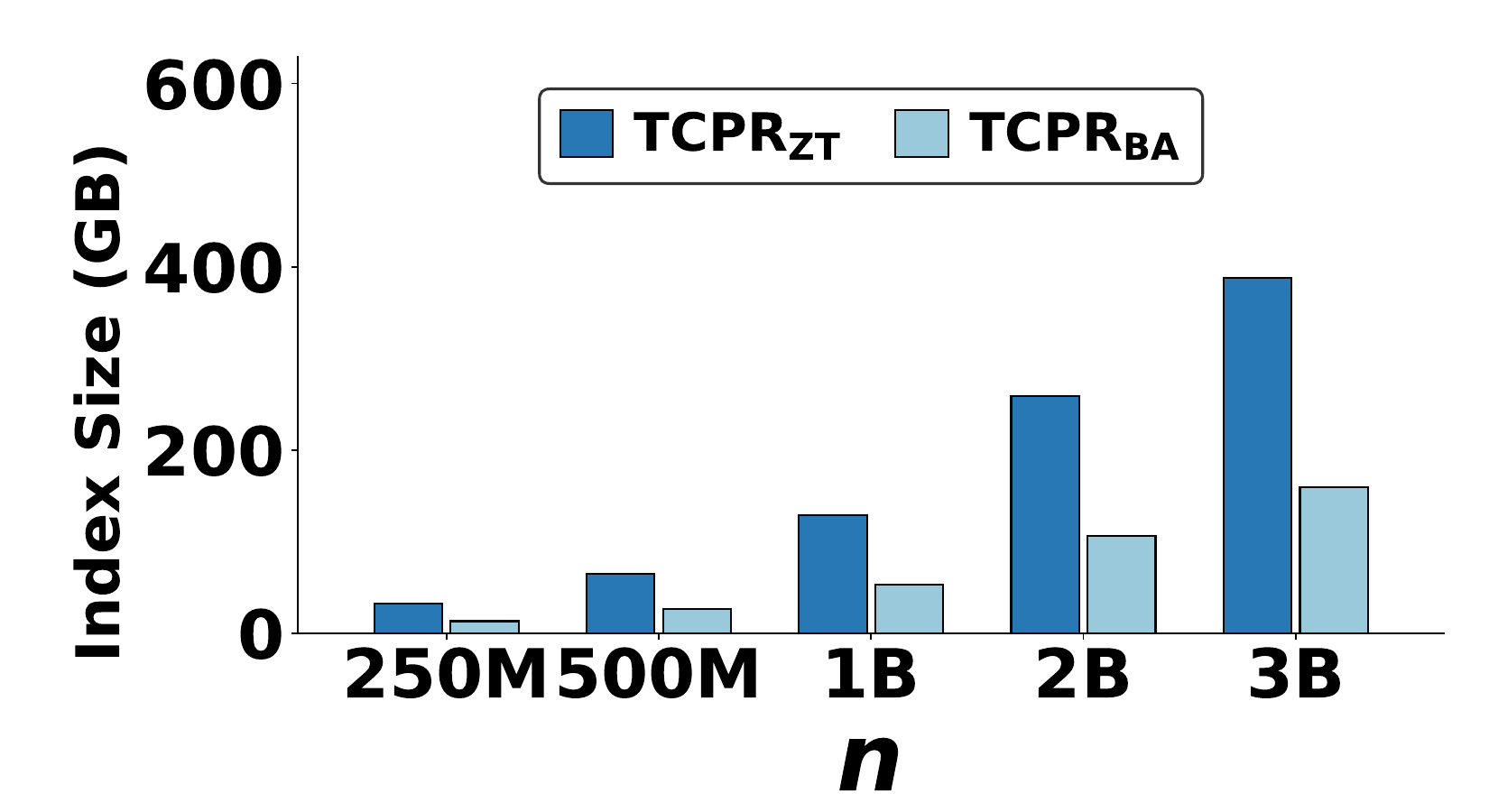}
    \caption{Index size vs. $n$}\label{fig:app:TP:n:index:SARS}
  \end{subfigure}
  \begin{subfigure}[t]{\appfigwidth}
    \includegraphics[width=\linewidth]{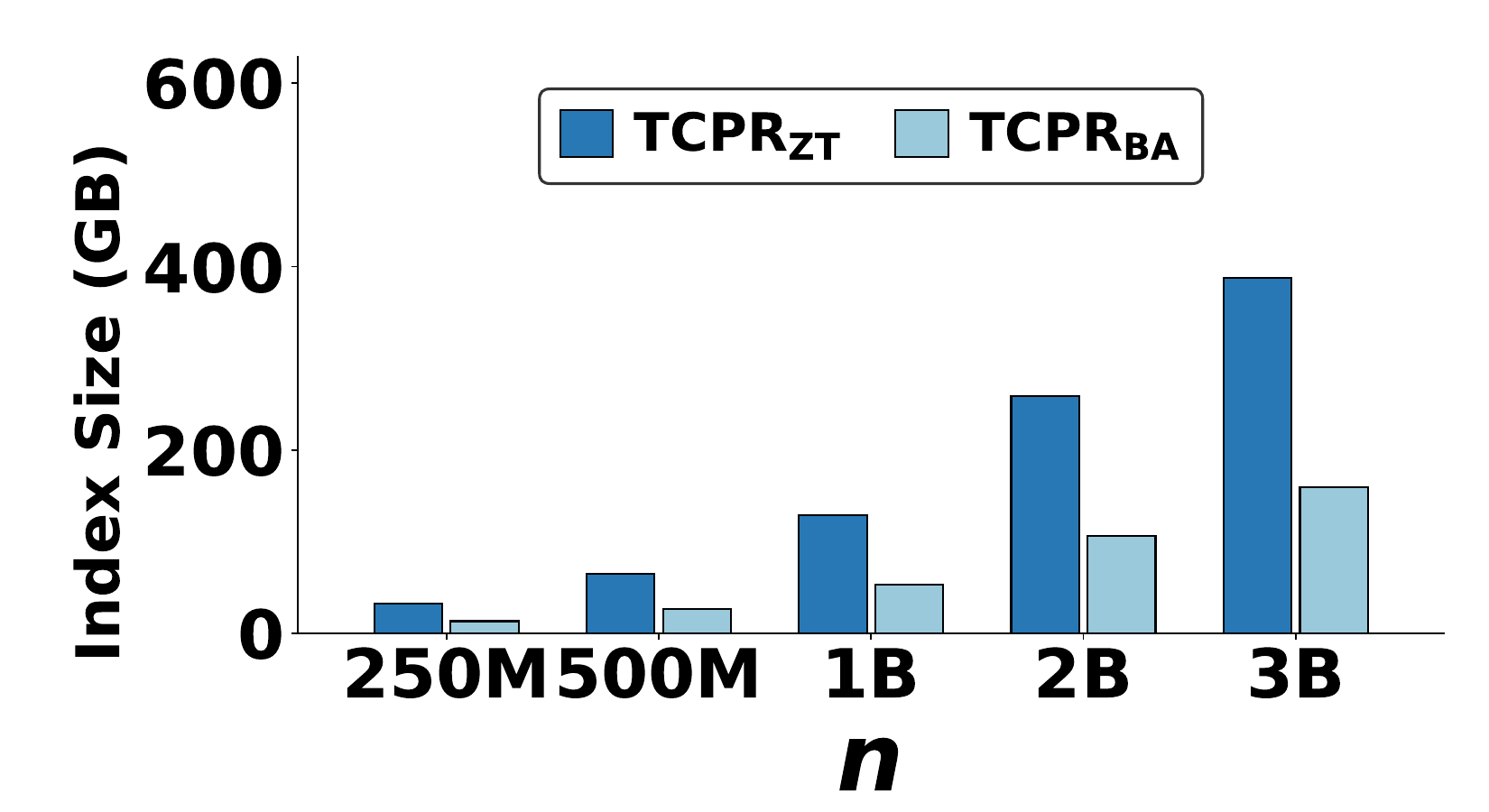}
    \caption{Index size vs. $n$}\label{fig:app:TP:n:index:SDSL}
  \end{subfigure}
  \begin{subfigure}[t]{\appfigwidth}
    \includegraphics[width=\linewidth]{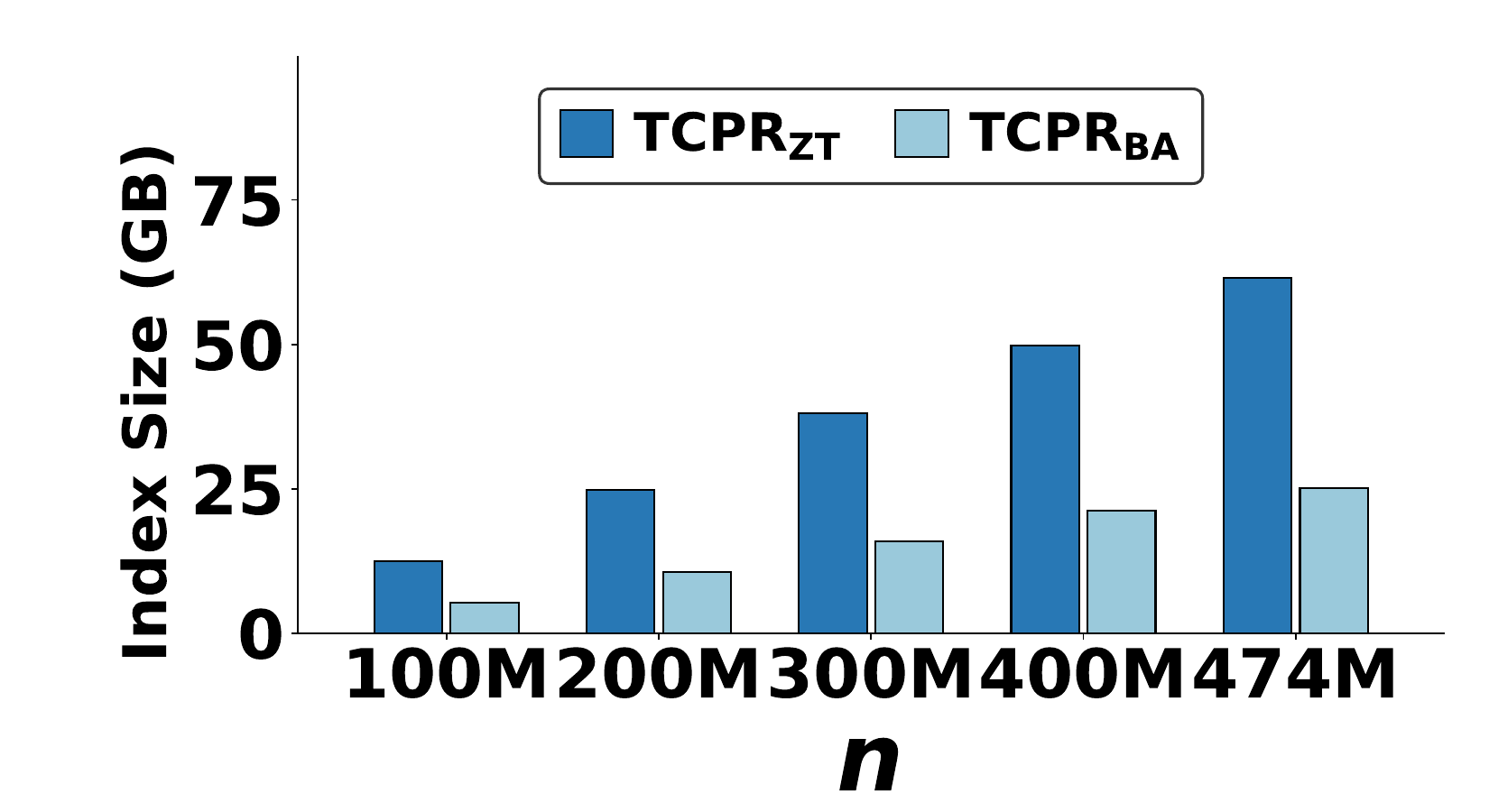}
    \caption{Index size vs. $n$}\label{fig:app:TP:n:index:WIKI}
  \end{subfigure}\\[0pt]
  \begin{subfigure}[t]{\appfigwidth}
    \includegraphics[width=\linewidth]{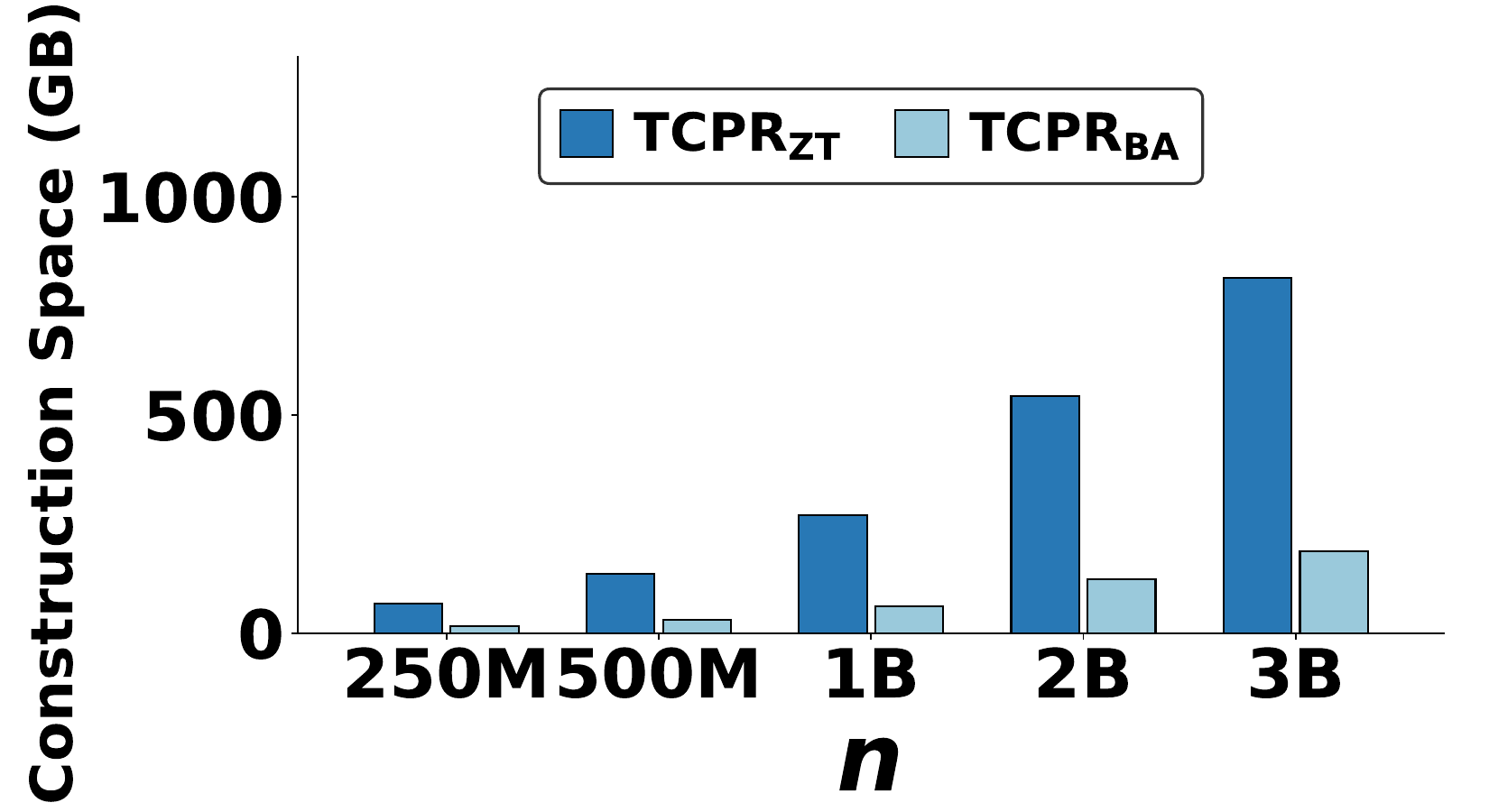}
    \caption{Constr.\ space vs. $n$}\label{fig:app:TP:n:rss:BST}
  \end{subfigure}
  \begin{subfigure}[t]{\appfigwidth}
    \includegraphics[width=\linewidth]{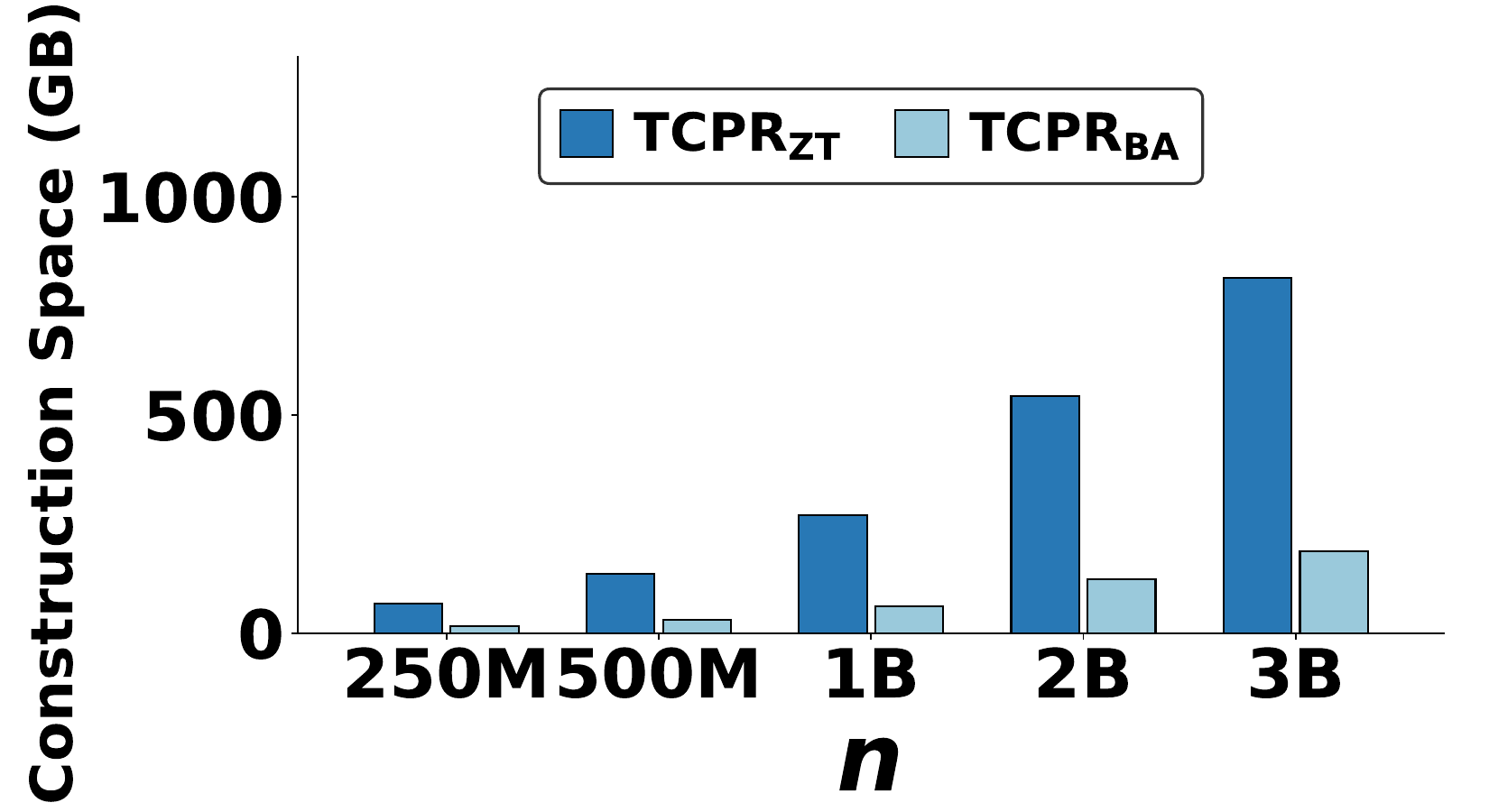}
    \caption{Constr.\ space vs. $n$}\label{fig:app:TP:n:rss:SARS}
  \end{subfigure}
  \begin{subfigure}[t]{\appfigwidth}
    \includegraphics[width=\linewidth]{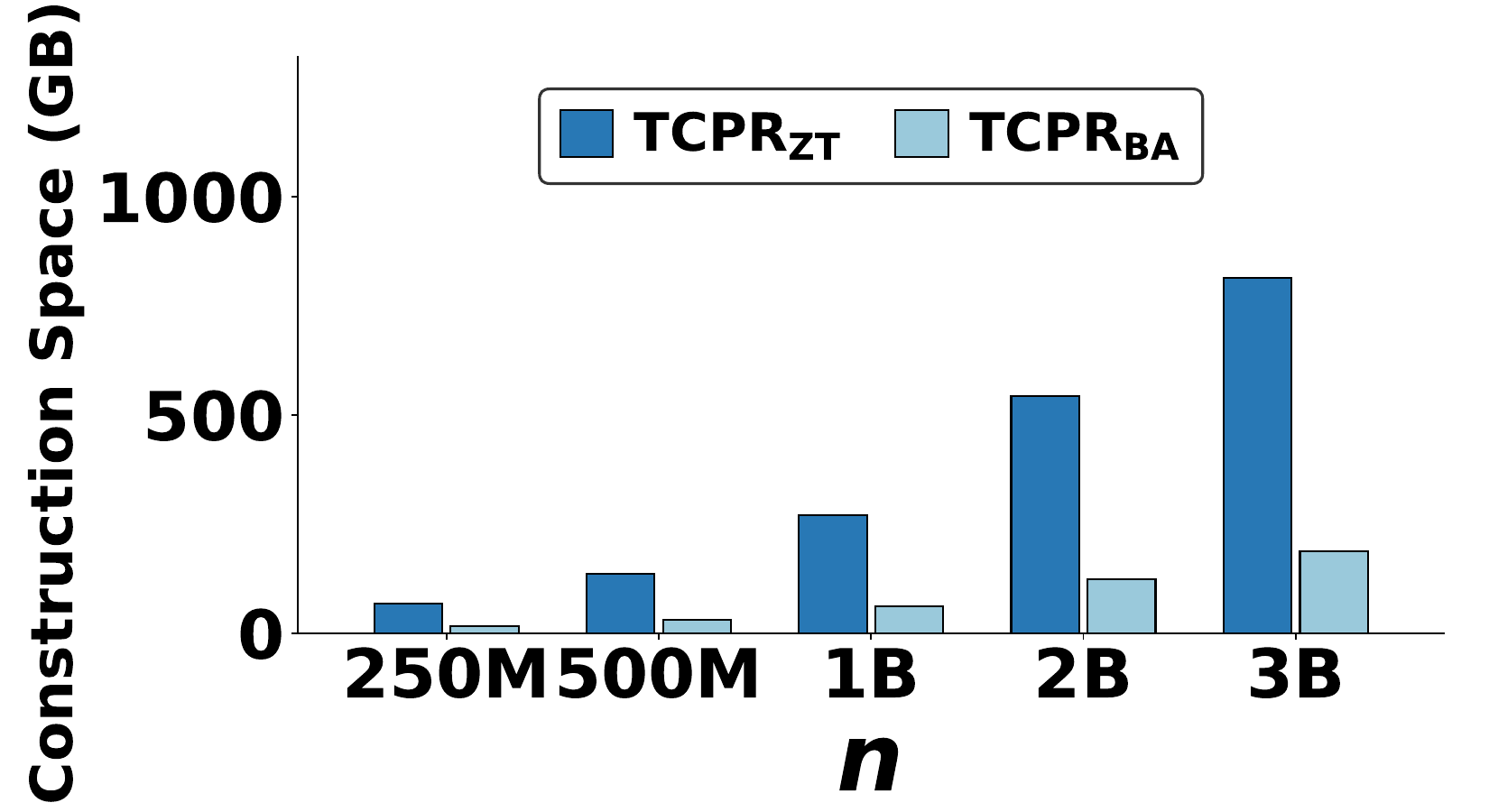}
    \caption{Constr.\ space vs. $n$}\label{fig:app:TP:n:rss:SDSL}
  \end{subfigure}
  \begin{subfigure}[t]{\appfigwidth}
    \includegraphics[width=\linewidth]{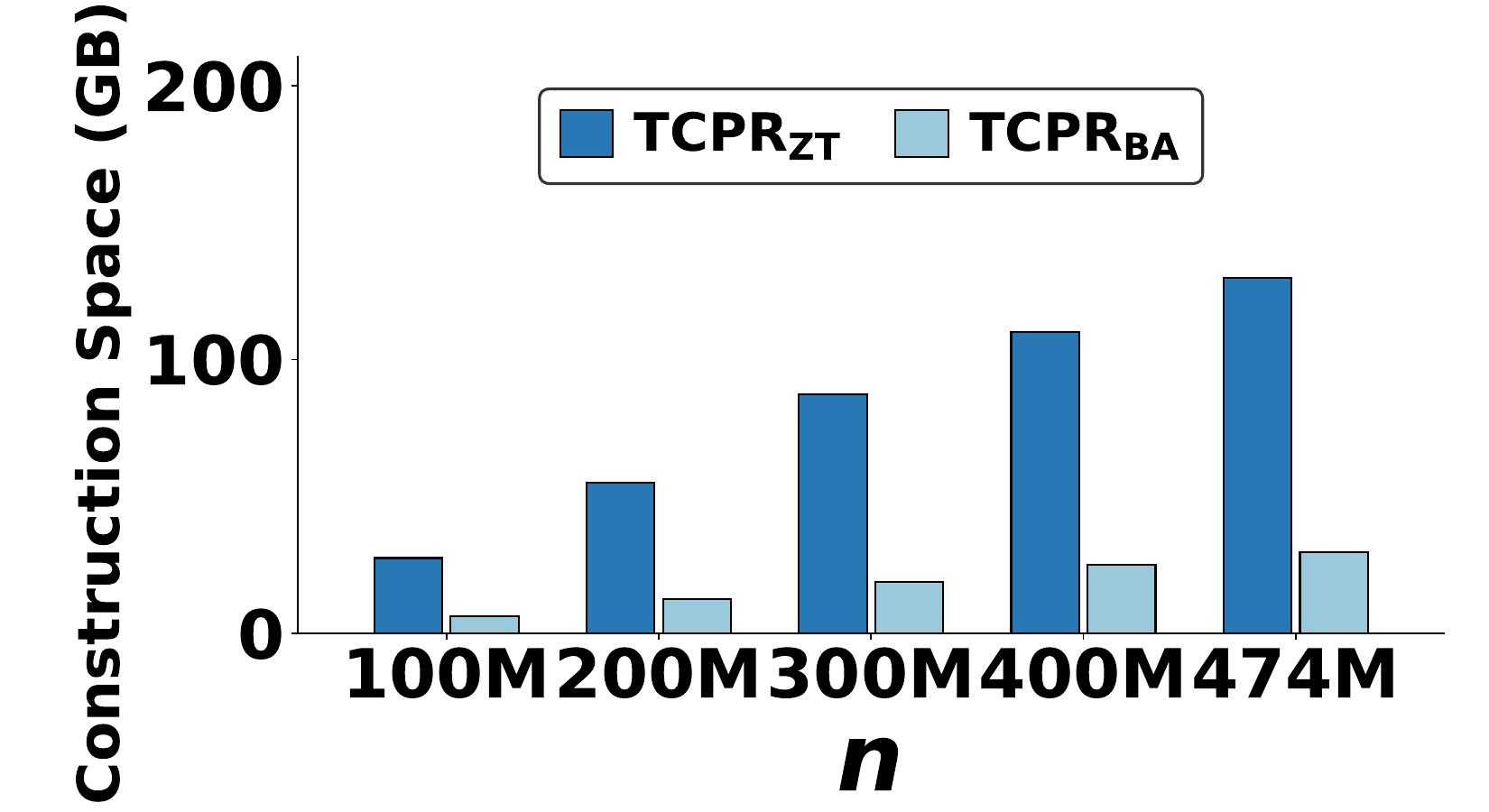}
    \caption{Constr.\ space vs. $n$}\label{fig:app:TP:n:rss:WIKI}
  \end{subfigure}\\[0pt]
  \begin{subfigure}[t]{\appfigwidth}
    \includegraphics[width=\linewidth]{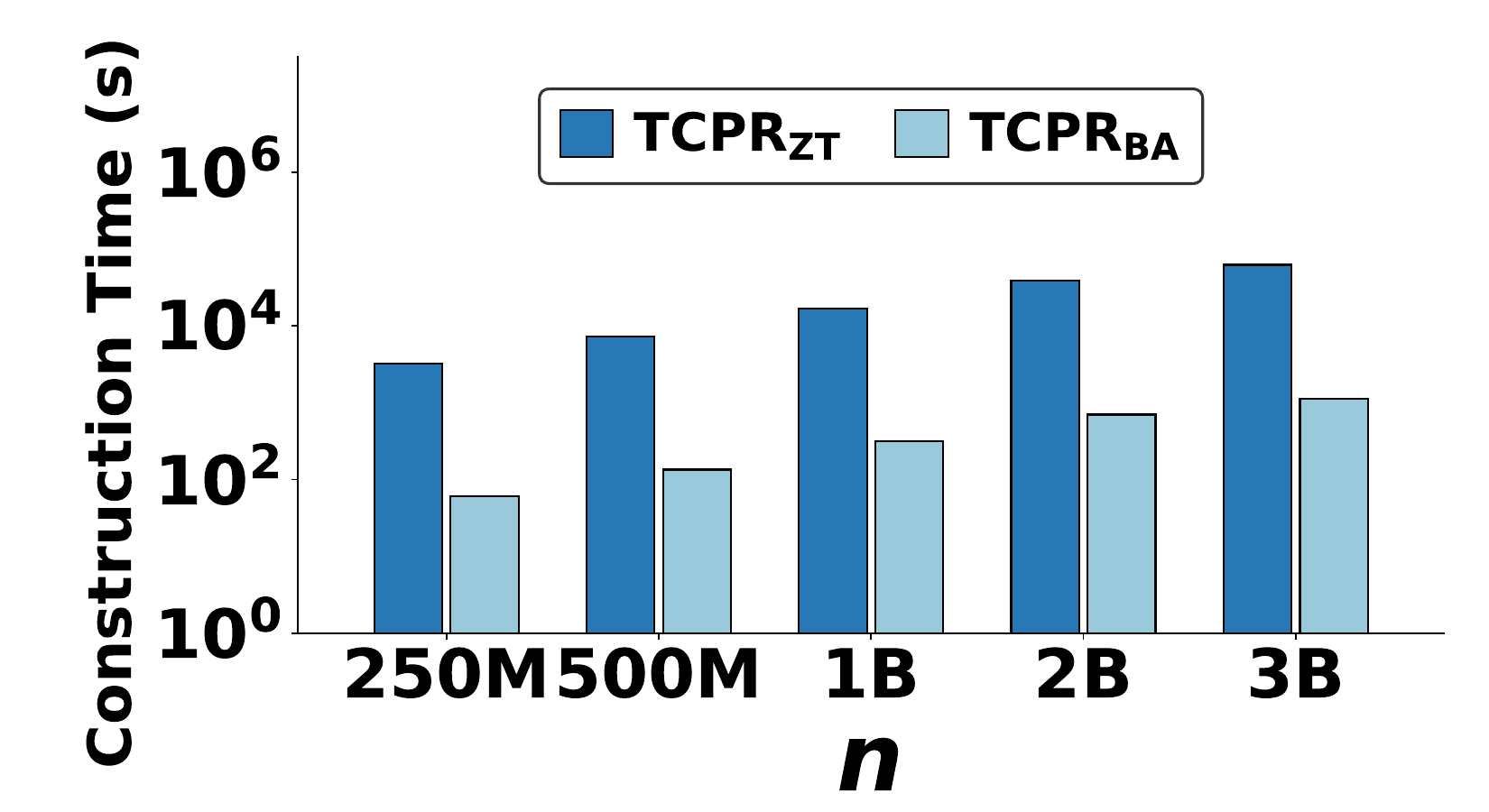}
    \caption{Constr.\ time vs. $n$}\label{fig:app:TP:n:build:BST}
  \end{subfigure}
  \begin{subfigure}[t]{\appfigwidth}
    \includegraphics[width=\linewidth]{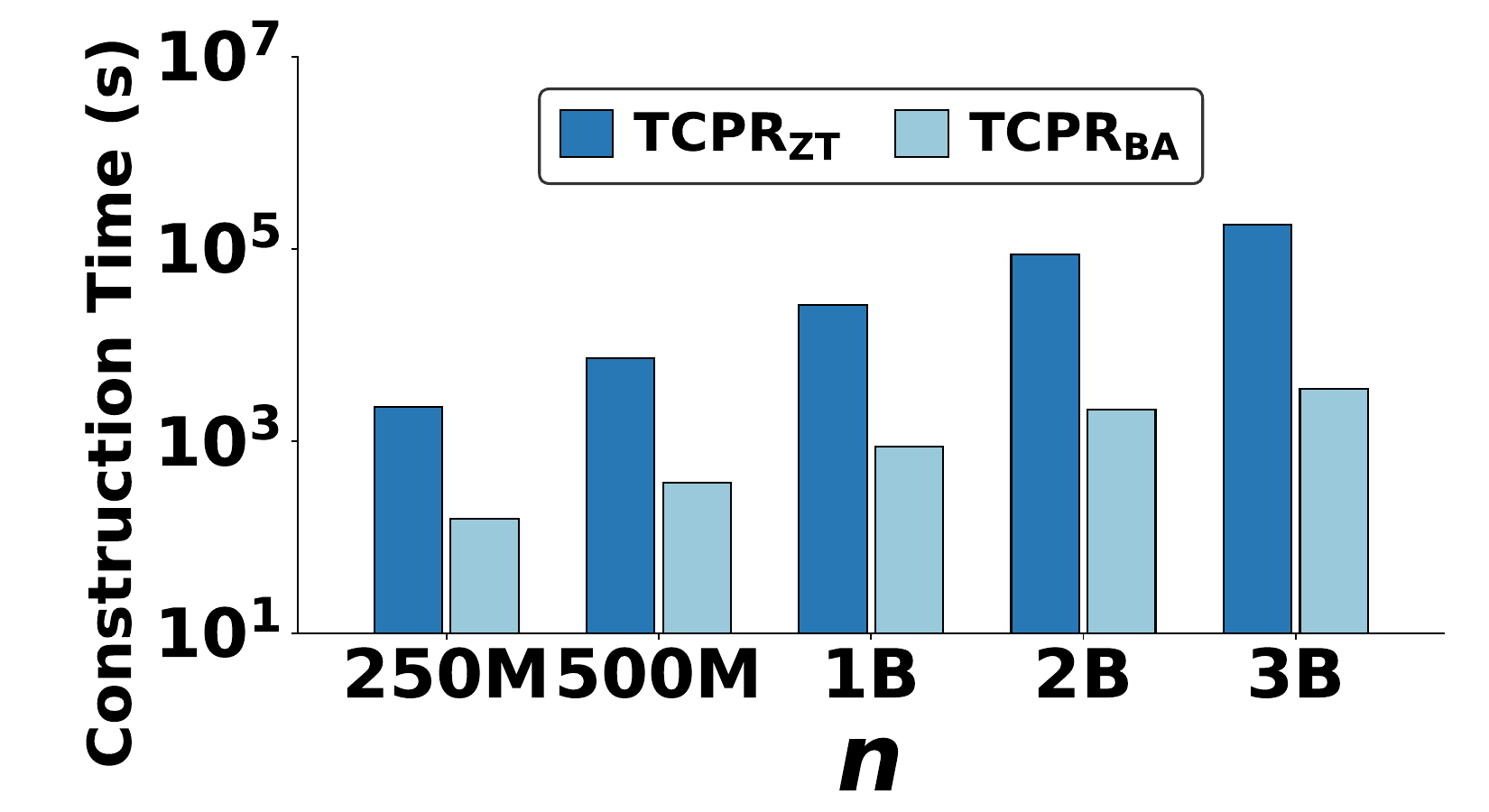}
    \caption{Constr.\ time vs. $n$}\label{fig:app:TP:n:build:SARS}
  \end{subfigure}
  \begin{subfigure}[t]{\appfigwidth}
    \includegraphics[width=\linewidth]{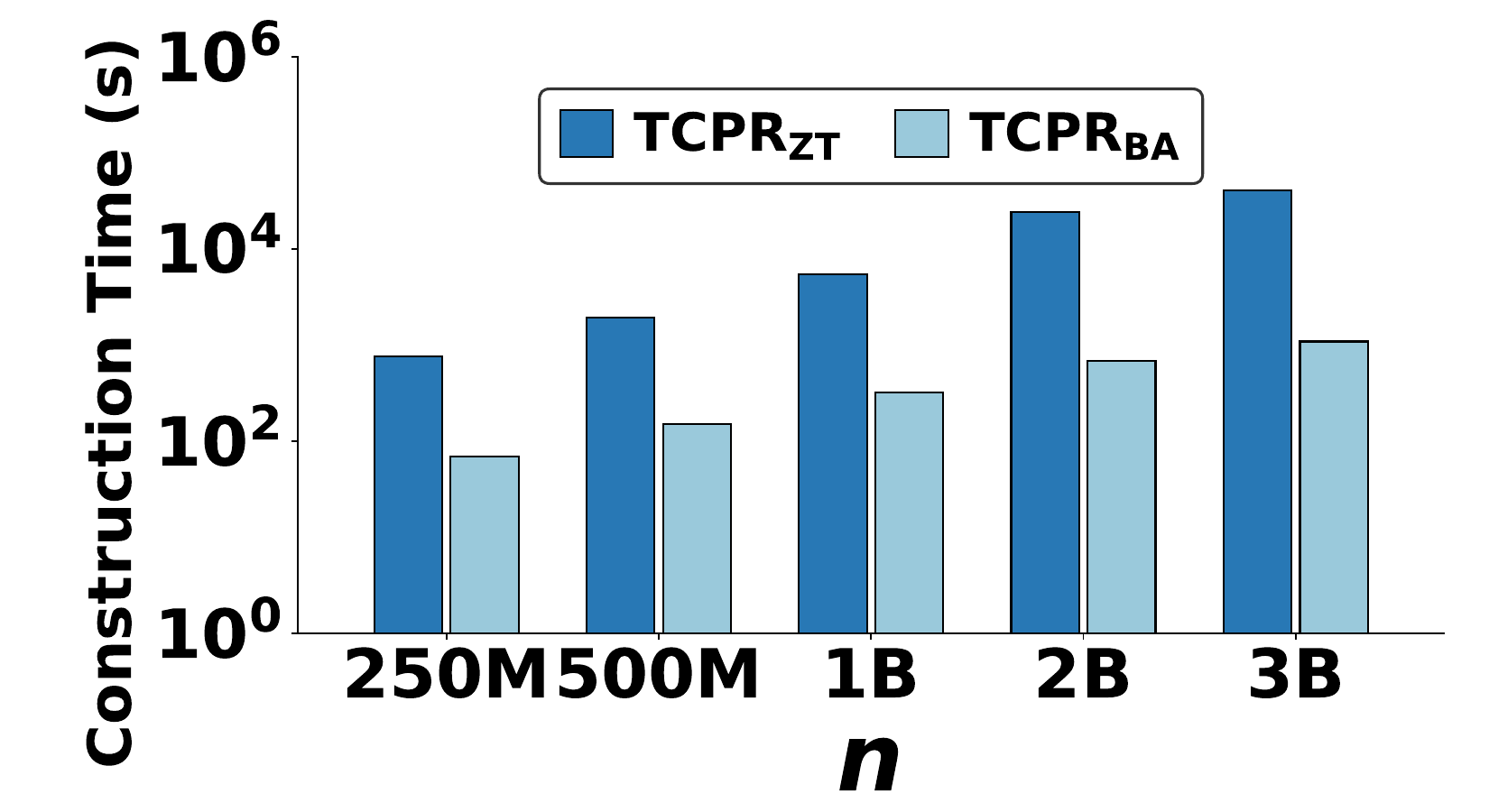}
    \caption{Constr.\ time vs. $n$}\label{fig:app:TP:n:build:SDSL}
  \end{subfigure}
  \begin{subfigure}[t]{\appfigwidth}
    \includegraphics[width=\linewidth]{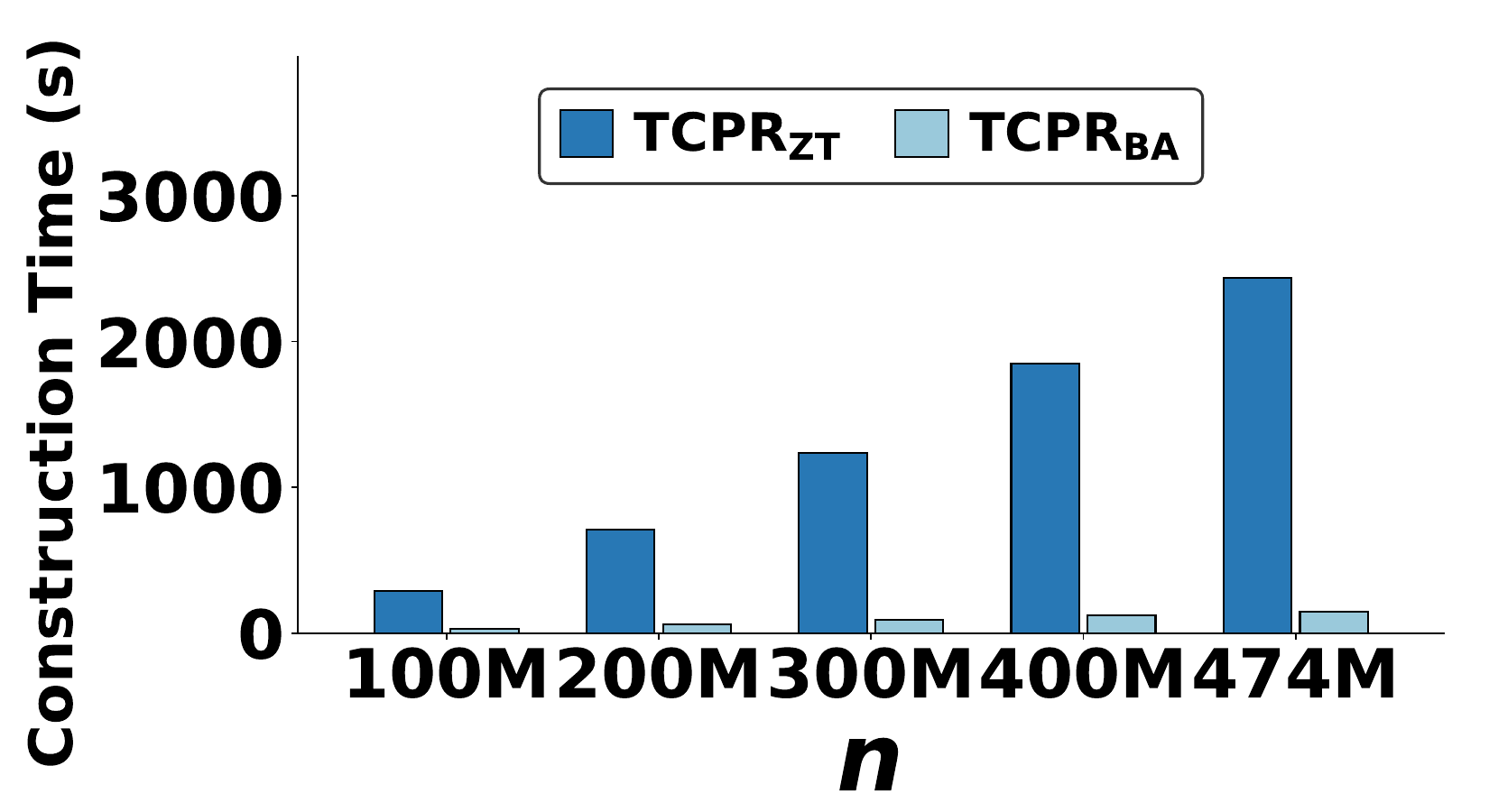}
    \caption{Constr.\ time vs. $n$}\label{fig:app:TP:n:build:WIKI}
  \end{subfigure}
  \vspace{\captionspacing}
  \vspace{+2mm}
  \caption{Index size of our \TCPR index with the \textsf{TP} scoring function vs. \TCPRBA on (a) \bst, (b) \sars, (c) \sdsl, and (d) \wiki vs. $n$; construction space of our \TCPR index with the \textsf{TP} scoring function vs. \TCPRBA on (e) \bst, (f) \sars, (g) \sdsl, and (h) \wiki vs. $n$; construction time of our \TCPR index with the \textsf{TP} scoring function vs. \TCPRBA on (i) \bst, (j) \sars, (k) \sdsl, and (l) \wiki vs. $n$.}\label{fig:app:TP:cost}
\end{figure}

\end{document}